\makeatletter\AtBeginDocument{\let\@elt\relax}\makeatother 
\providecommand{\noopsort}[1]{} 
\newcommand{\N}{\mathbb N}
\newcommand{\Z}{\mathbb Z}
\newcommand{\R}{\mathbb R}
\newcommand{\C}{\mathbb C}
\newcommand{\T}{\mathbb T}
\renewcommand{\H}{\mathcal H}
\newcommand{\D}{\mathcal{D}}
\renewcommand{\S}{\mathcal{S}}
\newcommand{\A}{\mathcal{A}}
\renewcommand{\L}{\mathcal{L}}
\newcommand{\M}{\mathcal{M}}
\newcommand{\E}{\mathcal{E}}
\newcommand{\mR}{\mathcal{R}}
\newcommand{\Cr}{C_\mR}
\newcommand{\Sr}{\S_\mR}
\newcommand{\mB}{\mathcal{B}}
\newcommand{\B}{\mathcal{B}}
\newcommand{\W}{\mathcal{W}}
\newcommand{\Wr}{\mathcal{W}^0_\mR}
\newcommand{\mW}{\mathfrak{W}}
\newcommand{\mN}{\mathcal{N}}
\newcommand{\mK}{\mathcal{K}}
\newcommand{\Wsn}{\W_s^{n}}
\newcommand{\Esg}{\mathcal{E}_s^{\gamma}}
\renewcommand{\O}{\mathcal{O}}
\def\Re{{\mathrm{Re}}}
\def\Im{{\mathrm{Im}}}
\newcommand{\supnorm}[1]{\norm{ #1 }_\infty}
\newcommand{\absnorm}[1]{\norm{ #1 }_1}
\newcommand{\quadnorm}[1]{\norm{#1}_2}
\newcommand{\norm}[1]{\left\| #1 \right\|}
\newcommand{\nrm}[2]{\left\|#1\right\|_{#2}}
\newcommand{\tnrm}[2]{\|#1\|_{#2}}
\newcommand{\bra}[1]{\left\langle #1 \right|}
\newcommand{\ket}[1]{\left| #1 \right\rangle}
\newcommand{\braket}[2]{\left\langle #1 \middle| #2 \right\rangle}
\newcommand{\vect}[2]{\begin{pmatrix}#1\\#2\end{pmatrix}}
\newcommand{\p}[2]{\langle #1 , #2 \rangle}
\newcommand{\spn}{\textnormal{span}}
\DeclareMathOperator*{\sgn}{sgn}
\DeclareMathOperator*{\dom}{dom}
\newcommand{\Q}{Q_\hbar^\infty}
\newcommand{\Ql}{Q_\hbar^l}
\newcommand{\Qm}{Q_\hbar^m}
\newcommand{\Qn}{Q_\hbar^n}
\newcommand{\Qp}{Q_\hbar^p}
\renewcommand{\mid}{~\middle|~}
\newcommand{\1}{\mathbbm{1}}
\newcommand{\supp}{\operatorname{supp}}
\newcommand{\Tfn}{T_{f^{[n]}}}
\newcommand{\Tr}{\operatorname{Tr}}
\def\cs{\mathrm{cs}}
\newcommand{\sa}{\textnormal{sa}}
\newcommand{\br}[1]{\langle#1\rangle}
\newcommand{\bbr}[1]{\llangle#1\rrangle^{1L}_N}
\newcommand{\brr}[1]{\boldsymbol\prec\!#1\!\boldsymbol\succ}
\newcommand{\ev}{\textup{ev}}
\newcommand{\odd}{\textup{odd}}
\newcommand{\dt}{\Delta t}
\newcommand{\set}[2]{\{#1 \colon #2 \}}
\newcommand{\hatotimes}{\mathbin{\hat{\otimes}}}
\newcommand{\indicator}{1}
\newcommand{\unit}{\1}
\renewcommand{\th}{^{\text{th}}}
\newcommand{\image}[2]{\includegraphics[width=#1\textwidth, height=#1\textheight,keepaspectratio]{#2}}
\newcommand{\upcite}[1]{$^{\textnormal{\cite{#1}}}$}
\newcommand{\Weylalg}[1]{\mathcal{W}^0(#1)}
\newcommand{\Teunalg}[1]{\mathcal{W}^0_{\mathcal{R}}(#1)}
\newsavebox{\@brx}
\newcommand{\llangle}[1][]{\savebox{\@brx}{\(\m@th{#1\langle}\)}%
  \mathopen{\copy\@brx\kern-0.5\wd\@brx\usebox{\@brx}}}
\newcommand{\rrangle}[1][]{\savebox{\@brx}{\(\m@th{#1\rangle}\)}%
  \mathclose{\copy\@brx\kern-0.5\wd\@brx\usebox{\@brx}}}
\tikzset{edge/.style={decorate, decoration=snake}}
\tikzset{puntjes/.style={dash, pattern=on 0pt off 2\pgflinewidth}}
\tikzset{streepjes/.style={dash pattern=on 8pt off 4pt}}
\newcommand{\ncvertex}[1]{\filldraw (#1) circle (12.5pt);
	{\color{white}\filldraw (#1) circle (10.2pt);}
	\node at (#1) {\huge $f$};}
\newcommand{\QW}{\mathcal{Q}_\hbar^\textnormal{W}}
\newcommand{\g}{\mathfrak{g}}
\newcommand{\add}{\textnormal{add}}
\newcommand{\sub}{\textnormal{sub}}
\newcommand{\conf}{\textnormal{conf}}
\newcommand{\mom}{\textnormal{mom}}
\newcommand{\vol}{\textnormal{vol}}
\newcommand{\id}{\textnormal{id}}
\newcommand{\uvp}{\underline{\varphi}}
\newtheorem{thm}{Theorem}[section]
\newtheorem{lem}[thm]{Lemma}
\newtheorem{prop}[thm]{Proposition}
\newtheorem{cor}[thm]{Corollary}
\newtheorem{defi}[thm]{Definition}
\newtheorem{exam}[thm]{Example}
\newtheorem{rema}[thm]{Remark}
\newcommand{\ncvertexcover}[1]{\filldraw (#1) circle (12.5pt);
	{\color{teal}\filldraw (#1) circle (10.2pt);}
	\node at (#1) {\scalebox{0.8}{\huge $f$}};}
\newcolumntype{?}[1]{!{\vrule width #1}} 
\definecolor{wrongultramarine}{rgb}{0.07, 0.04, 0.36}
\begin{document}
%
\newgeometry{left=0cm,bottom=0cm,top=-10cm,right=0cm}
\begin{titlingpage}
\backgroundsetup{
scale=4,
angle=90,
opacity=1,
contents={\makebox[0pt]{\begin{tikzpicture}[remember picture,overlay]
\path [left color = black, right color = teal] (current page.south west)rectangle (current page.north east);
\end{tikzpicture}}}
}
\hspace{0pt}
\vspace{220pt}
\begin{center}
\makebox[26.6pt][l]{~}
\makebox[0pt]{
\scalebox{2.15}{
\begin{tikzpicture}[thick,baseline={(100,0)}]
\color{white}

	
	\draw[edge] (0,12.8) to (0.5,12);
	\draw[edge] (0.5,12) to (2,13);
	\draw[edge] (1,14) to (2,13);
	\draw[edge] (2,13) to (3.5,14);
	\draw[edge] (2,13) to (4,12);
	\draw[edge] (4,12) to (3.5,14);
	\draw[edge] (3.5,14) to (5.5,13.5);
	\draw[edge] (5.5,13.5) to (6.5,14.5);
	\draw[edge] (2,13) to (2,10.5);
	\draw[edge] (2,10.5) to (4,12);
	\draw[edge] (4,12) to (6,11);
	\draw[edge] (4,12) to (5.5,13.5);
	\draw[edge] (5.5,13.5) to (6,11);
	\draw[edge] (5.5,13.5) to (7.5,12.5);
	\draw[edge] (6,11) to (7.5,12.5);
	\draw[edge] (7.5,12.5) to (7.5,14);
	\draw[edge] (7.5,12.5) to (9.5,12);
	\draw[edge] (9.5,12) to (10,14);
	\draw[edge] (9.5,12) to (10.5,10);
	
	\ncvertexcover{0.5,12}
	\ncvertexcover{2,13}
	\ncvertexcover{4,12}
	\ncvertexcover{2,10.5}
	\ncvertexcover{6,11}
	\ncvertexcover{5.5,13.5}
	\ncvertexcover{7.5,12.5}
	\ncvertexcover{9.5,12}
	\ncvertexcover{3.5,14}
	
	\node at (5,8.7) {\huge C*-algebraic Results};
	\node at (5,7.7) {\large in the Search for};
	\node at (5,6.7) {\huge Quantum Gauge Fields};
	\node at (5,4.1) {\large Teun van Nuland};
	
	\draw (0,1) to (2,1);
	\draw (2,0) to (2,1);
	\draw (2,1) to (4,1);
	\draw (4,1) to (6,1);
	\draw (4,0) to (4,1);
	\draw (6,1) to (8,1);
	\draw (6,0) to (6,1);
	\draw (8,0) to (8,1);
	\draw (8,1) to (10,1);
	\filldraw (2,1) circle (3pt);
	\filldraw (4,1) circle (3pt);
	\filldraw (6,1) circle (3pt);
	\filldraw (8,1) circle (3pt);
	\filldraw (2,3) circle (3pt);
	\filldraw (4,3) circle (3pt);
	\filldraw (6,3) circle (3pt);
	\filldraw (8,3) circle (3pt);
	\filldraw (2,5) circle (3pt);
	\filldraw (8,5) circle (3pt);
	
	\draw (0,3) to (2,3);
	\draw (2,1) to (2,3);
	\draw (2,3) to (4,3);
	\draw (4,3) to (6,3);
	\draw (4,1) to (4,3);
	\draw (6,3) to (8,3);
	\draw (6,1) to (6,3);
	\draw (8,1) to (8,3);
	\draw (8,3) to (10,3);
	
	\draw (2,3) to (2,5);
	\draw (8,3) to (8,5);
	\draw (0,5) to (2,5);
\end{tikzpicture}
}
}
\end{center}
\newpage
\backgroundsetup{opacity=0.5, contents={}}

\end{titlingpage}
\restoregeometry
\newpage
\newpage
\color{black}
\backgroundsetup{opacity=0.5, contents={}}


\thispagestyle{empty}
\title{\huge C*-algebraic Results in the Search for Quantum Gauge Fields\\~\\~\\~\\~\\~\\~\\
 Proefschrift\\~\\~\\~\\~\\
\small ter verkrijging van de graad van doctor aan de\\
\vspace{2.5pt}
\large Radboud Universiteit Nijmegen\\
\vspace{4pt}
\small op gezag van de rector magnificus prof. dr. J.H.J.M. van Krieken,\\
volgens besluit van het college voor promoties\\
in het openbaar te verdedigen op\\
donderdag 7 april 2022\\
om 12:30 uur precies\\
~\\~\\~\\~\\~\\~door\\~\\~\\~\\
}
\author[van Nuland]{\Large Teun Dirk Henrik van Nuland\\
{\small geboren op 10 mei 1994\\
te Tiel}
}
\date{~}
\thispagestyle{empty}

\thispagestyle{empty}
\begin{center}
\thispagestyle{empty}
\resizebox{\textwidth}{!}{%
 \begin{tabular}{c}
 
 \small \textbf{C*-algebraic Results in the}\\
  \small \textbf{Search for Quantum Gauge Fields}\\
  \tiny \textbf{\textit{Online version}}
  ~\\
  ~\\
  ~\\
  ~\\~\\~\\
  \vspace{22pt}\\
  {
  \huge $Q_\hbar(\,\raisebox{7pt}{\uwave{\qquad}}\;)$}
 \end{tabular}%
 }
 \mbox{}
\thispagestyle{empty}
 \vfill
 {
 \huge \textbf{Teun D. H. van Nuland}}\\
 ~\\
 \small Questions and corrections very welcome at teunvn$\otimes$gmail$\cdot\hspace{0.5pt}$com
\thispagestyle{empty}
\end{center}
\thispagestyle{empty}
\newpage

\noindent \textbf{Onderzoek gesponsord door}\\
NWO Physics Projectruimte (680-91-101).

\vspace{563pt}

\noindent \phantom{\textbf{Colofon}\\
~\\
\textit{Design/Lay-out} \,\,Proeschriftenbalie.nl\\
~\\
\textit{Print} \,\,Ipskamp Printing\\
~\\
\textcopyright \,\,2022, Teun van Nuland}
\thispagestyle{empty}
\maketitle
\thispagestyle{empty}
\pagebreak
\thispagestyle{empty}
\noindent \textbf{Promotor:}\\
prof. dr. W.D. van Suijlekom\\
~\\
~\\

\noindent \textbf{Manuscriptcommissie:}\\
prof. dr. H.T. Koelink\\
dr. P. Hochs\\
prof. dr. J.M. Gracia-Bond\'ia (Universidad de Costa Rica, Costa Rica)\\
prof. dr. N.S. Larsen (Universitetet i Oslo, Noorwegen)\\
dr. B. Mesland (Universiteit Leiden)

\vspace{205pt}

\newpage
\begin{KeepFromToc}
\tableofcontents
\end{KeepFromToc}

\chapter*{Introduction}
\addcontentsline{toc}{chapter}{\protect\numberline{}Introduction}
\label{intro}
\markboth{INTRODUCTION}{}

%
This thesis consists of two parts, situated at neighboring branches of operator theory.
Although its content is mathematical, this thesis is inspired and motivated by physics at every step of the way. We will start this introduction with the mathematical and physical context that is important for both parts. 

\section*{General motivations}
\addcontentsline{toc}{section}{\protect\numberline{}General motivations}

Notions of space, becoming increasingly abstract over the years, have driven mathematics and physics in numerous ways.
%
Euclidean space is such a notion, which was in fact vital for mathematics to develop its axiomatic approach. It also provided the backdrop for most physical theories until the advent of general relativity. Then, flat Euclidean space was traded in for a curved notion, a manifold,\upcite{Lang} that had been introduced by Riemann some time before Einstein developed his groundbreaking theory.
Going further in abstraction, manifolds are now regarded as particular instances of topological spaces.\upcite{Bredon} To be more specific, they form a subclass of the locally compact Hausdorff spaces.\upcite{Bredon} The abstract notion of space provided by topology elegantly allows infinite dimensions and singularities. 
Besides its relevance in physics, like the well-known solid-state applications,\upcite{HK} the abstractness of topology has proven particularly useful in mathematics.
Indeed, once you show that a certain property of a topological space (like a separation axiom) implies another, your result can be applied in an enormous amount of instances.
Many useful properties of topological spaces are naturally formulated in terms of continuous functions on the space.
Gelfand duality\upcite{Murphy} explains the important role of continuous functions in topology, at the same time providing our final step in abstraction. In its simplest form, Gelfand duality states that the commutative unital C*-algebras${}^{\textnormal{\cite{Murphy}}}$ as a category are equivalent to the compact Hausdorff spaces, each C*-algebra being obtained as the collection of continuous functions on the respective compact Hausdorff space. In a stronger form it shows how \textit{all} commutative C*-algebras uniquely arise from \textit{locally} compact Hausdorff spaces. In short, C*-algebras provide an elegant generalized notion of space.

A first advantage is that results proven for all C*-algebras imply results for all locally compact Hausdorff spaces, implying results for all manifolds, implying results for all Euclidean spaces. 
One can even describe manifolds in a C*-algebraic way while retaining the geometric structure that is lost when passing to topology. This is done by noncommutative geometry${}^{\textnormal{\cite{C94}}}$, as we will discuss later. 
%
%

%



A second advantage of the C*-algebraic method is that it allows us to go beyond the classical notion of space, by using the fact that C*-algebras can be noncommutative.

\subsubsection*{Quantum mechanics and C*-algebras}
Where Gelfand duality classifies commutative C*-algebras, the Gelfand--Naimark theorem\upcite{Murphy} realizes all (possibly noncommutative) C*-algebras as spaces of bounded operators on a Hilbert space. 

This provides the perfect setup for quantum mechanics. The most basic objects of quantum mechanics are observables, which, mathematically speaking, are (unbounded) self-adjoint operators in a Hilbert space. A cornerstone of the quantum physical approach is that the measurable information of these observables -- their spectrum -- is determined uniquely by their commutation relations. Making such a statement mathematically rigorous is -- like with many other statements in quantum mechanics -- complicated by domain problems.\upcite{Hall} 
Luckily though, thanks to Stone's theorem,\upcite{Hall} we can equivalently work with bounded operators obtained from the unbounded ones by functional calculus (like the associated one-parameter unitary groups or the resolvents of the unbounded operators) and resolve all domain problems. For example, the commutation relations formulated in terms of such bounded operators give the easiest way to uniquely characterize a quantum system.\upcite{Hall} Because these bounded operators generate C*-algebras (like the Weyl C*-algebra\upcite{HR} or the resolvent algebra\upcite{BG}) we end up with an elegant C*-algebraic description of quantum mechanics. As a rule, quantum mechanical concepts (states, time evolutions, classical limits, etcetera) are most rigorously analyzed in such a C*-algebraic description.\upcite{landsman17} 

The rigorous description of quantum mechanics is just the start. In quantum field theory, and in quantum gauge theory in particular, even more compelling reasons to work with C*-algebras appear. 

\subsubsection*{Gauge theory and the structure of space}
Yang--Mills gauge theory\upcite{Atiyah} forms a completely geometrical basis of our understanding of forces in the Standard Model. 
In the classical form of the theory, the gauge fields that define forces are modeled as an internal structure of spacetime that influences the matter particles that move through it. Gauge fields can be represented by various mathematical objects (Lie-algebra-valued one-forms,\upcite{Ehresmann} covariant derivatives,\upcite{Lang} etcetera) describing this internal structure. That is to say, when a matter particle moves along a path through space, the values of the gauge field along that path determine the change of a symmetry parameter associated with the particle. This parameter lies in a compact Lie group called the gauge group, and is only observable relatively, as physics as a whole is invariant under (local) actions of this gauge group. The gauge field itself changes as well, and does so according to a set of differential equations -- the Euler--Lagrange equations\upcite{Taylor} of the Yang--Mills action\upcite{Atiyah} -- which allows waves in the gauge field to propagate through space. Electrodynamics is a simple but already very powerful example of a gauge theory, in which the gauge field comes from the Yang--Mills action for an abelian and one-dimensional gauge group. It describes the electromagnetic field as a gauge field (hence, a geometrical internal structure of space) which deflects any electron that moves through, effectively creating an acceleration of the particle by electric and magnetic forces. The waves in the gauge field quite accurately describe light of arbitrary polarization.

As we know, light is not only a wave, but sometimes behaves like a particle as well. This is one of the reasons why gauge theory should be quantized.

\subsubsection*{Quantum gauge theory}
Quantum field theory\upcite{PS} provides us with a well-motivated strategy to quantize gauge theories. Surprisingly, there exists no known mathematical model for quantum gauge theories, not even for quantum electrodynamics (QED) even though its gauge group is abelian. But although the current perturbative model of QED is ill-defined in a strict sense, it has been extremely successful. It has survived confrontation with experiments up to unprecedented precision. Only at very high energy scales this theory seems to predict an infinite coupling constant, causing it to be inconsistent.\upcite{Landau} The theory of quarks and the strong nuclear force, quantum chromodynamics, makes use of the Yang--Mills action of the gauge group $SU(3)$. Because this gauge group is nonabelian, one needs to include so-called Faddeev--Popov ghost fields\upcite{FP} in order to obtain experimentally testable results, but these ghost fields are manifestly nonphysical. This is one of the reasons that the mathematical construction of an $SU(3)$ quantum Yang--Mills theory is labeled a millennium problem by the Clay Mathematics Institute.\upcite{JW} 

Some would argue that the problems described above arise because the gauge groups underlying the Standard Model simply do not represent nature, and will disappear once we have a Grand Unified Theory.\upcite{FNTU} Others would say the problems are artifacts caused by looking at the Standard Model from a perturbative angle, and will disappear once we have a rigorous non-perturbative framework.\upcite{GHLRSS,JW}

This thesis would like to demonstrate that, whichever of these two hypotheses you adhere to, 
a C*-algebraic approach can help.

\section*{Operator trace functionals and noncommutative geometry (part I)}
\addcontentsline{toc}{section}{\protect\numberline{}Operator trace functionals and noncommutative geometry (part I)}

In quantum mechanics, all physical information of an observable is contained in its spectrum, regarded as a subset of the real line including multiplicity. For the Hamiltonian of 
the harmonic oscillator for instance, this spectrum (i.e., the set of eigenvalues) gives all possible outcomes of a measurement of the energy of the system, namely an infinite list of increasing energy levels. 
When a bounded potential term is added to this Hamiltonian (say a small perturbation of the system) solving the Schr\"odinger equation exactly might become very complicated.  However, one can perturbatively calculate how each of the energy levels in the list shifts due to the perturbation. To be concrete, we use the trace of a function $f$ of the perturbed observable $H+V$, namely
\begin{align}\label{eq:operator trace functional}
	\Tr(f(H+V)),
\end{align}
which we call an operator trace functional, and compare it to its unperturbed counterpart (where $V=0$). Because all physical information is spectral, these functionals can capture all physical information about $H+V$ by choosing the right $f$. (To see this, note that $f$ can select any region of the real line to see if $H+V$ has an eigenvalue there.) 
To mathematically capture the shift of the spectrum, one should subtract the perturbed operator trace function from its unperturbed counterpart, and separate the part that depends on the test function from the rest. The part independent from $f$ captures the spectral shift from $H$ to $H+V$. Remarkably, this can in many cases be described simply by a \textit{function}, called the spectral shift function.$^\textnormal{\cite{Krein53}}$ More generally, Koplienko$^\textnormal{\cite{Koplienko84}}$ introduced the higher-order spectral shift function, which captures not only the spectral information on the jump from $H$ to $H+V$, but also on the path that is taken (say, along the path $t\mapsto H+tV$) by considering higher-order Taylor remainders.

The spectral shift function (of order 1, 2, or any other $n\in\N$) has connections with many other fascinating mathematical and physical notions like spectral flow,\upcite{ACDS09} perturbation determinants,\upcite{Koplienko84} and scattering phases.\upcite{BP98}
To prove existence and properties of a (higher-order) spectral shift function under general conditions is a challenging analytical problem. This problem has sparked the creation and investigation of many fascinating mathematical objects and interrelations.\upcite{ACDS09,BP98,BS75,Neidhardt88,Peller,PSS13,S17,ST19,Yafaev92,Yafaev05,Yafaev10} One excellent object to deal with these kinds of problems is the multiple operator integral,\upcite{ST19} which is an $n$-multilinear map between operator spaces that extends the $n\th$ order derivative of an operator trace function.

Not only will multiple operator integrals be essential to the proof of existence of the spectral shift function in this thesis, they will also play a crucial role in our results on noncommutative geometry.

\subsubsection*{Noncommutative geometry}
Noncommutative geometry generalizes manifolds to spectral triples.
A spectral triple 
\begin{align*}
	(\A,\H,D)
\end{align*}
consists of an algebra $\A$ -- which should be thought of as a generalization of the underlying topology of the space -- together with a Hilbert space $\H$ and a self-adjoint operator $D$ in $\H$, satisfying certain properties.\upcite{GVF} The latter two objects enrich the generalized topological space with more structure, analogous to how a manifold is a topological space enriched with extra structure. In fact, Alain Connes\upcite{C08} showed that a natural class of commutative spectral triples (the word `commutative' referring to the algebra) is in bijection with a natural class of Riemannian manifolds called spin$^\text{c}$ manifolds.\upcite{Sui15}

In short, a possibly noncommutative spectral triple is a generalization of a manifold that, as opposed to a C*-algebra, retains the differentiable structure.

\subsubsection*{Spectral Action Principle}
In order to describe action principles\upcite{Taylor} in physics in a geometric way, Chamseddine and Connes\upcite{CC96,CC97} proposed an action principle for spectral triples. The action consists of a bosonic and fermionic part. The bosonic part describes the gauge fields and is called the spectral action. It is given by the operator trace functional (cf. \eqref{eq:operator trace functional})
\begin{align}\label{eq:full spectral action}
	\Tr\left(f\left(\frac{D+V}{\Lambda}\right)\right)
\end{align}
for a spectral triple $(\A,\H,D)$. Here, $\Lambda$ is a number indicating the cutoff scale, and $V$ takes a specific form derived from $\A$ and $D$, and should be thought of as a noncommutative gauge potential. Just like gauge fields on a manifold (Lie-algebra-valued differential one-forms) can be represented by a finite number of scalar functions,\upcite{Lang} the noncommutative gauge field $V$ can be represented by elements of $\A$ instead of scalar functions.\upcite{C85} 
By choosing the right spectral triple, and taking the limit $\Lambda\to\infty$, one can recover the classical action of the entire Standard Model of particle physics from the spectral action principle, including neutrino oscillations and a minimal coupling to gravity.\upcite{Sui15} The Higgs particle does not need to be added by hand, but arises naturally from the spectral action as a generalized gauge boson. 
As such, the spectral action gives an extraordinarily elegant geometrical 
explanation of the fundamental forces in nature.

Also, the spectral action gives a few natural suggestions for physics \textit{beyond} the standard model, some of them allowing for grand unification.\upcite{CvS}

Because we do not yet have a quantum analogue of the spectral action within the noncommutative framework, obtaining experimentally testable values from a given spectral triple proceeds according to the usual renormalization group techniques\upcite{PS} of quantum field theory.
%
%
%
%
Although we do not question the correctness of its outcomes, this derivation is still unsatisfactory from a theoretical perspective. 
If we accept the premise that the particle content of the Standard Model originates from a spectral triple, then there should be a spectral description of space and its internal structure at lower energies as well. Such a description has been wanted for several decades, and is the driving motivation for us to investigate the variation of the spectral action in a gauge field theoretical way.

But before thinking about a quantum version of the spectral action, there is still a lot to do on the classical side. Although the behavior of the spectral action \eqref{eq:full spectral action}
in the limit $\Lambda\to\infty$ is reasonably well understood by heat kernel methods,\upcite{EZ,Sui15} not much work has been done on the behavior of \eqref{eq:full spectral action} as $V$ is fluctuated (in the vicinity of $0$). Indeed, the latter behavior firstly poses a challenge on the analytical side. For instance, a Taylor series in $V$ can only be obtained when putting the right assumptions on $D$, $V$ and $f$ concerning e.g. their summability and differentiability.\upcite{ST19} A second challenge is on the algebraic side, as $D$ and $V$ do not commute, except in trivial cases. 
Still, a particular result of Chamseddine and Connes\upcite{CC06} made a great first step in describing the dependence of the spectral action on $V$, and thus inspired an important part of this thesis. They were able to express the so-called scale-invariant part\upcite{CC06} of the spectral action in terms of generalized versions of the Yang--Mills action and the Chern--Simons action.\upcite{Qui90,Witten} In these generalized action functionals, the integration over a part of noncommutative space is carried out by objects from noncommutative differential geometry\upcite{C85} called cyclic cocycles. In order to extend this intriguing result to the full spectral action \eqref{eq:full spectral action} we will make good use of the algebraic and analytical benefits of multiple operator integrals, just like we did for the spectral shift function.

%

\subsubsection*{Structure of Part I}
In Part \ref{part:I} of this thesis we will investigate the variation of the spectral action (or, more generally, the operator trace functional) and prove expressions for this variation. Multiple operator integrals will be the tools that we carry throughout, which we first sharpen in Chapter \ref{ch:MOI} in order to deal with the difficult problems facing us in Chapters \ref{ch:Spectral Shift for Relative Schatten Perturbations} and \ref{ch:Cyclic cocycles in the spectral action}.
\begin{itemize}
	\item In \textbf{Chapter \ref{ch:MOI}} we identify very basic summability properties that occur naturally in applications, and apply them to multiple operator integration in an efficient and powerful way. We obtain analytical results that are fundamentally stronger than those known before, something which is vital to the following chapters.
	\item \textbf{Chapter \ref{ch:Spectral Shift for Relative Schatten Perturbations}} will discuss the higher-order spectral shift function, which elegantly captures the essence of the higher-order variation of an operator trace functional in a real-valued locally integrable function, separating it from any dependence on the test function. As we will show, the spectral shift function exists for a much larger class of operators, and has a more tempered behavior, than was previously known. The proof will be a very technical tour in multiple operator integration.
\end{itemize}
Whereas Chapter \ref{ch:Spectral Shift for Relative Schatten Perturbations} extends a long known formula, Chapter \ref{ch:Cyclic cocycles in the spectral action} will prove a completely new one.
\begin{itemize}
	\item The formula proved in \textbf{Chapter \ref{ch:Cyclic cocycles in the spectral action}} expands the spectral action in the gauge fluctuation and expresses the result in terms of known noncommutative versions of higher-order Yang--Mills and Chern--Simons actions. The noncommutative analogues of integrals in these action functionals, cyclic cocycles, are shown to fit precisely in the framework of noncommutative differential geometry. Moreover, 
we construct an important quantum field theoretic application. Namely, we use the expansion as a starting point for a (one-loop) quantization of the spectral action.
\end{itemize}

\section*{Lattices and strict deformation quantization (part II)}
\addcontentsline{toc}{section}{\protect\numberline{}Lattices and strict deformation quantization (part II)}

The second part of this thesis, in contrast to the first, takes a completely non-perturbative approach to gauge theories, employing the framework of lattice gauge theory introduced by Wilson\upcite{Wilson} and adapted by Kogut and Susskind.\upcite{KS} It focuses on abelian Yang--Mills, and is driven in part by the ultimate goal of eventually constructing a rigorous model for QED.

C*-algebras are expected to provide the building blocks of a mathematical construction of 
gauge theories such as QED and quantum Yang--Mills.
Besides providing the most elegant description of (ordinary) quantum mechanics, C*-algebras feature in the Haag--Kastler axioms,\upcite{Haag} and could therefore be used to construct a local quantum field theory. Moreover, as C*-algebras can model both quantum and classical theories, a C*-algebraic model of a classical gauge theory might provide a good footing from which to take the leap towards a quantum gauge theory.
The direction of this leap, then, might be indicated by strict deformation quantization,\upcite{landsman98} for it gives a set of axioms that a quantization map between a classical C*-algebra and its quantum counterpart should satisfy. These axioms are stringent, and examples are mostly found in finite-dimensional configuration spaces,\upcite{bieliavsky15,landsman93,landsman98,LMvdV,Rieffel89,Rieffel} with a few exceptions that usually rely on finite-dimensional approximations.\upcite{BHR,vN19} To quantize a gauge theory, one is therefore advised to first quantize a finite-dimensional regularization.

Lattice gauge theory provides precisely such a finite-dimensional regularization. This theory, introduced by Wilson,\upcite{Wilson}  shows how to approximate gauge fields by their parallel transports on a lattice (where by `lattice' we mean a type of finite graph). Wilson's framework has been widely applied in theoretical and phenomenological physics. On the theoretical side, an important contribution was made by Kogut and Susskind,\upcite{KS} who took a Hamiltonian approach to Wilson's ideas, considering lattices in a time-slice -- typically $\R^3$ -- and showed that the parallel transports of a gauge field on the lattice can be interpreted as rigid rotors, and that Yang--Mills time evolution implies a certain coupled movement of these rotors. 
Important for us, the finite-dimensionality of this quantum Hamiltonian system makes it suitable for the C*-algebraic approach. C*-algebraic Hamiltonian quantum lattice gauge theory forms a lively research program.\upcite{ASS,BS19a,BS19b,GR,vNS20,Stienstra,ST} A central goal of this program is to describe the continuum limit (in which the lattices are replaced by the full Euclidean space or a subset thereof) by a C*-algebra invariant under a *-homomorphism coming from the Yang--Mills equations, and satisfying axioms such as Lorentz invariance, thereby potentially giving rise to a local quantum field theory.\upcite{Haag}

As others have done before,\upcite{ASS,Stienstra} we will use strict deformation quantization as a guiding principle towards obtaining such a continuum C*-algebra.

This approach can be divided into two steps. The first step is to construct good classical and quantum observable algebras at the finite level, i.e., associated to a (finite) lattice. By connecting each quantum algebra with a classical counterpart through a quantization map, the classical limit is kept close at hand. An important feature of these observable algebras is invariance under the restriction of (classical and quantum) Yang--Mills time evolution.\upcite{KS,Stienstra} If this feature is present, then the incorporation of full Yang--Mills dynamics is reduced to an approximation problem (although not necessarily an easy one).\upcite{KS}

The second step is to construct the continuum limit, by letting the lattices approach the Euclidean space in which they lie. This includes simultaneously the ultraviolet limit (in which the lattice spacing becomes arbitrarily small) and the infrared limit (in which the lattice covers an arbitrarily large volume). Step two can only be achieved if the classical and quantum systems associated to the lattices chosen in step one are sufficiently nice; in particular, they should admit the embedding maps one gets from replacing a coarse lattice by a finer one. It was noted by Stottmeister and Thiemann\upcite{ST} that the observable algebras at least need to admit the *-homomorphism given by tensoring with the identity, which already excludes the compact operators, for instance.

We will uncover a radical additional requirement. As we will see, in order to satisfy natural constraints, we are forced at the finite level to replace the algebras with operator systems. The operator systems have the same interpretation as the algebras, and have the same structure except that they are effectively `cut off' at a certain momentum scale, and are therefore no longer closed under multiplication. However, quite surprisingly, we will also argue that this replacement does not cause problems in the end, as the C*-algebraic structure can be fully retrieved in the continuum limit.

Part \ref{part:II} culminates in the construction of promising new field algebras for classical and quantum abelian lattice gauge theories in arbitrary dimension, and a strict deformation quantization between them.


\subsubsection*{Structure of Part II}
The first step, concerning the situation at the finite level, is taken in Chapter \ref{ch:cylinder}, whereas the second step, concerning the continuum limit, is taken in Chapter \ref{ch:SDQ}.

\begin{itemize}
\item In \textbf{Chapter \ref{ch:cylinder}} we will construct two new interesting C*-algebras and show that they satisfy very useful properties. 
A commutative one is intended to model classical abelian gauge theory, and a noncommutative one should model quantum abelian gauge theory, both considered on a finite lattice.
The noncommutative C*-algebra will be obtained via Weyl quantization from the commutative one. Both C*-algebras are shown to be invariant under their respective time evolutions. The Weyl quantization map (which first needs to be generalized to be defined for functions that do not vanish at infinity) turns out to satisfy many suitable properties. Almost enough properties are satisfied for it to be called a strict deformation quantization.
\item In \textbf{Chapter \ref{ch:SDQ}} we define a continuum limit that respects quantization. But, in order to do that without handing in on properties like gauge invariance, an unconventional approach is needed, namely to drop the multiplicative structure at the finite level. Although radical, this step allows us to indeed define a continuum limit, and, for the systems thus obtained, it turns out that the multiplicative structure is recovered. Better yet, the obstructions that barred Chapter \ref{ch:cylinder} from obtaining a strict deformation quantization, have melted away by passing to the limit.
\end{itemize}

%
%
%

\subsubsection*{Papers}
This thesis is based on five papers.\\

 \noindent Chapter \ref{ch:MOI} is based on the first Sections of \cite{vNvS21a} (joint with W. D. van Suijlekom and published in \textit{J. Noncommut. Geom.}) and \cite{vNS21} (joint with A. Skripka and accepted for publication in \textit{J. Spectr. Theor.}).\\
Chapter \ref{ch:Spectral Shift for Relative Schatten Perturbations} is based on \cite{vNS21}.\\
Chapter \ref{ch:Cyclic cocycles in the spectral action}, save for Section \ref{sct:One-Loop}, is based on \cite{vNvS21a}. 
Section \ref{sct:One-Loop} is based on \cite{vNvS21b} (joint with Van Suijlekom and published in \textit{J. High Energy Phys.}).\\
Chapter \ref{ch:cylinder} is based on \cite{vNS20} (joint with R. Stienstra and submitted for publication).\\
Chapter \ref{ch:SDQ} is based on \cite{vN21} (published in \textit{Lett. Math. Phys.}).

\chapter*{Notations}
\addcontentsline{toc}{chapter}{\protect\numberline{}Notations}
\markboth{NOTATIONS}{}

We write $\N=\{1,2,\ldots\}$ and $\N_0=\{0,1,2,\ldots\}$.

\paragraph*{Function classes.} 
All functions (on some space $X$ with suitable structure) are complex-valued unless stated otherwise. We denote by $C(X),C_0(X),C_\text{b}(X),C^n(X)$ respectively the continuous functions, the ones vanishing at infinity, the bounded ones, and the $n$ times continuously differentiable ones ($n\in\N_0$) on $X$. By $C^\infty(X),C^\infty_\text{c}(X),\S(X)$ we denote the smooth functions, the compactly supported ones, and the Schwartz functions on $X$. Let $L^p(X)$ denote the space of (measure-zero equivalence classes of) measurable functions $f$ on $X$ for which $|f|^p$ is Lebesgue integrable, equipped with the standard norm $\|f\|_p:=(\int_X|f(x)|^p\,dx)^{1/p}$ ($p\in[1,\infty)$) and let $L^\infty(X)$ denote the space of essentially bounded functions equipped with the essential supremum norm $\|\cdot\|_\infty$.
Let $L^1_{\textnormal{loc}}(X)$ denote the space of locally integrable functions. When $X$ equals $\R$, its dependency in the above function classes is suppressed. We write $C^n_\text{c}$ for the space of compactly supported functions in $C^n$, and write $\mathcal{D}:=C_\text{c}^\infty$.

\paragraph*{Operator theory.}
\label{General intro}
Throughout, we fix a separable Hilbert space $\H$. If we say $D$ is self-adjoint in $\H$, it is possibly unbounded and self-adjoint with a domain dense in $\H$. We denote by $E_D$ its spectral measure. We let $\mB(\H)$ be the C*-algebra of bounded operators on $\H$, $\mK(\H)$ the one of compact operators on $\H$, and $\norm{\cdot}$ the operator norm. When $N$ is a subset of one of the C*-algebras $\mB(\H)$ and $C_\textnormal{b}(X)$, $N_\sa$ denotes the self-adjoint elements in $N$. We denote the Schatten $p$-class (i.e., the Schatten-von Neumann ideal of $p$-summable operators on $\H$) by $\S^p$, and its norm by $\nrm{\cdot}{p}$ ($p\in[1,\infty)$). Basic properties of Schatten-von Neumann ideals can be found in, for instance, \cite{Simon05,ST19}. In some cases it will also be convenient to denote $\S^\infty:=\mB(\H)$ and $\nrm{\cdot}{\infty}=\norm{\cdot}$. 
We denote by $M_g\in\mB(L^2(X))$ the multiplication operator of the function $g\in L^\infty(X)$. For $\psi,\varphi\in\H$ we define $\ket{\psi}\bra{\varphi}\in\mB(\H)$ by $\ket{\psi}\bra{\varphi}(\chi):=\p{\varphi}{\chi}\psi$.

\paragraph*{Fourier transforms.}We define the Fourier transform of $f\in L^1(\R^n)$ by 
\begin{align}\label{eq:FT}
\hat{f}(x):=\int_{\R^n} \frac{dy}{(2\pi)^n}f(y)e^{-iyx}.
\end{align}
For a general (not necessarily tempered) distribution $f\in\mathcal{D}'$ we can still define the Fourier transform as a distribution $\hat{f}:\hat{\mathcal D}\to\C$ by $\langle \hat{f}|\varphi\rangle:=\braket{f}{\hat{\varphi}}$ for all Schwartz functions $\varphi$ with $\hat{\varphi}\in\mathcal D$. The restriction $~\hat{~}:\S'\to\S'$ is bijective with inverse denoted by $\check{~}:\S'\to\S'$. 

The Fourier transform on $\D'$ is less well-behaved, but will only be applied in the following way. For an arbitrary continuous function $f$ (which is in $\mathcal D'$ but not a priori in $\S'$) we will often assume that $\hat{f}\in L^1$. Because then $\hat{f}\in\S'$ and $\check{\hat{f}}\in C_0$ by the Riemann-Lebesgue lemma, we find that $\hat{\check{\hat{f}}}=\hat{f}$, which implies $\braket{f}{\varphi}=\langle\check{\hat{f}}|\varphi\rangle$ for all $\varphi\in\mathcal D$. Therefore $f=\check{\hat{f}}\in C_0$ and $f(x)=\int \hat{f}(y)e^{ixy}\,dy$.


\paragraph*{Specific functions.}
We define $$u(x):=x-i$$ for $x\in\R$ and write $u^{k}(x):=(u(x))^k$ for $k\in\Z$. 
The zeroth order divided difference $f^{[0]}$ of a function $f\in C$ is $f^{[0]}:=f$. Let $x_0,\dots,x_n$ be points in $\R$ and let $f\in C^n$. The divided difference $f^{[n]}$ of order $n$ is defined recursively by
\begin{align}\label{eq:divdiff}
f^{[n]}(x_0,\dots,x_n)
:=\lim\limits_{x\rightarrow x_n}\frac{f^{[n-1]}(x_0,\dots,x_{n-2},x)
-f^{[n-1]}(x_0,\dots,x_{n-2},x_{n-1})}{x-x_{n-1}}.
\end{align}
 
\paragraph*{Universal forms.}
When $\A\subseteq\mB(\H)$ is a *-algebra, we write $\Omega^\bullet(\A)=\oplus_{n\in\N_0}\Omega^n(\A)$ for the universal differential graded algebra over $\A=:\Omega^0(\A)$, endowed with grading $d$. When $D$ is self-adjoint in $\H$ with $[D,\A]\subseteq\mB(\H)$, we write $\Omega^1_D(\A):=\pi_D(\Omega^1(\A))$ where $\pi_D:\Omega^1(\A)\to\mB(\H)$ is the linear *-preserving map defined by $\pi_D(adb):=a[D,b]$. Whenever $A\in\Omega^1(\A)$, we write $F:=dA+A^2\in\Omega^2(\A)$ for the curvature of $A$.

\paragraph*{The torus.} The elements of the $n$-torus, $\T^n:=\R^n/\Z^n$, are usually denoted by $q$ or $[x]$, where $[x]:=x+\Z^n$ for $x\in\R^n$. We denote by $L_q$ the left-translation on $\T^n$, i.e., $L_{[x]}[y]=[x+y]$.  We denote by $e_a$ the functions $[x]\mapsto e^{ia\cdot x}$ for each $a\in\Z^n$, and by $\psi_a$ the equivalence class of $e_a$ in $L^2(\T^n)$. 

\part{C*-algebraic Perturbation Theory for Noncommutative Geometry}
\label{part:I}

%

\chapter{Multiple Operator Integration for Finitely Summable and Relative Schatten Operators}
\label{ch:MOI}
\chaptermark{M.O.I. for Finitely Summable \& Relative Schatten}

In this chapter, adapted from \cite[Sections 3 and 4]{vNS21} and \cite[Section 3]{vNvS21a}, we prove several bounds and continuity properties of the multiple operator integral in the case that the (unbounded) self-adjoint operator is either finitely summable, or together with its perturbation satisfies a constraint called the relative Schatten condition. These results will be crucial in Chapters \ref{ch:Spectral Shift for Relative Schatten Perturbations} and \ref{ch:Cyclic cocycles in the spectral action}. 

Results in Section \ref{sct:Relative Schatten} were obtained in collaboration with Anna Skripka.

\section{Introduction}
\label{sct:MOI Introduction}
Multiple operator integration is a powerful tool with numerous applications in the noncommutative realm, such as the theory of spectral shift, spectral flow, index theory, differentiation of operator functions, and expansions of trace functionals. Multiple operator integrals emerged in the 1950s from the groundbreaking work of \cite{DK,Krein53,Lifshits}. Gradually, for example in \cite{ACDS09,BS75,DK,Koplienko84,Peller,PSS13}, their properties and applicability became better understood. A full treatment can be found in \cite{ST19}. For us, multiple operator integrals will be of interest mainly because of the operator trace functional
\begin{align}\label{eq:Operator Trace Functional}
	V\mapsto \Tr(f(H+V)).
\end{align}
Here $H$ is a (possibly unbounded) self-adjoint operator in a Hilbert space $\H$, $V\in\mB(\H)_\sa$ is called a perturbation, and the test function $f:\R\to\C$ acts by functional calculus on $H+V$. The functional \eqref{eq:Operator Trace Functional} arises as the spectral action in noncommutative geometry, but also occurs in quantum mechanics, most notably when $H$ is some non-interactive Hamiltonian, and $V$ is an interaction or potential term. A challenge in analyzing \eqref{eq:Operator Trace Functional} lies in the noncommutativity of $H$ and $V$ which occurs for instance when $H$ is a differential operator and $V$ is a multiplication operator (like a bounded potential term being added to a non-interactive Hamiltonian) or when $H$ and $V$ lie in some more abstract noncommutative space (for instance, when they are matrices). In these examples, one often wishes to analyze \eqref{eq:Operator Trace Functional} in the vicinity of $V=0$, which can be done with the Taylor expansion
\begin{align}\label{eq:Taylor 1}
	\Tr(f(H+V))\sim\sum_{n=0}^\infty\Tr\left(\frac{1}{n!}\frac{d^n}{dt^n}f(H+tV)\big|_{t=0}\right).
\end{align}
Already here one runs into a number of analytical problems. For a start, it is not a priori clear whether the higher-order derivatives of $t\mapsto f(H+tV)$ exist. Moreover, if they do, it might still not hold that
	$$\frac{d^n}{dt^n}f(H+tV)\big|_{t=0}\in\S^1,$$
and hence the trace on the right-hand side of \eqref{eq:Taylor 1} might not be defined. Lastly, for a series like the one in \eqref{eq:Taylor 1} to converge, one would hope for a trace-norm bound on $\frac{d^n}{dt^n}f(H+tV)\big|_{t=0}$ that decays rapidly as $n$ goes to infinity. In order to solve these problems, as well as obtain more insight into expansions like \eqref{eq:Taylor 1}, we will use multiple operator integrals.

	

The multiple operator integral $T^H_{f^{[n]}}$ is an $n$-multilinear operator
\begin{align*}
	T^H_{f^{[n]}}:\mB(\H)\times\cdots\times\mB(\H)\to\mB(\H)
\end{align*}
(occasionally defined only on subsets of $\mB(\H)$) depending on the function $f$ by means of its $n\th$ divided difference (defined by \eqref{eq:divdiff}). It satisfies in particular 
\begin{align*}
	T_{f^{[n]}}^H(V,\ldots,V)=\frac{1}{n!}\frac{d^n}{dt^n}f(H+tV)\big|_{t=0}.
\end{align*}
A key analytical asset of the multiple operator integral is that it satisfies a H\"older-type bound, namely,
%
\begin{align}\label{eq:Holder type bound}
	\big\|T_{f^{[n]}}^{H}(V_1,\ldots,V_n)\big\|_{\alpha}
\leq \frac{1}{n!}\big\|\widehat{f^{(n)}}\big\|_1\nrm{V_1}{\alpha_1}\cdots\nrm{V_n}{\alpha_n},
\end{align}
for $\alpha,\alpha_1,\ldots,\alpha_n$ satisfying $\frac{1}{\alpha}=\frac{1}{\alpha_1}+\ldots+\frac{1}{\alpha_n}$ and $V_j\in\S^{\alpha_j}$.
In particular, the multilinear operator $T^H_{f^{[n]}}:\S^{\alpha_1}\times\cdots\times\S^{\alpha_n}\to\S^\alpha$ is continuous, which turns out to provide a powerful interpretation of the derivatives of the functional \eqref{eq:Operator Trace Functional}. Moreover, by \eqref{eq:Holder type bound}, we have that
\begin{align*}
	\frac{d^n}{dt^n}f(H+tV)\big|_{t=0}=n!T^H_{f^{[n]}}(V,\ldots,V)\in\S^1,
\end{align*}
whenever $V\in\S^n$ and $f\in C^n$ such that $\widehat{f^{(n)}}\in L^1$. 
The approach described above has proven incredibly useful, as shown by e.g. \cite{ACDS09,LS,PSS13}.
However, for many examples, it is too restrictive to ask that the perturbation is compact -- let alone Schatten class. When $V$ is for instance a multiplication operator on a manifold, compactness implies that either $V=0$ or the manifold has dimension 0.

This chapter will show how the above results can be extended to noncompact perturbations with some simple but powerful techniques. Instead of assuming $V\in\S^n$ we will explore the separate implications of two assumptions that are more often satisfied by examples encountered in the wild.

The first assumption we use is
\begin{align}\label{eq:finitely summable}
	(H-i)^{-1}\in\S^s,
\end{align}
for some number $s$, called the summability. This condition is called \textit{finite summability} or \textit{$s$-summability}, and is satisfied for instance by differential operators on a compact manifold. It is also often assumed for spectral triples (which are generalizations of compact manifolds), in which case $H=D$ is the generalized Dirac operator. For simplicity, we assume $s\in\N$, which for first-order differential operators means that $s$ should be at least the dimension of the (noncommutative) manifold plus one, for reasons having to do with the fact that $\sum_{n=1}^\infty 1/n=\infty$ but $\sum_{n=1}^\infty 1/n^{1+\epsilon}<\infty$ for every $\epsilon>0$. Adding one (or $\epsilon$) to the dimension can be avoided considering a weak noncommutative Lebesgue space instead of $\S^s$, but this is too technical to consider here. When considering Taylor approximations of high enough order, the distinction does not matter, as we will see. 
In Section \ref{sct:Finitely summable}, we will show that, under condition \eqref{eq:finitely summable}, we have
\begin{align*}
	\nrm{T^H_{f^{[n]}}(V_1,\ldots,V_n)}{1}\leq c_{s,n}(f)\norm{V_1}\cdots\norm{V_n}\big\|(H-i)^{-1}\big\|_{s}^s,
\end{align*}
for an explicit seminorm $c_{s,n}$ and all $f\in C^n$ satisfying $\widehat{(fu^k)^{(m)}}\in L^1$ for all $k\leq n$ and $m\leq s$, where $u(x)=x-i$. The result is based on a formula that expresses the multiple operator integral into a finite sum of terms that are clearly trace class. Simultaneously, this expression implies strong continuity properties of the multiple operator integral, and an explicit expression for the terms of the (noncommutative) Taylor expansion \eqref{eq:Taylor 1}. Our results generalize (parts of) \cite{S14} and \cite{Sui11}. The results in Section \ref{sct:Finitely summable} will lay the analytical foundation of Chapter \ref{ch:Cyclic cocycles in the spectral action}, in which multiple operator integrals play a key role, because of their analytical, but also their algebraic properties.

The second assumption we use is strictly more general than the first, \eqref{eq:finitely summable}, namely
\begin{align}\label{eq:relative Schatten}
	V(H-i)^{-1}\in \S^n.
\end{align}
This condition is called \textit{relative Schatten} or \textit{relative Schatten class}. It is applicable also for differential operators on \textit{locally} compact spaces and generalized Dirac operators $H=D$ of \textit{nonunital} spectral triples. Precise accounts of applications can be found in Section \ref{sec4}. 
Under the assumption \eqref{eq:relative Schatten}, we find in Section \ref{sct:Relative Schatten} that
\begin{align*}
	\nrm{T^H_{f^{[n]}}(V,\ldots,V)}{1}\leq c_{n}(f)\big\|V(H-i)^{-1}\big\|_{n}^n,
\end{align*}
for an explicit seminorm $c_n$ and all $f\in C^n$ satisfying $\widehat{f^{(n)}},\widehat{(fu^p)^{(p)}}\in L^1$, $p\leq n$.
The proof will again rely on an explicit expression in terms of trace-class summands. This expression will be used extensively in Chapter \ref{ch:Spectral Shift for Relative Schatten Perturbations}. It also generalizes a few partial results in \cite{CS18}. 



First, in Section \ref{sct:MOI}, we will give a brief introduction to multiple operator integrals, establishing conventions and results in a form that will be useful for us.

\section{Multiple operator integration: preliminaries}
\label{sct:MOI}
We will give an introduction to the theory of multiple operator integration. A more thorough discussion is found in \cite{ST19}. 
Other good references are  \cite{ACDS09,PSS13}. 
%
%

The following very general definition of a multilinear operator integral was introduced in \cite{PSS13} (see also \cite[Definition 4.3.3]{ST19}). Recall that $E_H$ denotes the spectral measure of $H$ and that $\S^\alpha$ denotes the Schatten $\alpha$-class, with the convention that $\S^\infty=\mB(\H)$.

\begin{defi}
\label{def:nosep}
For $n\in\N$, let $\phi:\R^{n+1}\to\C$ be a bounded Borel function and fix $\alpha,\alpha_1,\ldots,\alpha_n\in [1,\infty]$ such that $\tfrac{1}{\alpha}=\tfrac{1}{\alpha_1}+\ldots+\tfrac{1}{\alpha_n}$.
Let $H_0,\dots,H_n$ be self-adjoint operators in $\H$.
Denote $E^j_{l,m}:=E_{H_j}\big(\big[\frac{l}{m},\frac{l+1}{m}\big)\big)$.
If for all $V_j\in\S^{\alpha_j}$, $j=1,\ldots,n$,
the double limit
$$T^{H_0,\ldots,H_n}_{\phi}(V_1,\ldots,V_n)
:=\lim_{m\to\infty}\lim_{N\to\infty}\sum_{|l_0|,\ldots,|l_n|<N}
\phi\left(\frac{l_0}{m},\ldots,\frac{l_n}{m}\right)E^0_{l_0,m}V_1E^1_{l_1,m}\cdots V_nE^n_{l_n,m}\,$$
exists in $\S^\alpha$, then the linear map $T^{H_0,\ldots,H_n}_{\phi}:\S^{\alpha_1}\times\cdots\times\S^{\alpha_n}\to\S^\alpha$, which is bounded by the Banach--Steinhaus theorem, is called a \textbf{multilinear operator integral}, and we write $T^{H_0,\ldots,H_n}_\phi\in\mB_{\alpha_1,\ldots,\alpha_n}^\alpha$. In the case that $H_j=H$ for all $j$, we also write $T_\phi^H:=T_\phi^{H,\ldots,H}$.
\end{defi}

An important class of examples is given by the following result, which also explains the appearance of the word \textit{integral} in \textit{multiple operator integral}. It concerns functions $\phi$ admitting a certain separation of variables, and is proven in \cite[Lemma 3.5]{PSS13}.

\begin{thm}\label{thm:coincidence}
Let $H_0,\ldots,H_n$ be self-adjoint operators in $\H$. Let $\phi:\R^{n+1}\to\C$ be a function admitting the representation
\begin{align}\label{phi representation}
	\phi(x_0,\ldots,x_n)=\int_\Omega a_0(x_0,s)\cdots a_n(x_n,s)\,d\nu(s),
\end{align}
where $(\Omega,\nu)$ is a finite measure space,
$a_j(\cdot,s):\R\to\C$ is a continuous function for every $s\in\Omega,$ and there is a sequence $\{\Omega_k\}_{k=1}^\infty$ of growing measurable subsets
of $\Omega$ such that $\Omega=\cup_{k=1}^\infty\Omega_k$ and the families
$$\{a_j(\cdot,s)\}_{s\in\Omega_k},\quad j=0,\dots,n$$
are uniformly bounded and uniformly equicontinuous. Then, for all $\alpha,\alpha_1,\ldots,\alpha_n\in[1,\infty]$ such that $\tfrac{1}{\alpha}=\tfrac{1}{\alpha_1}+\ldots+\tfrac{1}{\alpha_n}$, we have
$T^{H_0,\ldots,H_n}_\phi\in\mB_{\alpha_1,\ldots,\alpha_n}^\alpha$ and
\begin{align*}
T^{H_0,\ldots,H_n}_\phi(V_1,\ldots,V_n)\psi=\int_\Omega a_0(H_0,s)V_1 a_1(H_1,s)\cdots V_n a_n(H_n,s)\psi\,d\nu(s),\quad \psi\in\H,
\end{align*}
as well as
\begin{align*}
\big\|T_{\phi}^{H_0,\dots,H_n}(V_1,\ldots,V_n)\big\|_{\alpha}
\le \inf\bigg\{\int_\Omega\prod_{j=0}^n\|a_j(\cdot,s)\|_\infty\,d|\nu|(s)\bigg\}\nrm{V_1}{\alpha_1}\cdots\nrm{V_n}{\alpha_n},
\end{align*}
where the infimum is taken over all possible representations \eqref{phi representation}.
\end{thm}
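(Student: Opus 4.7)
My approach has two stages: first, I would construct an operator $U(V_1,\dots,V_n)\in\S^\alpha$ directly from the separated representation \eqref{phi representation} and verify the asserted norm estimate; second, I would identify $U$ with the double limit defining $T^{H_0,\dots,H_n}_\phi$.

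For the first stage, fix $V_j\in\S^{\alpha_j}$ and set $A(s):=a_0(H_0,s)V_1a_1(H_1,s)\cdots V_na_n(H_n,s)$ for $s\in\Omega$. Bounded Borel functional calculus gives $\|a_j(H_j,s)\|\le\|a_j(\cdot,s)\|_\infty$, so the non-commutative H\"older inequality yields
\[
\|A(s)\|_\alpha\le\Big(\prod_{j=0}^n\|a_j(\cdot,s)\|_\infty\Big)\prod_{j=1}^n\|V_j\|_{\alpha_j}.
\]
Measurability of $s\mapsto A(s)\psi$ on each $\Omega_k$ follows from the measurability in $s$ of each $a_j(x,s)$ together with uniform equicontinuity; assembling these over $\Omega=\bigcup_k\Omega_k$ covers all of $\Omega$. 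Since the representation hypothesis forces $\int_\Omega\prod_j\|a_j(\cdot,s)\|_\infty\,d|\nu|(s)<\infty$, the Bochner integral $U(V_1,\dots,V_n)\psi:=\int_\Omega A(s)\psi\,d\nu(s)$ is well-defined in $\H$. Testing against finite-rank operators (equivalently, using duality $(\S^\alpha)^*\cong\S^{\alpha'}$) shows $U(V_1,\dots,V_n)\in\S^\alpha$ with
\[
\|U(V_1,\dots,V_n)\|_\alpha\le\int_\Omega\prod_{j=0}^n\|a_j(\cdot,s)\|_\infty\,d|\nu|(s)\cdot\prod_{j=1}^n\|V_j\|_{\alpha_j},
\]
and taking the infimum over admissible representations yields the theorem's bound.

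For the second stage, I substitute \eqref{phi representation} into the approximating sum of Definition~\ref{def:nosep} and interchange the finite sum with $\int_\Omega d\nu(s)$ to obtain
\[
\sum_{|l_0|,\dots,|l_n|<N}\phi\Big(\tfrac{l_0}{m},\dots,\tfrac{l_n}{m}\Big)E^0_{l_0,m}V_1\cdots V_nE^n_{l_n,m}
=\int_\Omega a_0^{m,N}(s)V_1a_1^{m,N}(s)\cdots V_na_n^{m,N}(s)\,d\nu(s),
\]
where $a_j^{m,N}(s):=\sum_{|l|<N}a_j(l/m,s)E^j_{l,m}$. A telescoping argument together with non-commutative H\"older reduces $\S^\alpha$-convergence of the integrand to $A(s)$ to two facts: first, the full discretization $\sum_{l\in\Z}a_j(l/m,s)E^j_{l,m}$ converges to $a_j(H_j,s)$ in operator norm as $m\to\infty$, uniformly in $s\in\Omega_k$, by uniform equicontinuity of $\{a_j(\cdot,s)\}_{s\in\Omega_k}$; second, the truncation error $\|(I-E_{H_j}([-N/m,N/m)))V\|_{\alpha_{j+1}}$ vanishes as $N\to\infty$ whenever $V\in\S^{\alpha_{j+1}}$, via strong convergence of the spectral projections combined with the Schatten-ideal approximation $V=\lim_k V_k$ by finite-rank operators. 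Both limits are uniform in $s\in\Omega_k$, so dominated convergence on each $\Omega_k$, followed by passage to $\Omega=\bigcup_k\Omega_k$ using the integrable majorant $\prod_j\|a_j(\cdot,s)\|_\infty$, identifies the double limit with $U(V_1,\dots,V_n)$.

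The main obstacle is justifying the interchange of the iterated limit $\lim_m\lim_N$ with the integral over $\Omega$. It is precisely for this that the assumption of \emph{uniform} (in $s\in\Omega_k$) equicontinuity, rather than mere pointwise continuity, is essential: it converts the pointwise-in-$s$ functional-calculus discretization into operator-norm convergence that is uniform over $\Omega_k$, so that dominated convergence applies. Once that is in place, everything else reduces to routine applications of non-commutative H\"older and spectral-projection bookkeeping.
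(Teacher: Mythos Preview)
The paper does not prove this theorem itself; it records that the result ``is proven in \cite[Lemma~3.5]{PSS13}'' and moves on. Your two-stage scheme---construct the Bochner integral $U$ and establish the $\S^\alpha$-bound via non-commutative H\"older, then identify $U$ with the double limit of Definition~\ref{def:nosep}---is exactly the standard route taken in that reference, and the emphasis you place on uniform equicontinuity over each $\Omega_k$ as the mechanism enabling dominated convergence is the right one.

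One step needs attention. Your truncation claim $\|(I-E_{H_j}([-N/m,N/m)))V\|_{\alpha_{j+1}}\to 0$ ``via finite-rank approximation'' is valid only when $\alpha_{j+1}<\infty$, since finite-rank operators are not norm-dense in $\S^\infty=\mB(\H)$. When $\alpha_{j+1}=\infty$ you can push the tail projection to the other side and use $V_j(I-P_N^j)\to 0$ in $\S^{\alpha_j}$ instead, but at the endpoints $j=0$ and $j=n$ there is only one neighbour available, and in the degenerate case $\alpha_1=\cdots=\alpha_n=\infty$ the $\S^\infty$-norm limit of the approximants genuinely fails (take $\phi\equiv 1$, $V_1=I$, and $H_0=H_1$ with unbounded spectrum: the partial sums are spectral projections $P_N$ which do not converge in operator norm). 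For those cases one argues strong convergence on vectors $\psi\in\H$ directly---which is precisely how the integral formula in the theorem is stated---and there the strong convergence $P_N^j\to I$ suffices. With this adjustment your outline is complete.
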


The above theorem in particular shows that $T^H_{f}()=f(H)$. 
More generally, we will use the above theorem, although not exclusively, in the case that $\phi$ equals a divided difference $f^{[n]}$, as defined by \eqref{eq:divdiff}. To explain how, let $\sigma$ denote the standard measure on the $n$-simplex, $$\Delta_n:=\bigg\{(s_0,\ldots,s_n)\in\R^{n+1}_{\geq0}:~ \sum_{j=0}^n s_j=1\bigg\}.$$  In order to apply Theorem \ref{thm:coincidence} (and obtain Corollary \ref{cor:nice def of MOI}) we need the following lemma.
\begin{lem}\label{lem:divided diff}
Whenever $f\in C^n$ is such that $\widehat{f^{(n)}}\in L^1$  (cf. the discussion following \eqref{eq:FT}) we can write $f^{[n]}$ as
	$$f^{[n]}(x_0,\ldots,x_n)=\int_{\Delta_n}\int_\R e^{its_0x_0}\cdots e^{its_nx_n}\widehat{f^{(n)}}(t)\,dt\,d\sigma(s_0,\ldots,s_n).$$
	As such, $\phi=f^{[n]}$ satisfies the assumptions of Theorem \ref{thm:coincidence}.
\end{lem}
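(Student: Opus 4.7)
The plan is to combine the classical Hermite--Genocchi integral representation of the divided difference with Fourier inversion of $f^{(n)}$. First, I would recall (or verify by a short induction on $n$, starting from $f^{[1]}(x_0,x_1)=\int_0^1 f'((1-s)x_0+sx_1)\,ds$) the identity
\[
f^{[n]}(x_0,\ldots,x_n)=\int_{\Delta_n} f^{(n)}(s_0x_0+\cdots+s_nx_n)\,d\sigma(s_0,\ldots,s_n),
\]
valid for every $f\in C^n$. This reduces the problem to inverting the Fourier transform of $f^{(n)}$.

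Since $\widehat{f^{(n)}}\in L^1$, the discussion following \eqref{eq:FT} yields the pointwise identity $f^{(n)}(y)=\int_\R \widehat{f^{(n)}}(t)e^{ity}\,dt$ for every $y\in\R$. Substituting $y=s_0x_0+\cdots+s_nx_n$, factoring the complex exponential into $e^{its_0x_0}\cdots e^{its_nx_n}$, and exchanging the order of integration via Fubini (justified because $\sigma(\Delta_n)<\infty$ and $\widehat{f^{(n)}}\in L^1$) yields the claimed representation.

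For the second assertion, I would set $\Omega:=\Delta_n\times\R$, equipped with the finite complex measure $\nu:=d\sigma\otimes \widehat{f^{(n)}}(t)\,dt$, whose total variation $|\nu|(\Omega)=\sigma(\Delta_n)\bnrm{\widehat{f^{(n)}}}{1}$ is finite, and define $a_j(x,(s,t)):=e^{its_jx}$, each continuous in $x$. With the exhausting family $\Omega_k:=\Delta_n\times[-k,k]$, one has $|a_j(x,(s,t))|=1$ for $(s,t)\in\Omega_k$ and the Lipschitz bound $|a_j(x,(s,t))-a_j(y,(s,t))|\leq k|x-y|$, establishing uniform boundedness and uniform equicontinuity of the families $\{a_j(\cdot,(s,t))\}_{(s,t)\in\Omega_k}$, thereby verifying the hypotheses of Theorem \ref{thm:coincidence}.

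I do not anticipate a genuine obstacle; the only subtleties are the brief inductive derivation of Hermite--Genocchi and the fact that $\nu$ is complex rather than positive (harmless, as one may split $\widehat{f^{(n)}}$ into its four signed pieces coming from $\Re$ and $\Im$ and apply Theorem \ref{thm:coincidence} to each).
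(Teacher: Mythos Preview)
Your proposal is correct and follows essentially the same route as the paper, which simply cites \cite[Lemmas 5.1 and 5.2]{PSS13}: Lemma 5.1 there is the Hermite--Genocchi identity you invoke, and Lemma 5.2 carries out the Fourier inversion and identifies the same measure space $(\Delta_n\times\R,\,d\sigma\otimes\widehat{f^{(n)}}(t)\,dt)$ with total variation $\tfrac{1}{n!}\bnrm{\widehat{f^{(n)}}}{1}$. Your verification of the hypotheses of Theorem \ref{thm:coincidence} via $\Omega_k=\Delta_n\times[-k,k]$ is exactly what is needed, and your remark on the complex measure is a harmless precaution (the paper works directly with the complex measure, as the bound in Theorem \ref{thm:coincidence} is stated in terms of $|\nu|$).
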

\begin{proof}
One simply combines the proofs of \cite[Lemma 5.1]{PSS13} and \cite[Lemma 5.2]{PSS13}.
It is easily seen that $(\Sigma,\sigma_f):=(\Delta_n\times\R,\sigma\times \widehat{f^{(n)}})$ is a finite measure space with total variation equal to $\tfrac{1}{n!}\|\widehat{f^{(n)}}\|_1$.
\end{proof}

\begin{cor}\label{cor:nice def of MOI}
	Let $H_0\ldots,H_n$ be self-adjoint in $\H$ and let  $f \in C^{n}$ such that $\widehat{f^{(n)}} \in L^1$. For all $\alpha,\alpha_1,\ldots,\alpha_n\in[1,\infty]$ such that $\tfrac{1}{\alpha}=\tfrac{1}{\alpha_1}+\ldots+\tfrac{1}{\alpha_n}$ we have $\Tfn^{H_0,\ldots,H_n}\in\mB_{\alpha_1,\ldots,\alpha_n}^\alpha$ and for all $V_1,\ldots,V_n\in\mB(\H)$ and $\psi\in\H$ we have
\begin{align}\label{def:Tfn}
	\Tfn^{H_0,\ldots,H_n}(V_1,\ldots,V_n)\psi=\int_{\Delta_n}\int_\R   e^{its_0H_0}V_1e^{its_1H_1}\cdots V_n e^{its_nH_n}\psi\:\widehat{f^{(n)}}(t)\,dt\,d\sigma(s_0,\ldots,s_n).
\end{align}
Moreover, denoting $\norm{\cdot}_\infty=\norm{\cdot}$, we have
\begin{align}
\label{fourierbound}
\Big\|T_{f^{[n]}}^{H_0,\dots,H_n}(V_1,\ldots,V_n)\Big\|_{\alpha}
\leq \frac{1}{n!}\Big\|\widehat{f^{(n)}}\Big\|_1\nrm{V_1}{\alpha_1}\cdots\nrm{V_n}{\alpha_n}.
\end{align}
\end{cor}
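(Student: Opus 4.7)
The plan is to deduce this corollary directly from the two preceding ingredients: the integral representation of $f^{[n]}$ supplied by Lemma \ref{lem:divided diff}, and the coincidence result in Theorem \ref{thm:coincidence}. The key observation is that the representation
\[
f^{[n]}(x_0,\ldots,x_n)=\int_{\Delta_n}\int_\R e^{its_0x_0}\cdots e^{its_nx_n}\,\widehat{f^{(n)}}(t)\,dt\,d\sigma(s)
\]
is already in the separation-of-variables form \eqref{phi representation} appearing in Theorem \ref{thm:coincidence}, once one takes $(\Omega,\nu)=(\Delta_n\times\R,\,\sigma\otimes\widehat{f^{(n)}}(t)\,dt)$ and $a_j(x,(s,t)):=e^{its_jx}$ for $j=0,\ldots,n$. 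The measure $\nu$ is finite because $\widehat{f^{(n)}}\in L^1$ and $\sigma(\Delta_n)=1/n!$, with total variation exactly $\tfrac{1}{n!}\|\widehat{f^{(n)}}\|_1$, as noted at the end of the proof of Lemma \ref{lem:divided diff}.

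Next I would verify the remaining hypotheses of Theorem \ref{thm:coincidence}. Each $a_j(\cdot,(s,t))$ is continuous, and the families are clearly uniformly bounded: $|a_j(x,(s,t))|=1$. For uniform equicontinuity I take the exhausting sequence $\Omega_k:=\Delta_n\times[-k,k]$, which is measurable, grows to all of $\Omega$, and on which $|ts_j|\le k$ so that $|a_j(x,(s,t))-a_j(y,(s,t))|\le k|x-y|$ uniformly in $(s,t)\in\Omega_k$. So the hypotheses of Theorem \ref{thm:coincidence} are met for every admissible Schatten tuple $(\alpha,\alpha_1,\ldots,\alpha_n)$, giving simultaneously the membership $T_{f^{[n]}}^{H_0,\ldots,H_n}\in\mB^\alpha_{\alpha_1,\ldots,\alpha_n}$ and the integral formula
\[
T_{f^{[n]}}^{H_0,\ldots,H_n}(V_1,\ldots,V_n)\psi=\int_{\Delta_n}\int_\R a_0(H_0,(s,t))V_1\cdots V_n a_n(H_n,(s,t))\psi\,\widehat{f^{(n)}}(t)\,dt\,d\sigma(s),
\]
which is exactly \eqref{def:Tfn} once one recognises $a_j(H_j,(s,t))=e^{its_jH_j}$ via the bounded Borel functional calculus.

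For the norm bound \eqref{fourierbound}, I would apply the final estimate in Theorem \ref{thm:coincidence} to this same representation: since $\|a_j(\cdot,(s,t))\|_\infty=1$ for every $j$, the infimum over representations is bounded above by the total variation of $\nu$, namely $\tfrac{1}{n!}\|\widehat{f^{(n)}}\|_1$, which yields exactly \eqref{fourierbound}. The case $\alpha_j=\infty$ (so $\|V_j\|_{\alpha_j}=\|V_j\|$) is covered because Theorem \ref{thm:coincidence} and Definition \ref{def:nosep} allow $\alpha_j\in[1,\infty]$.

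There is no serious obstacle: the corollary is essentially a packaging of Lemma \ref{lem:divided diff} and Theorem \ref{thm:coincidence}. The only mildly delicate point is the verification of uniform equicontinuity along the exhausting sequence $\{\Omega_k\}$, but the $\R$-factor is tamed by the uniform bound $|ts_j|\le k$ on $\Omega_k$, so even this reduces to a one-line estimate.
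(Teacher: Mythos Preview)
Your proposal is correct and follows exactly the route the paper intends: the corollary is stated without proof because Lemma \ref{lem:divided diff} already asserts that $\phi=f^{[n]}$ satisfies the hypotheses of Theorem \ref{thm:coincidence}, and the total variation $\tfrac{1}{n!}\|\widehat{f^{(n)}}\|_1$ is recorded there as well. Your explicit verification of uniform equicontinuity via $\Omega_k=\Delta_n\times[-k,k]$ simply spells out what the paper leaves implicit.
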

%

The next theorem, discovered in \cite{PSS13}, shows that, in some cases, the norm of the multiple operator integral $T_{f^{[n]}}$ can be bounded not just by $\|\widehat{f^{(n)}}\|_1$, but even by $\|f^{(n)}\|_\infty$.


\begin{thm}\label{thm:PSS Higher Order}
Let $\alpha,\alpha_1,\ldots,\alpha_n\in (1,\infty)$ such that $\tfrac{1}{\alpha}=\tfrac{1}{\alpha_1}+\ldots+\tfrac{1}{\alpha_n}$. If $f\in C^n$ is such that $f^{(n)}\in C_\text{b}$ then $T^{H_0,\ldots,H_n}_{f^{[n]}}\in\mB_{\alpha_1,\ldots,\alpha_n}^\alpha$ and 
\begin{align}
\nrm{T^{H_0,\ldots,H_n}_{f^{[n]}}(V_1,\ldots,V_n)}{\alpha}\leq c_{\alpha_1,\ldots,\alpha_n}\supnorm{f^{(n)}}\nrm{V_1}{\alpha_1}\cdots\nrm{V_n}{\alpha_n}.
\end{align}	

\end{thm}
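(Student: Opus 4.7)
The plan is to prove this by induction on $n$, reducing everything to the well-developed theory of double operator integrals. In the base case $n=1$, the statement asserts that for $f' \in C_{\text{b}}$, the symbol $f^{[1]}(x,y) = \frac{f(x)-f(y)}{x-y}$ defines a double operator integral $T^{H_0,H_1}_{f^{[1]}}$ bounded as a map $\S^{\alpha_1} \to \S^{\alpha_1}$ (with $\alpha = \alpha_1 \in (1,\infty)$) with norm controlled by $c_{\alpha_1}\|f'\|_\infty$. This is a classical result of Birman--Solomyak and Peller, whose proof relies on the UMD property of the Schatten class $\S^p$ for $1 < p < \infty$ -- equivalently, on boundedness of the noncommutative triangular truncation / Hilbert transform. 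I would quote this result rather than reproving it.

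For the inductive step, I would exploit the recursive identity for divided differences,
\begin{align*}
f^{[n]}(x_0,\ldots,x_n) = \frac{f^{[n-1]}(x_0,\ldots,x_{n-2},x_{n-1}) - f^{[n-1]}(x_0,\ldots,x_{n-2},x_n)}{x_{n-1}-x_n},
\end{align*}
which at the level of operator integrals yields a factorisation expressing $T^{H_0,\ldots,H_n}_{f^{[n]}}(V_1,\ldots,V_n)$ as an outer double operator integral in the variables $(H_{n-1},H_n)$ with symbol $g^{[1]}$, where $g(x) = f^{(1)}_{\text{along }(x_0,\ldots,x_{n-2},x)}$, applied to the operator $T^{H_0,\ldots,H_{n-1}}_{f^{[n-1]}}(V_1,\ldots,V_{n-1})\cdot V_n$ (or a symmetric variant thereof). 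Choosing auxiliary H\"older exponents $\beta,\alpha_n$ with $\frac{1}{\beta} = \frac{1}{\alpha_1}+\cdots+\frac{1}{\alpha_{n-1}}$ and $\frac{1}{\alpha}=\frac{1}{\beta}+\frac{1}{\alpha_n}$, applying the inductive hypothesis to the inner $(n-1)$-fold integral (producing a factor of $\|f^{(n-1)}\|_\infty$ in $\S^\beta$), and then the base case to the outer double operator integral (producing a factor of $\|(f^{(n-1)})'\|_\infty = \|f^{(n)}\|_\infty$, since taking a derivative of the symbol along the diagonal gives $f^{(n)}$), delivers the desired bound.

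The main obstacle will be making the recursive decomposition rigorous at the level of Definition \ref{def:nosep}: one must verify that the outer double operator integral is well-defined on the operator-valued output of the inner multiple operator integral, that the two levels of integration can legitimately be combined in the sense of the double limit defining the MOI, and that the intermediate symbols $g^{[1]}$ satisfy the hypotheses of the base case uniformly in the remaining variables $x_0,\ldots,x_{n-2}$. This is essentially a careful bookkeeping issue handled in the PSS framework (e.g., by first establishing the identity on finite-rank operators or on the linear span of projections, then extending by density using the H\"older-type bound \eqref{fourierbound} for smooth truncations of $f$, and finally passing to the limit via the Banach--Steinhaus theorem together with dominated convergence).
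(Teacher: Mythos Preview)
The paper itself does not prove this statement; it simply cites \cite[Theorem~5.6]{PSS13} for the case $H_0=\cdots=H_n$ and \cite[Theorem~4.3.10]{ST19} for the extension to distinct $H_j$. So the relevant comparison is with the actual PSS argument, and there your inductive step has a genuine gap: the factorisation you describe does not exist. The recursion says that for each \emph{fixed} tuple $(x_0,\ldots,x_{n-2})$ one has $f^{[n]}(x_0,\ldots,x_n)=h^{[1]}(x_{n-1},x_n)$ with $h(t)=f^{[n-1]}(x_0,\ldots,x_{n-2},t)$, but at the operator level those parameters are the spectral values of $H_0,\ldots,H_{n-2}$ and are being summed over. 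Writing the sums out, one gets
\[
T^{H_0,\ldots,H_n}_{f^{[n]}}(V_1,\ldots,V_n)=\sum_{l_0,\ldots,l_{n-2}} E^{H_0}_{l_0}V_1\cdots V_{n-2}E^{H_{n-2}}_{l_{n-2}}V_{n-1}\,T^{H_{n-1},H_n}_{h_{l_0,\ldots,l_{n-2}}^{[1]}}(V_n),
\]
so there is no single $g$ with $T^{(n)}=T^{H_{n-1},H_n}_{g^{[1]}}\big(T^{(n-1)}(V_1,\ldots,V_{n-1})V_n\big)$. What you really need is a double operator integral with \emph{operator-valued} symbol (varying with $E^{H_0},\ldots,E^{H_{n-2}}$), and $\S^p\to\S^p$ boundedness of such objects with constant $\sup\|h'\|_\infty$ is precisely the hard content of the theorem, not a corollary of the scalar $n=1$ case.

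Two further remarks. First, even granting your factorisation, the bookkeeping is off: you feed the inductive bound $\|T^{(n-1)}(V_1,\ldots,V_{n-1})\|_\beta\le c\,\|f^{(n-1)}\|_\infty\prod_{j<n}\|V_j\|_{\alpha_j}$ into the outer DOI, which then contributes another factor $\|f^{(n)}\|_\infty$; the result carries an unwanted $\|f^{(n-1)}\|_\infty$. Second, the $n=1$ case for $f'$ merely in $C_\textnormal{b}$ is not Birman--Solomyak or Peller (who needed stronger regularity) but Potapov--Sukochev. The genuine PSS proof proceeds quite differently: via the simplex representation $f^{[n]}(x_0,\ldots,x_n)=\int_{\Delta_n}f^{(n)}(\sum s_jx_j)\,d\sigma$ and a dyadic decomposition of $f^{(n)}$ it reduces to uniform $\S^p$-bounds for MOIs with elementary oscillatory symbols, which are obtained by transference to $\S^p$-valued Fourier multipliers and the UMD property of $\S^p$ for $1<p<\infty$.
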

\begin{proof}
The result for $H_0=\ldots=H_n$ is proved in \cite[Theorem 5.6]{PSS13}. Its extension to the case of distinct $H_0,\ldots,H_n$ is explained in the proof of \cite[Theorem 4.3.10]{ST19}.
\end{proof}

\subsection{Continuity}
Equation \eqref{fourierbound} shows in particular that $T^{H_0,\ldots,H_n}_{f^{[n]}}:\B(\H)^{\times n}\to \B(\H)$ is $\norm{\cdot}$-continuous. This is known to still hold true when we replace $(\B(\H),\norm{\cdot})$ by $(\B(\H)_1,\text{s.o.t.})$ (see \cite[Proposition 4.9]{ACDS09}). Here $\mB(\H)_1$ denotes the closed unit ball in $\mB(\H)$, and $\text{s.o.t.}$ the strong operator topology. These results can be unified and generalized by writing $\L^\alpha:=(\S^\alpha,\nrm{\cdot}{\alpha})$ for $\alpha\in[1,\infty)$ and $\L^\infty:=(\B(\H)_1,\text{s.o.t.})$. We can then make use of the following straightforward result in operator theory.

\begin{lem}\label{lem:Schatten product is continuous}
Let $\alpha,\alpha_j\in[1,\infty]$ satisfy $\tfrac{1}{\alpha}=\tfrac{1}{\alpha_1}+\ldots+\tfrac{1}{\alpha_n}$. If either $\alpha_n<\infty$ or $\alpha_1=\ldots=\alpha_n=\infty$, then
the function $$(A_1,\ldots,A_n)\mapsto A_1\cdots A_n$$
is a continuous map from $\L^{\alpha_1}\times\cdots\times\L^{\alpha_n}$ to $\L^\alpha$.
\end{lem}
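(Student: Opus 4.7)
The plan is to split along the dichotomy in the hypothesis. If $\alpha_1=\cdots=\alpha_n=\infty$, the claim reduces to joint continuity of multiplication on $\B(\H)_1^{\times n}$ in the strong operator topology, which follows by induction from the two-factor identity $(A^{(m)}B^{(m)}-AB)\psi = A^{(m)}(B^{(m)}-B)\psi+(A^{(m)}-A)B\psi$ together with uniform boundedness $\|A^{(m)}\|\le 1$; the product stays in $\B(\H)_1$ since norms multiply.

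For the case $\alpha_n<\infty$, I would telescope
$$A_1^{(m)}\cdots A_n^{(m)}-A_1\cdots A_n=\sum_{k=1}^{n}A_1^{(m)}\cdots A_{k-1}^{(m)}(A_k^{(m)}-A_k)A_{k+1}\cdots A_n$$
and estimate each summand in $\|\cdot\|_\alpha$ via generalized H\"older. Uniform boundedness of the prefactors is automatic: $\|A_j^{(m)}\|_{\alpha_j}$ is bounded by norm convergence when $\alpha_j<\infty$, and by the definition of $\B(\H)_1$ when $\alpha_j=\infty$. For those indices $k$ with $\alpha_k<\infty$, the direct bound $\|A_k^{(m)}-A_k\|_{\alpha_k}\to 0$ combined with H\"older immediately dispatches the $k$-th summand.

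The main obstacle is the indices $k<n$ with $\alpha_k=\infty$, where $A_k^{(m)}-A_k$ converges only in the strong operator topology and one cannot apply H\"older with $\|A_k^{(m)}-A_k\|$. The key auxiliary fact I would invoke here is: if $B_m\to 0$ s.o.t.\ with $\sup_m\|B_m\|<\infty$ and $K\in\S^\gamma$ for some $\gamma<\infty$, then $\|B_m K\|_\gamma\to 0$. I would prove this by approximating $K$ in $\|\cdot\|_\gamma$ by finite-rank $K_\varepsilon=\sum_{i=1}^r\ket{\xi_i}\bra{\eta_i}$, for which $\|B_m K_\varepsilon\|_\gamma\le \sum_i\|B_m\xi_i\|\,\|\eta_i\|\to 0$, and controlling the tail by $\|B_m(K-K_\varepsilon)\|_\gamma\le\|B_m\|\,\|K-K_\varepsilon\|_\gamma$.

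To apply this to the $k$-th telescoping term, set $K:=A_{k+1}\cdots A_n$; H\"older gives $K\in\S^\gamma$ with $\gamma^{-1}=\sum_{j=k+1}^{n}\alpha_j^{-1}$, and $\gamma<\infty$ precisely because $\alpha_n<\infty$. The auxiliary lemma then yields $\|(A_k^{(m)}-A_k)K\|_\gamma\to 0$, after which one more application of H\"older with the bounded prefactor $A_1^{(m)}\cdots A_{k-1}^{(m)}$ (in $\S^{\alpha_1},\ldots,\S^{\alpha_{k-1}}$ respectively) pushes the $k$-th summand to zero in $\|\cdot\|_\alpha$. Summing over $k$ concludes the proof; the hypothesis $\alpha_n<\infty$ is used in exactly one place, to guarantee that the trailing product $K$ lands in a finite Schatten class.
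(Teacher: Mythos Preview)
Your proof is correct. The paper does not actually prove this lemma; it introduces it as ``the following straightforward result in operator theory'' and moves on, so there is no argument to compare against. Your treatment is exactly the expected one: joint s.o.t.\ continuity of multiplication on the unit ball for the all-$\infty$ case, a telescoping sum plus H\"older for the other case, and the auxiliary fact $\|B_mK\|_\gamma\to 0$ for $B_m\to 0$ s.o.t.\ and $K\in\S^\gamma$ (proved via finite-rank approximation) to handle the s.o.t.\ factors. You also identify correctly why the hypothesis singles out $\alpha_n<\infty$: it guarantees a nontrivial \emph{trailing} Schatten factor to absorb the merely s.o.t.-convergent differences on the left, which is exactly the direction in which the auxiliary fact works.
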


This implies the following slight strengthening of \cite[Proposition 4.9]{ACDS09}.

\begin{lem}\label{lem:continuity with alpha relation}
	Let $f\in C^n$ with $\widehat{f^{(n)}}\in L^1$ and let $\alpha,\alpha_j\in[1,\infty]$ with $\tfrac{1}{\alpha}=\tfrac{1}{\alpha_1}+\ldots+\tfrac{1}{\alpha_n}$. If either $\alpha_n<\infty$ or $\alpha_1=\ldots=\alpha_n=\infty$, then
		$$T^{H_0,\ldots,H_n}_{f^{[n]}}:\L^{\alpha_1}\times\cdots\times\L^{\alpha_n}\to\L^\alpha$$
	is continuous.
\end{lem}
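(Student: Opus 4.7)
The plan is to combine the integral representation of Corollary~\ref{cor:nice def of MOI} with a dominated convergence argument. Since $\H$ is separable, the s.o.t.\ on the closed unit ball $\L^\infty$ is metrizable, so $\L^{\alpha_1}\times\cdots\times\L^{\alpha_n}$ is metrizable and sequential continuity suffices. I would therefore fix an arbitrary convergent sequence $(V_1^{(k)},\ldots,V_n^{(k)})\to(V_1,\ldots,V_n)$ in this product.

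First I would establish pointwise (in $(s,t)\in\Delta_n\times\R$) convergence of the integrand in \eqref{def:Tfn}. Absorbing each unitary $e^{its_j H_j}$ into an adjacent $V_j^{(k)}$ rewrites the integrand as an $n$-fold product $\tilde V_1^{(k)}\cdots\tilde V_n^{(k)}$ with $\tilde V_j^{(k)}\in\L^{\alpha_j}$. Multiplication by a unitary is an isometry on $\L^{\alpha_j}$ when $\alpha_j<\infty$, and preserves both the unit ball and s.o.t.\ convergence when $\alpha_j=\infty$ (since $UV^{(k)}U'\psi=UV^{(k)}(U'\psi)\to UVU'\psi$). Hence $\tilde V_j^{(k)}\to\tilde V_j$ in $\L^{\alpha_j}$ for each $j$, and Lemma~\ref{lem:Schatten product is continuous} delivers $\tilde V_1^{(k)}\cdots\tilde V_n^{(k)}\to\tilde V_1\cdots\tilde V_n$ in $\L^\alpha$ at each fixed $(s,t)$.

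Next I would separate the two cases. If $\alpha_n<\infty$, so $\alpha<\infty$, I view the MOI as a Bochner integral in $\L^\alpha$ and bound the $\L^\alpha$-norm of the integrand difference by H\"older's inequality for Schatten ideals, obtaining the majorant
$$2\Big(\sup_k\prod_{j=1}^n\|V_j^{(k)}\|_{\alpha_j}\Big)|\widehat{f^{(n)}}(t)|,$$
which lies in $L^1(\Delta_n\times\R)$ because convergent sequences are bounded and $\widehat{f^{(n)}}\in L^1$. Scalar dominated convergence then yields $\L^\alpha$-convergence of the MOIs. If instead all $\alpha_j=\infty$ (so $\alpha=\infty$ and each $V_j^{(k)}\in\B(\H)_1$), I would fix $\psi\in\H$, apply the integrand to $\psi$, use the preceding paragraph for pointwise convergence to zero in $\H$, and dominate by $2\|\psi\|\,|\widehat{f^{(n)}}(t)|\in L^1(\Delta_n\times\R)$; vector-valued dominated convergence in $\H$ then provides s.o.t.\ convergence of the MOIs, as required.

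The main subtlety lies in the $\alpha=\infty$ case: composition of bounded operators is not jointly s.o.t.-continuous in general, but the unit-ball hypothesis is precisely what is needed both to pass s.o.t.\ limits through each factor one at a time on a fixed vector, and to produce a $\psi$-dependent but $k$-uniform integrable majorant.
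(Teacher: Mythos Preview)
Your proposal is correct and follows the same template as the paper: use the integral representation of Corollary~\ref{cor:nice def of MOI}, invoke Lemma~\ref{lem:Schatten product is continuous} for pointwise convergence of the integrand, and close with dominated convergence plus metrizability. The execution differs slightly. For $\alpha=\infty$ the paper simply cites \cite[Proposition 4.9]{ACDS09}, whereas you give the direct argument in $\H$. For $\alpha<\infty$ the paper tests against $B\in\S^{\alpha'}$, proves that each $A_{s,t}B\in\L^1$ depends continuously on $(V_1,\ldots,V_n)$, and runs dominated convergence on the scalar $\Tr(A_{s,t}B)$; you instead treat the MOI directly as a Bochner integral in $\S^\alpha$ and apply vector-valued DCT there. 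Your route is a bit more direct for obtaining norm convergence in $\S^\alpha$, at the cost of tacitly assuming Bochner measurability of $(s,t)\mapsto A_{s,t}$ in $\S^\alpha$; this is harmless since $(s,t)\mapsto A_{s,t}$ is in fact $\S^\alpha$-continuous (compact operators intertwine s.o.t.\ convergence of the unitary factors with Schatten-norm convergence), so separability of $\S^\alpha$ is not even needed.
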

\begin{proof}
	If $\alpha=\infty$, then all $\alpha_j=\infty$, and the result is proven in \cite[Proposition 4.9]{ACDS09}. If $\alpha<\infty$, we define
		$$A_{s,t}:=e^{its_0H_0}V_1e^{its_1H_1}\cdots V_ne^{its_nH_n},$$
	for all $(s,t)\in\Delta_n\times\R=: \Sigma$.	
	We find
\begin{align}\label{eq:Tr(TfnC)=int Tr(AC)}
	\Tr(\Tfn^{H_0,\ldots,H_n}(V_1,\ldots,V_n)B)=\int_{\Delta_n}\int_\R \Tr(A_{s,t}B)\widehat{f^{(n)}}(t)\,dt\,d\sigma(s_0,\ldots,s_n),
\end{align}
	for every $B\in\S^{\alpha'}$, where $\alpha'=(1-1/\alpha)^{-1}$. By Lemma \ref{lem:Schatten product is continuous}, for every $(s,t)$, the operator $A_{s,t}B\in\L^1$ depends continuously on $(V_1,\ldots,V_n)\in\L^{\alpha_1}\times\cdots\times\L^{\alpha_n}$. Since any convergent sequence in $\L^p$ is bounded with respect to $\nrm{\cdot}{p}$ (where $\nrm{\cdot}{\infty}=\norm{\cdot}$) an application of the dominated convergence theorem shows that \eqref{eq:Tr(TfnC)=int Tr(AC)} depends sequentially continuously on $(V_1,\ldots,V_n)$. By our specific choice of $\L^\infty$, every $\L^p$ is a metric space, hence sequential continuity implies continuity.
\end{proof}

The relation $\tfrac{1}{\alpha}=\tfrac{1}{\alpha_1}+\ldots+\tfrac{1}{\alpha_n}$ is central to the above Lemma. When the resolvent of $H$ is $s$-Schatten, however, for an explicit class $\Wsn$ of functions $f$, defined in \eqref{eq:Wsn}, we can remove this restrictive relation. We will prove this in \textsection\ref{sct:MOI continuity}.

\subsection{Taylor remainder via operator integrals}

The following result shows that the multiple operator integral of order $n$ is a multilinear extension of the $n\th$ derivative of an operator function. We refer the interested reader to \cite{ST19} for additional details.

\begin{thm}
\label{dm}
Let $n\in\N$ and let $f\in C^n(\R)$ be such that $\widehat{f^{(k)}}\in L^1(\R)$, $k=1,\dots,n$. Let $H$ be a self-adjoint operator in $\H$, let $V\in\mB(\H)_{\sa}$.
Then, the Fr\'{e}chet derivative $\frac{1}{n!}\frac{d^n}{dt^n}f(H+tV)|_{t=0}$ exists in the operator norm and admits the multiple operator integral representation
\begin{align}
\label{dermoi}
\frac{1}{n!}\frac{d^n}{ds^n}f(H+sV)\big|_{s=t}=T_{f^{[n]}}^{H+tV,\dots,H+tV}(V,\dots,V).
\end{align}
The map $t\mapsto\frac{d^n}{ds^n}f(H+sV)|_{s=t}$
is continuous in the strong operator topology and, when $V\in\S^n$, in the $\S^1$-norm.
\end{thm}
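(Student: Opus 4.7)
I would prove the theorem by induction on $n$, using the integral representation \eqref{def:Tfn} throughout. For the base case $n=1$, writing $H_\tau := H+\tau V$, Duhamel's formula for unitary groups gives
\begin{align*}
\frac{e^{irH_s}-e^{irH_t}}{ir} = (s-t)\int_0^1 e^{iruH_s}\,V\,e^{ir(1-u)H_t}\,du
\end{align*}
for all $r\in\R$; integrating against $\widehat{f'}\in L^1$ and recognising the right-hand side as a multiple operator integral via Theorem \ref{thm:coincidence} yields the key identity $f(H_s)-f(H_t)=(s-t)\,T^{H_s,H_t}_{f^{[1]}}(V)$. Dividing by $s-t$ reduces the base case to proving operator-norm continuity of $(s,t)\mapsto T^{H_s,H_t}_{f^{[1]}}(V)$ at $s=t$. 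I would establish this by a cutoff in the Fourier variable: on the region $|r|\le R$ the Duhamel estimate $\|e^{i\alpha rH_s}-e^{i\alpha rH_t}\|\le|\alpha r(s-t)|\,\|V\|$ furnishes smallness uniformly in $r$, while on $|r|>R$ the trivial bound $\|e^{i\alpha rH_s}-e^{i\alpha rH_t}\|\le 2$ combined with the $L^1$-tail of $\widehat{f'}$ handles the remainder.

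For the inductive step, assume the result at order $n-1$; the task is to show $\frac{d}{dt}\,T^{H_t,\ldots,H_t}_{f^{[n-1]}}(V,\ldots,V) = n\,T^{H_t,\ldots,H_t}_{f^{[n]}}(V,\ldots,V)$. The central tool is the MOI version of the divided-difference recursion,
\begin{align*}
T^{\ldots,H_j,\ldots}_{f^{[n-1]}}(V_1,\ldots,V_{n-1}) - T^{\ldots,K,\ldots}_{f^{[n-1]}}(V_1,\ldots,V_{n-1}) = T^{\ldots,H_j,K,\ldots}_{f^{[n]}}(\ldots,V_{j-1},H_j-K,V_j,\ldots),
\end{align*}
provable by manipulating the integral representations via Duhamel. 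Telescoping $T^{H_{t+h},\ldots,H_{t+h}}_{f^{[n-1]}}-T^{H_t,\ldots,H_t}_{f^{[n-1]}}$ across the $n$ spectral slots inserts an additional copy of $hV$ at each step; dividing by $h$ and letting $h\to 0$, the same cutoff-$R$ argument shows that each of the $n$ contributions converges to $T^{H_t,\ldots,H_t}_{f^{[n]}}(V,\ldots,V)$ in operator norm, producing the factor $n$.

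For the continuity statements, SOT continuity of $t\mapsto T^{H_t,\ldots,H_t}_{f^{[n]}}(V,\ldots,V)$ follows from dominated convergence applied to the representation \eqref{def:Tfn} tested against $\psi\in\H$: the integrand is pointwise continuous in $t$ by strong continuity of the unitary groups and dominated by $\|V\|^n\|\psi\|\,|\widehat{f^{(n)}}(r)|$, which is $L^1$ over $\Delta_n\times\R$. For $\S^1$-continuity when $V\in\S^n$ the same argument works, replacing $\|\psi\|$ by the H\"older bound $\nrm{V}{n}^n$ and using Lemma \ref{lem:Schatten product is continuous} with $\alpha_1=\cdots=\alpha_n=n$ for pointwise $\S^1$-continuity of the integrand. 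The principal obstacle throughout is that the hypothesis $\widehat{f^{(k)}}\in L^1$ does \emph{not} imply $r\,\widehat{f^{(k)}}\in L^1$, so a naive dominated-convergence argument in operator norm breaks down; the cutoff-$R$ decomposition is precisely what rescues the argument and delivers norm (rather than merely SOT) convergence of all the limits appearing above.
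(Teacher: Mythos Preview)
Your proof is correct and follows the standard approach from the literature. Note that the paper itself does not give a self-contained argument here: it simply cites \cite[Theorem 5.3.5]{ST19} for the derivative formula \eqref{dermoi} and \cite[Proposition 4.3.15]{ST19} for the continuity statements, remarking that the latter relies on Theorem \ref{thm:coincidence} and Lemma \ref{lem:divided diff}. Your Duhamel/telescoping/cutoff sketch is essentially the content of those references; in particular, the perturbation identity you invoke in the inductive step is precisely formula \eqref{pf}, which the paper records (again by citation to \cite{ST19}) in the proof of the next theorem. One small technical point: in the $\S^1$-continuity argument, the product you are considering has the form $e^{irs_0H_t}V\cdots Ve^{irs_nH_t}$ with the outermost factor in $\L^\infty$, so Lemma \ref{lem:Schatten product is continuous} does not apply verbatim (its hypothesis requires either $\alpha_n<\infty$ or all $\alpha_j=\infty$); however, the desired continuity is immediate from the elementary fact that left and right multiplication by unitaries preserves $\S^p$-norms and that $t\mapsto e^{irs_jH_t}$ is operator-norm continuous.
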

\begin{proof}
The first assertion is given in \cite[Theorem 5.3.5]{ST19} and, in fact, holds for a larger set of functions. 
The second assertion follows from \cite[Proposition 4.3.15]{ST19}. The proof relies on Theorems \ref{thm:coincidence} and Lemma \ref{lem:divided diff}.
\end{proof}

Thanks to this theorem, we can formally express the perturbation of an operator function in terms of operator integrals, via the Taylor series
\begin{align*}
	f(H+V)\sim\sum_{k=0}^\infty\frac{1}{k!}\frac{d^k}{dt^k}f(H+tV)\big|_{t=0}=\sum_{k=0}^\infty T^H_{f^{[k]}}(V,\ldots,V),
\end{align*}
thereby giving access to a wide variety of algebraic and analytic results on multiple operator integration. The formal expansion above can be made even more powerful by using Taylor remainders.
Given a function $f\in C^n(\R)$ satisfying $\widehat{f^{(k)}}\in L^1(\R)$, $k=1,\dots,n$, a self-adjoint operator $H$ in $\H$, and $V\in\mB(\H)_{\sa}$, we denote the $n^{\text{th}}$ Taylor remainder by
\begin{align}\label{eq:def of the remainder}	R_{n,H,f}(V):=f(H+V)-\sum_{k=0}^{n-1}\frac{1}{k!}\frac{d^k}{dt^k}f(H+tV)\big|_{t=0}.
\end{align}

Like the individual terms of the Taylor series, the remainder can also be expressed in terms of a multiple operator integral.
\begin{thm}
\label{rm}
Let $n\in\N$ and let $f\in C^n(\R)$ be such that $\widehat{f^{(k)}}\in L^1(\R)$, $k=1,\dots,n$.
Let $H$ be a self-adjoint operator in $\H$, let $V\in\mB(\H)_{\sa}$. We then have
\begin{align}
\label{remmoi}
R_{n,H,f}(V)&=T^{H+V,H,\ldots,H}_{f^{[n]}}(V,\ldots,V)\nonumber\\
&=T^{H,H+V,H,\ldots,H}_{f^{[n]}}(V,\ldots,V).
\end{align}
\end{thm}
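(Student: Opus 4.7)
I would prove this by induction on $n$, with the inductive step reduced to a \emph{splitting identity} that promotes a difference of two order-$(n{-}1)$ multiple operator integrals into a single one of order $n$.

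\textbf{Base case $n=1$.} Here the claim is $f(H+V)-f(H)=T^{H+V,H}_{f^{[1]}}(V)=T^{H,H+V}_{f^{[1]}}(V)$. Starting from the representation of Lemma \ref{lem:divided diff}, I plug in Duhamel's formula in its two equivalent forms,
\begin{align*}
e^{it(H+V)}-e^{itH}=\int_0^t e^{i\tau(H+V)}iV e^{i(t-\tau)H}\,d\tau=\int_0^t e^{i\tau H}iV e^{i(t-\tau)(H+V)}\,d\tau,
\end{align*}
rescale $\tau=ts_0$, and use $it\,\widehat{f}(t)=\widehat{f'}(t)$; this directly yields each of the two expressions for $f(H+V)-f(H)$.

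\textbf{Inductive step.} Assuming the result for $n-1$, the decomposition
\[R_{n,H,f}(V)=R_{n-1,H,f}(V)-\tfrac{1}{(n-1)!}\tfrac{d^{n-1}}{dt^{n-1}}f(H+tV)\big|_{t=0},\]
combined with the inductive hypothesis and Theorem \ref{dm}, gives
\[R_{n,H,f}(V)=T^{H+V,H,\ldots,H}_{f^{[n-1]}}(V,\ldots,V)-T^{H,H,\ldots,H}_{f^{[n-1]}}(V,\ldots,V).\]
Hence it suffices to establish the splitting identity
\[T^{A,H,\ldots,H}_{f^{[n-1]}}(V,\ldots,V)-T^{B,H,\ldots,H}_{f^{[n-1]}}(V,\ldots,V)=T^{A,B,H,\ldots,H}_{f^{[n]}}(A-B,V,\ldots,V),\]
applied with $A=H+V$, $B=H$. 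The second equality in the theorem is handled identically, but invoking the inductive hypothesis in its alternative form and splitting in the second slot instead of the first.

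\textbf{Splitting identity and main obstacle.} I insert the Duhamel expansion
\[e^{its_0A}-e^{its_0B}=\int_0^{s_0}e^{itu A}\,it(A-B)\,e^{it(s_0-u)B}\,du\]
into the integral representation of the left-hand side from Lemma \ref{lem:divided diff}, absorb the factor $it$ via $it\,\widehat{f^{(n-1)}}(t)=\widehat{f^{(n)}}(t)$, and then change variables $(s_0,u,s_1,\ldots,s_{n-1})\mapsto(u,s_0-u,s_1,\ldots,s_{n-1})$, which reparametrizes the integration domain as the $n$-simplex $\Delta_n$ with unit Jacobian. The resulting integral coincides with the representation of $T^{A,B,H,\ldots,H}_{f^{[n]}}(A-B,V,\ldots,V)$. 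The principal technical point is the rigorous justification of Fubini and of the change of variables, which is enabled by $\widehat{f^{(n)}}\in L^1$ together with the unitary bounds $\|e^{i\tau H_j}\|\le 1$ and the boundedness of $V$ and $A-B$, yielding absolute integrability of the integrand on $\R\times\Delta_n$.
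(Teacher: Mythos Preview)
Your proof is correct and follows essentially the same inductive scheme as the paper: decompose $R_{n,H,f}(V)=R_{n-1,H,f}(V)-\tfrac{1}{(n-1)!}\tfrac{d^{n-1}}{dt^{n-1}}f(H+tV)|_{t=0}$, invoke Theorem~\ref{dm} and the inductive hypothesis, and reduce to the splitting identity (your displayed formula, the paper's \eqref{pf}). The only difference is that the paper cites this identity from \cite[Theorems 3.3.8 and 4.3.14]{ST19}, whereas you supply a direct, self-contained proof via Duhamel and the simplex reparametrization---which is in fact how those cited results are proved.
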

\begin{proof}
By \cite[Theorem 3.3.8]{ST19} for $k=0$ and \cite[Theorem 4.3.14]{ST19} for
$k\ge 1$,
\begin{align}
\label{pf}
T^{H_0+V_0,H_1,\ldots,H_k}_{f^{[k]}}(V_1,\ldots,V_k)-T^{H_0,\ldots,H_k}_{f^{[k]}}(V_1,\ldots,V_k)
=T^{H_0+V_0,H_0,\ldots,H_k}_{f^{[k+1]}}(V_0,\ldots,V_k),
\end{align}
where $H_0,\ldots,H_k$ are self-adjoint operators in $\H$ and $V_0,\ldots,V_k\in\mB(\H)_{\sa}$. In particular,
\begin{align}\label{pf2}
	T^{H+V,H,\ldots,H}_{f^{[k]}}(V,\ldots,V)-T^{H,\ldots,H}_{f^{[k]}}(V,\ldots,V)
=T^{H+V,H,\ldots,H}_{f^{[k+1]}}(V,\ldots,V).
\end{align}
Combining \eqref{pf2} with \eqref{dermoi} and proceeding by induction on $k$ yields \eqref{remmoi}. The second equality follows similarly.
\end{proof}

\section{Finitely summable}
\label{sct:Finitely summable}

We specialize the class of functions $f$ that appear in Theorem \ref{rm} and consider for $s,n\in\N_0$:
\begin{align}\label{eq:Wsn}
	\Wsn:=\{f\in C^{n}:~ \widehat{(fu^m)^{(k)}}\in L^1\text{ for all $m=0,\ldots, s$ and $k=0,\ldots,n$}\},
\end{align}
where $u(x):=x-i$. Examples of functions in $\Wsn$ are $n+1$-differentiable functions such that $(fu^s)^{(k)}\in L^2$ for all $k\leq n+1$, such as Schwartz functions, or functions in $C_\text{c}^{n+1}$. 
\subsection{Bound on the multiple operator integral}\label{sct:core results}

In this section we will use the $s$-summability of $H$, as well as the above function class, to obtain a trace-class estimate on the multiple operator integral $\Tfn^H$, i.e., Theorem \ref{thm:Schatten estimate}. For summability $s=2$, a similar estimate was found by Anna Skripka in \cite[Lemma 3.6]{S14}. 


The core idea used in our proof is inspired by the proof of \cite[Lemma 3.6]{S14}, namely to expand $T_{f^{[n]}}(V_1,\ldots,V_n)$ as a sum of operator integrals, of which the trace norm can be bounded using (a noncommutative) H\"older's inequality. However, for general values of $s$, the expansion process needs to be repeated, and the increasingly complicated summands need to be controlled. As an intermediate step, we prove the following lemma.

\begin{lem}\label{lem:added one weight}
When $f\in C^n$ and $\widehat{(fu)^{(n)}},\widehat{f^{(n-1)}}\in L^1$, we have
\begin{align*}
	T^{H_0,\ldots,H_n}_{f^{[n]}}(V_1,\ldots,V_n)=&T^{H_0,\ldots,H_n}_{(fu)^{[n]}}(V_1,\ldots,V_n)(H-i)^{-1}\\
	&-T^{H_0,\ldots,H_{j-1},H_{j+1},\ldots,H_{n}}_{f^{[n-1]}}(V_1,\ldots,V_{n-1})V_n(H-i)^{-1}.
\end{align*}
\end{lem}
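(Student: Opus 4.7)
The driving idea will be to reduce the identity to a pointwise symbolic one via the Leibniz rule for divided differences, then lift it using the separated-variable representation of Theorem \ref{thm:coincidence}. Since $u(x) = x - i$ is affine, $u^{[0]}(x) = x - i$, $u^{[1]}(x_0, x_1) = 1$, and $u^{[k]} \equiv 0$ for $k \geq 2$, so the Leibniz formula for divided differences,
\[(fu)^{[n]}(x_0,\ldots,x_n) = \sum_{k=0}^n f^{[k]}(x_0,\ldots,x_k)\, u^{[n-k]}(x_k,\ldots,x_n),\]
collapses to the two surviving terms
\[(fu)^{[n]}(x_0,\ldots,x_n) = f^{[n]}(x_0,\ldots,x_n)\,(x_n - i) + f^{[n-1]}(x_0,\ldots,x_{n-1}).\]
Right-multiplying by $(x_n - i)^{-1}$ and rearranging yields the pointwise version of the claim,
\[f^{[n]}(x_0,\ldots,x_n) = (fu)^{[n]}(x_0,\ldots,x_n)(x_n - i)^{-1} - f^{[n-1]}(x_0,\ldots,x_{n-1})(x_n - i)^{-1},\]
where I read the generic $j$ and $H$ in the lemma's statement as $n$ and $H_n$ respectively.

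To lift this to operator integrals, I would invoke Corollary \ref{cor:nice def of MOI}: under $\widehat{(fu)^{(n)}},\widehat{f^{(n-1)}} \in L^1$, both $T^{H_0,\ldots,H_n}_{(fu)^{[n]}}$ and $T^{H_0,\ldots,H_{n-1}}_{f^{[n-1]}}$ are bounded multilinear maps with absolutely convergent simplex representations. Two structural observations then turn the symbol identity into the operator identity. First, for any symbol of product form $\phi(x_0,\ldots,x_n) = \psi(x_0,\ldots,x_n)\, g(x_n)$ with $g$ a bounded Borel function,
\[T^{H_0,\ldots,H_n}_\phi(V_1,\ldots,V_n) = T^{H_0,\ldots,H_n}_\psi(V_1,\ldots,V_n)\, g(H_n),\]
because in the separated representation of Theorem \ref{thm:coincidence} the bounded operator $g(H_n)$ commutes with each $a_n(H_n, s)$ and can be pulled outside the $s$-integral; taking $g(x_n) = (x_n - i)^{-1}$ delivers the $(H_n - i)^{-1}$ on the right of both summands. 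Second, any symbol $\phi(x_0,\ldots,x_n) = \chi(x_0,\ldots,x_{n-1})$ independent of $x_n$ satisfies
\[T^{H_0,\ldots,H_n}_\phi(V_1,\ldots,V_n) = T^{H_0,\ldots,H_{n-1}}_\chi(V_1,\ldots,V_{n-1})\, V_n,\]
again by Theorem \ref{thm:coincidence}: choose the trivial $a_n(x_n, s) = 1$ so that the $n$-th factor becomes the identity operator and $V_n$ slides to the right of the simplex integral.

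Combining these two reductions with the symbolic identity gives the lemma. The only place that requires a bit of care is the pull-out step for $g(H_n) = (H_n - i)^{-1}$ from the simplex integral, but this is immediate: $g(H_n)$ is bounded, commutes with $a_n(H_n, s)$ for every $s$, and the underlying measure is finite so dominated convergence applies. I therefore do not anticipate a serious obstacle; the whole proof is really an exercise in converting a two-term Newton-type expansion of $(fu)^{[n]}$ into the operator identity by separation of variables.
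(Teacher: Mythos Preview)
Your proposal is correct and follows essentially the same route as the paper: derive the two-term Leibniz identity for $(fu)^{[n]}$ using $u^{[k]}=0$ for $k\geq 2$, rearrange to isolate $f^{[n]}$, and lift to operators via the separation-of-variables representation underlying Theorem~\ref{thm:coincidence}. The paper phrases the lifting slightly more tersely---observing that the whole right-hand side inherits a representation of the form \eqref{phi representation} from Lemma~\ref{lem:divided diff}, so Theorem~\ref{thm:coincidence} applies directly to each symbol---whereas you spell out the two structural facts (pulling out $g(H_n)$ and dropping a dummy variable) explicitly; but this is the same argument.
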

\begin{proof}
Since $u^{[1]}=\indicator_{\R^2}$ and $u^{[p]}=0$ for all $p\geq2$, the Leibniz rule for divided differences gives
\begin{align*}
(fu)^{[n]}(x_0,\ldots,x_n)=f^{[n]}(x_0,\ldots,x_n)u(x_n)+f^{[n-1]}(x_0,\ldots,x_{n-1}),
\end{align*}
hence,
\begin{align}\label{eq:divdiff Leibniz}
	f^{[n]}(x_0,\ldots,x_n)=(fu)^{[n]}(x_0,\ldots,x_n)u^{-1}(x_n)-f^{[n-1]}(x_0,\ldots,x_{n-1})u^{-1}(x_n).
\end{align}
By Lemma \ref{lem:divided diff}, the functions $(fu)^{[n]}$ and $f^{[n-1]}$ admit the representation \eqref{phi representation}. Hence, the function on the right-hand side of \eqref{eq:divdiff Leibniz} also admits the representation \eqref{phi representation}. Therefore, by Theorem \ref{thm:coincidence} applied
to $\phi=f^{[n]}$, $\phi=(fu)^{[n]}$, and $\phi=f^{[n-1]}$,
we obtain the lemma.
\end{proof}

For brevity, we write $V_k^{\{j\}}:=V_k(H_k-i)^{-j}$, and, similarly, $T^{H_0,\ldots,H_k}_\phi(V_1,\ldots,V_k)^{\{j\}}:=T^{H_0,\ldots,H_k}_\phi(V_1,\ldots,V_k)(H_k-i)^{-j}$.

\begin{prop}\label{prop:added weights}
	For $s,n\in\N_0$, $f\in\Wsn$, $V_1,\ldots,V_n\in\mB(\H)$, and $H_0,\ldots,H_n$ self-adjoint in $\H$, we have
	\begin{align*}
		 T^{H_0,\ldots,H_n}_{f^{[n]}}(V_1,\ldots,V_n)=\!
		\sum_{k=0}^{\min(s,n)}\!\!(-1)^k\!\!\!\!\!\!\sum_{\substack{ j_0\geq0,\, j_1,\ldots,j_k\geq 1,\\ j_0+\ldots+j_k=s}}\!\!\!\!\!
		 T^{H_0,\ldots,H_{n-k}}_{(fu^{s-k})^{[n-k]}} (V_1,\ldots,V_{n-k})^{\{j_0\}}V_{n-k+1}^{\{j_1\}}\cdots V_n^{\{j_k\}}.
	\end{align*}
\end{prop}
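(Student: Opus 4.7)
I would prove this by induction on $s$, using Lemma \ref{lem:added one weight} as the fundamental tool at each step. The base case $s=0$ is immediate: the constraints $j_0 \geq 0$, $j_1,\ldots,j_k \geq 1$, and $j_0+\cdots+j_k = 0$ force $k=0$ and $j_0=0$, so the right-hand side collapses to $T^{H_0,\ldots,H_n}_{f^{[n]}}(V_1,\ldots,V_n)$ as required.

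For the inductive step, assume the formula for $s-1$ and apply Lemma \ref{lem:added one weight} (with $n$ replaced by $n-k$, and $f$ replaced by $g := fu^{s-1-k}$, so $gu = fu^{s-k}$) to each multiple operator integral $T^{H_0,\ldots,H_{n-k}}_{(fu^{s-1-k})^{[n-k]}}(V_1,\ldots,V_{n-k})$ appearing in the induction hypothesis. Each such term produces two pieces. The ``first piece'' absorbs one extra resolvent $(H_{n-k}-i)^{-1}$ into the weight $j_0$, producing a term of the same shape (same $k$) but with $j_0$ replaced by $j_0+1 \geq 1$ and upgraded function $fu^{s-k}$; its sum of exponents is now $s$. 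The ``second piece'' detaches $V_{n-k}$ from the multiple operator integral and equips it with the resolvent factor $(H_{n-k}-i)^{-(j_0+1)}$, shortening the multiple operator integral to range $H_0,\ldots,H_{n-k-1}$; this reindexes cleanly to a term of the $s$-formula with $k' = k+1$, $j_0' = 0$, $j_1' = j_0+1 \geq 1$, $j_i' = j_{i-1}$ for $i \geq 2$, and picks up an extra sign so that $(-1)^k \cdot (-1) = (-1)^{k'}$.

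The core bookkeeping check is that each term on the right-hand side of the $s$-formula is hit exactly once: any term with $j_0' \geq 1$ arises uniquely from the ``first piece'' of the $(k'=k)$ term at level $s-1$ with $j_0 = j_0'-1$, and any term with $j_0' = 0$ (which forces $k' \geq 1$) arises uniquely from the ``second piece'' of the $(k'-1)$ term at level $s-1$. The exponent sums match because we increase the total by exactly one resolvent in either case.

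The main subtlety concerns the boundary case $n-k=0$ in the induction, where Lemma \ref{lem:added one weight} formally requires $n \geq 1$. But in that case the Leibniz identity $(fu)^{[0]} = f^{[0]} \cdot u$ has no lower-order remainder term, so the ``second piece'' is simply absent; this matches the constraint $k' \leq \min(s,n)$ and confirms that no spurious terms with $k' = n+1$ appear. One also needs to verify that the hypothesis $f \in \Wsn$ supplies the $L^1$ Fourier conditions needed to invoke Lemma \ref{lem:added one weight} at every stage, but this is immediate since the required functions $(fu^{s-k})^{(n-k)}$ and $(fu^{s-1-k})^{(n-k-1)}$ fall within the ranges $m \leq s$, $k \leq n$ built into the definition of $\Wsn$.
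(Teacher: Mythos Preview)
Your proposal is correct and follows essentially the same approach as the paper: induction on $s$, applying Lemma~\ref{lem:added one weight} to each term of the inductive hypothesis, and then reindexing the resulting two families of terms according to whether $j_0'\ge 1$ or $j_0'=0$. The paper carries out the same bookkeeping (inducting from $s$ to $s+1$ rather than from $s-1$ to $s$) and handles the boundary $k=n$ by restricting the range of the second sum to $k\le\min(s,n-1)$, which is equivalent to your observation that for $n-k=0$ the Leibniz identity produces no lower-order piece.
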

\begin{proof}
	Let $n\in\N_0$ be fixed. We prove the proposition by induction on $s$. If $s=0$, the statement follows directly. Now suppose the claim of the proposition holds for a certain $s\in\N_0$, i.e., we have the displayed formula above. To each of its terms, we can apply Lemma \ref{lem:added one weight}, namely
		$$T^{H_0,\ldots,H_n}_{f^{[n]}}(V_1,\ldots,V_n)^{\{j_0\}}=T^{H_0,\ldots,H_n}_{(fu)^{[n]}}(V_1,\ldots,V_n)^{\{j_0+1\}}-T^{H_0,\ldots,H_{n-1}}_{f^{[n-1]}}(V_1,\ldots,V_{n-1})V_n^{\{j_0+1\}},$$
 for all $f\in\W^n_{s+1}$. We obtain
	\begin{align*}
		&T^{H_0,\ldots,H_n}_{f^{[n]}}(V_1,\ldots,V_n)\\
		&\quad=\sum_{k=0}^{\min(s,n)}(-1)^k\!\sum_{\substack{j_0\geq0,\, j_1,\ldots,j_k\geq 1\\ j_0+\ldots+j_k=s}}  T^{H_0,\ldots,H_{n-k}}_{(fu^{s-k+1})^{[n-k]}}(V_1,\ldots,V_{n-k})^{\{j_0+1\}}V_{n-k+1}^{\{j_1\}}\cdots V_n^{\{j_k\}}\\
		&\qquad+\sum_{k=0}^{\min(s,n-1)}(-1)^{k+1}\!\!\sum_{\substack{j_0\geq0,\, j_1,\ldots,j_k\geq 1\\ j_0+\ldots+j_k=s}}T^{H_0,\ldots,H_{n-k-1}}_{(fu^{s-k})^{[n-k-1]}}(V_1,\ldots,V_{n-k-1})V_{n-k}^{\{j_0+1\}}V_{n-k+1}^{\{j_1\}}\cdots V_n^{\{j_k\}}\\
		&\quad=\sum_{k=0}^{\min(s,n)}(-1)^k\!\sum_{\substack{j_0\geq1,\,j_1,\ldots,j_k\geq1\\j_0+\ldots+j_k=s+1}} T^{H_0,\ldots,H_{n-k}}_{(fu^{s+1-k})^{[n-k]}}(V_1,\ldots,V_{n-k})^{\{j_0\}}V_{n-k+1}^{\{j_1\}}\cdots V_n^{\{j_k\}}\\
		&\qquad+\sum_{k=1}^{\min(s+1,n)}(-1)^k\!\sum_{\substack{j_0=0,\,j_1,\ldots,j_k\geq1\\ j_0+\ldots+j_k=s+1}}T^{H_0,\ldots,H_{n-k}}_{(fu^{s+1-k})^{[n-k]}}(V_1,\ldots,V_{n-k})^{\{j_0\}}V_{n-k+1}^{\{j_1\}}\cdots V_n^{\{j_k\}}.
	\end{align*}
	In the first term, instead of letting $k$ run from $0$ to $\min(s,n)$, we can freely let $k$ run from $0$ to $\min(s+1,n)$, because the adjacent sum over $j_0,\ldots,j_k$ is trivial for $k=s+1$. Similarly, in the second term, we can freely let $k$ run from 0 to $\min(s+1,n)$. Combining the two terms gives the claim of the lemma for $s+1$, which completes the induction step.
\end{proof}

\begin{rema}
	As is done in \cite{S14} to handle the case $s=2$, one could use the real weight $\tilde u(x):=\sqrt{x^2+1}$ instead of the complex weight $u(x)=x-i$ to obtain a version of Proposition \ref{prop:added weights}. However, because $\tilde{u}^{[2]}\neq 0$, the obtained summands will become horribly convoluted, and the results do not seem to be as strong as when using $u$.
\end{rema}

Thanks to Proposition \ref{prop:added weights} we can now prove the main result of this section, which is vital to Chapter \ref{ch:Cyclic cocycles in the spectral action}. 
\begin{thm}\label{thm:Schatten estimate}
	Let $H$ be self-adjoint in $\H$ such that $(H-i)^{-1}\in\S^s$ for $s\in\N$. For every $n\in\N_0$, every $f\in\Wsn$ and every $V_1,\ldots,V_n\in \B(\H)$, the multiple operator integral $T^H_{f^{[n]}}(V_1,\ldots,V_n)$ is trace-class and satisfies the bound
		$$\nrm{T^H_{f^{[n]}}(V_1,\ldots,V_n)}{1}\leq c_{s,n}(f)\norm{V_1}\cdots\norm{V_n}\big\|(H-i)^{-1}\big\|_{s}^s,$$
	where
		$$c_{s,n}(f):=\sum_{k=0}^{\min(s,n)}\vect{s}{k}\frac{\absnorm{\widehat{(fu^{s-k})^{(n-k)}}}}{(n-k)!}.$$
	More generally, when $V\in\mB(\H)_\sa$,
		$$\nrm{T^{H+V,H,\ldots,H}_{f^{[n]}}(V_1,\ldots,V_n)}{1}\leq c_{s,n}(f)\norm{V_1}\cdots\norm{V_n}(1+\norm{V})^{s}\big\|(H-i)^{-1}\big\|_{s}^s.$$
\end{thm}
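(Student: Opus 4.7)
The plan is to apply Proposition \ref{prop:added weights} and then estimate each summand by a single application of noncommutative H\"older's inequality in the Schatten classes. The coefficient $c_{s,n}(f)$ will fall out of a simple stars-and-bars count, and the resolvent identity will take care of the $(1+\|V\|)^s$ factor in the general statement.

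First I would verify that every multiple operator integral appearing in Proposition \ref{prop:added weights} is well-defined and admits the operator-norm bound $\frac{1}{(n-k)!}\|\widehat{(fu^{s-k})^{(n-k)}}\|_1\|V_1\|\cdots\|V_{n-k}\|$ furnished by Corollary \ref{cor:nice def of MOI}; the hypothesis $f\in\W^n_s$ is tailored precisely so that $\widehat{(fu^{s-k})^{(n-k)}}\in L^1$ for all $0\le k\le\min(s,n)$. Then, to each summand
\[
T^H_{(fu^{s-k})^{[n-k]}}(V_1,\ldots,V_{n-k})(H-i)^{-j_0}V_{n-k+1}(H-i)^{-j_1}\cdots V_n(H-i)^{-j_k}
\]
with $j_0\ge 0$, $j_1,\ldots,j_k\ge 1$ and $j_0+\cdots+j_k=s$, I would apply noncommutative H\"older with exponents $(\infty,s/j_0,\ldots,s/j_k)$, using the convention $s/0=\infty$; their reciprocals sum to $(j_0+\cdots+j_k)/s=1$. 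Bounding the MOI in operator norm, each $V_i$ in operator norm, and each $(H-i)^{-j_i}$ in $\S^{s/j_i}$ by $\|(H-i)^{-1}\|_s^{j_i}$, the exponents on $\|(H-i)^{-1}\|_s$ compound to $s$, giving the uniform summand bound $\frac{1}{(n-k)!}\|\widehat{(fu^{s-k})^{(n-k)}}\|_1\|V_1\|\cdots\|V_n\|\|(H-i)^{-1}\|_s^s$. The number of admissible tuples $(j_0,\ldots,j_k)$ is $\binom{s}{k}$, as seen by setting $j_i':=j_i-1$ for $i\ge 1$ so that $j_0,j_1',\ldots,j_k'\ge 0$ compose $s-k$. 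Summing over $k$ assembles exactly $c_{s,n}(f)\|V_1\|\cdots\|V_n\|\|(H-i)^{-1}\|_s^s$.

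For the more general statement with $H_0=H+V$, the same decomposition via Proposition \ref{prop:added weights} applies, except that the leftmost resolvent power in each summand becomes $(H+V-i)^{-j_0}$. From the resolvent identity $(H+V-i)^{-1}=(H-i)^{-1}-(H-i)^{-1}V(H+V-i)^{-1}$ together with $\|(H+V-i)^{-1}\|\le 1$, one gets $\|(H+V-i)^{-1}\|_s\le(1+\|V\|)\|(H-i)^{-1}\|_s$. Feeding this uniform estimate into the same H\"older argument produces an extra factor $(1+\|V\|)^{j_0+\cdots+j_k}=(1+\|V\|)^s$ in each summand, yielding the claimed bound.

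The only point that needs care is the boundary case $k=n$ (possible when $n\le s$): the multiple operator integral then collapses to the bounded operator $(fu^{s-n})(H_0)$ with no perturbation arguments, whose operator-norm bound $\|\widehat{(fu^{s-n})^{(0)}}\|_1$ is the zero-argument specialisation of Corollary \ref{cor:nice def of MOI} and matches the $k=n$ contribution to $c_{s,n}(f)$; the argument thus proceeds without modification, with the resolvent $(H_0-i)^{-j_0}$ controlled in $\S^{s/j_0}$ exactly as above.
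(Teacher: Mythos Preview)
Your proposal is correct and follows essentially the same route as the paper: apply Proposition \ref{prop:added weights}, bound each summand via Corollary \ref{cor:nice def of MOI} and H\"older with exponents $s/j_0,\ldots,s/j_k$, and count $\binom{s}{k}$ compositions. One small inaccuracy in the second part: when $H_0=H+V$ and $H_1=\cdots=H_n=H$, the resolvent attached to the MOI is $(H_{n-k}-i)^{-j_0}$, which equals $(H+V-i)^{-j_0}$ only in the boundary case $k=n$; for $k<n$ all resolvents remain $(H-i)^{-j_i}$. This does not affect the claimed bound (you are merely overestimating by $(1+\|V\|)^s\ge 1$), and the paper likewise absorbs everything into a single $(1+\|V\|)^s$ via $\|(H+V-i)^{-1}\|_s\le(1+\|V\|)\|(H-i)^{-1}\|_s$.
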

\begin{proof}
	We apply Proposition \ref{prop:added weights}, and find
	\begin{align*}
		&\nrm{T^H_{f^{[n]}}(V_1,\ldots,V_n)}{1}\\
		&\quad\leq\sum_{k=0}^{\min(s,n)}\sum_{\substack{ j_0\geq0,\, j_1,\ldots,j_k\geq 1,\\ j_0+\ldots+j_k=s}}
		\nrm{ T^H_{(fu^{s-k})^{[n-k]}} (V_1,\ldots,V_{n-k})^{\{j_0\}}}{\frac{s}{j_0}} \nrm{V_{n-k+1}^{\{j_1\}}}{\frac{s}{j_1}}\cdots \nrm{V_n^{\{j_k\}}}{\frac{s}{j_k}}\!\!.
	\end{align*}
	Apply \eqref{fourierbound}, to find
	\begin{align*}
		\nrm{T^H_{f^{[n]}}(V_1,\ldots,V_n)}{1}\leq\sum_{k=0}^{\min(s,n)}\sum_{\substack{ j_0\geq0,\, j_1,\ldots,j_k\geq 1,\\ j_0+\ldots+j_k=s}}\frac{\nrm{\widehat{(fu^{s-k})^{(n-k)}}}{1}}{(n-k)!}\norm{V_1}\cdots\norm{V_n}\|(H-i)^{-1}\|_{s}^s.
	\end{align*}
	A bit of combinatorics shows that the sum over $j_0,\ldots,j_k$ adds a factor $\vect{s}{k}$, which implies the first statement of the theorem. The second statement follows similarly, with the added remark that
		$$\|(H+V-i)^{-1}\|_{s}^s\leq(1+\norm{V})^{s}\|(H-i)^{-1}\|_{s}^s.$$
	This inequality follows from the second resolvent identity. For more specific bounds see \cite[Appendix B, Lemma 6]{CP}.
\end{proof}

\subsection{Continuity of the multiple operator integral}
\label{sct:MOI continuity}
A second application of Proposition \ref{prop:added weights} is the following strong continuity property.

\begin{thm}\label{thm:continuity for L's}
Let $s\in\N$, $H$ self-adjoint in $\H$ with $(H-i)^{-1}\in\S^s$, $n\in\N_0$, and $f\in\Wsn$. The map $$T^H_{f^{[n]}}:\L^\infty\times\cdots\times\L^\infty\to\L^1$$ is continuous. Recall here that $\L^1=\S^1$, and that $\L^\infty=\B(\H)_1$, endowed with the strong operator topology.
\end{thm}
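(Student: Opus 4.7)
The plan is to reduce the claim to Proposition \ref{prop:added weights}, which rewrites $T^H_{f^{[n]}}(V_1,\ldots,V_n)$ as a finite sum of terms of the form
\begin{equation*}
T^H_{(fu^{s-k})^{[n-k]}}(V_1,\ldots,V_{n-k})\,(H-i)^{-j_0}\,V_{n-k+1}(H-i)^{-j_1}\cdots V_n(H-i)^{-j_k},
\end{equation*}
indexed by $0\leq k\leq\min(s,n)$ and tuples $j_0\geq 0$, $j_1,\ldots,j_k\geq 1$ with $j_0+\cdots+j_k=s$. Since a finite sum of continuous maps is continuous, it suffices to check continuity of each such term, viewed as a function $(\L^\infty)^n\to\L^1$.

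For a fixed such tuple, first note that $fu^{s-k}\in\W^{n-k}_0$ because $f\in\Wsn$, so $\widehat{(fu^{s-k})^{(n-k)}}\in L^1$ and Lemma \ref{lem:continuity with alpha relation} (applied with all $\alpha_i=\infty$, so $\alpha=\infty$) yields that
\begin{equation*}
T^H_{(fu^{s-k})^{[n-k]}}:(\L^\infty)^{n-k}\to\L^\infty
\end{equation*}
is continuous in $(V_1,\ldots,V_{n-k})$. Next, right multiplication by the fixed operator $(H-i)^{-j_0}\in\S^{s/j_0}$ (interpreted trivially when $j_0=0$) is continuous from $\L^\infty$ into $\L^{s/j_0}$ by Lemma \ref{lem:Schatten product is continuous}. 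Similarly, for each $\ell=1,\ldots,k$, the factor $V_{n-k+\ell}(H-i)^{-j_\ell}$ depends continuously on $V_{n-k+\ell}\in\L^\infty$ with values in $\L^{s/j_\ell}$, since $(H-i)^{-j_\ell}\in\S^{s/j_\ell}$.

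Finally, I form the product of these $k+1$ factors. The Schatten exponents satisfy $\tfrac{j_0}{s}+\tfrac{j_1}{s}+\cdots+\tfrac{j_k}{s}=1$, and the last exponent $s/j_k$ is finite since $j_k\geq 1$; Lemma \ref{lem:Schatten product is continuous} then gives continuity of the product map into $\L^1$. Composing with the continuous factorizations above and summing over the (finitely many) tuples completes the proof. The main point is that the finite summability of $H$ lets us convert the unbounded factors $(H-i)^{-j_\ell}$ into fixed Schatten elements so that the Hölder indices balance to $1$; without Proposition \ref{prop:added weights} one would be stuck with Lemma \ref{lem:continuity with alpha relation}, which for strong-operator inputs only delivers $\L^\infty$ output.
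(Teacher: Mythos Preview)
Your proof is correct and follows the same strategy as the paper's: decompose via Proposition \ref{prop:added weights}, then combine Lemma \ref{lem:continuity with alpha relation} (for the inner multiple operator integral into $\L^\infty$) with Lemma \ref{lem:Schatten product is continuous} (for the weighted factors and the final H\"older product). The paper phrases the argument sequentially rather than as a composition of continuous maps, and is slightly terser in that it cites only Lemma \ref{lem:continuity with alpha relation} for the convergence of $T^H_{(fu^{s-k})^{[n-k]}}(V^m_1,\ldots,V^m_{n-k})^{\{j_0\}}$ in $\L^{s/j_0}$, whereas you correctly make explicit that this step also requires Lemma \ref{lem:Schatten product is continuous} for the right multiplication by $(H-i)^{-j_0}$.
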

\begin{proof}
	Suppose that 
	$V^m_1\to V_1,\ldots,V^m_n\to V_n$ in $\L^\infty$.
	By Lemma \ref{lem:Schatten product is continuous}, we obtain that 
		$$(V^m_{n-k+l})^{\{j_l\}}\to V_{n-k+l}^{\{j_l\}}\quad\text{in $\L^{s/j_l}$.}$$
	We invoke Lemma \ref{lem:continuity with alpha relation} to find that
		$$T^H_{(fu^{s-k})^{[n-k]}}(V^m_1,\ldots,V^m_{n-k})^{\{j_0\}}\to T^H_{(fu^{s-k})^{[n-k]}}(V_1,\ldots,V_{n-k})^{\{j_0\}}\quad\text{in $\L^{s/j_0}$.}$$
	By Proposition \ref{prop:added weights} and Lemma \ref{lem:Schatten product is continuous}, we find that
		$$T^H_{f^{[n]}}(V^m_1,\ldots,V^m_n)\to T^H_{f^{[n]}}(V_1,\ldots,V_n)\quad\text{in $\L^{1}$},$$
	so we are done.
\end{proof}

To emphasize the strength of this result, we compare it to Lemma \ref{lem:continuity with alpha relation} which was already known (at least in the cases $\alpha_1,\ldots,\alpha_n<\infty$ and $\alpha=\infty$). By applying the continuity of the inclusion $\L^\alpha\hookrightarrow\L^\beta$ ($\alpha<\beta$ in $[1,\infty]$) to Theorem \ref{thm:continuity for L's} we obtain the following clear improvement of Lemma \ref{lem:continuity with alpha relation}.
\begin{cor}
	Let $s\in\N$, $H$ self-adjoint in $\H$ with $(H-i)^{-1}\in\S^s$, $n\in\N_0$, and $f\in\Wsn$. For any $\alpha\in[1,\infty]$ and any $\alpha_1,\ldots,\alpha_n\in[1,\infty]$ (no relation between $\alpha$ and the $\alpha_j$'s assumed) the map
		$$\Tfn^H:\L^{\alpha_1}\times\cdots\times\L^{\alpha_n}\to\L^\alpha$$
	is continuous.
\end{cor}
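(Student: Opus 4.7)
The plan is to reduce the corollary to Theorem~\ref{thm:continuity for L's} --- which already handles the special case $\alpha_1=\cdots=\alpha_n=\infty$, $\alpha=1$ --- by rescaling on the domain side and invoking continuous inclusions between Schatten ideals on the codomain side.

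First I would fix a convergent sequence $V_j^m\to V_j$ in $\L^{\alpha_j}$ for each $j=1,\ldots,n$ and rescale everything into the closed unit ball of $\B(\H)$. For every $j$ with $\alpha_j<\infty$, convergence in the Schatten norm $\|\cdot\|_{\alpha_j}$ forces the sequence to be $\|\cdot\|_{\alpha_j}$-bounded and hence operator-norm bounded since $\|\cdot\|\leq\|\cdot\|_{\alpha_j}$, so I can pick $d_j\geq 1$ with $\|V_j^m\|\leq d_j$ for all $m$ and $\|V_j\|\leq d_j$. Then $V_j^m/d_j$ lies in $\B(\H)_1$ and converges to $V_j/d_j$ in operator norm, in particular in the strong operator topology. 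For each $j$ with $\alpha_j=\infty$ I would simply set $d_j:=1$; s.o.t.\ convergence inside $\B(\H)_1$ is given by hypothesis. In either case, $V_j^m/d_j\to V_j/d_j$ in $\L^\infty$.

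Next, applying Theorem~\ref{thm:continuity for L's} to the rescaled arguments yields
\begin{align*}
\Tfn^H(V_1^m/d_1,\ldots,V_n^m/d_n)\to\Tfn^H(V_1/d_1,\ldots,V_n/d_n)\quad\text{in }\L^1.
\end{align*}
Multilinearity of $\Tfn^H$ then lets me multiply both sides by the scalar $d_1\cdots d_n\in\C$, giving
\begin{align*}
\Tfn^H(V_1^m,\ldots,V_n^m)\to\Tfn^H(V_1,\ldots,V_n)\quad\text{in }\S^1.
\end{align*}
To finish, I would transfer this $\S^1$-convergence to $\L^\alpha$-convergence through the continuous inclusion $\L^1\hookrightarrow\L^\alpha$ mentioned just before the statement: for $\alpha\in[1,\infty)$ this is the standard Schatten monotonicity $\|\cdot\|_\alpha\leq\|\cdot\|_1$, and for $\alpha=\infty$ one uses $\|\cdot\|\leq\|\cdot\|_1$ together with the fact that operator-norm convergence implies s.o.t.\ convergence on any operator-norm-bounded set.

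The only point that calls for care is the asymmetric role of $\L^\infty$ as the closed unit ball on the domain side, which is precisely what forces the rescaling; multilinearity of $\Tfn^H$ makes this bookkeeping harmless. Beyond this I anticipate no serious obstacle --- the entire argument is a diagram chase hanging off Theorem~\ref{thm:continuity for L's}.
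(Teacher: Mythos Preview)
Your proof is correct and follows the same strategy as the paper: factor through Theorem~\ref{thm:continuity for L's} and use the continuous inclusions $\L^\alpha\hookrightarrow\L^\beta$ for $\alpha<\beta$. The paper compresses this into one line, while you spell out the rescaling needed on the domain side to cope with the fact that $\L^\infty$ is literally the unit ball, which is exactly the right detail to supply.
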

\subsection{Taylor series in terms of divided differences}
\label{sct:Taylor divdiff}

For a self-adjoint operator $H$ in $\H$ with compact resolvent, we let $\varphi_1,\varphi_2,\ldots$ be an orthonormal basis of eigenvectors of $H$, with corresponding eigenvalues $\lambda_1,\lambda_2,\ldots$.
When $n\in\N$ and
\begin{align*}
	V_1,\ldots,V_n\in\spn\{\ket{\varphi_i}\bra{\varphi_j}:~i,j\in\N\},
\end{align*}
we have, by Definition \ref{def:nosep}, the finite sum
\begin{align}
	T_{f^{[n]}}^H(V_1,\ldots,V_n)=\sum_{i_0,\ldots,i_n\in\N}f^{[n]}(\lambda_{i_0},\ldots,\lambda_{i_n})(V_1)_{i_0i_1}\cdots(V_n)_{i_{n-1}i_n}\ket{\varphi_{i_0}}\bra{\varphi_{i_n}},\label{eq:Trace function divdiff}
\end{align}
where $W_{kl}:=\p{\varphi_k}{W\varphi_l}$ denote the matrix elements of $W$. In particular, assuming that $V_1=V_2=\ldots=V_n\equiv V$, and taking the trace, a standard computation (cf. \cite{Sui11,Sui15}) gives
\begin{align}
	\frac{1}{n!}\frac{d^n}{dt^n}\Tr(f(H+tV))\big|_{t=0}&=\Tr(T_{f^{[n]}}^H(V,\ldots,V))\nonumber\\
	&=\sum_{i_1,\ldots,i_n\in\N}f^{[n]}(\lambda_{i_1},\ldots,\lambda_{i_n},\lambda_{i_1})V_{i_1i_2}\cdots V_{i_{n-1}i_n} V_{i_{n}i_1}\nonumber\\
	&=\frac{1}{n}\sum_{i_1,\ldots,i_n\in\N}(f')^{[n-1]}(\lambda_{i_1},\ldots,\lambda_{i_n})V_{i_1i_2}\cdots V_{i_{n-1}i_n}V_{i_{n}i_1}.\label{eq:SA divdiff}
\end{align}
This formula appears in \cite[Corollary 3.6]{Hansen} and, in higher generality, in \cite[Theorem 18]{Sui11}.
The formula \eqref{eq:SA divdiff} gives a very concrete way to calculate derivatives of the spectral action, as well as calculate the Taylor series of a perturbation of the spectral action. 
One needs to be careful, however, when applying this formula in a general setting. When the perturbation $V$ is not of finite rank, writing \eqref{eq:SA divdiff} as a sum over $i_1,\ldots,i_n$ is misleading, as the series is often not absolutely convergent and there is no reason for a Fubini theorem to hold. The best way to generalize \eqref{eq:SA divdiff} is arguably by using the machinery we developed in Section \ref{sct:Finitely summable}.
\begin{thm}\label{thm:Taylor expansion divdiffs}
	For $n,s\in\N$, $H$ self-adjoint in $\H$ with $(H-i)^{-1}\in\S^s$, $V\in\mB(\H)$, $f\in\Wsn$, and $\{\varphi_i\}_{i\in\N}$ an orthonormal basis of eigenvectors of $H$ with corresponding eigenvalues $\{\lambda_i\}_{i\in\N}$, we have
	\begin{align*}
		\frac{1}{(n-1)!}\frac{d^n}{dt^n}\Tr(f(H+tV))\big|_{t=0}=\lim_{N\to\infty}\sum_{i_1,\ldots, i_n<N}(f')^{[n-1]}(\lambda_{i_1},\ldots,\lambda_{i_n})V_{i_1i_2}\cdots V_{i_ni_1}.
	\end{align*}
\end{thm}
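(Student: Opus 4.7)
The plan is to reduce the theorem to the finite-rank version \eqref{eq:SA divdiff} by approximating $V$ with finite-rank operators and then invoking the strong continuity provided by Theorem \ref{thm:continuity for L's}.

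I would let $P_N$ denote the orthogonal projection onto $\spn\{\varphi_1,\ldots,\varphi_{N-1}\}$ and set $V_N := P_N V P_N = \sum_{k,l<N}V_{kl}\ket{\varphi_k}\bra{\varphi_l}$. Then $V_N$ is of finite rank, lies in the span on which \eqref{eq:SA divdiff} was derived, and belongs to $\S^n$, so the $\S^1$-differentiation clause of Theorem \ref{dm} applies to $t\mapsto f(H+tV_N)$. Since $(V_N)_{kl}$ vanishes unless $k,l<N$, the cyclic product $(V_N)_{i_1i_2}\cdots(V_N)_{i_ni_1}$ is zero unless every index $i_j<N$, so \eqref{eq:SA divdiff} applied to $V_N$ yields
\begin{align*}
\frac{1}{(n-1)!}\frac{d^n}{dt^n}\Tr\bigl(f(H+tV_N)\bigr)\big|_{t=0} &= n\,\Tr\bigl(T^H_{f^{[n]}}(V_N,\ldots,V_N)\bigr)\\
&= \sum_{i_1,\ldots,i_n<N}(f')^{[n-1]}(\lambda_{i_1},\ldots,\lambda_{i_n})V_{i_1i_2}\cdots V_{i_ni_1}.
\end{align*}

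To pass to the limit $N\to\infty$, note that $P_N\to\id_\H$ in the strong operator topology, whence $V_N\to V$ in s.o.t.\ with the uniform bound $\|V_N\|\leq\|V\|$. Theorem \ref{thm:continuity for L's}, combined with multilinearity (which extends continuity from the unit ball to any norm-bounded set), gives $T^H_{f^{[n]}}(V_N,\ldots,V_N)\to T^H_{f^{[n]}}(V,\ldots,V)$ in $\S^1$; in particular their traces converge, yielding
\begin{align*}
\lim_{N\to\infty}\sum_{i_1,\ldots,i_n<N}(f')^{[n-1]}(\lambda_{i_1},\ldots,\lambda_{i_n})V_{i_1i_2}\cdots V_{i_ni_1} = n\,\Tr\bigl(T^H_{f^{[n]}}(V,\ldots,V)\bigr).
\end{align*}

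The main obstacle is then to identify this limit with $\tfrac{1}{(n-1)!}\tfrac{d^n}{dt^n}\Tr(f(H+tV))|_{t=0}$. Theorem \ref{dm} together with Theorem \ref{thm:Schatten estimate} ensures that $\tfrac{d^n}{dt^n}f(H+tV)|_{t=0}=n!\,T^H_{f^{[n]}}(V,\ldots,V)\in\S^1$, so the remaining step is the exchange of trace with the scalar $n$-th derivative, which is not automatic because $V$ is merely bounded and need not lie in $\S^n$. I would handle this by taking the trace of the Taylor expansion from Theorem \ref{rm}: every term is trace-class by Theorem \ref{thm:Schatten estimate}, producing a polynomial of degree $n-1$ in $t$ plus a remainder $t^n\,\Tr(T^{H+tV,H,\ldots,H}_{f^{[n]}}(V,\ldots,V))$. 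Continuity of this remainder coefficient at $t=0$, obtained by an MOI-continuity argument in the operator arguments analogous to Theorem \ref{thm:continuity for L's}, then lets one read off the $n$-th Taylor coefficient of the scalar function $t\mapsto\Tr(f(H+tV))$ as $n!\,\Tr(T^H_{f^{[n]}}(V,\ldots,V))$, completing the proof.
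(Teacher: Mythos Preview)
Your approach is essentially the same as the paper's: define the finite-rank truncations $V^N:=E^N V E^N$ with $E^N=\sum_{i<N}\ket{\varphi_i}\bra{\varphi_i}$, apply the finite-rank identity \eqref{eq:SA divdiff} to $V^N$, note that $(V^N)_{ij}=V_{ij}$ for $i,j<N$ and vanishes otherwise, and use the $\S^1$-continuity of Theorem~\ref{thm:continuity for L's} together with $V^N\to V$ strongly to pass to the limit.

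The only difference is in the last paragraph. The paper simply concludes from
\[
\Tr\bigl(T_{f^{[n]}}^H(V,\ldots,V)\bigr)=\lim_{N\to\infty}\tfrac{1}{n}\sum_{i_1,\ldots,i_n<N}(f')^{[n-1]}(\lambda_{i_1},\ldots,\lambda_{i_n})V_{i_1i_2}\cdots V_{i_ni_1}
\]
that the theorem follows, tacitly identifying the left-hand side of the theorem with $n\,\Tr\bigl(T^H_{f^{[n]}}(V,\ldots,V)\bigr)$ via Theorem~\ref{dm} and Theorem~\ref{thm:Schatten estimate}. You are more careful in flagging the exchange of trace and scalar differentiation as a separate step, and your proposed argument via the traced Taylor formula from Theorem~\ref{rm} is a sound way to make this rigorous. (A small caveat: continuity of the remainder coefficient at a single point only yields the $n$-th Taylor coefficient, not $C^n$-regularity; to obtain genuine $n$-fold differentiability one should apply the same expansion around every $t$, using the uniform $\S^1$-bound from Theorem~\ref{thm:Schatten estimate}.) This extra care goes beyond what the paper spells out, but does not change the substance of the argument.
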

\begin{proof}
	Write $E^N:=\sum_{i<N}\ket{\varphi_i}\bra{\varphi_i}$, and notice that $E^N\to 1$ strongly. Defining
	\begin{align*}
		V^N:=E^NVE^N,
	\end{align*}
	we obtain $V^N\to V$ strongly by, for instance, Lemma \ref{lem:Schatten product is continuous}. By Theorem \ref{thm:continuity for L's} and \eqref{eq:SA divdiff}, we find
	\begin{align*}
		\Tr(T_{f^{[n]}}^H(V,\ldots,V))&=\lim_{N\to\infty}\Tr(T_{f^{[n]}}^H(V^N,\ldots,V^N))\\
		&=\lim_{N\to\infty}\frac{1}{n}\sum_{i_1,\ldots,i_n\in\N}(f')^{[n-1]}(\lambda_{i_1},\ldots,\lambda_{i_n})V^N_{i_1i_2}\cdots V^N_{i_{n}i_1}\\
		&=\lim_{N\to\infty}\frac{1}{n}\sum_{i_1,\ldots,i_n<N}(f')^{[n-1]}(\lambda_{i_1},\ldots,\lambda_{i_n})V_{i_1i_2}\cdots V_{i_{n}i_1},
	\end{align*}
	which implies the theorem.
\end{proof}
A similar argument can be used to generalize \eqref{eq:Trace function divdiff} to arbitrary perturbations. 

\section{Relative Schatten}
\label{sct:Relative Schatten}
The relative Schatten case, being a generalization of the finitely summable case, is slightly more subtle. However, we can still obtain an analogue of Proposition \ref{prop:added weights}, namely Theorem \ref{thm:adding resolvents}.
%
This theorem will be used throughout Chapter \ref{ch:Spectral Shift for Relative Schatten Perturbations}, in particular to apply the bound from Theorem \ref{thm:PSS Higher Order} to the relative Schatten case, in which the perturbation $V$ is generally noncompact.

\begin{thm}\label{thm:adding resolvents}
Let $n\in\N$, let $H_0,\ldots,H_n$ be self-adjoint in $\H$, and let $V_1,\ldots,V_n\in\mB(\H)$.
\begin{enumerate}[label=\textnormal{(\roman*)}]	
	\item\label{weight}
For each $j\in\{0,\ldots,n\}$ and every $f\in C^n$ satisfying $\widehat{(fu)^{(n)}},\widehat{f^{(n-1)}}\in L^1$ we have
\begin{align*}
T^{H_0,\ldots,H_n}_{f^{[n]}}(V_1,\ldots,V_n)
=&T^{H_0,\ldots,H_n}_{(fu)^{[n]}}(V_1,\ldots,V_j(H_j-i)^{-1},V_{j+1},\ldots,V_n)
\\&-T^{H_0,\ldots,H_{j-1},H_{j+1},\ldots,H_n}_{f^{[n-1]}}(V_1,\ldots,V_j(H_j-i)^{-1}V_{j+1},\ldots,V_n).
\end{align*}

\item\label{weights} Denoting $\tilde{V}_{j,l}:=V_{j+1}(H_{j+1}-i)^{-1}\cdots V_l(H_l-i)^{-1}$, we have$$T^{H_0,\ldots,H_n}_{f^{[n]}}(V_1,\ldots,V_n)=\sum_{p=0}^n(-1)^{n-p}\!\!\!\sum_{0<j_1<\cdots<j_{p}\leq n}\! T^{H_0,H_{j_1},\ldots,H_{j_p}}_{(fu^p)^{[p]}}
(\tilde{V}_{0,j_1},\ldots,\tilde{V}_{j_{p-1},j_p})\,
\tilde{V}_{j_{p},n}\,,$$
for every $f\in C^n$ satisfying $\widehat{(fu^p)^{(p)}}\in L^1$ for every $p=0,\ldots,n$.

\item\label{trace class} If $V_k(H_k-i)^{-1}\in\S^n$ for every $k=1,\dots,n$, then
$$T^{H_0,\ldots,H_n}_{f^{[n]}}(V_1,\ldots,V_n)\in\S^1,$$
for every $f\in C^n$ satisfying $\widehat{(fu^p)^{(p)}}\in L^1$ for every $p=0,\ldots,n$.
\end{enumerate}
\end{thm}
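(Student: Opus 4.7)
The three parts build on each other, so I would prove them in order.

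\textbf{Part (i)} follows by transferring a functional identity for divided differences to the operator-integral level. The key tool is the Leibniz rule for divided differences applied to the product $fu$. Since $u(x) = x-i$ is linear, its divided differences vanish beyond order one: $u^{[0]}(x) = x-i$, $u^{[1]} = 1$, $u^{[k]} = 0$ for $k \geq 2$. The Leibniz rule therefore gives
\begin{align*}
(fu)^{[n]}(y_0,\ldots,y_n) = f^{[n]}(y_0,\ldots,y_n)\,u(y_n) + f^{[n-1]}(y_0,\ldots,y_{n-1}).
\end{align*}
Divided differences are symmetric in their arguments, so we may place $x_j$ in the final slot, derive the identity there, and rearrange to get
\begin{align*}
f^{[n]}(x_0,\ldots,x_n) = (fu)^{[n]}(x_0,\ldots,x_n)\,u(x_j)^{-1} - f^{[n-1]}(x_0,\ldots,\widehat{x_j},\ldots,x_n)\,u(x_j)^{-1}.
\end{align*}
Each of the three divided differences above admits the separated representation \eqref{phi representation} via Lemma \ref{lem:divided diff} and the hypotheses $\widehat{(fu)^{(n)}}, \widehat{f^{(n-1)}} \in L^1$, so Theorem \ref{thm:coincidence} transfers the identity to the operator level. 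Multiplying the kernel by $u(x_j)^{-1}$ corresponds to inserting the bounded operator $(H_j-i)^{-1}$ into the $j$-th slot of the MOI; in the first term this modifies $V_j$ to $V_j(H_j-i)^{-1}$, while in the second term it merges two adjacent slots into $V_j(H_j-i)^{-1}V_{j+1}$.

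\textbf{Part (ii)} follows from (i) by iteration, which I would organize as a top-down sweep $j = n, n-1, \ldots, 1$ (equivalently an induction on $n$). At each step, and on each term generated so far, apply (i) at the current position $j$: the first summand produced \emph{keeps} position $j$ in the MOI, inserting a resolvent $(H_j-i)^{-1}$ after $V_j$ and multiplying the weight by one more factor of $u$; the second summand \emph{collapses} position $j$ into position $j+1$, with a sign flip. After $n$ steps each branch of the resulting binary tree is labelled by the subset $\{j_1 < \ldots < j_p\} \subseteq \{1,\ldots,n\}$ of positions that were kept; the retained MOI has $p$ slots beyond $H_0$ with kernel $(fu^p)^{[p]}$, the accumulated signs multiply to $(-1)^{n-p}$, and the collapsed $V_i$'s together with their resolvents group into the compound arguments $\tilde V_{j_{k-1},j_k}$ filling the $p$ retained slots. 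The tail $\tilde V_{j_p,n}$ sits outside the MOI because the positions $j_p+1,\ldots,n$ are collapsed to the right of every kept position. The hypothesis $\widehat{(fu^p)^{(p)}} \in L^1$ for all $p \leq n$ is precisely what is needed to legitimize each invocation of (i) along the way.

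\textbf{Part (iii)} follows from (ii) and H\"older's inequality for Schatten norms. Fix one summand in the expansion of (ii), indexed by $0 < j_1 < \ldots < j_p \leq n$, and set $j_0 := 0$. The argument at slot $k$ of the MOI, $\tilde V_{j_{k-1},j_k}$, is a product of $j_k - j_{k-1}$ factors of the form $V_i(H_i-i)^{-1}$, each in $\S^n$ by hypothesis, so H\"older gives $\tilde V_{j_{k-1},j_k} \in \S^{n/(j_k-j_{k-1})}$. The bound \eqref{fourierbound} applied with weight $fu^p$ (whose $p$-th derivative has $L^1$ Fourier transform by hypothesis) places the $p$-slot MOI in $\S^\beta$ with $1/\beta = \sum_{k=1}^p (j_k-j_{k-1})/n = j_p/n$; multiplying by $\tilde V_{j_p,n} \in \S^{n/(n-j_p)}$ and one last H\"older step produces a trace-class operator, since $j_p/n + (n-j_p)/n = 1$. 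Summing the finitely many such summands preserves $\S^1$-membership. The main obstacle is the combinatorial bookkeeping in (ii): matching signs, index ranges, and the precise shape of the compound arguments $\tilde V_{j_{k-1},j_k}$ with the iterative tree of applications of (i), and checking that the stated Fourier hypotheses are exactly strong enough to feed every call. Parts (i) and (iii) are routine in comparison once the right functional identity and the Leibniz rule are in place.
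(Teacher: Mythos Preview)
Your proposal is correct and follows essentially the same route as the paper: derive the divided-difference identity from the Leibniz rule for $fu$ and symmetry, lift it to operators via Theorem~\ref{thm:coincidence}, iterate downward through the positions to produce the binary-tree expansion of (ii), and conclude (iii) by the H\"older bound \eqref{fourierbound}. Your combinatorial description of the iteration (kept versus collapsed positions, sign $(-1)^{n-p}$, compound arguments $\tilde V_{j_{k-1},j_k}$) matches the paper's induction on $s$ exactly.
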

\begin{proof}
	If, in \eqref{eq:divdiff Leibniz}, we swap $x_n$ and $x_j$ (for any $j\in\{0,\ldots,n\}$), we obtain, by symmetry of the divided difference,
\begin{align}\label{eq:adding one weight}
f^{[n]}(x_0,\ldots,x_n)
=&(fu)^{[n]}(x_0,\ldots,x_n)u^{-1}(x_j)\\
\nonumber
&-f^{[n-1]}(x_0,\ldots,x_{j-1},x_{j+1},\ldots,x_n)u^{-1}(x_j).
\end{align}
Applying \eqref{eq:adding one weight} repeatedly, similar to the proof of Proposition \ref{prop:added weights}, we obtain
\begin{align*}
	f^{[n]}(x_0,\ldots,x_n)=\sum_{p=0}^s (-1)^{s-p}\!\!\!\!\sum_{n-s<j_1<\cdots<j_p\leq n}&(fu^p)^{[n-s+p]}(x_0,\ldots,x_{n-s},x_{j_1},\ldots,x_{j_p})\\
	&\cdot u^{-1}(x_{n-s+1})\cdots u^{-1}(x_n)
\end{align*}
by induction to $s\in\{0,\ldots,n\}$. Taking $s=n$, we find
\begin{align}	
\label{eq:ddweights}	
f^{[n]}(x_0,\ldots,x_n)
=\sum_{p=0}^n(-1)^{n-p}\!\!\!\!\sum_{0<j_1<\cdots<j_{p}\leq n}
(fu^p)^{[p]}(x_0,x_{j_1},\ldots,x_{j_p})
u^{-1}(x_1)\cdots u^{-1}(x_n).
\end{align}
To prove \ref{weight}, we use Lemma \ref{lem:divided diff} to see that the functions $f^{[n-1]}$ and $(fu)^{[n]}$ admit the representation \eqref{phi representation}. Hence, the function on the right-hand side of \eqref{eq:adding one weight} also admits the representation \eqref{phi representation}. Therefore, by Theorem \ref{thm:coincidence} applied
to $\phi=f^{[n]}$, $\phi=f^{[n-1]}$, and $\phi=(fu)^{[n]}$,
we obtain \ref{weight}.
Similarly, applying Theorem \ref{thm:coincidence} and Lemma \ref{lem:divided diff} to \eqref{eq:ddweights} gives \ref{weights}.
	Corollary \ref{cor:nice def of MOI} shows that the right-hand side of \ref{weights} is trace-class, which gives \ref{trace class}.
\end{proof}

\begin{rema}
Although the condition $V(H-i)^{-1}\in\S^n$ is equivalent to $V(H^2+\1)^{-1/2}\in\S^n$, we made use of the complex weight $u(x)=x-i$ rather than the real weight $\tilde u(x)=\sqrt{x^2+1}$ because there is no suitable analog of Theorem \ref{thm:adding resolvents} for the latter. For instance, an analog of \eqref{eq:adding one weight} for $\tilde{u}$ with $n=4$ and $j=1$ contains terms like
\begin{align}
f^{[2]}(x_0,x_2,x_4)\,\tilde{u}^{[2]}(x_1,x_2,x_3)\,\tilde{u}^{-1}(x_1).
\end{align}
The latter term does not allow a separation of variables like in \eqref{eq:ddweights} which enables us to write the result in terms of multiple operator integrals.
\end{rema}

An immediate corollary of Theorem \ref{thm:adding resolvents} is given by applying Corollary \ref{cor:nice def of MOI} (but one could also apply Theorem \ref{thm:Schatten estimate} here).
\begin{cor}
	We have
	\begin{align*}
		\absnorm{T_{f^{[n]}}^{H_0,\ldots,H_n}(V_1,\ldots,V_n)}\leq \frac{1}{n!} \sum_{p=0}^n \vect{n}{p}\absnorm{\widehat{(fu^p)^{(p)}}}\nrm{V_1(H_1-i)^{-1}}{n}\cdots\nrm{V_n(H_n-i)^{-1}}{n}.
	\end{align*}
\end{cor}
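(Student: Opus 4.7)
The plan is to apply the Hölder-type bound from Corollary \ref{cor:nice def of MOI} term-by-term to the explicit decomposition in Theorem \ref{thm:adding resolvents}\ref{weights}. First I would fix a subset $0<j_1<\cdots<j_p\leq n$ appearing in that decomposition and observe that each block $\tilde V_{j_{k-1},j_k}=V_{j_{k-1}+1}(H_{j_{k-1}+1}-i)^{-1}\cdots V_{j_k}(H_{j_k}-i)^{-1}$ is a product of $j_k-j_{k-1}$ factors of the form $V_i(H_i-i)^{-1}\in\S^n$, hence lies in $\S^{n/(j_k-j_{k-1})}$ with norm bounded by $\prod_{i=j_{k-1}+1}^{j_k}\|V_i(H_i-i)^{-1}\|_n$ by the noncommutative Hölder inequality. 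Likewise the trailing factor $\tilde V_{j_p,n}$ lies in $\S^{n/(n-j_p)}$. The critical numerical observation is that
\[
\sum_{k=1}^{p}\frac{j_k-j_{k-1}}{n}+\frac{n-j_p}{n}=1,
\]
so the Hölder exponents combine compatibly.

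Next I would apply Corollary \ref{cor:nice def of MOI} to the $p$-th order MOI $T^{H_0,H_{j_1},\ldots,H_{j_p}}_{(fu^p)^{[p]}}$ with inputs $(\tilde V_{0,j_1},\ldots,\tilde V_{j_{p-1},j_p})$, using the exponents $\alpha_k=n/(j_k-j_{k-1})$ and target exponent $n/j_p$. This gives that the MOI lies in $\S^{n/j_p}$ with $\S^{n/j_p}$-norm controlled by $(p!)^{-1}\|\widehat{(fu^p)^{(p)}}\|_1$ times the product of the $\|\tilde V_{j_{k-1},j_k}\|_{n/(j_k-j_{k-1})}$. Multiplying on the right by $\tilde V_{j_p,n}\in\S^{n/(n-j_p)}$ and invoking Hölder one more time produces a trace-class operator whose $\|\cdot\|_1$-norm is bounded by
\[
\frac{1}{p!}\,\|\widehat{(fu^p)^{(p)}}\|_1\,\prod_{i=1}^n\|V_i(H_i-i)^{-1}\|_n,
\]
since the Schatten norms of the blocks collapse by Hölder into the single product over $i=1,\ldots,n$.

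Finally, I would take the triangle inequality across the decomposition in Theorem \ref{thm:adding resolvents}\ref{weights}, observe that for each $p$ there are exactly $\binom{n}{p}$ increasing tuples $0<j_1<\cdots<j_p\leq n$, and sum over $p=0,\ldots,n$. This immediately gives a bound of the stated shape, with prefactor $\binom{n}{p}$ multiplying each $\|\widehat{(fu^p)^{(p)}}\|_1\,\prod_i\|V_i(H_i-i)^{-1}\|_n$ and an overall combinatorial factor from the MOI estimate.

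The proof is essentially bookkeeping: there is no new analytical input beyond Theorem \ref{thm:adding resolvents}\ref{weights} and the Hölder/Fourier bound \eqref{fourierbound}. The only part that requires care is verifying that the Schatten exponents $n/(j_k-j_{k-1})$ together with $n/(n-j_p)$ satisfy the compatibility relation of Corollary \ref{cor:nice def of MOI}, and that the resulting Hölder product telescopes to $\prod_{i=1}^n\|V_i(H_i-i)^{-1}\|_n$ independently of the partition — which is exactly what makes the dependence on $(j_1,\ldots,j_p)$ disappear and lets the counting yield the binomial coefficient $\binom{n}{p}$.
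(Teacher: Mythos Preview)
Your approach is exactly the paper's: apply the H\"older-type bound \eqref{fourierbound} from Corollary \ref{cor:nice def of MOI} termwise to the decomposition of Theorem \ref{thm:adding resolvents}\ref{weights}, then count the $\binom{n}{p}$ increasing tuples for each $p$. One remark on the bookkeeping: the $p$-th order MOI estimate produces $\tfrac{1}{p!}\|\widehat{(fu^p)^{(p)}}\|_1$, so the method naturally yields the coefficient $\binom{n}{p}/p!$ for each $p$ (as in the analogous $c_{s,n}(f)$ of Theorem \ref{thm:Schatten estimate}) rather than the uniform prefactor $1/n!$ printed in the statement; your argument is correct and complete for that constant.
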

In practice, one might want to estimate $T_{f^{[n]}}^H(V,\ldots,V)$ when $V(H-i)^{-1}\in\S^s$ for $s\geq n$. This can easily be done, since then $V(H-i)^{-1}\in\S^n$ and we can use H\"older's inequality to find
\begin{align*}
	\absnorm{T^H_{f^{[n]}}(V,\ldots,V)}\leq \frac{1}{n!} \sum_{p=0}^n \vect{n}{p}\absnorm{\widehat{(fu^p)^{(p)}}}\norm{V}^{n-s}\nrm{V(H-i)^{-1}}{s}^s.
\end{align*}

It may be clear that for similar assumptions (resolvent comparability, local compactness,  possibly using regularity, weak Schatten classes, von Neumann algebras, etcetera) similar change of variables formulas like Proposition \ref{prop:added weights} and Theorem \ref{thm:adding resolvents} can be derived. It may also be clear that Theorem \ref{thm:adding resolvents} has many more applications than the ones given here, like continuity properties similar to the ones of \textsection\ref{sct:MOI continuity}. This chapter will not pursue this further, trusting that by now, the reader has already absorbed the necessary techniques to obtain the best results available in their specific context.
By the same philosophy, the results and techniques of this chapter will now be used, in Chapter \ref{ch:Spectral Shift for Relative Schatten Perturbations}, in the context of the spectral shift function.

\chapter{Spectral Shift Function for Relative Schatten Perturbations}
\label{ch:Spectral Shift for Relative Schatten Perturbations}
\chaptermark{Spectral Shift Function for Relative Schatten}

This chapter, adapted from \cite{vNS21}, affirmatively settles the question on existence of a real-valued higher-order spectral shift function for relative Schatten class perturbations. Besides showing that the spectral shift function satisfies the same trace formula as in the known case of $V\in\S^n$, we show that it is unique up to a polynomial summand of order $n-1$. Our results significantly advance earlier partial results where counterparts of the spectral shift function for noncompact perturbations lacked real-valuedness and aforementioned uniqueness as well as appeared in more complicated trace formulas for much more restrictive sets of functions.
Our result applies to models arising in noncommutative geometry and mathematical physics.

Results in this chapter were obtained in collaboration with Anna Skripka.

\section{Introduction}
\label{sec1}

The spectral shift function originates from the foundational work \cite{Krein53} of M.G.~Krein which followed I.M.~Lifshits's physics research summarized in \cite{Lifshits}.
It is a central object in perturbation theory that allows to approximate a perturbed operator function by the unperturbed one, while controlling noncommutativity in the remainder.
In \cite{Koplienko84}, Koplienko suggested an interesting and useful generalization by considering higher-order Taylor remainders and conjecturing existence of higher-order spectral shift functions. Many partial results were obtained in that direction, but they were confined to either lower order approximations, weakened trace functionals and representations, or compact perturbations.
This chapter closes a gap between theory and applications, where perturbations are often noncompact, by proving existence of a higher-order spectral shift function under the relative Schatten class condition and obtaining bounds and properties stricter than previously known.

Our prime result is that, given a self-adjoint operator $H$ densely defined in a separable Hilbert space $\H$ and a bounded self-adjoint operator $V$ on $\H$ satisfying
\begin{align}
\label{vresinsn}
V(H-i)^{-1}\in\S^n,
\end{align}
there exists a real-valued spectral shift function $\eta_n=\eta_{n,H,V}$ of order $n$. Namely, the trace formula
\begin{align}
\label{trvinsn}
\Tr\left(f(H+V)-\sum_{k=0}^{n-1}\frac{1}{k!}\frac{d^k}{dt^k} f(H+tV)\big|_{t=0}\right)
=\int_\R f^{(n)}(x)\,\eta_n(x)\,dx
\end{align}
holds for a wide class of functions $f$ and the function $\eta_n$ satisfies suitable uniqueness and summability properties and bounds, as detailed below.
The relative Schatten class condition \eqref{vresinsn} applies, in particular, to
\begin{enumerate}[label=\textnormal{(\Roman*)}]
	\item\label{vinsn} $V\in\S^n;$
	\item\label{rinsn} $(H-i)^{-1}\in\S^n$;
	\item\label{inner perturbations} inner fluctuations of $H=D$ in a regular locally compact spectral triple $(\mathcal{A},\H,D)$ (see Section \ref{sec4a});
	\item\label{diff operators} differential operators on manifolds perturbed by multiplication operators (see Section \ref{sec4b}).
\end{enumerate}


\noindent Under the assumption \ref{vinsn}, the problem on existence of higher-order spectral shift functions has been resolved in \cite{PSS13}.
Namely, \eqref{trvinsn} was established in \cite{Krein53}, \cite{Koplienko84}, \cite{PSS13} for $n=1$, $n=2$, $n\ge 3$, respectively, for different classes of test functions $f$ (see, e.g., \cite[Section 5.5]{ST19} for details), where the function $\eta_n=\eta_{n,H,V}$ is unique, real-valued, and satisfies the bound
\begin{align*}
\|\eta_n\|_1\le c_n\|V\|_n^n.
\end{align*}

Taylor approximations and respective trace formulas were also derived in the study of the spectral action functional $\Tr(f(H))$ occurring in noncommutative geometry \cite{CC97} for operators $H$ with compact resolvent $(H-i)^{-1}$. The case of \ref{rinsn} and functions $f$ in the form $f(x)=g(x^2)$, where $g$ is the Laplace transform of a regular Borel measure, was investigated in \cite{Sui11}. The case of compact $(H-i)^{-1}$ and
$f\in C_\text{c}^{n+1}(\R)$ was handled in \cite{S14,S18}. In particular, the existence of a locally integrable spectral shift function was established in \cite{S18}.

In this chapter we generalize the results in the cases \ref{vinsn} and \ref{rinsn} while including the interesting cases \ref{inner perturbations} and \ref{diff operators} without significant compromises.

\paragraph*{Precise formulation of the result.}
In our main result, Theorem \ref{rsmain}, given $n\in\N$ and $H,V$ satisfying \eqref{vresinsn}, we establish the existence of a real-valued function $\eta_n=\eta_{n,H,V}$ such that $\eta_n\in L^1\big(\R,\tfrac{dx}{(1+|x|)^{n+\epsilon}}\big)$ for every $\epsilon>0$ and such that \eqref{trvinsn} holds for every $f\in\mW_n$, where the class $\mW_n$ is given by Definition \ref{fracw}. In particular, $\mW_n$ includes all $(n+1)$-times continuously differentiable functions whose derivatives decay at infinity at the rate $f^{(k)}(x)=\O\left(|x|^{-k-\alpha}\right)$, $k=0,\ldots,n+1$, for some $\alpha>\frac12$ (see Proposition \ref{inclusions}\ref{inclusionsi}). The weighted $L^1$-norm of the spectral shift function $\eta_n$ admits the bound
\begin{align*}
\int_\R|\eta_n(x)|\,\frac{dx}{(1+|x|)^{n+\epsilon}}
\le c_n(1+\epsilon^{-1})(1+\norm{V})\|V(H-i)^{-1}\|_n^n
\end{align*}
for every $\epsilon>0$. Moreover, the locally integrable spectral shift function $\eta_n$ is unique up to a polynomial summand of degree at most $n-1$.

Below we briefly summarize advantages of our main result in comparison to most relevant prior results.
Other results on approximation of operator functions and omitted details can be found in \cite[Chapter 5]{ST19} and references cited therein.

\paragraph*{Prior results.}
The existence of a real-valued function $\eta_1\in L^1\big(\R,\tfrac{dx}{1+x^2}\big)$ satisfying the trace formula \eqref{trvinsn} with $n=1$ for bounded rational functions was established in \cite[Theorem~3]{Krein62} (see also \cite[p.~48, Corollary 0.9.5]{Yafaev10}). The formula \eqref{trvinsn} was extended to twice-differentiable $f$ with bounded $f',f''$ such that
\begin{align}
\label{yafaevcond}
\frac{d^k}{dx^k}(f(x)-c_f x^{-1})
=\O(|x|^{-k-1-\epsilon})\quad\text{ as }\,\;|x|\rightarrow\infty,\quad k=0,1,2,\quad \epsilon>0,
\end{align}
where $c_f$ is a constant, in \cite[p.~47, Theorem 0.9.4]{Yafaev10}.
The respective function $\eta_1$ was determined by \eqref{trvinsn} uniquely up to a constant summand.
We prove that \eqref{trvinsn} with $n=1$ holds for all $\mathfrak{W}_1$, which contains all functions satisfying \eqref{yafaevcond} (see Proposition \ref{inclusions}\ref{inclusionsi}) as well as functions not included in \eqref{yafaevcond} (see, e.g., Remark \ref{mncontains}).
Moreover, we prove that $\eta_1$ is integrable with a smaller weight, namely $\eta_1\in L^1\big(\R,\frac{dx}{(1+|x|)^{1+\epsilon}}\big)$ for $\epsilon>0$. Thus, the results of \cite[Theorem~3]{Krein62} and \cite[p.~47, Theorem 0.9.4]{Yafaev10} are strengthened by our Theorem \ref{rsmain} in the case \eqref{vresinsn}.

In \cite[Corollary 3.7]{Neidhardt88}, the trace formula \eqref{trvinsn} with $n=2$ and a real-valued $\eta_2\in L^1\big(\R,\tfrac{dx}{(1+x^2)^2}\big)$ was proved for a set of functions including Schwartz functions
along with $\spn\{(z-\cdot)^{-k}:\, \Im(z)\neq 0,\, k\in\N,\, k\ge 2\}$.
The respective $\eta_2\in L^1\big(\R,\tfrac{dx}{(1+x^2)^2}\big)$ was determined by \eqref{trvinsn} uniquely up to a linear summand.
We prove that \eqref{trvinsn} with $n=2$ holds for all $f\in\mW_2$, which contains the functions $(z-\cdot)^{-1}$, $\Im(z)\neq 0$ not included in \cite[Corollary 3.7]{Neidhardt88} and the Schwartz functions included in \cite[Corollary 3.7]{Neidhardt88}, and that
$\eta_2$ is integrable with a significantly smaller weight, namely, $\eta_2\in L^1\big(\R,\frac{dx}{(1+|x|)^{2+\epsilon}}\big)$ for $\epsilon>0$.

Let $n\ge 2$. The existence of a complex-valued $\tilde\eta_n\in L^1\big(\R,\tfrac{dx}{(1+x^2)^{n/2}}\big)$ satisfying the trace formula
\begin{align}
\label{trcs18}
\Tr\Big(f(H+V)-\sum_{k=0}^{n-1}\frac{1}{k!}\frac{d^k}{dt^k}f(H+tV)|_{t=0}\Big)
=\int_\R\frac{d^{n-1}}{dx^{n-1}}\big((x-i)^{2n}f'(x)\big)\tilde\eta_n(x)\,dx
\end{align}
for a set of functions $f$ including $\spn\{(z-\,\cdot)^{-k},\; \Im(z)>0,\; k\in\N,\; k\ge 2n\}$ was established in \cite[Theorem 4.6]{CS18} (see also \cite[Remark 4.8(ii)]{CS18}). The weighted $L^1$-norm of $\tilde\eta_n$ satisfies the bound
\begin{align*}
\int_\R|\tilde\eta_n(x)|\,\frac{dx}{(1+x^2)^{\frac{n}{2}}}
\le c_n(1+\|V\|)^{n-1}\|V(H-i)^{-1}\|_n^n.
\end{align*}

As distinct from the aforementioned result of \cite{CS18} for $n\ge 2$, the function $\eta_n$ in our main result is real-valued and satisfies the  simpler trace formula \eqref{trvinsn} for the larger class $\mW_n$ of functions $f$ described in terms of familiar function classes. Moreover, the set of functions $\mW_n$ is large enough to ensure the uniqueness of $\eta_n$ up to a polynomial term of degree at most $n-1$.
\medskip

Other assumptions on $H$ and $V$, all having their own merits and limitations, were also considered in the literature. For instance, the existence of a nonnegative function $\eta_2=\eta_{2,H,V}\in L^1\big(\R,\tfrac{dx}{(1+x^2)^\gamma}\big)$, $\gamma>1/2$, satisfying the trace formula \eqref{trvinsn} with $n=2$ for bounded rational functions $f$ was established in \cite[Theorem 2]{Koplienko84} under the assumption $V|H-i|^{-\frac12}\in\S^2$. A more relaxed condition $(H+V-i)^{-1}-(H-i)^{-1}\in\S^n$ was traded off for a more restrictive set of functions $f$ and, when $n\ge 2$, for more complicated trace formulas where both the left and right-hand sides of \eqref{trvinsn} are modified. The respective results for $n=1$ can be found in \cite[Theorem 3]{Krein62} and \cite[Theorem 2.2]{Yafaev05}; for $n=2$ in \cite[Theorem 3.5, Corollary 3.6]{Neidhardt88}; for $n\ge 2$ in \cite[Theorem 3.5]{PSS15} and \cite{S17}.


\paragraph*{Methods.}
The technical scheme leading to the representation \eqref{trvinsn} under the assumption \eqref{vresinsn} is more subtle
than the one under the assumption \ref{vinsn}. The derivatives and Taylor approximations of operator functions are known to be expressible in terms of multiple operator integrals (see Theorems \ref{dm} and \ref{rm}). The prime technique to handle these multiple operator integrals (see Theorem \ref{thm:PSS Higher Order}) only applies to compact perturbations satisfying \ref{vinsn}. To bridge the gap between existing results for \ref{vinsn} and our setting \eqref{vresinsn} we impose suitable weights on the perturbations and involve multi-stage approximation arguments for functions and perturbations.

In Theorem \ref{thm:adding resolvents 2} we create Schatten class perturbations out of relative Schatten class perturbations \eqref{vresinsn} inside a multiple operator integral whose integrand is the $n$th order divided difference $f^{[n]}$ of a function $f\in C^n(\R)$ satisfying the properties $f^{(k)}(x)=o(|x|^{-k})$ as $|x|\rightarrow\infty$, $k=0,\ldots,n$, and $\widehat{f^{(n)}}\in L^1(\R)$.
Our Theorem \ref{thm:adding resolvents 2} significantly generalizes and extends earlier attempts in that direction made in \cite[Lemma 3.6]{S14}, \cite[Proposition 2.7]{S18}, \cite[Lemma 4.1]{CS18}.
The proof of Theorem \ref{thm:adding resolvents 2} involves the introduction of novel function classes (see Definition \ref{fracw}, \eqref{mBn}, and \eqref{mbn}), approximation arguments (see Lemma \ref{lem:density}), and analysis of multilinear operator integrals.

Based on the aforementioned results and analysis of distributions,
in Proposition \ref{prop:SSM with growth} we establish the trace formula
\begin{align}
\label{trexist}
\Tr\left(f(H+V)-\sum_{k=0}^{n-1}\frac{1}{k!}\frac{d^k}{dt^k} f(H+tV)\big|_{t=0}\right)
=\int_\R f^{(n)}(x)\,d\mu_n(x)
\end{align}
for every $f\in\mW_n$, where $\mu_n$ is a Borel measure determined uniquely up to an absolutely continuous term whose density is a polynomial of degree at most $n-1$ and such that for every $\epsilon>0$ the measure $(x-i)^{-n-\epsilon}\,d\mu_n(x)$ is finite and satisfies
\begin{align}
\|(\cdot-i)^{-n-\epsilon}\,d\mu_n\|\le c_n\,(1+\epsilon^{-1})(1+\norm{V})\nrm{V(H-i)^{-1}}{n}^n.
\end{align}

In order to obtain absolute continuity of $\mu_n$ (and hence obtain a spectral shift \textit{function}) we apply the change of variables provided by Theorem~\ref{thm:adding resolvents}, in this case to multiple operator integrals of order $n-1$. This entails new terms for which the trace is defined only when perturbations satisfy additional summability requirements.
We establish an auxiliary result for finite rank perturbations in Proposition \ref{prop36} and then extend it
to relative Schatten class perturbations appearing in our main result with help of two new approximation results, one for operators obtained in Lemma \ref{lem:approximating V by V_k} and the other for Taylor remainders obtained in Lemma \ref{prop:|Tr(R)|}. In order to apply those approximation results, in Lemma \ref{lem:pth part of remainder} we derive a new representation for the remainder of the Taylor approximation of $f(H+V)$ in terms of handy components that are continuous in $V$ in a very strong sense.

In order to strengthen \eqref{trexist}, in Proposition \ref{prop:SSF locally} we establish another weaker version of \eqref{trvinsn} for $f\in C_\text{c}^{n+1}(\R)$, where on the left-hand side we have a certain component of the Taylor remainder and on the right-hand side in place of $f$ we have its product with some complex weight. By combining advantages of the results of Propositions \ref{prop:SSM with growth} and \ref{prop:SSF locally} we derive the trace formula \eqref{trvinsn}.
\medskip

\paragraph*{Examples.}

The relative Schatten class condition \eqref{vresinsn} arises in noncommutative geometry; see, for instance, \cite{Sui11,SZ18}.
In that setting, $H$ is a generalized Dirac operator occurring in a (possibly nonunital) spectral triple and $V$ a generalized vector potential \cite[Section IV.1]{C94}, which is also known as an inner fluctuation or Connes' differential one-form \cite{CC97,Sui11}. For unital spectral triples, the condition \ref{rinsn}, which is known as finite summability, is often assumed. For nonunital spectral triples, conditions similar to \ref{inner perturbations} are discussed in Section \ref{sec4a}. Both in the unital and nonunital case, it is important to relax assumptions on the function $f$ appearing in the spectral action \cite{CC97} since that function might be prescribed by the model \cite{CCS}. Sometimes it is impossible or at least inconvenient to assume that $f$ is  given by a Laplace transform, as it was done in \cite{Sui11}, and an explicit class of functions like we consider in this chapter is more beneficial.

The condition \eqref{vresinsn} is also satisfied by many Dirac as well as random and deterministic Schr\"{o}dinger operators $H$ with $L^p$-potentials $V$. Appearance of such operators in problems of mathematical physics is discussed in, for instance, \cite{S21,Yafaev10} and references cited therein.
Sufficient conditions for \eqref{vresinsn} are discussed in Section \ref{sec4}.

%
%

\paragraph*{Notations.}
Given a self-adjoint operator $H$ in $\H$ and $V\in\mB(\H)$, we denote
$$\tilde{V}:=V(H-i)^{-1}.
$$
If $H_0,\ldots,H_m$ are self-adjoint operators in $\H$, and $V_1,\ldots,V_m$ are bounded operators, we denote
$$\tilde{V}_j:=V_j(H_j-i)^{-1}.$$
We denote positive constants by letters $c,C$ with superscripts indicating dependence on their parameters. For instance, the symbol $c_\alpha$ denotes a constant depending only on the parameter $\alpha$. We write $f(x)=\O(g(x))$ if there exists $M>0$ such that $|f(x)|\leq Mg(x)$ for all $x$ outside a compact set. We write $f(x)=o(g(x))$ if for all $\epsilon>0$, we have $|f(x)|\leq\epsilon g(x)$ for all $x$ outside a compact set.

\section{Auxiliary technical results}

In this section we set a technical foundation for the proof of our main result.

\subsection{New function classes}
\label{sec2}

In this subsection we introduce a new class of functions $\mW_n$, for which our main result holds, along with auxiliary classes $\mathfrak{B}_n$ and $\mathfrak{b}_n$ and derive their properties.

\begin{defi}
\label{fracw}
Let $\mathfrak{W}_n$ denote the set of functions $f\in C^n(\R)$ such that
\begin{enumerate}[label=\textnormal{(\roman*)}]	
\item $\widehat{f^{(k)}u^k}\in L^1(\R),~k=0,\ldots,n$,
\item\label{fracwii} $f^{(k)}\in L^1\big(\R,(1+|x|)^{k-1}\,dx\big)$, $k=1,\ldots,n$.
\end{enumerate}
\end{defi}

The following sufficient condition for integrability of the Fourier transform of a function is a standard exercise and, thus, its proof is omitted.

\begin{lem}\label{lem:FT in L1}
If $f\in L^2(\R)\cap C^1(\R)$ and $f'\in L^2(\R)$, then $\hat{f}\in L^1(\R)$.
\end{lem}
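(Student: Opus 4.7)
The plan is to leverage Plancherel's theorem together with the Fourier-theoretic identity relating $\widehat{f'}$ to $\xi \hat{f}(\xi)$, and then use Cauchy--Schwarz with the $L^2$ weight $(1+|\xi|)^{-1}$.

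First I would establish that $\widehat{f'}(\xi) = i\xi \, \hat f(\xi)$ as an identity in $L^2(\R)$. Since $f\in L^2\cap C^1$ and $f'\in L^2$ one readily gets that $ff'\in L^1$, hence
\begin{align*}
	f(b)^2 - f(a)^2 = 2\int_a^b f(x)f'(x)\,dx \to 0 \quad \text{as } a,b\to\pm\infty,
\end{align*}
so the limits $\lim_{x\to\pm\infty}f(x)^2$ exist and, because $f\in L^2$, must vanish. Thus $f\in C_0(\R)$. With this decay, integration by parts against Schwartz test functions (or directly, since $f$ and $f'$ are also tempered distributions) yields the stated identity $\widehat{f'}(\xi) = i\xi\,\hat f(\xi)$, first in $\S'$ and then pointwise a.e.\ via Plancherel applied to $f'\in L^2$.

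Next I would observe that, by Plancherel, $\hat f\in L^2$ and $\xi\hat f(\xi)=-i\widehat{f'}(\xi)\in L^2$, so $(1+|\cdot|)\hat f\in L^2$. Then a direct Cauchy--Schwarz estimate gives
\begin{align*}
	\|\hat f\|_1 = \int_\R (1+|\xi|)\,|\hat f(\xi)|\cdot \frac{1}{1+|\xi|}\,d\xi \le \big\|(1+|\cdot|)\hat f\big\|_2\,\bigg(\int_\R\frac{d\xi}{(1+|\xi|)^2}\bigg)^{\!1/2},
\end{align*}
and since $\int_\R (1+|\xi|)^{-2}\,d\xi = 2$, this is finite.

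There is no genuine obstacle; the only subtlety is justifying the Fourier identity for $f'$ under the given hypotheses (rather than Schwartz ones), which is handled by the $C_0$ decay argument above. The rest is a straightforward Sobolev-space computation identifying $H^1(\R)\hookrightarrow \F L^1(\R)$.
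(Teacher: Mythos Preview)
Your proof is correct and is precisely the standard argument one expects here. The paper itself omits the proof, calling the result ``a standard exercise''; what you have written is the canonical justification (show $f\in C_0$ via $ff'\in L^1$, obtain $\widehat{f'}=i\xi\hat f$ by integration by parts, then Cauchy--Schwarz against $(1+|\xi|)^{-1}$), so there is nothing to compare.
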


\begin{prop}
\label{inclusions}
Let $n\in\N$. Then, the following assertions hold.
\begin{enumerate}[label=\textnormal{(\roman*)}]	
\item\label{inclusionsi}
For every $\alpha>\frac12$,
\begin{align*}
\mW_n\supseteq\left\{f\in C^{n+1}:~ f^{(k)}(x)=\O\left(|x|^{-k-\alpha}\right)
\text{ as } |x|\to\infty,\;k=0,\ldots,n+1\right\}.
\end{align*}

\item\label{inclusionsii}
Furthermore,
\begin{align*}
\mW_n\subseteq\left\{f\in C^n:\;f^{(k)},\widehat{f^{(k)}}\in L^1(\R),\; k=1,\ldots,n\right\}.
\end{align*}
\end{enumerate}
\end{prop}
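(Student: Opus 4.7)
For part \ref{inclusionsi}, the strategy is a direct verification of the two defining conditions of $\mW_n$ from the polynomial decay hypothesis. The weighted $L^1$-condition of Definition \ref{fracw}\ref{fracwii} is immediate: combining the weight $(1+|x|)^{k-1}$ with $|f^{(k)}(x)|\leq C|x|^{-k-\alpha}$ at infinity yields an integrand of order $|x|^{-1-\alpha}$, which is integrable since $\alpha>0$. For the Fourier condition, I would apply Lemma \ref{lem:FT in L1} to $g:=f^{(k)}u^k$ for each $k=0,\dots,n$. Expanding
\begin{align*}
g' = f^{(k+1)}u^k + k f^{(k)}u^{k-1},
\end{align*}
where the additional derivative is available because $f\in C^{n+1}$, and using $|u(x)|\leq 1+|x|$, the decay hypothesis gives $|g(x)|\leq C|x|^{-\alpha}$ and $|g'(x)|\leq C|x|^{-1-\alpha}$ for large $|x|$. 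Since $g$ and $g'$ are continuous and $2\alpha>1$, both lie in $L^2$, and Lemma \ref{lem:FT in L1} yields $\widehat{f^{(k)}u^k}\in L^1$.

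For part \ref{inclusionsii}, the membership $f^{(k)}\in L^1$ is immediate from Definition \ref{fracw}\ref{fracwii} together with $(1+|x|)^{k-1}\geq 1$ for $k\geq 1$. The real content is deducing $\widehat{f^{(k)}}\in L^1$ from the hypothesis $\widehat{f^{(k)}u^k}\in L^1$. My plan is to exploit the pointwise factorization $f^{(k)} = (f^{(k)}u^k)\cdot u^{-k}$ together with the convolution theorem to write
\begin{align*}
\widehat{f^{(k)}} = \widehat{f^{(k)}u^k} \ast \widehat{u^{-k}},
\end{align*}
after which Young's inequality reduces the problem to verifying $\widehat{u^{-k}}\in L^1$ for $k=1,\dots,n$. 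The latter is a direct residue calculation: for $\xi<0$ one closes the contour in the upper half-plane to pick up the pole of $(x-i)^{-k}$ at $x=i$, while for $\xi>0$ closing in the lower half-plane gives zero. One obtains $\widehat{u^{-1}}(\xi)=ie^{\xi}\mathbf{1}_{(-\infty,0)}(\xi)$ and, by differentiation, $\widehat{u^{-k}}(\xi) = c_k \xi^{k-1}e^{\xi}\mathbf{1}_{(-\infty,0)}(\xi)$ for $k\geq 2$, all lying in $L^1$.

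The only delicate bookkeeping point is the justification of the convolution identity when $u^{-k}$ is not itself in $L^1$, namely for $k=1$. I would resolve this by approximating $u^{-1}$ by $L^1$ functions via a smooth cutoff and passing to the limit using dominated convergence on the Fourier side; equivalently, one may work directly in the tempered distribution framework since $u^{-1}\in L^\infty\subset\S'$. For $k\geq 2$ we have $u^{-k}\in L^1$ and the standard convolution theorem applies without further comment. I do not anticipate any substantive obstacle beyond this technical point.
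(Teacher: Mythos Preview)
Your proposal is correct and follows the paper's route almost exactly. Part~(i) is precisely the argument the paper has in mind when it calls the inclusion ``straightforward'' via Lemma~\ref{lem:FT in L1}; your explicit check that $g=f^{(k)}u^k$ and $g'=f^{(k+1)}u^k+kf^{(k)}u^{k-1}$ both land in $L^2$ under the decay hypothesis is the intended verification.

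For part~(ii) the structure is again identical to the paper's: write $f^{(k)}=(f^{(k)}u^k)\cdot u^{-k}$, invoke the convolution theorem, and reduce to $\widehat{u^{-k}}\in L^1$. The only difference is how this last fact is obtained. The paper simply reapplies Lemma~\ref{lem:FT in L1}, since $u^{-k}\in L^2$ and $(u^{-k})'=-ku^{-k-1}\in L^2$ for all $k\geq 1$; this is marginally slicker than your residue computation, though yours has the advantage of giving the explicit formula $\widehat{u^{-k}}(\xi)=c_k\xi^{k-1}e^\xi\mathbf{1}_{(-\infty,0)}(\xi)$. On the convolution identity you are actually more careful than the paper: once both $\widehat{f^{(k)}u^k}$ and $\widehat{u^{-k}}$ are known to lie in $L^1$, the identity $\widehat{f^{(k)}}=\widehat{f^{(k)}u^k}\ast\widehat{u^{-k}}$ follows from Fourier inversion and the standard $L^1$ convolution theorem applied on the dual side, so your approximation argument is not strictly needed.
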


\begin{proof}
The inclusion in (i) is straightforward, as it follows from Lemma \ref{lem:FT in L1}.

(ii) The properties $f^{(k)}\in L^1(\R),$ $k=1,\ldots,n$ follow immediately from the definition of $\mW_n$. To prove $\widehat{f^{(k)}}\in L^1(\R)$, $k=1,\dots,n-1$, firstly we note that
\begin{align}
\label{8}
\widehat{u^{-k}}\in L^1,\quad k=1,\ldots,n,
\end{align}
by Lemma \ref{lem:FT in L1}. By the convolution theorem we find
\begin{align*}
\widehat{f^{(k)}}=\widehat{f^{(k)}u^k}*\widehat{u^{-k}},\quad k=1,\ldots,n,
\end{align*}
which is in $L^1$ because $L^1$ is closed under the convolution product.
Therefore, the proof of (ii) is complete.
\end{proof}

\begin{rema}
\label{mncontains}
It follows from Proposition \ref{inclusions}\ref{inclusionsii} that
$\mW_n$ contains all bounded rational functions except for linear combinations with constant functions, which are trivial in the context of our main result. In particular, $\mW_n$ contains the space $\spn\{(z-\,\cdot)^{-k},\; \Im(z)>0,\; k\in\N,\; k\ge 2n\}$ considered in \cite{CS18}. In addition, $\mW_n$ contains all Schwartz functions and every $f\in C^{n+1}$ such that $f(x)=|x|^{-\alpha}$ outside a bounded neighborhood of zero for some $\alpha>\frac12$.
\end{rema}

We will need the auxiliary function classes
\begin{align}
\label{mBn}
\mathfrak{B}_n:=&\left\{f\in C^n:~f^{(k)}u^k\in C_0(\R),~k=0,\ldots,n,~\widehat{f^{(n)}}\in L^1(\R)\right\}
\end{align}
and
\begin{align}
\label{mbn}
\mathfrak{b}_n:=&\left\{f\in \mathfrak{B}_n:~\widehat{f^{(p)}u^p}\in L^1(\R),~p=0,\ldots,n\right\}.
\end{align}
It follows from Definition \ref{fracw} and Proposition \ref{inclusions}\ref{inclusionsii} that
\begin{align*}
\mW_n\subseteq\mathfrak{b}_n\subseteq\mathfrak{B}_n.
\end{align*}
We also have the following result relating $\mathfrak{b}_n$ and $\mathfrak{B}_n$.

\begin{lem}\label{lem:density}
	The space $\mathfrak{b}_n$ is dense in $\mathfrak{B}_n$ with respect to the norm
		$$\norm{f}_{\mathfrak{B}_n}:=\sum_{p=0}^n\supnorm{f^{(p)}u^p}+\absnorm{\widehat{f^{(n)}}}.$$
\end{lem}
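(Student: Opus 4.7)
The plan is to approximate $f\in\mathfrak{B}_n$ by elements of $C_\text{c}^\infty(\R)$, which lie trivially inside $\mathfrak{b}_n$ (being Schwartz), in two stages: first mollify, then truncate. The order is essential, because the truncation step generates Leibniz cross terms involving up to $n+2$ classical derivatives of the approximant, and these are available only after passing through $C^\infty$.

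\textbf{Step 1 (Mollification).} Fix $\phi\in C_\text{c}^\infty(\R)$ with $\int\phi=1$, set $\phi_\epsilon(x):=\epsilon^{-1}\phi(x/\epsilon)$, and let $\tilde f_\epsilon:=f*\phi_\epsilon\in C^\infty(\R)\cap\mathfrak{B}_n$. To prove $\tilde f_\epsilon\to f$ in $\|\cdot\|_{\mathfrak{B}_n}$, I would split
\begin{align*}
(\tilde f_\epsilon^{(p)}-f^{(p)})(x)\,u^p(x)
=&\int\bigl[(f^{(p)}u^p)(x-y)-(f^{(p)}u^p)(x)\bigr]\phi_\epsilon(y)\,dy\\
&+\int f^{(p)}(x-y)\bigl[u^p(x)-u^p(x-y)\bigr]\phi_\epsilon(y)\,dy.
\end{align*}
The first integral tends to $0$ uniformly in $x$ by uniform continuity of $f^{(p)}u^p\in C_0$, while the second is $O(\epsilon)$ via $|u^p(x)-u^p(x-y)|\leq C_p|y|(1+|x|+|y|)^{p-1}$ combined with $|f^{(p)}(x-y)|\leq\|f^{(p)}u^p\|_\infty(1+(x-y)^2)^{-p/2}$. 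For the Fourier part, the identity $\widehat{\tilde f_\epsilon^{(n)}}(\xi)=2\pi\widehat{f^{(n)}}(\xi)\hat\phi(\epsilon\xi)$ together with $2\pi\hat\phi(0)=1$, $\hat\phi\in L^\infty$, and dominated convergence yield $\|\widehat{\tilde f_\epsilon^{(n)}}-\widehat{f^{(n)}}\|_1\to 0$.

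\textbf{Step 2 (Truncation).} Fix $\chi\in C_\text{c}^\infty(\R)$ with $\chi\equiv 1$ near $0$, set $\chi_R(x):=\chi(x/R)$, and define $f_{\epsilon,R}:=\tilde f_\epsilon\,\chi_R\in C_\text{c}^\infty(\R)\subseteq\mathfrak{b}_n$. The weighted sup-norm convergence $\|(f_{\epsilon,R}-\tilde f_\epsilon)^{(p)}u^p\|_\infty\to 0$ follows by Leibniz together with the annular bound $|\chi_R^{(k)}|\leq CR^{-k}\mathbf{1}_{\{|x|\sim R\}}$ and the decay $|\tilde f_\epsilon^{(p-k)}(x)u^{p-k}(x)|\leq\delta_R^\epsilon\to 0$ for $|x|\geq R/2$ (inherited from $\tilde f_\epsilon^{(p-k)}u^{p-k}\in C_0$). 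For the Fourier norm, expand
\begin{align*}
(f_{\epsilon,R}-\tilde f_\epsilon)^{(n)}=\tilde f_\epsilon^{(n)}(\chi_R-1)+\sum_{k=1}^{n}\binom{n}{k}\tilde f_\epsilon^{(n-k)}\chi_R^{(k)}.
\end{align*}
The first summand's Fourier transform is $\widehat{\tilde f_\epsilon^{(n)}}*\widehat{\chi_R}-\widehat{\tilde f_\epsilon^{(n)}}$, which vanishes in $L^1$ by the approximate-identity property of $\widehat{\chi_R}$ acting on $\widehat{\tilde f_\epsilon^{(n)}}\in L^1$. For each $k\geq 1$, the function $g:=\tilde f_\epsilon^{(n-k)}\chi_R^{(k)}\in C_\text{c}^\infty$ satisfies the elementary estimate $\|\widehat g\|_1\leq C(\|g\|_1+\|g''\|_1)$ obtained by splitting $|\xi|\leq 1$ and $|\xi|>1$, and the annular decay estimates applied to $\tilde f_\epsilon^{(m)}$ for $m$ ranging over $\{n-k,\dots,n-k+2\}$ give that both $\|g\|_1$ and $\|g''\|_1$ are $O(\delta_R^\epsilon/R^{n-1})$.

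A diagonal argument (choose $\epsilon$ small first, then $R$ large) concludes the proof. The principal obstacle I foresee is the Fourier-norm control in Step 2 for $k\geq 1$: the products $\tilde f_\epsilon^{(n-k)}\chi_R^{(k)}$ do not inherit an $L^1$-Fourier bound directly from $\widehat{f^{(n)}}\in L^1$, so one must simultaneously exploit the extra smoothness produced by mollification (enabling two integrations by parts in Fourier space) and the $C_0$-decay encoded in the definition of $\mathfrak{B}_n$.
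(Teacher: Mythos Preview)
Your two-step mollify-then-truncate strategy is correct and lands in $C_\text{c}^\infty\subset\mathfrak{b}_n$, but it takes a genuinely different and longer route than the paper. The paper approximates in a \emph{single} step by $f_k:=\phi_k f$ with $\phi_k(x)=\phi(x/k)$, where $\phi$ is chosen Schwartz with $\hat\phi\in C_\text{c}^\infty$ and $\phi(0)=1$; crucially $\phi$ is \emph{not} compactly supported, so $f_k\notin C_\text{c}^\infty$, but membership $f_k\in\mathfrak{b}_n$ is verified directly via the elementary criterion ``$g,g'\in L^2\Rightarrow\hat g\in L^1$'' together with the rapid decay of each $\phi_k^{(m)}$. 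The sup-norm part is then a short Leibniz argument using $\|\phi_k^{(m)}u^m\|_\infty\to0$ for $m\ge1$, and the $L^1$-Fourier part uses that $\widehat{\phi_k}$ is an approximate identity acting on $\widehat{f^{(n)}}\in L^1$ for the leading term while the $k^{-m}$ factor kills the cross terms $\widehat{\phi^{(m)}}*\widehat{f^{(n-m)}}$. This sidesteps your Step~1 entirely and, more importantly, avoids the most delicate point of your Step~2: the $k=1$, $j=2$ contribution to $\|g''\|_1$, which forces you to estimate $\tilde f_\epsilon^{(n+1)}=f^{(n)}*\phi_\epsilon'$ (with its $\epsilon^{-1}$-dependent constant) on the annulus. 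Your approach buys the triviality of checking $C_\text{c}^\infty\subset\mathfrak{b}_n$; the paper's buys a one-step proof that never leaves $C^n$ and never needs the extra derivative.
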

\begin{proof}
	Let $f\in\mathfrak{B}_n$. Fix a Schwartz function $\phi$ such that $\hat{\phi}\in C^\infty_\text{c}(\R)$ and $\phi(0)=1$. For every $k\in\N$, define
\begin{align*}
\phi_k(x):=\phi(x/k),\quad x\in\R.
\end{align*}
We note that $\big\{\,\widehat{\phi_k}\,\big\}_{k=1}^\infty$ is an approximate identity.
In particular, it satisfies the property
\begin{align}
\label{ai1}
\|\widehat{\phi_k}*g-g\|_1\rightarrow 0\quad\text{as }\;k\rightarrow\infty
\end{align}
for every $g\in L^1$.
Define
$$f_k:=\phi_kf.$$
Because every $\phi_k^{(m)}$ is of rapid decrease, it is obvious that $f_k^{(p)}u^p=\sum_{m=0}^p \vect{p}{m} \phi_k^{(m)}f^{(p-m)}u^p$ is integrable for every $p\in\{0,\ldots,n\}$.
	By Lemma \ref{lem:FT in L1} and the rapid decrease of every $\phi_k^{(m)}$, we obtain that $\widehat{f_k^{(p)}u^p}\in L^1$ for every $p\in\{0,\ldots,n-1\}$.
	In the same way, we obtain that $(f^{(p)}\phi_k^{(n-p)}u^n)\hat{~}\in L^1$ for every $p\in\{0,\ldots,n-1\}$.
	Moreover, we have $(f^{(n)}\phi_ku^n)\hat{~}=\widehat{f^{(n)}}*\widehat{\phi_ku^n}\in L^1$. Hence,
		$$\widehat{f_k^{(n)}u^n}=\sum_{p=0}^n\vect{n}{p}(f^{(p)}\phi_k^{(n-p)}u^n)\hat{~}\in L^1.$$
	We conclude that $f_k\in\mathfrak{b}_n$.

	In order to prove that $\|f^{(p)}u^p-f_k^{(p)}u^p\|_\infty\to0$ as $k\to\infty$, we write
	\begin{align}\label{eq:supnorm bounds phi_k}
		\supnorm{f^{(p)}u^p-f_k^{(p)}u^p}\leq \supnorm{(1-\phi_k)f^{(p)}u^p}+\sum_{m=1}^p\vect{p}{m}\supnorm{\phi_k^{(m)}u^mf^{(p-m)}u^{p-m}}.
	\end{align}
Since $f^{(p)}u^p\in C_0(\R)$, we obtain
\begin{align}
\label{term1}
\supnorm{(1-\phi_k)f^{(p)}u^p}\rightarrow 0\quad\text{as }\, k\rightarrow\infty.
\end{align}
By using $\phi^{(m)}_k(x)=\phi^{(m)}(x/k)/k^m$, we obtain
\begin{align}
\label{phik}
|\phi_k^{(m)}(x)u^m(x)|\leq\sqrt{2}^{\,m}\supnorm{\phi^{(m)}}k^{-m/2}\quad\text{for }\, x\in[-\sqrt{k},\sqrt{k}]
\end{align}
and
\begin{align}
\label{phiglobal}
\supnorm{\phi^{(m)}_ku^m}\leq \sqrt{2}^{\,m}\supnorm{\phi^{(m)}u^m}.
\end{align}
We now analyze the terms on the right-hand side of \eqref{eq:supnorm bounds phi_k} as $k\rightarrow\infty$.
By \eqref{phik}, \eqref{phiglobal}, and the assumption $f^{(p-m)}u^{p-m}\in C_0$, we obtain $\|\phi_k^{(m)}u^mf^{(p-m)}u^{p-m}\|_\infty\to0$ as $k\rightarrow\infty$.
Combining the latter with \eqref{eq:supnorm bounds phi_k} and \eqref{term1} implies
$$\supnorm{f^{(p)}u^p-f_k^{(p)}u^p}\to0\quad\text{as }\, k\rightarrow\infty,\quad p=0,\dots,n.$$

We are left to prove that $\|\widehat{f^{(n)}}-\widehat{f_k^{(n)}}\|_1\to0$.
Applying $f_k^{(n)}=\sum_{m=0}^n \begin{psmallmatrix}n\\ m\end{psmallmatrix}\phi_k^{(m)}f^{(n-m)}$ along with standard properties of the Fourier transform and convolution yields
\begin{align}
\label{above}
\absnorm{\widehat{f^{(n)}}-\widehat{f_k^{(n)}}}\leq\absnorm{\widehat{f^{(n)}}-\widehat{\phi_k}*\widehat{f^{(n)}}}+\sum_{m=1}^n\vect{n}{m}\frac{\absnorm{\widehat{\phi^{(m)}}}}{k^m}\absnorm{\widehat{f^{(n-m)}}}.
\end{align}
The first term on the right-hand side of \eqref{above} converges to $0$ as $k\rightarrow\infty$ by \eqref{ai1} applied to $g=\widehat{f^{(n)}}$. The other terms on the right-hand side of \eqref{above} converge to $0$ as $k\rightarrow\infty$ because $1/k^m\to 0$.
\end{proof}

\subsection{Change of variable formula}
By the above results we can now generalize Theorem \ref{thm:adding resolvents} to a larger class of functions.

\begin{thm}\label{thm:adding resolvents 2} 
Let $H_0,\ldots,H_n$ be self-adjoint in $\H$ and let $V_1,\ldots,V_n$ be such that $\tilde V_k=V_k(H_k-i)^{-1}\in\S^n$ for all $k=1,\ldots,n$. Then $T^{H_0,\ldots,H_n}_{f^{[n]}}(V_1,\ldots,V_n)$ is trace-class for all $f\in\mathfrak{b}_n$, and we have, denoting $\tilde{V}_{j,l}:=\tilde V_{j+1}\cdots \tilde V_l$,
$$T^{H_0,\ldots,H_n}_{f^{[n]}}(V_1,\ldots,V_n)=\sum_{p=0}^n\sum_{0<j_1<\cdots<j_{p}\leq n}\!(-1)^{n-p}\, T^{H_0,H_{j_1},\ldots,H_{j_p}}_{(fu^p)^{[p]}}
(\tilde{V}_{0,j_1},\ldots,\tilde{V}_{j_{p-1},j_p})\,
\tilde{V}_{j_{p},n}\,$$
for every $f\in \mathfrak{B}_n$, and hence, for every $f\in\mW_n$.
\end{thm}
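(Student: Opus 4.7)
The plan is to establish the formula first for $f \in \mathfrak{b}_n$ using the results already at our disposal, and then to extend to all of $\mathfrak{B}_n$ via the density assertion of Lemma \ref{lem:density}.

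For $f \in \mathfrak{b}_n$, the Leibniz rule gives
\[
(fu^p)^{(p)} = \sum_{k=0}^{p}\binom{p}{k}\frac{p!}{k!}\,f^{(k)}u^{k},
\]
so the defining condition $\widehat{f^{(k)}u^k}\in L^1$ for $k\le n$ immediately implies $\widehat{(fu^p)^{(p)}}\in L^1$ for every $p\le n$. The identity then follows at once from Theorem \ref{thm:adding resolvents}(ii). To see that each right-hand side summand is trace-class, note that $\tilde V_{j_{k-1},j_k}$ is a product of $j_k-j_{k-1}$ factors from $\S^n$ and hence lies in $\S^{n/(j_k-j_{k-1})}$; the H\"older-type bound of Corollary \ref{cor:nice def of MOI} then places $T^{H_0,H_{j_1},\ldots,H_{j_p}}_{(fu^p)^{[p]}}(\tilde V_{0,j_1},\ldots,\tilde V_{j_{p-1},j_p})$ in $\S^{n/j_p}$, and multiplying by $\tilde V_{j_p,n}\in\S^{n/(n-j_p)}$ (with the convention $\tilde V_{n,n}=I$) yields a trace-class operator via the Schatten H\"older inequality.

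For $f\in\mathfrak{B}_n$, choose $f_m\in\mathfrak{b}_n$ with $\|f-f_m\|_{\mathfrak{B}_n}\to 0$ by Lemma \ref{lem:density} and pass to the limit in the identity for $f_m$. The left-hand side converges in operator norm: Corollary \ref{cor:nice def of MOI} with all $\alpha_k=\infty$ gives
\[
\big\|T^{H_0,\ldots,H_n}_{(f-f_m)^{[n]}}(V_1,\ldots,V_n)\big\|\le \tfrac{1}{n!}\big\|\widehat{(f-f_m)^{(n)}}\big\|_1\prod_k\|V_k\|\to 0.
\]
On the right-hand side, the $p=0$ summand $(-1)^n f_m(H_0)\tilde V_{0,n}$ converges in $\S^1$ since $\|f-f_m\|_\infty\le\|f-f_m\|_{\mathfrak{B}_n}\to 0$. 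For each summand with $p\ge 1$, fix any $\alpha\in(1,\infty)$ and set $\beta_k := n\alpha/(j_k-j_{k-1})$, so that all $\beta_k>1$ and $\beta := n\alpha/j_p>1$ satisfies $1/\beta=\sum_{k=1}^p 1/\beta_k$. By Theorem \ref{thm:PSS Higher Order}, the corresponding MOI difference is bounded in $\S^\beta$ by a constant times $\|((f-f_m)u^p)^{(p)}\|_\infty$, which by the Leibniz expansion is controlled by $\|f-f_m\|_{\mathfrak{B}_n}$ and therefore vanishes. Multiplying by $\tilde V_{j_p,n}$ preserves this convergence, so each right-hand side summand converges in operator norm.

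The delicate point is the boundary case $j_p=n$ (and especially the extremal subcase $p=1,\,j_1=n$, in which the input $\tilde V_{0,n}$ itself lies in $\S^1$), where the natural output exponent $n/j_p$ equals $1$ and Theorem \ref{thm:PSS Higher Order} is inapplicable with $\beta=1$. This is circumvented by allowing $\alpha>1$ arbitrary in the argument above, which only gives convergence of the corresponding summand in $\S^{1+\epsilon}$ for every $\epsilon>0$, but this still forces operator-norm convergence. Matching the operator-norm limits of both sides of the identity for $f_m$ yields the formula for every $f\in\mathfrak{B}_n$, and the inclusion $\mW_n\subseteq\mathfrak{B}_n$ noted after \eqref{mbn} gives the last assertion.
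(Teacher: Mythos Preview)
Your proof is correct and follows essentially the same approach as the paper: establish the identity for $f\in\mathfrak{b}_n$ via Theorem~\ref{thm:adding resolvents}(ii), then approximate $f\in\mathfrak{B}_n$ using Lemma~\ref{lem:density}, controlling the left-hand side by Corollary~\ref{cor:nice def of MOI} and the right-hand side by Theorem~\ref{thm:PSS Higher Order} after inflating the Schatten indices to avoid the endpoint $1$. The paper's version of your $\alpha>1$ trick is simply to double the natural indices (i.e.\ apply Theorem~\ref{thm:PSS Higher Order} in $\S^{2\alpha_m}$), and it treats the $p=0$ term implicitly, but the argument is otherwise identical.
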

\begin{proof}
Theorem \ref{thm:adding resolvents}\ref{trace class} (which is applicable to $f\in\mathfrak{b}_n$ by the generalized Leibniz rule) gives the desired statement that $T^{H_0,\ldots,H_n}_{f^{[n]}}(V_1,\ldots,V_n)\in\S^1$ for $f\in\mathfrak{b}_n$. 

Let $f\in\mathfrak{B}_n$. By Lemma \ref{lem:density} we can choose $f_k\in\mathfrak{b}_n$ for all $k\in\N$ such that
\begin{align}\label{eq:density}
	\|\widehat{f_k^{(n)}}-\widehat{f^{(n)}}\|_1\to0\quad\text{and}\quad\|(f_ku^p)^{(p)}-(fu^p)^{(p)}\|_\infty\to0.
\end{align}
Theorem \ref{thm:adding resolvents}\ref{weights} in particular gives
\begin{align}\label{weights with hats}
	T^{H_0,\ldots,H_n}_{f_k^{[n]}}(V_1,\ldots,V_n)
=&\sum_{p=0}^n\sum_{0<j_1<\cdots<j_{p}\leq n}\!(-1)^{n-p}\, T^{H_0,H_{j_1},\ldots,H_{j_p}}_{(f_ku^p)^{[p]}}
(\tilde{V}_{0,j_1},\ldots,\tilde{V}_{j_{p-1},j_p})\,
\tilde{V}_{j_{p},n}.
\end{align}	
	The $L^1$-convergence in \eqref{eq:density} implies that the left-hand side converges (in operator norm) to $T^{H_0,\ldots,H_n}_{f^{[n]}}(V_1,\ldots,V_n)$ by \eqref{fourierbound}. Moreover, we find that $\tilde V_{j_{m-1},j_m}\in \S^{\alpha_m}$, where $\alpha_m:=n/(j_m-j_{m-1})\in(1,\infty)$ for $m=2,\ldots,p$, and $\tilde V_{0,j_1}\in\S^{\alpha_1}$, $\tilde V_{j_p,n}\in\S^{\alpha_{p+1}}$, where $\alpha_1:=n/j_1\in[1,\infty)$, $\alpha_{p+1}=n/(n-j_p)\in(1,\infty]$. On the strength of Theorem \ref{thm:PSS Higher Order} applied to $\S^{2\alpha_m}$, the supnorm-convergence in \eqref{eq:density} implies that the right-hand side of \eqref{weights with hats} converges to the right-hand side of \ref{weights} in the operator norm (since convergence in Schatten norms implies uniform convergence). By uniqueness of limits in $\mB(\H)$, we conclude \ref{weights}.
\end{proof}


\section{Existence of the spectral shift function}
\label{sec3}

In this section we establish the main result of Chapter \ref{ch:Spectral Shift for Relative Schatten Perturbations}. 

\begin{thm}
\label{rsmain}
Let $n\in\N$, let $H$ be a self-adjoint operator in $\H$, and let $V\in\mB(\H)_{\sa}$ be such that $V(H-i)^{-1}\in\S^n$. Then, there exists $c_n>0$ and a real-valued function $\eta_n$
such that
\begin{align}\label{eta estimate}
\int_\R |\eta_n(x)|\,\frac{dx}{(1+|x|)^{n+\epsilon}}\leq c_n\,(1+\epsilon^{-1})(1+\norm{V})\nrm{V(H-i)^{-1}}{n}^n\quad\text{for all } \epsilon >0
\end{align} and
\begin{align}
\label{tff}
\Tr(R_{n,H,f}(V))=\int_\R f^{(n)}(x)\eta_n(x)\,dx\,
\end{align}
for every $f\in\mW_n$. The locally integrable function $\eta_n$ is determined by \eqref{tff} uniquely up to a polynomial summand of degree at most $n-1$.
\end{thm}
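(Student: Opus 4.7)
The plan is to follow the outline sketched in the introduction: first reduce the trace formula to a Schatten-class problem via the change-of-variable identity in Theorem \ref{thm:adding resolvents 2}, then obtain a representing measure, and finally upgrade the measure to a locally integrable function with the stated weighted bound and uniqueness.

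\textbf{Step 1 (Reduction to Schatten class).} Starting from $R_{n,H,f}(V) = T^{H+V,H,\ldots,H}_{f^{[n]}}(V,\ldots,V)$ given by Theorem \ref{rm}, I would apply Theorem \ref{thm:adding resolvents 2} with $H_0 = H+V$ and $H_1 = \cdots = H_n = H$, $V_1=\cdots = V_n = V$. Since $\tilde V = V(H-i)^{-1} \in \S^n$ and $V(H+V-i)^{-1}\in\S^n$ by the second resolvent identity (with control $\|\cdot\|_n \le (1+\|V\|)\|\tilde V\|_n$), this rewrites $R_{n,H,f}(V)$ as a sum of multiple operator integrals involving the weighted symbols $(fu^p)^{[p]}$ evaluated at Schatten-class arguments. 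Each summand is then accessible by Corollary \ref{cor:nice def of MOI} and Theorem \ref{thm:PSS Higher Order} applied on $\S^{\alpha_m}$ with $\sum 1/\alpha_m = 1$.

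\textbf{Step 2 (Representing measure for the trace functional).} With the decomposition in hand, I would first exhibit a Borel measure $\mu_n$ such that $\Tr(R_{n,H,f}(V)) = \int_\R f^{(n)} d\mu_n$ for every $f \in \mW_n$, with $(x-i)^{-n-\epsilon}d\mu_n$ finite. The strategy is to view the linear functional $f \mapsto \Tr(R_{n,H,f}(V))$ on $\mW_n$, use the bounds from Step 1 to show that $f^{(n)} \mapsto \Tr(R_{n,H,f}(V))$ extends to a continuous functional on a suitable weighted space of continuous functions, and invoke the Riesz representation theorem. The weight $(1+|x|)^{-n-\epsilon}$ arises naturally from the fact that each operator integral contributes a factor of $u^p$ that must be absorbed by the weight on $f^{(n)}$. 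The $(1+\epsilon^{-1})$ factor in the bound reflects integration of $(1+|x|)^{-1-\epsilon}$.

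\textbf{Step 3 (Absolute continuity and real-valuedness).} The delicate step is to promote $\mu_n$ to an absolutely continuous measure with a real-valued density $\eta_n$. This is where I expect the main obstacle to lie. To handle it, I would first treat the case of finite-rank $V$ (where the spectral theorem reduces the computation to an explicit finite sum over eigenvalues of $H+V$ and $H$, as in \eqref{eq:SA divdiff}) and write out $\eta_n$ via Krein's higher-order spectral shift function for Schatten-class perturbations from \cite{PSS13}. For general $V$, I would approximate $V$ by finite-rank operators $V_k$ with $\tilde V_k \to \tilde V$ in $\S^n$ and use the strong continuity of the decomposed multiple operator integrals, together with an approximation argument for the Taylor remainder, to pass to the limit. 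Real-valuedness follows because finite-rank $\eta_n$ is real-valued (its existence via Krein is by a real spectral shift), and this property is preserved under the $L^1_\mathrm{loc}$ limit.

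\textbf{Step 4 (Uniqueness).} If $\eta_n,\tilde\eta_n$ are two candidates, then $\int_\R f^{(n)}(\eta_n - \tilde\eta_n)\,dx = 0$ for all $f\in \mW_n$. Since $\mW_n \supseteq C_\text{c}^{n+1}$ by Proposition \ref{inclusions}\ref{inclusionsi}, the set $\{f^{(n)} : f \in \mW_n\}$ contains $C_\text{c}^1$, hence is dense in $C_\text{c}(\R)$, so $\eta_n - \tilde\eta_n$ is a distribution annihilated by all compactly supported test functions of the form $g = f^{(n)}$. A standard argument shows that a locally integrable function whose $n$-th distributional antiderivative vanishes must be a polynomial of degree at most $n-1$, giving uniqueness. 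The main obstacle throughout is controlling the lower-order terms produced by Theorem \ref{thm:adding resolvents 2}: their integrands $(fu^p)^{[p]}$ with $p<n$ do not directly pair against $f^{(n)}$, so I would need a subtle distributional identity (anticipated as Proposition \ref{prop:SSF locally} in the excerpt) relating these terms back to $\int f^{(n)} \eta_n\,dx$ via integration by parts and the polynomial ambiguity.
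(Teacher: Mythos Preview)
Your Steps 1, 2, and 4 are essentially the paper's Proposition \ref{prop:SSM with growth} and its uniqueness argument; these are fine. The gap is in Step 3, and it is the heart of the matter.

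Invoking \cite{PSS13} for finite-rank $V_k$ does give a real-valued $\eta_{n,V_k}\in L^1$, but only with the bound $\|\eta_{n,V_k}\|_1\le c_n\|V_k\|_n^n$. Under the approximation of Lemma \ref{lem:approximating V by V_k} you only control $\|V_k(H-i)^{-1}\|_n$ and $\|V_k\|$, not $\|V_k\|_n$, so this bound is useless for showing $(\eta_{n,V_k})_k$ is Cauchy in any weighted $L^1$-space. The same obstruction blocks the ``strong continuity of the decomposed multiple operator integrals'' you invoke: continuity of the \emph{operators} does not yield convergence of the spectral shift \emph{functions}. (Your aside about an explicit eigenvalue sum as in \eqref{eq:SA divdiff} is also off: that formula needs compact resolvent, which finite-rank $V$ does not provide.)

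The paper's mechanism for absolute continuity is different and is the step you are missing. One writes the remainder at order $n-1$ rather than $n$: using \eqref{birem}, $R_{n,H,f}(V)=T^{H,H+V,H,\ldots,H}_{f^{[n-1]}}(V,\ldots,V)-T^{H,\ldots,H}_{f^{[n-1]}}(V,\ldots,V)$, and then applies Theorem \ref{thm:adding resolvents} to each piece (Lemma \ref{lem:pth part of remainder}). Every resulting term now pairs against $(fu^p)^{(p)}$ with $p\le n-1$, so Riesz gives a finite measure and \emph{one} integration by parts converts it into a bounded (hence $L^1_{\mathrm{loc}}$) density $\breve\eta_p$ paired against $(fu^p)^{(p+1)}$ (Propositions \ref{prop:SSF locally V_k} and \ref{prop:SSF locally}). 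The crucial gain is that both the existence bounds and the \emph{difference} bounds in Lemma \ref{prop:|Tr(R)|} are expressed through $\|\tilde V_k-\tilde W\|_n$, which is exactly what the approximation provides. Combining this $L^1_{\mathrm{loc}}$ density with the measure $\mu_n$ from Step 2 forces $d\mu_n(x)=\grave\eta_n(x)\,dx+p_{n-1}(x)\,dx$, whence the weighted bound for $\eta_n$. Real-valuedness is not inherited from \cite{PSS13} either: the paper obtains a complex $\acute\eta_n$, sets $\eta_n=\Re(\acute\eta_n)$, and uses that $\Tr(R_{n,H,f}(V))\in\R$ for real $f$ to drop the imaginary part in \eqref{tff}.
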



We start by outlining major steps and ideas of the proof of Theorem \ref{rsmain}.

In Proposition \ref{prop:SSM with growth} we establish a weaker version of \eqref{tff} with  measure $d\mu_n$ on the right-hand side of \eqref{tff} in place of the desired absolutely continuous measure $\eta_n(x)\,dx$. The measure $\mu_n$, which we call the spectral shift measure, satisfies the bound \eqref{eta estimate}. In Proposition \ref{prop:SSF locally} we establish another weaker version of \eqref{tff} for compactly supported $f$, where on the left-hand side we have a certain component of the remainder and on the right-hand side instead of $f$ we have its product with some complex weight. By combining advantages of the results of Propositions \ref{prop:SSM with growth} and \ref{prop:SSF locally} we derive the trace formula \eqref{tff}.


One of our main tools is multiple operator integration theory developed for Schatten class perturbations. This theory is not \textit{directly} applicable in our setting, however, because our perturbations are not compact.
To bridge the gap between the existing theory and our setting we combine the powerful results of Chapter \ref{ch:MOI} with multistage approximation arguments. In particular, the proof of
Proposition \ref{prop:SSF locally} requires two novel techniques. The first one is a new expression for the remainder $R_{n,H,f}(V)$ in terms of handy components that are continuous in $V$ in a very strong sense. The second one is an approximation argument that allows replacing relative Schatten $V$ by finite rank $V_k$ and strengthens convergence arguments present in the literature.

\subsection{Existence of the spectral shift measure}
\label{sct:Existence of the spectral shift measure}

The following result is our first major step in the proof of the representation \eqref{tff}.

\begin{prop}\label{prop:SSM with growth}
Let $n\in\N$, let $H$ be a self-adjoint operator in $\H$, and let $V\in\mB(\H)_{\sa}$ be such that $V(H-i)^{-1}\in\S^n$.
Then, there exists a Borel measure $\mu_n$ such that
\begin{align}\label{mu tilde}
	\Tr(R_{n,H,f}(V))=\int_\R f^{(n)}\,d\mu_n\,
\end{align}
for every $f\in\mW_n$ and
\begin{align}
\label{munfla}
d\mu_n(x)=u^n(x)\,d\nu_n(x)+\xi_n(x)\,dx,
\end{align}
where $\nu_n$ is a finite measure satisfying
\begin{align}\label{eq:nu bound}
\|\nu_n\|\leq c_n(1+\norm{V})\nrm{V(H-i)^{-1}}{n}^n,
\end{align}
and $\xi_n$ is a continuous function
satisfying
\begin{align}\label{eq:xi bound}
	|\xi_n(x)|\leq c_n(1+\norm{V})\nrm{V(H-i)^{-1}}{n}^n(1+|x|)^{n-1},\quad x\in\R,
\end{align}
for some constant $c_n>0$ independent from $V$, $H$, and $x$.
If $\tilde\mu_n$ is another locally finite Borel measure such that \eqref{mu tilde} holds for all $f\in C^{n+1}_\text{c}$, then $d\tilde{\mu}_n(x)=d\mu_n(x)+p_{n-1}(x)\,dx$, where $p_{n-1}$ is a polynomial of degree at most $n-1$.
\end{prop}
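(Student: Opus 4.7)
The argument starts by combining Theorem \ref{rm} and Theorem \ref{thm:adding resolvents 2}. Namely, I would write $R_{n,H,f}(V) = T^{H+V,H,\ldots,H}_{f^{[n]}}(V,\ldots,V)$ and then use the change of variables in Theorem \ref{thm:adding resolvents 2} (which applies since $\tilde V := V(H-i)^{-1} \in \S^n$ and $\mW_n \subset \mathfrak{B}_n$) to rewrite this as
\[
R_{n,H,f}(V) = \sum_{p=0}^n (-1)^{n-p} \sum_{0<j_1<\cdots<j_p\le n} S_{p,j},
\]
where $S_{p,j} := T^{H+V,H,\ldots,H}_{(fu^p)^{[p]}}(\tilde V_{0,j_1},\ldots,\tilde V_{j_{p-1},j_p})\,\tilde V_{j_p,n}$ is trace class by H\"older's inequality (the Schatten exponents of the $p+1$ factors add to $n/n=1$).

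Next, one computes $\Tr(S_{p,j})$ via the integral representation of the MOI (Corollary \ref{cor:nice def of MOI}), cyclicity of the trace, and Fubini to obtain
\[
\Tr(S_{p,j}) = \int_\R \widehat{(fu^p)^{(p)}}(t)\,\Phi_{p,j}(t)\,dt,
\]
where $\Phi_{p,j}(t):=\int_{\Delta_p}\Tr\bigl[e^{its_0(H+V)}\tilde V_{0,j_1}e^{its_1 H}\cdots e^{its_p H}\tilde V_{j_p,n}\bigr]d\sigma(s)$. Applying the spectral theorem to each $H+V,H,\ldots,H$ and pushing forward the joint spectral measure along the map $(s,\lambda)\mapsto s_0\lambda_0+\cdots+s_p\lambda_p$, one identifies $\Phi_{p,j}$ as the Fourier transform of a finite complex Borel measure $\lambda_{p,j}$ on $\R$ with $\|\lambda_{p,j}\| \le c_n(1+\|V\|)\|\tilde V\|_n^n$ (the $(1+\|V\|)$ factor comes from controlling $(H+V-i)^{-1}$ via the second resolvent identity, as in Theorem \ref{thm:Schatten estimate}). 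Parseval's identity then gives $\Tr(S_{p,j}) = \int (fu^p)^{(p)}\,d\lambda_{p,j}$, and summation yields $\Tr(R_{n,H,f}(V)) = \sum_{p} (-1)^{n-p}\int (fu^p)^{(p)}\,d\lambda_p$ with $\lambda_p:=\sum_j \lambda_{p,j}$ finite.

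To reach the desired form $\int f^{(n)}\,d\mu_n$, I would expand each $(fu^p)^{(p)} = \sum_{k=0}^p \binom{p}{k}\tfrac{p!}{k!}\, f^{(k)} u^k$ by Leibniz, and then integrate by parts $(n-k)$ times in each term against $u^k\,d\lambda_p$. The single contribution with $p=k=n$ becomes exactly $\int f^{(n)} u^n\,d\lambda_n$, producing the dominant component $u^n\,d\nu_n$ of $d\mu_n$ with $\nu_n := \lambda_n$, whose finiteness and bound \eqref{eq:nu bound} are inherited from the construction above. All remaining contributions (those with $p<n$, or with $k<p=n$) produce, after $(n-k)$ integrations by parts, integrals of the form $\int f^{(n)}(x)\,\xi^{(p,j,k)}_n(x)\,dx$ in which the densities $\xi^{(p,j,k)}_n$ come from distributional derivatives of $u^k\lambda_p$ and are controlled in absolute value by $c_n(1+\|V\|)\|\tilde V\|_n^n(1+|x|)^{n-1}$, since successive differentiations of $u^k$ produce polynomial factors of degree at most $k\le n-1$ integrated against the finite measure $\lambda_p$. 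Summing all such contributions delivers the continuous function $\xi_n$ and the bound \eqref{eq:xi bound}.

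The main technical obstacle is making the integration by parts rigorous: since each $\lambda_p$ is only a finite measure, not absolutely continuous, the distributional derivatives $(u^k \lambda_p)^{(n-k)}$ are a priori genuine distributions, not measures. One must therefore show that after assembling all the Leibniz-generated summands (including those with $p<n$) the singular distributional pieces cancel, leaving only the finite measure piece $u^n\,d\nu_n$ and the absolutely continuous piece $\xi_n(x)\,dx$; this cancellation reflects the algebraic structure inherited from Theorem \ref{thm:adding resolvents 2} together with the specific weights $u^p$. Finally, the uniqueness of $\mu_n$ up to a polynomial summand of degree at most $n-1$ follows since any locally finite Borel measure whose pairing with $f^{(n)}$ vanishes for all $f\in C^{n+1}_\text{c}$ has vanishing distributional $n$-th derivative, hence admits a polynomial density of degree $<n$.
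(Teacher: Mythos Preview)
Your overall strategy---decompose the remainder via Theorem~\ref{thm:adding resolvents 2}, produce finite measures so that $\Tr(R_{n,H,f}(V))=\sum_p\int(fu^p)^{(p)}\,d\breve\mu_p$, apply Leibniz, then integrate by parts up to $f^{(n)}$---is the paper's, and your uniqueness argument is correct. But two of your steps have genuine gaps.

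\emph{Constructing the measures.} Your proposal to obtain $\lambda_{p,j}$ by ``pushing forward the joint spectral measure along $(s,\lambda)\mapsto s\cdot\lambda$'' does not work: for $p\ge 1$ the putative object $\Tr[dE_{H+V}(\lambda_0)\,\tilde V_{0,j_1}\,dE_H(\lambda_1)\cdots]$ is only a bimeasure and need not extend to a finite complex Borel measure on $\R^{p+1}$; circumventing exactly this obstruction is the point of multiple operator integration. From Corollary~\ref{cor:nice def of MOI} and Fubini you only get $|\Tr(S_{p,j})|\le C\,\|\widehat{(fu^p)^{(p)}}\|_1$, which yields a distribution, not a measure. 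The paper instead invokes the nontrivial Potapov--Skripka--Sukochev sup-norm bound (Theorem~\ref{thm:PSS Higher Order}, packaged as Lemma~\ref{lem:Hahn Banach Borel measures}) to get $|\Tr(S_{p,j})|\le C\,\|(fu^p)^{(p)}\|_\infty$, after which Riesz--Markov produces the finite $\breve\mu_p$. Since that lemma requires Schatten exponents in $(1,\infty)$, the case $n=1$ is treated separately via the fundamental theorem of calculus; your proposal does not address this.

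\emph{Integration by parts.} You have the direction reversed. Passing from $\int f^{(k)}u^k\,d\grave\mu_k$ to $\int f^{(n)}\xi\,dx$ requires $(n-k)$ successive \emph{primitives} of $u^k\,d\grave\mu_k$, not derivatives. The first primitive of a finite measure is its bounded distribution function $x\mapsto\grave\mu_k((-\infty,x))$; further primitives are continuous with polynomial growth of order at most $n-1$. Hence no ``singular distributional pieces'' ever appear and no cancellation is needed---your stated ``main technical obstacle'' is a phantom. The actual technical point, which you omit, is that the boundary terms vanish: this needs $f^{(k)}u^k\in C_0$ for $k\le n-1$ and $f^{(k)}u^{k-1}\in L^1$ for $1\le k\le n$, both built into Definition~\ref{fracw} and used explicitly in the paper's argument.
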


To prove Proposition \ref{prop:SSM with growth} we need the estimate stated below.

\begin{lem}\label{lem:Hahn Banach Borel measures}
Let $k\in\N$, let $H_0,\ldots,H_k$ be self-adjoint operators in $\H$, let  $\alpha_1\ldots,\alpha_k\in(1,\infty)$ be such that $1=\tfrac{1}{\alpha_1}+\ldots+\tfrac{1}{\alpha_k}$.
Then, there exists $c_\alpha:=c_{\alpha_1,\dots,\alpha_k}>0$ such that for all $B_j\in\S^{\alpha_j}$, $j=1,\dots,k$, we have
\begin{align*}
|\Tr(T^{H_k,H_1,\ldots,H_k}_{f^{[k]}}(B_1,\ldots,B_k))|&\leq c_\alpha\supnorm{f^{(k)}}\nrm{B_1}{\alpha_1}\cdots\nrm{B_k}{\alpha_k}\qquad ( f\in C^k, \widehat{f^{(k)}}\in L^1)
\end{align*}
and
\begin{align*}
|\Tr(B_1T^{H_1,\ldots,H_{k}}_{f^{[k-1]}}(B_2,\ldots,B_k))|&\leq c_\alpha\supnorm{f^{(k-1)}}\nrm{B_1}{\alpha_1}\cdots\nrm{B_k}{\alpha_k}\quad ( f\in C^{k-1},f^{(k-1)}\in C_\textnormal{b}).
\end{align*}
Consequently, there exist unique (complex) Borel measures $\mu_1,\mu_2$ with total variation bounded by $c_\alpha\nrm{B_1}{\alpha_1}\cdots\nrm{B_k}{\alpha_k}$ such that
\begin{align*}
\Tr(T^{H_k,H_1,\ldots,H_k}_{f^{[k]}}(B_1,\ldots,B_k))&=\int_\R f^{(k)} \,d\mu_1\qquad (f\in C^k,\widehat{f^{(k)}}\in L^1)
\end{align*}
and
\begin{align*}
\Tr(B_1 T^{H_1,\ldots,H_{k}}_{f^{[k-1]}}(B_2,\ldots,B_k))&=\int_\R f^{(k-1)}\,d\mu_2\,\qquad (f\in C^{k-1},f^{(k-1)}\in C_0).
\end{align*}
\end{lem}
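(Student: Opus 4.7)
The plan is to establish the two norm bounds first and then derive the measure representations from them by a standard functional-analytic argument, so most of the work lies in the bounds. The key observation is that once a bound of the form $|\Lambda(f)|\leq c\,\supnorm{f^{(k)}}$ is in hand for the linear functional $\Lambda(f):=\Tr(T^{H_k,H_1,\ldots,H_k}_{f^{[k]}}(B_1,\ldots,B_k))$ (and similarly for the second functional), the fact that the multiple operator integral vanishes on polynomials of degree strictly less than $k$ shows that $\Lambda$ descends to a bounded linear functional on $\{f^{(k)}:f\in C^k,\,\widehat{f^{(k)}}\in L^1\}\subseteq C_0(\R)$, equipped with the sup norm. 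Hahn--Banach extends this to $C_0(\R)$ with the same norm, and Riesz--Markov produces a unique complex Borel measure of total variation at most $c_\alpha\,\nrm{B_1}{\alpha_1}\cdots\nrm{B_k}{\alpha_k}$.

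The second norm bound is essentially immediate. Since $\alpha_1\in(1,\infty)$, its Hölder conjugate $\alpha_1'\in(1,\infty)$ satisfies $\tfrac{1}{\alpha_1'}=\sum_{j=2}^k\tfrac{1}{\alpha_j}$. Applying Theorem \ref{thm:PSS Higher Order} to the $(k-1)$-fold multiple operator integral $T^{H_1,\ldots,H_k}_{f^{[k-1]}}(B_2,\ldots,B_k)$ gives the Schatten-$\alpha_1'$ bound
\[\bnrm{T^{H_1,\ldots,H_k}_{f^{[k-1]}}(B_2,\ldots,B_k)}{\alpha_1'}\le c_\alpha\supnorm{f^{(k-1)}}\nrm{B_2}{\alpha_2}\cdots\nrm{B_k}{\alpha_k},\]
and Hölder's inequality $|\Tr(B_1X)|\le\nrm{B_1}{\alpha_1}\nrm{X}{\alpha_1'}$ yields the claim.

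The first norm bound is the more delicate point, since Theorem \ref{thm:PSS Higher Order} is formulated only for output Schatten classes with exponent strictly greater than $1$, whereas here the Hölder relation $\sum 1/\alpha_j=1$ forces the output exponent to be $1$. To bridge this gap I would exploit the fact that $H_k$ appears in both the first and the last index slot. Using the integral representation from Lemma \ref{lem:divided diff} together with cyclicity of the trace, the two $H_k$-exponentials combine into a single factor $e^{it(s_0+s_k)H_k}$, and a change of variable in the simplex $\Delta_k$ (marginalising over $s_0$) converts the integral into one over $\Delta_{k-1}$ weighted by $u_k:=s_0+s_k$. Combining this weight with the coalescence identity $f^{[k]}(x_1,\ldots,x_{k-1},y,y)=\partial_yf^{[k-1]}(x_1,\ldots,x_{k-1},y)$, the resulting expression can be recognised as $\Tr(B_1\,T^{H_1,\ldots,H_k}_{\phi}(B_2,\ldots,B_k))$ for an auxiliary multiplier $\phi$ whose relevant derivatives are controlled by $\supnorm{f^{(k)}}$. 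The second bound (or Theorem \ref{thm:PSS Higher Order} with $\alpha_1'$ as output) then closes the argument.

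The main obstacle is precisely this last step: carrying out the cyclicity/coalescence reduction carefully so that the resulting auxiliary multiplier $\phi$ is of a form to which a multiplier-type Schatten estimate applies with the sup norm $\supnorm{f^{(k)}}$ (and not just $\bnrm{\widehat{f^{(k)}}}{1}$, which is what a direct application of Corollary \ref{cor:nice def of MOI} would yield). Once this is done, the measure representations and the bounds on their total variations follow mechanically from the Hahn--Banach / Riesz--Markov scheme described above.
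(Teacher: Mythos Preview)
Your treatment of the second bound (Theorem~\ref{thm:PSS Higher Order} on the $(k-1)$-fold integral followed by H\"older) and of the measure representations (descent to $f^{(k)}$, Hahn--Banach, Riesz--Markov) is correct and is exactly what the paper does.

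For the first bound, your cyclicity/coalescence reduction is valid and yields
\[
\Tr\big(T^{H_k,H_1,\ldots,H_k}_{f^{[k]}}(B_1,\ldots,B_k)\big)
=\Tr\big(B_1\,T^{H_1,\ldots,H_k}_{\phi}(B_2,\ldots,B_k)\big),
\qquad \phi(x_1,\ldots,x_k)=\partial_{x_k}f^{[k-1]}(x_1,\ldots,x_k),
\]
but the obstacle you correctly identify is fatal within the tools available in the paper: $\phi$ is \emph{not} of the form $g^{[k-1]}$ for any one-variable $g$, so Theorem~\ref{thm:PSS Higher Order} does not apply to $T_\phi$, and Corollary~\ref{cor:nice def of MOI} only gives control by $\|\widehat{f^{(k)}}\|_1$, not $\|f^{(k)}\|_\infty$. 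Producing an $\S^{\alpha_1'}$-bound on $T_\phi$ governed by $\|f^{(k)}\|_\infty$ is precisely the substance of \cite[Theorem~5.3 and Remark~5.4]{PSS13}, whose proof requires UMD-space and transference techniques well beyond what you have invoked. The paper does not attempt this reduction at all: it simply cites that result for equal $H_j$ and appeals to \cite[Theorem~4.3.10]{ST19} (together with H\"older) to pass to distinct $H_j$. So your sketch locates the right identity but cannot close without importing the external theorem; the paper sidesteps the issue by citation.
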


\begin{proof}
The first assertion of the lemma, at least when $H_0=\ldots=H_k$, follows from \cite[Theorem 5.3 and Remark 5.4]{PSS13}. The extension to distinct $H_0,\ldots,H_k$ follows from \cite[Theorem 4.3.10]{ST19} and H\"older's inequality. The second assertion of the lemma is subsequently obtained by the Riesz--Markov representation theorem for a bounded linear functional on the space $C_0(\R)$.
\end{proof}

\begin{proof}[Proof of Proposition \ref{prop:SSM with growth}]
Let $n\geq2$.
Using \eqref{remmoi} and Theorem \ref{thm:adding resolvents 2}, we obtain
\begin{align}
\label{eq:new expansion of remainder}
\nonumber
R_{n,H,f}(V)=&T^{H,H+V,H,\ldots,H}_{f^{[n]}}(V,\ldots,V)\\
	=&\sum_{p=0}^n\sum_{\substack{j_1,\ldots,j_{p}\geq1,j_{p+1}\geq0\\j_1+\ldots+j_{p+1}=n}}(-1)^{n-p}\,
T^{H,H_{j_1},H,\ldots,H}_{(fu^p)^{[p]}}(\tilde{V}^{j_1},\ldots,\tilde{V}^{j_{p}})\tilde{V}^{j_{p+1}},
\end{align}
where $H_{1}=H+V$ and $H_{j_1}=H$ for $j_1\neq1$, and in which the first factor of $\tilde V$ in the first input of the multilinear
operator integral should be interpreted as $V(H+V-i)^{-1}$.
By the second resolvent identity,
\begin{align*}
\|V(H+V-i)^{-1}\|_n\leq (1+\norm{V})\|V(H-i)^{-1}\|_n.
\end{align*}
By the definition of $\mW_n$ (see Definition \ref{fracw}), 
we obtain $\widehat{(fu^p)^{(p)}}\in L^1(\R)$
for every $f\in\mW_n$, $p=0,\dots,n$. Hence, by Lemma \ref{lem:Hahn Banach Borel measures}
applied to each term of \eqref{eq:new expansion of remainder},
there exist unique Borel measures $\breve\mu_0,\ldots,\breve\mu_n$ such that
\begin{align}
\label{mupbound}
\|\breve\mu_p\|\le C_n\,(1+\norm{V})\,\|V(H-i)^{-1}\|_n^n
\end{align}
and
\begin{align}\label{eq:Tr(R) in measures 1}
\Tr( R_{n,H,f}(V))=&\sum_{p=0}^n\int (fu^p)^{(p)}\,d\breve\mu_p
\end{align}
for every $f\in\mW_n$, $n\ge 2$.

Let $n=1$. Denote $H_t=H+tV$.
By Theorem \ref{dm}, continuity of the transformation $t\mapsto T^{H_t,H_t}_{f^{[1]}}(V)$
(see \cite[Proposition 3.3.9]{ST19}), and the fundamental theorem of calculus,
\begin{align*}
R_{1,H,f}(V)
=f(H+V)-f(H)=\int_0^1 T^{H_t,H_t}_{f^{[1]}}(V)\,dt
\end{align*}
for $f\in\mW_1$.
By \eqref{weight1} of Theorem \ref{thm:adding resolvents}\ref{weight} applied to $T^{H_t,H_t}_{f^{[1]}}(V)$  we obtain
\begin{align}\label{eq:no trace yet}
R_{1,H,f}(V)
=\int_0^1 (T^{H_t,H_t}_{(fu)^{[1]}}(V(H_t-i)^{-1})-f(H_t)V(H_t-i)^{-1})\,dt.
\end{align}
Noticing that
\begin{align*}
\sup_{t\in [0,1]}\|V(H_t-i)^{-1}\|_1\le(1+\|V\|)\|V(H-i)^{-1}\|_1,
\end{align*}
using the property of the double operator integral $\Tr(T^{H,H}_{g^{[1]}}(V))=\Tr(g'(H)V)$,
and applying H\"{o}lder's inequality and the Riesz--Markov representation theorem
completes the proof of \eqref{eq:Tr(R) in measures 1} for $n=1$.

Let $n\in\N$. Applying a higher-order differentiation product rule on the right-hand side of \eqref{eq:Tr(R) in measures 1} gives
\begin{align}
\Tr( R_{n,H,f}(V))&=\sum_{p=0}^n\sum_{k=0}^p\vect{p}{k}\frac{p!}{k!}\int f^{(k)}u^k\,d\breve\mu_p\nonumber\\
	&=\sum_{k=0}^{n-1}\int f^{(k)}u^k\,d\grave{\mu}_k+\int f^{(n)}u^n\,d\nu_n,\label{Tr(R) in integrals}
\end{align}
for some Borel measures $\grave{\mu}_0,\ldots\grave\mu_{n-1},\nu_n$ satisfying
\begin{align}
\label{normnun} 
\|\grave\mu_0\|,\dots,\|\grave\mu_{n-1}\|,\|\nu_n\|\le \tilde C_n\,(1+\norm{V})\,\|V(H-i)^{-1}\|_n^n.
\end{align}
Integrating by parts in \eqref{Tr(R) in integrals} and applying
\begin{align}
\label{partbounds}
\lim\limits_{x\rightarrow\pm\infty} f^{(k)}(x)u^k(x)=0,\quad k=0,\dots,n-1,
\end{align}
yields
\begin{align*}
\Tr(R_{n,H,f}(V))=-\sum_{k=0}^{n-1}\int_{-\infty}^\infty (f^{(k+1)}u^k+kf^{(k)}u^{k-1})(x)\,\grave{\mu}_k((-\infty,x))\,dx+\int f^{(n)}u^n\,d\nu_n.
\end{align*}
Since $$f^{(k)}u^{k-1}\in L^1(\R),\quad k=1,\dots,n,$$ we rearrange the terms above to obtain
\begin{align}
\label{bp1}
\Tr(R_{n,H,f}(V))=\sum_{k=1}^{n}\int f^{(k)}(x)u^{k-1}(x)\,\tilde\xi_k(x)\,dx+\int f^{(n)}u^n\,d\nu_n,
\end{align}
where $\tilde\xi_k$ are continuous functions defined by
\begin{align*}
&\tilde{\xi}_k(x)=-\grave\mu_{k-1}((-\infty,x))-k\,\grave\mu_k((-\infty,x)),\quad k=1,\dots,n-1,\\
&\tilde\xi_n(x)=-\grave\mu_{n-1}((-\infty,x)),
\end{align*}
so that
\begin{align}\label{eq:xi_k bound}
	\big\|\tilde\xi_k\big\|_\infty\leq c_{n,k}(1+\norm{V})\nrm{V(H-i)^{-1}}{n}^n,\quad k=1,\dots,n.
\end{align}
By a repeated partial integration in \eqref{bp1} and application of \eqref{partbounds}, we obtain
\begin{align*}
\Tr(R_{n,H,f}(V))=\int_\R f^{(n)}\,d\mu_n\quad (f\in\mW_n)
\end{align*}
with
\begin{align}
\label{mugravexi}
d\mu_n(x)=u^n(x)\,d\nu_n(x)+\xi_n(x)\,dx,
\end{align}
where
\begin{align}\label{eq:def xi}	
\xi_n(s_0):=\sum_{k=1}^n(-1)^{n-k}\int_0^{s_0}ds_1\cdots\int_0^{s_{n-k-1}}\,
u^{k-1}(s_{n-k})\,\tilde\xi_k(s_{n-k})\,ds_{n-k}.
\end{align}
The function $\xi_n$ given by \eqref{eq:def xi} is continuous. To confirm \eqref{eq:xi bound} we note that,
for all $m\in\N$,
\begin{align}
\label{add u^-1}
\sup_{x\in\R}\left|u^{-m}(x)\int_0^xg(t)\,dt\right|\leq \sup_{x\in\R}\left(\left|\frac{x}{u(x)}\,\right|
\sup_{|t|\leq|x|}|u^{1-m}(x)g(t)|\right)\le\|u^{1-m}g\|_\infty.
\end{align}
By applying \eqref{add u^-1} $(n-k)$-times in \eqref{eq:def xi} and using the bound \eqref{eq:xi_k bound}, we obtain
\begin{align}\label{xiprelbound}
|\xi_n(x)|\leq c_n(1+\norm{V})\nrm{V(H-i)^{-1}}{n}^n(1+|x|)^{n-1},\quad x\in\R.
\end{align}
We have thereby proven the first part of the proposition.

To prove the second part of the proposition, we let $\tilde{\mu}_n$ be a locally finite measure such that \eqref{mu tilde} holds for all $f\in C_\textnormal{c}^{n+1}$
and denote
$$\rho_n:=\mu_n-\tilde\mu_n.$$
Then,
\begin{align}\label{eq:measure nth derivative}
	\int f^{(n)}\,d\rho_n=0\qquad(f\in C_\textnormal{c}^{n+1}).
\end{align}
We are left to confirm that
\begin{align}
\label{goal}
d\rho_n(x)=p_{n-1}(x)\,dx,
\end{align}
where $p_{n-1}$ is a polynomial of degree at most $n-1$. Consider the distribution $T$ defined by 	
$$T(g):=\int g\, d\rho_n$$ for all $g\in C^\infty_\textnormal{c}(\R)$.
By \eqref{eq:measure nth derivative} and the definition of the derivative of a distribution, $T^{(n)}=0$.
Since the primitive of a distribution is unique up to an additive constant (see, e.g., \cite[Theorem 3.10]{gwaiz}),
by an inductive argument (see, e.g., \cite[Example 2.21]{gwaiz}) we obtain \eqref{goal}.
\end{proof}

\subsection{Alternative trace formula}
\label{sct:Alternative trace formula}

The following result is our second major step in the proof of the representation \eqref{tff}.
It provides an alternative to \eqref{tff} with weighted $f$ on the right-hand side.
It also provides an alternative to \eqref{mu tilde} with weighted $f$ on the right-hand side, thereby effectively replacing the measure $\mu_n$ with functions $\breve\eta_0,\ldots,\breve\eta_{n-1}\in L^1_{\textnormal{loc}}$.
\begin{prop}\label{prop:SSF locally}
Let $n\in\N$, $n\ge 3$, let $H$ be a self-adjoint operator in $\H$, and let $V\in\mB(\H)_{\sa}$ satisfy $V(H-i)^{-1}\in\S^n$. Then, for every $p=0,\dots,n-1$,
there exists $\breve\eta_p\in L^1_{\textnormal{loc}}$ such that
\begin{align}
\label{trfp}
\Tr(R_{n,H,f}(V))=\sum_{p=0}^{n-1}(-1)^{n-1-p}\int_\R (fu^p)^{(p+1)}(x)\breve\eta_p(x)\,dx
\end{align}
for all $f\in C_\textnormal{c}^{n+1}$.
\end{prop}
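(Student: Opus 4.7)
The plan is to apply the change of variable formula Theorem \ref{thm:adding resolvents 2} to multiple operator integrals of order $n-1$ rather than order $n$ (as was done in Proposition \ref{prop:SSM with growth}), which is what produces an absolutely continuous representation instead of a general Borel measure.

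The starting point is the telescoping identity \eqref{pf2} with $k = n-1$, giving $R_{n,H,f}(V) = T^{H+V,H,\ldots,H}_{f^{[n-1]}}(V,\ldots,V) - T^{H,\ldots,H}_{f^{[n-1]}}(V,\ldots,V)$, a difference of two order-$(n-1)$ MOIs with $n-1$ copies of $V$. Individually these are not trace-class (since $V$ alone is not Schatten), but the difference is. To each I would apply Theorem \ref{thm:adding resolvents 2}, introducing resolvents to obtain a sum over $p = 0, \ldots, n-1$ of terms whose top object is $(fu^p)^{[p]}$ and whose remaining factors are Schatten-class products of $\tilde V := V(H-i)^{-1} \in \S^n$, with total Schatten exponent $n/(n-1)$. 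Subtracting the two expansions and collecting corresponding terms, another application of \eqref{pf} converts each pair of differences into an MOI of one higher order with an extra input $V_0 = V$; a final application of Theorem \ref{thm:adding resolvents}\ref{weight} absorbs this $V$ into the trace-class $V(H+V-i)^{-1} \in \S^n$, yielding a finite sum of trace-class blocks.

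Taking the trace of each block, using cyclicity to place the trailing $\tilde V$-factor at the front, and applying Lemma \ref{lem:Hahn Banach Borel measures} produces a finite Borel measure $\breve\mu_p$ with $\|\breve\mu_p\| \le c_n(1+\|V\|)\|V(H-i)^{-1}\|_n^n$ representing the $p$-th contribution as $\int_\R (fu^p)^{(p)}\,d\breve\mu_p$, so that
\begin{align*}
\Tr(R_{n,H,f}(V)) = \sum_{p=0}^{n-1} (-1)^{(n-1)-p} \int_\R (fu^p)^{(p)}(x)\,d\breve\mu_p(x).
\end{align*}
Because $f \in C_\textnormal{c}^{n+1}$ forces every $(fu^p)^{(p)}$ to be compactly supported, a single integration by parts with no boundary contribution gives $\int_\R (fu^p)^{(p)}\,d\breve\mu_p = -\int_\R (fu^p)^{(p+1)}(x)\,\breve\mu_p((-\infty,x))\,dx$. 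Setting $\breve\eta_p(x) := -\breve\mu_p((-\infty,x))$ then yields a bounded, hence locally integrable, function, and after absorbing the extra sign the target identity follows.

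The principal technical obstacle is the trace-class reorganization: the natural summands in the order-$(n-1)$ expansion lie only in $\S^{n/(n-1)}$, and the extra $\S^n$-factor needed for a term-by-term trace must be extracted from the cancellation between the two expansions together with the final application of Theorem \ref{thm:adding resolvents}\ref{weight}. Executing this carefully for arbitrary $V$ with $V(H-i)^{-1} \in \S^n$ is subtle, since intermediate MOIs involve $V$ itself as an input and are only well-defined as bounded operators via Theorem \ref{thm:PSS Higher Order}. My strategy is therefore to first prove the identity for finite rank $V_k$, where every intermediate quantity is automatically trace-class and Lemma \ref{lem:Hahn Banach Borel measures} applies term by term, and then pass to general $V$ by approximation: a $V$-continuous decomposition of the Taylor remainder controls the left-hand side, while approximation of $V$ by finite rank $V_k$ with $V_k(H-i)^{-1} \to V(H-i)^{-1}$ in $\S^n$, combined with the uniform bounds on $\|\breve\mu_p\|$, transfers the identity to the limit and produces the $\breve\eta_p$ as weak-$*$ limits of the finite rank functions, inheriting local integrability.
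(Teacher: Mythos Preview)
Your overall strategy matches the paper's: decompose $R_{n,H,f}(V)$ via the order-$(n-1)$ identity \eqref{birem} and Theorem~\ref{thm:adding resolvents 2} into the pieces $\tilde R^p_{n,H,f}(V)$ of Lemma~\ref{lem:pth part of remainder}, establish the representation first for finite-rank $V_k$ (Proposition~\ref{prop:SSF locally V_k}), then pass to the limit. The integration-by-parts step producing $\breve\eta_p(x)=-\breve\mu_p((-\infty,x))$ is also identical.

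There is, however, a real gap in your approximation step. For finite-rank $V_k$ the natural bound from Lemma~\ref{lem:Hahn Banach Borel measures} is $\|\breve\mu_{p,k}\|\le c_n(1+\|V_k\|)\|V_k(H-i)^{-1}\|_{n-1}^{n-1}$, because the $\tilde R^p$-pieces carry only $n-1$ weighted factors; this is \emph{not} uniform in $k$ under the sole hypothesis $V_k(H-i)^{-1}\to V(H-i)^{-1}$ in $\S^n$. To upgrade to an $\|\cdot\|_n^n$-type bound you must perform exactly the extra transformation you describe (apply \eqref{pf} to the difference, then Theorem~\ref{thm:adding resolvents}\ref{weight} to absorb the stray $V$), but after that the symbols are a mixture of $(fu^p)^{[p]}$ and $(fu^{p+1})^{[p+1]}$, so you no longer get a clean measure against $(fu^p)^{(p)}$ with the claimed bound. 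Equally, weak-$*$ convergence of $\breve\eta_{p,k}$ along a subsequence does not by itself force $\Tr(\tilde R^p_{n,H,f}(V_k))\to\Tr(\tilde R^p_{n,H,f}(V))$; you need an honest continuity statement for the trace.

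The paper resolves both issues at once via Lemma~\ref{prop:|Tr(R)|}, which is precisely your ``direct'' transformation applied not to $\tilde R^p(V)$ itself but to the difference $\tilde R^p(V_k)-\tilde R^p(W)$. This yields
\[
|\Tr(\tilde R^p_{n,H,f}(V_k)-\tilde R^p_{n,H,f}(W))|\le c_{n,H,V,a}\,\|(fu^p)^{(p+1)}\|_\infty\,\|\tilde V_k-\tilde W\|_n,
\]
from which $(\breve\eta_{p,k})_k$ is Cauchy in $L^1_{\text{loc}}$ (take $W=V_m$) and the trace side converges (take $W=V$). So your sketch has the right ingredients, but the place to deploy the \eqref{pf}+Theorem~\ref{thm:adding resolvents}\ref{weight} trick is in a continuity estimate for $V\mapsto\Tr(\tilde R^p(V))$, not in a direct bound on $\breve\mu_p$; without that, neither the uniform bound nor the passage to the limit goes through.
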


In order to prove \eqref{trfp} firstly we decompose $R_{n,H,f}(V)$ into more convenient components for which we can derive trace formulas by utilizing the method of the previous subsection, partial integration, and approximation arguments.


\begin{lem}\label{lem:pth part of remainder}
Let $H$ be a self-adjoint operator in $\H$, let $V\in\mB(\H)_{\text{sa}}$, let $n\in\N$, and let $f\in C_\textnormal{c}^{n+1}$. Then, $$R_{n,H,f}(V)=\sum_{p=0}^{n-1} (-1)^{n-1-p}\tilde R^p_{n,H,f}(V),$$
where
\begin{align}
\label{R0}
\nonumber
&\tilde R^0_{1,H,f}(V):=f(H+V)-f(H),\\
&\tilde R^0_{n,H,f}(V):=f(H)V((H+V-i)^{-1}-(H-i)^{-1})\tilde V^{n-2}
\end{align}
for $n\geq 2$ and
\begin{align}
\label{rpdef}
\tilde R^p_{n,H,f}(V):=\sum_{\substack{j_1,\ldots,j_{p}\geq1,j_{p+1}\geq0\\j_1+\ldots+j_{p+1}=n-1}}\!\, \Big(& T^{H,H_{j_1},H,\ldots,H}_{(fu^p)^{[p]}}(V(H+V-i)^{-1}\tilde{V}^{j_1-1}, \ldots,\tilde{V}^{j_p})\,\tilde{V}^{j_{p+1}}\nonumber\\
&-T^{H,\ldots,H}_{(fu^p)^{[p]}}(\tilde{V}^{j_1},\ldots,\tilde{V}^{j_p})\tilde{V}^{j_{p+1}}\Big)
\end{align}
for $ p=1,\dots, n-1$, with $H_1=H+V$ and $H_{j_1}=H$ for $j_1\neq1$.
\end{lem}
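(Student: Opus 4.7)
The key recurrence is
\begin{align*}
R_{n,H,f}(V) = R_{n-1,H,f}(V) - T^H_{f^{[n-1]}}(V,\ldots,V), \qquad n\geq 2,
\end{align*}
which follows immediately from \eqref{eq:def of the remainder} together with Theorem \ref{dm}. Theorem \ref{rm} rewrites the first term as $R_{n-1,H,f}(V) = T^{H,H+V,H,\ldots,H}_{f^{[n-1]}}(V,\ldots,V)$, while the second term is $T^{H,H,\ldots,H}_{f^{[n-1]}}(V,\ldots,V)$. Both are multiple operator integrals of order $n-1$ with the same divided difference $f^{[n-1]}$, differing only in whether slot $1$ of the operator tuple is $H+V$ or $H$. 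My plan is to apply Theorem \ref{thm:adding resolvents}\ref{weights} to each separately and subtract stratum by stratum. The Fourier hypotheses $\widehat{(fu^p)^{(p)}}\in L^1$, $p=0,\ldots,n-1$, hold automatically for $f\in C_\textnormal{c}^{n+1}$ (e.g.\ by Lemma \ref{lem:FT in L1}).

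Set $(H_0,H_1,\ldots,H_{n-1}) = (H,H+V,H,\ldots,H)$ and $\tilde V_k := V(H_k-i)^{-1}$, so that $\tilde V_1 = V(H+V-i)^{-1}$ and $\tilde V_k = \tilde V$ for $k\ge 2$. Reparametrising each $p$-stratum of Theorem \ref{thm:adding resolvents}\ref{weights} via $i_1 = j_1$, $i_k = j_k - j_{k-1}$ ($2\le k\le p$), $i_{p+1} = n-1-j_p$, the $p$-stratum of the expansion of $R_{n-1,H,f}(V)$ takes the form
\begin{align*}
(-1)^{n-1-p}\sum_{\substack{i_1,\ldots,i_p\ge 1,\, i_{p+1}\ge 0\\ i_1+\cdots+i_{p+1}=n-1}} T^{H,H_{i_1},H,\ldots,H}_{(fu^p)^{[p]}}\bigl(V(H+V-i)^{-1}\tilde V^{i_1-1},\tilde V^{i_2},\ldots,\tilde V^{i_p}\bigr)\,\tilde V^{i_{p+1}},
\end{align*}
with $H_{i_1} = H+V$ exactly when $i_1 = 1$ and $H_{i_1} = H$ otherwise. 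The corresponding stratum of $T^H_{f^{[n-1]}}(V,\ldots,V)$ is the same expression with every $V(H+V-i)^{-1}\tilde V^{i_1-1}$ replaced by $\tilde V^{i_1}$ and every $H_{i_1}$ replaced by $H$.

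Subtracting, the $p$-stratum of $R_{n,H,f}(V)$ is, for $p\ge 1$, precisely $(-1)^{n-1-p}\tilde R^p_{n,H,f}(V)$ as given by \eqref{rpdef}. For $p = 0$ there are no MOI inputs and the two strata contribute $f(H)V(H+V-i)^{-1}\tilde V^{n-2}$ and $f(H)\tilde V^{n-1}$ respectively, whose difference telescopes via the second resolvent identity to
\begin{align*}
f(H)\bigl[V(H+V-i)^{-1}-\tilde V\bigr]\tilde V^{n-2} = f(H)V\bigl((H+V-i)^{-1}-(H-i)^{-1}\bigr)\tilde V^{n-2} = \tilde R^0_{n,H,f}(V),
\end{align*}
matching \eqref{R0}. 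The base case $n=1$ is the tautology $R_{1,H,f}(V) = f(H+V) - f(H) = \tilde R^0_{1,H,f}(V)$.

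The main obstacle is purely bookkeeping: the reparametrisation must cleanly align the two expansions and track the conditional choice $H_{i_1} = H+V$ precisely when $i_1 = 1$, so that the subtraction reproduces \eqref{rpdef} with the correct constraint $i_1+\cdots+i_{p+1} = n-1$ (which is one less than in a direct expansion of $R_{n,H,f}(V)$ itself). No analytic input beyond Theorem \ref{thm:adding resolvents}\ref{weights} is needed, since the statement is an identity in $\mB(\H)$ requiring no trace-class assumption on $V$.
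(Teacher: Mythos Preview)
Your proof is correct and follows exactly the paper's approach: write $R_{n,H,f}(V)=T^{H,H+V,H,\ldots,H}_{f^{[n-1]}}(V,\ldots,V)-T^{H,\ldots,H}_{f^{[n-1]}}(V,\ldots,V)$ via Theorems~\ref{dm} and~\ref{rm}, then apply Theorem~\ref{thm:adding resolvents}\ref{weights} to each term and subtract stratum by stratum. The paper's proof is just the two-line version of this; your reparametrisation $i_k=j_k-j_{k-1}$ and explicit treatment of the $p=0$ case simply unpack what the paper leaves implicit.
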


\begin{proof}\label{prop36}
Using \eqref{dermoi} and \eqref{remmoi}, we get
\begin{align}
R_{n,H,f}(V)=&R_{n-1,H,f}(V)-\frac{1}{(n-1)!}\frac{d^{n-1}}{dt^{n-1}}f(H+tV)|_{t=0}\nonumber\\
=&T^{H,H+V,H,\ldots,H}_{f^{[n-1]}}(V,\ldots,V)
-T^{H,\ldots,H}_{f^{[n-1]}}(V,\ldots,V)\label{birem}.
\end{align}
An application of Theorem \ref{thm:adding resolvents}\ref{weights} to each of the terms in \eqref{birem} completes the proof.
\end{proof}

Firstly we show that \eqref{trfp} holds when $V$ is a finite-rank operator. This is done by establishing an analog of \eqref{trfp} for $\tilde R^p_{n,H,f}(V)$ and then extending \eqref{trfp} to $R_{n,H,f}(V)$ with help of Lemma \ref{lem:pth part of remainder}.

\begin{prop}\label{prop:SSF locally V_k}
Let $n\in\N$, $n\ge 3$, let $H$ be a self-adjoint operator in $\H$, and let $V\in\mB(\H)_{\text{sa}}$
be of finite rank. Then, for $p=0,\dots, n-1$, there exists $\breve\eta_{p}\in L^1_{\textnormal{loc}}$ such that
\begin{align*}
\Tr(\tilde R^p_{n,H,f}(V))=\int_\R(fu^p)^{(p+1)}(x)\breve\eta_{p}(x)\,dx\,
\end{align*}
for all $f\in C^{n+1}_\textnormal{c}$,  where $\tilde R^p_{n,H,f}$ is given by \eqref{rpdef}.
\end{prop}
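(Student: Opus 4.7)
My plan is to treat the case $p=0$ and the cases $1\le p\le n-1$ separately, reducing each to a Riesz--Markov type representation followed by an integration by parts. In every case, the finite-rank hypothesis on $V$ forces each factor appearing in $\tilde R^p_{n,H,f}(V)$ to be finite-rank, so all multiple operator integrals and their products lie in every Schatten ideal by \eqref{fourierbound}; in particular, each $\tilde R^p_{n,H,f}(V)$ is trace class for $f\in C^{n+1}_\textnormal{c}$.

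For $p=0$ I would use cyclicity of the trace to write
\[
\Tr\bigl(\tilde R^0_{n,H,f}(V)\bigr)=\Tr\bigl(Y_0\,f(H)\bigr),\qquad Y_0:=V\bigl((H+V-i)^{-1}-(H-i)^{-1}\bigr)\tilde V^{n-2}.
\]
Since $Y_0$ is trace class, the spectral theorem together with the Riesz--Markov--Kakutani representation theorem applied to the bounded linear functional $g\mapsto\Tr(Y_0\,g(H))$ on $C_0(\R)$ yields a finite complex Borel measure $\mu_0$ with $\Tr(Y_0\,f(H))=\int_\R f\,d\mu_0$. Because $f\in C^{n+1}_\textnormal{c}$ has compact support, integration by parts gives $\int_\R f\,d\mu_0=\int_\R f'(x)\breve\eta_0(x)\,dx$ with $\breve\eta_0(x):=-\mu_0((-\infty,x))\in L^\infty\subseteq L^1_{\textnormal{loc}}$, which is the required formula since $(fu^0)^{(1)}=f'$.

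For $1\le p\le n-1$ I attack each summand of \eqref{rpdef} individually. Such a summand has the form $T^{H,H_{j_1},H,\ldots,H}_{g^{[p]}}(W_1,\ldots,W_p)\,\tilde V^{j_{p+1}}$ with $g=fu^p$, and by cyclicity its trace is either $\Tr\bigl(\tilde V^{j_{p+1}}T^{H,H_{j_1},H,\ldots,H}_{g^{[p]}}(W_1,\ldots,W_p)\bigr)$ when $j_{p+1}\ge 1$, or $\Tr\bigl(T^{H,H_{j_1},H,\ldots,H}_{g^{[p]}}(W_1,\ldots,W_p)\bigr)$ when $j_{p+1}=0$. Because $V$ is finite rank, every $W_k$ and every $\tilde V^{j_{p+1}}$ lies in every Schatten class, so I may pick exponents in $(1,\infty)$ summing reciprocally to $1$ and invoke the appropriate clause of Lemma \ref{lem:Hahn Banach Borel measures}. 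Its hypothesis $\widehat{g^{(p)}}\in L^1$ is satisfied because $g^{(p)}=(fu^p)^{(p)}\in C^1_\textnormal{c}\subseteq L^2\cap C_\textnormal{b}$ with derivative in $L^2$, whence $\widehat{g^{(p)}}\in L^1$ by Lemma \ref{lem:FT in L1}. Summing over the partitions indexing \eqref{rpdef}, I obtain a finite complex Borel measure $\tilde\mu_p$ with $\Tr\bigl(\tilde R^p_{n,H,f}(V)\bigr)=\int_\R (fu^p)^{(p)}\,d\tilde\mu_p$. A final integration by parts, with boundary terms killed by the compact support of $(fu^p)^{(p)}$, delivers
\[
\Tr\bigl(\tilde R^p_{n,H,f}(V)\bigr)=\int_\R (fu^p)^{(p+1)}(x)\,\breve\eta_p(x)\,dx,
\]
with $\breve\eta_p(x):=-\tilde\mu_p((-\infty,x))\in L^\infty\subseteq L^1_{\textnormal{loc}}$.

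The main obstacle I foresee is bookkeeping the Schatten exponents and verifying the hypotheses of Lemma \ref{lem:Hahn Banach Borel measures} uniformly across the many configurations $(j_1,\ldots,j_{p+1})$ appearing in \eqref{rpdef}, particularly the boundary case $j_{p+1}=0$, where no operator sits outside the multiple operator integral and one must invoke the first clause of the lemma rather than the second. Once each summand has been absorbed into a single finite complex measure, the integration by parts step is routine, as $u^p$ is a polynomial and $f$ has compact support, ensuring that $(fu^p)^{(p)}$ is compactly supported and that all boundary contributions vanish.
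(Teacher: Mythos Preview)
Your proposal is correct and follows essentially the same route as the paper: bound each summand of $\tilde R^p_{n,H,f}(V)$ via Lemma~\ref{lem:Hahn Banach Borel measures} to obtain $|\Tr(\tilde R^p_{n,H,f}(V))|\le c\,\|(fu^p)^{(p)}\|_\infty$, apply Riesz--Markov to get a finite measure $\breve\mu_p$, and integrate by parts to produce $\breve\eta_p(x)=-\breve\mu_p((-\infty,x))\in L^\infty\subseteq L^1_{\textnormal{loc}}$. Your explicit separate treatment of $p=0$ via $\Tr(Y_0 f(H))$ and the spectral theorem is in fact slightly more careful than the paper, which only writes out the $p\ge 1$ structure from \eqref{rpdef} even though the statement includes $p=0$; your argument is the natural one there.
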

\begin{proof}

By the definition of $\tilde R^p_{n,H,f}(V)$ in Lemma~\ref{lem:pth part of remainder},
\begin{align}
\label{sumabstrt}
&|\Tr(\tilde R^p_{n,H,f}(V))|\\ \nonumber
&\leq\sum_{\substack{j_1,\ldots,j_{p}\geq1,j_{p+1}\geq0\\j_1+\ldots+j_{p+1}=n-1}}
\Big(\big|\Tr\big(T^{H,H_{j_1},H,\ldots,H}_{(fu^p)^{[p]}}(V(H+V-i)^{-1}\tilde{V}^{j_1-1},\ldots,
\tilde{V}^{j_p})\tilde{V}^{j_{p+1}}\big)\big|\\ \nonumber
&\quad+\big|\Tr\big(T^{H,\ldots,H}_{(fu^p)^{[p]}}
(\tilde{V}^{j_1},\ldots,\tilde{V}^{j_p})\tilde{V}^{j_{p+1}}\big)\big|\Big).
\end{align}
By Lemma \ref{lem:Hahn Banach Borel measures} applied to each summand on the right-hand side of \eqref{sumabstrt},
\begin{align}	
|\Tr(\tilde R^p_{n,H,f}(V))|
\leq&\sum_{\substack{j_1,\ldots,j_{p}\geq1,j_{p+1}\geq0\\
j_1+\ldots+j_{p+1}=n-1}}2c_{n,j}\supnorm{(fu^p)^{(p)}}
(1+\|V\|)\nrm{V(H-i)^{-1}}{n-1}^{n-1}\nonumber\\
=:&\,c_n\supnorm{(fu^p)^{(p)}}(1+\|V\|)\nrm{V(H-i)^{-1}}{n-1}^{n-1}.
\label{eq:bound R^p}
\end{align}
Hence, by the Riesz--Markov representation theorem, there exist unique Borel measures $\breve\mu_{p}$ such that
$$\|\breve\mu_{p}\|\le c_n(1+\|V\|)\nrm{V(H-i)^{-1}}{n-1}^{n-1}$$
and $$\Tr(\tilde R^p_{n,H,f}(V))=\int(fu^p)^{(p)}\,d\breve\mu_{p}$$
for all $f\in C^{n+1}_\textnormal{c}\subseteq \mW_n$. Hence, $\eta_{p}(x):=-\breve\mu_{p}((-\infty,x))$ is a bounded function in $L^1_{\textnormal{loc}}(\R)$ and the proposition follows by the partial integration formula for distribution functions.
\end{proof}

Proposition \ref{prop:SSF locally V_k} will be extended from finite rank to relative Schatten class perturbations by an approximation argument. To carry out the latter we build some technical machinery below.

%

The following approximation of weighted perturbations is an important step in the approximation of the trace formula given by Proposition \ref{prop:SSF locally V_k}.

\begin{lem}\label{lem:approximating V by V_k}
	Let $\H$ be a Hilbert space, $H$ a self-adjoint operator in $\H$, and let $V\in \mB(\H)_{\text{sa}}$ be such that $V(H-i)^{-1}\in\S^n$. Then, there exists a sequence $(V_k)_k\subset\mB(\H)_{\text{sa}}$ of finite-rank operators such that $(V_k)_k$ converges strongly to $V$, such that
\begin{align}\label{eq:V_k Schatten conv}
\nrm{V_k(H-i)^{-1}-V(H-i)^{-1}}{n}\to0\, \text{ as } k\rightarrow\infty,	
\end{align}
and such that
\begin{align}
\label{c=1}
\norm{V_k}\leq\norm{V}\quad\text{and}
\quad\nrm{V_k(H-i)^{-1}}{n}\leq\nrm{V(H-i)^{-1}}{n}.
\end{align}
\end{lem}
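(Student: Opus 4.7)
The plan is a two-stage finite-rank truncation: first project into a spectral band of $H$ to force the cutoff of $V$ into $\dom(H)$, then further truncate to finite rank using any orthonormal basis of the (possibly infinite-dimensional) band. Concretely, set $P_N:=E_H([-N,N])$ and $V^{(N)}:=P_NVP_N$. Since $P_N$ commutes with $(H-i)^{-1}$, we have $V^{(N)}(H-i)^{-1}=P_NV(H-i)^{-1}P_N$, giving $\|V^{(N)}\|\le\|V\|$ and $\|V^{(N)}(H-i)^{-1}\|_n\le\|V(H-i)^{-1}\|_n$; moreover $V^{(N)}\to V$ strongly and $V^{(N)}(H-i)^{-1}\to V(H-i)^{-1}$ in $\S^n$, using only $P_N\to\1$ strongly together with the standard fact that uniformly bounded strongly convergent sequences act continuously (on both sides) on a fixed $\S^n$ element. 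For fixed $N$, fix an orthonormal basis $\{f_j\}_j$ of $P_N\H$ and set $Q_{N,m}:=\sum_{j=1}^m\ket{f_j}\bra{f_j}\le P_N$, so $Q_{N,m}\to P_N$ strongly; then $V_{N,m}:=Q_{N,m}VQ_{N,m}$ is finite-rank self-adjoint with $\|V_{N,m}\|\le\|V\|$, and $V_{N,m}\to V^{(N)}$ strongly.

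The delicate point is showing $V_{N,m}(H-i)^{-1}\to V^{(N)}(H-i)^{-1}$ in $\S^n$ as $m\to\infty$. I would split
\begin{align*}
V^{(N)}(H-i)^{-1}-V_{N,m}(H-i)^{-1}=(P_N-Q_{N,m})V(H-i)^{-1}P_N+Q_{N,m}V(P_N-Q_{N,m})(H-i)^{-1}.
\end{align*}
The first summand is the left-multiplication of the fixed $\S^n$ element $V(H-i)^{-1}P_N$ by the strongly null, contractive sequence $(P_N-Q_{N,m})$, hence tends to $0$ in $\S^n$. For the second summand, since $(P_N-Q_{N,m})(H-i)^{-1}$ has range in $P_N\H\subset\dom(H)$, I factor
\begin{align*}
V(P_N-Q_{N,m})(H-i)^{-1}=V(H-i)^{-1}\,B_{N,m},\quad B_{N,m}:=(H-i)(P_N-Q_{N,m})(H-i)^{-1},
\end{align*}
where $B_{N,m}$ is a bounded operator because $H$ restricts to a bounded operator on the band $P_N\H$. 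A direct computation on $\dom(H)$ yields $B_{N,m}^{*}=(H+i)^{-1}(P_N-Q_{N,m})(H+i)$, whence $B_{N,m}^{*}\to 0$ strongly; combined with $V(H-i)^{-1}\in\S^n$, a routine finite-rank approximation of $V(H-i)^{-1}$ in $\S^n$ upgrades this to $V(H-i)^{-1}B_{N,m}\to 0$ in $\S^n$.

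A diagonal extraction $V_k:=V_{k,m(k)}$ with $m(k)$ growing fast enough then produces finite-rank self-adjoint operators satisfying $V_k\to V$ strongly, $V_k(H-i)^{-1}\to V(H-i)^{-1}$ in $\S^n$, and $\|V_k\|\le\|V\|$. The strict bound $\|V_k(H-i)^{-1}\|_n\le\|V(H-i)^{-1}\|_n$ is then enforced by the harmless rescaling $V_k\mapsto\alpha_kV_k$ with $\alpha_k:=\min\{1,\|V(H-i)^{-1}\|_n/\|V_k(H-i)^{-1}\|_n\}$, since $\alpha_k\to 1$ preserves both convergences (the trivial case $V(H-i)^{-1}=0$, equivalent to $V=0$ by density of $\dom(H)$, is handled by taking $V_k=0$). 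The main obstacle, which dictates the two-step construction rather than a naive cutoff $F_kVF_k$ by a generic finite-rank projection, is that $(H-i)^{-1}$ is \emph{not} assumed compact: a generic $F_k$ does not preserve $\dom(H)$, so the analogue of $B_{N,m}$ need not even be bounded. Pre-truncating by the spectral projection $P_N$ places $Q_{N,m}$ inside $\dom(H)$, which is exactly what allows the compactness of $V(H-i)^{-1}\in\S^n$ to be leveraged via $B_{N,m}$ to obtain Schatten-norm convergence.
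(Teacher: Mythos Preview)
Your proof is correct, but the paper's second truncation step is different and a bit slicker. After defining $P_k:=E_H((-k,k))$ exactly as you do, the paper first observes that
\[
(P_kVP_k)\big((H-i)^{-1}P_k+(\1-P_k)\big)=P_kV(H-i)^{-1}P_k\in\S^n,
\]
and since the bracketed factor is invertible by functional calculus, $P_kVP_k\in\S^n$ itself. Being compact self-adjoint, $P_kVP_k$ admits spectral projections $E_l$ that \emph{commute} with it and satisfy $E_lP_kVP_k\to P_kVP_k$ in $\S^n$; the paper then sets $V_k:=E_{l_k}P_kVP_k$ for a suitable $l_k$. Because $E_{l_k}$ commutes with $P_kVP_k$, self-adjointness is immediate, and since $V_k(H-i)^{-1}=E_{l_k}P_kV(H-i)^{-1}P_k$ one gets the strict bound $\|V_k(H-i)^{-1}\|_n\le\|V(H-i)^{-1}\|_n$ directly, with no rescaling needed.

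Your route instead takes $Q_{N,m}$ to be an arbitrary finite-rank projection inside $P_N\H$, which forces you to work harder: you need the factorization through $B_{N,m}=(H-i)(P_N-Q_{N,m})(H-i)^{-1}$ and the observation that $B_{N,m}^*\to0$ strongly to handle the term where $Q_{N,m}$ sits between $V$ and the resolvent, and then a final rescaling by $\alpha_k$ to enforce the strict Schatten bound. This is all valid, but the paper's device of diagonalising $P_kVP_k$ sidesteps both complications at once. What your argument buys is that it never uses the spectral structure of $P_NVP_N$, only that $Q_{N,m}\le P_N$ lives inside a spectral band of $H$; the paper's argument buys brevity and the exact constants without any post-processing.
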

\begin{proof}
We start with a sequence of spectral projections, denoted $$P_k:=E_H((-k,k)),$$ which by the functional calculus converges strongly to $\1$.
Applying subsequently the property of orthogonal projections and standard functional calculus we obtain
\begin{align}
\label{pk}
(P_kVP_k)((H-i)^{-1}P_k+(\1-P_k))
\nonumber
&=(P_kVP_k)((H-i)^{-1}P_k)\\
&=P_kV(H-i)^{-1}P_k\in\S^n
\end{align}
for each $k\in\N$.
By the functional calculus, $(H-i)^{-1}P_k+(\1-P_k)$ is invertible.
Therefore, from \eqref{pk} we derive $$P_kVP_k=P_kV(H-i)^{-1}P_k\left((H-i)^{-1}P_k+(\1-P_k)\right)^{-1}\in\S^n.$$
For a fixed $k$, by the spectral theorem of compact self-adjoint operators, there exists a sequence $(E_l)_{l=1}^\infty$ of finite-rank projections, each $E_l$ commuting with $P_kVP_k$, such that $E_lP_kVP_k$ converges to $P_kVP_k$ in $\S^n$ as $l\to\infty$. For all $k\in\N$, there exists $l_k\in\N$ such that $$\nrm{E_{l_k}P_kVP_k-P_kVP_k}{n}<1/k.$$
Define $$V_k:=E_{l_k}P_kVP_k.$$ Then $\|V_k\|\leq\|V\|$ holds, $V_k$ is self-adjoint, $V_k\to V$ strongly, and
\begin{align*}
\nrm{V_k(H-i)^{-1}-V(H-i)^{-1}}{n}\leq& \nrm{E_{l_k}P_kVP_k-P_kVP_k}{n}\norm{(H-i)^{-1}}\\
&+\nrm{P_kV(H-i)^{-1}P_k-V(H-i)^{-1}}{n}.
\end{align*}
By Lemma \ref{lem:Schatten product is continuous}, the latter expression converges to $0$ as $k\rightarrow\infty$.
The estimate
$$\nrm{E_{l_k}P_kVP_k(H-i)^{-1}}{n}
\leq\norm{E_{l_k}}\norm{P_k}\nrm{V(H-i)^{-1}}{n}\norm{P_k}.$$
implies the second inequality in \eqref{c=1}.
\end{proof}


Our approximation on the left-hand side of the trace formula in Proposition \ref{prop:SSF locally V_k} is based on the next estimate.
	
\begin{lem}\label{prop:|Tr(R)|}
Let $H$ be a self-adjoint operator in $\H$, let $n\in\N$, $n\neq2$, and let $V\in \mB(\H)_{\sa}$ be such that $V(H-i)^{-1}\in\S^n$.
Let $(V_k)_k\subset\mB(\H)_{\sa}$ be a sequence satisfying the assertions of Lemma \ref{lem:approximating V by V_k}.
Let $W\in \{V,V_m\}$, where $m\in\N$.
Then, given $a>0$, there exists $c_{n,H,V,a}>0$ such that
$$|\Tr (\tilde R^p_{n,H,f}(V_k)-\tilde R^p_{n,H,f}(W))|\leq c_{n,H,V,a}\supnorm{(fu^p)^{(p+1)}}\|\tilde{V}_k-\tilde{W}\|_n$$	
for all $p=0,\dots,n-1$, $k\in\N$, and $f\in C^{n+1}$ with $\supp(f)\subseteq[-a,a]$, where $\tilde R^p_{n,H,f}$ is given by \eqref{rpdef}.
In addition,
\begin{align*}
\Tr(R_{2,H,f}(V_k)-R_{2,H,f}(W))=\sum_{p=0}^2\int_0^1\Tr(R_{t,t,H,W,V_k,f}^p+R_{t,H,W,V_k,f}^p)\,dt
\end{align*}
for some operators $R_{t,t,H,W,V_k,f}^p$ and $R_{t,H,W,V_k,f}^p$ satisfying
$$|\Tr(R_{t,t,H,W,V_k,f}^p+R_{t,H,W,V_k,f}^p)|\leq c_{H,V}\|(fu^p)^{(p)}\|_\infty\|\tilde{V}_k-\tilde{W}\|_2$$
for all $f\in C^3_\textnormal{c}[-a,a]$.
\end{lem}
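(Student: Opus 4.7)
The plan is to interpolate between $W$ and $V_k$ by setting $V_t := W + t(V_k - W)$, $t \in [0,1]$, and invoke the fundamental theorem of calculus
\[
\Tr(\tilde R^p_{n,H,f}(V_k) - \tilde R^p_{n,H,f}(W)) = \int_0^1 \tfrac{d}{dt}\Tr\big(\tilde R^p_{n,H,f}(V_t)\big)\,dt,
\]
whose integrand we bound uniformly in $t$. The derivative $\tfrac{d}{dt}\tilde R^p_{n,H,f}(V_t)$ is computed from \eqref{rpdef} by the product rule: the parameter $t$ enters through each factor $\tilde V_t$ in the MOI arguments (multilinearity yields an MOI with one argument replaced by $\tilde V_k - \tilde W$), through the Hilbert space $H_{j_1} = H + V_t$ for $j_1 = 1$ (handled by the perturbation formula \eqref{pf}, which raises the MOI order from $p$ to $p+1$ and inserts an argument $V_k - W$), and through $V_t(H+V_t-i)^{-1}$ (handled by the resolvent identity, yielding another factor $V_k - W$). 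For each newly introduced argument $V_k - W$, write $V_k - W = (\tilde V_k - \tilde W)(H-i)$ and absorb $(H-i)$ into the MOI integrand via Theorem \ref{thm:adding resolvents}\ref{weight} applied in reverse, converting the new integrand $(fu^p)^{[p+1]}$ into a combination of $(fu^{p+1})^{[p+1]}$ and $(fu^p)^{[p]}$.

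For $n \geq 3$, the resulting traces have the form $\Tr\big(T^{H_0,\ldots,H_q}_\phi(B_1,\ldots,B_q)\cdot B_{q+1}\big)$ with $q \in \{p, p+1\}$, where exactly one $B_j$ contains $\tilde V_k - \tilde W \in \S^n$ and the others are products of $\tilde V_t$'s together with at most one resolvent of operator norm bounded by $1 + \|V\|$. Distributing the $n$ total $\S^n$-factors over the $q+1 \geq p+1 \geq 2$ slots yields Schatten indices $\alpha_j > 1$ with $\sum_j 1/\alpha_j = 1$, so Lemma \ref{lem:Hahn Banach Borel measures} applies and produces the bound $\|(fu^p)^{(p+1)}\|_\infty \|\tilde V_k - \tilde W\|_n$ up to an $(n,H,V,a)$-dependent constant; uniform control of the remaining factors comes from \eqref{c=1}, while lower-order derivatives $\|(fu^p)^{(p)}\|_\infty$ arising from the Leibniz step are dominated by $a \|(fu^p)^{(p+1)}\|_\infty$ since $\supp f \subseteq [-a,a]$ (mean value theorem). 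The case $n = 1$ (only $p = 0$) reduces to $\Tr(f(H+V_k) - f(H+W))$, which is handled directly by Theorem \ref{dm} and a straightforward estimate using $V_k - W = (\tilde V_k - \tilde W)(H-i)$ together with functional calculus.

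For $n = 2$ the Schatten slack vanishes and the perturbation-formula step cannot be applied without exiting trace class. Here we keep the derivative in the two-component form stated, $R^p_{t,t,H,W,V_k,f} + R^p_{t,H,W,V_k,f}$, where $R^p_{t,t}$ captures the joint variation of $V_t$ in the MOI arguments and in the operator $H + V_t$ (the first component arises via \eqref{pf2}), and $R^p_t$ captures variation of the arguments alone. Each component is bounded directly via \eqref{fourierbound} (using $\widehat{(fu^p)^{(p)}} \in L^1$ for $f \in C_\textnormal{c}^{n+1}$) together with H\"older's inequality in Schatten classes, yielding the claimed bound in terms of $\|(fu^p)^{(p)}\|_\infty \|\tilde V_k - \tilde W\|_2$. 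The main obstacle throughout is the combinatorial bookkeeping of how $V$ enters each summand of \eqref{rpdef} and, after the change-of-variable step absorbing $(H-i)$-factors, verifying that the Schatten-index constraints of Lemma \ref{lem:Hahn Banach Borel measures} (in particular $\alpha_j > 1$) are met in every resulting term; the borderline case $n = 2$ forces the explicit two-component decomposition rather than the general interpolation scheme.
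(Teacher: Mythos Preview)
Your interpolation strategy for $n\geq 3$ is a genuinely different route from the paper's. The paper never forms the path $V_t$; instead it computes the discrete difference $\tilde R^p_{n,H,f}(V_k)-\tilde R^p_{n,H,f}(W)$ directly from \eqref{rpdef}, handles the change of superscript $H+V_k$ versus $H+W$ by a single application of \eqref{pf}, and then isolates a factor $\tilde V_k^{j_l}-\tilde W^{j_l}$ in each term by repeated telescoping in the MOI arguments together with the identity $\breve V^j-\tilde V^j=-\breve V^{j+1}$. Only after this combinatorial reduction does it invoke Theorem~\ref{thm:adding resolvents}\ref{weight} on the surviving bare $V_k-W$ entry and finish via Lemma~\ref{lem:Hahn Banach Borel measures}. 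Your scheme trades this telescoping for differentiation in $t$, which is tidier in spirit but obliges you to justify that $t\mapsto T^{H,H+V_t,H,\ldots}_{\phi}(\cdots)$ is differentiable in trace norm with derivative a MOI of one order higher; this is plausible from \eqref{pf} combined with continuity in the superscript, but it is not established in the paper, whereas the telescoping argument needs only the finite-difference identity \eqref{pf} as stated. One further point: after \eqref{pf} the new $V_k-W$ slot sits adjacent to the superscript $H+V_t$ (or $H+W$), so Theorem~\ref{thm:adding resolvents}\ref{weight} yields $(V_k-W)(H+V_t-i)^{-1}$ rather than $\tilde V_k-\tilde W$; the paper bridges this with the second resolvent identity, and your sketch should make that step explicit as well.

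For $n=1$ and $n=2$ the paper \emph{does} interpolate, so your outline is on target there, with two corrections. First, for $n=2$ the claimed bound is in $\|(fu^p)^{(p)}\|_\infty$, which comes from Lemma~\ref{lem:Hahn Banach Borel measures} (and hence Theorem~\ref{thm:PSS Higher Order}); the estimate \eqref{fourierbound} that you cite only controls $\|\widehat{(fu^p)^{(p)}}\|_1$ and does not give the stated conclusion. Second, the decomposition $R^p_{t,t}$ versus $R^p_{t}$ in the paper is not a split between ``superscript variation'' and ``argument variation''. Rather, the paper writes $T^{H_t,H_t}_{f^{[1]}}-T^{H,H}_{f^{[1]}}$ as the sum $T^{H_t,H,H_t}_{f^{[2]}}(W+t(V_k-W),V_k-W)+T^{H,H_t,H}_{f^{[2]}}(V_k-W,W+t(V_k-W))$ by two applications of \eqref{pf}, and then expands each of these two order-$2$ MOI's via Theorem~\ref{thm:adding resolvents}\ref{weights} into three pieces indexed by $p=0,1,2$; the subscripts $t,t$ and $t$ simply tag which of the two order-$2$ MOI's the piece came from.
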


\begin{proof}
Let $n\geq3$. By \eqref{rpdef} in Lemma \ref{lem:pth part of remainder},
\begin{align}
\label{rvw0}
&\tilde R^p_{n,H,f}(V_k)-\tilde R^p_{n,H,f}(W)\\
\nonumber
&=\sum_{\substack{j_1,\ldots,j_{p}\geq1,j_{p+1}\geq0\\j_1+\ldots+j_{p+1}=n-1}}\!\,
\Big(T^{H,H+V_{k,j_1},H,\ldots,H}_{(fu^p)^{[p]}}(V_{k}(H+V_{k}-i)^{-1}\tilde{V}_k^{j_1-1}, \ldots,\tilde{V}_k^{j_p})\,\tilde{V}_k^{j_{p+1}}\\ \nonumber
&\quad-T^{H,H+W_{j_1},H,\ldots,H}_{(fu^p)^{[p]}}(W(H+W-i)^{-1}\tilde{W}^{j_1-1}, \ldots,\tilde{W}^{j_p})\,\tilde{W}^{j_{p+1}}\\
\nonumber
&\quad-T^{H,\ldots,H}_{(fu^p)^{[p]}}(\tilde{V}_k^{j_1},\ldots,\tilde{V}_k^{j_p})\tilde{V}_k^{j_{p+1}}
+T^{H,\ldots,H}_{(fu^p)^{[p]}}(\tilde{W}^{j_1}, \ldots,\tilde{W}^{j_p})\,\tilde{W}^{j_{p+1}}\Big),
\end{align}
where $V_{k,1}=V_k$, $W_1=W$ and $V_{k,j}=W_j=0$ for $j\neq1$.
Below we also use the notations $\breve{V}_k^{j}=V_k(H+V_k-i)^{-1}\tilde{V}_k^{j-1}$ and $\breve{W}^{j}=W(H+W-i)^{-1}\tilde{W}^{j-1}$.

Firstly we handle the summands in \eqref{rvw0} with $j_1=1$.
By \eqref{pf},
\begin{align}
\label{rvw1}
&T^{H,H+V_k,H,\ldots,H}_{(fu^p)^{[p]}}(\breve{V}_k,\tilde{V}_k^{j_2}, \ldots,\tilde{V}_k^{j_p})\,\tilde{V}_k^{j_{p+1}}
-T^{H,H+W,H,\ldots,H}_{(fu^p)^{[p]}}(\breve{V}_k,\tilde{V}_k^{j_2},\ldots,\tilde{V}_k^{j_p})\tilde{V}_k^{j_{p+1}}\\
\nonumber
&=T^{H,H+V_k,H+W,H,\ldots,H}_{(fu^p)^{[p+1]}}
(\breve{V}_k,V_k-W,\tilde{V}_k^{j_2},\ldots,\tilde{V}_k^{j_p})\tilde{V}_k^{j_{p+1}}.
\end{align}
By telescoping we obtain
\begin{align}
\label{rvw2}
&T^{H,H+W,H,\ldots,H}_{(fu^p)^{[p]}}(\breve{V}_k,\tilde{V}_k^{j_2},\ldots,\tilde{V}_k^{j_p})\tilde{V}_k^{j_{p+1}}
-T^{H,H+W,H,\ldots,H}_{(fu^p)^{[p]}}(\breve{W},\tilde{W}^{j_2},\ldots,\tilde{W}^{j_p})\tilde{W}^{j_{p+1}}\\  \nonumber
&\quad-T^{H,\ldots,H}_{(fu^p)^{[p]}}(\tilde{V}_k,\tilde{V}_k^{j_2},\ldots,\tilde{V}_k^{j_p})\tilde{V}_k^{j_{p+1}}
+T^{H,\ldots,H}_{(fu^p)^{[p]}}(\tilde{W},\tilde{W}^{j_2},\ldots,\tilde{W}^{j_p})\tilde{W}^{j_{p+1}}\\ \nonumber
=&\sum_{l=1}^{p+1}T^{H,H+W,H,\ldots,H}_{(fu^p)^{[p]}}(\breve{V}_k,\tilde{V}^{j_2}_k,\ldots,\tilde{V}_k^{j_{l-1}},\tilde{V}_k^{j_l}-\tilde{W}^{j_l},\tilde{W}^{j_{l+1}},\ldots,\tilde{W}^{j_p})\tilde{W}^{j_{p+1}}\\
\nonumber
&-\sum_{l=1}^{p+1}T^{H,\ldots,H}_{(fu^p)^{[p]}}( \tilde{V}_k,\tilde{V}_k^{j_2},\ldots,\tilde{V}_k^{j_{l-1}},\tilde{V}_k^{j_l}-\tilde{W}^{j_l},\tilde{W}^{j_{l+1}},\ldots,\tilde{W}^{j_p})\tilde{W}^{j_{p+1}}\\
\nonumber
=&\sum_{l=1}^{p+1}T^{H,H+W,H,\ldots,H}_{(fu^p)^{[p]}}(\breve{V}_k-\tilde{V}_k,\tilde{V}^{j_2}_k,\ldots,\tilde{V}_k^{j_{l-1}},\tilde{V}_k^{j_l}-\tilde{W}^{j_l},\tilde{W}^{j_{l+1}},\ldots,\tilde{W}^{j_p})\tilde{W}^{j_{p+1}}\\
\nonumber
&+\sum_{l=1}^{p+1}(T^{H,H+W,H,\ldots,H}_{(fu^p)^{[p]}}-T^{H,\ldots,H}_{(fu^p)^{[p]}})( \tilde{V}_k,\tilde{V}_k^{j_2},\ldots,\tilde{V}_k^{j_{l-1}},\tilde{V}_k^{j_l}-\tilde{W}^{j_l},\tilde{W}^{j_{l+1}},\ldots,\tilde{W}^{j_p})\tilde{W}^{j_{p+1}}.
\end{align}
Noticing that $\breve{V}^j-\tilde{V}^j=-\breve{V}^{j+1}$ and applying \eqref{pf} in the last sum in \eqref{rvw2} yields
\begin{align}
\label{rvw2b}
&T^{ H,H+W,H,\ldots,H}_{(fu^p)^{[p]}}(\breve{V}_k,\tilde{V}_k^{j_2},\ldots,\tilde{V}_k^{j_p})\tilde{V}_k^{j_{p+1}}
-T^{ H,H+W,H,\ldots,H}_{(fu^p)^{[p]}}(\breve{W},\tilde{W}^{j_2},\ldots,\tilde{W}^{j_p})\tilde{W}^{j_{p+1}}\\  \nonumber
&\quad-T^{H,\ldots,H}_{(fu^p)^{[p]}}(\tilde{V}_k,\tilde{V}_k^{j_2},\ldots,\tilde{V}_k^{j_p})\tilde{V}_k^{j_{p+1}}
+T^{H,\ldots,H}_{(fu^p)^{[p]}}(\tilde{W},\tilde{W}^{j_2},\ldots,\tilde{W}^{j_p})\tilde{W}^{j_{p+1}}\\
\nonumber
=&\sum_{l=1}^{p+1}\Big(-T^{H,H+W,H,\ldots,H}_{(fu^p)^{[p]}}(\breve{V}_k^2,\tilde{V}^{j_2}_k,\ldots,\tilde{V}_k^{j_{l-1}},\tilde{V}_k^{j_l}-\tilde{W}^{j_l},\tilde{W}^{j_{l+1}},\ldots,\tilde{W}^{j_p})\tilde{W}^{j_{p+1}}\\
\nonumber
&+T^{H,H+W,H,\ldots,H}_{(fu^p)^{[p+1]}}( \tilde{V}_k,W,\tilde{V}_k^{j_2},\ldots,\tilde{V}_k^{j_{l-1}},\tilde{V}_k^{j_l}-\tilde{W}^{j_l},\tilde{W}^{j_{l+1}},\ldots,\tilde{W}^{j_p})\tilde{W}^{j_{p+1}}\Big).
\end{align}
Secondly we handle the summands in \eqref{rvw0} with $j_1\neq1$. By telescoping we obtain
\begin{align}
\label{rvw2c}
&T^{H,\ldots,H}_{(fu^p)^{[p]}}(\breve{V}_k^{j_1},\tilde{V}_k^{j_2},\ldots,\tilde{V}_k^{j_p})\,\tilde{V}_k^{j_{p+1}} -T^{H,\ldots,H}_{(fu^p)^{[p]}}(\breve{W}^{j_1},\tilde{W}^{j_2},\ldots,\tilde{W}^{j_p})\,
\tilde{W}^{j_{p+1}}\\
\nonumber
&-T^{H,\ldots,H}_{(fu^p)^{[p]}}(\tilde{V}_k^{j_1},\tilde{V}_k^{j_2},\ldots,\tilde{V}_k^{j_p})\,\tilde{V}_k^{j_{p+1}} +T^{H,\ldots,H}_{(fu^p)^{[p]}}(\tilde{W}^{j_1},\tilde{W}^{j_2},\ldots,\tilde{W}^{j_p})\,
\tilde{ W}^{j_{p+1}}\\
\nonumber
&=\sum_{l=1}^{p+1}\Big(T^{H,\ldots,H}_{(fu^p)^{[p]}}(\breve{V}_k^{j_1},\tilde{V}^{j_2}_k,\ldots,\tilde{V}_k^{j_{l-1}},\tilde{V}_k^{j_l}-\tilde{W}^{j_l},\tilde{W}^{j_{l+1}},\ldots,\tilde{W}^{j_p})\tilde{W}^{j_{p+1}}\\
\nonumber
&\quad-T^{H,\ldots,H}_{(fu^p)^{[p]}}( \tilde{V}_k^{j_1},\tilde{V}_k^{j_2},\ldots,\tilde{V}_k^{j_{l-1}},\tilde{V}_k^{j_l}-\tilde{W}^{j_l},\tilde{W}^{j_{l+1}},\ldots,\tilde{W}^{j_p})\tilde{W}^{j_{p+1}}\Big)\\
\nonumber
&=-\sum_{l=1}^{p+1}T^{H,\ldots,H}_{(fu^p)^{[p]}}(\breve{V}_k^{j_1+1},\tilde{V}^{j_2}_k,\ldots,\tilde{V}_k^{j_{l-1}},\tilde{V}_k^{j_l}-\tilde{W}^{j_l},\tilde{W}^{j_{l+1}},\ldots,\tilde{W}^{j_p})\tilde{W}^{j_{p+1}}.
\end{align}

Combining \eqref{rvw0}, \eqref{rvw1}, \eqref{rvw2b}, and \eqref{rvw2c} yields
\begin{align}
\label{rvw3a}
&\tilde R^p_{n,H,f}(V_k)-\tilde R^p_{n,H,f}(W)\\
\nonumber
&=\sum_{\substack{j_2,\ldots,j_{p}\geq1\\j_{p+1}\geq0\\j_2+\ldots+j_{p+1}=n-2}}
\Big(T^{H,H+V_k,H+W,H,\ldots,H}_{(fu^p)^{[p+1]}}
(\breve{V}_k,V_k-W,\tilde{V}_k^{j_2},\ldots,\tilde{V}_k^{j_p})\tilde{V}_k^{j_{p+1}}\\
\nonumber
&\quad+\sum_{l=1}^{p+1}\big(T^{H,H+W,H,\ldots,H}_{(fu^p)^{[p+1]}}( \tilde{V}_k,W,\tilde{V}_k^{j_2},\ldots,\tilde{V}_k^{j_{l-1}},\tilde{V}_k^{j_l}-\tilde{W}^{j_l},\tilde{W}^{j_{l+1}},\ldots,\tilde{W}^{j_p})\tilde{W}^{j_{p+1}}\\
\nonumber
&\quad-T^{H,H+W,H,\ldots,H}_{(fu^p)^{[p]}}(\breve{V}_k^2,\tilde{V}^{j_2}_k,\ldots,\tilde{V}_k^{j_{l-1}},\tilde{V}_k^{j_l}-\tilde{W}^{j_l},\tilde{W}^{j_{l+1}},\ldots,\tilde{W}^{j_p})\tilde{W}^{j_{p+1}}\big)\Big)\\
\nonumber
&\quad-\sum_{\substack{j_2,\ldots,j_{p}\geq1\\j_1\ge 2,j_{p+1}\geq0\\j_1+\ldots+j_{p+1}=n-1}}
\sum_{l=1}^{p+1}T^{H,\ldots,H}_{(fu^p)^{[p]}}(\breve{V}_k^{j_1+1},\tilde{V}^{j_2}_k,\ldots,\tilde{V}_k^{j_{l-1}},\tilde{V}_k^{j_l}-\tilde{W}^{j_l},\tilde{W}^{j_{l+1}},\ldots,\tilde{W}^{j_p})\tilde{W}^{j_{p+1}}.
\end{align}
By \eqref{weight1} of Theorem \ref{thm:adding resolvents}\ref{weight}, for $p\geq1$ we have
\begin{align}
\label{rvw3b}
&T^{H,H+V_k,H+W,H,\ldots,H}_{(fu^p)^{[p+1]}}
(\breve{V}_k,V_k-W,\tilde{V}_k^{j_2},\ldots,\tilde{V}_k^{j_p})\tilde{V}_k^{j_{p+1}}\\
\nonumber
&=T^{H,H+V_k,H+W,H,\ldots,H}_{(fu^{p+1})^{[p+1]}}
(\breve{V}_k,(V_k-W)(H+W-i)^{-1},\tilde{V}_k^{j_2},\ldots,\tilde{V}_k^{j_p})\tilde{V}_k^{j_{p+1}}\\
\nonumber
&\quad-T^{H,H+V_k,H,\ldots,H}_{(fu^p)^{[p]}}
(\breve{V}_k,(V_k-W)(H+W-i)^{-1}\tilde{V}_k^{j_2},\tilde{V}_k^{j_3},\ldots,\tilde{V}_k^{j_p})\tilde{V}_k^{j_{p+1}}
\end{align}
and
\begin{align}
\label{rvw3c}
&T^{H,H+W,H,\ldots,H}_{(fu^p)^{[p+1]}}( \tilde{V}_k,W,\tilde{V}_k^{j_2},\ldots,\tilde{V}_k^{j_{l-1}},\tilde{V}_k^{j_l}-\tilde{W}^{j_l},\tilde{W}^{j_{l+1}},\ldots,\tilde{W}^{j_p})\tilde{W}^{j_{p+1}}\\
&\nonumber
=T^{H,H+W,H,\ldots,H}_{(fu^{p+1})^{[p+1]}}( \tilde{V}_k,\tilde{W},\tilde{V}_k^{j_2},\ldots,\tilde{V}_k^{j_{l-1}},\tilde{V}_k^{j_l}-\tilde{W}^{j_l},\tilde{W}^{j_{l+1}},\ldots,\tilde{W}^{j_p})\tilde{W}^{j_{p+1}}\\
\nonumber
&\quad-T^{H,H+W,H,\ldots,H}_{(fu^p)^{[p]}}( \tilde{V}_k,\tilde{W}\tilde{V}_k^{j_2},\tilde{V}_k^{j_3},\ldots,\tilde{V}_k^{j_{l-1}},\tilde{V}_k^{j_l}-\tilde{W}^{j_l},\tilde{W}^{j_{l+1}},\ldots,\tilde{W}^{j_p})\tilde{W}^{j_{p+1}}.
\end{align}
Combining \eqref{rvw3a}--\eqref{rvw3c} yields
\begin{align}
\label{rvw3}
&\tilde R^p_{n,H,f}(V_k)-\tilde R^p_{n,H,f}(W)\\
\nonumber
&=\sum_{\substack{j_1,\ldots,j_{p}\geq1,j_{p+1}\geq0\\
j_1+\ldots+j_{p+1}=n-1}}\bigg(T^{ H,H+V_{k,j_1},H+W_{j_1},H,\ldots,H}_{(fu^{p+1})^{[p+1]}}
(\breve{V}^{j_1}_k,(V_{k,j_1}-W_{j_1})(H+W-i)^{-1},\tilde{V}^{j_2}_k,\ldots,\tilde{V}^{j_p}_k)\tilde{V}^{j_{p+1}}_k\\
\nonumber
&\quad-T^{H,H+V_{k,j_1},H,\ldots,H}_{(fu^p)^{[p]}}(\breve{V}_k^{j_1},(V_{k,j_1}-W_{j_1})(H+W-i)^{-1}\tilde{V}_k^{j_2},\ldots,\tilde{V}_k^{j_p})\tilde{V}_k^{j_{p+1}}\\
\nonumber
&\quad+\sum_{l=1}^{p+1}\Big(T^{ H,H+W_{j_1},H,\ldots,H}_{(fu^{p+1})^{[p+1]}}
(\tilde{V}^{j_1}_k,\tilde{W}_{j_1},\tilde{V}_k^{j_2},\ldots,\tilde{V}_k^{j_{l-1}},\tilde{V}_k^{j_l}-\tilde{W}^{j_l},
\tilde{W}^{j_{l+1}},\ldots,\tilde{W}^{j_p})\tilde{W}^{j_{p+1}}\\
\nonumber
&\quad-T^{H,H+W_{j_1},H,\ldots,H}_{(fu^{p})^{[p]}}
(\tilde{V}^{j_1}_k,\tilde W_{j_1} \tilde{V}_k^{j_2},\ldots,\tilde{V}_k^{j_{l-1}},\tilde{V}_k^{j_l}-\tilde{W}^{j_l},
\tilde{W}^{j_{l+1}},\ldots,\tilde{W}^{j_p})\tilde{W}^{j_{p+1}}\\ \nonumber
&\quad-T^{H,H+W_{j_1},H,\ldots,H}_{(fu^{p})^{[p]}}
(\breve{V}^{j_1+1}_k,\tilde{V}_k^{j_2},\ldots,\tilde{V}_k^{j_{l-1}},\tilde{V}_k^{j_l}-\tilde{W}^{j_l},
\tilde{W}^{j_{l+1}},\ldots,\tilde{W}^{j_p})\tilde{W}^{j_{p+1}}\Big)\bigg).
\end{align}

A straightforward application of the second resolvent identity implies
\begin{align*}
(V_k-W)(H+W-i)^{-1}=(V_k-W)(H-i)^{-1}(\1-W(H+W-i)^{-1}).
\end{align*}
For each $W\in\{V, V_m\}$, by the estimates \eqref{c=1} of
Lemma \ref{lem:approximating V by V_k}, we obtain
\begin{align}
\label{vfromw}
&\tnrm{\tilde{W}}{n}\leq\tnrm{\tilde{V}}{n}.
\end{align}
and
\begin{align*}
\norm{\1-W(H+W-i)^{-1}}\leq 1+\norm{V}.
\end{align*}
By the latter estimate,
\begin{align*}
\nrm{(V_k-W)(H+W-i)^{-1}}{n}\leq(1+\norm{V})\|\tilde V_k-\tilde W\|_n.
\end{align*}
It follows from \eqref{vfromw} and the telescoping identity
$\tilde V_k^j-\tilde W^j=\sum_{i=0}^{j-1}\tilde V_k^i(\tilde V_k-\tilde W)\tilde W^{j-1-i}$ that
$$\tnrm{\tilde{V}_k^j-\tilde{W}^j}{n/j}\leq j\tnrm{\tilde{V}}{n}^{j-1}\tnrm{\tilde{V}_k-\tilde{W}}{n}.$$
Applying the latter bound
and Lemma \ref{lem:Hahn Banach Borel measures} in \eqref{rvw3} implies
\begin{align}\label{eq:diff pth remainders}
&|\Tr(\tilde R^p_{n,H,f}(V_k)-\tilde R^p_{n,H,f}(W))|\nonumber\\
&\quad\leq\sum_{\substack{j_1,\ldots,j_{p}\geq1,j_{p+1}\geq0\\j_1+\ldots+j_{p+1}=n-1}}\Big(c^1_{n,j}
\supnorm{(fu^{p+1})^{(p+1)}}+c^2_{n,j}\supnorm{(fu^p)^{(p)}}\Big)C_{n,V,H}
\tnrm{\tilde{V}_k-\tilde{W}}{n},
\end{align}
for some constants $c^1_{n,j}$ and $c^2_{n,j}$ depending only on $n$ and $j_1,\ldots,j_{p+1}$, and the constant $$C_{n,V,H}:=(1+\norm{V})^{2}\,\|\tilde{V}\|_{n}^{n-1}.$$
If $\supp f\subseteq [-a,a]$, then the fundamental theorem of calculus gives
\begin{align*}
\supnorm{(fu^{p})^{(p)}}\leq2a\supnorm{(fu^p)^{(p+1)}}.
\end{align*}
Since $(fu^{p+1})^{(p+1)}=(fu^p)^{(p+1)}u+(p+1)(fu^p)^{(p)}$, we obtain
	$$\supnorm{(fu^{p+1})^{(p+1)}}\leq(|u(a)|+2a(p+1))\supnorm{(fu^p)^{(p+1)}}.$$
Along with \eqref{eq:diff pth remainders}, the latter two inequalities yield the result for $n\geq 3$.

If $n=1$, then $p=0$ and \eqref{R0} gives $\tilde R^0_{1,H,f}(V_k)-\tilde R^0_{1,H,f}(W)=f(H+V_k)-f(H+W)$.
Hence, by Theorem \ref{dm} and the fundamental theorem of calculus,
\begin{align*}
\tilde R^0_{1,H,f}(V_k)-\tilde R^0_{1,H,f}(W)
&=\int_0^1 T^{H_t,H_t}_{f^{[1]}}(V_k-W)\,dt,
\end{align*}
where $H_t=H+W+t(V_k-W)$. By \eqref{weight1} of Theorem \ref{thm:adding resolvents}\ref{weight} for $j=1$ applied to $T^{H_t,H_t}_{f^{[1]}}(V_k-W)$ and by continuity of the trace, we obtain
\begin{align*}
\Tr(\tilde R^0_{1,H,f}(V_k)-\tilde R^0_{1,H,f}(W))=\int_0^1 \big(&\Tr(T^{H_t,H_t}_{(fu)^{[1]}}((V_k-W)(H_t-i)^{-1}))\\
&-\Tr(f(H_t)(V_k-W)(H_t-i)^{-1})\big)\,dt.
\end{align*}
Noticing that
\begin{align}
\label{tvianot}
\sup_{t\in [0,1]}\|(V_k-W)(H_t-i)^{-1}\|_1&\le(1+\|V_k-W\|)\|\tilde{V}_k-\tilde{W}\|_1\\
\nonumber
&\le(1+2\|V\|)\|\tilde{V}_k-\tilde{W}\|_1
\end{align}
and applying H\"{o}lder's inequality and
the Riesz--Markov representation theorem completes the proof of the result for $n=1$.

If $n=2$, then by Theorem \ref{dm} and the fundamental theorem of calculus,
\begin{align*}
&R_{2,H,f}(V_k)-R_{2,H,f}(W)\\
&=f(H+V_k)-f(H)-T^{H,H}_{f^{[1]}}(V_k)-(f(H+W)-f(H)-T^{H,H}_{f^{[1]}}(W))\\
&=f(H+V_k)-f(H+W)-T^{H,H}_{f^{[1]}}(V_k-W)\\
&=\int_0^1T^{H_t,H_t}_{f^{[1]}}(V_k-W)dt-\int_0^1 T^{H,H}_{f^{[1]}}(V_k-W)dt.
\end{align*}
By \eqref{pf},
\begin{align*}
&R_{2,H,f}(V_k)-R_{2,H,f}(W)\\
&=\int_0^1(T^{H_t,H_t}_{f^{[1]}}(V_k-W)-T^{H,H_t}_{f^{[1]}}(V_k-W)+T^{H,H_t}_{f^{[1]}}(V_k-W)-T^{H,H}_{f^{[1]}}(V_k-W))dt\\
&=\int_0^1(T^{H_t,H,H_t}_{f^{[2]}}(W+t(V_k-W),V_k-W)+T^{H,H_t,H}_{f^{[2]}}(V_k-W,W+t(V_k-W)))dt.
\end{align*}
By Theorem \ref{thm:adding resolvents}\ref{weights},
\begin{align*}
&T^{H_t,H,H_t}_{f^{[2]}}(W+t(V_k-W),V_k-W)\\
&=f(H_t)(W+t(V_k-W))(H-i)^{-1}(V_k-W)(H_t-i)^{-1}\\
&\quad-T_{(fu)^{[1]}}^{H_t,H}\big((W+t(V_k-W))(H-i)^{-1}\big)(V_k-W)(H_t-i)^{-1}\\
&\quad-T_{(fu)^{[1]}}^{H_t,H_t}\big((W+t(V_k-W))(H-i)^{-1}(V_k-W)(H_t-i)^{-1}\big)\\
&\quad+T_{(fu^2)^{[2]}}^{H_t,H,H_t}\big((W+t(V_k-W))(H-i)^{-1},(V_k-W)(H_t-i)^{-1}\big)
\end{align*}
and
\begin{align*}
&T^{H,H_t,H}_{f^{[2]}}(V_k-W,W+t(V_k-W))\\
&=f(H)(V_k-W)(H_t-i)^{-1}(W+t(V_k-W))(H-i)^{-1}\\
&\quad-T_{(fu)^{[1]}}^{H,H_t}\big((V_k-W)(H_t-i)^{-1}\big)(W+t(V_k-W))(H-i)^{-1}\\
&\quad-T_{(fu)^{[1]}}^{H,H}\big((V_k-W)(H_t-i)^{-1}(W+t(V_k-W))(H-i)^{-1}\big)\\
&\quad+T_{(fu^2)^{[2]}}^{H,H_t,H}\big((V_k-W)(H_t-i)^{-1},(W+t(V_k-W))(H-i)^{-1}\big).
\end{align*}
Denote
\begin{align*}
R_{t,t,H,W,V_k,f}^0=&f(H_t)(W+t(V_k-W))(H-i)^{-1}(V_k-W)(H_t-i)^{-1},\\
R_{t,t,H,W,V_k,f}^1=&-T_{(fu)^{[1]}}^{H_t,H}\big((W+t(V_k-W))(H-i)^{-1}\big)(V_k-W)(H_t-i)^{-1}\\
&\quad-T_{(fu)^{[1]}}^{H_t,H_t}\big((W+t(V_k-W))(H-i)^{-1}(V_k-W)(H_t-i)^{-1}\big),\\
R_{t,t,H,W,V_k,f}^2=&T_{(fu^2)^{[2]}}^{H_t,H,H_t}\big((W+t(V_k-W))(H-i)^{-1},(V_k-W)(H_t-i)^{-1}\big),\\
R_{t,H,W,V_k,f}^0=&f(H)(V_k-W)(H_t-i)^{-1}(W+t(V_k-W))(H-i)^{-1},\\
R_{t,H,W,V_k,f}^1=&-T_{(fu)^{[1]}}^{H,H_t}\big((V_k-W)(H_t-i)^{-1}\big)(W+t(V_k-W))(H-i)^{-1}\\
&\quad-T_{(fu)^{[1]}}^{H,H}\big((V_k-W)(H_t-i)^{-1}(W+t(V_k-W))(H-i)^{-1}\big),\\
R_{t,H,W,V_k,f}^2=&T_{(fu^2)^{[2]}}^{H,H_t,H}\big((V_k-W)(H_t-i)^{-1},(W+t(V_k-W))(H-i)^{-1}\big).
\end{align*}
Applying continuity of $t\mapsto\Tr(R_{t,t,H,W,V_k,f}^p)$ and $t\mapsto\Tr(R_{t,H,W,V_k,f}^p)$
(see \cite[Proposition 4.3.15]{ST19}) yields
\begin{align*}
\Tr(R_{2,H,f}(V_k)-R_{2,H,f}(W))=\sum_{p=0}^2\int_0^1\Tr(R_{t,t,H,W,V_k,f}^p+R_{t,H,W,V_k,f}^p)\,dt.
\end{align*}
By Lemma \ref{lem:Hahn Banach Borel measures}, \eqref{c=1}, and an analog of \eqref{tvianot} for the Hilbert-Schmidt norm, we obtain
\begin{align*}
|\Tr(R_{t,t,H,W,V_k,f}^p+R_{t,H,W,V_k,f}^p)|
\leq c_{H,V}\|(fu^p)^{(p)}\|_\infty\|\tilde{V}_k-\tilde{W}\|_2
\end{align*}
for every $t\in[0,1]$, completing the proof of the lemma.
\end{proof}


Below we extend the result of Proposition \ref{prop:SSF locally V_k} to relative Schatten class perturbations.

\begin{proof}[Proof of Proposition \ref{prop:SSF locally}]
Let $(V_k)_k$ be a sequence provided by Lemma \ref{lem:approximating V by V_k}.
For every $p\in\{0,\dots,n-1\}$ and $k\in\N$, let $\breve\eta_{p,k}$ be a function satisfying
$$\Tr(\tilde R^p_{n,H,f}(V_k))=\int(fu^p)^{(p+1)}(x)\breve\eta_{p,k}(x)\,dx,$$
which exists by Proposition \ref{prop:SSF locally V_k}. By Lemma \ref{prop:|Tr(R)|} applied to $W=V_m$, we have
\begin{align*}
\norm{\breve\eta_{p,k}-\breve\eta_{p,m}}_{L^1((-a,a))}
&=\sup_{\substack{ f\in C^{n+1},\\ \supp(f)\subseteq[-a,a],\\
\supnorm{(fu^p)^{(p+1)}}\leq1}}|\Tr(\tilde R^p_{n,H,f}(V_k)-\tilde R^p_{n,H,f}(V_m))|\\
&\leq c_{n,H,V,a}\|\tilde{V}_k-\tilde{V}_m\|_n.
\end{align*}
By Lemma \ref{lem:approximating V by V_k}, the latter expression approaches $0$ as $k\geq m\to\infty$. Thus, $(\breve\eta_{p,k})_k$ is Cauchy with respect to seminorms in which $L^1_\text{loc}$ is complete. Let $\breve\eta_p$ be its $L^1_{\textnormal{loc}}$-limit.

Assume that $f\in C^{n+1}_\textnormal{c}$. We obtain
\begin{align*}
\int_\R(fu^p)^{(p+1)}(x)\,\breve\eta_p(x)\,dx
=&\int_{\supp f}(fu^p)^{(p+1)}(x)\,\breve\eta_p(x)\,dx\\
=&\lim_{k\rightarrow\infty}\int_{\supp f}(fu^p)^{(p+1)}(x)\,\breve\eta_{p,k}(x)\,dx\\
=&\lim_{k\rightarrow\infty}\Tr(\tilde R^p_{n,H,f}(V_k)).
\end{align*}
By Lemma \ref{prop:|Tr(R)|} applied to $W=V$,
$$|\Tr (\tilde R^p_{n,H,f}(V_k)-\tilde R^p_{n,H,f}(V))|\leq c_{n,H,V,a}\supnorm{(fu^p)^{(p+1)}}\|\tilde{V}_k-\tilde{V}\|_n\,$$	
for every $k\in\N$.
Hence, by Lemma \ref{lem:approximating V by V_k},
\begin{align*}
\Tr(\tilde R^p_{n,H,f}(V))=\lim_{k\rightarrow\infty}\Tr(\tilde R^p_{n,H,f}(V_k))
=\int_\R(fu^p)^{(p+1)}(x)\,\breve\eta_p(x)\,dx,
\end{align*}
completing the proof of the result.
\end{proof}

\subsection{Absolute continuity of the spectral shift measure}

In this subsection we prove our main result; existence and properties of a spectral shift function for relative Schatten class perturbations. We will combine the results of \textsection\ref{sct:Existence of the spectral shift measure} and \textsection\ref{sct:Alternative trace formula}, namely, Propositions \ref{prop:SSM with growth} and \ref{prop:SSF locally}.

\begin{proof}[Proof of Theorem \ref{rsmain}]
Let $f\in C_\textnormal{c}^{n+1}$. We provide a proof in the case $n\ge 3$; the cases $n=1$ and $n=2$ can be proved completely analogously.

Applying the general Leibniz differentiation rule on the right-hand side of \eqref{trfp} (see Proposition \ref{prop:SSF locally}) gives
\begin{align*}
\Tr(R_{n,H,f}(V))
&=\sum_{p=0}^{n-1}(-1)^{n-1-p}\int_\R(fu^p)^{(p+1)}(x)\,\breve\eta_p(x)\,dx.\\
&=\sum_{p=0}^{n-1}(-1)^{n-1-p}\sum_{k=0}^{p+1}\int_\R\vect{p+1}{k} f^{(k)}(x)(u^p)^{(p+1-k)}(x)\breve\eta_p(x)\,dx\,\\
&=\sum_{p=0}^{n-1}(-1)^{n-1-p}\sum_{k=1}^{p+1}\int_\R f^{(k)}(x) \vect{p+1}{k}\frac{p!}{(k-1)!}u^{k-1}(x)\breve\eta_p(x)\,dx.
\end{align*}
Integration by parts gives
\begin{align}
\nonumber
\Tr(R_{n,H,f}(V))=&\sum_{p=0}^{n-1}\int_\R f^{(p+1)}(x)\tilde{\eta}_p(x)\,dx,
\end{align}
where
\begin{align*}
\tilde{\eta}_p(t)
=\sum_{k=1}^{p+1}\frac{(-1)^{n-k}\,(p+1)!\,p!}{(p+1-k)!\,k!\,(k-1)!}\int_0^t ds_1\int_0^{s_1}ds_2\cdots\int_0^{s_{p-k}}u^{k-1}(x)\breve\eta_p(x)\,dx.
\end{align*}
Subsequent integration by parts gives
\begin{align}
\label{eta}
\nonumber&\Tr(R_{n,H,f}(V))\\
\nonumber
&=\int_\R f^{(n)}(x)\Bigg(\sum_{p=0}^{n-1}(-1)^{n-1-p}\int_0^xds_1\int_0^{s_1}ds_2\cdots
\int_0^{s_{n-p-2}}\tilde{\eta}_p(t)\,dt\Bigg)\,dx\quad\\
 &=:\int_\R f^{(n)}(x)\grave\eta_n(x)\,dx
\end{align}
for every $f\in C_\textnormal{c}^{n+1}$.
Since $\breve{\eta}_p\in L^1_{\textnormal{loc}}$ (see Proposition \ref{prop:SSF locally}),
we have that $\tilde{\eta}_p\in L^1_{\textnormal{loc}}$ and, hence, $\grave\eta_n\in L^1_{\textnormal{loc}}$.

By Proposition \ref{prop:SSM with growth}, there exists a locally finite Borel measure $\mu_n$ satisfying \eqref{mu tilde} and determined by \eqref{mu tilde} for every $f\in C_\textnormal{c}^{n+1}$ uniquely up to an absolutely continuous measure whose density is a polynomial of degree at most $n-1$.
Combining the latter with \eqref{eta} implies
\begin{align}
\label{311}
d\mu_n(x)=\grave\eta_n(x)dx+p_{n-1}(x)dx =:\acute\eta_n(x)dx,
\end{align}
where $p_{n-1}$ is a polynomial of degree at most $n-1$. By Proposition \ref{prop:SSM with growth}, the function $\acute\eta_n:=\grave\eta_n+p_{n-1}$ satisfies \eqref{tff} for every $f\in\mW_n$. The fact that $u^{-n-\epsilon}d\mu_n$ is a finite measure translates to $\acute\eta_n\in L^1(\R,u^{-n-\epsilon}(x)dx)$.

It follows from \eqref{munfla} that
\begin{align*}
\|u^{-n-\epsilon}\,d\mu_n\|\le\|u^{-\epsilon}\|_\infty\|\nu_n\|
+\|u^{-n-\epsilon}\,\xi_n\|_1.
\end{align*}
Along with \eqref{eq:nu bound} and \eqref{eq:xi bound}, the latter implies
\begin{align*}
\|u^{-n-\epsilon}\,d\mu_n\|\le c_n(1+\|u^{-1-\epsilon}\|_1)(1+\norm{V})\nrm{V(H-i)^{-1}}{n}^n.
\end{align*}
Since
\begin{align}
\label{312}
\int_0^1(1+x^2)^{(-1-\epsilon)/2}\,dx\le 1\quad\text{and}\quad
\int_1^\infty(1+x^2)^{(-1-\epsilon)/2}\,dx\le\int_1^\infty x^{-1-\epsilon}\,dx=\epsilon^{-1},
\end{align}
we obtain the bound
\begin{align}
\label{313}
\|u^{-n-\epsilon}\,d\mu_n\|\le c_n\,(1+\epsilon^{-1})(1+\norm{V})\nrm{V(H-i)^{-1}}{n}^n,
\end{align}
which translates to
	$$\int_\R |\acute\eta_n(x)|\,\frac{dx}{(1+|x|)^{n+\epsilon}}\leq c_n(1+\epsilon^{-1})(1+\norm{V})\|V(H-i)^{-1}\|_n^n.$$
We define $$\eta_n:=\Re(\acute\eta_n),$$ and obtain \eqref{eta estimate} by using $|\eta_n|\leq|\acute\eta_n|$.
As we have seen, $\acute\eta_n$ satisfies \eqref{tff} for all $f\in\mW_n$. Therefore,
\begin{align}
\label{eq:tr formula real and im part}
	\Tr(R_{n,H,f}(V))=\int_\R f^{(n)}(x)\eta_n(x)\,dx+i\int_\R f^{(n)}(x)\Im(\acute\eta_n(x))\,dx.
\end{align}
When $f\in \mW_n$ is real-valued, the left-hand side of \eqref{eq:tr formula real and im part} is real, and consequently the second term on the right-hand side of \eqref{eq:tr formula real and im part} vanishes. The latter implies \eqref{tff} for real-valued $f\in\mW_n$. By applying \eqref{tff} to the real-valued functions $\Re(f)$ and $\Im(f)$, we extend \eqref{tff} to all $f\in\mW_n$.

The uniqueness of $\eta_n$ satisfying \eqref{tff} up to a polynomial summand of order at most $n-1$ can be established completely analogously to the uniqueness of the measure $\mu_n$ established in Proposition \ref{prop:SSM with growth}.
\end{proof}

\section{Examples}
\label{sec4}

In this section we discuss two classes of examples -- arising in mathematical physics and noncommutative geometry, respectively -- that satisfy the condition \eqref{vresinsn}.

\subsection{Differential operators}
\label{sec4b}
In this section we consider conditions sufficient for perturbations of Dirac and Schr\"{o}dinger operators to satisfy \eqref{vresinsn}.

We will consider self-adjoint perturbations $V=M_v$ given by multiplication by a real-valued function $v\in L^\infty(\R^d)$.
%
Let $$\Delta=\sum_{k=1}^d\frac{\partial^2}{\partial x_k^2}$$ denote the Laplacian operator densely defined in the Hilbert space $ L^2(\mathbb{R}^d)$.

For $m\geq0$, let $D_m$ denote the free massive Dirac operator defined as follows. For $d\in\mathbb{N},$ let $N(d):=2^{\lfloor (d+1)/2\rfloor}$. Let $e_k\in M_{N(d)}(\mathbb{C}),$ $0\leq k\leq d,$ be the Clifford generators, that is, self-adjoint matrices satisfying $e_k^2=\1$ for $0\leq k\leq d$ and
$e_{k_1}e_{k_2}=-e_{k_2}e_{k_1}$ for $0\leq k_1,k_2\leq d,$ such that
$k_1\neq k_2.$ Let $\frac{\partial}{i\partial x_k}:=\frac{1}{i}\frac{\partial}{\partial x_k}$. Then, the operator
$$D_m:=e_0\otimes m\1+\sum_{k=1}^de_k\otimes \frac{\partial}{i\partial x_k}$$
is densely defined in the Hilbert space $\mathbb{C}^{N(d)}\otimes L^2(\mathbb{R}^d).$
We note that $D_0$ is unitarily equivalent to $\1\otimes D$, where $\1\in M_{N(d)/N(d-1)}(\C)$ and $D$ is the usual massless Dirac operator. We also note that, in the case when $d=1$, the Dirac operator $D_0=\1\otimes\frac{\partial}{i\partial x}$ can be identified with the differential operator $\frac{\partial}{i\partial x}$ in the Hilbert space $L^2(\R)$.

The Schatten class membership of the weighted resolvents below was derived in \cite[Theorem 3.3 and Remark 3.6]{S21}.
To estimate the respective Schatten norms one just needs to carefully follow the proof of the latter result. The respective result for $p\in [1,2)$ is found in \cite{vNS20}, see also \cite{S21}.

\begin{thm}
\label{src}
Let $d\in\N$, $2\leq p<\infty$. Let
$v\in L^p(\R^d)\cap L^\infty(\R^d)$
be real-valued.
\begin{enumerate}[label=\textnormal{(\roman*)}]
\item \label{srci}
If $p>d$ and $m\ge 0$, then
$(\1\otimes M_v)(D_m-i)^{-1}\in\S^p$
and
\begin{align}
\label{wrd}
\|(\1\otimes M_v)(D_m-i)^{-1}\|_p\le c_{d,p}\|v\|_{p}.
\end{align}

\item \label{srcii}
If $p>\frac{d}{2}$, then
$M_v(-\Delta-i)^{-1}\in\S^p$
and
\begin{align}
\label{wrs}
\|M_v(-\Delta-i)^{-1}\|_p\le c_{d,p}\|v\|_{p}.
\end{align}
\end{enumerate}
\end{thm}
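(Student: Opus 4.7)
The plan is to reduce both parts to the Kato--Seiler--Simon inequality: for $p\ge 2$ and scalar symbols $h,g\in L^p(\R^d)$,
$$\|M_h\,g(-i\nabla)\|_p\le(2\pi)^{-d/p}\|h\|_p\|g\|_p,$$
together with its standard matrix-valued extension, in which a matrix-valued symbol $g:\R^d\to M_N(\C)$ is controlled by $\bigl(\int_{\R^d}\|g(\xi)\|_{\S^p(\C^N)}^p\,d\xi\bigr)^{1/p}$. This is classical and can be found, for instance, in Simon's \emph{Trace ideals and their applications}.

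For \ref{srcii}, the resolvent $(-\Delta-i)^{-1}$ is the scalar Fourier multiplier with symbol $(|\xi|^2-i)^{-1}$, so its absolute value equals $(1+|\xi|^4)^{-1/2}$. Applying Kato--Seiler--Simon gives
$$\|M_v(-\Delta-i)^{-1}\|_p\le(2\pi)^{-d/p}\|v\|_p\Bigl(\int_{\R^d}(1+|\xi|^4)^{-p/2}\,d\xi\Bigr)^{1/p},$$
and the remaining integral converges iff $2p>d$, matching the hypothesis.

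For \ref{srci}, the massive Dirac operator $D_m$ is the matrix-valued Fourier multiplier with symbol $A(\xi):=e_0\otimes m\mathbf{1}+\sum_{k=1}^d\xi_k e_k$. The Clifford relations $e_je_k+e_ke_j=2\delta_{jk}\mathbf{1}$ imply $A(\xi)^2=(m^2+|\xi|^2)\mathbf{1}_{N(d)}$, so that
$$(A(\xi)-i)^{-1}=(A(\xi)+i)(1+m^2+|\xi|^2)^{-1}$$
is an $N(d)\times N(d)$ matrix whose Schatten $p$-norm on $\C^{N(d)}$ is bounded by $c_{d}(1+m^2+|\xi|^2)^{-1/2}\le c_d(1+|\xi|^2)^{-1/2}$, uniformly in $m\ge 0$. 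The matrix-valued Kato--Seiler--Simon bound then produces
$$\|(\mathbf{1}\otimes M_v)(D_m-i)^{-1}\|_p\le c_{d,p}\|v\|_p\Bigl(\int_{\R^d}(1+|\xi|^2)^{-p/2}\,d\xi\Bigr)^{1/p},$$
and this last integral is finite precisely when $p>d$.

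The main obstacles are minor: first, one must ensure uniformity of the constant in $m\ge 0$, which is automatic from the pointwise domination of the Dirac symbol by its $m=0$ version; second, one needs the matrix-valued form of Kato--Seiler--Simon, which reduces to the scalar case either by decomposing $A(\xi)$ entrywise and using the triangle inequality in $\S^p$, or by identifying $\C^{N(d)}\otimes L^2(\R^d)$ with $L^2(\R^d;\C^{N(d)})$ and repeating Simon's argument verbatim. Tracking constants then yields the asserted dependence $c_{d,p}\|v\|_p$.
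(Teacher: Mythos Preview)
Your proposal is correct and follows the standard route via the Kato--Seiler--Simon inequality. The paper itself does not give a self-contained proof but refers to \cite[Theorem~3.3 and Remark~3.6]{S21}; the argument there is based on exactly the same tool, so your approach is essentially the same as the one the paper invokes.
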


\begin{rema}
\label{perturbedr}
The bounds analogous to \eqref{wrd} and \eqref{wrs} can also be established for perturbed Dirac $D_m+W$ and perturbed Schr\"{o}dinger $-\Delta+W$ operators, respectively. The respective results follow from Theorem \ref{src} and
Proposition \ref{perturbedp} below. In particular, we have the following bound for a massive Dirac operator with
electromagnetic potential in the case $p>d$:
\begin{align*}
&\Big\|(\1\otimes M_v)\Big(D_m+\sum_{k=1}^{d}e_k\otimes M_{w_k}+\1\otimes M_{w_{d+1}}-i\Big)^{-1}\Big\|_p
\le c_{d,p}\big(1+\max_{1\le k\le d+1}\|w_k\|_{\infty}\big)\|v\|_{p},
\end{align*}
for all real-valued functions $w_1,\ldots,w_{d+1}\in C_\textnormal{b}(\R)$.
The same reasoning applies to generalized Dirac operators $\1\otimes D+W$, where $\1\in M_k(\C)$ for $k\in\N$ and $W\in\mB(\C^k\otimes\H)_{\sa}$, that are associated with almost-commutative spectral triples (see \cite[Chapter 8]{Sui15}).
\end{rema}

\begin{prop}
\label{perturbedp}
Let $H,V$ be self-adjoint operators in $\H$ and $W\in\mB(\H)_{\sa}$. Let $1\le p<\infty$ and assume that
$\|V(H-i)^{-1}\|_p<\infty$. Then,
\begin{align*}
\|V(H+W-i)^{-1}\|_p\le\|V(H-i)^{-1}\|_p(1+\|W\|).
\end{align*}
\end{prop}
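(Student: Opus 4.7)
The natural approach is via the second resolvent identity. The plan is to write
\begin{align*}
(H+W-i)^{-1} = (H-i)^{-1} - (H-i)^{-1}W(H+W-i)^{-1},
\end{align*}
which follows formally from $(H+W-i)-W = (H-i)$ after multiplying on the left by $(H-i)^{-1}$ and on the right by $(H+W-i)^{-1}$. Since $H+W$ is self-adjoint (as the sum of a self-adjoint operator and a bounded self-adjoint operator, so self-adjointness of $H+W$ on the domain of $H$ is ensured), both resolvents are everywhere defined bounded operators, so this manipulation is rigorous.

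Multiplying on the left by $V$ and applying the triangle inequality in $\S^p$ yields
\begin{align*}
\|V(H+W-i)^{-1}\|_p \le \|V(H-i)^{-1}\|_p + \|V(H-i)^{-1}W(H+W-i)^{-1}\|_p.
\end{align*}
For the second term, I would use the ideal property of $\S^p$, namely $\|ABC\|_p \le \|A\|_p\|B\|\|C\|$, to obtain the bound $\|V(H-i)^{-1}\|_p \, \|W\| \, \|(H+W-i)^{-1}\|$.

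The final step is to observe that $\|(H+W-i)^{-1}\| \le 1$, which holds because $H+W$ is self-adjoint and so its spectrum is real, giving $\operatorname{dist}(i,\sigma(H+W))\ge 1$ and hence $\|(H+W-i)^{-1}\| \le 1$ by the functional calculus. Combining the two estimates gives
\begin{align*}
\|V(H+W-i)^{-1}\|_p \le \|V(H-i)^{-1}\|_p(1+\|W\|),
\end{align*}
as required. There is no real obstacle here: the argument is a routine application of the second resolvent identity together with the uniform resolvent bound at $i$ coming from self-adjointness of $H+W$.
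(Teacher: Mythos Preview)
Your proof is correct and follows essentially the same approach as the paper: the paper's proof also invokes the second resolvent identity, multiplies by $V$, and applies H\"older's inequality for Schatten norms. You have simply spelled out the details (the bound $\|(H+W-i)^{-1}\|\le 1$ from self-adjointness) that the paper leaves implicit.
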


\begin{proof}
The result follows from the second resolvent identity
\begin{align*}
(H+W-i)^{-1}=(H-i)^{-1}-(H-i)^{-1}W(H+W-i)^{-1}
\end{align*}
upon multiplying it by $V$ and applying H\"{o}lder's inequality for Schatten norms.
\end{proof}

\subsection{Noncommutative geometry}
\label{sec4a}
In this subsection we show that the relative Schatten class condition occurs naturally in noncommutative geometry, namely, in inner perturbations of regular locally compact spectral triples, according to Definition \ref{lcst} below. A locally compact spectral triple is a generalization of a finitely summable spectral triple (Definition \ref{defi:st}) to the case where the algebra is possibly nonunital. Variations on this definition occur in, e.g., \cite[Definitions 2.4 and 2.5]{CGPRS}, \cite[Definitions 2.1 and 2.15 and Proposition 2.14]{CGRS}, \cite[Definition 7.7]{CPR}, \cite[Hypothesis 1.2.1 and \textsection 2.2.3]{SZ18}. In any case, many examples (including noncommutative field theory, \cite{GGISV}) satisfy the definition that is given below.

Let $\operatorname{dom}(D)$ denote the domain of any operator $D$
and let \begin{align*}
\delta_D(T):=[|D|,T]
\end{align*}
be defined on those $T\in\mB(\H)$ for which $\delta_D(T)$ extends to a bounded operator.

\begin{defi}
\label{lcst}
A \textbf{locally compact spectral triple} $(\mathcal{A},\H,D)$ consists of a separable Hilbert space $\H$, a self-adjoint operator $D$ in $\H$ and a *-algebra $\mathcal{A}\subseteq \B(\H)$ such that $a(\operatorname{dom}(D))\subseteq\operatorname{dom}(D)$, $[D,a]$ extends to a bounded operator, $a(D-i)^{-1}$ is compact, and $a(D-i)^{-s}\in \S^1$ for all $a\in\mathcal{A}$ and some $s\in\N$, called the summability of $(\mathcal{A},\H,D)$.
	A (locally compact) spectral triple $(\mathcal{A},\H,D)$ is called \textbf{regular} if for all $a\in\mathcal{A}$, we have $a,[D,a]\in\bigcap_{k=1}^\infty \dom(\delta_D^k)$.
\end{defi}

The following result might be known, but it seems like it was never explicitly proven, although a similar statement is made in \cite{SZ18}, and  \cite{CGRS} proves very related results.
Let $\Omega^1_D(\mathcal{A}):=\{\sum_{j=1}^n a_j[D,b_j]: a_j,b_j\in\mathcal{A}, n\in\N\}$ denote the set of inner fluctuations \cite{CC97} or \textit{Connes' differential one-forms}.

\begin{thm}
	A regular locally compact spectral triple $(\mathcal{A},\H, D)$ of summability $s$ satisfies $V(D-i)^{-1}\in\S^s$ for all $V\in\Omega^1_D(\mathcal{A})$.
\end{thm}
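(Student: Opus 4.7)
The plan is to reduce $V$ to a monomial $a[D,b]$ and then to isolate the Schatten behavior into a single factor $a(D-i)^{-1}$, by means of a commutator identity that sweeps the rest of $V(D-i)^{-1}$ into bounded operators. By linearity of $\Omega^1_D(\A)$, it suffices to consider $V=a[D,b]$ with $a,b\in\A$. Applying the identity $[A^{-1},B]=-A^{-1}[A,B]A^{-1}$ with $A=D-i$ and $B=[D,b]$ (noting $[D-i,[D,b]]=[D,[D,b]]$) yields
\[
[D,b](D-i)^{-1}=(D-i)^{-1}[D,b]+(D-i)^{-1}[D,[D,b]](D-i)^{-1},
\]
so that
\[
V(D-i)^{-1}=a(D-i)^{-1}\,[D,b]+a(D-i)^{-1}\,[D,[D,b]]\,(D-i)^{-1}.
\]
Each summand factors as $a(D-i)^{-1}$ times an element of $\mB(\H)$, so it suffices to establish (i) $a(D-i)^{-1}\in\S^s$ for every $a\in\A$, and (ii) $[D,[D,b]]$ extends to a bounded operator.

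For (i), I would use complex interpolation. The polar decomposition $(D-i)^{-s}=U_s(D^2+1)^{-s/2}$ with $U_s$ a unitary function of $D$, combined with the unitary invariance of Schatten norms on the right, gives $\|a(D^2+1)^{-s/2}\|_1=\|a(D-i)^{-s}\|_1<\infty$ from the summability hypothesis. Define the analytic family $T_z:=a(D^2+1)^{-sz/2}$ on the strip $0\le\Re z\le 1$: one has $T_0=a\in\S^\infty$, $T_1=a(D^2+1)^{-s/2}\in\S^1$, and on the vertical lines $\Re z\in\{0,1\}$ the operator $(D^2+1)^{-isy/2}$ is unitary, so the boundary norms are uniformly bounded in $\Im z$. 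Stein's complex interpolation theorem then gives $T_{1/s}=a(D^2+1)^{-1/2}\in\S^s$, and one further application of the unitary equivalence yields $a(D-i)^{-1}\in\S^s$.

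For (ii), which I expect to be the main obstacle, the difficulty is that the paper's regularity condition is phrased in terms of $\delta_D=[|D|,\cdot]$, whereas one needs boundedness of the iterated commutator $[D,[D,b]]$. The plan is to invoke the standard fact from the theory of regular spectral triples that the $\delta_D$-smooth algebra $\bigcap_k\dom(\delta_D^k)$ is closed under $\text{ad}_D=[D,\cdot]$, the equivalence being established via the polar decomposition $D=F|D|$ and the resulting identity $[D,c]=F[|D|,c]+[F,c]|D|$ applied iteratively (the Connes--Moscovici argument). Granting this, $[D,b]\in\bigcap_k\dom(\delta_D^k)$ forces $[D,[D,b]]$ to be bounded. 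Combining everything: $a(D-i)^{-1}\in\S^s$, while $[D,b]$ and $[D,[D,b]](D-i)^{-1}$ lie in $\mB(\H)$, so both summands of $V(D-i)^{-1}$ lie in $\S^s\cdot\mB(\H)=\S^s$, which gives $V(D-i)^{-1}\in\S^s$ as required.
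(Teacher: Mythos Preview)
Your step (ii) is where the argument breaks: the claim that $[D,[D,b]]$ is bounded is false in general. Take the canonical Dirac triple on $\T^2$, with $D=\sigma^1 P_1+\sigma^2 P_2$ (Pauli matrices, $P_j=-i\partial_j$) and $b\in C^\infty(\T^2)$. A direct computation gives
\[
[D,[D,b]]=-(\Delta b)\,\1+2i\,\sigma^1\sigma^2\big((\partial_1 b)P_2-(\partial_2 b)P_1\big),
\]
and the second summand is a first-order differential operator, unbounded for every non-constant $b$. This spectral triple is regular, so regularity does \emph{not} force $[D,[D,b]]\in\mB(\H)$. The Connes--Moscovici machinery you invoke relates $\delta_D$-smoothness to smoothness under $[(1+D^2)^{1/2},\cdot]$ or weighted $[D^2,\cdot]$, not to iterated $[D,\cdot]$.

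The repair is simple: your decomposition only needs $[D,[D,b]](D-i)^{-1}$ to be bounded, and this follows at once from $[D,b]\in\dom(\delta_D)$. Writing $|D|\,[D,b]=[D,b]\,|D|+\delta_D([D,b])$ on $\dom(D)$ shows that $[D,b]$ preserves $\dom(D)$ and that $\|D[D,b]\phi\|\le\||D|[D,b]\phi\|\le\|[D,b]\|\,\||D|\phi\|+\|\delta_D([D,b])\|\,\|\phi\|$; combined with $[D,b]\,D(D-i)^{-1}=[D,b](\1+i(D-i)^{-1})$ this gives the required bound. With that fix (and your interpolation argument (i), which is fine) the proof works. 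The paper proceeds differently: it stays with $|D|$-commutators throughout, commuting $[D,b_j]$ past powers of $(|D|-i)^{-1}$ to obtain $V(D-i)^{-s}\in\S^1$, and then deduces $V(D-i)^{-1}\in\S^s$ by exhibiting $|V(D-i)^{-1}|^s\in\S^1$ directly, with a case split on the parity of $s$, avoiding interpolation entirely.
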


\begin{proof}
Let $V=\sum_{j=1}^n a_j[D,b_j]\in\Omega^1_D(\mathcal{A})$ be arbitrary and let $\delta:=\delta_D$.
For all $X\in\bigcap_{k=1}^\infty \dom(\delta^k)$ we have
\begin{align*}
X(|D|-i)^{-1}=(|D|-i)^{-1}X+(|D|-i)^{-1}\delta(X)(|D|-i)^{-1}.
\end{align*}
By induction, for all $X\in\bigcap_{k=1}^\infty\dom(\delta^k)$ there exists some $Y\in\bigcap_{k=1}^\infty\dom(\delta^k)$ such that
\begin{align}\label{eq:Commute with powers of resolvents of |D|}
X(|D|-i)^{-s}=(|D|-i)^{-s}Y.
\end{align}

Since $[D,b_j]\in \bigcap_{k=1}^\infty \dom(\delta^k)$ for all $j$ and since $g:\R\to\C,t\mapsto (|t|-i)/(t-i)$ is continuous and bounded, we have $g(D)\in\mB(\H)$ and there exist some $Y_j\in\mB(\H)$ such that
\begin{align*}
	V(D-i)^{-s}&=\sum_j a_j[D,b_j](|D|-i)^{-s}g(D)^s\\
	&=\sum_j a_j(|D|-i)^{-s}Y_jg(D)^s=\sum_j a_j(D-i)^{-s}g(D)^{-s}Y_jg(D)^s\in\S^1.
\end{align*}

More generally, let $X_1,\ldots,X_m\in\bigcap_{k=1}^\infty \dom(\delta^k)$, let $k_1,\ldots,k_m\in\N$ and set $k=\sum_{j=1}^m k_j$. By induction, noting that $\bigcap_{k=1}^\infty\dom(\delta^k)$ is an algebra, and applying \eqref{eq:Commute with powers of resolvents of |D|} to $s=k_j$, we obtain
	\begin{align*}
		\prod_{j=1}^m X_j(D-i)^{-k_j}=(D-i)^{-k}Y,
	\end{align*}
	for some $Y\in\bigcap_{k=1}^\infty \dom(\delta^k)$. If $s$ is even, we obtain
	\begin{align*}
		|(D+i)^{-1}V^*|^{s}&=V(D^2+\1)^{-1}V^*\cdots V(D^2+\1)^{-1}V^*\\
		&=V(D-i)^{-s}Y\in\S^1,
	\end{align*}
	for some $Y\in\bigcap_{k=1}^\infty\dom(\delta^k)$. Therefore, $V(D-i)^{-1}=((D+i)^{-1}V^*)^*\in\S^{s}$.

If $s$ is odd, we use polar decomposition to obtain $U\in\mB(\H)$ such that $|V(D-i)^{-1}|=UV(D-i)^{-1}$. Hence,
\begin{align*}
		|V(D-i)^{-1}|^{s}&=UV(D-i)^{-1}|V(D-i)^{-1}|^{s-1}\\
		&=UV(D^2+\1)^{-1}V^*\cdots V(D^2+\1)^{-1}V^*V(D-i)^{-1}\\
		&=UV(D-i)^{-s}Y'\in\S^1
	\end{align*}
	for some $Y'\in\bigcap_{k=1}^\infty\dom(\delta^k)$. Therefore, $V(D-i)^{-1}\in\S^s$.
\end{proof}

\chapter{Cyclic Cocycles in the Spectral Action}
\label{ch:Cyclic cocycles in the spectral action}
In this chapter, adapted from \cite{vNvS21a} and \cite{vNvS21b}, we show that the spectral action, when perturbed by an inner fluctuation, can be written as a series of Chern--Simons actions and Yang--Mills actions of all orders. In the odd orders, generalized Chern--Simons forms are integrated against an odd $(b,B)$-cocycle, whereas, in the even orders, powers of the curvature are integrated against $(b,B)$-cocycles that are Hochschild cocycles as well. In both cases, the Hochschild cochains are derived from the Taylor series expansion of the spectral action $\Tr(f(D+V))$ in powers of $V=\pi_D(A)$, but unlike the Taylor expansion we expand in increasing order of the forms in $A$. This extends \cite{CC06}, which computes only the scale-invariant part of the spectral action, works in dimension at most 4, and assumes the vanishing tadpole hypothesis. In our situation, we obtain a truly infinite odd $(b,B)$-cocycle. The analysis involved draws from results in multiple operator integration obtained in Chapter \ref{ch:MOI}, which also allows us to give conditions under which this cocycle is entire, and under which our expansion is absolutely convergent. As a first application of our expansion and of the gauge invariance of the spectral action, we show that the odd $(b,B)$-cocycle pairs trivially with $K_1$. As a second application, we show that a natural proposal for a quantum effective spectral action at one loop satisfies a similar expansion formula, and is hence an indication of a renormalization flow in the space of cyclic cocycles.

Results in this chapter were obtained in collaboration with Walter van Suijlekom.

\section{Introduction}
\label{sct:Cyclic Introduction}
The spectral action \cite{CC96,CC97} is one of the key instruments in the applications of noncommutative geometry to particle physics. With inner fluctuations \cite{C96} of a noncommutative manifold playing the role of gauge potentials, the spectral action principle yields the corresponding Lagrangians. Indeed, the asymptotic behavior of the spectral action for small momenta leads to experimentally testable field theories, by interpreting the spectral action as a classical action and applying the usual renormalization group techniques. In particular, this provides the simplest way known to geometrically explain the dynamics and interactions of the gauge bosons and the Higgs boson in the Standard Model Lagrangian as an effective field theory \cite{CCM07} (see also the textbooks \cite{CM07,Sui15}). More general noncommutative manifolds (spectral triples) can also be captured by the spectral action principle, leading to models beyond the standard model as well. As shown in \cite{CC06}, if one restricts to the scale-invariant part, one may naturally identify a Yang--Mills term and a Chern--Simons term to elegantly appear in the spectral action. From the perspective of quantum field theory, the appearance of these field-theoretic action functionals sparks hope that we might find a way to go beyond the classical framework provided by the spectral action principle. It is thus a natural question whether we can also field-theoretically describe the full spectral action, without resorting to the scale-invariant part. 


Motivated by this, we study the spectral action when it is expanded in terms of inner fluctuations associated to an arbitrary noncommutative manifold, without resorting to heat-kernel techniques. Indeed, the latter are not always available and an understanding of the full spectral action could provide deeper insight into how gauge theories originate from noncommutative geometry. Let us now give a more precise description of our setup.

We let $(\A,\H,D)$ be an $s$-summable spectral triple (\textit{cf.} Definition \ref{defi:st} below). If $f : \R \to \C$ is a suitably nice function we may define the spectral action \cite{CC97}:
$$
\Tr (f(D)).
$$
An inner fluctuation, as explained in \cite{C96}, is given by a Hermitian universal one-form
\begin{align}\label{eq:A uitgeschreven}
	A=\sum_{j=1}^n a_jdb_j\in\Omega^1(\A),
\end{align}
for elements $a_j,b_j\in\A$. The terminology `fluctuation' comes from representing $A$ on $\H$ as
\begin{align}\label{eq:V uitgeschreven}
	V:=\pi_D(A)=\sum_{j=1}^n a_j[D,b_j]\in\mB(\H)_\sa,
\end{align}
and fluctuating $D$ to $D+V$ in the spectral action.
 %
The variation of the spectral action under the inner fluctuation is then given by
\begin{align}\label{variation of SA}
	\Tr(f(D+V))-\Tr(f(D)).
\end{align}
As spectral triples can be understood as noncommutative spin$^\text{c}$ manifolds (see \cite{C08}) encoding the gauge fields as an inner structure, one could hope that perturbations of the spectral action could be understood in terms of noncommutative versions of geometrical, gauge theoretical concepts. Hence we would like to express \eqref{variation of SA} in terms of universal forms constructed from $A$. To express an action functional in terms of universal forms, one is naturally led to cyclic cohomology. As it turns out, hidden inside the spectral action we will identify an odd $(b,B)$-cocycle $(\tilde\psi_1,\tilde\psi_3,\ldots)$ and an even $(b,B)$-cocycle $(\phi_2,\phi_4,\ldots)$ for which $b\phi_{2k}=B\phi_{2k}=0$, i.e., each Hochschild cochain $\phi_{2k}$ forms its own $(b,B)$-cocycle $(0,\ldots,0,\phi_{2k},0,\ldots)$. On the other hand, the odd $(b,B)$-cocycle $(\tilde\psi_{2k+1})$ is truly infinite (in the sense of \cite{C94}). 

 The main result of this chapter is that for suitable $f:\R\to\C$ we may expand
\begin{align}\label{eq:expansion intro}
	\Tr(f(D+V)-f(D))=\sum_{k=1}^\infty\left(\int_{\psi_{2k-1}}\cs_{2k-1}(A)+\frac{1}{2k}\int_{\phi_{2k}}F^{k}\right),
\end{align}
	in which the series converges absolutely. Here $\psi_{2k-1}$ is a scalar multiple of $\tilde\psi_{2k-1}$, $F_t=tdA+t^2A^2$, so that $F=F_1$ is the curvature of $A$, and $\cs_{2k-1}(A)=\int_0^1 AF_t^{k-1}dt$ is a generalized noncommutative Chern--Simons form. We also give a bound on the remainder of this expansion.

As already mentioned, a similar result was shown earlier to hold for the scale-invariant part $\zeta_D(0)$ of the spectral action. Indeed,  Connes and Chamseddine \cite{CC06} expressed the variation of the scale-invariant part in dimension $\leq 4 $ as
\begin{equation*}
\zeta_{D+V}(0) - \zeta_D(0) = - \frac 1 4\int_{\tau_0} (dA+A^2) + \frac 12 \int_\psi \left(A d A + \frac 2 3 A^3\right),
\end{equation*}
for a certain Hochschild 4-cocycle $\tau_0$ and cyclic 3-cocycle $\psi$.

Interestingly, a key role in our extension of this result to the full spectral action will be played by multiple operator integrals. It is the natural replacement of residues in this context, and also allows to go beyond dimension $4$.
For our analysis of the cocycle structure that appears in the full spectral action we take the Taylor series expansion as a starting point. 
This explains the appearance of multiple operator integrals, as traces thereof are multilinear extensions of the derivatives of the spectral action. This viewpoint is also studied in \cite{S14,Sui11}, where multiple operator integrals are used to investigate the Taylor expansion of the spectral action. As we will show, multiple operator integrals can also be used to define cyclic cocycles, because of some known properties of the multiple operator integral that have been proved in increasing generality in the last decades (e.g., in \cite{ACDS09,CS18,vNS21,S14,Sui11}). In Section \ref{sct:Finitely summable} we have pushed these results even further, by proving estimates and continuity properties for the multiple operator integral when the self-adjoint operator has an $s$-summable resolvent, thereby supplying the discussion here with a strong functional analytic foundation. Applying the results of Section \ref{sct:Relative Schatten} and \textsection\ref{sec4a} in order to obtain \eqref{eq:expansion intro} for locally compact spectral triples is left open for future research.


We work out two interesting possibilities for application of our main result and the techniques used to obtain it. The first application is to index theory. The analytically powerful multiple operator integration techniques used for the absolute convergence of our expansion also allow us to show that the found $(b,B)$-cocycles are \textit{entire} in the sense of \cite{C88a}. This makes it meaningful to analyze their pairing with K-theory, which we find to be trivial in Section \ref{sct:vanishing pairing}. 

The second application is to quantization. In Section \ref{sct:One-Loop}, though evading analytical difficulties, we will take a first step towards the quantization of the spectral action within the framework of spectral triples. Using the asymptotic expansion proved in Theorem \ref{thm:asymptotic expansion}, and some basic quantum field theoretic techniques, we will propose a one-loop quantum effective spectral action and show that it satisfies a similar expansion formula, featuring in particular a new pair of cyclic cocycles.

\section{Multiple operator integrals and a new function class}

As our goal is to understand the structure of the gauge fluctuations in the spectral action, a good starting point is the (noncommutative) Taylor series expansion of $\Tr(f(D+V))$, expanded in $V$. In the sense of Chapter \ref{ch:MOI}, we may replace the $n\th$ order derivatives occurring in the Taylor series of the spectral action by a multiple operator integral, and obtain
\begin{align}\label{eq:Taylor}
	\Tr(f(D+V))\sim\sum_{n=0}^\infty \Tr(\Tfn^D(V,\ldots,V)),
\end{align}
which allows us to apply the powerful toolkit of multiple operator integration.

Indeed, we recall that the main result of Section \ref{sct:Finitely summable} was the following bound on the multiple operator integral:
\begin{align}\label{eq:s-summable bound nog een keer}
	\nrm{T^{D+V,D,\ldots,D}_{f^{[n]}}(V_1,\ldots,V_n)}{1}\leq c_{s,n}(f)\norm{V_1}\cdots\norm{V_n}(1+\norm{V})^{s}\big\|(D-i)^{-1}\big\|_{s}^s.
\end{align}
It was proven for functions $f\in\Wsn$, i.e., all $f\in C^n$ satisfying $((fu^m)^{(k)})\hat{~}\in L^1$ for all $m\leq s$ and $k\leq n$, where $u(x)=x-i$.

The analytical result \eqref{eq:s-summable bound nog een keer} allows us to freely work with the traces of multiple operator integrals up to order $n$. This is actually the sole analytical ingredient for a truncated version (Theorem \ref{thm:asymptotic expansion}) of our main result (Theorem \ref{thm:main thm}). However, if we want the expansion \eqref{eq:expansion intro} to converge, we will need to impose infinite differentiability of $f$, as well as a growth condition on $c_{s,n}(f)$ as $n$ goes to infinity.
We therefore introduce the space
\begin{align*}
	\Esg:=\left\{f\in C^\infty \mid
	\begin{aligned}
	&\text{there exists $C_f\geq 1$ s.t. } \|\widehat{(fu^m)^{(n)}}\|_1\leq (C_f)^{n+1}n!^\gamma \\
	&\text{for all $m=0,\ldots,s$ and $n\in\mathbb N_0$}
	\end{aligned}
	\right\},
\end{align*}
for $\gamma\in (0,1]$. Our main result is that the expansion \eqref{eq:expansion intro} holds for all functions $f\in\Esg$, and certain perturbations $A$. If $\gamma=1$, the expansion converges absolutely whenever the perturbation $A$ is sufficiently small. If $\gamma<1$ the expansion converges absolutely for all perturbations. The following Lemma underlies both results.
\begin{lem}\label{lem:Cs and Es bounds}
	Let $s\in\N$, $D$ self-adjoint in $\H$ with $(D-i)^{-1}\in\S^s$, and $\gamma\in(0,1]$. For any $f\in\Esg$ there exists a $C\geq1$ such that for all $n\in\N_0$, $V_1,\ldots,V_n\in\mB(\H)$, and $V\in\mB(\H)_\sa$, we have
		$$\absnorm{\Tfn^{D+V,D,\ldots,D}(V_1,\ldots,V_n)}\leq \Big(C^{n+1}n!^{\gamma-1}\Big)\norm{V_1}\cdots\norm{V_n}(1+\norm{V})^{s}\big\|(D-i)^{-1}\big\|_s^s.$$
\end{lem}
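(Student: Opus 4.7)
The plan is to reduce the statement to a direct estimate on the constant $c_{s,n}(f)$ appearing in Theorem~\ref{thm:Schatten estimate}. That theorem's second (``more generally'') assertion, applied with $H=D$, immediately supplies
\[
\absnorm{\Tfn^{D+V,D,\ldots,D}(V_1,\ldots,V_n)}\leq c_{s,n}(f)\,\|V_1\|\cdots\|V_n\|(1+\|V\|)^{s}\|(D-i)^{-1}\|_s^s,
\]
so the only thing left to prove is that $c_{s,n}(f)\leq C^{n+1}n!^{\gamma-1}$ for a constant $C\geq 1$ depending only on $s$, $\gamma$ and $f$.

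First I would substitute the defining bound $\|\widehat{(fu^{s-k})^{(n-k)}}\|_1\leq C_f^{n-k+1}(n-k)!^{\gamma}$ of $\Esg$ into the definition of $c_{s,n}(f)$, yielding
\[
c_{s,n}(f)\leq \sum_{k=0}^{\min(s,n)}\binom{s}{k}\,C_f^{n-k+1}(n-k)!^{\gamma-1}.
\]
To match the right-hand side of the desired inequality, I then need to compare the factors $(n-k)!^{\gamma-1}$ with $n!^{\gamma-1}$. In the easy case $\gamma=1$, both equal $1$ and there is nothing to do. In the harder case $\gamma<1$, I would use the identity $(n-k)!^{\gamma-1}=n!^{\gamma-1}\bigl(n!/(n-k)!\bigr)^{1-\gamma}$ together with $n!/(n-k)!\leq n^k\leq n^s$ (valid for $n\geq s$) to obtain $(n-k)!^{\gamma-1}\leq n^{s(1-\gamma)}\,n!^{\gamma-1}$. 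Combined with $\sum_k\binom{s}{k}=2^s$ and a trivial bound $C_f^{n-k+1}\leq C_f^{n+1}$ (since $C_f\geq 1$), this gives, for all $n\geq s$,
\[
c_{s,n}(f)\leq 2^{s}C_f^{n+1}\,n^{s(1-\gamma)}\,n!^{\gamma-1}.
\]

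Finally, the sub-exponential polynomial factor $n^{s(1-\gamma)}$ is dominated by $K^n$ for any fixed $K>1$ once $n$ is large, and the finitely many small $n$ (in particular $n<s$) are absorbed by enlarging the constant. Taking $C:=2^{s}C_f K$, suitably enlarged to handle those small $n$, yields the claimed bound. The only mildly delicate step is this polynomial-to-exponential absorption in the $\gamma<1$ regime; the rest is pure bookkeeping on top of Theorem~\ref{thm:Schatten estimate}.
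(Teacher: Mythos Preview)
Your proof is correct and follows the same route as the paper: apply the second assertion of Theorem~\ref{thm:Schatten estimate} and then bound $c_{s,n}(f)$ using the defining inequality of $\Esg$, absorbing the resulting constants into $C$. In fact you are more careful than the paper's one-line proof, which simply says ``absorb $2^s$ into the constant $C$'' and leaves implicit the step you rightly flag as mildly delicate---namely handling the discrepancy between $(n-k)!^{\gamma-1}$ and $n!^{\gamma-1}$ via the polynomial factor $n^{s(1-\gamma)}$ when $\gamma<1$.
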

\begin{proof}
Apply the definition of $\Esg$ to Theorem \ref{thm:Schatten estimate}, and absorb $2^s$ into the constant $C$.
\end{proof}

This lemma will be used in \textsection\ref{sct:convergence} and Section \ref{sct:vanishing pairing}.

Examples of functions in $\E_s^{1}$ are Schwartz functions with compactly supported Fourier transform. The following proposition gives more examples.

\begin{prop}
	Let $f\in C^\infty$ and $s,t\in\N_0$.
	\begin{enumerate}[label=\textnormal{(\roman*)}]
		\item\label{mult E} If $f\in \E_s^1$ and $g\in \E_t^1$, then $fg\in \E_{s+t}^1$.
		\item\label{exp1 E} If $\hat{f}\in L^1$ with $|\hat{f}(x)|\leq e^{-c|x|}$ a.e. for some $c>0$, then $f\in\E_0^{1}$.
		\item\label{exp2 E} If $\widehat{fu^s}\in L^1$ with $|\widehat{fu^s}(x)|\leq e^{-c|x|}$ a.e. for some $c>0$, then $f\in\E_s^{1}$. 
		\item\label{rati E} Rational functions in $\O(|x|^{-s-1})$ are in $\E_s^1$.
		\item\label{Gauss E} The function $x\mapsto e^{-x^2}$ is in $\E_s^{1/2}$ for any $s\in\N_0$.
	\end{enumerate}
\end{prop}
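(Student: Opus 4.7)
I would prove the five items largely by direct computation with Fourier transforms, using only the identity $\widehat{g^{(n)}}(y) = (iy)^n \hat g(y)$, the convolution theorem $\widehat{gh} = \hat g * \hat h$, and Young's inequality $\|h_1 * h_2\|_1 \leq \|h_1\|_1 \|h_2\|_1$. Statements \ref{exp1 E} and \ref{exp2 E} are the workhorses; \ref{mult E} is a Leibniz-plus-convolution argument, while \ref{rati E} and \ref{Gauss E} reduce to \ref{exp2 E} and \ref{exp1 E} respectively.

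For \ref{exp1 E}, the identity $\widehat{f^{(n)}}(y) = (iy)^n \hat f(y)$ yields
\begin{align*}
\|\widehat{f^{(n)}}\|_1 \le \int_\R |y|^n e^{-c|y|}\,dy = \frac{2\,n!}{c^{n+1}},
\end{align*}
so $f \in \E_0^1$ with $C_f = \max(1, 2/c)$. For \ref{exp2 E}, set $p := s - m \ge 0$ and write $fu^m = (fu^s)\cdot u^{-p}$. A contour computation gives $\widehat{u^{-p}}(y) = \tfrac{(-1)^{p-1}i^p}{(p-1)!}\,y^{p-1}e^{y}\indicator_{y<0}$ for $p\ge 1$ (and the trivial $p=0$ case), from which $\|\widehat{(u^{-p})^{(k)}}\|_1 = \Gamma(p+k)/\Gamma(p)$. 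Leibniz, the convolution theorem, and \ref{exp1 E} applied to $fu^s$ then give
\begin{align*}
\|\widehat{(fu^m)^{(n)}}\|_1 \leq \sum_{k=0}^n \binom{n}{k}\,\|\widehat{(fu^s)^{(k)}}\|_1\,\|\widehat{(u^{-p})^{(n-k)}}\|_1,
\end{align*}
and the right-hand side is bounded by $C^{n+1} n!$ after absorbing all combinatorial factors into a single constant $C$. Item \ref{mult E} is analogous: given $m \leq s+t$, split $m = m_1 + m_2$ with $m_1 \leq s$ and $m_2 \leq t$, write $fg\,u^m = (fu^{m_1})(gu^{m_2})$, apply Leibniz and convolution, and use $\binom{n}{k}k!(n-k)! = n!$ together with $\sum_{k=0}^n C_f^k C_g^{n-k} \leq (n+1)\max(C_f,C_g)^n$ to collapse the sum into the form $C^{n+1} n!$.

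For \ref{rati E}, a rational $f$ in $\O(|x|^{-s-1})$ has all poles off the real line, and $fu^s$ is rational in $\O(|x|^{-1})$. Partial fractions express $\widehat{fu^s}$ as a finite sum of terms $c_{j,k}\,\widehat{(x-z_j)^{-k}}$, each computable by contour integration and decaying like $e^{-|\Im z_j|\,|y|}$; hence $\widehat{fu^s} \in L^1$ with exponential decay, and \ref{exp2 E} applies. Finally, for \ref{Gauss E} the identity $\widehat{e^{-x^2}}(y) = \tfrac{1}{2\sqrt{\pi}}\,e^{-y^2/4}$ combined with $\widehat{x^k f}(y) = i^k \hat f^{(k)}(y)$ gives $\widehat{e^{-x^2}u^m}(y) = P_m(y)\,e^{-y^2/4}$ for an explicit polynomial $P_m$ of degree $m$. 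The Gaussian moment $\int_\R |y|^k e^{-y^2/4}\,dy = 2^{k+1}\Gamma\!\left(\tfrac{k+1}{2}\right)$ together with Stirling's formula yields growth of order $(k!)^{1/2}$, giving $\|\widehat{(e^{-x^2}u^m)^{(n)}}\|_1 \leq C^{n+1}(n!)^{1/2}$ uniformly in $m \leq s$, so $e^{-x^2} \in \E_s^{1/2}$.

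The only real nuisance is bookkeeping: in \ref{exp2 E} and \ref{mult E} one must verify that the Leibniz coefficients, the ratios $\Gamma(p+k)/\Gamma(p)$, and the factor $2/c^{k+1}$ inherited from \ref{exp1 E} can all be absorbed into a single exponential constant $C$ without degrading the factorial growth. Everything else is routine once the right splittings of $fu^m$ are in hand.
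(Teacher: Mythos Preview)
Your proposal is correct and follows essentially the same strategy as the paper (Leibniz rule, convolution theorem, Young's inequality). The paper's execution is slightly slicker in two places: for \ref{exp2 E} it bootstraps directly from \ref{mult E} and \ref{exp1 E} by writing $f=(fu^s)\cdot u^{-s}$ and invoking $u^{-s}\in\E_{s-1}^1$, and for \ref{exp1 E} and \ref{Gauss E} it avoids computing moments or invoking Stirling by using the pointwise inequalities $\tfrac{1}{n!}(\tfrac{c}{2})^n|x|^n\le e^{c|x|/2}$ and $|x|^n\le\sqrt{n!}\,c^n e^{x^2/(2c^2)}$, respectively.
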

\begin{proof}
	\begin{enumerate}[label=\textnormal{(\roman*)}]
	\item For $m\leq s$ and $p\leq t$, Young's inequality gives
	\begin{align*}
		\|\widehat{(fgu^{m+p})^{(n)}}\|_1&\leq\sum_{k=0}^n\vect{n}{k}\|\widehat{(fu^m)^{(k)}}\|_{1}\|\widehat{(gu^p)^{(n-k)}}\|_{1}\\
		&\leq (n+1)n!(C_fC_g)^{n+1}.
	\end{align*}
	Any polynomial in $n$ is $\mathcal{O}(C^n)$ for some $C\geq1$.
	\item As
		$$\frac{1}{n!}(\tfrac{1}{2}c)^n|x|^n\leq\sum_{m=0}^\infty \frac{1}{m!}(\tfrac12c|x|)^m=e^{\tfrac12c|x|},$$
	we find $\|\widehat{f^{(n)}}\|_1=\||x|^n\hat{f}\|_1\leq \|e^{\tfrac12c|x|}\hat{f}\|_1(\tfrac12c)^{-n}n!$, thereby obtaining $f\in \E_0^1$.
	\item 
	 Item \ref{exp1 E} gives that $fu^s\in\E_0^{1}$. It is easy to see that $u^{-s}\in\E_{s-1}^1$. Therefore \ref{mult E} gives $f\in\E_{s-1}^1$, i.e., $\|\widehat{(fu^m)^{(n)}}\|_1\leq (C_f)^{n+1}n!$ for $m\leq s-1$. Similar to \ref{exp1 E} we get $\|\widehat{(fu^s)^{(n)}}\|_1\leq C^{n+1}n!$ for some $C\geq 1$.
	\item Follows from \ref{exp2 E}.
	
	\item Let $f(x)=e^{-x^2}$ and $m\in\N_0$. The Fourier transform of $fu^m$ is a polynomial times a Gaussian, say $(fu^m)\hat{~}(x)=p(x)e^{-x^2/c^2}$. Therefore,
\begin{align*}
	\absnorm{\widehat{(fu^m)^{(n)}}}=\absnorm{|x|^n p(x)e^{-x^2/c^2}}.
\end{align*}
Furthermore, $|x|^n=c^n\sqrt{(x^2/c^2)^{n}}\leq c^n\sqrt{n!e^{x^2/c^2}}=\sqrt{n!}c^ne^{\frac{x^2}{2c^2}}$, so
\begin{align*}
	\absnorm{\widehat{(fu^m)^{(n)}}}&\leq \sqrt{n!}c^n\absnorm{p(x)e^{-\frac{x^2}{2c^2}}}.
\end{align*}
Therefore, $f\in\E_s^{1/2}$ for any $s\in\N_0$.\qedhere
	\end{enumerate}
\end{proof}

\let\mysectionmark\sectionmark
\renewcommand\sectionmark[1]{}
\section{Cyclic cocycles and universal forms underlying the spectral action}
\let\sectionmark\mysectionmark
\sectionmark{Cyclic cocycles and universal forms underlying the S.A.}
\label{sct:CC and UF underlying the SA}
Mainly to fix our conventions, we start with the definition of a finitely summable spectral triple, which is the situation in which our main result is stated.
\begin{defi}
  \label{defi:st}
	Let $s\in\N$. An \textbf{$s$-summable spectral triple} $(\A,\H,D)$ consists of a separable Hilbert space $\H$, a self-adjoint operator $D$ in $\H$ and a unital *-algebra $\A\subseteq \B(\H)$, such that, for all $a\in\A$, $a\operatorname{dom} D\subseteq\operatorname{dom} D$ and $[D,a]$ extends to a bounded operator, and $(D-i)^{-1}\in\mathcal{S}^s$.
\end{defi}


Throughout the rest of this chapter, we let $(\A,\H,D)$ be an $s$-summable spectral triple for $s\in\N$, and we let $f\in\W_s^{n}$ for $n\in\N_0$, unless stated otherwise.

\begin{defi}\label{def:bracket}
	Define a multilinear function $\br{\cdot}:\B(\H)^{\times n}\to\C$ by
	\begin{align}\label{eq:br cycl}
	\br{V_1,\ldots,V_n}:=\sum_{j=1}^n\Tr( T^D_{f^{[n]}}(V_j,\ldots,V_n,V_1,\ldots,V_{j-1})).
	\end{align}
\end{defi}

For our algebraic results (which make up most of Section \ref{sct:CC and UF underlying the SA} and \textsection\ref{sct:truncated expansion}) we only need two simple properties of the bracket $\br{\cdot}$, stated in the following lemma. After proving this lemma, all analytical subtleties (related to the unboundedness of $D$) are taken care of, and we can focus on the algebra that ensues from these simple rules.
\begin{lem}\label{cycl bracket}
For $V_1,\ldots,V_n\in\mB(\H)$ and $a\in\A$ we have
\begin{enumerate}[label=\textnormal{(\Roman*)}]
	\item $\br{V_1,\ldots,V_n}=\br{V_n,V_1,\ldots,V_{n-1}},$\label{cyclicity}
	\item $\br{V_1,\ldots,aV_j,\ldots,V_n}-\br{V_1,\ldots,V_{j-1}a,\ldots,V_n}=\br{V_1,\ldots,V_{j-1},[D,a],V_j,\ldots,V_n}$,\label{commutation}
\end{enumerate}
          where it is understood that for the edge case $j=1$ we need to substitute $n$ for $j-1$ on the left-hand side, and $f\in\W_s^{n+1}$ is assumed to define the right-hand side. 
\end{lem}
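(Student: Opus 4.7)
For (I), cyclicity is essentially by definition: the sum defining $\br{V_1,\ldots,V_n}$ runs over all $n$ cyclic rotations of $(V_1,\ldots,V_n)$, so replacing the tuple by $(V_n,V_1,\ldots,V_{n-1})$ merely relabels the summation index $j\mapsto j-1\pmod n$.

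For the derivation property (II), my strategy is first to recast the bracket into a manifestly cyclic integral. Starting from the integral representation of Corollary \ref{cor:nice def of MOI} and using cyclicity of the trace, the $j$-th summand becomes
\begin{align*}
\int\Tr\!\bigl(V_je^{its_1 D}V_{j+1}\cdots V_{j-1}e^{it(s_n+s_0)D}\bigr)\widehat{f^{(n)}}(t)\,dt\,d\sigma_n(s).
\end{align*}
I reparameterize $\Delta_n$ by the cyclic gaps $g_k$ between $V_k$ and $V_{k+1}$ (indices mod $n$), identifying $g_j=s_1,\dots,g_{j-2}=s_{n-1}$ and $g_{j-1}=s_0+s_n$, with $s_0\in[0,g_{j-1}]$ as the extra variable. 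Integrating $s_0$ out yields a factor $g_{j-1}$; summing over $j$ cycles this factor through all $n$ gaps and produces $\sum_k g_k=1$. Using cyclicity of the trace once more to make the integrand independent of $j$, I obtain the compact form
\begin{align*}
\br{V_1,\ldots,V_n}=\int_\R\!dt\int_{\Delta_{n-1}}\!d\sigma_{n-1}(g)\,\Tr\!\bigl(V_1e^{itg_1D}V_2\cdots V_ne^{itg_nD}\bigr)\widehat{f^{(n)}}(t).
\end{align*}

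With this form in hand, the difference on the left of (II) reduces to an integral of $\Tr(\cdots V_{j-1}[e^{itg_{j-1}D},a]V_j\cdots)\widehat{f^{(n)}}(t)$ over $\Delta_{n-1}\times\R$. The Duhamel identity
\begin{align*}
[e^{itgD},a]=itg\int_0^1 e^{itgrD}[D,a]e^{itg(1-r)D}\,dr
\end{align*}
converts the commutator into $[D,a]$ sandwiched between two evolutions. The change of variables $(h_{j-1},h_j):=(g_{j-1}r,g_{j-1}(1-r))$ has Jacobian $g_{j-1}$, which cancels the factor $g_{j-1}$ coming from $itg_{j-1}$; the remaining $it$ is absorbed via $it\,\widehat{f^{(n)}}(t)=\widehat{f^{(n+1)}}(t)$. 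The measure on $\Delta_{n-1}\times[0,1]$ thus becomes the measure on $\Delta_n$, producing precisely the cyclic form of $\br{V_1,\ldots,V_{j-1},[D,a],V_j,\ldots,V_n}$. The edge case $j=1$ is identical, with the commutator appearing in the wrap-around gap between $V_n$ and $V_1$; the resulting insertion of $[D,a]$ at the ``end'' equals $\br{[D,a],V_1,\ldots,V_n}$ by (I).

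The main technical point will be the bookkeeping of the measure change between $\Delta_{n-1}$ and $\Delta_n$, which once the gap reparameterization is in place reduces to the elementary Jacobian computation above. The interchanges of trace and integration throughout are legitimate by the trace-class estimates of Theorem \ref{thm:Schatten estimate}, which apply because $f\in\W_s^n$ in (I) and $f\in\W_s^{n+1}$ in (II), both as assumed in the statement.
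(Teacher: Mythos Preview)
Your approach via the cyclic-integral form and Duhamel's identity is different from the paper's and conceptually appealing, but the justification you give for interchanging trace and integral does not work. Theorem \ref{thm:Schatten estimate} bounds $\|T^D_{f^{[n]}}(V_1,\ldots,V_n)\|_1$; it says nothing about the pointwise integrand $e^{its_0D}V_1e^{its_1D}\cdots V_ne^{its_nD}$, which for merely bounded $V_j$ is only in $\mB(\H)$, not in $\S^1$. Hence expressions like $\Tr(V_1e^{itg_1D}\cdots V_ne^{itg_nD})$ are not defined, and equation \eqref{eq:Tr(TfnC)=int Tr(AC)} (which is what underlies such an interchange) only applies with a test operator $B\in\S^{\alpha'}$ when the $V_j$ themselves lie in suitable Schatten classes. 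Your compact cyclic formula and all subsequent manipulations are therefore formal as written.

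The fix is the same density-and-continuity step the paper uses, just applied at a different point: first take $V_1,\ldots,V_n$ finite rank (or in $\S^n$), so that the integrand is genuinely trace class and your Duhamel/Jacobian computation is rigorous; then pass to arbitrary bounded $V_j$ by the $\text{s.o.t.}$-to-$\S^1$ continuity of $T^D_{f^{[n]}}$ from Theorem \ref{thm:continuity for L's}. Once this is in place your argument is complete, and the identity $it\,\widehat{f^{(n)}}(t)=\widehat{f^{(n+1)}}(t)$ together with the $\Delta_{n-1}\times[0,1]\to\Delta_n$ Jacobian indeed reproduces the $(n{+}1)$-bracket.

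For comparison, the paper avoids ever writing a trace under the integral: it proves the operator identities
\[
T^D_{f^{[n]}}(\ldots,V_j a,V_{j+1},\ldots)-T^D_{f^{[n]}}(\ldots,V_j,aV_{j+1},\ldots)=-T^D_{f^{[n+1]}}(\ldots,V_j,[D,a],V_{j+1},\ldots)
\]
(and the two edge cases) first for finite-rank $V_j$ from the explicit divided-difference formula \eqref{eq:Trace function divdiff}, then extends to $\mB(\H)$ via Theorem \ref{thm:continuity for L's}, and only afterwards takes traces and sums over the cyclic shifts. Your route gives a cleaner global picture (the JLO-style cyclic integral) at the cost of needing the Schatten hypothesis on the $V_j$ before the density step; the paper's route keeps the identities at the operator level so the trace is taken only once, on an object already known to be in $\S^1$.
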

\begin{proof}
	Property \ref{cyclicity} follows immediately from Definition \ref{def:bracket}. By using \eqref{eq:Trace function divdiff} for finite-rank operators $V_1,\ldots,V_n$, we have,
	\begin{align}
		&T^D_{f^{[n]}}(V_1,\ldots,V_j,aV_{j+1},\ldots,V_n)-T^D_{f^{[n]}}(V_1,\ldots,V_ja,V_{j+1},\ldots,V_n)\nonumber\\
		&\quad=T^D_{f^{[n+1]}}(V_1,\ldots,V_j,[D,a],V_{j+1},\ldots,V_n),\label{com1}
	\end{align}
	and the two edge cases,
	\begin{align}\label{com2}
		T^D_{f^{[n]}}(aV_1,\ldots,V_n)-aT^D_{f^{[n]}}(V_1,\ldots,V_n)&=T^D_{f^{[n+1]}}([D,a],V_1,\ldots,V_n),\\
		T^D_{f^{[n]}}(V_1,\ldots,V_n)a-T^D_{f^{[n]}}(V_1,\ldots,V_na)&=T^D_{f^{[n+1]}}(V_1,\ldots,V_n,[D,a]).\label{com3}
	\end{align}
	By Theorem \ref{thm:continuity for L's}, and the fact that the finite-rank operators lie strongly dense in $\B(\H)$, we find that formulas \eqref{com1}, \eqref{com2}, and \eqref{com3}
        hold for all $V_1,\ldots,V_n\in \B(\H)$. Hence,
	\begin{align*}
		&\br{aV_1,V_2,\ldots,V_n}-\br{V_1,V_2,\ldots,V_na}\\
		&\quad = \sum_{j=2}^{n}\Tr (T^D_{f^{[n]}}(V_j,\ldots,V_n,[D,a],V_1,\ldots,V_{j-1}))\\
		&\qquad +\Tr(T^D_{f^{[n]}}(aV_1,\ldots,V_n))-\Tr(T^D_{f^{[n]}}(V_1,\ldots,V_na))\\
		&\quad=	\sum_{j=2}^{n}\Tr (T^D_{f^{[n+1]}}(V_{j},\ldots,V_n,[D,a],V_1,\ldots,V_{j-1}))\\
		&\qquad +\Tr(T^D_{f^{[n+1]}}([D,a],V_1,\ldots,V_n))+\Tr(aT^D_{f^{[n]}}(V_1,\ldots,V_n))-\Tr(T^D_{f^{[n]}}(V_1,\ldots,V_na))\\
		&\quad=	\sum_{j=2}^{n}\Tr (T^D_{f^{[n+1]}}(V_{j},\ldots,V_n,[D,a],V_1,\ldots,V_{j-1}))\\
		&\qquad +\Tr(T^D_{f^{[n+1]}}([D,a],V_1,\ldots,V_n))+\Tr(T^D_{f^{[n+1]}}(V_1,\ldots,V_n,[D,a]))\\
		&\quad= \br{[D,a],V_1,\ldots,V_n},
	\end{align*}
	and therefore \ref{commutation} follows by applying \ref{cyclicity}.
\end{proof}

\begin{rema}\label{rem:contour integral}
Under additional assumptions -- for instance when $V_1,\ldots, V_n\in\S^1$ and $f\in\W_s^{n}$ is such that $f'$ is compactly supported and analytic in a region of $\C$ containing a rectifiable curve $\gamma$ which surrounds the support of $f$ in $\R$ -- we have
	\begin{align*}
		\br{V_1,\ldots,V_n}= \frac{1}{2\pi i}\Tr(\oint_\gamma f'(z)\prod_{j=1}^n V_j(z-D)^{-1}).
	\end{align*}
This occurs in \cite[Corl. 20]{Sui11} in the case where $V_1=V_2=\cdots=V_n$. It would be interesting to compare these resolvent formulas with the ones appearing in \cite{Pay07}.
\end{rema}

\subsection{Hochschild and cyclic cocycles}
When the above brackets $\br{\cdot}$ are evaluated at one-forms $a[D,b]$ associated to a spectral triple, the relations found in Lemma \ref{cycl bracket} can be translated nicely in terms of the coboundary operators appearing in cyclic cohomology. This is very similar to the structure appearing in the context of index theory, see for instance \cite{GS89,Hig06}.

Let us start by recalling the definition of the boundary operators $b$ and $B$ from \cite{C85}.

\begin{defi}
If $\A$ is an algebra, and $n\in\N_0$, we define the space of \textit{Hochschild $n$-cochains}, denoted by $\mathcal{C}^n(\A)$, as the space of $(n+1)$-linear
functionals $\phi$ on $\A$ with the property that if $a_j =1$ for some $j \geq 1$, then $\phi(a_0,\ldots,a_n) = 0$. Define operators $b : \mathcal{C}^{n}(\A) \to \mathcal{C}^{n+1}(\A)$ and $B: \mathcal{C}^{n+1}(\A) \to \mathcal{C}^{n}(\A)$ by
\begin{align*}
b\phi(a_0, a_1,\dots, a_{n+1})
:=& \sum_{j=0}^n (-1)^j \phi(a_0,\dots, a_j a_{j+1},\dots, a_{n+1})\\
& + (-1)^{n+1} \phi(a_{n+1} a_0, a_1,\dots, a_n) ,\\
B \phi(a_0 ,a_1, \ldots, a_n) :=& 
\sum_{j=0}^n (-1)^{nj}\phi(1,a_j,a_{j+1},\ldots, a_{j-1}).
\end{align*}
\end{defi}
Note that $B = A B_0$ in terms of the operator $A$ of cyclic anti-symmetrization and the operator defined by $B_0 \phi (a_0, a_1, \ldots, a_n) = \phi(1,a_0, a_1,\ldots, a_n)$.

One may check that the pair $(b,B)$ defines a double complex, \textit{i.e.} $b^2 = 0,~ B^2=0,$ and $bB +Bb =0$. Hochschild cohomology then arises as the cohomology of the complex  $(\mathcal{C}^n(\A),b)$, while the for us relevant \textit{periodic cyclic cohomology} is defined as the cohomology of the totalization of the $(b,B)$-complex. That is to say, 
\begin{align*}
\mathcal{C}^\ev(\A) = \bigoplus_k \mathcal{C}^{2k} (\A) ; \qquad \mathcal{C}^{\odd}(\A) = \bigoplus_k \mathcal{C}^{2k+1} (\A),
\end{align*}
form a complex with differential $b+B$ and the cohomology of this complex is called periodic cyclic cohomology. We will also refer to a periodic cyclic cocycle as a $(b,B)$-cocycle. Explicitly, an odd $(b,B)$-cocycle is thus given by a sequence
$$
(\phi_1, \phi_3, \phi_5, \ldots),
$$
where $\phi_{2k+1} \in \mathcal{C}^{2k+1}(\A)$ and 
$$
b \phi_{2k+1} + B \phi_{2k+3} = 0 ,
$$
for all $k \geq 0$, and also $B \phi_1 = 0$. An analogous statement holds for even $(b,B)$-cocycles.

\subsection{Cyclic cocycles associated to multiple operator integrals}
\label{sct:Cyclic cocycles associated to multiple operator integrals}

We define the following Hochschild $n$-cochain:
\begin{align}\label{eq:def phi_n}
	\phi_n(a_0,\ldots,a_n):=\br{a_0[D,a_1],[D,a_2],\ldots,[D,a_{n}]} \qquad  (a_0, \ldots, a_n \in \A).
\end{align}
	We easily see that $B_0\phi_n$ is invariant under cyclic permutations, so that $B\phi_n=nB_0\phi_n$ for odd $n$ and $B\phi_n=0$ for even $n$. Also, $\phi_n(a_0,\ldots,a_n)=0$ when $a_j=1$ for some $j\geq1$. We put $\phi_0:=0$.
\begin{lem}\label{lem:b}
	We have $b\phi_n=\phi_{n+1}$ for odd $n$ and we have $b\phi_n=0$ for even $n$.
\end{lem}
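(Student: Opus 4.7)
The plan is to directly compute $b\phi_n$ by expanding each Hochschild boundary summand using the Leibniz rule $[D, a_j a_{j+1}] = [D, a_j] a_{j+1} + a_j [D, a_{j+1}]$, and then to telescope the result via properties (I) and (II) from Lemma \ref{cycl bracket}. Writing $V_j := [D, a_j]$ throughout, so that $\phi_n(a_0, \ldots, a_n) = \br{a_0 V_1, V_2, \ldots, V_n}$, the definition of $b$ yields
\begin{align*}
b\phi_n(a_0, \ldots, a_{n+1}) &= \br{a_0 a_1 V_2, V_3, \ldots, V_{n+1}} + (-1)^{n+1} \br{a_{n+1} a_0 V_1, V_2, \ldots, V_n} \\
&\quad + \sum_{j=1}^{n} (-1)^j \br{a_0 V_1, \ldots, V_{j-1}, [D, a_j a_{j+1}], V_{j+2}, \ldots, V_{n+1}},
\end{align*}
with the convention that for $j=1$ the commutator $[D, a_1 a_2]$ sits immediately to the right of $a_0$. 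Applying Leibniz inside each summand of the sum produces two pieces: a ``first piece'' in which $a_{j+1}$ sits to the right of $V_j$, and a ``second piece'' in which $a_j$ sits to the left of $V_{j+1}$. The second piece of the $j=1$ term, namely $-\br{a_0 a_1 V_2, V_3, \ldots, V_{n+1}}$, cancels against the explicit first boundary term.

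The remaining contributions will telescope by induction on $k \in \{2, \ldots, n\}$ as follows: after collecting all terms with index $j \le k$, the partial sum equals $\big(\sum_{\ell=2}^{k}(-1)^\ell\big)\phi_{n+1}(a_0, \ldots, a_{n+1})$ plus a single leftover equal to $(-1)^k \br{a_0 V_1, \ldots, V_{k-1}, V_k a_{k+1}, V_{k+2}, \ldots, V_{n+1}}$. The inductive step is a direct application of property (II) with $a = a_k$: the leftover $(-1)^{k-1}\br{a_0 V_1, \ldots, V_{k-1} a_k, V_{k+1}, \ldots}$ from step $k-1$ and the second Leibniz piece $(-1)^k \br{a_0 V_1, \ldots, V_{k-1}, a_k V_{k+1}, \ldots}$ of step $k$ combine via (II) to produce $(-1)^k \br{a_0 V_1, V_2, \ldots, V_{n+1}} = (-1)^k \phi_{n+1}$, while the first Leibniz piece of step $k$ becomes the new leftover.

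To close the telescope, I will combine the final leftover $(-1)^n \br{a_0 V_1, \ldots, V_{n-1}, V_n a_{n+1}}$ with the boundary term $(-1)^{n+1} \br{a_{n+1} a_0 V_1, V_2, \ldots, V_n}$. Applying the edge case ($j=1$) of property (II) with $a = a_{n+1}$ followed by cyclicity (I) gives $\br{a_{n+1} a_0 V_1, \ldots, V_n} - \br{a_0 V_1, \ldots, V_n a_{n+1}} = \br{V_{n+1}, a_0 V_1, \ldots, V_n} = \phi_{n+1}$, which yields one additional copy $(-1)^{n+1} \phi_{n+1}$. Summing everything gives $b\phi_n = \big(\sum_{k=2}^{n+1}(-1)^k\big) \phi_{n+1}$. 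This sum equals $1$ when $n$ is odd and $0$ when $n$ is even, so $b\phi_n = \phi_{n+1}$ for odd $n$ and $b\phi_n = 0$ for even $n$. The only real obstacle is careful sign and index bookkeeping; the degenerate cases $n = 1$ and $n = 2$, where the telescoping range shrinks or collapses, can be checked directly and are consistent with the general formula.
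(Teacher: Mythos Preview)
Your proof is correct and follows essentially the same route as the paper: expand each $[D, a_ja_{j+1}]$ by the Leibniz rule and telescope using properties (I) and (II) of the bracket. The only difference is organizational: the paper first notes that $b^2=0$ together with $b\phi_0=0$ reduces the statement to the odd case, and then carries out the identical telescoping computation only for odd $n$; you instead run the computation for all $n$ simultaneously and read off both parities from the sign sum $\sum_{k=2}^{n+1}(-1)^k$.
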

\begin{proof}
	As $b\phi_0=0$ by definition, and $b^2=0$, we need only check the case in which $n$ is odd.
	
	We find, by splitting up the sum, and shifting the second appearing sum by one, that
	\begin{align*}
		&b\phi_n(a_0,\ldots,a_{n+1})\\
		&\quad=\br{a_0a_1[D,a_1],[D,a_2],\ldots,[D,a_{n+1}]}
		-\br{a_0a_1[D,a_1],[D,a_2],\ldots,[D,a_{n+1}]}\\
		&\qquad+\sum_{j=2}^n(-1)^j\br{a_0[D,a_1],[D,a_2],\ldots,a_j[D,a_{j+1}],\ldots,[D,a_{n+1}]}\\
		&\qquad-\sum_{j=2}^{n+1}(-1)^j\br{a_0[D,a_1],[D,a_2],\ldots,[D,a_{j-1}]a_j,\ldots,[D,a_{n+1}]}\\
		&\qquad+\br{a_{n+1}a_0[D,a_1],[D,a_2],\ldots,[D,a_n]}\\
		&\quad=\sum_{j=2}^n(-1)^j\br{a_0[D,a_1],[D,a_2],\ldots,[D,a_{n+1}]}
		-\br{a_0[D,a_1],[D,a_2],\ldots,[D,a_n]a_{n+1}}\\
		&\qquad+\br{a_{n+1}a_0[D,a_1],\ldots,[D,a_n]}\\
		&\quad=\br{[D,a_{n+1}],a_0[D,a_1],[D,a_2],\ldots,[D,a_{n}]}\\
		&\quad=\phi_{n+1}(a_0,\ldots,a_{n+1}),
	\end{align*}
	by \ref{cyclicity} and \ref{commutation} of Lemma \ref{cycl bracket}.
\end{proof}
\begin{lem}\label{lem:c}
Let $n$ be even. We have $bB_0\phi_n=2\phi_n-B_0\phi_{n+1}$.
\end{lem}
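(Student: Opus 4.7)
My plan is to mirror the bookkeeping used in the proof of Lemma \ref{lem:b}: unfold the definition of $bB_0\phi_n$, split the commutators by Leibniz, then use the commutation relation \ref{commutation} from Lemma \ref{cycl bracket} to rewrite adjacent terms up to an insertion of a $[D,a]$. The main obstacle is purely bookkeeping --- keeping track of signs and the cyclic edge-case ($j=1$) of the commutation rule so that everything telescopes correctly in the end.

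First, since $B_0\phi_n(a_0,\dots,a_{n-1})=\br{[D,a_0],[D,a_1],\dots,[D,a_{n-1}]}$ and $n$ is even, the definition of $b$ gives
\begin{align*}
bB_0\phi_n(a_0,\ldots,a_n)=\sum_{k=0}^{n-1}(-1)^k\br{[D,a_0],\ldots,[D,a_ka_{k+1}],\ldots,[D,a_n]}+\br{[D,a_na_0],[D,a_1],\ldots,[D,a_{n-1}]}.
\end{align*}
Applying $[D,a_ka_{k+1}]=[D,a_k]a_{k+1}+a_k[D,a_{k+1}]$ to every summand (including the wrap-around, where I use cyclicity to treat the indices mod $n+1$) yields $bB_0\phi_n=\sum_{k=0}^{n}(-1)^k(\tilde A_k+\tilde B_k)$, where
\begin{align*}
\tilde A_k&:=\br{\ldots,[D,a_{k-1}],[D,a_k]a_{k+1},[D,a_{k+2}],\ldots},\\
\tilde B_k&:=\br{\ldots,[D,a_{k-1}],a_k[D,a_{k+1}],[D,a_{k+2}],\ldots},
\end{align*}
with entries understood cyclically for $k=n$, and where $(-1)^n=1$ since $n$ is even.

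Next I would apply Lemma \ref{cycl bracket}\ref{commutation} (in its cyclic form) to pull the inner $a_{k+1}$ from the right side of $[D,a_k]$ in $\tilde A_k$ to the left side of $[D,a_{k+2}]$ in $\tilde B_{k+1}$ (indices mod $n+1$). This produces the identity
\begin{align*}
\tilde B_{k+1\bmod(n+1)}-\tilde A_k=\br{[D,a_0],[D,a_1],\ldots,[D,a_n]}=B_0\phi_{n+1}(a_0,\ldots,a_n),
\end{align*}
valid for every $k\in\{0,1,\dots,n\}$ (for $k=n$ this step is where the edge-case $j=1$ of \ref{commutation} is used, together with cyclicity). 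Substituting $\tilde A_k=\tilde B_{k+1\bmod(n+1)}-B_0\phi_{n+1}$ and reindexing the $\tilde B$-sum by $l=k+1\bmod(n+1)$, the $\tilde B_l$-contributions for $l\ge 1$ cancel in pairs thanks to the alternating signs, leaving exactly $2\tilde B_0$. Since $\sum_{k=0}^{n}(-1)^k=1$ for even $n$, the constant contribution from $B_0\phi_{n+1}$ survives with coefficient $-1$, so that
\begin{align*}
bB_0\phi_n(a_0,\ldots,a_n)=2\tilde B_0-B_0\phi_{n+1}(a_0,\ldots,a_n).
\end{align*}
Finally, recognizing $\tilde B_0=\br{a_0[D,a_1],[D,a_2],\ldots,[D,a_n]}=\phi_n(a_0,\ldots,a_n)$ gives the desired identity $bB_0\phi_n=2\phi_n-B_0\phi_{n+1}$.
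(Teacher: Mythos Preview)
Your proof is correct and follows essentially the same strategy as the paper: expand $bB_0\phi_n$ via the Leibniz rule for $[D,a_ka_{k+1}]$, then use the commutation property \ref{commutation} to recognize each difference $\tilde B_{k+1}-\tilde A_k$ as the full bracket $\br{[D,a_0],\ldots,[D,a_n]}=B_0\phi_{n+1}$, after which the alternating sum telescopes. The only organizational difference is that you name the pieces $\tilde A_k,\tilde B_k$ and treat the wrap-around term uniformly via cyclic indexing mod $n+1$, whereas the paper keeps the wrap-around term separate, expands $[D,a_na_0]$ explicitly at the end, and handles the edge contributions by hand; both arrive at $2\phi_n-B_0\phi_{n+1}$ by the same mechanism.
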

\begin{proof}
	Splitting the sum in two, and shifting the index of the second sum, we find
	\begin{align*}
		&bB_0\phi_n(a_0,\ldots,a_n)\\
		&\quad=\sum_{j=0}^{n-1}(-1)^j\br{[D,a_0],\ldots,a_j[D,a_{j+1}],\ldots,[D,a_n]}\\
		&\qquad-\sum_{j=1}^n(-1)^j\br{[D,a_0],\ldots,[D,a_{j-1}]a_j,\ldots,[D,a_n]}+\br{[D,a_na_0],\ldots,[D,a_{n-1}]}\\
		&\quad=\br{a_0[D,a_1],[D,a_2],\ldots,[D,a_n]}+\sum_{j=1}^{n-1}(-1)^j\br{[D,a_0],\ldots,[D,a_n]}\\
		&\qquad-\br{[D,a_0],\ldots,[D,a_{n-2}],[D,a_{n-1}]a_n}+\br{[D,a_na_0],\ldots,[D,a_{n-1}]}\\
		&\quad=\phi_n(a_0,\ldots,a_n)-\br{[D,a_0],\ldots,[D,a_n]}+\br{[D,a_n],[D,a_0],\ldots,[D,a_{n-1}]}\\
		&\qquad+\br{[D,a_n]a_0,[D,a_1],\ldots,[D,a_{n-1}]}\\
		&\quad=2\phi_n(a_0,\ldots,a_n)-B_0\phi_{n+1}(a_0,\ldots,a_n),
	\end{align*}
	by using both properties of the bracket $\br{\cdot}$ in the last step.
\end{proof}

Motivated by this we define 
	$$\psi_{2k-1}:=\phi_{2k-1}-\tfrac{1}{2}B_0\phi_{2k},$$
so that
$$B\psi_{2k+1}=2(2k+1)b\psi_{2k-1}.$$
We can rephrase this property in terms of the $(b,B)$-complex as follows. 
\begin{prop}
  \label{prop:bB}
  Let $\phi_n$ and $\psi_{2k-1}$ be as defined above and set
  $$\tilde{\psi}_{2k-1}:=(-1)^{k-1}\frac{(k-1)!}{(2k-1)!}\psi_{2k-1}\,.$$
  \begin{enumerate}[label=\textnormal{(\roman*)}]
  \item The sequence $(\phi_{2k})$ is a $(b,B)$-cocycle and each $\phi_{2k}$ defines an even Hochschild cocycle: $b \phi_{2k} = 0$. 
    \item The sequence $(\tilde \psi_{2k-1})$ is an odd $(b,B)$-cocycle. 
    \end{enumerate}
  \end{prop}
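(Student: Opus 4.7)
The plan is to verify the cocycle conditions in the $(b,B)$-bicomplex directly, feeding in the algebraic identities collected in Lemmas~\ref{lem:b} and \ref{lem:c} together with the observations $B\phi_{2k}=0$ and $B\phi_{2k+1}=(2k+1)B_0\phi_{2k+1}$ recorded just after \eqref{eq:def phi_n}.

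For (i), the even case of Lemma~\ref{lem:b} gives $b\phi_{2k}=0$, i.e., each $\phi_{2k}$ is a Hochschild cocycle. Combined with $B\phi_{2k}=0$, both boundary maps annihilate $\phi_{2k}$ separately, so the periodic cyclic cocycle condition $b\phi_{2k}+B\phi_{2k+2}=0$ holds trivially.

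For (ii), I first combine the odd case of Lemma~\ref{lem:b} with Lemma~\ref{lem:c} to compute
\begin{align*}
b\psi_{2k-1}=b\phi_{2k-1}-\tfrac12 bB_0\phi_{2k}=\phi_{2k}-\tfrac12(2\phi_{2k}-B_0\phi_{2k+1})=\tfrac12 B_0\phi_{2k+1}.
\end{align*}
Next I will note that $B_0^2=0$, since $(B_0^2\phi)(a_0,\ldots)=\phi(1,1,a_0,\ldots)=0$ by the Hochschild convention that cochains vanish when a later argument equals $1$; via $B=AB_0$ this forces $BB_0=0$, whence $B\psi_{2k+1}=B\phi_{2k+1}=(2k+1)B_0\phi_{2k+1}=2(2k+1)\,b\psi_{2k-1}$. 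Writing $c_k:=(-1)^{k-1}(k-1)!/(2k-1)!$, a direct check of the recursion $c_{k+1}=-c_k/(2(2k+1))$ then yields $b\tilde\psi_{2k-1}+B\tilde\psi_{2k+1}=0$ for every $k\geq 1$.

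What remains is the boundary condition $B\tilde\psi_1=0$, which via $BB_0=0$ reduces to showing $\br{[D,a]}=\Tr(T^D_{f^{[1]}}([D,a]))=0$ for all $a\in\A$. I expect this last vanishing to be the main obstacle of the proof, as it is the one step that requires analytic rather than purely formal input. My plan is to handle it by the approximation argument used in the proof of Theorem~\ref{thm:Taylor expansion divdiffs}: since $(D-i)^{-1}\in\S^s$, the operator $D$ has a discrete orthonormal eigenbasis $\{\varphi_i\}$ with eigenvalues $\{\lambda_i\}$, and for the spectral truncation $E^N:=\sum_{i\le N}\ket{\varphi_i}\bra{\varphi_i}$ one has $[D,E^NaE^N]=E^N[D,a]E^N\to[D,a]$ boundedly in the strong operator topology. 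Theorem~\ref{thm:continuity for L's} passes the trace through this limit, and for each $N$ the finite-rank formula \eqref{eq:Trace function divdiff} gives $\Tr(T^D_{f^{[1]}}(E^N[D,a]E^N))=\sum_{i\le N}(f(\lambda_i)-f(\lambda_i))a_{ii}=0$, so $\br{[D,a]}=0$ follows in the limit.
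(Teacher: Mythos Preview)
Your argument is correct and follows exactly the route the paper has in mind: the proposition is stated immediately after the relation $B\psi_{2k+1}=2(2k+1)\,b\psi_{2k-1}$, and the paper leaves the verification of the scaling $c_{k+1}=-c_k/(2(2k+1))$ and of part (i) to the reader, just as you do.

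The one point worth flagging is your treatment of the initial condition $B\tilde\psi_1=0$, which indeed reduces to showing $\br{[D,a]}=0$. The paper does not verify this explicitly, so you are right to single it out as the only place where analytic input is required. Your approximation argument via Theorem~\ref{thm:continuity for L's} is valid; a marginally cleaner way to phrase the finite-rank step is to observe that $T^D_{f^{[1]}}([D,E^NaE^N])=[f(D),E^NaE^N]$ (since $f^{[1]}(\lambda_i,\lambda_j)(\lambda_i-\lambda_j)=f(\lambda_i)-f(\lambda_j)$), and the trace of a commutator with a finite-rank factor vanishes. Your formula $\sum_{i\le N}(f(\lambda_i)-f(\lambda_i))a_{ii}$ encodes exactly this, though the intermediate identification $T^D_{f^{[1]}}([D,\,\cdot\,])=[f(D),\,\cdot\,]$ would make the vanishing more transparent.
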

We use an integral notation that is defined by linear extension of
	$$\int_\phi a_0da_1\cdots da_n :=\int_{\phi_n}a_0da_1\cdots da_n:=  \phi(a_0,a_1,\ldots,a_n),$$
and similarly for $\psi$.

\subsection{Derivatives of the spectral action in terms of universal forms}\label{sct:from V to A}
In this section we will express the derivatives of the fluctuated spectral action (occurring in the Taylor series) in terms of universal forms that are integrated along $\phi$. We thus make the jump from an expression in terms of $V=\pi_D(A)\in\Omega^1_D(\A)_\sa$ to an expression in terms of $A\in\Omega^1(\A)$. By \eqref{dermoi} and the definition of $\br{V,\ldots,V}$, we have, for $n\in\N$,
\begin{align}
	\frac{1}{n!}\frac{d^n}{dt^n}\Tr(f(D+tV))\big|_{t=0}&=\Tr(\Tfn^D(V,\ldots,V))\nonumber\\
	&=\frac{1}{n}\br{V,\ldots,V}.\label{eq:<V,...,V>/n}
\end{align}
As $V$ decomposes as a finite sum $V=\sum a_j[D,b_j]$, our task is to express $$\br{a_{j_1}[D,b_{j_1}],\ldots,a_{j_n}[D,b_{j_n}]}$$ in terms of universal forms $a_0da_1\cdots da_n$ integrated along $\phi$. This is possible by just using \ref{commutation} and $[D,a_1a_2]=a_1[D,a_2]+[D,a_1]a_2$. Written out explicitly for increasing values of $n$, the resulting expressions quickly become horribly convoluted. Thankfully, though, by lifting them to the algebra $M_2(\Omega^\bullet(\A))=M_2(\C)\otimes\Omega^\bullet(\A)$, these expressions take a tractable form. 

\begin{prop}
	Let $n\in\N$. For $a_1,\ldots,a_n,b_1,\ldots,b_n\in\A$, denoting $A_j:=a_jdb_j$, we have
		$$\br{a_1[D,b_1],\ldots,a_n[D,b_n]}=\int_\phi \begin{pmatrix}
A_1&0
\end{pmatrix}\prod_{j=2}^n\begin{pmatrix}
A_j+dA_j&-A_j\\dA_j&-A_j
\end{pmatrix}\begin{pmatrix}
1\\0
\end{pmatrix}.$$
\end{prop}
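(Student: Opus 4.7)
The proof proceeds by induction on $n$, using only the two properties of $\br{\cdot}$ established in Lemma~\ref{cycl bracket} and the universal Leibniz rule $d(xy) = dx\cdot y + x\,dy$. The base case $n=1$ is immediate: the matrix product is empty, so the right-hand side reduces to $\int_\phi A_1 = \phi_1(a_1,b_1) = \br{a_1[D,b_1]}$.

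For the inductive step I would track more information than the statement asks for. Writing the partial row vector $(\alpha_k, \beta_k) := (A_1,0)\prod_{j=2}^k M_j$ with sum $\gamma_k := \alpha_k + \beta_k$, the matrix recursion gives
\begin{align*}
\alpha_{k+1} = \alpha_k A_{k+1} + \gamma_k\, dA_{k+1}, \qquad \beta_{k+1} = -\gamma_k A_{k+1}.
\end{align*}
The strengthened claim proved by simultaneous induction on $k \geq 1$ is
\begin{align*}
\int_\phi \alpha_k &= \br{V_1,\ldots,V_k}, \\
\int_\phi \beta_k &= -\br{V_1,\ldots,V_{k-1}a_k,[D,b_k]}, \\
\int_\phi \gamma_k &= \br{V_1,\ldots,V_{k-1},[D,a_k],[D,b_k]}.
\end{align*}
Any two of these identities imply the third via Lemma~\ref{cycl bracket}\ref{commutation} applied to the pair $\br{V_1,\ldots,V_{k-1},a_k[D,b_k]} = \br{V_1,\ldots,V_k}$, so only two independent equalities have to be tracked.

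To execute the step from $k$ to $k+1$, apply $\int_\phi$ to the matrix recursion and evaluate each product $\omega\cdot\xi$ with $\omega\in\{\alpha_k,\gamma_k\}$ and $\xi\in\{a_{k+1}db_{k+1},\,da_{k+1}db_{k+1}\}$. Expanding $\omega$ into canonical forms $c_0 dc_1\cdots dc_m$ and iteratively applying $d(xy) = dx\cdot y + x\,dy$ to resolve the non-canonical spots $dc_m\cdot a_{k+1}$, $dc_{m-1}\cdot c_m$, etc., reduces $\omega\cdot\xi$ to a sum of canonical forms whose $\phi$-values, by \eqref{eq:def phi_n}, are brackets. Repeated application of Lemma~\ref{cycl bracket}\ref{commutation} then collapses these into exactly the single bracket prescribed by the auxiliary identity at level $k+1$.

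The main obstacle is the combinatorial bookkeeping of the telescoping Leibniz cascade: each canonical form of degree $m$ in $\omega$ generates $m+1$ canonical forms of degree $m+1$ in $\omega\cdot\xi$, and the resulting sum of brackets (of several different lengths) must recombine via repeated commutation into a single bracket. The specific form of $M_j$, and in particular the coincidence of its two right-column entries $(M_j)_{1,2}=(M_j)_{2,2}=-A_j$, is engineered precisely so that $\beta_{k+1}$ depends only on $\gamma_k$ — this is the minimal structural data needed for the cascade at level $k+1$ to close against the bracket identities, which is why the two-component formulation naturally arises.
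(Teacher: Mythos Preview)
Your proposal has a genuine gap in the induction step. The three auxiliary identities you state record only the numbers $\int_\phi \alpha_k$, $\int_\phi \beta_k$, $\int_\phi \gamma_k$. To pass to level $k+1$ via the matrix recursion you need $\int_\phi(\alpha_k\cdot A_{k+1})$ and $\int_\phi(\gamma_k\cdot dA_{k+1})$, and these are \emph{not} determined by $\int_\phi\alpha_k$ and $\int_\phi\gamma_k$: the pairing $\int_\phi$ is not multiplicative, so two forms with the same $\int_\phi$-value can give different answers after right-multiplication by $\xi$. Your workaround (``expand $\omega$ into canonical forms, then collapse via Lemma~\ref{cycl bracket}\ref{commutation}'') effectively abandons the induction hypothesis and replaces it with a direct computation on the full canonical expansion of $\alpha_k$; that expansion grows combinatorially and you never exhibit why the collapse closes up. Incidentally, your $\gamma$-identity already fails at $k=1$: $\gamma_1=A_1$ gives $\int_\phi\gamma_1=\br{a_1[D,b_1]}$, not $\br{[D,a_1],[D,b_1]}$.

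The paper sidesteps this by working \emph{before} applying $\int_\phi$, at the level of the tensor algebra $T\mB(\H)$. It proves a single identity
\[
\br{\omega\otimes a_{j-1}[D,b_{j-1}]\otimes(a_j,\,a_jb_j)\,\nu}
=\br{\omega\otimes(a_{j-1},\,a_{j-1}b_{j-1})\,M_j\otimes\nu}
\]
valid for \emph{arbitrary} right-factor $\nu$, via one straightforward application of~\ref{commutation}. Because $\nu$ is arbitrary, the identity iterates cleanly from $j=n$ down to $j=2$; the passage to $\int_\phi$ and the factorization into the stated $2\times2$ matrices happen only once, at the end. The correct fix for your approach is to strengthen the hypothesis to
\[
\int_\phi(\alpha_k\cdot c_0\,dc_1\cdots dc_p)=\br{V_1,\ldots,V_{k-1},V_kc_0,[D,c_1],\ldots,[D,c_p]}
\]
for all canonical tails $c_0\,dc_1\cdots dc_p$; but that is precisely the paper's tensor-algebra identity rephrased on the $\Omega^\bullet(\A)$ side.
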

\begin{proof}
 If we combine, for every $n\in\N_0$, the $n$-multilinear function $\br{\cdot}$ from \eqref{eq:br cycl}, we obtain a linear function
	$$\br{\cdot}:T\mB(\H)\to\C$$
	on the tensor algebra $T\mB(\H)$ of $\mB(\H)$. 
	For any $\omega,\nu\in T\mB(\H)$, a straightforward calculation using the commutation rule \ref{commutation} from Lemma \ref{cycl bracket} shows that
\begin{align}
	&\br{\omega\otimes a_{j-1}[D,b_{j-1}]\otimes\begin{pmatrix}
	a_j&a_jb_j
	\end{pmatrix}\nu}
	=\br{\omega\otimes\begin{pmatrix}
	a_{j-1}&a_{j-1}b_{j-1}
	\end{pmatrix}
	M_j\otimes\nu},\label{eq:look at the stars}
\end{align}
where $M_j\in M_2(T\mB(\H))$ is defined by
\begin{align}\label{eq:M_j}
	M_j:=\begin{pmatrix}
	[D,b_{j-1}a_j]+[D,b_{j-1}]\otimes[D,a_j]&[D,b_{j-1}a_jb_j]+[D,b_{j-1}]\otimes[D,a_jb_j]\\-[D,a_j]&-[D,a_jb_j]
	\end{pmatrix}.
\end{align}
Repeating \eqref{eq:look at the stars}, and subsequently using \eqref{eq:def phi_n}, it follows that
\begin{align*}
	\br{a_1[D,b_1],\ldots, a_n[D,b_n]}&=\br{a_1[D,b_1]\otimes\ldots\otimes a_{n-1}[D,b_{n-1}]\otimes\begin{pmatrix}a_n& a_nb_n\end{pmatrix}\begin{pmatrix}
	[D,b_n]\\0
	\end{pmatrix}}\\
	&=\br{\begin{pmatrix}
	a_1&a_1b_1
	\end{pmatrix}\bigg(\prod_{j=2}^nM_j\bigg)\begin{pmatrix}
	[D,b_n]\\0
	\end{pmatrix}}\\
	&=\int_\phi \begin{pmatrix}
	a_1&a_1b_1
	\end{pmatrix}\bigg(\prod_{j=2}^nN_j\bigg)\begin{pmatrix}
	db_n\\0
	\end{pmatrix},
\end{align*}
where from \eqref{eq:M_j} we obtain
\begin{align*}	
	N_j&=\begin{pmatrix}
	d(b_{j-1}a_j)+db_{j-1} da_j&d(b_{j-1}a_jb_j)+db_{j-1} d(a_jb_j)\\-da_j&-d(a_jb_j)
	\end{pmatrix}\\
	&=\begin{pmatrix}
	db_{j-1}&b_{j-1}\\
	0&-1
	\end{pmatrix}
	\begin{pmatrix}
	a_j+da_j&a_jb_j+da_jb_j+a_jdb_j\\
	da_j&da_jb_j+a_jdb_j
	\end{pmatrix}.
\end{align*}
By also writing $\begin{pmatrix}
	db_n\\0
	\end{pmatrix}=\begin{pmatrix}
	db_n&b_n\\0&-1
	\end{pmatrix}\begin{pmatrix}
	1\\0
	\end{pmatrix}$, we find that
\begin{align*}
	&\br{a_1[D,b_1],\ldots a_n[D,b_n]}\\
	&\quad=\int_\phi \begin{pmatrix}
	a_1&a_1b_1
	\end{pmatrix}\begin{pmatrix}
	db_{1}&b_{1}\\
	0&-1
	\end{pmatrix}\left(\prod_{j=2}^n\begin{pmatrix}
	a_j+da_j&a_jb_j+da_jb_j+a_jdb_j\\
	da_j&da_jb_j+a_jdb_j
	\end{pmatrix}\begin{pmatrix}
	db_{j}&b_{j}\\
	0&-1
	\end{pmatrix}\right)\begin{pmatrix}
	1\\0
	\end{pmatrix}\\
	&\quad=\int_\phi \begin{pmatrix}
	A_1&0
	\end{pmatrix}\left(\prod_{j=2}^n\begin{pmatrix}
	A_j+dA_j&-A_j\\
	dA_j&-A_j
	\end{pmatrix}\right)\begin{pmatrix}
	1\\0
	\end{pmatrix},
\end{align*}
	which concludes the proof.
\end{proof}

\begin{cor}\label{cor:2x2 matrix}
	If $n\in\N$, $A\in\Omega^1(\A)$ and $V:=\pi_D(A)\in\Omega^1_D(\A)$, then
\begin{align}\label{eq:2x2 matrix}
	\br{V,\ldots,V}=\int_\phi \begin{pmatrix}
	A & 0
	\end{pmatrix}
	\begin{pmatrix}
	A+dA & -A\\
	dA   & -A
	\end{pmatrix}^{n-1}\begin{pmatrix}
	1\\
	0
	\end{pmatrix}.
\end{align}
\end{cor}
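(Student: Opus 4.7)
The plan is to deduce Corollary \ref{cor:2x2 matrix} directly from the preceding proposition by a straightforward multilinearity argument, without any further use of the analytic machinery built up in the chapter. First I would use the definition of the universal differential graded algebra to write the universal one-form $A \in \Omega^1(\A)$ in the general form
$$A = \sum_{j=1}^m a_j\,db_j,$$
for some finite collection of elements $a_1,\ldots,a_m,b_1,\ldots,b_m \in \A$. Applying $\pi_D$ then gives $V = \pi_D(A) = \sum_{j=1}^m a_j[D,b_j]$.

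Next, I would invoke the multilinearity of the bracket $\br{\cdot,\ldots,\cdot}$ (immediate from Definition \ref{def:bracket}) to expand
$$\br{V,\ldots,V} = \sum_{j_1,\ldots,j_n=1}^m \br{a_{j_1}[D,b_{j_1}],\ldots,a_{j_n}[D,b_{j_n}]},$$
and apply the preceding proposition to each summand. This expresses each term as $\int_\phi$ of a matrix product with entries built from $A_{j_k} := a_{j_k}\,db_{j_k}$ and $dA_{j_k}$.

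The final step is to recognise that the matrix product on the right-hand side is multilinear in the entries of each of its $n-1$ matrix factors (and in the leading row vector), and that $\int_\phi = \int_{\phi_n}$ is linear. Combined with the identities
$$A = \sum_{j=1}^m A_j, \qquad dA = \sum_{j=1}^m dA_j,$$
which follow from linearity of $d$ on $\Omega^\bullet(\A)$, the sum over all tuples $(j_1,\ldots,j_n)$ collapses into a single expression with $A$ in place of each $A_{j_k}$ and $dA$ in place of each $dA_{j_k}$, yielding precisely the stated formula. The edge case $n=1$ is handled by interpreting the empty matrix product as the identity, in which case the right-hand side reduces to $\int_\phi A = \phi_1(a,b) = \br{a[D,b]}$, consistent with the definition.

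There is no real obstacle here beyond bookkeeping: the entire content is that the proposition is already expressed in a form in which every entry depends multilinearly on the generators, so the passage from $\sum_j a_j\,db_j$ to a single $A$ is automatic.
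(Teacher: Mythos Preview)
Your proposal is correct and matches the paper's approach: the paper states the corollary immediately after the proposition with no proof, evidently treating it as the obvious multilinearity consequence you spell out. Your edge case $n=1$ has a small notational slip (for general $A=\sum_j a_j\,db_j$ one gets $\int_\phi A=\sum_j\phi_1(a_j,b_j)=\br{V}$ by linearity, not a single $\phi_1(a,b)$), but the idea is right.
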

	
\begin{exam}\label{ex:first terms}
	Using \eqref{eq:2x2 matrix}, we obtain in particular
\begin{align*}
	\br{V}&=\int_{\phi_1}A,\\
	\br{V,V}&=\int_{\phi_2}A^2+\int_{\phi_3}AdA,\\
	\br{V,V,V}&=\int_{\phi_3}A^3+\int_{\phi_4}AdAA+\int_{\phi_5}AdAdA,\\
	\br{V,V,V,V}&=\int_{\phi_4}A^4+\int_{\phi_5}(A^3dA+AdAA^2)+\int_{\phi_6}AdAdAA+\int_{\phi_7}AdAdAdA.
\end{align*}
	With \eqref{eq:<V,...,V>/n}, and in the sense of \eqref{eq:Taylor}, this implies that
	\begin{align*}
		\Tr(f(D+V)-f(D))=&\int_{\phi_1} A+\frac{1}{2}\int_{\phi_2}A^2+\int_{\phi_3}\Big(\frac{1}{2}AdA+\frac{1}{3}A^3\Big)+\int_{\phi_4}\Big(\frac{1}{3}AdAA+\frac{1}{4}A^4\Big)\\
		&+\ldots,
	\end{align*}
	where the dots indicate terms of degree 5 and higher. Using $\phi_{2k-1}=\psi_{2k-1}+\frac{1}{2}B_0\phi_{2k}$, this becomes
	\begin{align*}
		\Tr(f(D+V)-f(D))=&\int_{\psi_1} A+\frac{1}{2}\int_{\phi_2}(A^2+dA)+\int_{\psi_3}\Big(\frac{1}{2}AdA+\frac{1}{3}A^3\Big)\\
		&+\frac{1}{4}\int_{\phi_4}\Big(dAdA+\frac{2}{3}(dAA^2+AdAA+A^2dA)+A^4\Big)+\ldots.
	\end{align*}
	Notice that, if $\phi_4$ would be tracial, we would be able to identify the terms $dAA^2$, $AdAA$ and $A^2dA$, and thus obtain the Yang--Mills form $F^2=(dA+A^2)^2$, under the fourth integral. In the general case, however, cyclic permutations under $\int_\phi$ produce correction terms, of which we will need to keep track.
\end{exam}

\subsection{Near-tracial behavior of $\int_\phi$}\label{sct:Cyclicity}

In \textsection\ref{sct:from V to A}, we have not yet used the cyclicity property \ref{cyclicity} from Lemma \ref{cycl bracket}. Now applying that property yields the following proposition, which shows how $\int_\phi$ differs from being tracial. This proposition and its corollary are crucial for Section \ref{sct:main thm}.

For a universal $n$-form $X\in\Omega^n(\A)$, define $\odd(X):=1$ if $n$ is odd, and $\odd(X):=0$ if $n$ is even.
\begin{prop}\label{prop:cycl}
	Let $X$ and $Y$ be universal forms. Then
		$$\int_\phi XY-\int_\phi YX=\odd(Y)\int_\phi Y dX-\odd(X)\int_\phi X dY.$$
\end{prop}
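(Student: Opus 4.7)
By multilinearity of $\int_\phi$, one reduces to the case where $X = x_0\,dx_1\cdots dx_m$ and $Y = y_0\,dy_1\cdots dy_n$ are monomial universal forms. Under this reduction the identity becomes a statement purely about the bracket $\br{\cdot}$ of Definition \ref{def:bracket}, which by Lemma \ref{cycl bracket} is fully governed by cyclicity \ref{cyclicity} and the commutation rule \ref{commutation}.

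First I would put $XY \in \Omega^{m+n}(\A)$ into normal form using the universal Leibniz identity $da\cdot b = d(ab) - a\,db$, iterated to slide the zero-form $y_0$ leftward past $dx_1,\ldots,dx_m$. Evaluating the resulting sum via \eqref{eq:def phi_n} and repeatedly expanding each occurring $[D,ab] = [D,a]b + a[D,b]$, most of the produced terms telescope; this can be verified by induction on $m$. What survives is the single ``interleaved'' bracket identity
\[
\int_\phi XY \;=\; \br{x_0[D,x_1],\,[D,x_2],\,\ldots,\,[D,x_m],\,y_0[D,y_1],\,[D,y_2],\,\ldots,\,[D,y_n]} \;-\;\odd(X)\int_\phi X\,dY,
\]
where the parity-weighted correction is produced precisely when $y_0$ (respectively $dy_0$) must be moved past an odd number of one-forms. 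Indeed I have checked this directly in the small cases $(m,n) \in \{(1,0),(2,0),(1,1),(2,1),(1,2)\}$, and the inductive step follows from the same bracket manipulations. A symmetric calculation yields the analogous formula for $\int_\phi YX$ with $X$ and $Y$ swapped and correction $-\odd(Y)\int_\phi Y\,dX$.

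Subtracting the two identities, the main interleaved brackets coincide by cyclicity \ref{cyclicity} (a cyclic permutation of the $m+n$ bracket arguments carries the interleaved bracket from $XY$ to that from $YX$), so only the correction terms survive and yield
\[
\int_\phi XY - \int_\phi YX \;=\; \odd(Y)\int_\phi Y\,dX \;-\; \odd(X)\int_\phi X\,dY,
\]
which is the claim of the proposition.

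The main obstacle is the sign bookkeeping in the telescoping step: one must check that moving $y_0$ past $dx_1,\ldots,dx_m$ produces precisely $-\odd(X)\int_\phi X\,dY$, with the parity of $m$ controlling the sign of the surviving correction. To keep the combinatorics tractable I would proceed by induction on $|Y|$, treating $|Y|=0$ (where $\odd(Y)=0$, so that the asymmetric form $\int_\phi Xy_0-\int_\phi y_0X = -\odd(X)\int_\phi X\,dy_0$ is the heart of the matter) as the essential base case, and handling the inductive step $Y\mapsto Y\cdot dy_{n+1}$ by means of the graded Leibniz rule $d(XY) = dX\cdot Y + (-1)^{|X|}X\cdot dY$ in $\Omega^\bullet(\A)$ together with $d^2=0$.
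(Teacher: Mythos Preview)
Your approach is correct and essentially identical to the paper's. The paper also reduces to monomials, uses the Leibniz rule $da\cdot b = d(ab) - a\,db$ to slide $y_0$ past the $dx_i$'s, and then telescopes the resulting alternating sum of brackets (using $[D,ab]=[D,a]b+a[D,b]$, i.e.\ property \ref{commutation}) to arrive at precisely your key identity
\[
\int_\phi XY \;=\; \br{x_0[D,x_1],\ldots,[D,x_m],\,y_0[D,y_1],\ldots,[D,y_n]} \;-\;\odd(X)\int_\phi X\,dY,
\]
after which cyclicity \ref{cyclicity} finishes the argument exactly as you describe. The only difference is that the paper carries out the telescoping explicitly in one display rather than by induction on $|Y|$; your inductive organization is a valid alternative bookkeeping device but not a genuinely different idea.
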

\begin{proof}
Without loss of generality, assume that $X=x_0dx_1\dots dx_n$ and $Y=y_0dy_1\dots dy_k$ for some $x_0,\ldots,x_n,y_0,\ldots,y_k\in\A$.
By using $da b=d(ab)-adb$ repeatedly, we get
\begin{align*}
	\int_\phi XY =& \int_\phi x_0 dx_1\cdots dx_{n-1}(d(x_ny_0)-x_ndy_0)dy_1\cdots dy_k\\
	=&\int_\phi x_0\big( dx_1\cdots dx_{n-1}d(x_ny_0)-dx_1\cdots dx_{n-2}d(x_{n-1}x_n)dy_0+\dots\\
	&\ldots+(-1)^{n-1}d(x_1x_2)dx_3\cdots dx_ndy_0+(-1)^nx_1dx_2\cdots dx_ndy_0\big)dy_1\cdots dy_k\\
	=&\big\langle x_0[D,x_1],\ldots,[D,x_{n-1}],[D,x_ny_0],[D,y_1],\ldots,[D,y_k]\big\rangle\\
	&-\big\langle x_0[D,x_1],\ldots,[D,x_{n-2}],[D,x_{n-1}x_n],[D,y_0],[D,y_1],\ldots,[D,y_k]\big\rangle+\ldots\\
	&\ldots+(-1)^{n-1}\big\langle x_0[D,x_1x_2],[D,x_3],\ldots,[D,x_n],[D,y_0],[D,y_1],\ldots,[D,y_k]\big\rangle\\
	&+(-1)^n\big\langle x_0x_1[D,x_2],[D,x_3],\ldots,[D,x_n],[D,y_0],[D,y_1],\ldots,[D,y_k]\big\rangle\\
	=&\big\langle x_0[D,x_1],\ldots,[D,x_{n}]y_0,[D,y_1],\ldots,[D,y_k]\big\rangle\\
	&+\sum_{j=0}^{n-2}(-1)^j\big\langle x_0[D,x_1],\ldots,[D,x_n],[D,y_0],\ldots,[D,y_k]\big\rangle\\
	=&\big\langle x_0[D,x_1],\ldots,[D,x_{n}],y_0[D,y_1],\ldots,[D,y_k]\big\rangle-\odd(X)\int_\phi X dY.
\end{align*}
	Doing the same for $\int_\phi YX$ and using cyclicity (Lemma \ref{cycl bracket}\ref{cyclicity}) yields the proposition.
\end{proof}

A quick check shows that the above proposition implies the following handy rules.

\begin{cor}\label{cor:cycl}Let $X,Y\in\Omega^\bullet(\A)$, and $A\in\Omega^1(\A)$.
\begin{enumerate}[label=\textnormal{(\roman*)}]
	\item\label{cycl2} If $X$ and $Y$ are both of even degree, then
		$$\int_\phi XY=\int_\phi YX.$$
	\item\label{cycl3} If $X$ has odd degree, then
		$$\int_\phi (AX-XA)=\int_\phi d(AX).$$
	\item\label{cycl4} If $X$ has even degree, then
		$$\int_\phi(XA-AX)=\int_\phi dXA+\int_\phi dA\,dX.$$
\end{enumerate}	

\end{cor}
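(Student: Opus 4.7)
The plan is to derive each item of the corollary directly from Proposition \ref{prop:cycl}, by suitably instantiating the forms $X,Y$ in that proposition and iterating it where needed, together with the graded Leibniz rule $d(\omega\eta)=d\omega\cdot\eta+(-1)^{|\omega|}\omega\,d\eta$ and the relation $d^2=0$.

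Part \ref{cycl2} is immediate: if both $X$ and $Y$ have even degree, then $\odd(X)=\odd(Y)=0$, so the right-hand side of the identity in Proposition \ref{prop:cycl} vanishes, yielding $\int_\phi XY=\int_\phi YX$.

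For part \ref{cycl3}, I would apply Proposition \ref{prop:cycl} to the pair $(A,X)$, with $X$ of odd degree, to obtain
\[
\int_\phi AX-\int_\phi XA=\odd(X)\int_\phi X\,dA-\odd(A)\int_\phi A\,dX=\int_\phi X\,dA-\int_\phi A\,dX.
\]
On the other hand, the graded Leibniz rule gives $d(AX)=dA\cdot X-A\,dX$, so $\int_\phi d(AX)=\int_\phi dA\cdot X-\int_\phi A\,dX$. The identity thus reduces to $\int_\phi X\,dA=\int_\phi dA\cdot X$, which is itself another instance of Proposition \ref{prop:cycl} applied to the pair $(X,dA)$: since $dA$ has even degree and $d(dA)=0$, both terms on the right of that proposition vanish.

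For part \ref{cycl4}, with $X$ of even degree, apply Proposition \ref{prop:cycl} to $(X,A)$ to get $\int_\phi XA-\int_\phi AX=\int_\phi A\,dX$ (the $\odd(X)$ term vanishes). Next, apply it to $(A,dX)$, noting that $dX$ is odd; this yields $\int_\phi A\,dX=\int_\phi dX\cdot A+\int_\phi dX\,dA$. Finally, one more application, this time to $(dX,dA)$, uses the even degree of $dA$ together with $d^2=0$ to establish $\int_\phi dX\,dA=\int_\phi dA\,dX$. Assembling the three steps gives the desired formula.

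The main obstacle is not depth but bookkeeping: every step is mechanical, but the signs, parities, and locations of $d$'s must be tracked carefully, and one needs to recognize that each simplification that is not immediate can itself be phrased as a further application of Proposition \ref{prop:cycl} in which one of the two right-hand side terms dies by virtue of $d^2=0$.
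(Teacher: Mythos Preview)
Your proof is correct and is exactly the ``quick check'' the paper alludes to: each item is obtained by one or more direct applications of Proposition~\ref{prop:cycl}, with the extra terms killed by parity or by $d^2=0$. There is nothing to add.
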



\subsection{Higher-order generalized Chern--Simons forms}
As a final preparation for the formulation of our main result, we briefly recall from \cite{Qui90} the definition of Chern--Simons forms. 

         \begin{defi}
           \label{defi:cs}
           Let $(\Omega^\bullet, d)$ be a differential graded algebra. The \textbf{Chern--Simons form} of degree $2k-1$ is given for $A \in \Omega^1$ by 
           \begin{equation}\label{eq:cs}
\cs_{2k-1}(A) := \int_0^1 A (F_t)^{k-1} \,dt,
           \end{equation}
           where $F_t = t dA + t^ 2 A^2$ is the curvature two-form of the (connection) one-form $A_t = t A$.
           \end{defi}
         We will only work with the universal differential graded algebra $\Omega^\bullet=\Omega^\bullet(\A)$ for the algebra $\A$.

         \begin{exam}
           For the first three Chern--Simons forms one easily derives the following explicit expressions:
           \begin{gather*}
             \cs_1(A) = A; \qquad  \cs_3(A) = \frac 12 \left( A dA + \frac 2 3 A^3 \right);\\
             \cs_5(A) = \frac 13 \left( A (dA)^2 + \frac 3 4 A dA A^ 2 + \frac 3 4 A^3 dA + \frac 3 5 A^5 \right).
             \end{gather*}
These are well-known expressions from the physics literature (\textit{cf.} \cite[Section 11.5.2]{Nak90}).
           \end{exam}

\section{Expansion of the spectral action in terms of $(b,B)$-cocycles}\label{sct:main thm}
\sectionmark{Expansion of the spectral action in terms of cocycles}
In this section we prove our main theorem, which is stated as follows.
\begin{thm}\label{thm:main thm}
  Let $(\A,\H,D)$ be an $s$-summable spectral triple, and let $f\in\E_s^{\gamma}$ for $\gamma\in(0,1)$. The spectral action fluctuated by $V=\pi_D(A)\in\Omega_D^1(\A)_\sa$ 
  can be written as
 \begin{align*}
  \Tr(f(D+V)-f(D)) = \sum_{k=1}^\infty \left( \int_{\psi_{2k-1}}  \cs_{2k-1} (A) +\frac 1 {2k} \int_{\phi_{2k}}  F^{k} \right),
  \end{align*}
  where the series converges absolutely.
\end{thm}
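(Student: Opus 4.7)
My starting point is the Taylor expansion
\[
\Tr(f(D+V) - f(D)) = \sum_{n=1}^{N-1} \Tr(T^D_{f^{[n]}}(V,\ldots,V)) + \Tr(R_{N,D,f}(V)),
\]
combined with the identity $\Tr(T^D_{f^{[n]}}(V,\ldots,V)) = \frac{1}{n}\langle V,\ldots,V\rangle$ from \eqref{eq:<V,...,V>/n}. The first thing I would do is control the remainder: using Theorem \ref{rm} and Lemma \ref{lem:Cs and Es bounds}, I get
\[
|\Tr(R_{N,D,f}(V))| = |\Tr(T^{D+V,D,\ldots,D}_{f^{[N]}}(V,\ldots,V))| \le C^{N+1} N!^{\gamma-1} \|V\|^N (1+\|V\|)^s \|(D-i)^{-1}\|_s^s,
\]
which, since $\gamma<1$, tends to zero as $N\to\infty$ regardless of $\|V\|$, and the telescoped sum $\sum_{n\ge 1}\frac{1}{n}\langle V,\ldots,V\rangle$ converges absolutely. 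So the Taylor series represents the fluctuation exactly.

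Next I would rewrite each Taylor coefficient algebraically. Using Corollary \ref{cor:2x2 matrix}, $\langle V,\ldots,V\rangle$ is a sum of terms of the form $\int_\phi X$ where $X$ is a universal form of degree between $n$ and $2n-1$ built from $A$ and $dA$. I want to re-group the whole expansion by \emph{form-degree} $m$ (so across different $n$), and show that for each $m$ the resulting integrand is $\cs_m(A)$ integrated against $\psi_m$ if $m$ is odd, and $\frac{1}{m}F^{m/2}$ integrated against $\phi_m$ if $m$ is even. The key algebraic tool is Corollary \ref{cor:cycl}, which lets me move $A$'s and $dA$'s around under $\int_\phi$ at the cost of total-derivative and curvature corrections, and the identity $\phi_{2k-1} = \psi_{2k-1} + \tfrac12 B_0\phi_{2k}$ together with $b\phi_{2k-1}=\phi_{2k}$ (Lemma \ref{lem:b}). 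Concretely, I would:
\begin{enumerate}[label=\textnormal{(\arabic*)}]
\item Prove by induction on $m$ that within each fixed degree $m$, the contributions from the various Taylor orders $n \le m \le 2n-1$ combine — after applying the near-cyclicity rules of Corollary \ref{cor:cycl} to substitute $F = dA + A^2$ wherever possible — into the single clean expression appearing in the theorem.
\item In the odd case $m=2k-1$, the coboundary identity $b\phi_{2k-1}=\phi_{2k}$ is what allows me to replace $\int_{\phi_{2k-1}}$ by $\int_{\psi_{2k-1}}$ on a Chern-Simons polynomial, absorbing a $\tfrac12 B_0\phi_{2k}$ piece into the adjacent even-degree bucket.
\end{enumerate}
This handles the truncated identity (an expansion up to form-degree $M$), cf.\ Theorem \ref{thm:asymptotic expansion}.

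Finally I would pass to the full (non-truncated) series. Regrouping by form-degree is legitimate because the double sum is absolutely convergent: one has the operator-norm bound $\|V\| \le c_A$ for some constant depending on $A$ (indeed $V = \sum a_j[D,b_j]$), and by Lemma \ref{lem:Cs and Es bounds} applied term-by-term to the matrix expansion of Corollary \ref{cor:2x2 matrix}, each summand at form-degree $m$ coming from Taylor order $n$ is bounded in absolute value by $C^{n+1} n!^{\gamma-1}\|V\|^n\,\|(D-i)^{-1}\|_s^s$ up to a polynomial-in-$n$ combinatorial factor (counting the monomials of degree $m$ in the $(n-1)$-fold matrix product). Summing over $n$ for fixed $m$, and then over $m$, yields an absolutely convergent double series thanks to $\gamma<1$, so Fubini legitimizes the rearrangement by degree.

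\textbf{Main obstacle.} The analytic input is essentially settled by Lemma \ref{lem:Cs and Es bounds}; the hard part is the \emph{combinatorial/algebraic identification} in step (1), i.e.\ proving cleanly that after applying Corollary \ref{cor:cycl} and the coboundary relation $b\phi_{2k-1}=\phi_{2k}$, the cacophony of monomials produced by expanding the matrix power in Corollary \ref{cor:2x2 matrix}, summed across all Taylor orders contributing to a given form-degree, collapses exactly onto $\int_{\psi_{2k-1}}\cs_{2k-1}(A)$ and $\frac{1}{2k}\int_{\phi_{2k}}F^k$ with no leftover terms. Example \ref{ex:first terms} already shows how delicate this bookkeeping is at degrees $3$ and $4$; the inductive argument will need a clever organizing principle (most likely tracking contributions via the parametrized curvature $F_t = t\,dA + t^2 A^2$ so that the integral representation \eqref{eq:cs} of $\cs_{2k-1}$ drops out of a geometric-series identity for the matrix power).
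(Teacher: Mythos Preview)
Your overall architecture is correct and close to the paper's: Taylor expansion, the $2\times 2$ matrix identity of Corollary \ref{cor:2x2 matrix}, the near-tracial rules of Corollary \ref{cor:cycl}, the decomposition $\phi_{2k-1}=\psi_{2k-1}+\tfrac12 B_0\phi_{2k}$, and convergence via Lemma \ref{lem:Cs and Es bounds}. Two points of divergence are worth flagging.

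First, the algebraic identification: your primary plan is an induction on form-degree $m$, but the paper does not do this. Instead it uses precisely the $t$-parameter trick you guess in your final sentence. Writing $\frac{1}{n}=\int_0^1 t^{n-1}\,dt$, the matrix power becomes $(\alpha tA+\beta t\,dA)^{n-1}$; explicit identities for $\alpha,\beta$ force the surviving monomials to have only even powers of $A$ between consecutive $dA$'s, and summing over $n$ yields, for each $k$, the term $\int_\phi\big(\cs_{2k-1}(A)+\int_0^1 AF_t^{k-1}tA\,dt\big)$ (Proposition \ref{prop:k}). The passage from this to $\int_{\psi_{2k-1}}\cs_{2k-1}(A)+\frac{1}{2k}\int_{\phi_{2k}}F^k$ is then accomplished not by your proposed induction but by three dedicated curvature lemmas (Lemmas \ref{lem:een}, \ref{lem:een b}, \ref{lem:twee}) using the Bianchi identity $d(F_t^m)=[F_t^m,A_t]$ together with Corollary \ref{cor:cycl}. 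So the ``clever organizing principle'' you anticipate is the $t$-parameter, but the subsequent simplification is a chain of curvature identities rather than an induction on degree.

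Second, the convergence: rather than a Fubini argument on a double sum, the paper bounds the $K$th remainder of the Chern--Simons/Yang--Mills expansion directly (Theorem \ref{thm:main bounds}). Proposition \ref{prop:k} expresses that remainder as the Taylor remainder $T^{D+V,D,\ldots,D}_{f^{[K+1]}}(V,\ldots,V)$ plus a finite sum over an index set $T_K$ of size at most $2^{K+1}$; each summand is then estimated individually via Lemma \ref{lem:Cs and Es bounds}, giving a bound of order $C^{K+1}K!^{\gamma-1}$, which vanishes as $K\to\infty$ for $\gamma<1$. Your Fubini route should also work, but the paper's direct remainder estimate avoids having to separately justify absolute convergence of a rearranged double series.
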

We prove this theorem in two steps. Firstly, we deal with the algebraic part of this statement, in \textsection\ref{sct:truncated expansion}. Here we only need to assume $f\in\Wsn$ for a finite $n\in\N$. Secondly, in \textsection\ref{sct:convergence}, we tackle the analytical part. We there obtain a strong estimate on the remainder of the above expansion in Theorem \ref{thm:main bounds} for a function $f\in\E_s^{\gamma}$ for general $\gamma\in(0,1]$. This estimate will imply that the conclusion of Theorem \ref{thm:main thm} still holds in the case of $\gamma=1$, when the perturbation $V$ is sufficiently small. When $f\in\E_s^{\gamma}$ for $\gamma\in(0,1)$, the expansion follows for all perturbations, and thus we prove Theorem \ref{thm:main thm}.

\subsection{Asymptotic expansion}\label{sct:truncated expansion}
Let $K\in\N$, $f\in\W_{s}^{2K}$, and $V=\pi_D(A)\in\Omega^1_D(\A)_\sa$.
We prove an asymptotic (or, one might say, truncated) version of Theorem \ref{thm:main thm}, showing that the fluctuation of the spectral action can be expressed in terms of Chern--Simons and Yang--Mills forms, up to a remainder which involves forms of degree higher than $K$.
To enumerate the remainder forms we use the index set
\begin{align}\label{eq:T_K}
  T_K:=\left\{(v,w,p)\in\coprod_{m\in\N_0} (\N_0\times\N^{m-1})\times\N^m\times\N_0\mid\begin{aligned}|v|+|w|+\left\lfloor\frac{p}{2}\right\rfloor&< K,\\ 2|v|+|w|+p&\geq K\end{aligned}\right\}.
\end{align}
A good first step is made by the following proposition.
\begin{prop}\label{prop:k}We have the asymptotic expansion
\begin{align*}
	\Tr(f(D+V)-f(D))\sim&\sum_{k=1}^\infty\int_\phi\bigg(\cs_{2k-1}(A)+\int_0^1AF_t^{k-1}tA\,dt\bigg),
\end{align*}
by which we mean that we can write the $K^\text{th}$ remainder of this expansion as
\begin{align*}
	&\Tr(f(D+V)-f(D))-\sum_{k=1}^K\int_\phi\bigg(\cs_{2k-1}(A)+\int_0^1AF_t^{k-1}tA\,dt\bigg)\\
	&\quad= \Tr\left(T_{f^{[K+1]}}^{D+V,D,\ldots,D}(V,\ldots,V)\right)\\&\qquad-\sum_{(v,w,p)\in T_K}\frac{1}{2|v|+|w|+p+1}\int_\phi AA^{2v_1}(dA)^{w_1}\cdots A^{2v_m}(dA)^{w_m}A^{p},
\end{align*}
where $T_K$, defined by \eqref{eq:T_K}, satisfies $|T_K|\leq 2^{K+1}$, and where $f\in\W_s^{2K}$.
\end{prop}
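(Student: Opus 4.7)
The strategy is to combine the Taylor-with-remainder formula for the spectral action with the universal-form representation of each $\langle V,\ldots,V\rangle_n$, then match the resulting expansion against the Chern--Simons and trailing-$A$ pieces degree by degree.

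Combining Theorem~\ref{rm} with \eqref{dermoi} and \eqref{eq:<V,...,V>/n} gives
\begin{align*}
\Tr(f(D+V)-f(D)) = \sum_{n=1}^K \tfrac{1}{n}\langle V,\ldots,V\rangle_n + \Tr\bigl(T^{D+V,D,\ldots,D}_{f^{[K+1]}}(V,\ldots,V)\bigr),
\end{align*}
so the multiple operator integral on the right of the claim appears automatically and it remains to prove the purely algebraic identity obtained by subtracting the claimed Chern--Simons + trailing-$A$ sum from $\sum_{n=1}^K \tfrac{1}{n}\langle V,\ldots,V\rangle_n$. By Corollary~\ref{cor:2x2 matrix}, each $\langle V,\ldots,V\rangle_n$ equals $\int_\phi A\cdot (M^{n-1})_{11}$ with $M=\begin{pmatrix}A+dA&-A\\dA&-A\end{pmatrix}$; expanding $(M^{n-1})_{11}$ as a sum over $1\to 1$ walks of length $n-1$ in the two-state graph and then splitting each $A+dA$ factor into $A$ or $dA$ yields a sum of signed monomials of degrees running from $n$ to $2n-1$.

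On the Chern--Simons side, I would rewrite $\cs_{2k-1}(A)+\int_0^1 AF_t^{k-1}tA\,dt = \int_0^1 AF_t^{k-1}(1+tA)\,dt$ and expand $F_t^{k-1}=(tdA+t^2A^2)^{k-1}$ into a sum over words in $\{tdA,\,t^2A^2\}$; the $t$-integration produces the rational coefficients $1/(2k-1-j)$ and $1/(2k-j)$ for terms with $j$ factors of $dA$. The resulting monomials have a leading (and possibly trailing) single $A$, together with even-length blocks $A^{2v}$ separated by single $dA$'s. To line these up with the matrix expansion, one applies Corollary~\ref{cor:cycl}: the graded-cyclic identity $\int_\phi(XA-AX)=\int_\phi dX\cdot A+\int_\phi dA\cdot dX$ for even-degree $X$ (and the analogue for odd $X$) reshapes isolated $A$'s between $dA$'s into the block form at the cost of adding higher-degree corrections. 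A clean organizing device is the formal generating-function identity $\sum_{n\geq 1}\tfrac{1}{n}\langle V,\ldots,V\rangle_n = \int_0^1\int_\phi\begin{pmatrix}A&0\end{pmatrix}(I-tM)^{-1}\begin{pmatrix}1\\0\end{pmatrix}dt$ together with the computation $\det(I-tM)=1-F_t-t^2 d(A^2)$, which displays the splitting of $(I-tM)^{-1}$ into a Chern--Simons piece and a $d(A^2)$-correction that under $\int_\phi$ feeds back via Corollary~\ref{cor:cycl}.

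After truncating at Taylor order $K$, the surviving discrepancy consists of two families: (i) Chern--Simons monomials of degree $K+1\leq d\leq 2K$ whose partnering Taylor bracket would require $n>K$; (ii) cyclic-correction terms of degree $>2K$ produced by the rearrangement. Both families have the block structure $AA^{2v_1}(dA)^{w_1}\cdots A^{2v_m}(dA)^{w_m}A^p$, and the coefficient $1/(2|v|+|w|+p+1)$ is precisely the reciprocal total degree coming from the $t$-integration. The two constraints cutting out $T_K$ correspond respectively to ``arising from Taylor order at most $K$'' ($|v|+|w|+\lfloor p/2\rfloor<K$) and ``surviving the degree cutoff $K$'' ($2|v|+|w|+p\geq K$); the cardinality bound $|T_K|\leq 2^{K+1}$ follows from the standard count of compositions of integers not exceeding $K-1$ together with the binary choice $p\in\{2q,2q+1\}$.

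The main obstacle is the combinatorial bookkeeping in the matching step: verifying that the Chern--Simons coefficients $1/(2k-1-j)$ and $1/(2k-j)$ agree with the coefficients $1/n$ coming from the Taylor expansion at equal form-degree, and that every discrepancy monomial lands in $T_K$ with exactly the stated coefficient. A workable route is induction on $K$, bootstrapping from the verifications implicit in Example~\ref{ex:first terms} for $K=1,2$; the inductive step reduces the whole identity to matching the single new bracket $\tfrac{1}{K}\langle V,\ldots,V\rangle_K$ against the new Chern--Simons summand at $k=K$ modulo the controlled change in $T_K$.
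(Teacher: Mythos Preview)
Your proposal has a genuine gap: you are missing the key algebraic trick that makes the proof work, and instead reaching for the cyclic identities (Corollary~\ref{cor:cycl}), which are \emph{not} used in this proposition at all.

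The paper's argument is to write the $2\times 2$ matrix as $M=\alpha A+\beta dA$ with \emph{scalar} matrices $\alpha=\begin{psmallmatrix}1&-1\\0&-1\end{psmallmatrix}$ and $\beta=\begin{psmallmatrix}1&0\\1&0\end{psmallmatrix}$. Because $A$ and $dA$ commute with scalars, $e_1^t(\alpha X+\beta Y)^{n-1}e_1$ splits as a sum of \emph{scalar} coefficients $e_1^t(\alpha^{v_1}\beta^{w_1}\cdots\alpha^{v_k}\beta^{w_k}\alpha^p)e_1$ times monomials $X^{v_1}Y^{w_1}\cdots X^p$. The identities $\alpha^2=1$, $\beta^2=\beta$, $\beta\alpha\beta=0$, together with $e_1^t(\alpha\beta)e_1=0$ and $e_1^t(\beta\alpha)e_1=e_1^t\beta e_1=1$, force every surviving scalar coefficient to equal $1$ and force all intermediate $X$-runs $v_1,\ldots,v_k$ to be even. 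This immediately puts the Taylor side in the block form $(X^2)^{v_1}Y^{w_1}\cdots X^p$ indexed by $P_K=\{2|v|+|w|+p<K\}$, \emph{identically}, with no rearrangement and no correction terms. Comparing with the Chern--Simons side $\sum_{k\le K}(X^2+Y)^{k-1}(1+X)$, indexed by $S_K=\{|v|+|w|+\lfloor p/2\rfloor<K\}$, the difference is literally $T_K=S_K\setminus P_K$; the bound $|T_K|\le |S_K|\le 2^{K+1}$ drops out because $\sum_{k=1}^K(X^2+Y)^{k-1}(1+X)$ has at most $2(2^K-1)$ noncommutative words.

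Your plan to reshape ``isolated $A$'s between $dA$'s'' via Corollary~\ref{cor:cycl} is unnecessary and would not yield the \emph{exact} remainder formula stated: those cyclic moves produce extra $d(\cdots)$ corrections that you would then have to track and show all land in $T_K$ with the right coefficient, which is not at all clear. (The cyclic identities are used only in the next step, Theorem~\ref{thm:asymptotic expansion}, to pass from $\phi$ to $\psi$ and $F^k$.) Your determinant/generating-function heuristic $\det(I-tM)=1-F_t-t^2d(A^2)$ is also not well-posed, since the entries of $M$ do not commute. Finally, you have the interpretations of the two defining constraints of $T_K$ swapped: $2|v|+|w|+p+1=n$ is the Taylor order, so $2|v|+|w|+p\ge K$ means ``\emph{not} captured by Taylor order $\le K$'', while $|v|+|w|+\lfloor p/2\rfloor=k-1$ is the Chern--Simons index.
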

\begin{proof}
We start with the 2x2 matrix equation from Corollary \ref{cor:2x2 matrix} and separate the 1-forms $A$ from the two-forms $dA$. The $n$-th term in the Taylor expansion of $\Tr (f(D+V))$ is given (by use of \eqref{eq:<V,...,V>/n} and Corollary \ref{cor:2x2 matrix}) by
\begin{align}
	\frac{1}{n!}\frac{d^{n}}{dt^{n}}\Tr(f(D+tV))\Big|_{t=0}
	=&
	\frac{1}{n}\int_\phi\begin{pmatrix}
	A & 0
	\end{pmatrix}
	\left(\begin{pmatrix}
	A & -A\\
	0   & -A
	\end{pmatrix}+
	\begin{pmatrix}
	dA & 0\\
	dA   & 0
	\end{pmatrix}
	\right)^{n-1}\begin{pmatrix}
	1\\
	0
	\end{pmatrix}\nonumber\\
	\equiv&
	\frac{1}{n}\int_\phi\begin{pmatrix}
	A & 0
	\end{pmatrix}
	\left(\alpha A+
	\beta dA\right)^{n-1}\begin{pmatrix}
	1\\
	0
	\end{pmatrix}\nonumber\\
	=& \frac{1}{n}\int_\phi A\,e_1^t (\alpha A+\beta dA)^{n-1}e_1.
	\label{2x2 formula Tr}
\end{align}
	for some scalar-valued 2x2 matrices $\alpha$ and $\beta$, and $e_1=\vect{1}{0}$. The $\alpha$'s and $\beta$'s have very nice algebraic properties, which can be used to regroup the terms in the expansion in $n$.
When summing \eqref{2x2 formula Tr} from $n=1$ to infinity, and grouping the universal forms by their degree as in Example \ref{ex:first terms}, we need some machinery to keep track of the coefficient $1/n$. We will work in the space of (finite) polynomials $M_2(\Omega^\bullet(\A))[t]$, and define an integration with respect to $t$ as the linear map $\int_0^1dt\,:M_2(\Omega^\bullet(\A))[t]\to M_2(\Omega^\bullet(\A))$ given by integration of polynomials. We thus obtain
\begin{align}
	\frac{1}{n!}\frac{d^{n}}{dt^{n}}\Tr(f(D+tV))\Big|_{t=0}=& \int_0^1dt\,t^{n-1}\int_\phi A \,e_1^t(\alpha A+\beta dA)^{n-1}e_1\nonumber\\
	=&\int_0^1dt\int_\phi A\,e_1^t(\alpha tA+\beta t dA)^{n-1}e_1.\label{eq:M_2[t]}
\end{align}

We now expand the $(n-1)$-th power, which is complicated because $\alpha$ and $\beta$ do not commute. To avoid notational clutter, let us denote $X:=tA$ and $Y:=tdA$. We find
\begin{align}
	e_1^t (\alpha X+\beta Y)^{n-1}e_1=\sum_{k=0}^{\lceil\frac{n-1}{2}\rceil}\sum_{\substack{v_1\geq0,~v_2,\ldots,v_k\geq1\\ w_1,\ldots,w_k\geq 1,~p\geq0\\ |v|+|w|+p=n-1}}e_1^t (\alpha^{v_1}\beta^{w_1}\cdots\alpha^{v_k}\beta^{w_k}\alpha^{p})e_1\, X^{v_1}Y^{w_1}\cdots X^{v_k}Y^{w_k}X^{p}.\label{ncbinom}
\end{align}

We can summarize the identities involving $\alpha$ and $\beta$ that we will use as
\begin{align*}
	\alpha^2&=1;	&
	\beta^2&=\beta;	&
	\beta\alpha\beta&=0;& e_1^t(\alpha) e_1&=1;\\
	e_1^t (\alpha\beta\alpha) e_1&=0;	&	
	e_1^t (\alpha\beta) e_1 &=0;	&
	e_1^t (\beta\alpha) e_1 &= 1;	&
	e_1^t(\beta) e_1&=1.
\end{align*}
From these identities follow the following two remarks:
\begin{itemize}
\item If $k\geq2$ and $v_i$ is odd for a certain $i\in\{2,\ldots,k\}$, then somewhere in the string $\alpha^{v_1}\beta^{w_1}\cdots\alpha^{v_k}\beta^{w_k}\alpha^{p}$ a factor $\beta\alpha\beta=0$ occurs, so in particular $$e_1^t( \alpha^{v_1}\beta^{w_1}\cdots\alpha^{v_k}\beta^{w_k}\alpha^{p})e_1=0.$$
\item If $v_1$ is odd and $v_2,\ldots,v_k$ are all even, then
	$$e_1^t(\alpha^{v_1}\beta^{w_1}\cdots\alpha^{v_k}\beta^{w_k}\alpha^{p})e_1=e_1^t( \alpha\beta\alpha^{p})e_1=0.$$
\end{itemize}
Therefore, for all $k\geq 0$, we conclude that in \eqref{ncbinom} only terms remain in which $v_1,\ldots,v_k$ are even. In fact, we find
\begin{align*}
	e_1^t (\alpha X+\beta Y)^{n-1}e_1&=\sum_{k=0}^{\lceil\frac{n-1}{2}\rceil}\sum_{\substack{v_1\in2\N_0,\\v_2,\ldots,v_k\in2\N 
	\\w_1,\ldots,w_k\geq 1,~p\geq0,\\|v|+|w|+p=n-1}}e_1^t( \alpha^{v_1}\beta^{w_1}\cdots\alpha^{v_k}\beta^{w_k}\alpha^{p})e_1\, X^{v_1}Y^{w_1}\cdots X^{v_k}Y^{w_k}X^{p}\\
	&=\sum_{k=0}^{\lceil{\frac{n-1}{2}}\rceil}\sum_{\substack{v_1\in2\N_0,~v_2,\ldots,v_k\in2\N\\w_1,\ldots,w_k\geq 1,~p\geq0,\\|v|+|w|+p=n-1}}e_1^t( \beta\alpha^{p})e_1\, X^{v_1}Y^{w_1}\cdots X^{v_k}Y^{w_k}X^{p}\\
	&=\sum_{k=0}^{\lceil\frac{n-1}{2}\rceil}\sum_{\substack{v_1\geq0,~v_2\ldots,v_k\geq1,\\w_1,\ldots,w_k\geq 1,~ p\geq0,\\2|v|+|w|+p=n-1}} (X^2)^{v_1}Y^{w_1}\cdots (X^2)^{v_k}Y^{w_k}X^{p}.
\end{align*}
Summing this from $n=1$ to $K$, we can write
\begin{align}\label{eq:one expansion}
	\sum_{n=1}^K e_1^t (\alpha X+\beta Y)^{n-1}e_1 = \sum_{(v,w,p)\in P_K}(X^2)^{v_1}Y^{w_1}\cdots (X^2)^{v_m}Y^{w_m}X^{p}
	,
\end{align}
where $P_K$ is the set of $(v,w,p)\in\coprod_{m}(\N_0\times\N^{m-1})\times\N^m\times\N_0$ such that $2|v|+w+p< K$.
%
In this last expression we can almost recognize an expansion of $(X^2+Y)^{k-1}=F_t^{k-1}$. Indeed, we have
\begin{align}\label{eq:other expansion}
	\sum_{k=1}^K(X^2+Y)^{k-1}(1+X)=\sum_{(v,w,p)\in S_K} (X^2)^{v_1}Y^{w_1}\cdots(X^2)^{v_m}Y^{w_m}X^{p},
\end{align}
where $S_K$ is the set of $(v,w,p)\in\coprod_{m}(\N_0\times\N^{m-1})\times\N^m\times\N_0$ such that $|v|+|w|+\lfloor \tfrac{p}{2}\rfloor< K.$ By \eqref{eq:other expansion} we have $|S_K|\leq 2^{K+1}$.
By using $T_K=S_K\setminus P_K$, we can combine \eqref{eq:M_2[t]}, \eqref{eq:one expansion} and \eqref{eq:other expansion}, and obtain
\begin{align}
	&\sum_{n=1}^{K}\frac{1}{n!}\frac{d^n}{dt^n}\Tr\big(f(D+tV)\big)\Big|_{t=0}-\sum_{k=1}^K\int_\phi\int_0^1AF_t^{k-1}(1+tA)\,dt\nonumber\\
	&\quad= -\sum_{(v,w,p)\in T_K}\frac{1}{2|v|+|w|+p+1}\int_\phi AA^{2v_1}(dA)^{w_1}\cdots A^{2v_m}(dA)^{w_m}A^{p}.\label{eq:difference of expansions}
\end{align}
Together with \eqref{remmoi}, and the definition \eqref{eq:cs} of $\cs_{2k-1}(A)$, \eqref{eq:difference of expansions} implies the proposition.
\end{proof}

We will now state the asymptotic version of our main result, and spend the rest of \textsection\ref{sct:truncated expansion} to prove this version.
\begin{thm}\label{thm:asymptotic expansion}
For every $k\in\N$ we have
	$$\int_\phi\bigg(\cs_{2k-1}(A)+\int_0^1AF_t^{k-1}tA\,dt\bigg)=\int_{\psi_{2k-1}}\cs_{2k-1}(A)+\frac{1}{2k}\int_{\phi_{2k}} F^{k}.$$
	Therefore, with the same remainder term as in Proposition \ref{prop:k}, we have
	$$\Tr(f(D+V)-f(D))\sim\sum_{k=1}^\infty\bigg(\int_{\psi_{2k-1}}\cs_{2k-1}(A)+\frac{1}{2k}\int_{\phi_{2k}} F^{k}\bigg).$$
\end{thm}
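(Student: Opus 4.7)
The plan is to reduce the identity, via the decomposition $\psi_{2k-1}=\phi_{2k-1}-\tfrac{1}{2}B_0\phi_{2k}$ together with the pairing $\int_{B_0\phi_n}\omega=\int_{\phi_n}d\omega$ for $\omega\in\Omega^{n-1}(\A)$ (immediate from the definitions of $B_0$ and $\int_\phi$), to a single algebraic claim at the $\phi_{2k}$-level. After cancelling the common $\int_{\phi_{2k-1}}\cs_{2k-1}(A)$ appearing on both sides, what remains is
\begin{align*}
\int_{\phi_{2k}}\!\int_0^1 tAF_t^{k-1}A\,dt+\tfrac{1}{2}\int_{\phi_{2k}}d\cs_{2k-1}(A)=\tfrac{1}{2k}\int_{\phi_{2k}}F^k.
\end{align*}

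Next, I would compute both remaining $\phi_{2k}$-integrals concretely. The Bianchi-type identity $dF_t=t(F_tA-AF_t)$ propagates telescopically to $dF_t^{k-1}=t(F_t^{k-1}A-AF_t^{k-1})$, giving via the graded Leibniz rule
\begin{align*}
d(AF_t^{k-1})=dA\cdot F_t^{k-1}-tAF_t^{k-1}A+tA^2F_t^{k-1}.
\end{align*}
For the right-hand side, using $F^k=\int_0^1\tfrac{d}{dt}F_t^k\,dt$ with the Duhamel formula $\tfrac{d}{dt}F_t^k=\sum_{j=0}^{k-1}F_t^j(dA+2tA^2)F_t^{k-1-j}$ and Corollary~\ref{cor:cycl}\ref{cycl2} (commutativity of even forms under $\int_{\phi_{2k}}$) collapses the Duhamel sum to $\int_{\phi_{2k}}F^k=k\int_0^1\int_{\phi_{2k}}(dA+2tA^2)F_t^{k-1}\,dt$. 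Plugging both in and cancelling the common $\tfrac{1}{2}dA\cdot F_t^{k-1}$ and $tA^2F_t^{k-1}$ contributions reduces the theorem, pointwise in $t$, to the key identity
\begin{align*}
\int_{\phi_{2k}}AF_t^{k-1}A=\int_{\phi_{2k}}A^2F_t^{k-1}.\tag{$\ast$}
\end{align*}

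The hard part will be proving $(\ast)$. Via the Bianchi identity once more, $(\ast)$ is equivalent to $\int_{\phi_{2k}}A\cdot dF_t^{k-1}=0$. The plan is to expand $F_t^{k-1}$ as a polynomial in the generators $A^2$ and $dA$ and verify monomial-by-monomial the identity $\int_{\phi_{2k}}A^2w=\int_{\phi_{2k}}AwA$ for each word $w\in\{A^2,dA\}^{k-1}$. The case $w=A^{2(k-1)}$ is trivial since $A\cdot A^{2k-1}=A^{2k-1}\cdot A$; for the remaining words, applying Proposition~\ref{prop:cycl} with $X=A$ and $Y=Aw$ (both odd of degree $2k-1$) recasts $\int_{\phi_{2k}}(A^2w-AwA)$ as a $\phi_{2k+1}$-level boundary term $\int_{\phi_{2k+1}}(Aw\cdot dA-A\cdot d(Aw))$. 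These higher-level corrections must then be closed off using the coupled cocycle relations $b\phi_{2k-1}=\phi_{2k}$ (Lemma~\ref{lem:b}) and $B_0\phi_{2k+1}=2\phi_{2k}-bB_0\phi_{2k}$ (Lemma~\ref{lem:c}), which express $B_0\phi_{2k+1}$-integrals of $(2k)$-forms in terms of $\phi_{2k}$-integrals and Hochschild coboundaries. The main difficulty is to carry out the induction on the number and position of $dA$-letters in $w$ without generating an uncancelled infinite tower of higher-level corrections; it is precisely here that the full $(b,B)$-cocycle structure of the pair $(\phi_{2k-1},\phi_{2k})$ -- i.e., the fact that $(\tilde\psi_{2k-1})$ forms an odd $(b,B)$-cocycle as in Proposition~\ref{prop:bB} -- plays the essential role, since this structure is exactly what forces the boundary terms at level $\phi_{2k+1}$ to reassemble back into $\phi_{2k}$-level quantities.
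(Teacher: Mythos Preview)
Your reduction to the key identity $(\ast)$ is correct and matches the paper: after the $\psi_{2k-1}=\phi_{2k-1}-\tfrac12 B_0\phi_{2k}$ decomposition and the Duhamel computation (the paper's Lemma~\ref{lem:twee}), everything does come down to $\int_{\phi_{2k}}(A^2F_t^{k-1}-AF_t^{k-1}A)=0$. Via Corollary~\ref{cor:cycl}\ref{cycl3} the paper rewrites this as $\int_{\phi_{2m+3}}d(A_t^2F_t^m)=0$ for $m=k-1$ (Lemma~\ref{lem:een b}).

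The gap is in your attack on $(\ast)$. Your plan---expand $F_t^{k-1}$ into words in $A^2,dA$ and close off the resulting $\phi_{2k+1}$-level corrections using Lemmas~\ref{lem:b} and~\ref{lem:c}---is not how the paper proceeds and is not clearly workable as stated. The relations $b\phi_{2k-1}=\phi_{2k}$ and $B_0\phi_{2k+1}=2\phi_{2k}-bB_0\phi_{2k}$ are identities between \emph{cochains}; converting them into identities between $\int_\phi$-integrals of specific forms requires controlling terms like $\int_{bB_0\phi_{2k}}\omega$, and you give no mechanism for that. Moreover, expanding $F_t^{k-1}$ into words discards the Bianchi identity $dF_t^m=-[A_t,F_t^m]$, which is precisely the structure that makes an induction close.

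The paper instead keeps $F_t^m$ intact and proves $(\ast)$ by a recursion on $m$. Lemma~\ref{lem:een} gives
\[
\int_{\phi_{2m+3}}d(A_t^2F_t^m)=-\int_{\phi_{2m+1}}\big(d(A_t^2F_t^{m-1})+dA_t\,dF_t^{m-1}\big);
\]
applying this twice, recombining $A_t^2+dA_t=F_t$, and using the Bianchi identity together with Corollary~\ref{cor:cycl}\ref{cycl2}--\ref{cycl4} yields the clean step $\int_{\phi_{2m+3}}d(A_t^2F_t^m)=\int_{\phi_{2m+1}}d(A_t^2F_t^{m-1})$, so one finishes by induction down to the trivial base cases $m=0,1$. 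Only Proposition~\ref{prop:cycl}/Corollary~\ref{cor:cycl} and the Bianchi identity are used; the $(b,B)$-cocycle machinery of Lemmas~\ref{lem:b}--\ref{lem:c} plays no role in this argument.
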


The Chern--Simons term in Proposition \ref{prop:k}, integrated along $\phi$, yields the correct Chern--Simons term integrated along $\psi$, plus an additional term. Indeed, recall that $\phi_{2k-1} = \psi_{2k-1} + \frac 12 B_0 \phi_{2k}$ so that we find
\begin{align*}
  &\int_{\phi_{2k-1}}  A F_t^{k-1} +  \int_{\phi_{2k-1}} t A F_t^{k-1} A \\
  &\quad=\int_{\psi_{2k-1}}  A F_t^{k-1} +  \int_{\phi_{2k}} \Big( \frac 12  d(A F_t^{k-1} ) + t A F_t^{k-1} A\Big)\\
  &\quad=  \int_{\psi_{2k-1}}  A F_t^{k-1} + \frac 12  \int_{\phi_{2k}} ( d A F_t^{k-1} + t A^2 F_t^{k-1}+ t A F_t^{k-1} A),
\end{align*}
where we used the repeated Bianchi identity $d(F_t^{k-1}) = - [A_t,F_t^{k-1}]$ in going to the last line.

We arrive at the following formula:
\begin{align*}
&\Tr(f(D+V)-f(D)) \\
&\quad\sim \sum_{k=1}^\infty \left( \int_{\psi_{2k-1}}  \cs_{2k-1} (A)  + \frac 12 \int_0^1 dt \int_{\phi_{2k}} ( d A F_t^{k-1} + t A^2 F_t^{k-1}+ t A F_t^{k-1} A)\right).
\end{align*}
We are now to show that the second term, namely
\begin{align*}
	YM_k:=&\frac{1}{2}\int_0^1 dt\int_{\phi_{2k}}(dA F_t^{k-1}+tA^2F_t^{k-1}+tAF_t^{k-1}A)\\
	=&\int_0^1 dt\,\frac{1}{2t}\int_{\phi_{2k}}(dA_t F_t^{k-1}+A_t^2F_t^{k-1}+A_tF_t^{k-1}A_t),
\end{align*}
equals $\frac{1}{2k}\int_{\phi_{2k}} F^{k}$.
After some rearrangement we can use Corollary \ref{cor:cycl}\ref{cycl3}, to find
\begin{align}
	YM_k&=\int_0^1 dt\,\frac{1}{2t}\int_{\phi_{2k}}(dA_t+2A_t^2) F_t^{k-1}+\int_0^1 dt\,\frac{1}{2t}\int_{\phi_{2k}}\left(A_tF_t^{k-1}A_t-A_t^2F_t^{k-1}\right)\nonumber\\
	&=\int_0^1 dt\,\frac{1}{2t}\int_{\phi_{2k}}(dA_t+2A_t^2) F_t^{k-1} - \int_0^1 dt\,\frac{1}{2t}\int_{\phi_{2k+1}} d(A_t^2 F_t^{k-1}).\label{eq:M}
\end{align}
We will first show that the second term of \eqref{eq:M} vanishes. We use the following rule, which allows us to replace the integrand by a form which is two degrees lower.
\begin{lem}\label{lem:een}
	For every $m\geq0$, we have
		$$\int_{\phi_{2m+3}} d(A_t^2F_t^m)=-\int_{\phi_{2m+1}} \left(d(A_t^2F_t^{m-1})+dA_td(F_t^{m-1})\right).$$
\end{lem}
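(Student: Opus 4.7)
The plan is to prove the lemma in three layers, reducing progressively from $\phi_{2m+3}$ down to $\phi_{2m+1}$.

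First, I would establish a basic universal-form identity. Since $F_t = dA_t + A_t^2$, the graded Leibniz rule gives $d(A_t^2) = dA_t\cdot A_t - A_t\cdot dA_t = dF_t$. Combining this with $d(F_t^{k+1}) = dF_t \cdot F_t^k + F_t \cdot d(F_t^k)$ and $F_t - A_t^2 = dA_t$ yields, for every $k \ge 0$,
\begin{equation*}
d(A_t^2 F_t^k) + dA_t \cdot d(F_t^k) \;=\; d(F_t^{k+1}).
\end{equation*}
Applied with $k = m-1$, this collapses the right-hand side of Lemma \ref{lem:een} to $-\int_{\phi_{2m+1}} d(F_t^m)$. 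It therefore suffices to prove
\begin{equation*}
\int_{\phi_{2m+3}} d(A_t^2 F_t^m) \;=\; -\int_{\phi_{2m+1}} d(F_t^m).
\end{equation*}

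Second, I would process the left-hand side. Writing $A_t^2 F_t^m = A_t \cdot (A_t F_t^m)$ with the second factor odd of degree $2m+1$, Corollary \ref{cor:cycl}\ref{cycl3} turns the LHS into $\int_{\phi_{2m+2}} A_t\,[A_t, F_t^m]$. The iterated Bianchi identity $[A_t, F_t^m] = -d(F_t^m)$, obtained by telescoping $\sum_j F_t^{j}[A_t, F_t] F_t^{m-1-j}$ together with $[A_t, F_t] = -dF_t$, then gives
\begin{equation*}
\int_{\phi_{2m+3}} d(A_t^2 F_t^m) \;=\; -\int_{\phi_{2m+2}} A_t\, d(F_t^m).
\end{equation*}

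Third and most delicately, the remaining identity $\int_{\phi_{2m+2}} A_t\, d(F_t^m) = \int_{\phi_{2m+1}} d(F_t^m)$ descends one level in the cochain degree. I would invoke Lemma \ref{lem:b}, in the form $\phi_{2m+2} = b\phi_{2m+1}$, together with the dual description of the Hochschild coboundary on universal forms: for $\omega' \in \Omega^{n-1}(\A)$ and $a \in \A$, $\int_{b\phi} \omega'\cdot da = (-1)^{n-1}\int_{\phi}[\omega', a]$. Expanding $A_t\, d(F_t^m)$ via the Bianchi telescoping $d(F_t^m) = \sum_j F_t^j(F_t A_t - A_t F_t) F_t^{m-1-j}$, applying the form-level Hochschild boundary, and using Proposition \ref{prop:cycl} (or iterated applications of Corollary \ref{cor:cycl}) to cyclically rearrange the commutators produced by $b$, a systematic telescoping should reduce the expression exactly to $d(F_t^m)$ integrated against $\phi_{2m+1}$.

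The main obstacle is the combinatorial bookkeeping in the third step: the asymmetric role of the first slot in the bracket defining $\phi_n$ (see Lemma \ref{cycl bracket}) makes the cancellations less transparent than one might hope, even though the statement is morally the Hochschild-theoretic shadow of the Bianchi identity. If this direct telescoping is unwieldy, a fallback is to go through the $(b,B)$-relations $B_0\phi_{2m+3} = 2b\phi_{2m+1} - bB_0\phi_{2m+2}$ of Lemma \ref{lem:c} (with $n=2m+2$) applied to $\int_{\phi_{2m+3}} d(A_t^2 F_t^m) = \int_{B_0\phi_{2m+3}} A_t^2 F_t^m$, again reducing matters to the same form-level calculation.
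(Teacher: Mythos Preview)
Your Steps 1 and 2 are correct and are precisely the algebraic moves the paper makes, just in reverse order: the paper first uses $A_t^2=F_t-dA_t$ to get $\int_{\phi_{2m+1}} d(A_t^2F_t^{m-1})+\int_{\phi_{2m+1}} dA_t\,d(F_t^{m-1})=\int_{\phi_{2m+1}} d(F_t^m)$ (your Step 1), and ends by applying Corollary~\ref{cor:cycl}\ref{cycl3} to $A_t\cdot(A_tF_t^m)$ to reach $\int_{\phi_{2m+3}}d(A_t^2F_t^m)$ (your Step 2). So after your first two steps the argument is the paper's argument.

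Your Step 3 is the only place with a gap. The target identity
\[
\int_{\phi_{2m+2}} A_t\,d(F_t^m)\;=\;\int_{\phi_{2m+1}} d(F_t^m)
\]
is indeed the remaining link, but it needs no Hochschild coboundary formula, no telescoping over the Bianchi expansion, and no detour through Lemma~\ref{lem:c}. It is a one-line consequence of Proposition~\ref{prop:cycl} with $X=F_t^m$ (even) and $Y=A_t$ (odd): that proposition gives $\int_\phi F_t^mA_t-\int_\phi A_tF_t^m=\int_\phi A_t\,d(F_t^m)$, and by Bianchi the left side is $\int_\phi d(F_t^m)$. This is exactly how the paper handles the step. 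The ``dual description'' $\int_{b\phi}\omega'\,da=(-1)^{n-1}\int_\phi[\omega',a]$ that you invoke is not established anywhere in the paper, and the combinatorial cancellation you sketch is never needed; the near-tracial rule of Proposition~\ref{prop:cycl} is the tool that makes the whole calculation short.
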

\begin{proof}
	We use the definition of $F_t$, the repeated Bianchi identity $d(F_t^m)=[F_t^m,A_t]$, and subsequently Proposition \ref{prop:cycl}, to obtain
	\begin{align*}
		\int_{\phi_{2m+1}} d(A_t^2F_t^{m-1})=&\int_{\phi_{2m+1}} \left(d(F_tF_t^{m-1})-d(dA_tF_t^{m-1})\right)\\
		=&\int_{\phi_{2m+1}} \left(d(F_t^m)-dA_td(F_t^{m-1})\right)\\
		=&\int_{\phi_{2m+1}}\left(F_t^mA_t-A_tF_t^m\right)-\int_{\phi_{2m+1}} dA_td(F_t^{m-1})\\
		=&\int_{\phi_{2m+2}} A_td(F_t^m)-\int_{\phi_{2m+1}} dA_t d(F_t^{m-1})\\
		=&\int_{\phi_{2m+2}} \left(A_tF_t^mA_t-A_t^2F_t^m\right)-\int_{\phi_{2m+1}} dA_td(F_t^{m-1}).
	\end{align*}
	Applying Corollary \ref{cor:cycl}\ref{cycl3} to the first term gives
	\begin{align*}
		\int_{\phi_{2m+1}} d(A_t^2F_t^{m-1})=& -\int_{\phi_{2m+3}} d(A_t^2F_t^m)-\int_{\phi_{2m+1}} dA_td(F_t^{m-1}),
	\end{align*}
	which implies the lemma.
\end{proof}
We obtain the following result.
\begin{lem}\label{lem:een b}
	For every $m\geq0$, we have $$\int_{\phi_{2m+3}} d(A_t^2F_t^m)=0.$$
\end{lem}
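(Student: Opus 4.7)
The plan is to proceed by induction on $m$, with Lemma \ref{lem:een} providing the main recursive step for $m\geq 1$ and a short direct computation covering the base case $m=0$.

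For the base case, the claim reduces to $\int_{\phi_3} d(A_t^2) = 0$. I would apply Corollary \ref{cor:cycl}\ref{cycl3} with $A = A_t$ and $X = A_t$ (odd), yielding immediately $\int_\phi d(A_t^2) = \int_\phi(A_t\cdot A_t - A_t\cdot A_t) = 0$.

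For the inductive step, assuming $\int_{\phi_{2m+1}} d(A_t^2 F_t^{m-1}) = 0$, I would invoke Lemma \ref{lem:een} to get
\begin{align*}
\int_{\phi_{2m+3}} d(A_t^2 F_t^m) = -\int_{\phi_{2m+1}} d(A_t^2 F_t^{m-1}) - \int_{\phi_{2m+1}} dA_t \, d(F_t^{m-1}),
\end{align*}
whose first summand vanishes by hypothesis. For the second, I would exploit $d^2=0$ and the oddness of $A_t$ to write $dA_t\, d(F_t^{m-1}) = d(A_t\,d(F_t^{m-1}))$, and then use the repeated Bianchi identity $d(F_t^{m-1}) = [F_t^{m-1}, A_t]$ to expand $A_t\,d(F_t^{m-1}) = A_t F_t^{m-1} A_t - A_t^2 F_t^{m-1}$, so the second summand becomes $\int_{\phi_{2m+1}} d(A_t F_t^{m-1} A_t) - \int_{\phi_{2m+1}} d(A_t^2 F_t^{m-1})$. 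The last term again vanishes by the inductive hypothesis.

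What remains is to show $\int_{\phi_{2m+1}} d(A_t F_t^{m-1} A_t) = 0$. I would apply Corollary \ref{cor:cycl}\ref{cycl3} with $A = A_t$ and $X = F_t^{m-1} A_t$ (odd, of degree $2m-1$) to get $\int_\phi d(A_t F_t^{m-1} A_t) = \int_\phi A_t F_t^{m-1} A_t - \int_\phi F_t^{m-1} A_t^2$, then use Corollary \ref{cor:cycl}\ref{cycl2} (both forms even) to rewrite $\int_\phi F_t^{m-1} A_t^2 = \int_\phi A_t^2 F_t^{m-1}$. A second application of Corollary \ref{cor:cycl}\ref{cycl3}, now with $X = A_t F_t^{m-1}$ (again odd), yields $\int_\phi A_t^2 F_t^{m-1} - \int_\phi A_t F_t^{m-1} A_t = \int_\phi d(A_t^2 F_t^{m-1}) = 0$ by hypothesis. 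Combining these identities gives $\int_\phi d(A_t F_t^{m-1} A_t) = 0$, closing the induction. The main obstacle is precisely the auxiliary term $\int_\phi dA_t\,d(F_t^{m-1})$ produced by Lemma \ref{lem:een}; it is resolved by the exactness rewrite together with Bianchi, which reduces everything to expressions already controlled by the inductive hypothesis and two carefully chosen applications of the cyclic identities from Corollary \ref{cor:cycl}.
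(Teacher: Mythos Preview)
Your proof is correct and arguably a bit more streamlined than the paper's, though the two share the same ingredients (induction, Lemma~\ref{lem:een}, the Bianchi identity, and the cyclic rules of Corollary~\ref{cor:cycl}).

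The paper treats $m=0$ and $m=1$ as separate base cases and, for $m\geq 2$, applies Lemma~\ref{lem:een} \emph{twice} so that the sum $d(A_t^2F_t^{m-2})+dA_t\,d(F_t^{m-2})$ can be recombined into $d(F_t^{m-1})$ via $F_t=A_t^2+dA_t$; Corollary~\ref{cor:cycl}\ref{cycl4} then produces a $\int_\phi dA_t\,d(F_t^{m-1})$ term that cancels the leftover one, and a final use of \ref{cycl2} and \ref{cycl3} yields the recursion $\int_{\phi_{2m+3}} d(A_t^2F_t^m)=\int_{\phi_{2m+1}} d(A_t^2F_t^{m-1})$. By contrast, you apply Lemma~\ref{lem:een} only once, kill the first summand with the induction hypothesis, and dispatch the second by the exactness rewrite $dA_t\,d(F_t^{m-1})=d(A_t\,d(F_t^{m-1}))$ together with Bianchi and two applications of \ref{cycl3} plus one of \ref{cycl2}. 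Your route avoids \ref{cycl4} entirely, never needs the $F_t=A_t^2+dA_t$ recombination, and handles $m=1$ as part of the general inductive step (since $d(F_t^0)=0$). The paper's path has the pleasing structural observation that two applications of the recursion expose $d(F_t^{m-1})$; yours is shorter and uses the induction hypothesis more directly.
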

\begin{proof}
	This is easily checked when $m=0$, and when $m=1$, it follows from Lemma \ref{lem:een}. If $m\geq 2$, we apply Lemma \ref{lem:een} twice, and find
	\begin{align*}
		\int_{\phi_{2m+3}} d(A_t^2F_t^m)=&-\int_{\phi_{2m+1}} d(A_t^2F_t^{m-1})-\int_{\phi_{2m+1}} dA_td(F_t^{m-1})\\
		=&\int_{\phi_{2m-1}} d(A_t^2F_t^{m-2})+\int_{\phi_{2m-1}} dA_td(F_t^{m-2})-\int_{\phi_{2m+1}} dA_td(F_t^{m-1}).
	\end{align*}
	We recognize $F_t=A_t^2+dA_t$ in the first two terms above, so by the Bianchi identity we obtain
	\begin{align*}
		\int_{\phi_{2m+3}} d(A_t^2F_t^m)=&\int_{\phi_{2m-1}} d(F_t^{m-1})-\int_{\phi_{2m+1}} dA_td(F_t^{m-1})\\
		=&\int_{\phi_{2m-1}} (F_t^{m-1}A_t-A_tF_t^{m-1})-\int_{\phi_{2m+1}} dA_td(F_t^{m-1}).
	\end{align*}
	By Corollary \ref{cor:cycl}\ref{cycl4}, the Bianchi identity, and Corollary \ref{cor:cycl}\ref{cycl2}, this gives
	\begin{align*}
		\int_{\phi_{2m+3}} d(A_t^2F_t^m)=&\int_{\phi_{2m}} d(F_t^{m-1})A_t\\
		=&\int_{\phi_{2m}} (F_t^{m-1}A_t^2-A_tF_t^{m-1}A_t)\\
		=&\int_{\phi_{2m}} (A_t^2F_t^{m-1}-A_tF_t^{m-1}A_t).
	\end{align*}
	We apply Corollary \ref{cor:cycl}\ref{cycl3}, to find
	\begin{align*}
		\int_{\phi_{2m+3}} d(A_t^2F_t^m)=&\int_{\phi_{2m+1}} d(A_t^2F_t^{m-1}).
	\end{align*}
	By induction, it follows that $\int_\phi d(A_t^2F_t^m)=0$ for all $m$.
\end{proof}
By the above Lemma, only the first term of \eqref{eq:M} remains, namely,
\begin{align}\label{eq:last term of M}
	YM_k=&\int_0^1 dt\,\frac{1}{2t}\int_{\phi_{2k}}(dA_t+2A_t^2) F_t^{k-1}	=\int_0^1 dt\int_{\phi_{2k}}(\tfrac{1}{2}dA+tA^2) F_t^{k-1}.
\end{align}
To express $YM_k$ in an even simpler form, we now remove the integral over $t$, which is possible by the following lemma.
\begin{lem}\label{lem:twee}
	We have
		$$\int_0^1 dt\int_{\phi_{2k}} (\tfrac{1}{2}dA+tA^2)F_t^{k-1}=\frac{1}{2k}\int_{\phi_{2k}}F^{k}.$$
\end{lem}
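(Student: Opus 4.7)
The plan is to recognize that the integrand $\tfrac{1}{2}dA + tA^2$ is exactly $\tfrac12\frac{d}{dt}F_t$, since $F_t = tdA + t^2 A^2$ gives $\frac{dF_t}{dt} = dA + 2tA^2$. So the left-hand side becomes
\[
\int_0^1 dt \int_{\phi_{2k}} \tfrac{1}{2}\,\tfrac{dF_t}{dt}\,F_t^{k-1}.
\]
The goal is then to turn $\frac{dF_t}{dt}\,F_t^{k-1}$ into $\frac{1}{k}\frac{d}{dt}(F_t^k)$ under the integral, which would immediately yield $\frac{1}{2k}\int_{\phi_{2k}} F^k$ by the fundamental theorem of calculus (using $F_0=0$ and $F_1=F$) and by commuting the scalar derivative $\frac{d}{dt}$ with the multilinear functional $\int_{\phi_{2k}}$.

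The obstruction is of course noncommutativity: in a general algebra, $\frac{d}{dt}(F_t^k) = \sum_{j=0}^{k-1} F_t^j \,\tfrac{dF_t}{dt}\, F_t^{k-1-j}$, and we cannot just pull the factor $\frac{dF_t}{dt}$ out. The key observation is that both $F_t$ and $\frac{dF_t}{dt} = dA + 2tA^2$ are even-degree universal forms (2-forms), so Corollary \ref{cor:cycl}\ref{cycl2} applies: under $\int_{\phi_{2k}}$, any two even forms may be cyclically permuted. Applied to the even forms $X=F_t^{k-1-j}$ and $Y=F_t^j\,\frac{dF_t}{dt}$, this yields
\[
\int_{\phi_{2k}} F_t^j \,\tfrac{dF_t}{dt}\, F_t^{k-1-j} \;=\; \int_{\phi_{2k}} F_t^{k-1}\,\tfrac{dF_t}{dt}
\]
for every $j\in\{0,\ldots,k-1\}$, so summing over $j$ gives $\int_{\phi_{2k}}\frac{d}{dt}(F_t^k) = k\int_{\phi_{2k}} \frac{dF_t}{dt} F_t^{k-1}$.

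Putting the pieces together, I expect
\[
\int_0^1\!dt\int_{\phi_{2k}} \tfrac12\,\tfrac{dF_t}{dt}\, F_t^{k-1}
= \frac{1}{2k}\int_0^1 \tfrac{d}{dt}\Big(\int_{\phi_{2k}} F_t^k\Big)\,dt
= \frac{1}{2k}\int_{\phi_{2k}} F^k,
\]
where the very last equality uses $F_0 = 0$. The only non-routine step is the cyclic manipulation above, and everything else is either $t$-calculus or multilinearity of $\int_{\phi_{2k}}$, so the main obstacle (noncommutativity) is resolved cleanly by the even-degree case of Corollary \ref{cor:cycl}.
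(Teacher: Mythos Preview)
Your proof is correct and follows essentially the same route as the paper: both recognize $\tfrac12 dA + tA^2 = \tfrac12\frac{d}{dt}F_t$, use Corollary~\ref{cor:cycl}\ref{cycl2} on the even forms $F_t$ and $\frac{d}{dt}F_t$ to collapse $\int_{\phi_{2k}}\frac{d}{dt}(F_t^k)$ into $k\int_{\phi_{2k}}\frac{dF_t}{dt}F_t^{k-1}$, and then integrate via the fundamental theorem of calculus using $F_0=0$, $F_1=F$.
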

\begin{proof}
	Recall that $\Omega^\bullet(\A)[t]$ is the space of polynomials with coefficients in the algebra $\Omega^\bullet(\A)$. The linear map $\frac{d}{dt}:\Omega^\bullet(\A)[t]\rightarrow\Omega^\bullet(\A)[t]$ is defined by $\frac{d}{dt}(t^nB):=nt^{n-1}B$ for $B\in\Omega^\bullet(\A)$, and satisfies the Leibniz rule. Therefore,
	$$\frac{d}{dt}(F_t^{k})=\frac{d}{dt}(F_t)F^{k-1}_t+F_t\frac{d}{dt}(F_t)F_t^{k-2}+\ldots+F_t^{k-1}\frac{d}{dt}(F_t).$$
	Both $F_t$ and $\frac{d}{dt}(F_t)$ are $2$-forms, so, after a few applications of Corollary \ref{cor:cycl}\ref{cycl2}, we arrive at
	\begin{align*}
		\int_{\phi_{2k}}\frac{d}{dt}(F_t^{k})&=k\int_{\phi_{2k}}\frac{d}{dt}(F_t)F_t^{k-1}=k\int_{\phi_{2k}}(dA+2tA^2)F_t^{k-1}.
	\end{align*}
	The fundamental theorem of calculus (for polynomials) gives
	\begin{align*}
		\int_0^1 dt\int_{\phi_{2k}}\,(dA+2tA^2)F_t^{k-1}=&\frac{1}{k}\int_{\phi_{2k}}\int_0^1dt \,\frac{d}{dt}(F_t^{k})	=\frac{1}{k}\int_{\phi_{2k}}(F_1^{k}-F_0^{k})=\frac{1}{k}\int_{\phi_{2k}}F^{k},
	\end{align*}
	from which the lemma follows.
\end{proof}

\begin{proof}[Proof of Theorem \ref{thm:asymptotic expansion}]
Applying Lemma \ref{lem:twee} to our earlier expression for $YM_k$ (equation \eqref{eq:last term of M}), we find that
\begin{align*}
	YM_k=\frac{1}{2k}\int_{\phi_{2k}}F^{k}.
\end{align*}
We therefore obtain the desired asymptotic expansion.
\end{proof}

\subsection{Convergence}\label{sct:convergence}
We prove a strong bound on the asymptotic expansion given by Theorem \ref{thm:asymptotic expansion}, in particular giving sufficient conditions for the series to converge, effectively replacing $\sim$ by $=$. A crucial ingredient is Lemma \ref{lem:Cs and Es bounds}.

\begin{thm}\label{thm:main bounds}
	Let $(\A,\H,D)$ be an $s$-summable spectral triple, let $n\in\N$, and fix $f\in \Esg$ for $\gamma\in(0,1]$. Then there exists $C_{f,s,n,\gamma}$ such that, for $A=\sum_{j=1}^na_jdb_j$ and $V=\sum_{j=1}^n a_j[D,b_j]$ self-adjoint with $\|a_j\|,\|b_j\|,\|[D,a_j]\|,\|[D,b_j]\|\leq R$, we have
	\begin{align}
		&\left|\Tr(f(D+V)-f(D))-\sum_{k=1}^K\left(\int_{\psi_{2k-1}}\cs_{2k-1}(A)+\frac{1}{2k}\int_{\phi_{2k}}F^{k}\right)\right|\nonumber\\
		&\quad\leq \frac{(C_{f,s,n,\gamma})^{K+1}}{K!^{1-\gamma}}\max(R^{2K+2},R^{4K+2+2s})\Tr |(D-i)^{-s}|,\label{eq:convergence}
	\end{align}
	for all $K\in\N_0$. Moreover, we have
		$$\left|\int_{\psi_{2k-1}}\cs_{2k-1}(A)\right|+\left|\int_{\phi_{2k}}F^{k}\right|\leq \frac{(C_{f,s,n,\gamma})^{k}}{k!^{1-\gamma}}\max(R^{2k},R^{4k})\Tr|(D-i)^{-s}|.$$
\end{thm}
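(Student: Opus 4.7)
The plan is to combine Proposition \ref{prop:k}, Theorem \ref{thm:asymptotic expansion}, and Lemma \ref{lem:Cs and Es bounds}. By those first two results, the left-hand side of \eqref{eq:convergence} equals the explicit remainder
\begin{align*}
\Tr\bigl(T_{f^{[K+1]}}^{D+V,D,\ldots,D}(V,\ldots,V)\bigr)-\sum_{(v,w,p)\in T_K}\frac{1}{2|v|+|w|+p+1}\int_\phi A A^{2v_1}(dA)^{w_1}\cdots A^{2v_m}(dA)^{w_m} A^p,
\end{align*}
and each of the two pieces will be bounded separately, using Lemma \ref{lem:Cs and Es bounds} as the sole analytic input.

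For the trace term, Lemma \ref{lem:Cs and Es bounds} applied with $n\leftarrow K+1$ gives
\begin{align*}
\bigl|\Tr\bigl(T_{f^{[K+1]}}^{D+V,D,\ldots,D}(V,\ldots,V)\bigr)\bigr|\leq C^{K+2}(K+1)!^{\gamma-1}\|V\|^{K+1}(1+\|V\|)^s\|(D-i)^{-1}\|_s^s.
\end{align*}
Plugging in $\|V\|\leq nR^2$ and splitting into the two cases $R\geq 1$ and $R\leq 1$, this is controlled by $(C')^{K+1}K!^{\gamma-1}\max(R^{2K+2},R^{4K+2+2s})\|(D-i)^{-1}\|_s^s$ after absorbing polynomial factors in $K$ and $n$ into the exponential base.

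For each summand of the $T_K$-sum, set $m=2|v|+|w|+p+1$ and $d=2|v|+2|w|+p+1=m+|w|$. Expanding $A=\sum_{j=1}^n a_j db_j$ and $dA=\sum_{j=1}^n da_j db_j$, and using the Leibniz rule to put every resulting monomial in canonical form $x_0\,dx_1\cdots dx_d$, produces at most $(2n)^m$ canonical monomials in which each $x_i$ is a product of at most two generators, so $\|x_i\|,\|[D,x_i]\|\leq 2R^2$. Definition \ref{def:bracket} expresses $\int_\phi x_0 dx_1\cdots dx_d$ as a sum of $d$ traces of multiple operator integrals of $f^{[d]}$, and Lemma \ref{lem:Cs and Es bounds} bounds each such trace by $C^{d+1}d!^{\gamma-1}R^{2d+2}\|(D-i)^{-1}\|_s^s$. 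The definition of $T_K$ forces $m\geq K+1$ and $d\leq 2K$; since $\gamma-1\leq 0$, this yields $d!^{\gamma-1}\leq K!^{\gamma-1}$. Combined with the cardinality bound $|T_K|\leq 2^{K+1}$ established in the proof of Proposition \ref{prop:k}, the full $T_K$-sum is controlled by the same target expression. A per-term check that $R^{2d+2}\leq\max(R^{2K+2},R^{4K+2+2s})$ (using $2K+4\leq 2d+2\leq 4K+2$) handles both size regimes of $R$.

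The second assertion, a bound on the individual $|\int_{\psi_{2k-1}}\cs_{2k-1}(A)|$ and $|\int_{\phi_{2k}}F^k|$, follows by the same expansion-and-estimation procedure applied to $\cs_{2k-1}(A)=\int_0^1 AF_t^{k-1}dt$ and $F^k$, each a form of degree at most $2k$ in $A$ and $dA$; the defining relation $\psi_{2k-1}=\phi_{2k-1}-\tfrac12 B_0\phi_{2k}$ reduces the $\psi$-integral to the same kind of bracket evaluations covered by Lemma \ref{lem:Cs and Es bounds}. Finally, the identification $\|(D-i)^{-1}\|_s^s=\Tr|(D-i)^{-s}|$ is immediate from $|(D-i)^{-1}|^s=(D^2+1)^{-s/2}=|(D-i)^{-s}|$ via the functional calculus. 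The main obstacle is the combinatorial book-keeping in the $T_K$-sum: tracking simultaneously the count of canonical monomials from the Leibniz expansion, the precise $R$-powers from the operator norms, and the factorial from Lemma \ref{lem:Cs and Es bounds}, so as to preserve the denominator $K!^{1-\gamma}$ that drives convergence when $\gamma<1$.
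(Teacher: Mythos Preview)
Your approach matches the paper's: split the remainder via Proposition~\ref{prop:k}/Theorem~\ref{thm:asymptotic expansion}, bound the multiple operator integral term directly with Lemma~\ref{lem:Cs and Es bounds}, and for the $T_K$-sum expand each monomial in $A,dA$, put it in canonical form, and again invoke Lemma~\ref{lem:Cs and Es bounds}. The second assertion and the identification $\|(D-i)^{-1}\|_s^s=\Tr|(D-i)^{-s}|$ are handled as in the paper.

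There is one concrete slip. The claim that after the Leibniz reduction ``each $x_i$ is a product of at most two generators'' is false. Already for $A^3=a_1\,db_1\,a_2\,db_2\,a_3\,db_3$ the canonicalization produces, among others, $-a_1\,d(b_1a_2b_2)\,da_3\,db_3$, with $x_1=b_1a_2b_2$ a product of three generators. The repair (which is what the paper does, somewhat implicitly) is to observe that in every canonical term the $x_0,\ldots,x_d$ partition the full string of $2m$ generators $a_{j_1},b_{j_1},\ldots,a_{j_m},b_{j_m}$. Writing $q_i$ for the number of generators in $x_i$, one has $\sum_i q_i=2m$, hence $\|x_0\|\prod_{i=1}^d\|[D,x_i]\|\le R^{2m}\prod_{i=1}^d q_i$, and since $m\le d$ the AM--GM inequality gives $\prod_{i=1}^d q_i\le(2m/d)^d\le 2^d$, which is absorbed into the exponential constant. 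This yields the power $R^{2m}$ (rather than your $R^{2d+2}$); since $K+1\le m\le 2K$, the bound by $\max(R^{2K+2},R^{4K+2+2s})$ follows exactly as you argued. With this correction the argument goes through.
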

\begin{proof}
Theorem \ref{thm:asymptotic expansion} gives
\begin{align}
	&\left|\Tr(f(D+V)-f(D))-\sum_{k=1}^K \left(\int_{\psi_{2k-1}}\cs_{2k-1}(A)+\frac{1}{2k}\int_{\phi_{2k}}F^{k}\right)\right|\nonumber\\
	&\quad\leq \left\|T_{f^{[K+1]}}^{D+V,D,\ldots,D}(V,\ldots,V)\right\|_1+\sum_{(v,w,p)\in T_K}\left|\int_\phi AA^{2v_1}(dA)^{w_1}\cdots A^{2v_m}(dA)^{w_m}A^{p}\right|.\label{eq:propk}
\end{align}
We first focus on the first term. Lemma \ref{lem:Cs and Es bounds} gives a $C\geq1$ such that
\begin{align*}
	&\left\|T_{f^{[K+1]}}^{D+V,D,\ldots,D}(V,\ldots,V)\right\|_1\\
	&\quad\leq \frac{C^{K+1}}{K!^{1-\gamma}}
	\sum_{j_1,\ldots,j_{K+1}\in\{1,\ldots,n\}}\prod_{m=1}^{K+1}\norm{a_{j_m}}\norm{[D,b_{j_m}]}(1+\norm{V})^{s}\Tr|(D-i)^{-s}|\\
	&\quad\leq \frac{C^{K+1}}{K!^{1-\gamma}}n^{K+1}R^{2K+2}(1+\norm{V})^{s}\Tr|(D-i)^{-s}|,
\end{align*}
for all $K\in\N_0$. We conclude that there exists $\tilde C_{f,s,n,\gamma}$ such that
\begin{align*}
	\left\|T_{f^{[K+2]}}^{D+V,D,\ldots,D}(V,\ldots,V)\right\|_1\leq \frac{\tilde C_{f,s,n,\gamma}^{K+1}}{K!^{1-\gamma}}\max(R^{2K+2},R^{2K+2+2s})\Tr|(D-i)^{-s}|.
\end{align*}

We now move on to the second term (the finite sum) on the right-hand side of \eqref{eq:propk}. It contains terms of the form
\begin{align*}
	\left|\int_\phi B_1\cdots B_M\right|,
\end{align*}
for $B_1,\ldots,B_M\in \{a_jdb_j,da_jdb_j:j\in\{1,\ldots,n\}\}$. Let $l$ be the degree of $B_1\cdots B_M$. By the definition of $T_K$ (equation \eqref{eq:T_K}) $K+1\leq l\leq 2K+1$ and $K+1\leq M\leq 2K+1$. By using the Leibniz rule repeatedly, we can write
\begin{align*}
	\int_\phi B_1\cdots B_M=\sum_{j\in J}\int_\phi e_{0,j}de_{1,j}\cdots de_{l,j},
\end{align*}
for a set $J$ with $|J|\leq 3^M\leq 3^{2K+1}$, and $e_{i,j}\in\A$ such that $e_{0,j}\cdots e_{l,j}=\prod_{m=1}^M a_{j_m}b_{j_m}$.
We get
\begin{align}
	\left|\int_\phi B_1\cdots B_M\right|&\leq\sum_{j\in J}|\int_\phi e_{0,j}de_{1,j}\cdots de_{l,j}|\nonumber\\
	&=\sum_{j\in J}|\phi_l(e_{0,j},\ldots,e_{l,j})|\nonumber\\
	&\leq \sum_{j\in J}\sum_{i=1}^l|\Tr(T^D_{f^{[l]}}([D,e_i],\ldots,[D,e_l],e_0[D,e_1],[D,e_2],\ldots,[D,e_{i-1}])|,\label{eq:phi to MOI}
\end{align}
where we suppressed the index $j$ for readability.
We now apply Lemma \ref{lem:Cs and Es bounds} with $V=0$ to \eqref{eq:phi to MOI} and obtain
\begin{align*}
	\left|\int_\phi B_1\cdots B_M\right|
	&\leq  lC^{l+1}l!^{\gamma-1}\sum_{j\in J}\norm{e_0}\Big(\prod_{i=1}^l\norm{[D,e_i]}\Big)\Tr|(D-i)^{-s}|,
\end{align*}
for a constant $C\geq 1$. Because we have $\norm{a_j},\norm{b_j}$, $\norm{[D,a_j]},\norm{[D,b_j]}\leq R$ by assumption, and $e_{0}\cdots e_{l}=\prod_{m=1}^M a_{j_m}b_{j_m}$, with $K+1\leq M\leq 2K+1$, we find
\begin{align*}
	\left|\int_\phi B_1\cdots B_M\right|
	&\leq \tilde C^{l+1}l!^{\gamma-1}\sum_{j\in J} R^{2M}\Tr|(D-i)^{-s}|\\
	&\leq \hat C^{K+1}|J|K!^{\gamma-1}\max(R^{2K+2},R^{4K+2})\Tr|(D-i)^{-s}|\\
	&\leq \check C^{K+1}K!^{\gamma-1}\max(R^{2K+2},R^{4K+2})\Tr|(D-i)^{-s}|.
\end{align*}
We can now bound the second term on the right-hand side of \eqref{eq:propk}. We use that $|T_K|\leq 2^{K+1}$, and that $n^M\leq(n^2)^{K+1}$,
to find
\begin{align*}
	&\sum_{(v,w,p)\in T_K}\left|\int_\phi AA^{2v_1}(dA)^{w_1}\cdots A^{2v_m}(dA)^{w_m}A^{p}\right|\\
	&\quad\leq 2^{K+1}(n^2)^{K+1}\check C^{K+1}K!^{\gamma-1}\max(R^{2K+2},R^{4K+2})\Tr|(D-i)^{-s}|\\
	&\quad\leq \breve C_{f,s,n,\gamma}^{K+1}K!^{\gamma-1}\max(R^{2K+2},R^{4K+2})\Tr|(D-i)^{-s}|.
\end{align*}
Combining the first and second term of \eqref{eq:propk}, we obtain a number $C_{f,s,n,\gamma}$ such that \eqref{eq:convergence} holds. 

Moving on to the last claim of the theorem, we notice that, because $\psi_{2k-1}=\phi_{2k-1}-\frac{1}{2}B_0\phi_{2k}$,
	$$\left|\int_{\psi_{2k-1}}\cs_{2k-1}(A)\right|\leq\sum_{j\in J}\left|\int_\phi e_{0,j}de_{1,j}\cdots de_{l_j,j}\right|,$$
	where the sum is over certain $e_{i,j}\in\mathcal{A}$ (because $\mathcal{A}$ is unital) with $e_{0,j}\cdots e_{l_j,j}=\prod_{m=1}^{M}a_{j_m}b_{j_m}$ for some $M$ with $k\leq M\leq 2k-1$. The number of elements in $J$ is exponential in $k$. We obtain
	$$\left|\int_{\psi_{2k-1}}\cs_{2k-1}(A)\right|\leq \check C_{f,s,n,\gamma}^{k}k!^{\gamma-1}\max(R^{2k},R^{4k-2})\Tr|(D-i)^{-s}|,$$
	for some number $\check C_{f,s,n,\gamma}\geq1$. Similarly, we obtain a number $\hat C_{f,s,n,\gamma}\geq1$ such that
	$$\left|\int_{\phi_{2k}}F^{k}\right|\leq \hat C_{f,s,n,\gamma}^{k}k!^{\gamma-1}\max(R^{2k},R^{4k})\Tr|(D-i)^{-s}|,$$
	thereby proving the theorem.
\end{proof}
This theorem has two important consequences; for $f\in\E_s^1$ (hence, for all $f\in\E_s^\gamma$) we obtain the following corollary, and for $f\in\E_s^\gamma$, $\gamma<1$ we obtain our main theorem.

\begin{cor}
	Let $(\A,\H,D)$ be an $s$-summable spectral triple, let $f\in\E_s^1$ and $V=\pi_D(A)\in\Omega^1_D(\A)_\sa$. Then there exists a $\delta>0$ such that for all $t\in\R$ with $|t|<\delta$, we have
	$$\Tr(f(D+tV)-f(D))=\sum_{k=1}^\infty\left(\int_{\psi_{2k-1}}\cs_{2k-1}(tA)+\frac{1}{2k}\int_{\phi_{2k}}F_t^{k}\right),$$
	and the series converges absolutely.
\end{cor}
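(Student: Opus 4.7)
The plan is to deduce this corollary from Theorem \ref{thm:main bounds} with $\gamma=1$, applied to a clever re-parametrization of the scaled perturbation $tV$. The key observation is that even though $\gamma=1$ does not supply a $1/K!^{1-\gamma}$ factor, we have freedom in how to write $tV$ as $\sum_j \tilde a_j [D,\tilde b_j]$, and by splitting the factor of $t$ symmetrically between the $a$- and $b$-components we can gain a factor of $|t|^{K+1}$ in the remainder bound.

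Concretely, fix any $R>0$ with $\|a_j\|,\|b_j\|,\|[D,a_j]\|,\|[D,b_j]\|\le R$ for all $j=1,\dots,n$. For $t\ge 0$ set $\tilde a_j := \sqrt{t}\, a_j$ and $\tilde b_j := \sqrt{t}\, b_j$; for $t<0$ set $\tilde a_j := -\sqrt{-t}\, a_j$ and $\tilde b_j := \sqrt{-t}\, b_j$. A direct computation shows that in either case $\tilde A := \sum_j \tilde a_j\, d\tilde b_j = tA$ in $\Omega^1(\A)$, so that $F_{\tilde A}=F_t$ and $\cs_{2k-1}(\tilde A)=\cs_{2k-1}(tA)$, while $\pi_D(\tilde A) = tV$ and all four norms of $\tilde a_j,\tilde b_j,[D,\tilde a_j],[D,\tilde b_j]$ are bounded by $R_t := \sqrt{|t|}\, R$.

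Next, I would apply Theorem \ref{thm:main bounds} to the data $(\tilde a_j,\tilde b_j)$ with $R$ replaced by $R_t$. Restricting to $|t|$ small enough that $R_t\le 1$, the ``max'' collapses to $R_t^{2K+2}$, so that the remainder estimate reads
\begin{align*}
	&\left|\Tr(f(D+tV)-f(D))-\sum_{k=1}^K\left(\int_{\psi_{2k-1}}\cs_{2k-1}(tA)+\tfrac{1}{2k}\int_{\phi_{2k}}F_t^{k}\right)\right|\\
	&\qquad\le \bigl(C_{f,s,n,1}\,|t|\, R^2\bigr)^{K+1}\, \Tr|(D-i)^{-s}|,
\end{align*}
while the ``Moreover'' clause of Theorem \ref{thm:main bounds} gives the term-by-term bound
\begin{align*}
	\left|\int_{\psi_{2k-1}}\cs_{2k-1}(tA)\right|+\left|\int_{\phi_{2k}}F_t^{k}\right|\le \bigl(C_{f,s,n,1}\,|t|\, R^2\bigr)^{k}\, \Tr|(D-i)^{-s}|.
\end{align*}
Choosing $\delta := \min\bigl(R^{-2},\, (2 C_{f,s,n,1} R^{2})^{-1}\bigr)$ ensures both $R_t\le 1$ and $C_{f,s,n,1}\,|t|\, R^2 < 1/2$ for all $|t|<\delta$. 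The individual-term bound then dominates the series by a convergent geometric series, establishing absolute convergence, while the remainder estimate tends to zero as $K\to\infty$, identifying the sum with $\Tr(f(D+tV)-f(D))$.

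The only conceptual obstacle is this $\sqrt{|t|}$ re-parametrization: a naive rewriting $\tilde a_j := t a_j$, $\tilde b_j := b_j$ would leave $R_t \asymp R$ for $|t|\le 1$ and Theorem \ref{thm:main bounds} would fail to yield any $|t|$-smallness. By distributing the factor of $t$ symmetrically we halve the effective exponent of $R_t$ and extract the geometric decay in $|t|$ that compensates for the missing $1/K!^{1-\gamma}$ factor. This is precisely why the main Theorem \ref{thm:main thm} restricts to $\gamma<1$ to obtain absolute convergence for arbitrary $V$, whereas at the endpoint $\gamma=1$ covered by this corollary one must confine oneself to a neighborhood of $t=0$.
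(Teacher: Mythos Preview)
Your proof is correct and follows essentially the same approach as the paper: both split $tV$ via $\tilde a_j=\pm\sqrt{|t|}\,a_j$, $\tilde b_j=\sqrt{|t|}\,b_j$ and then apply both parts of Theorem \ref{thm:main bounds} with $\gamma=1$ to the rescaled data, using $R_t\le 1$ to collapse the $\max$ and obtain a geometric remainder. The only cosmetic difference is in the bookkeeping of $\delta$: the paper first fixes the target bound $R:=1/(C_{f,s,n,1}+1)$ (ensuring $C_{f,s,n,1}R<1$ and hence $C_{f,s,n,1}R^2<1$) and then sets $\delta:=\bigl(R/\max_j\{\|a_j\|,\|b_j\|,\|[D,a_j]\|,\|[D,b_j]\|\}\bigr)^2$, whereas you fix $R$ as a bound on the original norms and then choose $\delta$ to force $C_{f,s,n,1}|t|R^2<1/2$.
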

\begin{proof}
	Write $V=\sum_{j=1}^na_j[D,b_j]$. First take $C_{f,s,n,1}\geq 1$ from Theorem \ref{thm:main bounds}, define $R:=1/(C_{f,s,n,1}+1)$ such that $C_{f,s,n,1}R<1$, and define $\delta:= \left(\frac{R}{\max_j\{\|a_j\|,\|b_j\|,\|[D,a_j]\|,\|[D,b_j]\|\}}\right)^2$. Writing
	$$tV=\sum_{j=1}^n\sqrt{|t|}a_j[D,\text{sign}(t)\sqrt{|t|}b_j],$$
	the corollary follows by applying (the first and second part of) Theorem \ref{thm:main bounds} to $tV$ instead of $V$.
\end{proof}
\begin{proof}[Proof of Theorem \ref{thm:main thm}.]
	This follows from Theorem \ref{thm:main bounds} by taking $\gamma<1$.
\end{proof}

\section{Gauge invariance and the pairing with K-theory}\label{sct:vanishing pairing}

 Since the spectral action is a spectral invariant, it is in particular invariant under conjugation of $D$ by a unitary $U\in \A$. More generally, in the presence of an inner fluctuation we find that the spectral action is invariant under the transformation
$$
D+V \mapsto U (D+V) U^* = D + V^U; \qquad V^U = U[D,U^*] + U V U^*.
$$
This transformation also holds at the level of the universal forms, with a gauge transformation of the form $A \mapsto A^U = U d U^* + U A U ^*$. Let us analyze the behavior of the Chern--Simons and Yang--Mills terms appearing in Theorem \ref{thm:main thm} under this gauge transformation, and derive an interesting consequence for the pairing between the odd $(b,B)$-cocycle $\tilde \psi$ with the odd K-theory group of $\A$.

\begin{lem}
  The Yang--Mills terms $\int_{\phi_{2k}} F^k$ with $F = dA +A^2$ are invariant under the gauge transformation $A \mapsto A^U$ for every $k \geq 1$. 
  \end{lem}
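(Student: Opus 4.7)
The plan is to reduce the statement to two ingredients: the standard transformation rule for the curvature under a gauge transformation, and the near-tracial property of $\int_{\phi_{2k}}$ on even-degree forms established in Corollary \ref{cor:cycl}\ref{cycl2}.

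First I would verify the curvature transformation $F^U = U F U^*$ by direct computation in the universal differential graded algebra $\Omega^\bullet(\A)$. Using the graded Leibniz rule together with $d(UU^*)=0$, which yields the identity $U\,dU^* = -dU\cdot U^*$, one computes
\begin{align*}
dA^U &= dU\,dU^* + dU\,A\,U^* + U\,dA\,U^* - UA\,dU^*,\\
(A^U)^2 &= -dU\,dU^* - dU\,A\,U^* + UA\,dU^* + UA^2U^*,
\end{align*}
and adding these gives $F^U = U(dA+A^2)U^* = UFU^*$. Raising to the $k$-th power and telescoping the $U^*U=1$ in between then yields $(F^U)^k = UF^kU^*$.

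Next I would apply the near-cyclic identity: since both $UF^k\in\Omega^{2k}(\A)$ and $U^*\in\Omega^0(\A)$ have even degree, Corollary \ref{cor:cycl}\ref{cycl2} applied to $X=UF^k$ and $Y=U^*$ (which, being relevant only through the $2k$-form component of $\phi$, really does reduce to $\phi_{2k}$) gives
\begin{align*}
\int_{\phi_{2k}}(F^U)^k = \int_{\phi_{2k}} (UF^k)U^* = \int_{\phi_{2k}} U^*(UF^k) = \int_{\phi_{2k}} F^k,
\end{align*}
using $U^*U = 1$ in the last equality. This establishes the lemma.

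There is no real obstacle: the first step is a textbook computation in graded algebra, and the second step is an immediate consequence of the already-proved Corollary \ref{cor:cycl}\ref{cycl2}, whose underlying input is the Hochschild cocycle property $b\phi_{2k}=0$ (Proposition \ref{prop:bB}). The only point worth flagging is that, unlike the situation for Chern--Simons forms which are odd-degree and therefore pick up anomalous boundary terms under cyclic permutation (via Proposition \ref{prop:cycl}), the Yang--Mills integrand $F^k$ is of even degree $2k$, which is precisely why the tracial property kicks in cleanly and the gauge invariance is exact rather than up to a $(b,B)$-coboundary.
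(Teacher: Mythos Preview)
Your proof is correct and follows exactly the same approach as the paper: the paper's proof simply states that the curvature of $A^U$ is $UFU^*$ and invokes Corollary \ref{cor:cycl}\ref{cycl2}, while you have filled in the explicit graded-Leibniz computation of $F^U$ and spelled out how the corollary is applied with $X=UF^k$, $Y=U^*$. One minor remark: the tracial property in Corollary \ref{cor:cycl}\ref{cycl2} descends directly from Proposition \ref{prop:cycl} (which rests on the cyclicity \ref{cyclicity} of the bracket), not specifically from $b\phi_{2k}=0$; but this is a point of attribution only and does not affect the argument.
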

\proof
Since the curvature of $A^U$ is simply given by $U F U^*$, the claim follows 
from Corollary \ref{cor:cycl}\ref{cycl2}.
\endproof


We are thus led to the conclusion that the Chern--Simons forms are gauge invariant as well. Indeed, arguing as in \cite{CC06}, since both $\Tr (f(D+V))$ and the Yang-Mills terms are invariant under $V \mapsto V^U$, we find that, under the assumptions stated in Theorem \ref{thm:main thm}:
$$
\sum_{k=0}^\infty \int_{\psi_{2k+1}} \cs_{2k+1}(A^U ) =  
\sum_{k=0}^\infty \int_{\psi_{2k+1}} \cs_{2k+1} (A ).
$$
Each individual Chern--Simons form behaves non-trivially under a gauge transformation. Nevertheless, it turns out that we can conclude, just as in \cite{CC06}, that the pairing of the whole $(b,B)$-cocycle with K-theory is trivial. 
Since the $(b,B)$-cocycle $\tilde \psi$ is given as an infinite sequence, we should first carefully study the analytical behavior of $\tilde \psi$. In fact, we should show that it is an \textit{entire cyclic cocycle} in the sense of \cite{C88a} (see also \cite[Section IV.7.$\alpha$]{C94}). For this purpose, we can without loss of generality assume that $\A$ is complete in the Banach algebra norm defined by $\|a\|_1:=\|a\|+\|[D,a]\|$, because $(\overline{\A}^{\|\cdot\|_1},\H,D)$ is also a spectral triple, and the resulting $\tilde\psi_{2k+1}\in \mathcal{C}^{2k+1}(\overline{\A}^{\|\cdot\|_1})$ is an extension of the one in $\mathcal{C}^{2k+1}(\A)$. Recall that for Banach algebras $\A$ an odd cochain such as $\tilde \psi$ is called entire if the power series $\sum_k  \frac{(2k+1)!}{k!}\| \tilde \psi_{2k+1} \| z^k$ converges everywhere in $\C$. This is equivalent \cite[Remark IV.7.7a,c]{C94} to the condition that for any bounded subset $\Sigma \subset \A$ there exists a constant $C_\Sigma$ such that
$$
\left|\tilde \psi_{2k+1} (a_0, \ldots, a_{2k+1}) \right|  \leq \frac{C_\Sigma}{ k!} \qquad (\forall a_j \in \Sigma).
$$
In our case it turns out that Lemma \ref{lem:Cs and Es bounds} implies the following growth condition, guaranteeing that indeed $\tilde\psi$ is entire.

\begin{lem}
	Fix $f\in\Esg$ for $\gamma<1$ and equip $\A$ with the norm $\| a \|_1 = \| a \| + \| [D,a]\|$. Then, for any bounded subset $\Sigma\subset\mathcal{A}$ there exists $C_\Sigma$ such that
		$$\left |\tilde{\psi}_{2k+1}(a_0,\ldots, a_{2k+1}) \right|\leq \frac{C_\Sigma}{k!},$$
	for all $a_j\in\Sigma$.
\end{lem}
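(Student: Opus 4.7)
The plan is to bound $|\psi_{2k+1}(a_0,\ldots,a_{2k+1})|$ using the main estimate of Chapter \ref{ch:MOI}, and then exploit the combinatorial prefactor $\frac{k!}{(2k+1)!}$ in the definition of $\tilde\psi_{2k+1}$ together with Stirling's formula. Throughout, let $R := \sup_{a\in\Sigma}\|a\|_1 < \infty$ so that $\|a_0\|\leq R$ and $\|[D,a_j]\|\leq R$ for all $j\geq 1$ whenever $a_0,\ldots,a_{2k+1}\in\Sigma$.

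First, I would apply Lemma \ref{lem:Cs and Es bounds} (with $V=0$) to each of the $n$ cyclic permutations appearing in the definition of $\br{\cdot}$, so that
\[
|\phi_n(a_0,\ldots,a_n)| \;\leq\; n\, C^{n+1}\, n!^{\gamma-1}\,\|a_0\|\prod_{j=1}^n\|[D,a_j]\|\cdot \|(D-i)^{-1}\|_s^s,
\]
for some $C\geq 1$ depending only on $f$ and $s$. Writing $B_0\phi_{n+1}(a_0,\ldots,a_n) = \phi_{n+1}(1,a_0,\ldots,a_n)$ and applying the same estimate (noting that $\|1\|\leq R$ is harmless since the slot with $1$ does not produce a commutator factor) yields a completely analogous bound on $|B_0\phi_{n+1}|$. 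Since $\psi_{2k+1}=\phi_{2k+1}-\tfrac12 B_0\phi_{2k+2}$, combining these for $n=2k+1$ produces a constant $C'\geq 1$, independent of $k$, such that
\[
|\psi_{2k+1}(a_0,\ldots,a_{2k+1})| \;\leq\; C'^{\,k+1}\,(2k+2)!^{\gamma-1}\,R^{2k+2}\,\|(D-i)^{-1}\|_s^s,
\]
where the polynomial-in-$k$ prefactors from the two cyclic sums have been absorbed into $C'^{\,k+1}$.

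Next, I would insert the combinatorial factor from the definition $\tilde\psi_{2k+1}=(-1)^k\frac{k!}{(2k+1)!}\psi_{2k+1}$ and verify that $k!\,|\tilde\psi_{2k+1}|$ stays bounded in $k$. Concretely, I want to show that
\[
\frac{(k!)^2}{(2k+1)!\,(2k+2)!^{\,1-\gamma}}\,\bigl(C'R^2\bigr)^{k+1}
\]
is bounded uniformly in $k$. By Stirling's formula, there is a constant $c>0$ such that
\[
\frac{(k!)^2}{(2k+1)!} \;\leq\; \frac{c}{4^k\sqrt{k}}, \qquad (2k+2)!^{\,1-\gamma} \;\geq\; c\,(2k)^{2k(1-\gamma)}e^{-2k(1-\gamma)}.
\]
Since $\gamma<1$, the denominator grows faster than any exponential in $k$, so multiplying by the exponential $(C'R^2)^{k+1}/4^k$ still yields a sequence bounded in $k$. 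This produces the constant $C_\Sigma$, depending on $\Sigma$ only through $R$ and $\gamma$, proving the desired estimate $|\tilde\psi_{2k+1}(a_0,\ldots,a_{2k+1})|\leq C_\Sigma/k!$.

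The main analytical content is entirely contained in Lemma \ref{lem:Cs and Es bounds}; the rest is Stirling bookkeeping. The one subtlety to watch is that the condition $\gamma<1$ is essential: at $\gamma=1$ the factor $(2k+2)!^{\,1-\gamma}\equiv 1$ disappears and only the $4^{-k}$ from $(k!)^2/(2k+1)!$ remains, so entirety would require $C'R^2<4$, i.e.\ a smallness hypothesis on $\Sigma$ instead of a global bound. For $\gamma<1$, however, the factorial decay provided by $(2k)!^{\gamma-1}$ dominates every exponential and no such restriction is needed.
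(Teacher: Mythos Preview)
Your proof is correct and follows essentially the same approach as the paper: bound $\psi_{2k+1}=\phi_{2k+1}-\tfrac12 B_0\phi_{2k+2}$ term by term using Lemma \ref{lem:Cs and Es bounds}, then absorb the resulting exponential growth into the factorial decay via the prefactor $k!/(2k+1)!$ and Stirling. The paper records the bound on $|\psi_{2k+1}|$ as the two-term sum $\big((2k+1)C^{2k+2}(2k+1)!^{\gamma-1}+(k+1)C^{2k+3}(2k+2)!^{\gamma-1}\big)R^{2k+2}\|(D-i)^{-1}\|_s^s$, which is exactly what your separate estimates on $\phi_{2k+1}$ and $B_0\phi_{2k+2}$ give before you merge them into a single $C'^{\,k+1}(2k+2)!^{\gamma-1}$; your closing remark on why $\gamma<1$ is essential is also in line with the paper's discussion.
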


\begin{proof}
	Assume that $\| a_j \|_1 \leq R$ for all $a_j \in \Sigma$ so that both $\|a_j\|,\|[D,a_j]\|\leq R$. By definition of $\phi$, the expression $\psi_{2k+1}(a_0,\ldots,a_{2k+1})$ is given by a linear combination of multiple operator integrals with arguments in $\{V\in\mB(\H):\|V\|\leq R\}$ except for $a_0[D,a_1]$, which is bounded by $R^2$. By applying Lemma \ref{lem:Cs and Es bounds}, we obtain the estimate
	\begin{align}\label{eq:bound psi}
	\left	|\psi_{2k+1}(a_0,\ldots,a_{2k+1}) \right|\leq\bigg((2k+1)\frac{C^{2k+2}}{(2k+1)!^{1-\gamma}}+(k+1)\frac{C^{2k+3}}{(2k+2)!^{1-\gamma}}\bigg)R^{2k+2}\|(D-i)^{-1}\|_{s}^s.
	\end{align}
	We recall from Proposition \ref{prop:bB} that
	\begin{align}\label{tilde psi reprise}
		\tilde{\psi}_{2k+1}=(-1)^{k}\frac{k!}{(2k+1)!}\psi_{2k+1},
	\end{align}
	so that \eqref{eq:bound psi} in particular implies the lemma by use of, for instance, Stirling's approximation.
\end{proof}
For $U\in M_q(\A)$, define a pairing
\begin{align}\label{eq:pairing}
	\langle U,\tilde\psi\rangle:=(2\pi i)^{-1/2}\sum_{k=0}^\infty (-1)^{k}k!\tilde{\psi}^q_{2k+1}(U^*,U,\ldots,U^*,U),
\end{align}
where $$\tilde{\psi}^q_{2k+1}:=\Tr \# \tilde\psi_{2k+1}:(\mu_0\otimes a_0,\ldots,\mu_{2k+1}\otimes a_{2k+1})\mapsto\Tr(\mu_0\cdots\mu_{2k+1})\tilde\psi_{2k+1}(a_0,\ldots,a_{2k+1})$$ for $\mu_0,\ldots,\mu_{2k+1}\in M_q(\C)$ and $a_0,\ldots,a_{2k+1}\in\A$. Since $\tilde \psi$ is a $(b,B)$-cocycle, it follows from \cite[Corollary IV.7.27]{C94} (see also \cite[Sections III.3 and IV.7]{C94}) that this pairing only depends on the class of $U$ in $K_1(\A)$. 
We now prove an interesting consequence of our main theorem.

\begin{thm}
  Let $f\in\Esg$ for $\gamma<1$. Then the pairing of the odd entire cyclic cocycle $\tilde \psi$ with $K_1(\A)$ is trivial, \textit{i.e.}
  $$\langle U,\tilde\psi\rangle=0
  $$
  for all unitary $U\in M_q(\A)$.
\end{thm}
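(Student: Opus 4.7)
The plan is to reduce to the case $q=1$ and explicitly evaluate the pairing by cleverly choosing a one-form whose curvature vanishes.

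The first observation is that for a unitary $U\in M_q(\A)$, the tensored spectral triple $(M_q(\A),\C^q\otimes\H,1\otimes D)$ is again $s$-summable and $V:=U^*[1\otimes D,U]$ is the self-adjoint inner fluctuation corresponding to $A:=U^*dU\in\Omega^1(M_q(\A))$. Since the pairing formula \eqref{eq:pairing} incorporates the trace over $M_q(\C)$ through $\tilde\psi^q_{2k+1}=\Tr\#\tilde\psi_{2k+1}$, it suffices to prove the theorem for $q=1$, passing to $M_q(\A)$ only for the final bookkeeping.

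So fix $U\in\A$ unitary, put $A=U^*dU$ and $V=U^*[D,U]=U^*DU-D$. The key algebraic point is that the curvature of $A$ vanishes in the universal differential algebra:  using $d(U^*U)=0$ to obtain $dU\cdot U^*=-UdU^*$ and $dU^*\cdot U=-U^*dU$, one computes
\begin{align*}
dA=dU^*\,dU,\qquad A^2=U^*(dU\cdot U^*)dU=-dU^*\,dU,
\end{align*}
so $F=dA+A^2=0$, while $F_t=t\,dA+t^2A^2=t(1-t)\,dU^*\,dU$. Substituting into Definition \ref{defi:cs} and evaluating the Beta integral $\int_0^1 t^k(1-t)^k\,dt=(k!)^2/(2k+1)!$ gives
\begin{align*}
\cs_{2k+1}(U^*dU)=\tfrac{(k!)^2}{(2k+1)!}\,U^*dU\,(dU^*dU)^k,
\end{align*}
which is precisely the universal form $U^*,U,U^*,\ldots,U$ (alternating, ending in $U$, with $2k+2$ entries).

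Next, invoke gauge invariance at the level of the spectral action: since $f(D+V)=f(U^*DU)=U^*f(D)U$, the trace is preserved, so $\Tr\bigl(f(D+V)-f(D)\bigr)=0$. Applying Theorem \ref{thm:main thm} (which requires $f\in\Esg$ with $\gamma<1$) and noting that every Yang--Mills term $\int_{\phi_{2k}}F^k$ vanishes identically because $F=0$, we conclude
\begin{align*}
0=\sum_{k=0}^\infty\int_{\psi_{2k+1}}\cs_{2k+1}(U^*dU)=\sum_{k=0}^\infty\tfrac{(k!)^2}{(2k+1)!}\,\psi_{2k+1}(U^*,U,U^*,\ldots,U).
\end{align*}
Finally, substituting the normalization $\tilde\psi_{2k+1}=(-1)^k\tfrac{k!}{(2k+1)!}\psi_{2k+1}$ from Proposition \ref{prop:bB} into \eqref{eq:pairing}, the signs $(-1)^k$ cancel and one obtains exactly the series above, so $\langle U,\tilde\psi\rangle=0$. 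The matrix case is identical after passing to $M_q(\A)$.

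The only mild subtlety is to verify that the vanishing-curvature computation for $A=U^*dU$, performed in the universal algebra $\Omega^\bullet(\A)$, indeed gives the desired identity after evaluating along $\psi$ via $\pi_D$; this is automatic because the bracket $\langle\cdot\rangle$ and hence $\psi_{2k+1}$ are defined by reducing universal forms to iterated commutators with $D$, and the relations we used ($U^*U=1$, the graded Leibniz rule, $d^2=0$) all hold in $\Omega^\bullet(\A)$. No analytical difficulty arises beyond ensuring $\tilde\psi$ is entire, already established in the preceding lemma.
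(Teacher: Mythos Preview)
Your proof is correct and follows essentially the same route as the paper: pass to the amplified spectral triple $(M_q(\A),\C^q\otimes\H,1\otimes D)$, take $A=U^*dU$ so that $F=0$ and $F_t=(t-t^2)dU^*dU$, use $\Tr(f(U^*DU)-f(D))=0$ together with Theorem~\ref{thm:main thm} to kill the Yang--Mills terms and force $\sum_k\int_{\psi_{2k+1}}\cs_{2k+1}(U^*dU)=0$, then evaluate the Beta integral and match the normalization of $\tilde\psi_{2k+1}$ against the pairing formula. The paper works directly with the enlarged triple rather than phrasing it as a reduction to $q=1$, but this is only a cosmetic difference.
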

\begin{proof}
	Apply Theorem \ref{thm:main thm} to a bigger spectral triple, namely $(\A^q,\H^{q},D^q):=(M_q(\C)\otimes \A,\C^q\otimes\H,I_q\otimes D)$. Take $A=U^*dU$ for $U$ unitary in $M_q(\A)=M_q(\C)\otimes\A$.
	Clearly, then $V=U^*[D^q,U]$, and because the multiple operator integral behaves naturally with respect to tensor products, we obtain
		$$\Tr(f(D^q+U^*[D^q,U])-f(D^q))=\sum_{k=0}^\infty\left(\int_{\psi^q_{2k+1}}\cs_{2k+1}(U^*dU)+\frac{1}{2k+2}\int_{\phi^q_{2k+2}}F^{k+1}\right),$$
	where $F=d(U^*dU)+(U^*dU)^2=0$. The left-hand side equals $\Tr(f(U^*D^qU)-f(D^q))=0.$
	Therefore,
	\begin{align}\label{eq:cs is 0}
		\sum_{k=0}^\infty\int_{\psi^q_{2k+1}}\cs_{2k+1}(U^*dU)=0.
	\end{align}
        From the definition of the Chern--Simons form (Definition \ref{defi:cs}) and the fact that $F_t=tdA+t^2A^2=(t-t^2)dA+t^2F=(t-t^2)dU^*dU$ we find that
	\begin{align*}
		\cs_{2k+1}(U^*dU)&=\int_0^1dt\, (t-t^2)^{k}U^*dUdU^*dU\cdots dU^*dU,
	\end{align*}
	so that by a straightforward integration we may conclude that
	\begin{align*}
		\int_{\psi^q_{2k+1}}\cs_{2k+1}(U^*dU)=\frac{k!^2}{(2k+1)!}\psi^q_{2k+1}(U^*,U,\ldots,U^*,U).
	\end{align*}
Combining this with \eqref{tilde psi reprise}, \eqref{eq:pairing} and \eqref{eq:cs is 0}, the theorem follows.
\end{proof}


\section{One-loop corrections to the spectral action}
\label{sct:One-Loop}

As a last application of the expansion obtained in Section \ref{sct:main thm}, in this section we will show how the asymptotic expansion allows us to formulate a quantum version of the spectral action. To do this, we must first interpret the spectral action, expanded in terms of generalized Chern--Simons and Yang--Mills actions by Theorem \ref{thm:main thm}, as a classical action, which leads us naturally to a noncommutative geometric notion of a vertex. Enhanced with a spectral gauge propagator derived from the formalism of random matrices (and in particular, random finite noncommutative geometries) this gives us a concept of one-loop counterterms and a proposal for a one-loop \textit{quantum effective spectral action}, without leaving the spectral framework. We will show here that, at least in a finite-dimensional setting, these counterterms can again be written as Chern--Simons and Yang--Mills forms integrated over (quantum corrected) cyclic cocycles. We therefore discern a renormalization flow in the space of cyclic cocycles. 


\begin{fmffile}{graphs}

  \fmfset{wiggly_len}{2mm}

  \fmfset{dot_len}{1.5mm}

  \subsection{Conventions}



%

We let $\varphi_1,\varphi_2,\ldots$ be an orthonormal basis of eigenvectors of $D$, with corresponding eigenvalues $\lambda_1,\lambda_2,\ldots$. For any $N\in\N$, we define
\begin{align*}
	H_N:=(M_N)_\sa,\quad M_N:=\spn\left\{\ket{\varphi_i}\bra{\varphi_j}:~i,j\in\{1,\ldots,N\}\right\},
\end{align*}
and endow $H_N$ with the Lebesgue measure on the coordinates $Q\mapsto \Re( Q_{ij})$ ($i\leq j$) and $Q\mapsto\Im(Q_{ij})$ ($i<j$). Here and in the following, $Q_{ij}:=\p{\varphi_i}{Q\varphi_j}$ are the matrix elements of $Q$.
For simplicity, we will assume that the perturbations $V_1,\ldots,V_n$ are in $\cup_K H_K$. Of course, as explained in Section \ref{sct:MOI Introduction}, we would like to eventually consider noncompact perturbations as well. This would be a challenging analytic endeavor and will not be pursued here, but we note that the techniques developed in Chapter \ref{ch:MOI} 
might provide essential help.

For us, a \textit{Feynman diagram} is a finite multigraph with a number of marked vertices of degree 1 called external vertices, all other vertices being called internal vertices or, by abuse of terminology, vertices. An edge, sometimes called a propagator, is called external if it connects to an external vertex, and internal otherwise. The external vertices are simply places for the external edges to attach to, and are often left out of the discussion. An $n$-point diagram is a Feynman diagram with $n$ external edges. A Feynman diagram is called one-particle-irreducible if any multigraph obtained by removing one of the internal edges is connected.

\subsection{Diagrammatic expansion of the spectral action}
\label{sect:sa}
%

Viewing the spectral action as a classical action, and following the background field method, the vertices of degree $n$ in the corresponding quantum theory should correspond to $n\th$-order functional derivatives of the spectral action. However, in the paradigm of noncommutative geometry, a base manifold is absent, and functional derivatives do not exist in the local sense. Therefore, a more abstract notion of a vertex is needed. The brackets $\br{\cdot}$ that power the expansion of the spectral action in Theorems \ref{thm:main thm} and \ref{thm:asymptotic expansion} are by construction cyclic and multilinear extensions of the derivatives of the spectral action, and as such provide an appropriate notion of \textit{noncommutative vertices}.
%
%
We define a noncommutative vertex with $V_1,\ldots,V_n\in\cup_K H_K$ on the external edges by
\begin{align}
\raisebox{-43pt}{\scalebox{0.55}{
\begin{tikzpicture}[thick]
	\draw[edge] (0,2) to (2,2);
	\draw[edge] (1,0) to (2,2);
	\draw[edge] (1,4) to (2,2);
	\draw[edge] (3,4) to (2,2);
	\draw[edge] (4,2) to (2,2);
	\ncvertex{2,2}
	\draw[line width=2pt, line cap=round, dash pattern=on 0pt off 3\pgflinewidth] (2,0) arc (-90:0:1.5cm);
	\node at (-0.5,2) {\huge $V_1$};
	\node at (0.8,4.4) {\huge $V_2$};
	\node at (3.2,4.4) {\huge $V_3$};
	\node at (4.5,2) {\huge $V_4$};
	\node at (0.7,-0.4) {\huge $V_n$};
\end{tikzpicture}}}
\quad
:=\quad\br{V_1,\ldots,V_n}.
  \label{eq:bracket}
\end{align}

In contrast to a normal vertex of a Feynman diagram, a noncommutative vertex is decorated with a  cyclic order on the edges incident to it. By convention, the edges are attached clockwise with respect to this cyclic order. 
%
As such, with perturbations $V_1,\ldots,V_n$ decorating the external edges, the diagram \eqref{eq:bracket} reflects the cyclicity of the bracket: $\br{V_1,\ldots,V_n}=\br{V_n,V_1,\ldots,V_{n-1}}$, the first property of Lemma \ref{cycl bracket}. 
In order to diagramatically represent the second property of Lemma \ref{cycl bracket} as well, we introduce the following notation. Wherever a gauge edge meets a noncommutative vertex we can insert a dashed line decorated with an element $a\in\A$ before or after the gauge edge, with the following meaning:
\begin{align*}
\raisebox{-10pt}{\scalebox{0.55}{
\begin{tikzpicture}[thick]
	\draw (0,0) arc (110:70:4cm);
	\draw (0,0.025) arc (110:70:4cm);
	\draw (0,0.05) arc (110:70:4cm);
	\draw (0,0.075) arc (110:70:4cm);
	\draw[edge] (2.65,0) to (2.65+0.5,2);
	\draw[streepjes] (2.65,0) to (2.3,2);
	\node at (2.2,2.5) {\huge $a$};
	\node at (3.3,2.5) {\huge $V$};
\end{tikzpicture}}}
\quad
\raisebox{10pt}{
:=
}
\raisebox{-10pt}{\scalebox{0.55}{
\begin{tikzpicture}[thick]
	\draw (0,0) arc (110:70:4cm);
	\draw (0,0.025) arc (110:70:4cm);
	\draw (0,0.05) arc (110:70:4cm);
	\draw (0,0.075) arc (110:70:4cm);
	\draw[edge] (2.65,0) to (2.65+0.5,2);
	\node at (3.3,2.5) {\huge $aV$};
\end{tikzpicture}}}
\quad
\raisebox{10pt}{
,
}
\qquad\qquad
\raisebox{-10pt}{\scalebox{0.55}{
\begin{tikzpicture}[thick]
	\draw (0,0) arc (110:70:4cm);
	\draw (0,0.025) arc (110:70:4cm);
	\draw (0,0.05) arc (110:70:4cm);
	\draw (0,0.075) arc (110:70:4cm);
	\draw[edge] (0,0) to (-0.5,2);
	\draw[streepjes] (0,0) to (0.35,2);
	\node at (0.45,2.5) {\huge $a$};
	\node at (-0.65,2.5) {\huge $V$};
\end{tikzpicture}}}
\quad
\raisebox{10pt}{
:=
}
\quad
\raisebox{-10pt}{
\scalebox{0.55}{\begin{tikzpicture}[thick]
	\draw (0,0) arc (110:70:4cm);
	\draw (0,0.025) arc (110:70:4cm);
	\draw (0,0.05) arc (110:70:4cm);
	\draw (0,0.075) arc (110:70:4cm);
	\draw[edge] (0,0) to (-0.5,2);
	\node at (-0.65,2.5) {\huge $Va$};\end{tikzpicture}}}.
\end{align*}
With this notation, the equation
\begin{align}\label{eq:classical Ward}
\br{a V_1,\ldots,V_n}-\br{V_1,\ldots,V_n a}
        =\br{V_1,\ldots,V_n,[D,a]},
\end{align}
is represented as
\begin{align}
\label{eq:ward}
\raisebox{-10pt}{\scalebox{0.55}{
\begin{tikzpicture}[thick]
	\draw (0,0) arc (110:70:4cm);
	\draw (0,0.025) arc (110:70:4cm);
	\draw (0,0.05) arc (110:70:4cm);
	\draw (0,0.075) arc (110:70:4cm);
	\draw[streepjes] (2.65,0) to (2.3,2);
	\node at (2.2,2.5) {\huge $a$};
\end{tikzpicture}}}
\quad\,\,\,
-
\,\,\quad
\raisebox{-10pt}{\scalebox{0.55}{
\begin{tikzpicture}[thick]
	\draw (0,0) arc (110:70:4cm);
	\draw (0,0.025) arc (110:70:4cm);
	\draw (0,0.05) arc (110:70:4cm);
	\draw (0,0.075) arc (110:70:4cm);
	\draw[streepjes] (0,0) to (0.35,2);
	\node at (0.45,2.5) {\huge $a$};
\end{tikzpicture}}}
\quad
=
\quad
\raisebox{-10pt}{\scalebox{0.55}{
\begin{tikzpicture}[thick]
	\draw (0,0) arc (110:70:4cm);
	\draw (0,0.025) arc (110:70:4cm);
	\draw (0,0.05) arc (110:70:4cm);
	\draw (0,0.075) arc (110:70:4cm);
	\draw[edge] (1.325,0.3) to (1.325,2);
	\node at (1.325,2.5) {\huge $[D,a]$};
\end{tikzpicture}}}
\quad
,
\end{align}
and is as such referred to as the \textit{Ward identity}.
The examples in the next subsection will serve to explain the power of this diagrammatic notation.
\subsubsection{Expansions for arbitrary brackets}
We recall that, whenever $V\in \mB(\H)$ and $f\in\E_s^{1/2}$, we have
	\begin{align}\label{eq:spectraction brackets}
		\Tr(f(D+V)-f(D))=\sum_{n=1}^\infty\frac{1}{n}\br{V,\ldots,V}.
	\end{align}
We can generalize the spectral action by taking the right-hand side of \eqref{eq:spectraction brackets} as our starting point, and replacing the bracket $\br{\cdot}$ by a more abstract version, denoted $\brr{\cdot}$. The following proposition shows that the algebraic results of Sections \ref{sct:CC and UF underlying the SA} and \ref{sct:main thm} continue to hold in this general situation.
\begin{prop}\label{prop:expansions for arbitrary brackets}
	Let $\brr{\cdot}$ denote a collection of functions $\mB(\H)^{\times n}\to\R$, $n\in\N$, satisfing
	\begin{enumerate}[label=\textnormal{(\Roman*)}]
	\item $\brr{V_1,\ldots,V_n}~=~\brr{V_n,V_1,\ldots,V_{n-1}},$\label{cyclicity general}
	\item $\brr{aV_1,\ldots,V_j,\ldots,V_n}-\brr{V_1,\ldots,V_na}~=~\brr{V_1,\ldots,V_n,[D,a]}$\label{commutation general}
\end{enumerate}		
	 (cf. Lemma \ref{cycl bracket}). Define 
	\begin{align*} 
	 \boldsymbol\phi_n(a_0,\ldots,a_n):=~\brr{a_0[D,a_1],[D,a_2],\ldots,[D,a_n]},\qquad
	 \boldsymbol\psi_n:=\boldsymbol\phi_n-\tfrac{1}{2}B_0\boldsymbol\phi_{n+1},
	\end{align*}	 
	and $\tilde{\boldsymbol{\psi}}_{2k-1}:=(-1)^{k-1}\tfrac{(k-1)!}{(2k-1)!}\boldsymbol\psi_{2k-1}$. Then $(\tilde{\boldsymbol{\psi}}_1,\tilde{\boldsymbol{\psi}}_3,\ldots)$ and $(\boldsymbol\phi_2,\boldsymbol\phi_4,\ldots)$ are $(b,B)$-cocycles. Moreover, for $A\in\Omega^1(\A)$, $V=\pi_D(A)$, we asymptotically have
	\begin{align*}
		\sum_{n=1}^\infty\frac{1}{n}\!\brr{V,\ldots,V}~\sim\,\sum_{k=1}^\infty\left(\int_{\boldsymbol\psi_{2k-1}}\cs_{2k-1}(A)+\frac{1}{2k}\int_{\boldsymbol\phi_{2k}}F^k\right),
	\end{align*}
	in the sense that, for every $K\in\N$, there exist forms $\omega_l\in\Omega^l(\A)$ for $l=K+1,\ldots,2K+1$ such that
	\begin{align*}
		\sum_{n=1}^K\frac{1}{n}\!\brr{V,\ldots,V}&-\sum_{k=1}^K\left(\int_{\boldsymbol\psi_{2k-1}}\cs_{2k-1}(A)+\frac{1}{2k}\int_{\boldsymbol\phi_{2k}}F^k\right)
		=\sum_{l=K+1}^{2K+1}\int_{\boldsymbol\phi_l}\omega_l.
	\end{align*}
	
\end{prop}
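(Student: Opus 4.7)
The strategy is to observe that hypotheses (I) and (II) are precisely the two properties of the concrete bracket $\br{\cdot}$ established in Lemma \ref{cycl bracket}, and that every \emph{algebraic} step carried out in Sections \ref{sct:CC and UF underlying the SA} and \ref{sct:main thm} uses only those two properties (the remaining work in those sections was analytic, bounding multiple operator integrals, which is not needed here because the expansion is asymptotic/truncated and we take $\sum \brr{V,\ldots,V}/n$ as our starting point rather than $\Tr(f(D+V)-f(D))$). I would therefore trace through those sections and check that each identity continues to hold verbatim with $\brr{\cdot}$ in place of $\br{\cdot}$ and $\boldsymbol\phi,\boldsymbol\psi$ in place of $\phi,\psi$.

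For the $(b,B)$-cocycle assertion, the proofs of Lemmas \ref{lem:b} and \ref{lem:c} rely only on (I), (II), and the definition of $\boldsymbol\phi_n$. They carry over immediately to give $b\boldsymbol\phi_{2k-1}=\boldsymbol\phi_{2k}$, $b\boldsymbol\phi_{2k}=0$ and $bB_0\boldsymbol\phi_{2k}=2\boldsymbol\phi_{2k}-B_0\boldsymbol\phi_{2k+1}$. The combinatorial assembly of Proposition \ref{prop:bB} then delivers $B\boldsymbol\phi_{2k}=b\boldsymbol\phi_{2k}=0$ and $b\tilde{\boldsymbol{\psi}}_{2k-1}+B\tilde{\boldsymbol{\psi}}_{2k+1}=0$ with $B\tilde{\boldsymbol{\psi}}_1=0$.

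For the expansion, Corollary \ref{cor:2x2 matrix} is a purely algebraic identity whose derivation invokes only (II) together with $[D,ab]=a[D,b]+[D,a]b$. It therefore yields, under our abstract hypotheses,
\begin{align*}
\brr{V,\ldots,V}
=\int_{\boldsymbol\phi}\begin{pmatrix}A&0\end{pmatrix}\!\begin{pmatrix}A+dA&-A\\ dA&-A\end{pmatrix}^{\!n-1}\!\begin{pmatrix}1\\0\end{pmatrix}.
\end{align*}
Repeating the $2\times 2$ matrix manipulation of the proof of Proposition \ref{prop:k} (splitting into $\alpha A+\beta dA$ and using only the scalar identities satisfied by $\alpha,\beta$) one obtains
\begin{align*}
\sum_{n=1}^{K}\tfrac{1}{n}\brr{V,\ldots,V}-\sum_{k=1}^{K}\int_{\boldsymbol\phi}\!\Big(\cs_{2k-1}(A)+\!\int_0^1\!AF_t^{k-1}tA\,dt\Big)
=-\!\!\sum_{(v,w,p)\in T_K}\!\!\tfrac{1}{2|v|+|w|+p+1}\int_{\boldsymbol\phi}\omega_{v,w,p},
\end{align*}
where each $\omega_{v,w,p}$ has degree $2|v|+|w|+p+1\in[K+1,2K+1]$ by the definition \eqref{eq:T_K} of $T_K$. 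Note that the remainder term $\Tr(T^{D+V,D,\ldots,D}_{f^{[K+1]}}(V,\ldots,V))$ which appeared in Proposition \ref{prop:k} is absent here, since it arose from truncating a Taylor expansion rather than from rearranging brackets.

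Finally, Proposition \ref{prop:cycl} and Corollary \ref{cor:cycl} use only (I) and the Leibniz rule, hence apply to $\brr{\cdot}$ as well. Feeding them into Lemmas \ref{lem:een}, \ref{lem:een b} and \ref{lem:twee} one rewrites $\int_{\boldsymbol\phi}(\cs_{2k-1}(A)+\int_0^1AF_t^{k-1}tA\,dt)$ as $\int_{\boldsymbol\psi_{2k-1}}\cs_{2k-1}(A)+\tfrac{1}{2k}\int_{\boldsymbol\phi_{2k}}F^k$, yielding the claimed truncated expansion with $\omega_l$ being the sum of all residual $\omega_{v,w,p}$ of total degree $l$. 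No analytic obstacle arises, since everything is an algebraic identity between universal forms paired with finite multilinear functionals. The only thing one really needs to verify carefully is the degree bookkeeping, namely that the leftover terms all lie in degrees $K+1,\ldots,2K+1$, but this is immediate from the definition of $T_K$.
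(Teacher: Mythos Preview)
Your proposal is correct and follows essentially the same approach as the paper's own proof, which simply says to trace through the proofs of Lemmas \ref{lem:b}, \ref{lem:c}, Proposition \ref{prop:bB} for the cocycle statement, and through Proposition \ref{prop:k} and Theorem \ref{thm:asymptotic expansion} for the expansion, noting that only (I) and (II) are used. Your additional remark that the Taylor remainder term $\Tr(T^{D+V,D,\ldots,D}_{f^{[K+1]}}(V,\ldots,V))$ drops out is exactly the right observation.

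One small slip: the degree of $\omega_{v,w,p}=A\,A^{2v_1}(dA)^{w_1}\cdots A^{2v_m}(dA)^{w_m}A^{p}$ is $1+2|v|+2|w|+p$, not $2|v|+|w|+p+1$ (the latter is the order $n$ of the bracket it came from, not the form degree). Using the correct expression together with the constraints $2|v|+|w|+p\geq K$ and $|v|+|w|+\lfloor p/2\rfloor\leq K-1$ one still gets the range $K+1\leq l\leq 2K+1$ (in fact $\leq 2K$), so your conclusion stands.
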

\begin{proof}
	The first statement follows by following step by step the arguments in \textsection\ref{sct:Cyclic cocycles associated to multiple operator integrals}, namely the proofs of Lemmas \ref{lem:b} and \ref{lem:c} and Proposition \ref{prop:bB}. The second statement follows by carefully walking through the proofs of Proposition \ref{prop:k} and Theorem \ref{thm:asymptotic expansion}.
\end{proof}

The above proposition realizes explicitly that any bracket satisfying \ref{cyclicity} and \ref{commutation} defines cyclic cocycles that, when evaluated at the respective Chern--Simons and Yang--Mills forms, give an asymptotic expansion of the spectral action as in Theorem \ref{thm:asymptotic expansion}. Because the properties \ref{cyclicity} and \ref{commutation} can be expressed diagrammatically, we conclude that there exists diagrammatic proofs of the algebraic results of Sections \ref{sct:CC and UF underlying the SA} and \ref{sct:main thm}.

To illustrate, let us give the relevant lower order computations.
The cyclic cocycles are expressed in terms of diagrams as
\begin{align}
  \label{eq:bracket-cochain}
  \int_{\phi_n} a^0 da^1 \cdots da^n 
  &=
  \raisebox{-41pt}{\scalebox{0.55}{
\begin{tikzpicture}[thick]
	\draw[edge] (0,2) to (2,2);
	\draw[edge] (1,0) to (2,2);
	\draw[edge] (1,4) to (2,2);
	\draw[edge] (3,4) to (2,2);
	\draw[edge] (4,2) to (2,2);
	\ncvertex{2,2}
	\draw[line width=2pt, line cap=round, dash pattern=on 0pt off 3\pgflinewidth] (2.5,-0.5) arc (-85:-15:2cm);
	\node at (-1.4,2) {\huge $a^0[D,a^1]$};
	\node at (0.5,4.4) {\huge $[D,a^2]$};
	\node at (3.5,4.4) {\huge $[D,a^3]$};
	\node at (5.1,2) {\huge $[D,a^4]$};
	\node at (0.7,-0.4) {\huge $[D,a^n]$};
\end{tikzpicture}}}.
\end{align}

\noindent For one external edge we find, writing $A=\sum_j a_jdb_j$ and suppressing summation over $j$,
\begin{align}
\br{V} =   \br{a_j [D,b_j]} &= 
\,\,
\raisebox{-4pt}{\scalebox{.55}{
\begin{tikzpicture}[thick]
	\draw[edge] (1,0) to (3,0);
	\ncvertex{3,0}
	\node at (0.5,0.6) {\huge $a_j[D,b_j]$};
\end{tikzpicture}}}
\quad
= 
\int_{\phi_1} A.
\end{align}
For two external edges, we apply the Ward identity \eqref{eq:ward} and derive
\begin{align*}
  \br{V,V}&=
\quad
\raisebox{-5pt}{\scalebox{0.55}{
\begin{tikzpicture}[thick]
	\draw[edge] (1,0) to (3,0);
	\draw[streepjes] (3.45,0) to (5,2);
	\draw[edge] (3,0) to (5,0);
	\ncvertex{3,0}
	\node at (-0.5,0) {\huge $a_j[D,b_j]$};
	\node at (6,0) {\huge $[D,b_{j'}]$};
	\node at (5.6,2.1) {\huge $a_{j'}$};
\end{tikzpicture}}}  
  \\[1mm]
  &=
  \quad
\raisebox{-5pt}{\scalebox{0.55}{
\begin{tikzpicture}[thick]
	\draw[edge] (1,0) to (3,0);
	\draw[streepjes] (2.55,0) to (1,2);
	\draw[edge] (3,0) to (5,0);
	\ncvertex{3,0}
	\node at (-0.5,0) {\huge $a_j[D,b_j]$};
	\node at (6,0) {\huge $[D,b_{j'}]$};
	\node at (0.4,2.1) {\huge $a_{j'}$};
\end{tikzpicture}}}
\quad
+
\quad
\raisebox{-5pt}{\scalebox{0.55}{
\begin{tikzpicture}[thick]
	\draw[edge] (1,0) to (3,0);
	\draw[edge] (3,0) to (3,1.6);
	\draw[edge] (3,0) to (5,0);
	\ncvertex{3,0}
	\node at (-0.5,0) {\huge $a_j[D,b_j]$};
	\node at (6,0) {\huge $[D,b_{j'}]$};
	\node at (3,2.1) {\huge $[D,a_{j'}]$};
\end{tikzpicture}}}
\\[4mm]
  &= \int_{\phi_2} A^2 
  + \int_{\phi_3} A d A.
\end{align*}

\subsubsection{The propagator}           
An important part of the quantization process introduced here is to find a mathematical formulation for the propagator. In other words, we need to introduce more general diagrams than the one-vertex diagram in \eqref{eq:bracket}, and assign each an amplitude. As usual in quantum field theory, the amplitudes depend on a cutoff $N$ and are possibly divergent as $N\to\infty$. 

What we will call a \textit{noncommutative Feynman diagram} (or, for brevity, a diagram) is a Feynman diagram in which every internal vertex $v$ is decorated with a cyclic order on the edges incident to $v$. These decorated vertices are what we call the noncommutative vertices, and are denoted as in \eqref{eq:bracket}. 
The edges of a diagram are always drawn as wavy lines. They are sometimes called gauge edges to distinguish them from any dashed lines in the diagram, which do not represent physical particles, but are simply notation. The \textit{loop order} is defined to be $L:=1-V+E$, where $V$ is the amount of (noncommutative) vertices and $E$ is the amount of (gauge) edges. We also say the noncommutative Feynman diagram is $L$-loop, e.g., the noncommutative Feynman diagram in \eqref{eq:bracket} is zero-loop. When the respective multigraph is planar, $L$ corresponds to the number of internal faces. Following physics terminology, these faces are referred to as \textit{loops}.
As usual for Feynman diagrams, the external edges are marked, say by the numbers $1,\ldots,n$.


Note that, by our definition, a noncommutative Feynman diagram is almost the same as a ribbon graph, the sole difference being that ribbons are sensitive to twisting, whereas our edges are not.

Each nontrivial noncommutative Feynman diagram will be assigned an \textit{amplitude}, as follows. Here \textit{nontrivial} means that every connected component contains at least one vertex with nonzero degree.

\begin{defi}\label{def:Propagator}
Let $N\in\N$ and let $f\in C^\infty$ satisfy $(f')^{[1]}(\lambda_i,\lambda_j)>0$ for $i,j\leq N$. Given a nontrivial $n$-point noncommutative Feynman diagram $G$ with external vertices marked by $1,\ldots,n$, its \textbf{amplitude} at level $N\in\N$ on the gauge fields $V_1,\ldots,V_n\in\cup_K H_K$ is denoted $\Gamma_N^G(V_1,\ldots,V_n)$, and is defined recursively as follows. When $G$ has precisely one vertex and the markings $1,\ldots,n$ respect its cyclic order, we set $\Gamma_N^G(V_1,\ldots,V_n):=\br{V_1,\ldots,V_n}$. Suppose the amplitudes of diagrams $G_1$ and $G_2$ with external edges $1,\ldots,n$ and $n+1,\ldots,m$ are defined. Then to the disjoint union $G$ of the diagrams we assign the amplitude
\begin{align*}
	\Gamma_N^{G}(V_1,\ldots,V_m):=\Gamma_N^{G_1}(V_1,\ldots,V_n)\Gamma_N^{G_2}(V_{n+1},\ldots,V_m).
\end{align*}
 Suppose the amplitude of a diagram $G$ is defined. Then, for any two distinct numbers $i,j\in\{1,\ldots,n\}$, let $G'$ be the diagram obtained from $G$ by connecting the two external edges $i$ and $j$ by a gauge edge (a propagator). We then define the amplitude of $G'$ as
\begin{align*}
	\Gamma_N^{G'}(V_1,\ldots,\widehat{V_i},\ldots,\widehat{V_j},\ldots,V_{n}):=-\frac{\int_{H_N}\Gamma_N^G(V_1,\ldots,\overset{i}{Q},\ldots,\overset{j}{Q},\ldots,V_{n})e^{-\tfrac12\br{Q,Q}}dQ}{\int_{H_N}e^{-\tfrac12\br{Q,Q}}dQ}.
\end{align*}
\end{defi}

Well-definedness is a straightforward consequence of Fubini's theorem. Note that, in general, $\Gamma^G_N$ is not cyclic in its arguments, as was the case in \eqref{eq:bracket}.

\begin{figure}[h!]
\hspace{30pt}
\scalebox{0.55}{
\hspace{24pt}
\begin{subfigure}[t]{0.4\textwidth}
	\image{0.75}{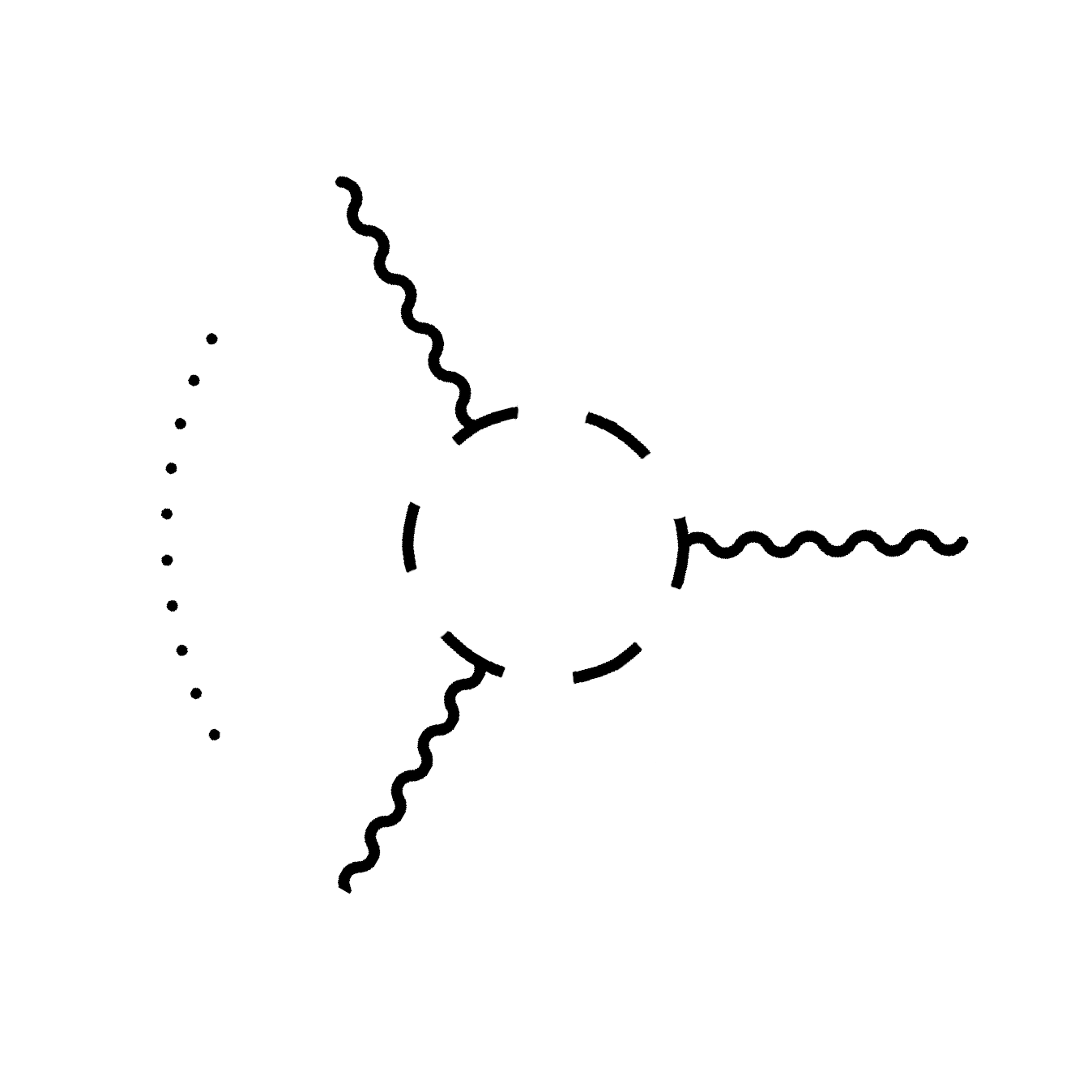}
	\put(-104,116){\Large $V_n$}
	\put(-104,11){\Large $V_1$}
	\put(-10,63){\Large $Q$}
	\put(-74,62){\Large $G_1$}
\end{subfigure}
\begin{subfigure}[t]{0.4\textwidth}
	\image{0.75}{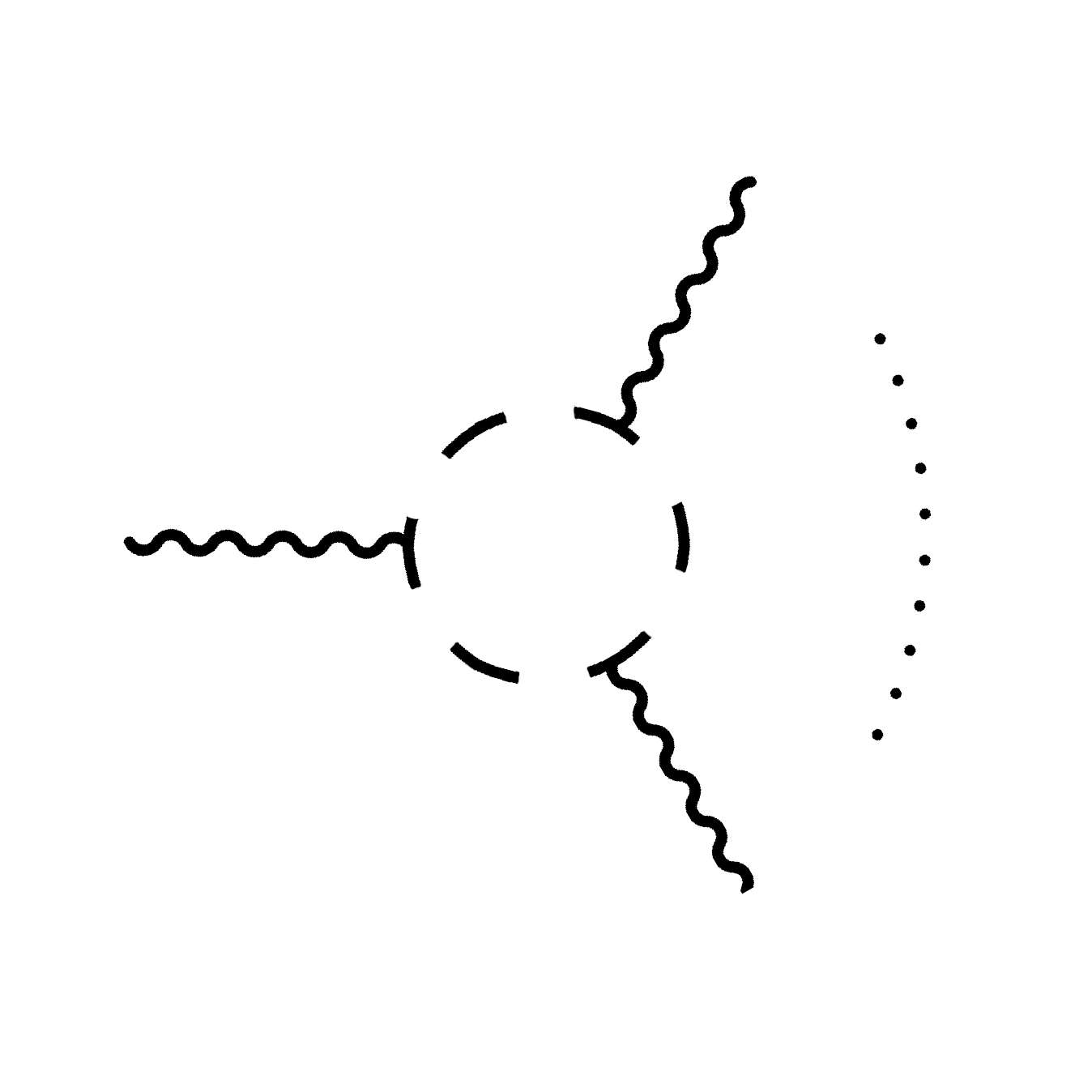}
	\put(-37,114){\Large $V_{n+1}$}
	\put(-37,12){\Large $V_m$}
	\put(-132,63){\Large $Q$}
	\put(-74,62){\Large $G_2$}
\end{subfigure}
\begin{subfigure}[t]{\textwidth}
	\image{0.55}{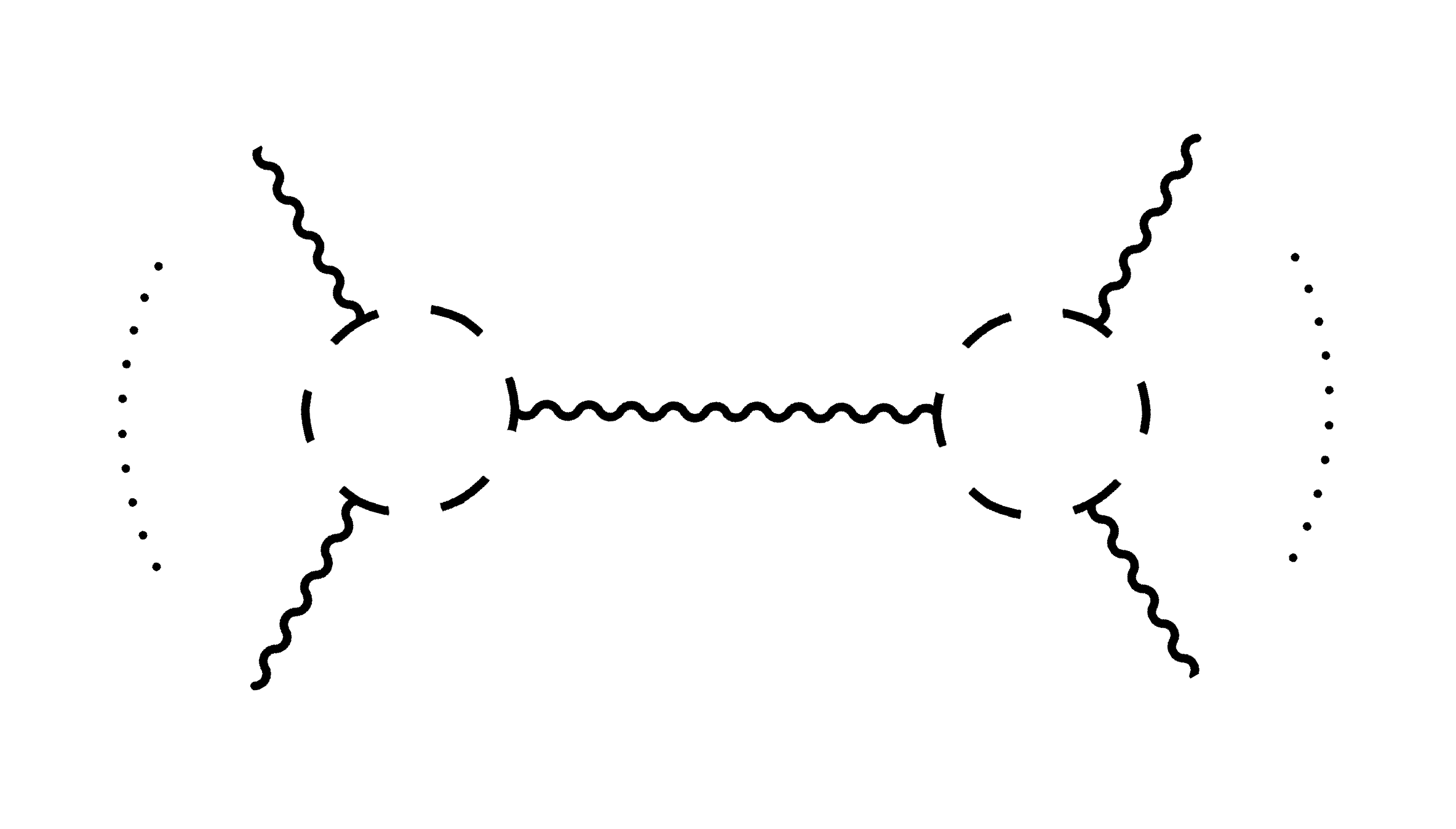}
	\put(-214,118){\Large $V_n$}
	\put(-215,11){\Large $V_1$}
	\put(-180,65){\Large $G_1$}
	\put(-37,117){\Large $V_{n+1}$}
	\put(-37,14){\Large $V_m$}
	\put(-76,65){\Large $G_2$}
\end{subfigure}
}
\caption{Constructing the propagator.}
\end{figure}

The assumption that $(f')^{[1]}(\lambda_i,\lambda_j)>0$ for $i,j\leq N$ can be accomplished by allowing $f$ to be unbounded, and replacing the spectral action
$$\Tr(f(D))$$
with the regularized version
$$\Tr(f_N(D))$$
where $f_N:=f\Phi_N$ for a sequence of bump functions $\Phi_N$ ($N\in\N$) that are 1 on $\{\lambda_k:~k\leq N\}$. As quantization takes place on the finite level (for a finite $N$), it is natural to also regularize the classical action before we quantize.
 Because we can now easily require
 	$$(f_N')^{[1]}(\lambda_k,\lambda_l)=(f')^{[1]}(\lambda_k,\lambda_l)>0,$$ for all $k,l\leq N$, Definition \ref{def:Propagator} makes sense and can be studied by Gaussian integration as in  \cite[Section 2]{BIZ80}. 

\subsection{Loop corrections to the spectral action}
 	
	To obtain the propagator, we have chosen the approach of random noncommutative geometries (as done in \cite{AK19,KP20}, see \cite{BG16,GS20} for computer simulations) in the sense that the integrated space in Definition \ref{def:Propagator} is the whole of $H_N$. Other approaches are conceivable by replacing $H_N$ by a subspace of gauge fields particular to the gauge theory under consideration, but this should also take into account gauge fixing, and will quickly become very involved.

 	In our case, the propagator becomes quite simple, and can be explicitly expressed by the following result.

\begin{lem}\label{lem:propagator}
Let $f\in C^\infty$ satisfy $(f')^{[1]}(\lambda_k,\lambda_l)>0$ for $k,l\leq N$. For $k,l,m,n\in\{1,\ldots,N\}$, we have
$$
\frac{ \int_{H_N} Q_{kl}  Q_{mn} e^{-\tfrac12\br{Q,Q}} dQ} {\int_{H_N} e^{-\tfrac12\br{Q,Q} } dQ} = 
\delta_{kn}\delta_{lm} G_{kl},
$$
in terms of $G_{kl} := \frac{1}{(f')^{[1]}(\lambda_k, \lambda_l)}$.
\end{lem}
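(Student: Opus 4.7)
The strategy is to reduce the claim to a standard Gaussian covariance computation by making the quadratic form $\langle Q,Q\rangle$ explicit in the real coordinates on $H_N$ supplied by the matrix elements $Q_{ij}$.

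First I would compute $\langle Q,Q\rangle$ for $Q\in H_N$ using the tools already in hand. By Definition~\ref{def:bracket} and cyclicity of the trace we have $\langle Q,Q\rangle=2\Tr(T^D_{f^{[2]}}(Q,Q))$, and by formula~\eqref{eq:SA divdiff} applied with $n=2$ (which is legitimate since $Q$ has finite rank, so no convergence issues arise) this equals
\begin{equation*}
\langle Q,Q\rangle=\sum_{k,l=1}^{N}(f')^{[1]}(\lambda_k,\lambda_l)\,Q_{kl}Q_{lk}.
\end{equation*}
Since $Q$ is self-adjoint we have $Q_{lk}=\overline{Q_{kl}}$, hence $Q_{kl}Q_{lk}=|Q_{kl}|^2$ and
\begin{equation*}
\tfrac12\langle Q,Q\rangle=\tfrac12\sum_{k}(f')^{[1]}(\lambda_k,\lambda_k)\,Q_{kk}^2+\sum_{k<l}(f')^{[1]}(\lambda_k,\lambda_l)\bigl(\Re(Q_{kl})^2+\Im(Q_{kl})^2\bigr),
\end{equation*}
using the symmetry of the divided difference to combine the $(k,l)$ and $(l,k)$ off-diagonal terms. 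This is a positive-definite quadratic form by the assumption $(f')^{[1]}(\lambda_k,\lambda_l)>0$, and crucially it is diagonal in the real coordinates $\{Q_{kk}\}_{k}\cup\{\Re Q_{kl},\Im Q_{kl}\}_{k<l}$, which are precisely the coordinates with respect to which the Lebesgue measure on $H_N$ was defined in the conventions.

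Next I would read off the covariances from this diagonal Gaussian. Standard one-dimensional Gaussian integration yields
\begin{equation*}
\langle Q_{kk}Q_{mm}\rangle=\delta_{km}\,G_{kk},\qquad \langle\Re Q_{kl}\,\Re Q_{mn}\rangle=\langle \Im Q_{kl}\,\Im Q_{mn}\rangle=\tfrac12\delta_{km}\delta_{ln}G_{kl}\ (k<l,\,m<n),
\end{equation*}
with all mixed real/imaginary covariances vanishing, where $G_{kl}=1/(f')^{[1]}(\lambda_k,\lambda_l)$ and the normalization by $\int_{H_N}e^{-\tfrac12\langle Q,Q\rangle}dQ$ is built in.

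Finally I would assemble $\langle Q_{kl}Q_{mn}\rangle$ from these building blocks by a short case analysis on whether the indices are diagonal or off-diagonal. Writing each off-diagonal $Q_{kl}$ as $\Re Q_{kl}\pm i\Im Q_{kl}$ (with the sign determined by whether $k<l$ or $k>l$), one finds that the expectation vanishes unless $\{k,l\}=\{m,n\}$ and the two factors are complex conjugates of each other, i.e.\ unless $k=n$ and $l=m$; in that case the real and imaginary contributions add to give $G_{kl}$. Combined with the diagonal case, this is exactly the factor $\delta_{kn}\delta_{lm}G_{kl}$ claimed.

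There is no serious obstacle here: once the quadratic form is diagonalized in the stated coordinates, the result is a textbook Gaussian moment identity. The only small point requiring care is the bookkeeping of the factor $\tfrac12$ arising from the off-diagonal coordinates versus the absence of it on the diagonal, which is exactly compensated by the fact that each off-diagonal $Q_{kl}$ with $k\neq l$ involves both $\Re Q_{kl}$ and $\Im Q_{kl}$.
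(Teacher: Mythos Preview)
Your proposal is correct and follows essentially the same approach as the paper: compute $\langle Q,Q\rangle$ via \eqref{eq:SA divdiff} to obtain a quadratic form diagonal in the real coordinates $\{Q_{kk}\}\cup\{\Re Q_{kl},\Im Q_{kl}\}_{k<l}$, then reduce to one-dimensional Gaussian integrals. The only organizational difference is that you first tabulate the covariances of the real coordinates and then assemble the complex expectation, whereas the paper expands $Q_{kl}Q_{mn}$ into real and imaginary parts directly and uses parity to kill the unwanted terms; these are equivalent bookkeepings of the same computation.
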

\begin{proof}
By \eqref{eq:SA divdiff} we have the finite sum
\begin{align*}
\br{Q,Q}=\sum_{k,l} (f')^{[1]}(\lambda_k,\lambda_l)\left((\Re (Q_{kl}))^2+(\Im(Q_{kl}))^2\right),
\end{align*}
for all $Q\in H_N$. Moreover, we have
\begin{align*}
\int_{H_N} Q_{kl}  Q_{mn} e^{-\frac 12 \br{Q,Q}} dQ =&  \int_{H_N} (\Re(Q_{kl})\Re(Q_{mn})-\Im(Q_{kl})\Im(Q_{mn}))e^{-\frac 12 \br{Q,Q}} dQ\\
&+i\int_{H_N}(\Re(Q_{kl})\Im(Q_{mn}) + \Im(Q_{kl})\Re(Q_{mn}))e^{-\frac 12 \br{Q,Q}} dQ.
\end{align*}
The second integral on the right-hand side vanishes because its integrand is an odd function in at least one of the coordinates of $H_N$. The same holds for the first integral whenever $\{k,l\}\neq\{m,n\}$. Otherwise, we use that $\Re(Q_{lk})=\Re(Q_{kl})$ and $\Im(Q_{lk})=-\Im(Q_{kl})$ and see that the two terms of the first integral cancel when $k=m$ and $l=n$. When $k=n\neq l=m$, we instead find that these terms give the same result when integrated. By using symmetry of $(f')^{[1]}$ and integrating out all trivial coordinates, we obtain
\begin{align*}
\frac{\int_{H_N} Q_{kl}  Q_{mn} e^{-\frac 12 \br{Q,Q}} dQ}{\int_{H_N} e^{-\frac 12 \br{Q,Q}} dQ} =&  \delta_{kn}\delta_{lm}\frac{2\int_\R (\Re(Q_{kl}))^2 e^{-(f')^{[1]}(\lambda_k,\lambda_l)(\Re(Q_{kl}))^2}d\Re(Q_{kl})}{\int_\R e^{-(f')^{[1]}(\lambda_k,\lambda_l)(\Re(Q_{kl}))^2}d\Re(Q_{kl})},
\end{align*}
a Gaussian integral that gives the $G_{kl}$ required by the lemma. When $k=l=n=m$, the result follows similarly.
\end{proof}

The above lemma allows us to leave out all integrals from the subsequent computations. In place of those integrals, we use the following notation.
\begin{defi}\label{def:propagator}
We  define, with slight abuse of notation,
\begin{align*}
	\wick{\c Q_{kl} \c Q_{mn} }:=\delta_{kn}\delta_{lm} G_{kl},
\end{align*}
and refer to $G_{kl}$ as the \textit{propagator}.
\end{defi}

As an example and to fix terminology, we will now compute the amplitudes of the three most basic one-loop diagrams with two external edges.
These are given in Figure \ref{fig:1loop-2pt}.
Using Lemma \ref{lem:propagator} and Definition \ref{def:propagator}, we find the amplitude for the first diagram to be
\begin{align}
  \label{eq:2ptA}
 \hspace{-3pt}\raisebox{-10pt}{\scalebox{0.55}{
\begin{tikzpicture}[thick]
	\draw[edge] (0.7,0) to (1.8,0);
	\draw[edge] (1.8,0) to (2.9,0);
	\draw[edge] (3.1,0) to (4,0);
	\draw[edge] (1.8,0) arc (0:180:-0.6cm);
	\draw[edge] (3,0) arc (180:0:0.6cm);
	\draw[edge] (4,0) to (5.3,0);
	\ncvertex{2,0}
	\ncvertex{4,0}
	\node at (0.3,0) {\huge $V_1$};
	\node at (5.7,0) {\huge $V_2$};
\end{tikzpicture}}}
 &= \sum_{\begin{smallmatrix} i,j,k,l, \\ m,n\leq N\end{smallmatrix}} (f')^{[2]}(\lambda_i,\lambda_j,\lambda_k)(V_1)_{ij} \wick{ \c1 Q_{jk} \c2 Q_{ki} (f')^{[2]}(\lambda_l,\lambda_m,\lambda_n)(V_2)_{lm} \c1 Q_{mn} \c2 Q_{nl} }  \nonumber      \nonumber \\
 &  = \sum_{i,k\leq N} (f')^{[2]}(\lambda_i, \lambda_i,\lambda_k)(f')^{[2]}(\lambda_i, \lambda_k,\lambda_k)(V_1)_{ii} (V_2)_{kk} (G_{ik})^2  .
  \end{align}
As $V_1$ and $V_2$ are assumed of finite rank, the above expression converges as $N\to\infty$. To see this explicitly, let $K$ be such that $V_1,V_2\in H_K$, and let $G$ be the diagram on the left-hand side of \eqref{eq:2ptA}. We then obtain
\begin{align}\label{eq:irrelevant diagram}
	\lim_{N\to\infty}\Gamma^G_N(V_1,V_2)=\sum_{i,k\leq K}(f')^{[2]}(\lambda_i,\lambda_i,\lambda_k)(f')^{[2]}(\lambda_i,\lambda_k,\lambda_k)(V_1)_{ii}(V_2)_{kk}(G_{ik})^2,
\end{align}
a finite number. In general we can say that if all summed indices of an amplitude occur in a matrix element of any of the perturbations (e.g., $(V_1)_{ii}$ and $(V_2)_{kk}$) then the amplitude remains finite even when the size $N$ of the random matrices $Q$ is sent to $\infty$. In physics terminology, the first diagram in Figure \ref{fig:1loop-2pt} is \textit{irrelevant}, and can be disregarded for renormalization purposes.

We then turn to the second diagram in Figure \ref{fig:1loop-2pt}, and compute
\begin{align}
\raisebox{-8pt}{\scalebox{0.55}{
\begin{tikzpicture}[thick]
	\draw[edge] (1,0) to (2.1,0);
	\draw[edge] (2.1,0) to[out=90,in=90] (3.9,0);
	\draw[edge] (2.1,0) to[out=-90,in=-90] (3.9,0);
	\draw[edge] (3.9,0) to (5,0);
	\ncvertex{2,0}
	\ncvertex{4,0}
	\node at (0.5,0) {\huge $V_1$};
	\node at (5.5,0) {\huge $V_2$};
\end{tikzpicture}}}
&= \sum_{\begin{smallmatrix} i,j,k,l, \\ m,n\leq N\end{smallmatrix}} (f')^{[2]}(\lambda_i,\lambda_j,\lambda_k) (V_1)_{ij} \wick{ \c1 Q_{jk} \c2 Q_{ki} (f')^{[2]}(\lambda_l,\lambda_m,\lambda_n) (V_2)_{lm} \c2 Q_{mn} \c1 Q_{nl} } \nonumber \\
 &= \sum_{i,j,k\leq N} (f')^{[2]}(\lambda_i, \lambda_j,\lambda_k)^2 (V_1)_{ij} (V_2)_{ji} G_{ik}G_{kj} .
\label{eq:vertexcontr}
\end{align}
This diagram is planar, and the indices $i,j,k$ correspond to regions in the plane, assuming the external edges are regarded to stretch out to infinity. The index $k$ corresponds to the region within the loop, and is called a \textit{running loop index}. As the index $k$ is not restricted by $V_1$ and $V_2$ as in \eqref{eq:2ptA}, we find that in general the amplitude \eqref{eq:vertexcontr} diverges as $N \to \infty$.
In physical terms, this is a \textit{relevant} diagram.

The amplitude of the final diagram becomes
\begin{align}
 \raisebox{-15pt}{\scalebox{0.55}{ 
\begin{tikzpicture}[thick]
	\draw[edge] (0,-1) to (2,0);
	\draw[edge] (4,-1) to (2,0);
	\draw[edge] (1.9,0) arc (-90:270:0.8cm);
	\ncvertex{2,0}
	\node at (-.4,-1) {\huge $V_1$};
	\node at (4.4,-1) {\huge $V_2$};
\end{tikzpicture}}}
\quad&= -\sum_{ i,j,k,l\leq N} (f')^{[3]}(\lambda_i,\lambda_j,\lambda_k,\lambda_l)(V_1)_{ij} \wick{ \c Q_{jk} \c Q_{kl} } (V_2)_{li}   \nonumber \\
 &= -\sum_{i,j,k\leq N} (f')^{[3]}(\lambda_i, \lambda_j,\lambda_j,\lambda_k)(V_1)_{ij} (V_2)_{ji} G_{jk} .
 \label{eq:vertexcontr2}
\end{align}
Again, this amplitude contains a running loop index and is therefore potentially divergent in the limit $N \to \infty$. 

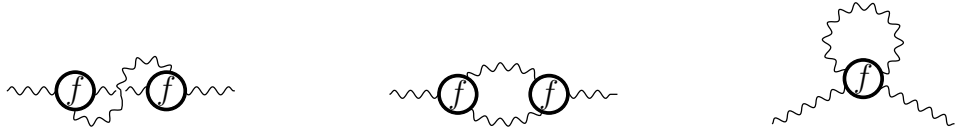
\begin{figure}
\hspace{24pt}
    \begin{tabular}{p{.3\linewidth}p{.3\linewidth}p{.3\linewidth}}
 \scalebox{.6}{
    \begin{tikzpicture}[thick]
	\draw[edge] (0.5,0) to (1.8,0);
	\draw[edge] (1.8,0) to (2.9,0);
	\draw[edge] (3.1,0) to (4.2,0);
	\draw[edge] (1.8,0) arc (0:180:-0.6cm);
	\draw[edge] (3,0) arc (180:0:0.6cm);
	\draw[edge] (4.2,0) to (5.5,0);
	\ncvertex{2,0}
	\ncvertex{4,0}
\end{tikzpicture}}
 &\scalebox{.6}{
    \begin{tikzpicture}[thick]
	\draw[edge] (0.5,0) to (2.1,0);
	\draw[edge] (2.1,0) to[out=90,in=90] (3.9,0);
	\draw[edge] (2.1,0) to[out=-90,in=-90] (3.9,0);
	\draw[edge] (3.9,0) to (5.5,0);
	\ncvertex{2,0}
	\ncvertex{4,0}
\end{tikzpicture}}
 &
 \scalebox{.6}{
 \begin{tikzpicture}[thick]
	\draw[edge] (0,-1) to (2,0);
	\draw[edge] (4,-1) to (2,0);
	\draw[edge] (1.9,0) arc (-90:270:0.8cm);
	\ncvertex{2,0}
\end{tikzpicture}}
 \end{tabular}
 
  \caption{Two-point diagrams with one loop. The first one is irrelevant, the second and third are relevant.}
  \label{fig:1loop-2pt}
\end{figure}

\subsubsection{One-loop counterterms to the spectral action}


Because we are interested in the behavior of the one-loop quantum effective spectral action as $N\to\infty$, we wish to consider only one-loop noncommutative Feynman diagrams whose amplitudes involve a running loop index. For example, the final two diagrams in Figure \ref{fig:1loop-2pt}, but not the first. 

As dictated by the background field method, in order to obtain a quantum effective action we should further restrict to one-particle-irreducible diagrams whose vertices have degree $\geq 3$.


Let us fix a one-loop one-particle-irreducible diagram $G$ in which all vertices have degree $\geq3$, and investigate whether the amplitude of $G$ contains a running loop index. Fix a noncommutative vertex $v$ in $G$. The vertex $v$ will have precisely two incident edges that belong to the loop of the diagram, and at least one external edge. Each index associated with $v$ is associated specifically with two incident edges of $v$. If one of these edges is external, the index will not run, because it will be fixed by the gauge field attached. A running index can only occur if the two incident loop edges of $v$ succeed one another, and the index is placed in between them. The latter of these two loop edges will attach to another noncommutative vertex, $w$, and the possibly running index will also be associated with the succeeding edge in $w$, which also has to be a loop edge if the index is to run. This process may continue throughout the loop until we end up at the original vertex $v$. By this argument, the amplitude of $G$ will contain a running loop index if and only if $G$ can be drawn in a plane with all noncommutative vertices oriented clockwise and all external edges extending outside the loop.

The wonderful conclusion is that the external edges of the relevant diagrams obtain a natural cyclic order. This presents us with a natural one-loop quantization of the bracket $\br{\cdot}$, and thus with a natural proposal for the one-loop quantization of the spectral action.

\begin{defi}\label{def:quantum effective SA}
	Let $N\in\N$ and let $f\in C^\infty$ satisfy $(f')^{[1]}(\lambda_i,\lambda_j)>0$ for $i,j\leq N$. We define 
	\begin{align*}
		\bbr{V_1,\ldots,V_n}:=\sum_G \Gamma_N^G(V_1,\ldots,V_n),
	\end{align*}	
	where the sum is over all planar one-loop one-particle-irreducible $n$-point noncommutative Feynman diagrams $G$ with clockwise vertices of degree $\geq3$ and external edges outside the loop and marked cyclically. The \textbf{one-loop quantum effective spectral action} is defined to be the formal series
	$$
\sum_{n=1}^\infty \frac{1}{n} \bbr{V,\ldots,V}.
     $$
\end{defi}
Directly from the definition of $\bbr{\cdot}$, we see that
\begin{align*}
	\bbr{V_2,\ldots,V_n,V_1}=\bbr{V_1,\ldots,V_n}.
\end{align*}
The next subsection serves to illustrate why an analogue of \eqref{eq:classical Ward} holds for $\bbr{\cdot}$ as well.

\subsubsection{Ward identity for the gauge propagator}
In addition to the Ward identity \eqref{eq:ward} for the noncommutative vertex, we claim that we also have the following Ward identity for the gauge edge:
\begin{align}
  \label{eq:ward-gauge}
\raisebox{-20pt}{\scalebox{0.55}{
\begin{tikzpicture}[thick]
	\draw (0,0) arc (-20:20:4cm);
	\draw (-0.025,0) arc (-20:20:4cm);
	\draw (-0.05,0) arc (-20:20:4cm);
	\draw (-0.075,0) arc (-20:20:4cm);
	\draw[edge] (0.15,1.4) to (2.95,1.4);
	\draw (3.1,0) arc (20:-20:-4cm);
	\draw (3.125,0) arc (20:-20:-4cm);
	\draw (3.15,0) arc (20:-20:-4cm);
	\draw (3.175,0) arc (20:-20:-4cm);
	\draw[streepjes] (2.9,1.4) to (2.3,2.9);
	\node at (2.1,3.2) {\huge $a$};
\end{tikzpicture}}}
\quad-\quad
\raisebox{-20pt}{\scalebox{0.55}{
\begin{tikzpicture}[thick]
	\draw (0,0) arc (-20:20:4cm);
	\draw (-0.025,0) arc (-20:20:4cm);
	\draw (-0.05,0) arc (-20:20:4cm);
	\draw (-0.075,0) arc (-20:20:4cm);
	\draw[edge] (0.15,1.4) to (2.95,1.4);
	\draw (3.1,0) arc (20:-20:-4cm);
	\draw (3.125,0) arc (20:-20:-4cm);
	\draw (3.15,0) arc (20:-20:-4cm);
	\draw (3.175,0) arc (20:-20:-4cm);
	\draw[streepjes] (0.2,1.4) to (0.75,2.9);
	\node at (0.9,3.2) {\huge $a$};
\end{tikzpicture}}}
\quad=\quad
\raisebox{-20pt}{\scalebox{0.55}{
\begin{tikzpicture}[thick]
	\draw (0,0) arc (-20:20:4cm);
	\draw (-0.025,0) arc (-20:20:4cm);
	\draw (-0.05,0) arc (-20:20:4cm);
	\draw (-0.075,0) arc (-20:20:4cm);
	\draw[edge] (0.15,1.4) to (1.5,1.4);
	\draw[edge] (1.5,1.4) to (2.95,1.4);
	\draw (3.1,0) arc (20:-20:-4cm);
	\draw (3.125,0) arc (20:-20:-4cm);
	\draw (3.15,0) arc (20:-20:-4cm);
	\draw (3.175,0) arc (20:-20:-4cm);
	\draw[edge] (1.5,1.4) to (1.5,2.8);
	\ncvertex{1.5,1.4}
	\node at (1.5,3.2) {\huge $[D,a]$};
\end{tikzpicture}}}
\end{align}
Indeed, the left-hand side yields terms
  \begin{align*}
    \sum_{m\leq N}\big(\wick{ \c Q_{ik} \c Q_{lm} a_{mn} }-   \wick{ a_{im} \c  Q_{mk}  \c Q_{ln}}\big)
    &=  \sum_{m\leq N}\big(G_{ik} \delta_{im} \delta_{kl} a_{mn} - G_{ln} \delta_{mn} \delta_{kl} a_{im}\big) \\
    &= ( G_{ik}- G_{nk} )\delta_{kl} a_{in},
   \end{align*}
   for arbitrary values of $i$, $k$, $l$, and $n$ determined by the rest of the diagram.
  The right-hand side, by the defining property of the divided difference, and because every internal edge adds a minus sign, yields the terms
   \begin{align*}
  &-\sum_{p,q,r\leq N}\wick{ \c Q_{ik} (f')^{[2]}(\lambda_p, \lambda_q, \lambda_r) \c Q_{pq}} [D,a]_{qr}\wick{ \c Q_{rp} \c  Q_{ln}} \\
    &\qquad= - \sum_{p,q,r\leq N}(f')^{[2]}(\lambda_p, \lambda_q, \lambda_r)(\lambda_q -\lambda_r)a_{qr}G_{ik} \delta_{iq} \delta_{kp} G_{rp} \delta_{rn} \delta_{pl}  \\
    &\qquad=   \left( (f')^{[1]}(\lambda_k, \lambda_n) -  (f')^{[1]}(\lambda_i, \lambda_k) \right)G_{ik} G_{nk}  \delta_{kl} a_{in}.
    \end{align*}
Because $G_{kl}=1/(f')^{[1]}(\lambda_k,\lambda_l)$ (see Lemma \ref{lem:propagator}) the two expressions coincide for every value of $i$, $k$, $l$, and $n$, thereby allowing us to apply the rule \eqref{eq:ward-gauge} whenever it comes up as part of a diagram. An example is below.

%

The Ward identity for the gauge propagator allows us to compute $\bbr{aV_1,\ldots,V_n}-\bbr{V_1,\ldots,V_na}$ diagrammatically.
For example, the contribution of the second diagram in Figure \ref{fig:1loop-2pt} to $\llangle a V_1, V_2\rrangle_N^{1L} - \llangle V_1, V_2 a \rrangle_N^{1L} $ is
\begin{align}\label{eq:example quantum Ward}
&\raisebox{-20pt}{\scalebox{.55}{
    \begin{tikzpicture}[thick]
	\draw[edge] (0.5,0) to (2.1,0);
	\draw[edge] (2.1,0) to[out=90,in=90] (3.9,0);
	\draw[edge] (2.1,0) to[out=-90,in=-90] (3.9,0);
	\draw[edge] (3.9,0) to (5.5,0);
	\draw[streepjes] (1.6,0) to (1,-1.5);
	\ncvertex{2,0}
	\ncvertex{4,0}
	\node at (0,0) {\huge $V_1$};
	\node at (6,0) {\huge $V_2$};
	\node at (0.9,-1.7) {\huge $a$};
\end{tikzpicture}}}
\quad
-
\raisebox{-20pt}{\scalebox{.55}{
    \begin{tikzpicture}[thick]
	\draw[edge] (0.5,0) to (2.1,0);
	\draw[edge] (2.1,0) to[out=90,in=90] (3.9,0);
	\draw[edge] (2.1,0) to[out=-90,in=-90] (3.9,0);
	\draw[edge] (3.9,0) to (5.5,0);
	\draw[streepjes] (4.4,0) to (5,-1.5);
	\ncvertex{2,0}
	\ncvertex{4,0}
	\node at (0,0) {\huge $V_1$};
	\node at (6,0) {\huge $V_2$};
	\node at (5.1,-1.7) {\huge $a$};
\end{tikzpicture}}}\\
&\raisebox{-20pt}{
\quad\vspace{20pt}\raisebox{-15pt}{
=
}
\raisebox{-35pt}{\scalebox{.55}{
    \begin{tikzpicture}[thick]
    {\color{red}
	\draw[edge] (0.5,0) to (2.1,0);
	\draw[edge] (2.1,0) to[out=90,in=90] (3.9,0);
	\draw[edge] (2.1,0) to[out=-90,in=-90] (3.9,0);
	\draw[edge] (3.9,0) to (5.5,0);
	\draw[edge] (2,0) to (1.2,-1.5);
	\ncvertex{2,0}
	\ncvertex{4,0}}
	\node at (0,0) {\huge $V_1$};
	\node at (6,0) {\huge $V_2$};
	\node at (1,-2) {\huge $[D,a]$};
\end{tikzpicture}}}
\,\,\raisebox{-15pt}{
+
}\,\,
\raisebox{-45pt}{\scalebox{.55}{
 \begin{tikzpicture}[thick]
 	{\color{blue}
	\draw[edge] (-.2,2.7) to (1,2.7);
	\draw[edge] (2,-0.1) to (2,1);
	\draw[edge] (2,1) to (1,2.7);
	\draw[edge] (1,2.7) to (3,2.7);
	\draw[edge] (3,2.7) to (2,1);
	\draw[edge] (3,2.7) to (4.2,2.7);
	\ncvertex{1,2.7}
	\ncvertex{2,1}
	\ncvertex{3,2.7}}
	\node at (-0.7,2.7) {\huge $V_1$};
	\node at (4.7,2.7) {\huge $V_2$};
	\node at (2,-0.5) {\huge $[D,a]$};
\end{tikzpicture}}}
\,\,\raisebox{-15pt}{
+
}\,\,
\raisebox{-35pt}{\scalebox{.55}{
    \begin{tikzpicture}[thick]
    {\color{red}
	\draw[edge] (0.5,0) to (2.1,0);
	\draw[edge] (2.1,0) to[out=90,in=90] (3.9,0);
	\draw[edge] (2.1,0) to[out=-90,in=-90] (3.9,0);
	\draw[edge] (3.9,0) to (5.5,0);
	\draw[edge] (4,0) to (4.8,-1.5);
	\ncvertex{2,0}
	\ncvertex{4,0}}
	\node at (0,0) {\huge $V_1$};
	\node at (6,0) {\huge $V_2$};
	\node at (5,-2) {\huge $[D,a]$};
\end{tikzpicture}}}
\!\!.
}\nonumber
\end{align}

The third two-point diagram in Figure \ref{fig:1loop-2pt} has two possible markings of its external lines. The respective contributions to $\llangle a V_1, V_2\rrangle_N^{1L} - \llangle V_1, V_2 a \rrangle_N^{1L} $ are
\begin{align*}
\raisebox{-20pt}{\scalebox{0.55}{
\begin{tikzpicture}[thick]
	\draw[edge] (0,-1) to (2,0);
	\draw[edge] (4,-1) to (2,0);
	\draw[edge] (1.9,0) arc (-90:270:0.8cm);
	\draw[streepjes] (1.6,-0.2) to (1.8,-1.8);
	\ncvertex{2,0}
	\node at (-0.4,-1.4) {\huge $V_1$};
	\node at (4.4,-1.4) {\huge $V_2$};
	\node at (2,-2) {\huge $a$};
\end{tikzpicture}}}
\quad 
- \quad
\raisebox{-20pt}{\scalebox{0.55}{
\begin{tikzpicture}[thick]
	\draw[edge] (0,-1) to (2,0);
	\draw[edge] (4,-1) to (2,0);
	\draw[edge] (1.9,0) arc (-90:270:0.8cm);
	\draw[streepjes] (2.4,-0.2) to (2.2,-1.8);
	\ncvertex{2,0}
	\node at (-0.4,-1.4) {\huge $V_1$};
	\node at (4.4,-1.4) {\huge $V_2$};
	\node at (2,-2) {\huge $a$};
\end{tikzpicture}}}
\quad 
= 
\quad
\raisebox{-20pt}{\scalebox{.55}{
\begin{tikzpicture}[thick]
	{\color{green}
	\draw[edge] (0,-1) to (2,0);
	\draw[edge] (4,-1) to (2,0);
	\draw[edge] (2,-1.6) to (2,0);
	\draw[edge] (1.9,0) arc (-90:270:0.8cm);
	\ncvertex{2,0}}
	\node at (-0.4,-1.4) {\huge $V_1$};
	\node at (4.4,-1.4) {\huge $V_2$};
	\node at (2,-2) {\huge $[D,a]$};
\end{tikzpicture}}}
\end{align*}
and
\begin{align*}
&\raisebox{-20pt}{\scalebox{0.55}{
\begin{tikzpicture}[thick]
	\draw[edge] (0,-1) to (2,0);
	\draw[edge] (4,-1) to (2,0);
	\draw[edge] (1.9,0) arc (-90:270:0.8cm);
	\draw[streepjes] (2.4,-0.2) to (4,0.5);
	\ncvertex{2,0}
	\node at (-0.4,-1.4) {\huge $V_2$};
	\node at (4.4,-1.4) {\huge $V_1$};
	\node at (4.4,0.6) {\huge $a$};
\end{tikzpicture}}}
\quad 
-
\quad
\raisebox{-20pt}{\scalebox{0.55}{
\begin{tikzpicture}[thick]
	\draw[edge] (0,-1) to (2,0);
	\draw[edge] (4,-1) to (2,0);
	\draw[edge] (1.9,0) arc (-90:270:0.8cm);
	\draw[streepjes] (1.6,-0.2) to (0,0.5);
	\ncvertex{2,0}
	\node at (-0.4,-1.4) {\huge $V_2$};
	\node at (4.4,-1.4) {\huge $V_1$};
	\node at (-0.4,0.6) {\huge $a$};
\end{tikzpicture}}}
\\[5pt]
&\quad
= 
\quad
\raisebox{-20pt}{\scalebox{.55}{
\begin{tikzpicture}[thick]
	{\color{green}
	\draw[edge] (0,-1) to (2,0);
	\draw[edge] (4,-1) to (2,0);
	\draw[edge] (2,0) to (4,0);
	\draw[edge] (1.9,0) arc (-90:270:0.8cm);
	\ncvertex{2,0}}
	\node at (-0.4,-1.4) {\huge $V_1$};
	\node at (5,0) {\huge $[D,a]$};
	\node at (4.4,-1.4) {\huge $V_2$};
\end{tikzpicture}}}
\quad\,
+
\,\,
\raisebox{-25pt}{\scalebox{.55}{
\begin{tikzpicture}[thick]
	{\color{red}
	\draw[edge] (0,-1) to (2,0);
	\draw[edge] (4,-1) to (2,0);
	\draw[edge] (2,1.5) to (2,2.8);
	\draw[edge] (1.9,0) arc (-90:270:0.8cm);
	\ncvertex{2,0}
	\ncvertex{2,1.5}}
	\node at (-0.4,-1.4) {\huge $V_1$};
	\node at (2,3.3) {\huge $[D,a]$};
	\node at (4.4,-1.4) {\huge $V_2$};
\end{tikzpicture}}}
\,\,
+
\quad\,
\raisebox{-20pt}{\scalebox{.55}{
\begin{tikzpicture}[thick]
	{\color{green}
	\draw[edge] (0,-1) to (2,0);
	\draw[edge] (4,-1) to (2,0);
	\draw[edge] (1.9,0) arc (-90:270:0.8cm);
	\draw[edge] (2,0) to (0,0);
	\ncvertex{2,0}}
	\node at (-0.4,-1.4) {\huge $V_1$};
	\node at (-1,0) {\huge $[D,a]$};
	\node at (4.4,-1.4) {\huge $V_2$};
\end{tikzpicture}}}
.
\end{align*}
We have colored the noncommutative Feynman diagrams on the right-hand sides according to their topology, i.e., without markings on the external edges and as they appear in Figure \ref{fig:1loop-3pt}. 
\begin{figure}
\hspace{24pt}
  \begin{tabular}{p{.3\linewidth}p{.3\linewidth}p{.3\linewidth}}
  \scalebox{.6}{ 
\begin{tikzpicture}[thick]
	\color{red}
	\draw[edge] (0.3,0.3) to (2,2);
	\draw[edge] (0.3,3.7) to (2,2);
	\draw[edge] (2,2) to[out=90,in=90] (4,2);
	\draw[edge] (2,2) to[out=-90,in=-90] (4,2);
	\draw[edge] (4,2) to (6,2);
	\ncvertex{4,2}
	\ncvertex{2,2}
\end{tikzpicture}}
 &\scalebox{.6}{
\begin{tikzpicture}[thick]
	\color{green}
	\draw[edge] (0,-1) to (2,0);
	\draw[edge] (4,-1) to (2,0);
	\draw[edge] (2,-2) to (2,0);
	\draw[edge] (1.9,0) arc (-90:270:0.8cm);
	\ncvertex{2,0}
\end{tikzpicture}}
 &
 \scalebox{.6}{
 \begin{tikzpicture}[thick]
 	\color{blue}
	\draw[edge] (0,3.2) to (1,2.7);
	\draw[edge] (2,-0.1) to (2,1);
	\draw[edge] (2,1) to (1,2.7);
	\draw[edge] (1,2.7) to (3,2.7);
	\draw[edge] (3,2.7) to (2,1);
	\draw[edge] (3,2.7) to (4,3.2);
	\ncvertex{1,2.7}
	\ncvertex{2,1}
	\ncvertex{3,2.7}
\end{tikzpicture}}
    \end{tabular}
  \caption{Relevant three-point diagrams at one-loop.}
  \label{fig:1loop-3pt}
  \end{figure}
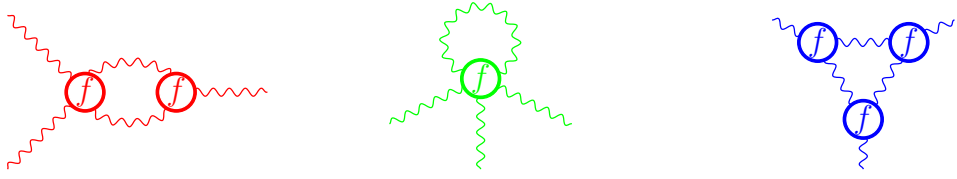
One then readily sees that the diagrams conspire to yield all cyclic permutations of $V_1,V_2,[D,a]$ as external fields on all relevant one-loop diagrams with three external edges. We obtain a two-point \textit{quantum Ward identity}, namely
\begin{align*}
	\bbr{aV_1,V_2}-\bbr{V_1,V_2a}=\bbr{V_1,V_2,[D,a]}.
\end{align*}

Proving the quantum Ward identity in general is the key to obtain the main theorem of Section \ref{sct:One-Loop}, which is formulated as follows.

\medskip

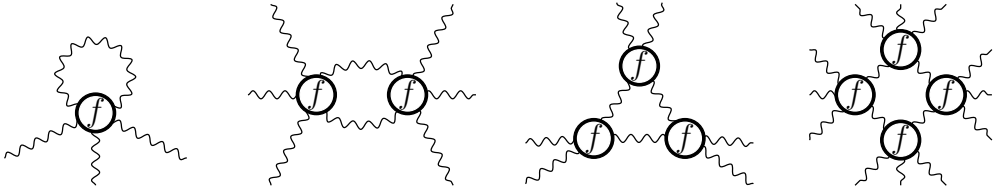
\begin{figure}
\hspace{.05\linewidth}
 \begin{tabular}{p{.181\linewidth}p{.21\linewidth}p{.19\linewidth}p{.20\linewidth}}
\scalebox{.6}{
\begin{tikzpicture}[thick]
	\draw[edge] (0,-1) to (2,0);
	\draw[edge] (4,-1) to (2,0);
	\draw[edge] (2,-1.6) to (2,0);
	\draw[edge] (1.9,0) arc (-90:270:0.8cm);
	\ncvertex{2,0}
\end{tikzpicture}}
&\scalebox{.6}{
\begin{tikzpicture}[thick]
	\draw[edge] (1,0) to (2,2);
	\draw[edge] (0.5,2) to (2,2);
	\draw[edge] (1,4) to (2,2);
	\draw[edge] (2,2) to[out=90,in=90] (4,2);
	\draw[edge] (2,2) to[out=-90,in=-90] (4,2);
	\draw[edge] (4,2) to (5,4);
	\draw[edge] (4,2) to (5.5,2);
	\draw[edge] (4,2) to (5,0);
	\ncvertex{2,2}
	\ncvertex{4,2}
\end{tikzpicture}}
   	&
 \scalebox{.6}{
\begin{tikzpicture}[thick]
	\draw[edge] (0.5,0.9) to (2,1);
	\draw[edge] (0.5,0) to (2,1);
	\draw[edge] (2,1) to (4,1);
	\draw[edge] (2,1) to (3,2.6);
	\draw[edge] (3,2.6) to (4,1);
	\draw[edge] (3,2.6) to (2.5,4);
	\draw[edge] (3,2.6) to (3.5,4);
	\draw[edge] (4,1) to (5.5,0.9);
	\draw[edge] (4,1) to (5.5,0);
	\ncvertex{2,1}
	\ncvertex{4,1}
	\ncvertex{3,2.6}
\end{tikzpicture}}
  &\quad
 \scalebox{.6}{
\begin{tikzpicture}[thick]
	\draw[edge] (0,1) to (1,2);
	\draw[edge] (0,2) to (1,2);
	\draw[edge] (0,3) to (1,2);
	\draw[edge] (1,2) to (2,1);
	\draw[edge] (2,1) to (3,2);
	\draw[edge] (1,2) to (2,3);
	\draw[edge] (2,3) to (3,2);
	\draw[edge] (2,3) to (1,4);
	\draw[edge] (2,3) to (2,4);
	\draw[edge] (2,3) to (3,4);
	\draw[edge] (3,2) to (4,3);
	\draw[edge] (3,2) to (4,2);
	\draw[edge] (3,2) to (4,1);
	\draw[edge] (2,1) to (1,0);
	\draw[edge] (2,1) to (2,0);
	\draw[edge] (2,1) to (3,0);
	\ncvertex{1,2}
	\ncvertex{2,1}
	\ncvertex{3,2}
	\ncvertex{2,3}
\end{tikzpicture}}
 \end{tabular}
 \caption{Relevant one-loop $n$-point functions with increasing number of vertices.}
 \label{table:skel-1l}
\end{figure}

\begin{thm}
There exist $(b,B)$-cocycles $\phi^N$ and $\tilde\psi^N$ (namely, those defined by taking $\brr{\cdot}=\bbr{\cdot}$ in Proposition \ref{prop:expansions for arbitrary brackets}) 
   for which the one-loop quantum effective spectral action can be expanded as
   $$
\sum_{n=1}^\infty \frac{1}{n} \bbr{V,\ldots,V} 
   \sim \sum_{k=1}^\infty \left( \int_{\psi_{2k-1}^N}  \!\!\!\!\! \cs_{2k-1} (A) +\frac 1 {2k} \int_{\phi_{2k}^N} \!\!\!\!\! F^{k} \right),
     $$
    in the sense of Proposition \ref{prop:expansions for arbitrary brackets}. As before, $\tilde\psi_{2k-1}^N=(-1)^{k-1}\tfrac{(k-1)!}{(2k-1)!}\psi_{2k-1}^N$. 
\end{thm}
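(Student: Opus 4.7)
The strategy is to invoke Proposition \ref{prop:expansions for arbitrary brackets} with $\brr{\cdot}=\bbr{\cdot}$. To do so, we must verify the two axioms: cyclicity and the Ward identity. The former is immediate: Definition \ref{def:quantum effective SA} sums amplitudes over planar one-loop 1PI diagrams whose external edges are only cyclically ordered, so relabeling $(V_1,\ldots,V_n)\to(V_n,V_1,\ldots,V_{n-1})$ leaves each summand invariant. With the axioms in hand, the theorem is then a direct consequence of the conclusion of Proposition \ref{prop:expansions for arbitrary brackets}, with $\phi^N$ and $\tilde\psi^N$ defined by the recipe given there applied to $\bbr{\cdot}$.

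The bulk of the work is the Ward identity
\[
\bbr{aV_1,V_2,\ldots,V_n}-\bbr{V_1,\ldots,V_n a}=\bbr{V_1,\ldots,V_n,[D,a]},
\]
whose proof I would carry out diagrammatically by generalizing the two-point computation around \eqref{eq:example quantum Ward}. For each admissible diagram $G$ contributing to the left-hand side, the dashed insertion of $a$ sits immediately counter-clockwise of the external edge $V_1$ at its attaching vertex in the first term, and immediately clockwise of the external edge $V_n$ in the second term. Because $G$ is planar with all external edges placed outside the unique loop, these two insertion points are joined by a uniquely determined arc along the outer face of $G$, passing through a finite sequence of internal vertices and gauge edges. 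I would then telescope along this arc: each time the dashed $a$ line crosses an internal vertex, invoke the vertex Ward identity \eqref{eq:ward} to produce a term with a new $[D,a]$ external edge attached at that vertex plus the ``$a$-moved-by-one-slot'' diagram; each time it crosses a propagator, invoke the gauge propagator Ward identity \eqref{eq:ward-gauge} to produce a term with $[D,a]$ emerging from a new trivalent vertex inserted on that propagator plus the ``$a$-moved-past-the-edge'' diagram. All intermediate ``moved $a$'' terms cancel in pairs, leaving precisely the sum of all $[D,a]$-insertion diagrams one can obtain from $G$ by inserting a $[D,a]$ external edge somewhere on the outer-arc segment between $v_n$ and $v_1$.

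The main obstacle is the bookkeeping: I must show that, as $G$ ranges over planar one-loop 1PI diagrams contributing to $\bbr{aV_1,\ldots,V_n}$ and the insertion point ranges over its outer arc, the resulting collection of diagrams with a $[D,a]$ external edge between $V_n$ and $V_1$ enumerates, each exactly once, the admissible diagrams contributing to $\bbr{V_1,\ldots,V_n,[D,a]}$. This requires checking that inserting $[D,a]$ either at an existing vertex (raising its degree by one) or via a newly created trivalent vertex on an existing propagator preserves planarity, one-loopness, 1PI, the degree-$\geq 3$ constraint, and the prescribed cyclic ordering of external edges; and conversely that every admissible diagram for $\bbr{V_1,\ldots,V_n,[D,a]}$ arises uniquely by collapsing the $[D,a]$ external edge. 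Each of these checks is straightforward topologically, since the $[D,a]$ edge lies outside the loop and on a segment of outer boundary determined by its prescribed cyclic position between $V_n$ and $V_1$. Once this bijection is established, summing the telescoped identity over all $G$ yields property (II) of Proposition \ref{prop:expansions for arbitrary brackets}, and the expansion claimed in the theorem, together with the stated relation $\tilde\psi_{2k-1}^N=(-1)^{k-1}\frac{(k-1)!}{(2k-1)!}\psi_{2k-1}^N$, follows directly from that proposition.
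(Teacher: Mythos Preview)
Your proposal is correct and follows essentially the same route as the paper: verify that $\bbr{\cdot}$ satisfies cyclicity and the Ward identity, then invoke Proposition~\ref{prop:expansions for arbitrary brackets}. Your telescoping argument, moving the dashed $a$ along the outer arc and applying the vertex Ward identity~\eqref{eq:ward} at each slot and the propagator Ward identity~\eqref{eq:ward-gauge} at each internal edge, is exactly the mechanism the paper uses (and illustrates in the two-point example around~\eqref{eq:example quantum Ward}).

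The one place where the paper is sharper is in pinning down the length of that outer arc. Because the diagrams are one-loop, planar, with all external edges outside the loop and cyclically marked, the external edges labelled $n$ and $1$ are either at the same vertex or at adjacent vertices on the loop. The paper makes this explicit by defining the insertion set $I(G)$ case by case: $|I(G)|=1$ when edges $n$ and $1$ share a vertex, and $|I(G)|=3$ (namely $G_n$, $G_e$, $G_1$) when they sit on adjacent vertices joined by a single loop edge $e$. Your telescope therefore always has either one step or three (vertex--propagator--vertex), never more. Your phrase ``a finite sequence of internal vertices and gauge edges'' is not wrong, but it obscures this short, uniform structure, and with it the easy verification of the bijection you describe at the end: once you know the arc has at most one internal edge, the claim that every relevant $(n{+}1)$-point diagram arises uniquely by inserting the $[D,a]$ leg somewhere on this arc becomes immediate.
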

\begin{proof}

Applying Definition \ref{def:quantum effective SA}, and combining two sums, we obtain
\begin{align*}
	\bbr{aV_1,\ldots,V_n}-\bbr{V_1,\ldots,V_na}=\sum_G \left(\Gamma_N^G(aV_1,\ldots,V_n)-\Gamma_N^G(V_1,\ldots,V_na)\right),
\end{align*}
where the sum is over all \textit{relevant} diagrams $G$, by which we mean the planar one-loop one-particle-irreducible $n$-point noncommutative Feynman diagrams $G$ with clockwise vertices of degree $\geq3$ and external edges outside the loop and marked cyclically.
Let $G$ be a relevant diagram marked $1,\ldots,n$. We let $I(G)$ denote the set of diagrams one can obtain from $G$ by inserting a single gauge edge at any of the places one visits when walking along the outside of the diagram from the external edge $n$ to the external edge $1$. To be precise, if the edges $n$ and $1$ attach to the same noncommutative vertex $v$, we set 
	$$I(G):=\{G'\},$$
where $G'$ is the diagram obtained from $G$ by inserting an external edge marked $n+1$ at $v$ between the edges marked $n$ and $1$. If the edges $n$ and $1$ attach to different vertices $v$ and $w$, respectively, then the edge $e$ succeeding the edge marked $n$ on $v$ necessarily attaches to $w$, preceding the edge marked $1$. In this case, we set
$$I(G):=\{G_n,G_e,G_1\},$$
where $G_n$ is obtained from $G$ by inserting an external edge marked $n+1$ at $v$ between $n$ and $e$, $G_e$ is obtained from $G$ by inserting a noncommutative vertex $v_0$ along $e$ and inserting an external edge marked $n+1$ along the outside of $v_0$, and $G_1$ is obtained from $G$ by inserting an external edge marked $n+1$ at $w$ between $e$ and $1$. For example, on the right-hand side of \eqref{eq:example quantum Ward} we see the elements of $E(G)$ for a $G$ which is shown on the left-hand side (disregarding the decorations and dashed lines).
More generally, by construction of $I(G)$, we find
\begin{align*}
&\llangle a  V_1, \ldots, V_n \rrangle_N^{1L} - 
\llangle V_1, \ldots, V_n  a \rrangle_N^{1L} = \sum_{G}\sum_{G'\in I(G)}
\Gamma_N^{G'}(V_1, \ldots ,V_n,[D,a]).
\end{align*}
The sum over $G$ and $G'$ yields all relevant $n+1$-point diagrams, and, moreover, any relevant $n+1$-point diagram with labels $V_1, \ldots ,V_n,[D,a]$ is obtained in a unique manner from an insertion of an external edge in an $n$-point diagram, as described above. We are therefore left precisely with 
\begin{align*}
\llangle a  V_1, \ldots, V_n \rrangle_N^{1L} - 
\llangle V_1, \ldots, V_n  a \rrangle_N^{1L} = \bbr{V_1,\ldots,V_n,[D,a]}.
\end{align*}
In combination with cyclicity, $\llangle V_1, \ldots, V_n \rrangle_N^{1L} = \llangle V_n , V_1, \ldots, V_{n-1} \rrangle_N^{1L}$, this identity allows us to apply Proposition \ref{prop:expansions for arbitrary brackets}. We thus arrive at the conclusion of the theorem.
\end{proof}

We conclude that the passage to the one-loop renormalized spectral action can be realized by a transformation in the space of cyclic cocycles, sending $\phi \mapsto \phi+ \phi^N$ and $\psi \mapsto \psi+ \psi^N$. One could say the theory is therefore one-loop renormalizable in a generalized sense, allowing for infinitely many counterterms, as in \mbox{\cite{GW96}}. Most notably, 
we have stayed within the spectral paradigm of noncommutative geometry.

It would be interesting to connect our approach to that of the Grosse--Wulkenhaar model \cite{GW05}, which is somewhat more specific, but naturally also allows for stronger results.
One of the main differences is that one there considers so-called non-local matrix models \cite{GW05b} with a quartic vertex, while instead we have introduced a model with vertices of arbitrary degree, by taking the spectral action as our starting point.

\end{fmffile}














\addtocontents{toc}{\protect\newpage}
\part{C*-algebraic Quantization for Lattice Gauge Theory}
\label{part:II}

\chapter{Classical and Quantized Resolvent Algebras on the Torus}
\chaptermark{Classical \& Quantized Resolvent Algebras on $\T^n$}
\label{ch:cylinder}

In this chapter, adapted from \cite{vNS20}, we define an analogue of the resolvent algebra \cite{BG} on the cotangent bundle $T^*\T^n$ of the $n$-torus by first generalizing the commutative resolvent algebra from \cite{vN19}, and subsequently applying Weyl quantization.
We prove that this quantization is almost strict (in the sense of Rieffel and Landsman) and show that our resolvent algebra shares many features with the original resolvent algebra.
We demonstrate that both our classical and quantized algebras are closed under the time evolutions corresponding to large classes of potentials.
The algebras are exceptionally convenient for lattice gauge theory.

Results in this chapter were obtained in collaboration with Ruben Stienstra.
The discussion here is smoothened by referring to \cite{vNS20} and \cite{Stienstra} for technical proofs of results that already appeared in Stienstra's dissertation \cite{Stienstra}.

\section{Introduction}
\label{subsec:classical_resolvent_algebra_introduction}

	Quantum theories are often obtained or studied via their classical limits.	
	This holds true not only for gauge theory, but for statistical mechanics, quantum gravity, and other parts of physics as well.
Showing that a classical theory is indeed the limit of the quantum theory at hand can be done at various levels of rigor.
The most precise way to establish this limit is by strict deformation quantization, where one `quantizes' a classical (commutative) Poisson algebra into a quantum (noncommutative) C*-algebra \cite{landsman98,Rieffel89} (cf. \cite[p. 5]{Hawkins} for an overview of the various definitions in the literature).

	A pair of a classical and a quantum C*-algebra connecting in this rigorous fashion is not easy to construct, but efforts are made to give more and more examples (\cite{ bieliavsky15, landsman98, rieffel90, Rieffel}, to name a few) in order to deal with the various configuration spaces that appear in applications. In abelian lattice gauge theory, the $n$-dimensional torus arises as configuration space, and one may look for strict quantizations of subspaces of $C_\textnormal{b}(T^*\T^n)$. As we will discuss, the known examples were too limited in certain specific respects. In this chapter, we will define a quantum observable algebra on the torus, i.e., a C*-algebra $A_\hbar\subseteq \B(L^2(\T^n))$ which satisfies the following properties:
\begin{enumerate}[label=P\arabic{*}:, ref=P\arabic{*}]
	\item The algebra $A_\hbar$ has a classical counterpart $A_0$ and can be obtained from this commutative algebra through (strict) quantization.\label{property:quantization}
	\item The algebra $A_\hbar$ is closed under the time evolution associated to the potential $V$ for each $V \in C(\T^n)_\sa$.
The classical analogue $A_0$ satisfies a similar condition.\label{property:time}
	\item The classical and quantum algebras associated to a given system are both sufficiently large to accommodate natural embeddings of the respective algebras corresponding to their subsystems.\label{property:direct}
	\item The algebras $A_0$ and $A_\hbar$ contain the algebra $C_0(T^\ast \mathbb{T}^n)$ and its quantization $\mathcal K(L^2(\T^n))$, respectively, without being larger than necessary.
	\label{property:small}
\end{enumerate}

\noindent
An observable algebra satisfying only \ref{property:quantization}, \ref{property:time} and \ref{property:small} has long been known, namely the compact operators $\mathcal K(L^2(\T^n))$, with $C_0(T^*\T^n)$ as its classical limit (cf. \cite{landsman98}, in particular Sections II.3.4, III.3.6 and III.3.11).
We now sketch how the need for \ref{property:direct} arises in quantum lattice gauge theory.
More details can be found in \cite[Section 5.1]{Stienstra}.

\paragraph*{Lattice gauge theory.} In the Hamiltonian lattice gauge theory by Kogut and Susskind \cite{KS}, one approximates a time-slice of spacetime by a finite `lattice', or more accurately, an oriented graph $\Lambda$.
The vertices, contained in the set $\Lambda^0$, are points in the time slice, while the oriented edges, contained in $\Lambda^1$, are paths between these points.
A gauge field corresponding to some connection on a principal fiber bundle over spacetime with gauge group $G$ (a compact Lie group) is approximated by the parallel transport maps along the edges of $\Lambda$.
After choosing a trivialization of the restriction of the principal fiber bundle to $\Lambda^0$, the set of all possible parallel transporters can be identified with $G^{\Lambda^1}$; this is the configuration space of the Hamiltonian lattice gauge theory, and it carries a natural action of $G^{\Lambda^0}$ (endowed with the obvious group structure).
The latter group represents the approximate gauge transformations.


The Hilbert space of the corresponding quantum lattice gauge theory is $\H = L^2(G^{\Lambda^1})$, where $G^{\Lambda^1}$ is endowed with the normalized Haar measure.
The field algebra of the system is some C*-algebra $A_\Lambda$ that is represented on $\H$, from which the observable algebra can be obtained by applying a reduction procedure with respect to the gauge group (cf. \cite{KR05,SvS}).
The observable algebra is accordingly represented on the set of elements of $\H$ that are invariant under gauge transformations.
Since the distinction between field and observable algebras is irrelevant with regard to the issue that motivates the present investigation -- the embedding maps take the same form in both cases -- we will continue to refer to $A_\Lambda$ as the observable algebra in what follows.

In the context of lattice gauge theory, one is interested in constructing an algebra of the continuum system from the above algebras $A_\Lambda$.
This is done by considering direct systems of lattices, and we are naturally led to consider the following situation.
Suppose that $\Lambda_1$ and $\Lambda_2$ are both lattices approximating a time slice, and that $\Lambda_2$ is a better approximation than $\Lambda_1$, i.e., $\Lambda_1^0 \subseteq \Lambda_2^0$, the graph $\Lambda_2$ contains more edges than $\Lambda_1$, and each edge in $\Lambda_1$ can be written as a concatenation of edges in $\Lambda_2$. 
We should be able to find a corresponding embedding map $A_{\Lambda_1} \hookrightarrow A_{\Lambda_2}$.
The embedding map takes a simple form if $\Lambda_2$ is obtained from $\Lambda_1$ by only adding edges: in that case, we have $\H_2 = \H_1 \hatotimes \H_1^c$, where $\H_1^c = L^2(G^{\Lambda_2^1 \backslash \Lambda_1^1})$, and the embedding is given by the restriction of the map
\begin{equation}\label{eq:tensoring with 1}
\B(\H_1)\rightarrow \B(\H_2) \cong \B(\H_1) \hatotimes \B(\H_1^c),\qquad a \mapsto a \otimes \1,
\end{equation}
to $A_{\Lambda_1}$, where $\1$ denotes the identity on $\H_1^c$, and $\hatotimes$ denotes the von Neumann algebraic tensor product.

A first guess for the observable algebras of the two quantum systems could be $\mathcal K(\H_1)$ and $\mathcal K(\H_2)$, the algebras of compacts.
However, except in trivial cases, the Hilbert space $\H_1^c$ will be infinite-dimensional, which means that $a\otimes \1$ will not be a compact operator.
Thus the algebra $\mathcal K(\H_2)$ is too small to accommodate these embeddings.
This problem was already noticed by Stottmeister and Thiemann in  \cite{ST}.
Simply adding $\1$ to the compacts causes its own problems, as for instance noted in \cite{BG}.

In \cite{ASS}, the problem concerning \eqref{eq:tensoring with 1} was not encountered since different embedding maps were used.
However, as \cite[Chapter 8]{Stienstra} points out, these embedding maps have problems of their own. 
The argument presented there is not specific to lattice gauge theory, but can be made for any physical system that is comprised of smaller subsystems.

Another guess for the observable algebra of the composite system could be the one generated by the embedded algebras of all subgraphs, as is done in \cite{GR}. However, this raises questions about regulator independence of this procedure in situations where one takes limits corresponding to an infinite volume or continuum limit of a collection of systems parametrized by a cutoff.
As this problem is beyond our scope, we will refer the reader to the discussion in \cite[Section 5.1]{Stienstra}.
The main point is that there is ample reason to try to solve the problem through an appropriate choice of algebras, i.e., algebras that satisfy \ref{property:direct}.

\paragraph*{The resolvent algebra on $\R^n$.} In the case where the configuration space is $\R^n$, there already exists an algebra satisfying \ref{property:quantization}, \ref{property:time}, \ref{property:direct} and \ref{property:small}: the resolvent algebra $\mR(\R^{2n},\sigma_n)$.
The resolvent algebra $\mathcal{R}(X,\sigma)$ on a symplectic vector space $(X,\sigma)$ is a C*-algebra that was introduced by Buchholz and Grundling in \cite{BG07}, and subsequently studied in greater detail in \cite{BG} and \cite{buchholz14} by the same authors.
Before we adapt this algebra to the case where $\T^{n}$ instead of $\R^{n}$ is the underlying configuration space, let us recall the main idea behind the construction of the resolvent algebra.

The resolvent algebra is constructed as the completion of a *-algebra with respect to a certain C*-seminorm \cite[Definition 3.4]{BG}; the *-algebra is defined in terms of generators and relations.
To each pair $(\lambda, f) \in (\R \backslash \{0\}) \times X$, a generator $\mathcal{R}(\lambda, f)$ is associated.
Such a generator is thought of as the resolvent (depending on $\lambda$) corresponding to some unbounded operator $\phi(f)$ associated to the vector $f$, where $\phi$ denotes a linear map from $X$ to a space of operators on a dense subspace of a Hilbert space on which $\mathcal{R}(X, \sigma)$ can be represented faithfully.

For example, suppose that $(X, \sigma)$ is $\R^2$ endowed with the standard symplectic form.
Then $\mathcal{R}(X, \sigma)$ admits a faithful representation on $L^2(\R)$ such that the unbounded operators corresponding to the vectors $(1,0)$ and $(0,1)$ are the standard position and momentum operators respectively (up to a factor of $\hbar$ in the latter case), see \cite[Corollary 4.4 and Theorem 4.10]{BG}.
Both of these unbounded operators can be defined on the (invariant) dense subspace $C^\infty_\textnormal{c}(\R)$, on which they are essentially self-adjoint.

For each $f \in X$, the generator $\mathcal{R}(\lambda, f)$ is mapped to the bounded operator $(i\lambda \unit - \phi(f))^{-1}$; in particular, taking $f = 0$, we see that $\mathcal{R}(X, \sigma)$ is unital.
The relations defining the *-algebra from which the resolvent algebra is constructed serve to encode the fact that $\mathcal{R}(\lambda, f)$ behaves like the resolvent of the unbounded operator $\phi(f)$, as well as the linearity of $\phi$.
Last but not least, the canonical commutation relations (CCR) are introduced by the defining relations of $\mathcal{R}(X, \sigma)$ in which the symplectic form appears, thereby justifying the term ``canonical quantum systems'' in the title of \cite{BG}.

The resolvent algebra is not the only approach to the reformulation of the CCR in a framework based on bounded operators; another is obtained through exponentiation of the unbounded operators of interest, leading to the Weyl form of the CCR and the Weyl algebra.
There is a bijection between certain classes of representations of these two algebras \cite[Corollary 4.4]{BG}.
In particular, generators of the resolvent algebras can be expressed in terms of generators of the Weyl algebra by means of the Laplace transform, as is done in \cite{BG07}.
By changing the representation in that definition to the usual representation on $L^2(\R)$ of the Weyl algebra on $\R^2$, one obtains the representation mentioned earlier.

Buchholz and Grundling note that their resolvent algebra has some desirable qualities not shared by the Weyl algebra, such as the presence of observables corresponding to bounded functions in regular representations.
Also with respect to time evolution, i.e., \ref{property:time}, the resolvent algebra is a superior alternative to the Weyl algebra.
For example, \cite[Proposition 6.1]{BG} shows that the resolvent algebra associated to $\R^2$ endowed with the standard symplectic form is closed under (quantum) time evolution for a large class of Hamiltonians, while the Weyl algebra only admits free time evolution.
The resolvent algebra is also stable under dynamics in the context of oscillating lattice systems \cite{buchholz17} and nonrelativistic Bose fields \cite{buchholz18}. 

As regards \ref{property:direct}, by \cite[Theorem 5.1]{BG}, for any symplectic vector space $(X,\sigma)$ and any decomposition $X=S \oplus S^\perp$ into nondegenerate subspaces (where $S^\perp$ denotes the complement of $S$ with respect to $\sigma$) the resolvent algebra $\mathcal{R}(X, \sigma)$ naturally contains a copy of $\mathcal{R}(S, \sigma_{|S}) \hatotimes \mathcal{R}(S^\perp, \sigma_{|S^\perp})$; with respect to corresponding faithful representations of these three resolvent algebras, the embeddings of $\mathcal{R}(S, \sigma_{|S})$ and $\mathcal{R}(S^\perp, \sigma_{|S^\perp})$ are given by the analogues of the aforementioned embedding map for lattice gauge theory.
Here, $\hatotimes$ denotes any C*-algebraic tensor product (nuclearity of the resolvent algebra is shown in \cite{buchholz14}).\\

\noindent
We have seen how properties \ref{property:time} and \ref{property:direct} hold for the resolvent algebra. A proof of \ref{property:quantization} also exists, and forms the basis of our construction in the case of the torus. Indeed, it is shown in \cite{vN19} that the resolvent algebra arises as the strict deformation quantization of an algebra that can be considered the observable algebra of a classical system in the sense of Rieffel and Landsman, i.e., the C*-algebra generated by the image of a dense Poisson subalgebra of the classical algebra under a quantization map \cite{landsman98}.
In particular, when $(X, \sigma)$ is $\R^{2n}$ endowed with the standard symplectic form, there is a corresponding commutative C*-algebra $C_{\mathcal{R}}(\R^{2n})$, which is the C*-subalgebra of $C_\textnormal{b}(\R^{2n})$ generated by functions of the form
\begin{equation*}
x \mapsto (i\lambda - x \cdot v)^{-1}, \quad
\lambda \in \R \backslash \{0\}, \:
v \in \R^{2n},
\end{equation*}
where $\cdot$ denotes the standard inner product.
Similar to the way in which the algebra $C_0(\R^{2n})$ may be quantized into the compact operators on $L^2(\R^n)$ by considering the dense Poisson subalgebra $\mathcal{S}(\R^{2n})$ of Schwartz functions and defining Weyl or Berezin quantization on them, we consider a dense Poisson subalgebra of $C_{\mathcal{R}}(\R^{2n})$ defined by
\begin{equation*}
\mathcal{S}_{\mathcal{R}}(\R^{2n})
:= \text{\normalfont span}_{\mathbb{\C}} \set{g \circ P_V}{V \subseteq \R^{2n} \text{ is linear, } g \in \mathcal{S}(V)} \, ,
\end{equation*}
where $P_V$ denotes the orthogonal projection onto $V$.
The Weyl quantization of $g \circ P_V$ is defined using the Fourier transform of $g$ as a function on $V$ \cite[Section 3.2]{vN19}, but is otherwise equal to the definition of the Weyl quantization of ordinary Schwartz functions on $\R^{2n}$.
It is then argued that the Weyl quantization map admits a (unique) linear extension to $\mathcal{S}_{\mathcal{R}}(\R^{2n})$.
Furthermore, it is shown that the images of $\mathcal{S}_{\mathcal{R}}(\R^{2n})$ under Weyl and Berezin quantization are both dense subspaces of $\mathcal{R}(\R^{2n}, \sigma)$.
The resulting algebra $C_{\mathcal{R}}(\R^{2n})$ is accordingly referred to as the \textit{commutative resolvent algebra on} (the cotangent bundle of) $\R^{n}$.
As is shown in \cite{vN19}, these definitions are easily extended to infinite dimensions.

In addition to being the classical counterpart of the resolvent algebra as defined by Buchholz and Grundling, the commutative resolvent algebra offers an interesting perspective on our earlier discussion on embeddings of observable algebras.
In some sense, $C_{\mathcal{R}}(\R^n)$ is the smallest C*-subalgebra of $C_\textnormal{b}(\R^n)$ that contains $C_0(\R^n)$, whilst also containing its analogues associated to linear subspaces of $\R^n$.
This may be formalized as follows.
Consider the category whose objects are finite-dimensional real vector spaces, and whose morphisms consist of projections of a vector space onto one of its subspaces.
Then there is a contravariant functor $C_\textnormal{b}$ from this category to the category of C*-algebras that maps an object $V$ to the space $C_\textnormal{b}(V)$, and that maps morphisms to their pullbacks between these spaces.
It is now consistent with the definition of the commutative resolvent algebra to define $C_{\mathcal{R}}$ as the smallest subfunctor of $C_\textnormal{b}$ with the property that the image of every object $V$ contains $C_0(V)$.
Note that this implies that $C_\mathcal{R}(\R^n)$ is unital, as it contains the embedding of $C_0(\{0\})$.
This makes precise in which sense \ref{property:small} holds for the resolvent algebras on $\R^{n}$.

\noindent
\paragraph*{Resolvent algebras on the torus.}
In this chapter we introduce an analogue of the resolvent algebra where the configuration space $\R^{n}$ is replaced by $\T^n$. Our main motivation is abelian lattice gauge theory, where the gauge group, and therefore the configuration space, is a compact abelian Lie group, and therefore isomorphic to $\T^n$.
The resolvent algebra of the torus, in contrast with the one of Buchholz and Grundling, is not introduced by means of generators and relations.
Rather, we first identify a commutative resolvent algebra $C_{\mathcal{R}}(T^\ast \T^n)$ by generalizing the definition of \cite{vN19}.
We then give a concrete characterization of $\Cr(T^*\T^n)$. Namely, identifying $T^*\T^n$ with $\T^n\times\R^n$, we prove that $\Cr(T^*\T^n)$ equals $C(\T^n)\!\hatotimes\! \Wr(\R^n)$, where $\Wr(\R^n)$ is the C*-algebra generated by the functions 
	\begin{align}\label{generators}x\mapsto 1/(i+x\cdot v)\quad\text{ and }\quad x\mapsto e^{ix\cdot v},\quad\text{ for all }\quad v\in\R^n.
	\end{align}
In addition, we identify a dense *-subalgebra $\mathcal{S}_{\mathcal{R}}(T^\ast \T^n)\subseteq\Cr(T^*\T^n)$ carrying a natural Poisson structure.
The algebra is spanned by functions of the form $e_b\otimes h$, where $e_b[x]:=e^{2\pi ib\cdot x}$, and $h$ is a smooth function that is a product of an element of $\mathcal{S}_{\mathcal{R}}(\R^n)$ and a function of the form $x \mapsto e^{i \xi \cdot x}$ for some $\xi \in \R^n$.

To define a quantum counterpart, we apply Weyl quantization, making \ref{property:quantization} integral to the definition of the (quantum) resolvent algebra on $\T^n$.
Our Weyl quantization map $\QW:\Sr(T^*\T^n)\rightarrow \B(L^2(\T^n))$ is an extension of the usual (\cite[Section II.3.4]{landsman98}) Weyl quantization on (a subalgebra of) $C_0(T^*\T^n)$, when we see $\T^n$ as a Riemannian manifold with its corresponding Levi-Civita connection. The same quantization map $\QW$ equivalently arises by viewing $\T^n$ as a quotient of Euclidean space, and adapting the Weyl quantization of $\R^{2n}$ accordingly. The most explicit characterization of $\QW$ is obtained by writing $\Cr(T^*\T^n)$ as the tensor product $C(\T^n)\!\hatotimes\! \Wr(\R^n)$. We then have
\begin{align}\label{eq:formula QW intro}
	\QW(e_b\otimes h)\psi_a=h(2\pi\hbar(a+\tfrac12 b))\psi_{a+b}\,,
\end{align}
where $e_b\otimes h\in\Sr(T^*\T^n)$, and $\psi_b$ is the equivalence class of $e_b\in C(\R^n)$ in $L^2(\mathbb{R}^n)$ for each $b \in \mathbb{Z}^n$.
Using this generalized Weyl quantization map $\QW$, we define the (quantum) resolvent algebra on the torus as
	$$A_\hbar:=C^*(\QW(\Sr(T^*\T^n)))\subseteq \B(L^2(\T^n)),$$
before remarking that $A_\hbar\cong A_{\hbar'}$ for all $\hbar,\hbar'\in(0,\infty)$. The property \ref{property:direct} turns out to follow from this explicit description of $\QW$ and the fact that \ref{property:direct} holds for $\Cr(T^*\T^n)$, which is readily seen. \ref{property:small} is satisfied by definition of $\Cr(T^*\T^n)$.

The main contribution of this chapter is that \ref{property:time} also holds for our algebras, both the classical and the quantum one, in the following very strong sense. Our commutative resolvent algebra $C_{\mathcal{R}}(T^\ast \T^n)$ is closed under the classical time evolution associated to the potential $V$ for each $V \in C^1(\T^n)_\sa$ with Lipschitz continuous derivative.
Our quantum resolvent algebra $A_\hbar$ is closed under the quantum time evolution associated to the potential $V$ for each $V \in C(\T^n)_\sa$.
In both cases, the free part of the Hamiltonian is the usual one.
Unlike the analogous result in \cite{BG} in which a similar result is established only for $\R^{2n}$ with $n = 1$, our results hold for arbitrary $n \in \N$.\\

\noindent 
This chapter is structured as follows.
In Section \ref{sct:Basic Results}, we first define the commutative resolvent algebra $C_{\mathcal{R}}(T^\ast \T^n)$ by extending the definition of \cite{vN19}.
We proceed by analyzing its structure, culminating in the more practical characterization $C_{\mathcal{R}}(T^\ast \T^n)=C(\T^n) \hatotimes \Wr(\R^n)$.
Furthermore, we identify a dense *-subalgebra that carries a Poisson structure.
In Section \ref{sec:quantisation}, we give a well-motivated definition of Weyl quantization on this dense *-subalgebra, and define the quantum resolvent algebra on $\T^n$ by use of this quantization map.
%
Section \ref{sec:classical time evolution} proves the fact that $C_{\mathcal{R}}(T^\ast \T^n)$ is invariant under classical time evolutions in the general setting mentioned mentioned above.
In Section \ref{sec:quantum time evolution}, we show that the quantum resolvent algebra on $\T^n$ is invariant under quantum time evolutions.
Finally, Section \ref{sct:almost strict} shows that our quantization map fulfills almost all conditions of a strict deformation quantization, which will lead us naturally to Chapter \ref{ch:SDQ}.


\section{Definition and basic results}
\label{sct:Basic Results}

\noindent
On the phase space $\R^{2n}$, we already have a commutative C*-algebra that satisfies \ref{property:time}, \ref{property:direct} and \ref{property:small} mentioned in the introduction and forms the classical part of a strict deformation quantization, namely the commutative resolvent algebra $\Cr(\R^{2n})$ defined in \cite{vN19}.
We begin this section by adapting its definition to $T^\ast \T^n$.
As mentioned in the introduction, we identify $T^\ast \T^n$ with $\T^n \times \R^n$, and we note that the latter space carries a natural left action of $\R^{2n} = \R^n \times \R^n$ by translation.

\begin{defi}\label{def:resolvent algebra on the torus}
For each $(v,w) \in \R^n \times \R^n = \R^{2n}$, let $(\T^n \times \R^n)/\{(v,w)\}^\perp$ be the space of orbits of the restriction of the action of $\R^{2n}$ to $\{(v,w)\}^\perp \subseteq \R^{2n}$, and let $$\pi_{(v,w)} \colon \T^n \times \R^n \rightarrow (\T^n \times \R^n)/\{(v,w)\}^\perp$$ be the corresponding canonical projection.
The \textbf{commutative} (or, classical) \textbf{resolvent algebra on} $\T^n$, denoted $\Cr(T^\ast \T^n)$, is the smallest C*-subalgebra of $C_\textnormal{b}(\T^n\times\R^n)$ generated by the set of functions
\begin{equation*}
\left\{ f \circ \pi_{(v,w)} :~ (v,w) \in \R^{2n}, \: f \in C_0((\T^n \times \R^n)/\{(v,w)\}^\perp) \right\},
\end{equation*}
that is, the set of continuous functions invariant under the action of $\{(v,w)\}^\perp \subseteq \R^{2n}$ for which the induced map on $(\T^n \times \R^n)/\{(v,w)\}^\perp$ vanishes at infinity.
\end{defi}

\noindent
To establish the link with the definition of $\Cr(\R^n)$ given in \cite{vN19}, note that there is an immediate generalization of the above definition to arbitrary topological spaces $M$ carrying a left action of $\R^{m}$ for some $m \in \N$.
Taking $M = \R^n$ and $m = n$ then yields the definition of $\Cr(\R^n)$.
Unfortunately, $T^*G$ does not have an appropriate action of $\R^{2n}$ for a nonabelian Lie group $G$ that would enable us to unambiguously generalize this construction.

The definition of the commutative resolvent algebra $C_{\mathcal{R}}(T^\ast \T^n)$ is clearly motivated, but very unwieldy in practice.
Our first task is therefore to find an alternative, more elementary characterization of $C_{\mathcal{R}}(T^\ast \T^n)$.

Recall that the algebra $\Weylalg{\R^n}$ of \textit{almost periodic functions on $\R^n$} is the C*-subalgebra of $C_\textnormal{b}(\R^n)$ generated by the functions $x\mapsto e^{i \xi \cdot x}$ for $\xi \in \R^n$.


\begin{defi}
Let $n \in \N$.
We define the algebra $\Wr(\R^n)$ as the C*-subalgebra of $C_\textnormal{b}(\R^n)$ generated by the commutative resolvent algebra $C_{\mathcal{R}}(\R^n)$ and the algebra of almost periodic functions $\Weylalg{\R^n}$ on $\R^n$.
\end{defi}

\noindent 
The next theorem will unveil $\Cr(T^*\T^n)$ as a tensor product of two algebras. We regard the algebraic tensor product of two C*-algebras $A\subseteq C_\textnormal{b}(X)$ and $B\subseteq C_\textnormal{b}(Y)$ as a subset of $C_\textnormal{b}(X\times Y)$ via $(f\otimes g)(x,y)=f(x)g(y)$, and denote its corresponding completion by $A\hatotimes B$.
Since commutative C*-algebras are nuclear,
 this is equivalent to any other C*-algebraic tensor product.

The following theorem is proven in \cite[Theorem 5]{vNS20} and \cite[Theorem 5.7]{Stienstra}.

\begin{thm}\label{thrm:tractable_resolvent_algebra}
For each $n \in \N$, we have
\begin{equation*}
C_{\mathcal{R}}(T^\ast \T^n) = C(\T^n) \hatotimes \Teunalg{\R^n}.
\end{equation*}
\end{thm}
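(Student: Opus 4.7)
The plan is to prove the two inclusions $\Cr(T^*\T^n) \supseteq C(\T^n)\hatotimes\Wr(\R^n)$ and $\Cr(T^*\T^n) \subseteq C(\T^n)\hatotimes\Wr(\R^n)$ separately, exploiting a concrete parametrization of the orbits of $\{(v,w)\}^\perp$ on $\T^n\times\R^n$. A direct computation shows that $([x_1],y_1)$ and $([x_2],y_2)$ are equivalent under $\{(v,w)\}^\perp$ if and only if $v\cdot(x_1-x_2)+w\cdot(y_1-y_2)\in\Lambda:=v\cdot\Z^n$. Since $\Lambda$ is a finitely generated subgroup of $\R$, it is either (a) trivial, when $v=0$; (b) of the form $\alpha\Z$ with $\alpha>0$, precisely when $v=\alpha b$ for a primitive $b\in\Z^n$; or (c) dense in $\R$, when the components of $v$ are incommensurable.

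For $\supseteq$, I would identify the generators of $C(\T^n)\hatotimes\Wr(\R^n)$ inside $\Cr(T^*\T^n)$. Taking $(v,w)=(b,0)$ with $b\in\Z^n$ places the quotient into case (b) with $\alpha=1$ and shows that $e^{2\pi ib\cdot x}\otimes 1$ descends to an element of $C_0(\T)=C(\T)$, so these exponentials (which generate $C(\T^n)$ by Stone--Weierstrass) lie in $\Cr(T^*\T^n)$. Taking $(v,w)=(0,\eta)$ places us in case (a) and shows that $1\otimes 1/(i+\eta\cdot y)$ lies in $\Cr(T^*\T^n)$, giving the $\Cr(\R^n)$ generators of $\Wr(\R^n)$. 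The almost periodic generators $1\otimes e^{i\xi\cdot y}$ require more work: I would combine two generators associated to $(b,w_1)$ and $(-b,w_2)$ with $b\in\Z^n$ primitive and $2\pi(w_1+w_2)=\xi$; each gives a function of the form $e^{2\pi i(\pm b\cdot x+w_j\cdot y)}$ on a circle-quotient, and their product collapses the $x$-dependence and yields $1\otimes e^{i\xi\cdot y}$. Taking products and sup-norm closures then gives the inclusion.

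For $\subseteq$, I would treat each case separately. In case (a), the quotient is $\R$ via the invariant $w\cdot y$, and $f\circ\pi_{(0,w)}=f(w\cdot y)\in 1\otimes\Cr(\R^n)\subseteq C(\T^n)\hatotimes\Wr(\R^n)$. In case (b), writing $v=\alpha b$, the quotient is the circle $\T$ via the invariant $b\cdot x+(w/\alpha)\cdot y\bmod\Z$; uniformly approximating $f\in C(\T)$ by trigonometric polynomials (via Fej\'er's theorem), one obtains
\begin{equation*}
f\circ\pi_{(v,w)}=\lim_{N\to\infty}\sum_{|k|\leq N}c_k\bigl(e^{2\pi ikb\cdot x}\bigr)\otimes\bigl(e^{2\pi ik(w/\alpha)\cdot y}\bigr),
\end{equation*}
with uniform convergence on $\T^n\times\R^n$, which displays $f\circ\pi_{(v,w)}$ as an element of $C(\T^n)\hatotimes\Weylalg{\R^n}\subseteq C(\T^n)\hatotimes\Wr(\R^n)$. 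In case (c), denseness of $\Lambda$ in $\R$ forces any continuous function factoring through the invariant to be constant, and constants are clearly in the tensor product.

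The main obstacle is the analysis in case (b): this is where the almost periodic functions $\Weylalg{\R^n}$ become necessary, because the Fej\'er decomposition intertwines a character $e^{2\pi ikb\cdot x}$ of $\T^n$ with a character $e^{2\pi ik(w/\alpha)\cdot y}$ of $\R^n$ whose frequency is \emph{not} required to lie in $\Z^n$. This is the essential reason why $\Cr$ of $\R^n$ alone does not suffice and must be enlarged to $\Wr$; it is also the step that requires care in establishing uniform convergence of the approximations on the noncompact space $\T^n\times\R^n$.
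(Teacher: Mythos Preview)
Your plan is sound and the trichotomy on $\Lambda=v\cdot\Z^n$ is the right organizing principle. Note that the paper does not itself contain a proof of this theorem---it defers to \cite[Theorem~5]{vNS20} and \cite[Theorem~5.7]{Stienstra}---so there is no in-paper argument to compare against directly. Your approach is the natural one and very likely matches what those references do.

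Two small points would make the sketch airtight. In case (b) you should justify that the induced function on $\T$ is continuous before invoking Fej\'er; the cleanest route is to exhibit a continuous section of $([x],y)\mapsto b\cdot x+(w/\alpha)\cdot y\bmod\Z$: primitivity of $b$ yields $c\in\Z^n$ with $b\cdot c=1$ by B\'ezout, and $t\mapsto([tc],0)$ does the job, after which the uniform convergence of Fej\'er sums on $\T$ transfers verbatim to $\T^n\times\R^n$. In case (c), lift to $\R^n\times\R^n$ and write the invariant function as $G(v\cdot x+w\cdot y)$ for a continuous $\Lambda$-periodic $G\colon\R\to\C$; density of $\Lambda$ then forces $G$ constant. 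Your product trick producing $1\otimes e^{i\xi\cdot y}$ from two circle-quotient generators with opposite $b$ is correct and is exactly the place where one sees why $\Weylalg{\R^n}$, and not only $\Cr(\R^n)$, is needed in the description.
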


\noindent 
We finish this section by defining a smooth subspace of $C_{\mathcal{R}}(T^\ast \T^n)$.

\begin{defi}\label{def:resolvent_algebra_Schwartz_functions}
Recall that $e_b \colon \T^n \rightarrow \C,$ $[x] \mapsto e^{2\pi i b \cdot x}$ for all $b\in\Z^n$.
For each subspace $U \subseteq \R^n$, for each $\xi \in U^\perp$, and for each Schwartz function $g \in \mathcal{S}(U)$, let
\begin{equation*}
h_{U,\xi,g} \colon \R^n \rightarrow \C, \quad 
p \mapsto e^{i\xi \cdot p} g ( P_U(p)),
\end{equation*}
where $P_U \colon \R^n \rightarrow U$ denotes the orthogonal projection onto $U$.
We define 
	$$\mathcal{S}_{\mathcal{R}}(T^\ast \T^n):=\spn\{e_b\otimes h_{U,\xi,g}:~b\in\Z^n,~U\subseteq\R^n\text{ linear, }\xi\in U^\perp,~g\in\S(U)\}.$$
\end{defi}
The following result is proven in \cite[Proposition 7]{vNS20} and \cite[Proposition 5.9]{Stienstra}.
\begin{prop}
%
The vector space $\mathcal{S}_{\mathcal{R}}(T^\ast \T^n)$ is a subspace of $C_{\mathcal{R}}(T^\ast \T^n)$ that is closed under multiplication and partial differentiation, and is consequently a Poisson subalgebra of $C^\infty(T^\ast \T^n)$.
Moreover, $\mathcal{S}_{\mathcal{R}}(T^\ast \T^n)$ is a norm-dense *-subalgebra of $C_{\mathcal{R}}(T^\ast \T^n)$.
\label{prop:Poisson_subalg_for_arbitrary_n}
\end{prop}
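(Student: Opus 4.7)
The plan is to reduce everything to $\Wr(\R^n)$ via Theorem \ref{thrm:tractable_resolvent_algebra}, which identifies $C_{\mathcal R}(T^\ast \T^n)$ with $C(\T^n)\hatotimes\Wr(\R^n)$. Under this identification, $\mathcal{S}_{\mathcal R}(T^\ast\T^n)$ factors as the algebraic tensor product of $\spn_\C\{e_b:b\in\Z^n\}\subseteq C(\T^n)$ with $\spn_\C\{h_{U,\xi,g}\}\subseteq\Wr(\R^n)$, so the arguments split between these two factors. The inclusion $\mathcal S_{\mathcal R}(T^\ast\T^n)\subseteq C_{\mathcal R}(T^\ast\T^n)$ is then immediate from the observation that $h_{U,\xi,g}$ is a product of the almost periodic function $p\mapsto e^{i\xi\cdot p}\in\Weylalg{\R^n}$ with $p\mapsto g(P_U(p))\in \mathcal S_{\mathcal R}(\R^n)\subseteq C_{\mathcal R}(\R^n)$, both of which lie in $\Wr(\R^n)$.

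For closure under multiplication, I would compute
\[
h_{U,\xi,g}\cdot h_{U',\xi',g'}(p)=e^{i(\xi+\xi')\cdot p}\,g(P_U(p))\,g'(P_{U'}(p))
\]
and set $U'':=U+U'$, decomposing $\xi+\xi'=\eta+\xi''$ with $\eta\in U''$ and $\xi''\in (U'')^\perp$. Since $\eta, P_U(p), P_{U'}(p)$ all depend only on $P_{U''}(p)$, the second factor descends to a function $g''$ on $U''$ given by $g''(q)=e^{i\eta\cdot q}g(P_U q)g'(P_{U'}q)$. The hard part is verifying $g''\in\mathcal S(U'')$. My key observation here is that the linear map $(P_U,P_{U'})\colon U''\to U\oplus U'$ is injective: if $q=u+u'\in U''$ with $P_U(q)=P_{U'}(q)=0$, then $u=-P_U(u')$ and $u'=-P_{U'}(u)$, forcing $u'\in U\cap U'$ and hence $u=-u'$, so $q=0$. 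Therefore $(P_U,P_{U'})$ is a linear isomorphism from $U''$ onto its image, and $g''$ is the pullback of the Schwartz function $g\otimes g'\in\mathcal S(U\oplus U')$ (times the unimodular smooth factor $e^{i\eta\cdot\cdot}$), which is Schwartz on $U''$. Thus $h_{U,\xi,g}\cdot h_{U',\xi',g'}=h_{U'',\xi'',g''}$ and we get closure under multiplication (and a fortiori the *-algebra structure, since complex conjugation gives $\overline{h_{U,\xi,g}}=h_{U,-\xi,\bar g}$ and $\overline{e_b}=e_{-b}$).

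For closure under partial differentiation (hence the Poisson subalgebra assertion, since the Poisson bracket on $T^\ast\T^n$ is polynomial in the partial derivatives), I would just apply the chain and product rules: torus derivatives of $e_b$ produce scalar multiples of $e_b$, while for the fiber derivatives
\[
\partial_i h_{U,\xi,g}=i\xi_i\,h_{U,\xi,g}+h_{U,\xi,\partial_{P_U(e_i)}g},
\]
which stays in the span since $\partial_{P_U(e_i)}g\in\mathcal S(U)$.

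Finally, for density I would argue that $\spn\{e_b:b\in\Z^n\}$ is norm-dense in $C(\T^n)$ (Stone--Weierstrass), and that the C*-closure of $\spn\{h_{U,\xi,g}\}$ in $C_{\mathrm b}(\R^n)$ contains all generators $p\mapsto e^{i\xi\cdot p}$ (taking $U=\{0\}$, $g=1$) as well as all $g\circ P_U$ (taking $\xi=0$), hence contains both $\Weylalg{\R^n}$ and a dense subalgebra of $C_{\mathcal R}(\R^n)$, and therefore equals $\Wr(\R^n)$. Combining these via the standard fact that the algebraic tensor product of dense subspaces is dense in any C*-tensor product completes the proof. The only genuinely nontrivial step is the Schwartz claim for $g''$, which is why I foregrounded the injectivity of $(P_U,P_{U'})|_{U''}$.
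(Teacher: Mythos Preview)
The paper does not give its own proof of this proposition; it simply refers to \cite[Proposition 7]{vNS20} and \cite[Proposition 5.9]{Stienstra}. Your argument is a correct self-contained proof and follows what is presumably the natural route taken in those references: factor through Theorem~\ref{thrm:tractable_resolvent_algebra}, show $h_{U,\xi,g}\,h_{U',\xi',g'}=h_{U'',\xi'',g''}$ with $U''=U+U'$ to get closure under multiplication, check the partial derivatives directly, and obtain density by exhibiting the generators of both $\Weylalg{\R^n}$ and $C_{\mathcal R}(\R^n)$ among the $h_{U,\xi,g}$.

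One small point to tighten: your justification of the injectivity of $(P_U,P_{U'})|_{U''}$ is not complete as written. From $u=-P_U(u')$ and $u'=-P_{U'}(u)$ it does not immediately follow that $u'\in U\cap U'$; that step would still require an inner-product (Cauchy--Schwarz) argument. The cleaner one-line route is to observe directly that $P_U(q)=0$ and $P_{U'}(q)=0$ mean $q\in U^\perp\cap(U')^\perp=(U+U')^\perp=(U'')^\perp$, and since $q\in U''$ this forces $q=0$. With that fix, everything else---the Schwartz property of $g''$ as the pullback of $g\otimes g'$ along a linear injection (restriction of a Schwartz function to a linear subspace is Schwartz, and linear isomorphisms preserve the Schwartz class), closure under $\partial_{q_j}$ and $\partial_{p_j}$, and density via Stone--Weierstrass on each tensor factor---is correct.
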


\section{Quantization of the resolvent algebra}
\label{sec:quantisation}

Having discussed the nice properties of $\Cr(T^*\T^n)$, we now ask whether there exists a quantum version of this algebra.
Contrary to the resolvent algebra $\mathcal{R}(\R^{2n}, \sigma)$ of Buchholz and Grundling, on $T^*\T^n$ it is hard -- if not impossible -- to define an algebra in terms of generators and relations implementing canonical commutation relations.
We therefore take a different approach.

We will define our quantization of the algebra $C_{\mathcal{R}}(T^\ast \T^n)$ as an algebra represented on $L^2(\T^n)$, using a version of Weyl quantization directly related to the definition of Landsman \cite[Section II.3.4]{landsman98} for general Riemannian manifolds.
By contrast, Rieffel's algebras on $T^*\T^n$ in \cite{Rieffel}, apart from being quantizations of (subalgebras of) $C_u(T^\ast \T^n)$, are defined as universal objects from which a physical quantum system is obtained as the image of one of its irreducible representations, and it is not always clear which representation corresponds to the physical system that one wishes to model.
These algebras have many inequivalent irreducible representations due to the fact that $\T$ is not simply connected, see e.g. \cite[Example 10.6]{Rieffel} and the discussion in \cite[Section 7.7]{landsman17}.
In \cite{Stienstra}, it is argued that such universal objects might still be suited as quantum observable algebras, but we will not pursue that path here.

An indisputable advantage of quantizing $C_{\mathcal{R}}(T^\ast \T^n)$ as an algebra of operators on $L^2(\T^n)$ lies in the explicit formula for the quantizations of the generators of $C_{\mathcal{R}}(T^\ast \T^n)$, which simplifies calculations.

Before we get to that formula, we motivate our extended definition of Weyl quantization by deriving it from its analogue on Euclidean phase space.


\subsection{Definition of the quantization map}
\label{subsec:definition_of_quantisation_map}

\noindent 
Let us first recall the basics of Weyl quantization in $\R^{2n}$, the quantization procedure in \cite{weyl27} conceived by Weyl.
Given say, a Schwartz function $f \in \mathcal{S}(\R^{2n})$, one associates an operator $\mathcal{Q}^\textnormal{W}_\hbar(f) \in \B(L^2(\R^n))$ to it as follows.
First, one expresses $f$ in terms of functions of the form
\begin{equation*}
\R^{2n} = \R^n \times \R^n \rightarrow \C, \quad 
(q,p) \mapsto e^{i(x \cdot q + y \cdot p)},
\end{equation*}
where $x,y \in \R^n$, by considering the Fourier transform of $f$.
One subsequently substitutes these exponential functions with the operators
\begin{equation*}
e^{i(x \cdot Q + y \cdot P)},
\end{equation*}
where $Q,P$ are vectors whose components are the essentially self-adjoint operators on $\S(\R^n)$ $\subseteq L^2(\R^n)$ defined by $Q_j\psi(x):=x_j\psi(x)$ and $P_j\psi(x):=-i\hbar\frac{d\psi}{dx_j}(x)$.
Thus, the Weyl quantization of a function $f$ is informally given by the expression
\begin{align*}
&\int_{\R^n} \int_{\R^n}\hat{f}(x,y)e^{ix\cdot Q+iy\cdot P} \: dx \: dy \\
&\quad = (2\pi)^{-2n} \int_{\R^n} \int_{\R^n} \int_{\R^n} \int_{\R^n} f(q,p) e^{i\hbar\frac{x \cdot y}{2}} e^{ix \cdot (Q - q)} e^{iy \cdot (P - p)} \: dq \: dp \: dx \: dy,
\end{align*}
where we take $\hbar > 0$.
To define the above integrals rigorously, we can insert a function $\psi \in \mathcal{S}(\R^n)$ on the right-hand side of the integrand, and check that the resulting expression is well defined and that it defines a bounded operator on $\mathcal{S}(\R^n)$ viewed as a subspace of $L^2(\R^n)$.
Since $\mathcal{S}(\R^n)$ is dense in $L^2(\R^n)$, the operator has a unique bounded extension to $L^2(\R^n)$, which we define to be $\mathcal{Q}^\textnormal{W}_\hbar(f)$.
Using standard identities for Fourier transforms of functions, and performing a number of substitutions, it can be shown that
\begin{equation*}
(\mathcal{Q}^\textnormal{W}_\hbar(f)\psi)(x)
= (2 \pi \hbar)^{-n} \int_{\R^n} \int_{\R^n} f\left(x + \frac{y}{2}, p \right) e^{-i\frac{ y \cdot p}{\hbar}} \psi(x + y) \: dp \: dy,
\end{equation*}
for each $\psi \in \mathcal{S}(\R^n)$ and each $x \in \R^n$.

We now adapt the Weyl quantization formula to $T^\ast \T^n$ in such a way that we can quantize elements of $C_{\mathcal{R}}(T^\ast \T^n)$.
We already identified a dense Poisson algebra of $\Cr(T^*\T^n)$ in Section \ref{sct:Basic Results}, namely the space $\mathcal{S}_{\mathcal{R}}(T^\ast \T^n)$ of finite linear combinations of functions of the form $e_b \otimes h_{U,\xi,g}$; see Proposition \ref{prop:Poisson_subalg_for_arbitrary_n}.
These are the functions that we will quantize.
Because these functions do not have to vanish at infinity, we need to do some extra work. We take inspiration from \cite{Rieffel}, regarding the integrals in the above formula as oscillatory integrals, and regularizing the expression by inserting a factor in the integrand in the form of a member of a net of functions that converges pointwise to the constant function $1_{\R^n}$, as in part (1) of the next proposition.
The proof of the proposition below is found in \cite[Proposition 16]{vNS20} and \cite[Proposition 7.1]{Stienstra} and is inspired by \cite[Proposition 1.11]{Rieffel}.

\begin{prop}

\noindent 
\begin{enumerate}[label=\textnormal{(\arabic*)}]
\item Let $f \in \mathcal{S}_{\mathcal{R}}(T^\ast \T^n)$, let $\hbar > 0$, and let $\psi \in C(\T^n)$.
Then for each $[x] \in \T^n$, the limit
\begin{equation}
\lim_{\delta \to 0} (2 \pi \hbar)^{-n} \int_{\R^n} \int_{\R^n} f\left(\left[x + \tfrac{1}{2}y\right], p \right) e^{-\frac{\delta}{2}p^2} e^{-i\frac{ y \cdot p}{\hbar}} \psi[x + y] \: dp \: dy,
\label{eq:Weyl_quantisation_formula}
\end{equation}
exists.
\item Suppose that $f = e_b \otimes h_{U,\xi,g}$ is a function as described in Definition \ref{def:resolvent_algebra_Schwartz_functions}, and consider $\psi_a[x]:=e^{2\pi i a\cdot x}$ for some $a \in \Z^n$.
Then the expression in equation \eqref{eq:Weyl_quantisation_formula} is equal to
\begin{equation*}
h_{U,\xi,g}(2\pi \hbar(a + \tfrac12 b)) \psi_{a+b}[x],
\end{equation*}
and the map defined on $\spn_{a\in\Z^n}\{\psi_a\}$ sending $\psi$ to the function on $\T^n$ that assigns to a point $[x] \in \T^n$ the limit in \eqref{eq:Weyl_quantisation_formula} extends in a unique way to a bounded linear operator on $L^2(\T^n)$ with norm $\leq \|g\|_\infty$.
\end{enumerate}
\label{prop:Weyl_quantisation_is_well-defined}
\end{prop}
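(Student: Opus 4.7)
My plan is to prove (2) first, since (1) then follows by linearity of the integrand and of the limit, together with the fact that $\mathcal{S}_{\mathcal{R}}(T^\ast\T^n)$ is spanned by the functions $e_b\otimes h_{U,\xi,g}$ and that, by Stone--Weierstrass, $\spn_{a\in\Z^n}\{\psi_a\}$ is dense in $C(\T^n)$. So I would fix $f = e_b\otimes h_{U,\xi,g}$ and $\psi = \psi_a$, write $c := 2\pi(a+\tfrac12 b)$, and substitute directly. Using $\xi\in U^\perp$, the integrand of \eqref{eq:Weyl_quantisation_formula} becomes
\[
e^{2\pi i(a+b)\cdot x}\, e^{i c\cdot y}\, e^{i\xi\cdot p}\, g(P_U p)\, e^{-\delta p^2/2}\, e^{-iy\cdot p/\hbar}.
\]
The key structural observation is that $g$ depends only on $P_U p$, $\xi$ pairs only with $P_{U^\perp}p$, and both $y\cdot p$ and the Gaussian regulator split along the orthogonal decomposition $\R^n = U\oplus U^\perp$. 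So after splitting $y = y_U + y_{U^\perp}$ and $p = p_U + p_{U^\perp}$ the double integral factors cleanly into a product of an integral over $U\times U$ and one over $U^\perp\times U^\perp$.

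On $U\times U$, Schwartz decay of $g$ makes the integral absolutely convergent; dominated convergence kills the regulator, and Fourier inversion (integrate $p_U$ first, then $y_U$) yields the factor $g(\hbar P_U c) = g(P_U(2\pi\hbar(a+\tfrac12 b)))$. On $U^\perp\times U^\perp$ the integral diverges absolutely, so the regulator is essential. Integrating out $p_{U^\perp}$ against the Gaussian produces a factor $(2\pi/\delta)^{(\dim U^\perp)/2}\exp(-|\xi - y_{U^\perp}/\hbar|^2/(2\delta))$; after the substitution $z = y_{U^\perp}/\hbar$ this becomes the constant $(2\pi)^{\dim U^\perp}$ times the normalized Gaussian of variance $\delta$ centered at $\xi$. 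Testing against the bounded continuous function $z\mapsto e^{i\hbar P_{U^\perp}c\cdot z}$ and using that a normalized Gaussian is an approximate identity, the $\delta\to 0$ limit is $e^{i\hbar\xi\cdot P_{U^\perp}c} = e^{i\xi\cdot 2\pi\hbar(a+\tfrac12 b)}$. Multiplying the two factors gives exactly $h_{U,\xi,g}(2\pi\hbar(a+\tfrac12 b))\psi_{a+b}[x]$, as required.

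For the boundedness claim, since $\{\psi_a\}_{a\in\Z^n}$ is an orthonormal basis of $L^2(\T^n)$, the operator defined on $\spn_{a\in\Z^n}\{\psi_a\}$ by part (2) acts as a weighted shift $\psi_a\mapsto \lambda_a\psi_{a+b}$ with $\lambda_a = h_{U,\xi,g}(2\pi\hbar(a+\tfrac12 b))$ and $|\lambda_a|\le \|g\|_\infty$. Because $a\mapsto a+b$ is a bijection of $\Z^n$, the images $\{\psi_{a+b}\}_a$ are still orthonormal, so for any finite sum $\sum c_a\psi_a$ we have $\|\sum c_a\lambda_a\psi_{a+b}\|_2^2 = \sum|c_a|^2|\lambda_a|^2\le \|g\|_\infty^2\sum|c_a|^2$. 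Density of $\spn\{\psi_a\}$ in $L^2(\T^n)$ gives a unique bounded extension with norm $\leq \|g\|_\infty$.

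The one step where real care is required is the distributional limit in the $U^\perp$-integral: the naive integrals do not converge and one cannot swap limit with integration in the variables separately. The cleanest way I see is the Gaussian computation above, which reduces the whole matter to the elementary fact that a family of normalized Gaussians converges weakly to a Dirac mass when tested against the bounded continuous exponential $e^{i\hbar P_{U^\perp}c\cdot z}$. Every other step (factorization, Fourier inversion on $U$, boundedness via the weighted-shift picture) is either algebraic or a standard application of dominated convergence.
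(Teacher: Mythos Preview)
Your computation for part (2) is correct and is essentially the approach one expects (and is in the spirit of the Rieffel-style regularization the paper cites). The factorization along $U\oplus U^\perp$ is justified because, for fixed $\delta>0$, the inner $p$-integral is absolutely convergent and yields a function of $y$ that factors as $G_\delta(y_U)H_\delta(y_{U^\perp})$, each factor integrable; Fubini then gives the product. The Fourier inversion on $U$ and the Gaussian approximate-identity argument on $U^\perp$ are both sound, and the weighted-shift bound for the operator norm is clean.

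There is, however, a genuine gap in your reduction of (1) to (2). You claim that the existence of the limit for general $\psi\in C(\T^n)$ follows from the case $\psi=\psi_a$ by linearity and density of trigonometric polynomials. Density alone does not transfer the existence of a limit: you need that the linear functionals $\psi\mapsto L_\delta(\psi)$ (the expression in \eqref{eq:Weyl_quantisation_formula} at a fixed $[x]$) are bounded on $C(\T^n)$ \emph{uniformly in $\delta$}. Only then can a standard $\epsilon/3$ argument show that $(L_\delta(\psi))_\delta$ is Cauchy for arbitrary $\psi$. This uniform bound does hold and comes out of your own computation: for $f=e_b\otimes h_{U,\xi,g}$, after performing the $p$-integral one has
\[
|L_\delta(\psi)|\le \|\psi\|_\infty\,(2\pi\hbar)^{-n}\int_{\R^n}|G_\delta(y_U)|\,|H_\delta(y_{U^\perp})|\,dy,
\]
where $\int_{U^\perp}|H_\delta|\,dy_{U^\perp}=(2\pi\hbar)^{\dim U^\perp}$ (normalized Gaussian after the substitution $z=y_{U^\perp}/\hbar$) and $\int_U|G_\delta|\,dy_U\le(2\pi\hbar)^{\dim U}\|\hat g\|_1$, since $\widehat{g\,e^{-\delta(\cdot)^2/2}}=\hat g * \hat\phi_\delta$ with $\hat\phi_\delta$ a probability density. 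This gives $|L_\delta(\psi)|\le \|\hat g\|_1\|\psi\|_\infty$ uniformly in $\delta$, which closes the gap. You should state and use this bound explicitly; without it the density step is formally incorrect.
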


\noindent 
The above proposition justifies the following definitions.
\begin{defi}
For each $\hbar > 0$, we define the \textbf{Weyl quantization} $\QW(f)$ of $f\in\Sr(T^*\T^n)$ to be the unique bounded linear extension of the operator on $\spn_{a\in\Z^n}\{\psi_a\}$ defined by the formula
\begin{align}\label{eq:formula}
	\QW(e_b\otimes h)\psi_a:=h(2\pi\hbar(a+\tfrac12 b))\psi_{a+b}\,.
\end{align}
We thus obtain a map $\mathcal{Q}^\textnormal{W}_\hbar \colon \mathcal{S}_{\mathcal{R}}(T^\ast \T^n) \rightarrow \B(L^2(\T^n))$, for each $\hbar > 0$.
We define the (quantum) \textbf{resolvent algebra on} $\T^n$ to be the C*-subalgebra $A_\hbar$ of $\B(L^2(\T^n))$ generated by the image of $\Sr(\T^*\T^n)$ under $\mathcal{Q}^\textnormal{W}_\hbar$.
\end{defi}

The Weyl quantization can easily be seen to restrict to the Weyl quantization defined in \cite[Definition II.3.4.4]{landsman98}. In other words, the two approaches given either by seeing $\T^n$ as a Riemannian manifold, or by seeing it as a quotient of $\R^n$, are equivalent. We thus uncover \eqref{eq:formula} as an effective way to quantize more functions than just the ones vanishing at infinity, which, as argued in the introduction, is crucial for obtaining an infinite dimensional limit. The following proposition shows further properties of our quantization map. For the proof we refer to \cite[Proposition 18]{vNS20} and \cite[Proposition 7.4]{Stienstra}.

\begin{prop}\label{prop:Weyl quantization properties}
Let $\hbar > 0$.
\begin{enumerate}[label=\textnormal{(\arabic*)}]
\item The Weyl quantization map is linear and *-preserving;\label{it:linear and star preserving}

\item For each $\hbar^\prime > 0$, we have $A_\hbar = A_{\hbar^\prime}$;

\item The image of
\begin{equation*}
\spn \set{e_b \otimes g}{b \in \Z^n, \: g \in \mathcal{S}(\R^n)}
\subseteq \mathcal{S}_{\mathcal{R}}(T^\ast \T^n) \cap C_0(T^\ast \T^n) ,
\end{equation*}
under $\mathcal{Q}^\textnormal{W}_\hbar$ lies dense in the space $\mK(L^2(\T^n))$ of compact operators;

\item Under the canonical embedding
\begin{equation*}
\B(L^2(\T^n))\hookrightarrow \B(L^2(\T^{n+m})) \cong \B(L^2(\T^n)) \hatotimes \B(L^2(\T^m)) , \quad a \mapsto a \otimes \1 ,
\end{equation*}
induced by the projection at the level of configuration spaces $\T^{n+m}\rightarrow\T^{n}$ onto the first $n$ coordinates, the image of the resolvent algebra on $\T^{n}$ is a subalgebra of the resolvent algebra on $\T^{n+m}$.
(Here, $\hatotimes$ denotes the von Neumann algebraic tensor product.)

\item Let $\rho_0$ be the group representation of $\T^n$ on $C_\textnormal{b}(T^\ast \T^n)$ given by
\begin{align*}
\rho_0[x]f:=\left(\,(q,p) \mapsto f(-x+q,p)\,\right),
\end{align*}
and let $\rho_\hbar$ be the group representation of $\T^n$ on $\B(L^2(\T^n))$ given by
\begin{equation*}
\rho_\hbar[x]a
:=  L^*_{[x]} a L^*_{[-x]}\,,
\end{equation*}
where $L^* \colon \T^n \rightarrow U(L^2(\T^n))$ denotes the left regular representation of $\T^n$.
Then both $C_{\mathcal{R}}(T^\ast \T^n)$ and $\mathcal{S}_{\mathcal{R}}(T^\ast \T^n)$ are invariant under $\rho_0$.
Furthermore, the Weyl quantization map is equivariant with respect to these representations.
\end{enumerate}
\label{prop:quantisation_notable_properties}
\end{prop}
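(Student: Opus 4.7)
The plan is to read off each claim from the explicit formula \eqref{eq:formula}, using that $\{\psi_a\}_{a\in\Z^n}$ is an orthonormal basis of $L^2(\T^n)$ so every bounded operator is determined by its matrix elements in this basis. First I would record that, for $f = e_b\otimes h\in\Sr(T^*\T^n)$,
\begin{equation*}
\langle \psi_c,\QW(f)\psi_a\rangle = h(2\pi\hbar(a+\tfrac12 b))\,\delta_{c,a+b}.
\end{equation*}

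For \ref{it:linear and star preserving}, linearity is immediate from \eqref{eq:formula}, and the *-preservation comes from the above formula: on generators $\overline{e_b\otimes h}=e_{-b}\otimes\overline h$, and one checks the matrix elements of $\QW(e_{-b}\otimes\overline h)$ agree with those of $\QW(e_b\otimes h)^*$ by relabeling $a\mapsto a+b$. For (2), observe that scaling in the momentum variable turns a generator at parameter $\hbar$ into a generator at $\hbar'$: if $\tilde h(x):=h(\tfrac{\hbar}{\hbar'}x)$, then $\QW[\hbar'](e_b\otimes \tilde h)=\QW[\hbar](e_b\otimes h)$. Since each function class $h_{U,\xi,g}$ is preserved under dilation (it just rescales $\xi$ and $g$), $\mathcal{S}_{\mathcal{R}}(T^\ast \T^n)$ is invariant, so the images -- and hence the generated C*-algebras -- coincide.

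For (3), the rank-one operators $|\psi_c\rangle\langle\psi_a|$ span a dense subset of $\mK(L^2(\T^n))$, so it suffices to approximate each of these by $\QW(e_{c-a}\otimes g)$ with $g\in\S(\R^n)$. Setting $b=c-a$, the formula requires $g$ to take value $1$ at $2\pi\hbar(a+\tfrac12 b)=\pi\hbar(a+c)$ and to be small on the discrete set $\{2\pi\hbar(a'+\tfrac12 b):a'\neq a\}$ (a shifted copy of $2\pi\hbar\Z^n$ missing one point), which is easily achieved with a Schwartz bump, yielding operator-norm approximation by a Schur-test type estimate. For (4), combine the tensor decomposition from Theorem \ref{thrm:tractable_resolvent_algebra} with the fact that under the identification $L^2(\T^{n+m})\cong L^2(\T^n)\hatotimes L^2(\T^m)$ one has $\psi_{(a_1,a_2)}=\psi_{a_1}\otimes\psi_{a_2}$: for $f=e_b\otimes h\in\Sr(T^*\T^n)$, the embedded function $e_{(b,0)}\otimes (h\otimes 1_{\R^m})$ lies in $\Sr(T^*\T^{n+m})$ (taking $U$ inside $\R^n\subseteq\R^{n+m}$, $\xi$ extended by $0$, and the remaining $\R^m$-direction absorbed into the character) and the formula immediately yields $\QW[\T^{n+m}](f\otimes 1)=\QW[\T^n](f)\otimes\1$. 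Passing to C*-algebraic closures gives the claim.

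For (5), $\mathcal{S}_{\mathcal{R}}$ and $\Cr$ are invariant under $\rho_0$ because translating an $e_b\otimes h$ in $q$ only multiplies by the scalar $e^{-2\pi i b\cdot x}$. Equivariance then reduces to a direct computation: $L^*_{[x]}\psi_a=e^{-2\pi i a\cdot x}\psi_a$, so
\begin{equation*}
\rho_\hbar[x]\QW(e_b\otimes h)\psi_a = e^{-2\pi i b\cdot x}h(2\pi\hbar(a+\tfrac12 b))\psi_{a+b} = \QW(\rho_0[x](e_b\otimes h))\psi_a,
\end{equation*}
and extension by linearity and continuity finishes the claim. The main (minor) obstacle is (3), where one must produce explicit Schwartz bumps that are small on a shifted lattice while normalized at a chosen point; everything else is a mechanical check against \eqref{eq:formula}.
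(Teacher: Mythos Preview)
The paper does not actually prove this proposition; immediately after stating it, it refers the reader to \cite[Proposition 18]{vNS20} and \cite[Proposition 7.4]{Stienstra}. So there is no in-paper argument to compare against, and your direct approach via the formula \eqref{eq:formula} is exactly the natural one.

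Your arguments for (1), (2), (4), and (5) are correct as written. A small remark on (3): the statement says the image \emph{lies dense in} $\mK(L^2(\T^n))$, which includes the assertion that the image is contained in $\mK$. You only address density. Containment is immediate---for $g\in\S(\R^n)$ the sequence $a'\mapsto g(2\pi\hbar(a'+\tfrac12 b))$ vanishes at infinity, so $\QW(e_b\otimes g)$ is a norm limit of finite-rank truncations and hence compact---but it should be said. Also, your Schur-test remark is slightly more machinery than needed: the operator $\QW(e_b\otimes g)-|\psi_c\rangle\langle\psi_a|$ sends $\psi_{a'}$ to $\bigl(g(2\pi\hbar(a'+\tfrac12 b))-\delta_{a',a}\bigr)\psi_{a'+b}$, which is unitarily equivalent to a multiplication operator on $\ell^2(\Z^n)$, so its norm is simply $\sup_{a'}|g(2\pi\hbar(a'+\tfrac12 b))-\delta_{a',a}|$. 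Choosing $g$ to be a Schwartz bump equal to $1$ at $\pi\hbar(a+c)$ and supported in a ball of radius less than $\pi\hbar$ makes this supremum zero, so you even get equality rather than approximation.
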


\begin{rema}
Because of part (2) of this proposition, we will write $A_\hbar$ for the C*-algebra generated by $\mathcal{Q}^\textnormal{W}_{\hbar^\prime}(\mathcal{S}_{\mathcal{R}}(T^\ast \T^n))$ for any value of $\hbar^\prime > 0$ without specifying $\hbar$.
Part (3) is the analogue of the first part of \cite[Corollary II.2.5.4]{landsman98} in the present setting. Part (4) indicates that our quantum algebras respect the embedding maps corresponding to the addition of edges in lattice gauge theory. Part (5) is the analogue of \cite[Theorem II.2.5.1]{landsman98}, and indicates in particular that gauge transformations can be easily incorporated into our framework.
\end{rema}

Having displayed the basic useful properties of $\Cr(T^*\T^n)$, of $A_\hbar$, and of the quantization map $\QW$ between them, we are ready to prove our two most surprising results. They state that both $\Cr(T^*\T^n)$ and $A_\hbar$ are preserved by all respective time evolutions, including an arbitrary interaction.

\section{Classical time evolution}\label{sec:classical time evolution}
%

\noindent In this section, we prove that $\Cr(T^*\T^n)$ is preserved under the (time) flow induced by the Hamiltonian
	$$H(q,p)=\tfrac{1}{2}p^2 + V(q),$$
for each potential $V\in C^1(\T^n)_\sa$ such that $\nabla V$ is Lipschitz continuous.
This is arguably the most natural assumption on $V$; the Picard--Lindel\"of theorem then ensures that the Hamilton equations have unique solutions. 

Precisely stated, for every $(q_0,p_0)\in \T^n\times\R^n$, there exist unique functions $q \colon \R \rightarrow \T^n$ and $p \colon \R \rightarrow \R^n$ that satisfy
\begin{equation}
\label{eq:q and p}
\left\{
\begin{alignedat}{2}
(\dot{q}(t),\dot{p}(t)) &= (p(t),-\nabla V(q(t))) \qquad && t \in \R, \\
(q(0),p(0)) &= (q_0, p_0). \qquad &&
\end{alignedat}
\right. 
\end{equation}
Note that the expression on the right-hand side of the first line of equation \eqref{eq:q and p} is the Hamiltonian vector field $X_H$ corresponding to $H$ evaluated at $(q(t),p(t))$.
For each $t \in \R$, the time evolution of the system after time $t$ is the map
\begin{equation*}
\Phi^t_V \colon \T^n\times\R^n \rightarrow \T^n\times\R^n, \quad 
(q_0,p_0)\mapsto(q(t),p(t)),
\end{equation*}
which is the flow corresponding to $X_H$ evaluated at time $t$; it is well-known to be a homeomorphism.

Note that we have already made the notation of the flow less cumbersome by writing $\Phi^t_V$ instead of $\Phi^t_{X_H}$.
In what follows, we restrict our attention to the case $t = 1$, further simplifying the notation by defining $\Phi_V:=\Phi^1_V$.
The following lemma shows that we may do so without loss of generality:

\begin{lem}\label{lem:t=1 or arbitrary t}
	The algebra $\Cr(T^*\T^n)$ is preserved under the pullback of $\Phi_V$ for each $V$ if and only if it is preserved under the pullback of $\Phi_V^t$ for each $V$, for each $t\in\R$.
\end{lem}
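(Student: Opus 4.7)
The ``if'' direction is immediate: take $t=1$. For the ``only if'' direction, the plan is to relate the flow $\Phi_V^t$ at arbitrary time $t$ to the time-$1$ flow of a rescaled potential, using a symplectic time-rescaling. Assume $t\neq 0$ (the case $t=0$ is trivial since $\Phi_V^0=\id$). Define the scaling map
\[
S_t \colon \T^n\times\R^n \to \T^n\times\R^n, \quad S_t(q,p):=(q,tp),
\]
which is a homeomorphism with inverse $S_{1/t}$. If $(q(\cdot),p(\cdot))$ solves \eqref{eq:q and p} with initial data $(q_0,p_0)$, then a direct computation shows that $\tilde q(\tau):=q(t\tau)$, $\tilde p(\tau):=t\,p(t\tau)$ satisfies the Hamilton equations
\[
\dot{\tilde q}(\tau)=\tilde p(\tau),\qquad \dot{\tilde p}(\tau)=-\nabla(t^2V)(\tilde q(\tau)),\qquad(\tilde q(0),\tilde p(0))=(q_0,tp_0),
\]
so that $\Phi_{t^2V}(S_t(q_0,p_0))=S_t(\Phi_V^t(q_0,p_0))$. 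In other words,
\[
\Phi_V^t = S_t^{-1}\circ\Phi_{t^2 V}\circ S_t,
\]
and therefore on pullbacks
\[
(\Phi_V^t)^{\ast} = S_t^{\ast}\circ(\Phi_{t^2V})^{\ast}\circ(S_t^{-1})^{\ast}.
\]

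The remaining task is to check that $S_t^{\ast}$ (and hence $(S_t^{-1})^{\ast}=S_{1/t}^{\ast}$) preserves $\Cr(T^*\T^n)$. For this I would use the concrete description of Theorem \ref{thrm:tractable_resolvent_algebra}, namely $\Cr(T^*\T^n)=C(\T^n)\hatotimes\Wr(\R^n)$. The map $S_t^{\ast}$ acts as the identity on the $C(\T^n)$-factor and as the pullback $p\mapsto tp$ on the $\Wr(\R^n)$-factor, so it suffices to verify that the latter preserves $\Wr(\R^n)$. This is immediate on the generators \eqref{generators}: $1/(i+(tp)\cdot v)=1/(i+p\cdot(tv))$ and $e^{i(tp)\cdot v}=e^{ip\cdot(tv)}$ are again generators of $\Wr(\R^n)$ (with $v$ replaced by $tv$). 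By continuity and multiplicativity of the pullback, $S_t^{\ast}$ restricts to a $*$-automorphism of $\Wr(\R^n)$, and hence of $\Cr(T^*\T^n)$.

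Assuming now that $(\Phi_W)^{\ast}$ preserves $\Cr(T^*\T^n)$ for every admissible potential $W$, apply this with $W=t^2V$; since $\nabla(t^2V)=t^2\nabla V$ is still Lipschitz, $t^2V$ satisfies the hypotheses, and $(\Phi_{t^2V})^{\ast}$ preserves $\Cr(T^*\T^n)$. Composing with the automorphisms $S_t^{\ast}$ and $(S_t^{-1})^{\ast}$ yields that $(\Phi_V^t)^{\ast}$ preserves $\Cr(T^*\T^n)$ as well. There is no real obstacle here; the only point requiring genuine input is the tensor-product description of $\Cr(T^*\T^n)$, which is already provided by Theorem \ref{thrm:tractable_resolvent_algebra}. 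Everything else is a clean symplectic rescaling.
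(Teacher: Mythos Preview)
Your proof is correct and follows essentially the same approach as the paper: both introduce the momentum-scaling map $(q,p)\mapsto(q,tp)$, verify the conjugation identity $\Phi_V^t = S_t^{-1}\circ\Phi_{t^2V}\circ S_t$ via the rescaled integral curve, and observe that the pullback of $S_t$ preserves $\Cr(T^*\T^n)$. The only difference is that you spell out the preservation of $\Cr(T^*\T^n)$ explicitly on generators via Theorem~\ref{thrm:tractable_resolvent_algebra}, whereas the paper simply remarks that linearity of the momentum part suffices.
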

\begin{proof}
	For any $t\neq0$ (as $t=0$ is trivial), we make the following transformation on phase space
	$$\phi(q,p):=(q,tp).$$
	Because the momentum part of $\phi$ is linear, its pullback preserves the commutative resolvent algebra. Given an integral curve $(q(t),p(t))$ of the vector field $X_H$ corresponding to the potential $V$, i.e., a solution of equation \eqref{eq:q and p}, one can easily check that $s\mapsto\phi(q(ts),p(ts))$ is an integral curve corresponding to the potential $t^2V$. We therefore conclude that
	$$\Phi_V^t(q_0,p_0)=\phi^{-1} \circ \Phi^1_{t^2V} \circ \phi(q_0,p_0),$$
	which implies the claim.
\end{proof}
\noindent We prove our main theorem in three steps: taking $V=0$; taking $V$ trigonometric; and finally taking general $V$.
In the second and third step we will need the following consequence of Gronwall's inequality.
Let $d$ denote the canonical distance function on $\T^n$ as well as on $\T^n\times\R^n$. (Note that these distance functions are the ones induced by the canonical Riemannian metrics on $\T^n$ and $T^*\T^n\cong\T^n\times\R^n$, respectively.)
\begin{lem}\label{lem:Gronwall}
	Let $f,g\colon\T^n\times\R^n\rightarrow\R^{2n}$ be Lipschitz continuous functions, let $c$ be the Lipschitz constant of $f$, and let $y,z\colon[0,1]\rightarrow \T^n\times\R^n$ be curves that satisfy $\dot{y}(t)=f(y(t))$ and $\dot{z}(t)=g(z(t))$ for each $t\in[0,1]$.
Finally, suppose that $\varepsilon > 0$ is a number such that $\supnorm{f-g}\leq\varepsilon$.
Then we have
	$$d(y(t),z(t))\leq(d(y(0),z(0))+t\varepsilon)e^{tc}.$$
\end{lem}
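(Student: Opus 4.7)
My plan is to reduce the problem to the standard integral form of Gronwall's inequality by lifting everything to the universal cover. Choose lifts $\tilde y, \tilde z \colon [0,1] \to \R^n \times \R^n$ of the curves $y, z$ to the universal cover $\R^n \times \R^n$ of $\T^n \times \R^n$, where I pick the lift of $z$ so that $|\tilde y(0) - \tilde z(0)| = d(y(0), z(0))$ (this is possible because the canonical Riemannian metric on $\T^n \times \R^n$ is induced by the Euclidean one on the cover). Since the covering map is a local isometry, we always have $d(y(t), z(t)) \le |\tilde y(t) - \tilde z(t)|$, and $\dot{\tilde y}(t) = f(y(t))$, $\dot{\tilde z}(t) = g(z(t))$ as vectors in $\R^{2n}$.

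Define $u(t) := |\tilde y(t) - \tilde z(t)|$. The fundamental theorem of calculus yields
\begin{align*}
u(t) &\le u(0) + \int_0^t \bigl| \dot{\tilde y}(s) - \dot{\tilde z}(s) \bigr| \, ds \\
&\le u(0) + \int_0^t \bigl| f(y(s)) - f(z(s)) \bigr| \, ds + \int_0^t \bigl| f(z(s)) - g(z(s)) \bigr| \, ds.
\end{align*}
Using that $f$ is $c$-Lipschitz with respect to $d$, and that $\|f-g\|_\infty \le \varepsilon$, the right-hand side is bounded above by
\begin{align*}
u(0) + c \int_0^t d(y(s), z(s)) \, ds + t\varepsilon \le d(y(0),z(0)) + t\varepsilon + c \int_0^t u(s) \, ds,
\end{align*}
where I used $d(y(s),z(s)) \le u(s)$ and the choice of lift. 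Thus $u$ satisfies the integral inequality $u(t) \le a(t) + c\int_0^t u(s)\,ds$ with $a(t) := d(y(0),z(0)) + t\varepsilon$ non-decreasing.

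Applying Gronwall's inequality in its integral form to this non-decreasing majorant gives $u(t) \le a(t) e^{ct} = (d(y(0),z(0)) + t\varepsilon) e^{tc}$, and since $d(y(t),z(t)) \le u(t)$, the claim follows. The only subtlety is the passage to the universal cover and verifying that the Lipschitz constant of $f$ transfers correctly; both are routine because the covering map is a local isometry. No heavy analytical machinery is needed beyond classical Gronwall, so this lemma should present no real obstacle.
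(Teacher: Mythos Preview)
Your proof is correct and follows essentially the same route as the paper: bound the distance via the fundamental theorem of calculus, split $f(y)-g(z)$ into a Lipschitz part and an $\varepsilon$-part, and apply the integral Gronwall inequality. The only cosmetic difference is that the paper exploits translation invariance of the metric on $\T^n\times\R^n$ to bound $d(y(t),z(t))$ directly (avoiding your auxiliary lift $u(t)$), but the underlying argument is identical.
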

\begin{proof}
	By translation invariance of the metric on $\T^n\times\R^n$, we have
	\begin{align*}
		d(y(t),z(t))&\leq d((y(t)-y(0))-(z(t)-z(0)),0)+d(y(0),z(0))\\
		&\leq\int_0^t\norm{f(y(s))-g(z(s))}\:ds+d(y(0),z(0))\\
		&\leq c\int_0^t d(y(s),z(s))\:ds+t\varepsilon+d(y(0),z(0)).
	\end{align*}
	With the integral version of Gronwall's inequality, this implies the lemma.
\end{proof}

\subsection{Free time evolution}\label{sct:free time evolution}

\noindent
For each pair $(q_0,p_0)\in \T^n\times\R^n$, we have $q(t)=q_0+tp_0$ and $p(t)=p_0$, denoting the usual action of $\R^n$ on $\T^n$ by $+$.
The latter notation, explicitly written as $[x]+p=[x+p]$ for $x,p\in\R^n$, will be used in the remainder of this dissertation.
We find that $\Phi_0(q_0,p_0)=(q_0+p_0,p_0)$, and obtain the following preliminary result.
Let ${}^*$ denote the pullback.
\begin{lem}\label{lem:free time evolution}
	Free time evolution preserves the commutative resolvent algebra, i.e.,
	$$\Phi_0^*(\Cr(T^*\T^n))\subseteq\Cr(T^*\T^n).$$
\end{lem}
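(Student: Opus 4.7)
The plan is to reduce the claim to checking it on a set of generators, exploiting the tensor product description of $\Cr(T^*\T^n)$ provided by Theorem~\ref{thrm:tractable_resolvent_algebra}.

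Since $\Phi_0\colon T^*\T^n\to T^*\T^n$, $(q,p)\mapsto(q+p,p)$, is a homeomorphism, its pullback $\Phi_0^*$ is an isometric $*$-automorphism of $C_\textnormal{b}(T^*\T^n)$. It therefore suffices to exhibit a generating set $\mathcal{G}\subseteq\Cr(T^*\T^n)$ such that $\Phi_0^*(\mathcal{G})\subseteq\Cr(T^*\T^n)$: the $*$-subalgebra generated by $\mathcal{G}$ will then be mapped into $\Cr(T^*\T^n)$, and by continuity of $\Phi_0^*$ and closedness of $\Cr(T^*\T^n)$ the whole of $\Cr(T^*\T^n)$ will be sent into $\Cr(T^*\T^n)$.

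By Theorem~\ref{thrm:tractable_resolvent_algebra} we have $\Cr(T^*\T^n)=C(\T^n)\hatotimes\Wr(\R^n)$. A convenient generating set is
\begin{equation*}
\mathcal{G}:=\{e_b\otimes 1:~b\in\Z^n\}\cup\{1\otimes r_v:~v\in\R^n\}\cup\{1\otimes \epsilon_\xi:~\xi\in\R^n\},
\end{equation*}
where $r_v(p):=(i+p\cdot v)^{-1}$ and $\epsilon_\xi(p):=e^{i\xi\cdot p}$; the functions $\{e_b\}_{b\in\Z^n}$ generate $C(\T^n)$ by Stone--Weierstrass, while the $r_v$ together with the $\epsilon_\xi$ generate $\Wr(\R^n)$ by its very definition.

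The computation on $\mathcal{G}$ is then immediate. For any $h$ depending only on the momentum variable,
\begin{equation*}
\Phi_0^*(1\otimes h)(q,p)=h(p)=(1\otimes h)(q,p),
\end{equation*}
so $\Phi_0^*$ fixes $1\otimes r_v$ and $1\otimes \epsilon_\xi$. On the remaining generators,
\begin{equation*}
\Phi_0^*(e_b\otimes 1)(q,p)=e^{2\pi i b\cdot(q+p)}=e_b(q)\,\epsilon_{2\pi b}(p),
\end{equation*}
so $\Phi_0^*(e_b\otimes 1)=e_b\otimes \epsilon_{2\pi b}\in C(\T^n)\hatotimes\Wr(\R^n)=\Cr(T^*\T^n)$. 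Thus $\Phi_0^*(\mathcal{G})\subseteq\Cr(T^*\T^n)$, and the argument of the first paragraph completes the proof.

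The only potential subtlety is the use of the identification $\Cr(T^*\T^n)=C(\T^n)\hatotimes\Wr(\R^n)$, but this has already been established in Theorem~\ref{thrm:tractable_resolvent_algebra}; beyond that, the proof is essentially a one-line computation on generators, which makes free time evolution the simplest of the three steps outlined for the main theorem.
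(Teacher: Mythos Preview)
Your proof is correct and follows the same overall strategy as the paper: check the claim on a set of generators and then use that $\Phi_0^*$ is a $*$-homomorphism of $C_\textnormal{b}(T^*\T^n)$ to conclude. The difference is in the choice of generating set. You work with the ``bare'' C*-algebraic generators $e_b\otimes 1$, $1\otimes r_v$, $1\otimes\epsilon_\xi$ coming from Theorem~\ref{thrm:tractable_resolvent_algebra} and \eqref{generators}, while the paper works with the generators $e_b\otimes h_{U,\xi,g}$ of the dense Poisson subalgebra $\mathcal{S}_{\mathcal{R}}(T^*\T^n)$, showing explicitly that $\Phi_0^*(e_b\otimes h_{U,\xi,g})=e_b\otimes h_{U,\tilde\xi,\tilde g}$ with $\tilde\xi=\xi+2\pi P_{U^\perp}(b)$ and $\tilde g(p)=e^{2\pi i P_U(b)\cdot p}g(p)$. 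Your route is a bit more economical for the bare statement. The paper's route, however, proves more: it shows that $\Phi_0^*$ preserves $\mathcal{S}_{\mathcal{R}}(T^*\T^n)$ itself, and the explicit formulas for $\tilde\xi,\tilde g$ are exactly those reappearing in the proof of Lemma~\ref{lem:free quantum time evolution} on the quantum side, which is the point of the remark following that lemma.
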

\begin{proof}
We have
	$$\Phi_0^*(e_b\otimes h_{U,\xi,g})(q_0,p_0)=e_b(q_0)e^{ 2\pi i b\cdot p_0}e^{i\xi\cdot p_0}g(P_U (p_0)).$$
Defining $\tilde{g}\in C_0(U)$ by $\tilde{g}(p):=e^{2\pi iP_U(b)\cdot p}g(p)$, and $\tilde{\xi}:=\xi+2\pi P_{U^\perp}(b)$, we obtain
	$$\Phi_0^*(e_b\otimes h_{U,\xi,g})=e_b\otimes h_{U,\tilde{\xi},\tilde{g}}.$$
Thus the generators of $\Cr(T^*\T^n)$ are mapped into $\Cr(T^*\T^n)$ by $\Phi_0^*$, and since this map is a *-homomorphism, the lemma follows.
\end{proof}

\subsection{Trigonometric potentials}\label{sct:trigonometric potentials}

We say that $V$ is a \textit{trigonometric potential} if it is real-valued and of the form $V=\sum_{b\in\mN}a_be_b$, for some coefficients $a_b\in\C$ and a finite subset $\mN\subseteq\Z^n$. The main trick used to establish time invariance of the commutative resolvent algebra is to use induction on the size of $\mN$. The induction basis, $\mN=\emptyset$, corresponds to free time evolution. In order to carry out the induction step we fix a vector $b\in\mN$, and compare the dynamics corresponding to $V$ with the dynamics corresponding to $V-V_b$, where $$V_b:=a_be_b+a_{-b}e_{-b}.$$
Similar to the already defined curves $q \colon [0,1]\rightarrow\T^n$ and $p \colon [0,1]\rightarrow\R^n$, the dynamics corresponding to $V-V_b$ of the point $(q_0,p_0)$ is encapsulated by the curves $\tilde{q} \colon[0,1]\rightarrow\T^n$ and $\tilde{p} \colon [0,1]\rightarrow\R^n$ satisfying
\begin{equation}
\label{eq:tilde_q and tilde_p}
\left\{
\begin{alignedat}{2}
(\dot{\tilde{q}}(t),\dot{\tilde{p}}(t)) &= (\tilde{p}(t),-\nabla (V - V_b)(\tilde{q}(t))) \qquad && t \in \R, \\
(\tilde{q}(0),\tilde{p}(0)) &= (q_0, p_0). \qquad &&
\end{alignedat}
\right. 
\end{equation}
We compare the two dynamics in the following proposition.

\begin{prop}\label{lem:long range convergence of dynamics}
	Let $b\in\Z^n$ and $\delta>0$. There exists a $D_b>0$ such that for each $(q_0,p_0)\in \T^n\times\R^n$ satisfying $|b\cdot p_0|>D_b$, we have
		$$ d\left(\Phi_V(q_0,p_0),\Phi_{V-V_b}(q_0,p_0)\right)<\delta.$$
\end{prop}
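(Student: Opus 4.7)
The central difficulty is that the difference $\nabla V_b$ between the two Hamiltonian vector fields is not small in sup norm, so a direct application of Lemma \ref{lem:Gronwall} gives only a bound proportional to $\|\nabla V_b\|_\infty e^c$, which does not vanish as $|b\cdot p_0|\to\infty$. Instead, the plan is to exploit the fact that the cumulative effect of $\nabla V_b$ on the trajectory over the fixed time interval $[0,1]$ is small because, along a trajectory whose initial momentum has a large component in direction $b$, the functions $e^{\pm 2\pi i b\cdot \tilde q(s)}$ oscillate rapidly and hence integrate to something small (non-stationary phase). The case $b=0$ is trivial since $V_b$ is then constant and the two flows coincide, so we assume $b\neq 0$.

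Step 1 (reduction to an oscillatory integral). Setting $\Delta q:=\tilde q-q$ and $\Delta p:=\tilde p-p$, integrating the difference of \eqref{eq:q and p} and \eqref{eq:tilde_q and tilde_p} yields
$$\Delta p(t)=\int_0^t\bigl[\nabla V(q(s))-\nabla V(\tilde q(s))\bigr]\,ds+I(t),\qquad I(t):=\int_0^t\nabla V_b(\tilde q(s))\,ds,$$
together with $\Delta q(t)=\int_0^t\Delta p(s)\,ds$. Using that $\nabla V$ is Lipschitz with constant $L$, applying the scalar Gronwall inequality to $t\mapsto|\Delta q(t)|+|\Delta p(t)|$ gives
$$d\bigl(\Phi_V(q_0,p_0),\Phi_{V-V_b}(q_0,p_0)\bigr)\le C_V\,\sup_{t\in[0,1]}|I(t)|,$$
where $C_V$ depends only on $L$. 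Thus the claim reduces to showing that $\sup_t|I(t)|$ can be made arbitrarily small by taking $|b\cdot p_0|$ large.

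Step 2 (non-stationary phase estimate). Writing $\nabla V_b(\tilde q(s))=2\pi i b\,(a_b e^{2\pi i b\cdot\tilde q(s)}-a_{-b}e^{-2\pi i b\cdot\tilde q(s)})$, it suffices to bound $J_\pm(t):=\int_0^t e^{\pm 2\pi i b\cdot\tilde q(s)}\,ds$. The key a priori bound is that $\nabla(V-V_b)$ is bounded on $\T^n$, so $|\tilde p(s)-p_0|\le\|\nabla(V-V_b)\|_\infty$ uniformly in $s\in[0,1]$ and in $(q_0,p_0)$; therefore, for $|b\cdot p_0|$ exceeding $2\|b\|\cdot\|\nabla(V-V_b)\|_\infty$, one has $|b\cdot\tilde p(s)|\ge\tfrac12|b\cdot p_0|$ throughout $[0,1]$. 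Because $\tfrac{d}{ds}e^{\pm 2\pi i b\cdot\tilde q(s)}=\pm 2\pi i(b\cdot\tilde p(s))e^{\pm 2\pi i b\cdot\tilde q(s)}$, integrating by parts gives
$$J_\pm(t)=\left[\frac{e^{\pm 2\pi i b\cdot\tilde q(s)}}{\pm 2\pi i\,(b\cdot\tilde p(s))}\right]_0^t+\int_0^t\frac{b\cdot\dot{\tilde p}(s)}{\pm 2\pi i\,(b\cdot\tilde p(s))^2}\,e^{\pm 2\pi i b\cdot\tilde q(s)}\,ds,$$
and since $|b\cdot\dot{\tilde p}(s)|=|b\cdot\nabla(V-V_b)(\tilde q(s))|$ is bounded by a constant depending only on $V$ and $b$, the right-hand side is $O(1/|b\cdot p_0|)$. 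Hence $\sup_{t\in[0,1]}|I(t)|\le K_{V,b}/|b\cdot p_0|$ for a constant $K_{V,b}$.

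Step 3 (conclusion). Combining the two steps, $d(\Phi_V(q_0,p_0),\Phi_{V-V_b}(q_0,p_0))\le C_VK_{V,b}/|b\cdot p_0|$ whenever $|b\cdot p_0|$ exceeds the threshold from Step 2. Choosing $D_b$ larger than that threshold and also larger than $C_VK_{V,b}/\delta$ proves the proposition. The main obstacle in this plan is the Gronwall reduction of Step 1: one must absorb both $\Delta q$ and $\Delta p$ into a single scalar quantity in order to apply the scalar Gronwall inequality, since the vector-valued version in Lemma \ref{lem:Gronwall} is too weak here; once that coupling is handled, the non-stationary phase estimate is routine.
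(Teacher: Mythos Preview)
Your argument is correct and takes a genuinely different route from the paper's proof. The paper partitions $[0,1]$ into $m\sim |b\cdot p_0|$ subintervals of length $\dt=1/|b\cdot p_0|$, introduces on each subinterval an auxiliary curve $\gamma^j$ that follows the $(V-V_b)$-flow but with the initial data of the $V$-flow, and exploits the exact vanishing $\int_0^{\dt}\nabla V_b(q^j(0)+sp_0)\,ds=0$ along the linear approximation to control the accumulated error to $\O(\dt)$ via repeated use of Lemma~\ref{lem:Gronwall}. Your proof instead reduces everything in one stroke to the oscillatory integral $I(t)=\int_0^t\nabla V_b(\tilde q(s))\,ds$ along the \emph{full} $(V-V_b)$-trajectory, and then bounds $I(t)$ by a single integration by parts using that $|b\cdot\tilde p(s)|$ stays comparable to $|b\cdot p_0|$. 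Both strategies harness the same oscillation, but yours is noticeably shorter: it avoids the bookkeeping of the $\gamma^j$'s and the inductive accumulation of errors, at the cost of requiring the Lipschitz constant of $\nabla V$ (which is available here since $V$ is trigonometric). The paper's time-slicing proof, on the other hand, uses only first-derivative bounds on $V$ in each step and is perhaps closer in spirit to classical averaging arguments. Your final rate $d(\cdot,\cdot)\le C/|b\cdot p_0|$ matches the paper's $\O(\dt)$.
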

\begin{proof}
	Note that the statement is vacuously true for any $D_b > 0$ if $b = 0$.
We therefore fix a nonzero $b\in\Z^n$. Throughout the proof, we use a variation of big O notation, expanding in the variable $\dt:=|b\cdot p_0|^{-1}$, uniformly in $q_0$. That is, we write $f(q_0,p_0)=\O(\dt^d)$ if there exist $N,C>0$ such that for all $q_0,p_0$ with $|b\cdot p_0|>N$ we have $|f(q_0,p_0)|\leq C|b\cdot p_0|^{-d}$.
	Therefore, to prove the proposition, it suffices to show that
	\begin{align}\label{suffice to show big O}
		d\left(\vect{q(1)}{p(1)},\vect{\tilde{q}(1)}{\tilde{p}(1)}\right) = \O(\dt).
	\end{align}
	Assume that $\dt\in(0,1)$. We divide the time interval $[0,1]$ into $m$ intervals of length $\dt$, where $m:=\lfloor \frac{1}{\dt}\rfloor$, and a final interval of length $1-m\dt$.
For each $t\in[0,\dt]$ and each $j\in\{0,\ldots, m\}$ (these will be the assumptions on $t$ and $j$ throughout the rest of the proof) let
\begin{equation*}
q^j(t) := q(j\dt + t), \quad p^j(t) := p(j\dt + t),
\end{equation*}
and define the curves $\tilde{q}^j$ and $\tilde{p}^j$ analogously.
Note that $(q^j,p^j)$ and $(\tilde{q}^j,\tilde{p}^j)$ satisfy the differential equations \eqref{eq:q and p} and \eqref{eq:tilde_q and tilde_p} respectively, but with different initial conditions.
Furthermore, for every $j$, we define the curve $\gamma^j \colon [0,\dt] \rightarrow \T^n$ as the unique solution to the initial value problem
\begin{equation}
\label{eq:the_middle_man_gamma}
\left\{
\begin{alignedat}{2}
(\dot{\gamma}^j(t),\ddot{\gamma}^j(t)) &= (\dot{\gamma}^j(t),-\nabla (V - V_b)(\gamma^j(t))) \qquad && t \in \R, \\
(\gamma^j(0),\dot{\gamma}^j(0)) &= (q^j(0), p^j(0)), \qquad &&
\end{alignedat}
\right. 
\end{equation}
where on the first line, we have emphasized the similarity of this equation with the equations \eqref{eq:q and p} and \eqref{eq:tilde_q and tilde_p} by including $\dot{\gamma}^j(t)$.
We do not introduce any special notation for $\dot{\gamma}^j$, however.

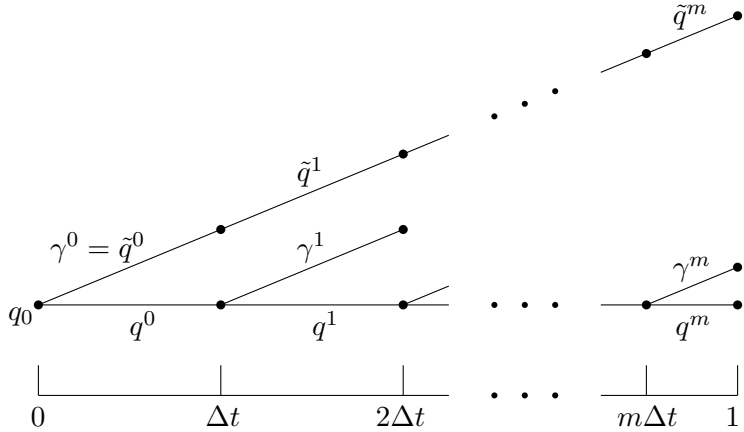
\begin{figure}[!htb]
	\centering
	\setlength{\unitlength}{0.2cm}
	\def \ds {0.6}
	\newcounter{counter}
	\setcounter{counter}{40}
	\def \x {16.66667} 
	\def \y {19.16667} 
	\begin{picture}(46,28)(0,-8) 

		\put(0,0){\circle*{\ds}}
		\put(0,0){\line(1,0){27}}
		\put(0,0){\line(12,5){27}}
			\put(12,5){\circle*{\ds}}
			\put(24,10){\circle*{\ds}}
		\put(12,0){\circle*{\ds}}
			\put(12,0){\line(12,5){12}}
			\put(24,5){\circle*{\ds}}
		\put(24,0){\circle*{\ds}}
			\put(24,0){\line(12,5){3}}

		\put(-2,-1){$q_0$}
		\put(6,-2){$q^0$}
		\put(18,-2){$q^1$}
		\put(5,3.2){\hspace{-2.2em}$\gamma^0=\tilde{q}^0$} 
		\put(17,8.2){$\tilde{q}^1$}
		\put(17,3.2){$\gamma^1$}
		\put(-0.5,-8){$0$}
		\put(11,-8){$\dt$}
		\put(22,-8){\hspace{0.1em}$2\dt$}
		\put(\thecounter,-8){\hspace{-1em}$m\dt$}
		\put(\thecounter,-2){\hspace{1em}$q^m$}
		\put(\thecounter,1.9){\hspace{0.9em}$\gamma^m$}
		\put(\thecounter,18.6){\hspace{0.9em}$\tilde{q}^m$}		
		\put(\thecounter,-8){\hspace{2.7em}$1$}
		
		\put(30,-6){\circle*{0.4}}
		\put(32,-6){\circle*{0.4}}
		\put(34,-6){\circle*{0.4}}
		\put(30,0){\circle*{0.4}}
		\put(32,0){\circle*{0.4}}
		\put(34,0){\circle*{0.4}}
		\put(30,12.5){\circle*{0.4}}
		\put(32,13.333){\circle*{0.4}}
		\put(34,14.167){\circle*{0.4}}
		
		\put(\thecounter,0){\line(-1,0){3}}
		\put(\thecounter,0){\circle*{\ds}}
		\put(\thecounter,0){\line(1,0){6}}
		\put(\thecounter,0){\line(12,5){6}}
		\put(\thecounter,\x){\line(-12,-5){3}}
		\put(\thecounter,\x){\circle*{\ds}}
		\put(\thecounter,\x){\line(12,5){6}}

\put(\thecounter,-6){\line(0,1){2}}
\addtocounter{counter}{6}
		\put(\thecounter,0){\circle*{\ds}}
		\put(\thecounter,2.5){\circle*{\ds}}
		\put(\thecounter,\y){\circle*{\ds}}
		
		\put(0,-6){\line(1,0){27}}
		\put(0,-6){\line(0,1){2}}
		\put(12,-6){\line(0,1){2}}
		\put(24,-6){\line(0,1){2}}
		\put(\thecounter,-6){\line(0,1){2}}
		\put(\thecounter,-6){\line(-1,0){9}}
	\end{picture}
	\caption{The position functions $q^j$,$\gamma^j$ and $\tilde{q}^j$. Sloping lines correspond to $V-V_b$, whereas the horizontal line that depicts $q$ corresponds to $V$.\label{fig: dynamica}}
\end{figure}	
	\noindent As depicted in Figure \ref{fig: dynamica}, the curve $\gamma^j\colon[0,\dt]\rightarrow\T^n$ plays a key role in 
comparing $q^j$ with $\tilde{q}^j$; the curve $(\gamma^j, \dot{\gamma}^j)$ is an integral curve along the same Hamiltonian vector field as $(\tilde{q}^j, \tilde{p}^j)$, but with the same initial conditions as $(q^j,p^j)$.
	
	We now expand our expressions in orders of $\dt$. Using equation \eqref{eq:q and p} and the fundamental theorem of calculus, we obtain
	\begin{align}\label{p_j}
		\norm{p^j(t)-p^j(0)}&\leq \int_0^{\dt}\norm{\nabla V (q^j(s))}\:ds\leq \supnorm{\nabla V}\dt
		=\O(\dt).
	\end{align}
	In particular, taking $t=\dt$, we get $\norm{p^{j+1}(0)-p^j(0)}=\O(\dt)$, and therefore by induction
	\begin{align}\label{pj(0)-p0}
		\norm{p^j(0)-p_0}=\O(1),
	\end{align}
	for every $0\leq j\leq m$. Equations \eqref{p_j} and \eqref{pj(0)-p0} give us
	\begin{align}\label{q^j}
		d(q^j(t),q^j(0)+tp_0) &\leq\norm{\int_0^{t}(p^j(s)-p_0)\:ds}\nonumber\\
		&\leq\int_0^{\dt}\norm{p^j(s)-p^j(0)}+\norm{p^j(0)-p_0}\:ds\nonumber\\
		&=\O(\dt).
	\end{align}
	A result similar to \eqref{p_j} exists for $\dot{\gamma}^j$ instead of $p^j$, and hence
	\begin{align}\label{p and Gamma dot}
		\norm{p^j(t)-\dot{\gamma}^j(t)}=\O(\dt),
	\end{align}
	which implies
	\begin{align}
		d(q^j(t),\gamma^j(t))&=\O(\dt^2).\label{q^j and Gamma^j}
	\end{align}
	Using the definitions of $V_b$ and $\dt$, we show that the distance between $p^j(\dt)$ and $\dot{\gamma}^j(\dt)$ is in fact of order $\dt^2$. We first note that
	\begin{align*}
		\norm{p^j(\dt)-\dot{\gamma}^j(\dt)}&=\norm{\int_0^{\dt}(\nabla V(q^j(s))-\nabla(V-V_b)(\gamma^j(s)))\:ds}\\
		&\leq\int_0^{\dt}\norm{\nabla(V-V_b)(q^j(s))-\nabla(V-V_b)(\gamma^j(s))}\:ds\\
		&\quad+\norm{\int_0^{\dt}\nabla V_b(q^j(s))\:ds}.
	\end{align*}
	By \eqref{q^j and Gamma^j}, the first term is $\O(\dt^3)$. For the second term we can use \eqref{q^j} and the observation that
		$$\int_0^{\dt}\nabla V_b(q^j(0)+sp_0)\:ds=0.$$
	Hence the second term is $\O(\dt^2)$. All in all, we obtain the estimate
	\begin{align*}
		\norm{p^j(\dt)-\dot{\gamma}^j(\dt)}=\O(\dt^2).
	\end{align*}
	This estimate, together with \eqref{q^j and Gamma^j}, implies
	\begin{align}\label{vect Gamma}
		d\left(\vect{\gamma^{j+1}(0)}{\dot{\gamma}^{j+1}(0)},\vect{\gamma^j(\dt)}{\dot{\gamma}^j(\dt)}\right)=d\left(\vect{q^{j}(\dt)}{p^{j}(\dt)},\vect{\gamma^j(\dt)}{\dot{\gamma}^j(\dt)}\right)=\O(\dt^2).
	\end{align}
	Since $\gamma^j$ and $\tilde{q}^j$ satisfy the same differential equation, say with associated Lipschitz constant $c$, Lemma \ref{lem:Gronwall} (with $f=g:(q,p)\mapsto(p,-\nabla(V-V_b)(q))$) implies that
	\begin{align}\label{Gamma and gamma time evolved}
		d\left(\vect{\gamma^j(t)}{\dot{\gamma}^j(t)},\vect{\tilde{q}^{j}(t)}{\tilde{p}^{j}(t)}\right)\leq e^{ct}d\left(\vect{\gamma^j(0)}{\dot{\gamma}^j(0)},\vect{\tilde{q}^j(0)}{\tilde{p}^j(0)}\right).
	\end{align}
	Taking $t=\dt$, we by definition have
	\begin{align}\label{vect Gamma and gamma}
		d\left(\vect{\gamma^j(\dt)}{\dot{\gamma}^j(\dt)},\vect{\tilde{q}^{j+1}(0)}{\tilde{p}^{j+1}(0)}\right)\leq e^{c\dt}d\left(\vect{\gamma^j(0)}{\dot{\gamma}^j(0)},\vect{\tilde{q}^j(0)}{\tilde{p}^j(0)}\right).
	\end{align}
	Combining \eqref{vect Gamma} and \eqref{vect Gamma and gamma}, we find that
	\begin{align*}
		d\left(\vect{\gamma^{j+1}(0)}{\dot{\gamma}^{j+1}(0)},\vect{\tilde{q}^{j+1}(0)}{\tilde{p}^{j+1}(0)}\right)\leq e^{c\dt}d\left(\vect{\gamma^j(0)}{\dot{\gamma^j}(0)},\vect{\tilde{q}^j(0)}{\tilde{p}^j(0)}\right)+\O(\dt^2).
	\end{align*}
	Because $e^{jc\dt}=\O(1)$, repeated use of the above equation gives
	\begin{align}\label{Gamma^m and gamma^m}
		d\left(\vect{\gamma^m(0)}{\dot{\gamma}^m(0)},\vect{\tilde{q}^{m}(0)}{\tilde{p}^{m}(0)}\right)=\O(\dt).
	\end{align}
	Let $t:=1-m\dt$. Using \eqref{Gamma and gamma time evolved}, we find
	\begin{align*}
		d\left(\vect{q(1)}{p(1)}, \vect{\tilde{q}(1)}{\tilde{p}(1)}\right)
		&\leq d\left(\vect{q(1)}{p(1)}, \vect{\gamma^m(t)}{\dot{\gamma}^m(t)}\right) + d\left(\vect{\gamma^m(t)}{
		\dot{\gamma}^m(t)}, \vect{\tilde{q}(1)}{\tilde{p}(1)}\right) \\
		&\leq d\left(q(1), \gamma^m(t)\right) + \|p(1) - \dot{\gamma}^m(t)\| \\
		&\quad+ e^{ct}
		d\left(\vect{\gamma^m(0)}{\dot{\gamma}^m(0)}, \vect{\tilde{q}^m(0)}{ \tilde{p}^m(0)}\right).
	\end{align*}
	The first term is $\O(\dt^2)$ by \eqref{q^j and Gamma^j}, the second is $\O(\dt)$ by \eqref{p and Gamma dot}, and the last term is $\O(\dt)$ by \eqref{Gamma^m and gamma^m}. This implies \eqref{suffice to show big O}, and thereby the proposition.
\end{proof}
\noindent Proposition \ref{lem:long range convergence of dynamics} expresses a property of the classical time evolution associated to a trigonometric potential in terms of points in phase space. To translate this result to the world of observables, we fix $\epsilon>0$ and notice that any $g\in\Cr(T^*\T^n)$ is uniformly continuous. Hence for every $b\in\mN$ we may fix a $D_b$ such that 
	\begin{align} \label{tau_V*g-tau_(V-Vb)*g}
		\sup_{x\in U_b}|\Phi_V^*g(x)-\Phi_{V-V_b}^*g(x)|\leq \varepsilon,
	\end{align}
	where
		\[U_b:=\T^n\times\set{x\in\R^n}{ |b\cdot x|>D_b}.\]
	We also define the open sets
	\begin{align*}
		W_b&:=\T^n\times\set{x\in\R^n}{ |b\cdot x|>2D_b}\,;\\
		U_\infty&:=\T^n\times\set{x\in\R^n}{ |b\cdot x|<4D_b\text{ for all }b\in\mN }\,;\\
		W_\infty&:=\T^n\times\set{x\in\R^n}{ |b\cdot x|<3D_b\text{ for all }b\in\mN },
	\end{align*}
	and remark that $\{U_i\}_{i\in I}$ and $\{W_i\}_{i\in I}$ are open covers satisfying $\overline{W_i}\subseteq U_i$ for all $i\in I:=\mN\cup\{\infty\}$. 
	Since we already know how $\Phi_V^*g$ approximately behaves on $\bigcup_{b\in\mN}U_b$, let us see how it behaves on $U_\infty$.
	
	\begin{lem}\label{lem:f_infty}
		There exists an $f_\infty\in \Cr(T^*\T^n)$ that equals $\Phi_V^*g$ on $U_\infty$.
	\end{lem}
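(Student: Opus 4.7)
The plan is to split into two cases based on $L := \mathrm{span}(\mathcal{N}) \subseteq \mathbb{R}^n$.

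\textbf{Case 1: $L = \mathbb{R}^n$.} Here the constraints $|b \cdot p| < 4 D_b$ for all $b \in \mathcal{N}$ cut out a bounded open set $B \subseteq \mathbb{R}^n$, so $U_\infty = \mathbb{T}^n \times B$ has compact closure in $T^*\mathbb{T}^n$. Picking $\chi \in C_c^\infty(\mathbb{R}^n)$ with $\chi \equiv 1$ on a neighbourhood of $\overline{B}$, the function $f_\infty(q,p) := \chi(p)\,(\Phi_V^{*}g)(q,p)$ lies in $C_c(T^*\mathbb{T}^n) \subseteq C_0(T^*\mathbb{T}^n) \subseteq C_\mathcal{R}(T^*\mathbb{T}^n)$ and agrees with $\Phi_V^{*}g$ on $U_\infty$.

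\textbf{Case 2: $L \subsetneq \mathbb{R}^n$.} This is the main obstacle, since now $U_\infty$ extends to infinity in the $L^\perp$-directions of the momentum. The key observation is that $V = \sum_{b \in \mathcal{N}} a_b e_b$ only sees $P_L q$ (because each $b \in \mathcal{N}$ lies in $L$), so $\nabla V$ takes values in $L$. Consequently the Hamiltonian flow decouples along $L \oplus L^\perp$: one has $P_{L^\perp}(p(t)) = P_{L^\perp}(p_0)$ conserved, $P_{L^\perp}(q(t)) - P_{L^\perp}(q_0) = t\, P_{L^\perp}(p_0)$, and $(P_L q, P_L p)$ undergoes a closed Hamiltonian motion on $L \times L$. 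On $U_\infty$ the initial $P_L p_0$ is bounded, so by conservation of energy together with boundedness of $V$, the full $L$-trajectory stays in a compact subset of $L \times L$. Pick $\chi \in C_c^\infty(L)$ equal to $1$ on a neighbourhood of the reachable set of $P_L p$-values; by Definition \ref{def:resolvent_algebra_Schwartz_functions} the pullback $\chi \circ P_L = h_{L,0,\chi}$ lies in $\mathcal{W}^0_\mathcal{R}(\mathbb{R}^n)$, so $1 \otimes (\chi \circ P_L) \in C_\mathcal{R}(T^*\mathbb{T}^n)$. The candidate is
\[
  f_\infty \;:=\; \bigl(1 \otimes (\chi \circ P_L)\bigr)\cdot \Phi_V^{*}g,
\]
which agrees with $\Phi_V^{*}g$ on $U_\infty$ by construction.

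The remaining, and hardest, task is to show $f_\infty \in C_\mathcal{R}(T^*\mathbb{T}^n)$. I would reduce to generators $g = e_b \otimes h_{U,\xi,g_0} \in \mathcal{S}_\mathcal{R}(T^*\mathbb{T}^n)$ by linearity and density (Proposition \ref{prop:Poisson_subalg_for_arbitrary_n}). Writing the lifted flow as $q(1) = q_0 + p_0 + \Delta q(q_0, p_0)$ and $p(1) = p_0 + \Delta p(q_0, p_0)$, where $\Delta q, \Delta p$ are continuous and take values in $L$ (well-defined as functions on $\mathbb{T}^n \times \mathbb{R}^n$ since changing the lift of $q_0$ by an element of $\mathbb{Z}^n$ translates the trajectory by that same element), the flow invariants above together with the compactness of the reachable set show that $\Delta q, \Delta p$ are \emph{uniformly bounded} on $\mathrm{supp}(\chi \circ P_L)$ and depend on $p$ only through $P_L p$. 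Expanding
\[
  (\Phi_V^{*}g)(q,p) = e^{2\pi i b\cdot q} \cdot e^{i(2\pi b + \xi)\cdot p} \cdot e^{2\pi i b\cdot \Delta q(q,p)} \cdot e^{i\xi\cdot \Delta p(q,p)} \cdot g_0\bigl(P_U p + P_U \Delta p(q,p)\bigr),
\]
the first two exponentials together are already in $C(\mathbb{T}^n) \hat\otimes \mathcal{W}^0_\mathcal{R}(\mathbb{R}^n) = C_\mathcal{R}(T^*\mathbb{T}^n)$ by Theorem \ref{thrm:tractable_resolvent_algebra}. The last three factors are bounded continuous functions whose $p$-dependence is confined to the compact support of $\chi \circ P_L$. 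On this compact set one can uniformly approximate them by finite sums $\sum_j \varphi_j(q)\,\psi_j(P_L p)$ with $\varphi_j \in C(\mathbb{T}^n)$ and $\psi_j \in C_c^\infty(L) \subseteq \mathcal{S}(L)$ (via Stone--Weierstrass on $\mathbb{T}^n \times \mathrm{supp}(\chi)$ applied to the finitely-generated subalgebra encoding the bounded images of $\Delta q, \Delta p$), and each product $\varphi_j \otimes \psi_j \circ P_L$ is a generator of $C_\mathcal{R}(T^*\mathbb{T}^n)$ in the sense of Definition \ref{def:resolvent_algebra_Schwartz_functions}. The uniform limit thus places $f_\infty$ in $C_\mathcal{R}(T^*\mathbb{T}^n)$.
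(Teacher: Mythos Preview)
Your Case 1 is fine, and the flow decoupling in Case 2 is correct: indeed $\Delta q, \Delta p$ take values in $L$ and depend on $p$ only through $P_L p$. The gap lies in the handling of the Schwartz factor.

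You assert that the last three factors in your expansion of $\Phi_V^*g$ have $p$-dependence only through $P_L p$, and then invoke Stone--Weierstrass on the compact set $\T^n \times \supp(\chi)$. This fails for $g_0(P_U p + P_U \Delta p)$: while $P_U \Delta p$ depends only on $(q, P_L p)$, the term $P_U p$ sees $P_{L^\perp} p$ whenever $U \not\subseteq L$, and $U$ is an arbitrary subspace. Thus the function to be approximated is not determined by $(q, P_L p)$, and an approximation by sums $\sum_j \varphi_j(q)\,\psi_j(P_L p)$ cannot converge uniformly on $\T^n \times \R^n$ --- it discards the $L^\perp$-dependence of $g_0$ entirely. (Note also that $\supp(\chi\circ P_L)=\supp(\chi)+L^\perp$ is \emph{not} compact, so your phrase ``confined to the compact support of $\chi\circ P_L$'' already signals the confusion.)

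The paper's proof circumvents this via a structural argument. With $S := L$, it observes that $C_0(\T^n \times S) \hatotimes \Wr(S^\perp)$ is an ideal in $\Cr(T^*\T^n)$ and, using the factorisation $\Phi_V^t(q, p_\| + p_\perp) = \phi^t(q, p_\|) + (t p_\perp, p_\perp)$, that $\Phi_V^*$ restricts to a $*$-automorphism of this ideal. Choosing $\tilde{g}$ in the ideal with $\tilde{g}|_{U_\infty} = 1$ (by Urysohn, since $U_\infty = K \times S^\perp$ for compact $K \subseteq \T^n \times S$) and setting $f_\infty := \tilde{g} \cdot \Phi_V^* g = \Phi_V^*\big((\tilde{g} \circ \Phi_V^{-1}) \cdot g\big)$, one sees that $\tilde{g} \circ \Phi_V^{-1}$ is in the ideal, multiplication by $g \in \Cr(T^*\T^n)$ keeps it there (ideal property), and applying $\Phi_V^*$ preserves the ideal. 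The $L^\perp$-behaviour of $g$ is absorbed by the $\Wr(S^\perp)$ tensor factor, with no explicit approximation needed.
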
	
	\begin{proof}
		Let $S:=\spn_\R~\mN$. We write our phase space as a product of topological spaces
			$$\T^n\times\R^n=(\T^n\times S)\times S^\perp,$$
		and note that
			$$C_0(\T^n\times S) \hatotimes \Wr(S^\perp)$$
		is an ideal in $\Cr(T^*\T^n)$.
On the other hand, regarding our phase space as a coproduct of abelian Lie groups
			$$\T^n\times\R^n=(\T^n\times S) \oplus S^\perp,$$
		we define $\phi^t$ as the restriction of $\Phi^t_V$ to $\T^n\times S$ for each $t\in\R$. Because $\nabla V \perp S^\perp$, we have $\dot{p}(t)\perp S^\perp$, and hence
			$$\phi^t \colon\T^n\times S\rightarrow\T^n\times S\,$$
		is a well-defined homeomorphism. Moreover, we find the equation
		\begin{align*}
			\Phi^t_V(q,p_\|+p_\perp)&=\phi^t(q,p_\|)+(tp_\perp,p_\perp),\quad\text{for all}\quad p_\|\in S,~p_\perp\in S^\perp,
		\end{align*}
		because its two sides solve the same differential equation.
		Using the above relation in a straightforward calculation on generators, one can show that
			$$\Phi_V^*(C_0(\T^n\times S)\otimes\Wr(S^\perp))\subseteq C_0(\T^n\times S)\otimes\Wr(S^\perp).$$
		Actually, the same holds for $\Phi_V^{-1}$, which implies that $\Phi_V^*$ is a *-automorphism of the ideal $C_0(\T^n\times S)\hatotimes\Wr(S^\perp)$. Now note that $U_\infty$ is of the form $K \times S^\perp$ for some compact subset $K \subseteq \T^n\times S$. By Urysohn's lemma, we may choose a function $\tilde{g}\in C_0(\T^n\times S)\otimes\Wr(S^\perp)$ that is $1$ on $U_\infty$, and define $f_\infty:=\tilde{g}\cdot\Phi_V^*g$. We then find that
			$$f_\infty=((\tilde{g}\circ\Phi_V^{-1})\cdot g)\circ\Phi_V\in C_0(\T^n\times S)\hatotimes\Wr(S^\perp),$$
		and therefore $f_\infty\in\Cr(T^*\T^n)$.
	\end{proof}

\noindent We can finally prove that our commutative resolvent algebra is invariant under any time evolution corresponding to a trigonometric potential.

\begin{prop}\label{prop: time evolution}
	For every trigonometric potential $V\colon\T^n\rightarrow\R$ and $g\in \Cr(T^*\T^n)$ we have $\Phi_V^*g\in\Cr(T^*\T^n)$.
\end{prop}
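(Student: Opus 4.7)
The plan is to prove the proposition by induction on the cardinality of the support $\mN$ of the trigonometric potential $V=\sum_{b\in\mN}a_be_b$. The base case $\mN=\emptyset$ (or $\mN=\{0\}$, where $V_0$ is a constant and affects no dynamics) is exactly Lemma \ref{lem:free time evolution}. For the induction step I would combine three ingredients already assembled above the statement: (i) Proposition \ref{lem:long range convergence of dynamics}, which says that in the region $U_b$ the flow under $V$ approximates the flow under $V-V_b$; (ii) Lemma \ref{lem:f_infty}, which produces an element $f_\infty\in\Cr(T^*\T^n)$ that coincides with $\Phi_V^*g$ on $U_\infty$; and (iii) the induction hypothesis, applied to $V-V_b$, which has strictly fewer frequencies and therefore yields $\Phi_{V-V_b}^*g\in\Cr(T^*\T^n)$.

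The technical heart of the step will be a gluing by partition of unity, carried out inside $\Cr(T^*\T^n)$. Given $\epsilon>0$, I would first fix constants $D_b$ so that \eqref{tau_V*g-tau_(V-Vb)*g} holds on $U_b$; this is possible by the uniform continuity of $g$. To build the partition, for each nonzero $b\in\mN$ I would choose $\rho_b\in C^\infty(\R)$ vanishing on $\{|t|\leq D_b\}$ and equal to $1$ on $\{|t|\geq 2D_b\}$, so that $\alpha_b(q,p):=\rho_b(b\cdot p)$ lies in $\Cr(T^*\T^n)$: indeed $1-\rho_b\in C_0(\R)$ pulls back along the projection $(q,p)\mapsto b\cdot p$ permitted by Definition \ref{def:resolvent algebra on the torus}, and $\Cr(T^*\T^n)$ is unital. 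Similarly, taking $\sigma_b\in C_\textnormal{c}(\R)$ equal to $1$ on $[-3D_b,3D_b]$ and supported in $[-4D_b,4D_b]$, the product $\alpha_\infty:=\prod_{b\in\mN}\sigma_b(b\cdot p)$ lies in $\Cr(T^*\T^n)$. Since $\{W_i\}_{i\in I}$ covers $\T^n\times\R^n$ and each $\alpha_i$ equals $1$ on $W_i$, the sum $\Sigma:=\sum_{i\in I}\alpha_i$ is bounded below by a positive constant; continuous functional calculus in the C*-algebra $\Cr(T^*\T^n)$ then yields $\Sigma^{-1}\in\Cr(T^*\T^n)$, and the normalized functions $\chi_i:=\alpha_i/\Sigma\in\Cr(T^*\T^n)$ form the desired partition of unity with $\supp\chi_i\subseteq U_i$.

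With this in hand I would set
\[
h:=\sum_{b\in\mN}\chi_b\cdot\Phi_{V-V_b}^*g+\chi_\infty\cdot f_\infty,
\]
which lies in $\Cr(T^*\T^n)$ by (iii), Lemma \ref{lem:f_infty}, and the preceding paragraph. On the support of each $\chi_b$ the estimate \eqref{tau_V*g-tau_(V-Vb)*g} gives $|\Phi_V^*g-\Phi_{V-V_b}^*g|\leq\epsilon$, while on the support of $\chi_\infty$ one has $\Phi_V^*g-f_\infty=0$ exactly; using $\sum_i\chi_i=1$ and the fact that each $\chi_i$ is nonnegative then yields $\|\Phi_V^*g-h\|_\infty\leq\epsilon$. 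Since $\epsilon>0$ is arbitrary and $\Cr(T^*\T^n)$ is uniformly closed, it follows that $\Phi_V^*g\in\Cr(T^*\T^n)$, completing the induction.

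The step I expect to be the main obstacle is constructing the partition of unity inside $\Cr(T^*\T^n)$: ordinary smooth partitions of unity on $\T^n\times\R^n$ are not generally in $\Cr(T^*\T^n)$, and one must arrange that each cutoff depends on the momentum only through a single linear form $p\mapsto b\cdot p$, pulled back along a projection permitted by Definition \ref{def:resolvent algebra on the torus}. Once this structural point is secured, the rest of the argument is a routine gluing-by-approximation expressing the heuristic that $\Phi_V^*g$ is close to $\Phi_{V-V_b}^*g$ precisely where the oscillations of $e_{\pm b}$ along the free trajectory average out.
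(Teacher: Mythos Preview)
Your proposal is correct and follows essentially the same route as the paper: induction on $|\mN|$, with the induction step combining the estimate \eqref{tau_V*g-tau_(V-Vb)*g}, Lemma \ref{lem:f_infty}, and a partition of unity $\{\chi_i\}_{i\in I}$ built inside $\Cr(T^*\T^n)$ from bump functions depending on $p$ only through the linear forms $b\cdot p$. The paper's construction is almost identical; it writes $\zeta_b=\indicator_{\T^n}\otimes(g_b\circ P_{\spn(b)})$ and $\zeta_\infty=\indicator_{\T^n}\otimes(g_\infty\circ P_S)$ with $S=\spn_\R\mN$, and then normalizes by $\sum_j\zeta_j$, which is bounded below by $1$ for exactly the reason you give.
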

\begin{proof}
We use induction on the size of $\mN$ in $V=\sum_{b\in\mN} a_be_b$ (while assuming that $\mN$ is chosen minimally). The induction base is precisely Lemma \ref{lem:free time evolution}.

We now carry out the induction step.
The induction hypothesis says that time evolution with respect to $V-V_b$ preserves $\Cr(T^*\T^n)$, for each $b\in\mN$. Therefore, writing $f_b:=\Phi_{V-V_b}^*g$, we have $f_b\in\Cr(T^*\T^n)$.
	Fixing $f_\infty$ as in Lemma \ref{lem:f_infty}, we have $f_i\in\Cr(T^*\T^n)$, and equation \eqref{tau_V*g-tau_(V-Vb)*g} implies that 
	\begin{align}\label{sup on U_i}
		\supnorm{f_i|_{U_i}-\Phi_V^*g|_{U_i}}<\epsilon,
	\end{align}
	for each $i\in I=\mN\cup\{\infty\}$.
	We now construct a partition of unity $\{\eta_i\}$ subordinate to the open cover $\{U_i\}$ of $\T^n \times \R^n$, to patch together the functions $\{f_i\}$ and obtain a single function in $\Cr(T^*\T^n)$.
	We start by defining nonnegative functions $\zeta_i\in \Cr(T^*\T^n)$ that are $1$ on $W_i$ and $0$ outside of $U_i$.
Explicitly, for each $b \in \mathcal{N}$, we take $\zeta_b := \indicator_{\T^n} \otimes (g_b\circ P_{\spn(b)})$ for some bump function $g_b$ on $\spn(b)$, and we take $\zeta_\infty := \indicator_{\T^n} \otimes (g_\infty \circ P_S)$ for some bump function $g_\infty$ on $S$. Because $\{W_i\}$ is a cover of $\T^n \times \R^n$, the sum $\sum_i\zeta_i\in\Cr(T^*\T^n)$ is bounded from below by 1, hence it is invertible in $\Cr(T^*\T^n)$, and therefore every function
		$$\eta_i:=\frac{\zeta_i}{\sum_j\zeta_j},$$
	also lies in $\Cr(T^*\T^n)$. Now \eqref{sup on U_i} gives us $$\supnorm{\Phi_V^*g-\sum_i f_i\eta_i}<\varepsilon.$$
Since $\varepsilon > 0$ was arbitrary and $\Cr(T^*\T^n)$ is norm-closed, the assertion follows.
\end{proof}

\subsection{Arbitrary potentials}\label{sct:arbitrary potentials}

\noindent 
Having covered the trigonometric case, we now wish to tackle the general case.
The following lemma provides the required approximation of a generic potential by trigonometric ones.

\begin{lem}\label{lem:approximating nabla V}
	Let $V \in C^1(\T^n)$.
Then there exists a sequence $(V_m)_{m=1}^\infty$ of trigonometric polynomials such that $(\nabla V_m)_{m=1}^\infty$ converges uniformly to $\nabla V$.
Furthermore, if $V$ is real-valued, then every $V_m$ can be chosen to be real-valued as well.
\end{lem}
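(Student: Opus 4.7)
The plan is to obtain $V_m$ by convolving $V$ with a suitable approximate identity on $\T^n$ consisting of trigonometric polynomials. Concretely, I would take $K_m$ to be the $m$-th (multidimensional) Fej\'er kernel on $\T^n$, which is a nonnegative trigonometric polynomial with $\int_{\T^n} K_m = 1$ and whose mass concentrates at $[0]\in\T^n$ as $m\to\infty$. Since $K_m$ is real and nonnegative, the convolution $V_m := V * K_m$ will automatically be real-valued whenever $V$ is, giving the second assertion for free.

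The key algebraic fact I would exploit is that $V_m$ is again a trigonometric polynomial: writing $K_m = \sum_{b\in F_m} c_{m,b}\, e_b$ for a finite set $F_m \subset \Z^n$, one has $V_m = \sum_{b\in F_m} c_{m,b}\,\widehat V(b)\, e_b$, where $\widehat V(b)$ denotes the $b$-th Fourier coefficient of $V$. So $V_m$ lies in the desired class.

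Next, because $V \in C^1(\T^n)$, differentiation commutes with convolution against the smooth kernel $K_m$, so $\nabla V_m = (\nabla V) * K_m$ componentwise. The function $\nabla V$ is continuous on the compact space $\T^n$, hence uniformly continuous, and the standard approximate-identity argument (using $\int K_m = 1$, $K_m \ge 0$, and concentration of mass) yields $(\nabla V)*K_m \to \nabla V$ uniformly. This gives uniform convergence of $\nabla V_m$ to $\nabla V$, completing the proof.

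I do not anticipate any serious obstacle: all ingredients (Fej\'er kernel on $\T^n$, convolution-with-derivative identity, and uniform continuity of $\nabla V$ on the compact torus) are classical. The only mild care needed is to make the approximate-identity argument in $\R^n$-valued form for $\nabla V$ (apply it componentwise) and to note that the Fej\'er kernel is real, so real-valuedness of $V$ is preserved.
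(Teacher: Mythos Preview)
Your proposal is correct and follows essentially the same argument as the paper: define $V_m$ as the convolution of $V$ with the $n$-dimensional Fej\'er kernel, use that differentiation commutes with convolution to get $\nabla V_m = (\nabla V)*K_m$, and invoke the approximate-identity property for uniform convergence, with real-valuedness preserved because the Fej\'er kernel is real. The paper carries this out componentwise in exactly the way you describe.
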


\begin{proof}
We construct the sequence $(V_m)$ by convolving $V$ with the $n$-dimensional analogues of the family of \textit{Fej\'er kernels}.
We first recall that for each $m \geq 1$, the $m$-th Fej\'er kernel is given by
\begin{equation*}
F_{1,m} \colon \T \rightarrow \R, \quad 
q=[x] \mapsto \frac{1}{m} \sum_{k = 0}^{m - 1} \sum_{j = -k}^k e^{2\pi i j x}
= \frac{1}{m} \frac{\sin^2(\pi m x)}{\sin^2(\pi x)} \, ,
\end{equation*}
where the most right expression in this definition is understood to be equal to $m$ for $x=0$.
The sequence $(F_{1,m})_{m \geq 1}$ is an approximation to the identity, i.e., for every continuous function $f$ on $\T$, the sequence $(F_{1,m} \ast f)_{m \geq 1}$ converges uniformly to $f$, where $\ast$ denotes the operation of convolution of functions \cite[Sections 2.4 and 2.5.2]{stein03}.

Next, we define the $n$-dimensional analogues of these functions:
\begin{equation*}
F_{n,m} \colon \T^n \rightarrow \R, \quad 
q = (q_1,\ldots,q_n) \mapsto \prod_{l = 1}^n F_{1,m}(q_l) \, .
\end{equation*}
Using the corresponding fact for one-dimensional kernels, it is elementary to show that the sequence $(F_{n,m})_{m \geq 1}$ is an approximation to the identity.

We now define
\begin{equation*}
V_m := F_{n,m} \ast V \, ,
\end{equation*}
for each $m \geq 1$.
Because every $F_{n,m}$ is trigonometric, and $e_b*f=\hat{f}(b)e_b$ for every $f\in C(\T^n)$ and $b\in\Z^n$, the sequence $(V_m)_{m \geq 1}$ consists of trigonometric polynomials.
Moreover, by a general property of convolutions, we have
\begin{equation*}
\frac{\partial V_m}{\partial q_l}
= \frac{\partial}{\partial q_l} (F_{n,m} \ast V)
= F_{n,m} \ast \frac{\partial V}{\partial q_l},
\end{equation*}
and since $(F_{n,m})_{m \geq 1}$ is an approximation to the identity, the right-hand side converges uniformly to $\frac{\partial V}{\partial q_l}$ as $m \to \infty$, for $l = 1,\ldots,n$.
It follows that $(\nabla V_m)_{m \geq 1}$ converges uniformly to $\nabla V$.
The final assertion is a consequence of the fact that the family of Fej\'er kernels (as well as its higher-dimensional analogues) consists of real-valued functions.
\end{proof}

\noindent We now extend Proposition \ref{prop: time evolution} to general $V$, thereby arriving at our final result.

\begin{thm}\label{thm:classical time evolution}
	Let $V\in C^1(\T^n)_\sa$, and suppose that $\nabla V$ is Lipschitz continuous. Then we have 
		$$(\Phi^t_V)^*(\Cr(T^*\T^n))=\Cr(T^*\T^n),$$
	for every $t\in\R$.
\end{thm}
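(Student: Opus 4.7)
\medskip

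\noindent \textbf{Proof plan.} The strategy is to approximate $V$ by trigonometric potentials, invoke Proposition~\ref{prop: time evolution} for each approximant, and transfer the invariance to the limit via uniform convergence of the flows. First, by Lemma~\ref{lem:t=1 or arbitrary t} it suffices to treat $t=1$: the coordinate change $\phi(q,p)=(q,tp)$ has pullback $\phi^{*}$ that obviously preserves $C_{\mathcal R}(T^{*}\T^{n})$ (momentum rescaling maps the generators of $\Cr(T^{*}\T^n)$ into themselves), and $t^{2}V$ has Lipschitz gradient whenever $V$ does. Moreover, since $(\Phi_{V}^{-1})^{*}\circ\Phi_{V}^{*}=\mathrm{id}$ and the argument below will be symmetric under $t\mapsto -t$, it is enough to prove the inclusion $\Phi_{V}^{*}(\Cr(T^{*}\T^{n}))\subseteq\Cr(T^{*}\T^{n})$.

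Next, apply Lemma~\ref{lem:approximating nabla V} to choose real-valued trigonometric polynomials $V_{m}$ with $\varepsilon_{m}:=\sup_{q\in\T^{n}}\|\nabla V(q)-\nabla V_{m}(q)\|\to 0$. Because the Fej\'er kernels $F_{n,m}$ used in the construction are nonnegative with $\int_{\T^{n}}F_{n,m}=1$, convolution with $F_{n,m}$ does not increase Lipschitz constants, so the Hamiltonian vector fields $X_{V}(q,p)=(p,-\nabla V(q))$ and $X_{V_{m}}(q,p)=(p,-\nabla V_{m}(q))$ share a common Lipschitz constant $c$ (independent of $m$). Applying Lemma~\ref{lem:Gronwall} with $f=X_{V}$, $g=X_{V_{m}}$, equal initial data and $\sup\|f-g\|\le\varepsilon_{m}$, we obtain
\[
d\bigl(\Phi_{V}(q_{0},p_{0}),\,\Phi_{V_{m}}(q_{0},p_{0})\bigr)\;\le\;\varepsilon_{m}\,e^{c},
\]
uniformly in $(q_{0},p_{0})\in \T^{n}\times\R^{n}$. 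Thus $\Phi_{V_{m}}\to\Phi_{V}$ uniformly on phase space.

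The key supporting observation is that every $g\in\Cr(T^{*}\T^{n})=C(\T^{n})\hatotimes\Wr(\R^{n})$ is uniformly continuous on $\T^{n}\times\R^{n}$. Indeed, the functions $e_{b}\in C(\T^{n})$ and the generators $x\mapsto(i+x\cdot v)^{-1},\;x\mapsto e^{ix\cdot v}$ of $\Wr(\R^{n})$ are all Lipschitz; uniform continuity of \emph{bounded} functions is stable under sums, products, and uniform limits, so it passes to the full C*-algebra they generate. Combined with the uniform convergence of the flows, this yields $\Phi_{V_{m}}^{*}g\to\Phi_{V}^{*}g$ uniformly on $\T^{n}\times\R^{n}$ for every $g\in\Cr(T^{*}\T^{n})$.

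Finally, Proposition~\ref{prop: time evolution} gives $\Phi_{V_{m}}^{*}g\in\Cr(T^{*}\T^{n})$ for each $m$, and since $\Cr(T^{*}\T^{n})$ is norm-closed in $C_{\mathrm b}(\T^{n}\times\R^{n})$, the uniform limit $\Phi_{V}^{*}g$ also lies in $\Cr(T^{*}\T^{n})$. Together with the symmetric inclusion for $\Phi_{V}^{-1}$, this gives the desired equality $(\Phi_{V}^{t})^{*}(\Cr(T^{*}\T^{n}))=\Cr(T^{*}\T^{n})$. The main subtle point in the argument is the uniform continuity of arbitrary elements of $\Cr(T^{*}\T^{n})$; once that is in hand, Lemma~\ref{lem:Gronwall} and Proposition~\ref{prop: time evolution} combine essentially automatically.
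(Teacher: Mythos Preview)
Your proof is correct and follows essentially the same route as the paper's own argument: reduce to $t=1$ via Lemma~\ref{lem:t=1 or arbitrary t}, approximate $V$ by trigonometric potentials via Lemma~\ref{lem:approximating nabla V}, use Lemma~\ref{lem:Gronwall} to get uniform convergence of flows, combine with uniform continuity of elements of $\Cr(T^*\T^n)$ to obtain uniform convergence of pullbacks, and conclude by Proposition~\ref{prop: time evolution} and closedness. Your extra remark that Fej\'er convolution does not increase the Lipschitz constant of $\nabla V$ is true but unnecessary: in Lemma~\ref{lem:Gronwall} only the Lipschitz constant of \emph{one} of the two vector fields enters the estimate, and the paper simply takes that to be the field for $V$ itself.
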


\begin{proof}
	It suffices to show that $(\Phi^t_V)^*(\Cr(T^*\T^n)) \subseteq \Cr(T^*\T^n)$; we can replace $t$ by $-t$ and note that $(\Phi_V^{-t})^\ast$ is the inverse of $(\Phi_V^t)^\ast$ to obtain the reverse inclusion.
By Lemma \ref{lem:t=1 or arbitrary t}, we may assume without loss of generality that $t = 1$.

Let $g \in \Cr(T^*\T^n)$.
By Lemma \ref{lem:approximating nabla V}, there exists a sequence of trigonometric potentials $(V_m)$ on $\T^n$ such that $(\nabla V_m)$ converges uniformly to $\nabla V$.
We show that this implies that $(\Phi_{V_m}^\ast (g))$ converges uniformly to $\Phi_V^\ast (g)$; since $\Phi_{V_m}^\ast (g) \in \Cr(T^*\T^n)$ by Proposition \ref{prop: time evolution} and since $\Cr(T^*\T^n)$ is norm-closed, the theorem will follow from this.

Let $\varepsilon > 0$, and let $c$ be the Lipschitz constant of $(q,p)\mapsto(p,-\nabla V(q))$.
Since $g$ is uniformly continuous, there exists $\delta > 0$ such that $|g(x) - g(y)| < \varepsilon$ for each $x,y \in \T^n \times \R^n$ with $d(x,y) < \delta$.
By assumption, there exists an $N \in \N$ such that for each $m \geq N$, we have $\|\nabla V - \nabla V_m\|_\infty < \delta e^{-c}$.
It follows from Lemma \ref{lem:Gronwall} that $d(\Phi_V(x),\Phi_{V_m}(x)) < \delta$ for each $x \in \T^n \times \R^n$ and each $m \geq N$, hence $\|\Phi_V^\ast(g) - \Phi_{V_m}^\ast(g)\|_\infty \leq \varepsilon$.
Thus $(\Phi_{V_m}^\ast (g))$ converges uniformly to $\Phi_V^\ast (g)$, as desired.
\end{proof}

\section{Quantum time evolution}
\label{sec:quantum time evolution}
\noindent 
Our next task is to show that $A_\hbar=C^*\!\left(\QW(\Sr(T^*\T^n))\right)$ is invariant under time evolution for each Hamiltonian with potential $V \in C(\T^n)$.
The general proof strategy resembles that of Buchholz and Grundling in \cite[Proposition 6.1]{BG}.
However, the present setting differs from theirs in two important ways, each of which introduces its own technical problems.
First of all, our configuration space is $\T^n$ rather than $\R^n$.
Secondly, we consider the problem of invariance under time evolution for arbitrary $n \in \N$, whereas Buchholz and Grundling only discuss the case $n = 1$.
We start with the simplest type of time evolution:

\begin{lem}\label{lem:free quantum time evolution}
Let $\hbar > 0$.
The algebra $A_\hbar$ is closed under the quantum time evolution corresponding to the free Hamiltonian $H_0$ that is the unique self-adjoint extension of the essentially self-adjoint operator $-\frac{\hbar^2}{2} \sum\frac{d^2}{dx_j^2}$ with domain $C^\infty(\T^n)$.
\end{lem}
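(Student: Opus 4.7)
The essential observation is that the free Hamiltonian $H_0$ is diagonal in the basis $\{\psi_a\}_{a\in\Z^n}$ of $L^2(\T^n)$: indeed $H_0\psi_a = 2\pi^2\hbar^2|a|^2\psi_a$, so $e^{itH_0/\hbar}\psi_a = e^{2\pi^2 it\hbar|a|^2}\psi_a$. Combined with the explicit formula \eqref{eq:formula}, $\QW(e_b\otimes h)\psi_a = h(2\pi\hbar(a+\tfrac12 b))\psi_{a+b}$, this allows us to compute the conjugation action of $e^{itH_0/\hbar}$ on a generator of $A_\hbar$ directly and by hand.

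First, I would fix a generator $e_b\otimes h_{U,\xi,g}\in\Sr(T^*\T^n)$ and compute
\[
e^{itH_0/\hbar}\QW(e_b\otimes h_{U,\xi,g})e^{-itH_0/\hbar}\psi_a = e^{2\pi^2 it\hbar(|a+b|^2-|a|^2)}\,h_{U,\xi,g}(2\pi\hbar(a+\tfrac12 b))\,\psi_{a+b}.
\]
Using the polarization identity $|a+b|^2-|a|^2 = 2(a+\tfrac12 b)\cdot b$, the phase factor rewrites as $e^{2\pi i t b\cdot p}$ evaluated at $p=2\pi\hbar(a+\tfrac12 b)$. Thus the whole right-hand side equals $\tilde h(2\pi\hbar(a+\tfrac12 b))\psi_{a+b}$ with $\tilde h(p):=e^{2\pi it b\cdot p}h_{U,\xi,g}(p)$.

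The remaining step is to recognize $\tilde h$ as a function of the form $h_{U,\xi',g'}$ so that $e_b\otimes \tilde h$ lies again in $\Sr(T^*\T^n)$. Decomposing $b = P_U(b)+P_{U^\perp}(b)$, I would set $\xi' := \xi+2\pi t\,P_{U^\perp}(b)\in U^\perp$ and $g'(q) := e^{2\pi i t\, P_U(b)\cdot q}g(q)$. Since multiplication by a bounded smooth function with bounded derivatives preserves $\S(U)$, we have $g'\in\S(U)$, and a short check gives $\tilde h = h_{U,\xi',g'}$. Hence, again by \eqref{eq:formula},
\[
e^{itH_0/\hbar}\,\QW(e_b\otimes h_{U,\xi,g})\,e^{-itH_0/\hbar} = \QW(e_b\otimes h_{U,\xi',g'})\in \QW(\Sr(T^*\T^n)).
\]

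Finally, since the generators $e_b\otimes h_{U,\xi,g}$ span $\Sr(T^*\T^n)$ and $\QW$ is linear (Proposition \ref{prop:Weyl quantization properties}\ref{it:linear and star preserving}), the *-automorphism $\alpha_t:=\mathrm{Ad}(e^{itH_0/\hbar})$ of $\B(L^2(\T^n))$ maps $\QW(\Sr(T^*\T^n))$ into itself. Because $A_\hbar$ is by definition the C*-algebra generated by this set, $\alpha_t(A_\hbar)\subseteq A_\hbar$; applying the same argument to $-t$ yields equality. There is essentially no analytic obstacle here—everything reduces to the transparent action of $H_0$ on the eigenbasis—so the only item requiring attention is the bookkeeping verification that the perturbed data $(\xi',g')$ still satisfy $\xi'\in U^\perp$ and $g'\in\S(U)$.
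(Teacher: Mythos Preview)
Your proof is correct and follows essentially the same route as the paper: diagonalize $H_0$ on the basis $\{\psi_a\}$, compute the conjugation on a generator $\QW(e_b\otimes h_{U,\xi,g})$ using the identity $|a+b|^2-|a|^2=2(a+\tfrac12 b)\cdot b$, and recognize the result as $\QW(e_b\otimes h_{U,\xi',g'})$ with exactly the same $\xi'=\xi+2\pi t\,P_{U^\perp}(b)$ and $g'(q)=e^{2\pi i t\,P_U(b)\cdot q}g(q)$ that the paper obtains. The only cosmetic difference is that you spell out the polarization identity and the final passage from generators to $A_\hbar$ a bit more explicitly.
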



\begin{proof}
We show that the quantum time evolution corresponding to $H_0$ maps the set of quantizations of the generators $e_b \otimes h_{U,\xi,g}$ of $C_{\mathcal{R}}(T^\ast \T^n)$ into itself; since the time evolution consists of a family of automorphisms of C*-algebras, the lemma will follow from this.

Let $e_b \otimes h_{U,\xi,g}$ be such a generator.
Note that for each $a \in \Z^n$, we have
\begin{equation}\label{eq:exp of H0 on eigenvector}
e^{-\frac{itH_0}{\hbar}} \psi_a
= e^{-2\pi^2 i t \hbar \|a\|^2}\psi_a.
\end{equation}
Using \eqref{eq:formula}, we obtain
\begin{align*}
&e^{\frac{itH_0}{\hbar}}\mathcal{Q}^\textnormal{W}_\hbar\!\left(e_b\otimes h_{U,\xi,g}\right) e^{-\frac{itH_0}{\hbar}} \psi_a \\
&\quad = e^{2\pi^2 i t \hbar (\|a + b\|^2 - \|a\|^2)} e^{2\pi \hbar i(a+b/2) \cdot \xi} g \circ P_U(2\pi \hbar(a+\tfrac12 b)) \psi_{a+b} \\
&\quad = e^{2\pi i\hbar (a+b/2) \cdot (\xi + 2 \pi t b)} g \circ P_U(2\pi \hbar(a+\tfrac12 b))\psi_{a+b} \\
&\quad = \mathcal{Q}^\textnormal{W}_\hbar \!\left(e_b \otimes h_{U,\tilde{\xi},\tilde{g}}\right) \psi_a,
\end{align*}
for each $a \in \Z^n$, where
\begin{equation*}
\tilde{\xi} := \xi + 2 \pi t P_{U^\perp}(b) \in U^\perp,
\end{equation*}
and
\begin{equation*}
\tilde{g} \colon U \rightarrow \C, \quad 
p \mapsto e^{2 \pi i t P_U(b) \cdot p} g(p),
\end{equation*}
is again a Schwartz function on $U$, so $e_b \otimes h_{U,\tilde{\xi},\tilde{g}}$ is a generator of $C_{\mathcal{R}}(T^\ast \T^n)$.
It follows that the set of generators of $A_\hbar$ is indeed invariant under the free quantum time evolution.
\end{proof}

\begin{rema}
Comparing the proof of Lemma \ref{lem:free quantum time evolution} with the proof of the analogous Lemma \ref{lem:free time evolution}, we see that (for $t=1$) $\tilde{\xi}$ and $\tilde{g}$ are both the same. Indeed, one can easily obtain $$\QW\circ (\Phi^t_0)^*=\tau_t^0\circ\QW,$$ which is analogous to a known result for Weyl quantization on $\R^{2n}$ (proved in higher generality in \cite[Theorem II.2.5.1]{landsman98}). There is generally no such result for non-free time evolution.
\end{rema}
\noindent
In order to deal with the general quantum time evolution, we recall some basic theory about lattices that we need due to the appearance of the lattice $\Z^n$ in $\T^n=\R^n/\Z^n$.
A set of linearly independent vectors $v_1,\ldots,v_k$ in a lattice $\Lambda$ is called \textit{primitive in $\Lambda$} if $\spn_\Z(v_1,\ldots,v_k)=\spn_\R(v_1,\ldots,v_k)\cap\Lambda$.
For instance, every $\Z$-basis of a lattice $\Lambda$ is primitive in $\Lambda$. Furthermore, we have the following result:

\begin{lem}\label{lem: extend primitive set}
	Let $\Lambda \subset \R^m$ be a lattice.
Every primitive set $v_1,\ldots,v_k$ in $\Lambda$ can be extended to a $\Z$-basis $v_1,\ldots,v_k,v_{k + 1},\ldots,v_m$ of $\Lambda$.
\end{lem}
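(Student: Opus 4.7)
\medskip

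\noindent\textit{Proof plan.} My plan is to pass to the quotient $\Lambda/\Lambda_0$, where $\Lambda_0 := \spn_\Z(v_1,\ldots,v_k)$, and apply the structure theorem for finitely generated abelian groups. The first step is to observe that primitivity is exactly the hypothesis needed to make this quotient torsion-free: if $[w] \in \Lambda/\Lambda_0$ satisfies $n[w]=0$ for some $n \geq 1$, then $nw \in \Lambda_0 \subseteq \spn_\R(v_1,\ldots,v_k)$, so $w \in \spn_\R(v_1,\ldots,v_k) \cap \Lambda$, which equals $\Lambda_0$ by primitivity, giving $[w]=0$.

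Second, since $\Lambda$ is a lattice in $\R^m$ it is a free $\Z$-module of rank $m$, and hence $\Lambda/\Lambda_0$ is a finitely generated torsion-free abelian group, so the structure theorem gives that it is free of some rank $r$. To pin down $r$, I would tensor with $\R$: the short exact sequence $0 \to \Lambda_0 \to \Lambda \to \Lambda/\Lambda_0 \to 0$ of free abelian groups remains exact after $-\otimes_\Z \R$, and since $\Lambda_0 \otimes_\Z \R = \spn_\R(v_1,\ldots,v_k)$ has dimension $k$ and $\Lambda \otimes_\Z \R = \R^m$ has dimension $m$, we get $r = m-k$.

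Third, I would choose any lifts $v_{k+1},\ldots,v_m \in \Lambda$ of a $\Z$-basis of $\Lambda/\Lambda_0$ and verify that $v_1,\ldots,v_m$ is a $\Z$-basis of $\Lambda$. Generation follows because any $w \in \Lambda$ can be written modulo $\Lambda_0$ in terms of the lifted basis, and the defect lies in $\Lambda_0 = \spn_\Z(v_1,\ldots,v_k)$. $\Z$-linear independence follows because a relation $\sum_{j=1}^m c_j v_j = 0$ projects to a relation among the basis classes in $\Lambda/\Lambda_0$, forcing $c_{k+1} = \cdots = c_m = 0$, after which $\Z$-linear independence of $v_1,\ldots,v_k$ (a consequence of their $\R$-linear independence) forces the remaining $c_j$ to vanish.

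The main obstacle, insofar as there is one, is the identification of the rank of $\Lambda/\Lambda_0$ as $m-k$; once this is established, the extension procedure is essentially bookkeeping. A minor alternative would be an inductive argument adding one primitive vector at a time (choosing $v_{k+1}$ as any element of $\Lambda$ projecting to a primitive vector in the free quotient), but the one-shot argument via the structure theorem seems cleanest.
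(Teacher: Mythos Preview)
Your argument is correct. The key observation---that primitivity of $v_1,\ldots,v_k$ is precisely what makes $\Lambda/\Lambda_0$ torsion-free---is exactly right, and from there the structure theorem for finitely generated abelian groups together with a rank count (via flatness of $\R$ over $\Z$) gives a free quotient of rank $m-k$, whose basis lifts to complete the $\Z$-basis of $\Lambda$. All steps are sound.

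The paper itself does not prove this lemma at all: it simply cites \cite[\S 1.3, Theorem 5]{Lekkerkerker69}. So your proposal is not an alternative to the paper's argument but rather a self-contained replacement for the external reference. The classical proof in Lekkerkerker proceeds via a slightly different route (essentially an inductive construction using the elementary divisor theorem / Smith normal form), but your quotient-and-lift argument is arguably cleaner and more conceptual, and it makes transparent why primitivity is the natural hypothesis.
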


\begin{proof}
	This is exactly \cite[\textsection 1.3, Theorem 5]{Lekkerkerker69}.
\end{proof}

\noindent 
This will help us prove the main theorem of this section:

\begin{thm}\label{thm:quantum time evolution}
Let $V \in C(\T^n)_\sa$, and define the self-adjoint operator $H := H_0 + M_V$ with domain $\dom H_0$ (see Lemma \ref{lem:free quantum time evolution}). I.e., $$H\psi=-\frac{\hbar^2}{2} \sum\frac{d^2\psi}{dx_j^2}+V\psi$$ for $\psi\in C^\infty(\T^n)$.
Let $\big(e^{\frac{-itH}{\hbar}}\big)_{t \in \R}$ be the corresponding one-parameter group implementing the quantum mechanical time evolution on $L^2(\T^n)$, and let $(\tau_t)_{t \in \R}$ be the associated one-parameter group of automorphisms on $\B(L^2(\T^n))$.
Then $$\tau_t(A_\hbar)=A_\hbar$$ for all $t\in\R$.
\end{thm}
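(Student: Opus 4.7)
\emph{Strategy and reduction to trigonometric potentials.} The plan is to prove $\tau_t(A_\hbar)\subseteq A_\hbar$ (the reverse inclusion then follows by replacing $t$ with $-t$) via a Dyson series in the interaction picture. First I would approximate $V$ in sup-norm by trigonometric polynomials $V_m\in C(\T^n)_{\sa}$ using Fej\'er kernels as in Lemma \ref{lem:approximating nabla V}. By Duhamel's formula, $\|e^{-itH_m/\hbar}-e^{-itH/\hbar}\|\leq(|t|/\hbar)\|V_m-V\|_\infty\to0$ uniformly on compact $t$-intervals, where $H_m:=H_0+M_{V_m}$. Since $A_\hbar$ is norm-closed and the corresponding automorphisms $\tau_t^{(V_m)}$ converge in norm pointwise to $\tau_t^{(V)}$, this reduces the theorem to $V=\sum_{b\in\mathcal N}c_b e_b$ with $\mathcal N\subseteq\Z^n$ finite. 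Formula \eqref{eq:formula} then immediately gives $M_{e_b}=\QW(e_b\otimes h_{\{0\},0,1})\in A_\hbar$, hence $M_V\in A_\hbar$.

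\emph{Dyson expansion and its Weyl symbols.} Let $\tau^0_s:=e^{isH_0/\hbar}(\,\cdot\,)e^{-isH_0/\hbar}$, which preserves $A_\hbar$ by Lemma \ref{lem:free quantum time evolution}. Writing $e^{-itH/\hbar}=e^{-itH_0/\hbar}U(t)$, the interaction-picture propagator admits the Dyson series
\begin{equation*}
U(t)=\sum_{n=0}^\infty\left(\frac{-i}{\hbar}\right)^n\int_{0\leq s_n\leq\cdots\leq s_1\leq t}\tau^0_{s_1}(M_V)\cdots\tau^0_{s_n}(M_V)\,ds_1\cdots ds_n,
\end{equation*}
with operator-valued integrals in the strong operator topology; boundedness of $V$ gives $\|\tau^0_s(M_V)\|=\|V\|_\infty$, so the $n$-th summand has operator norm at most $(|t|\|V\|_\infty/\hbar)^n/n!$ and the series converges in operator norm. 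Since $\tau_t(a)=U(t)^\ast\tau^0_t(a)U(t)$, it suffices to show $U(t)\in A_\hbar$. Using $\tau^0_s(M_{e_b})=\QW(e_b\otimes h_{\{0\},2\pi s b,1})$ (which falls out of the proof of Lemma \ref{lem:free quantum time evolution}) and iterating formula \eqref{eq:formula}, one obtains the product identity
\begin{equation*}
\tau^0_{s_1}(M_{e_{b_1}})\cdots\tau^0_{s_n}(M_{e_{b_n}})=C(s,b)\,\QW(e_B\otimes g_{s,b}),
\end{equation*}
where $B:=b_1+\cdots+b_n\in\Z^n$, $g_{s,b}(p):=e^{2\pi i L(s)\cdot p}$ with $L(s):=s_1 b_1+\cdots+s_n b_n\in\R^n$, and $|C(s,b)|=1$ is a phase independent of $p$. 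Because $\QW(e_B\otimes h)$ acts diagonally on $\{\psi_a\}$, the strong-operator integral commutes with $\QW$, yielding $\QW(e_B\otimes\tilde h)$ with $\tilde h=\tilde g\circ P_{U_b}$, $U_b:=\spn_\R(b_1,\ldots,b_n)$, and
\begin{equation*}
\tilde g(q):=\int_{0\leq s_n\leq\cdots\leq s_1\leq t}C(s,b)\,e^{2\pi i L(s)\cdot q}\,d^n s,\qquad q\in U_b.
\end{equation*}
The Riemann--Lebesgue lemma gives $\tilde g\in C_0(U_b)$, hence $e_B\otimes\tilde h\in C(\T^n)\hatotimes\Cr(\R^n)\subseteq\Cr(T^\ast\T^n)$ by Theorem \ref{thrm:tractable_resolvent_algebra}, and approximating $\tilde g$ uniformly by Schwartz functions together with the bound $\|\QW(e_B\otimes(g\circ P_U))\|\leq\|g\|_\infty$ places $\QW(e_B\otimes\tilde h)$ in $A_\hbar$.

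\emph{Conclusion and the role of Lemma \ref{lem: extend primitive set}.} Summing over the finitely many $(b_1,\ldots,b_n)\in\mathcal N^n$ keeps us in $A_\hbar$, and the norm-convergent sum over $n$ gives $U(t)\in A_\hbar$, whence $\tau_t(a)=U(t)^\ast\tau^0_t(a)U(t)\in A_\hbar$ by Lemma \ref{lem:free quantum time evolution}. The main technical obstacle I anticipate is cleanly identifying $e_B\otimes\tilde h$ as an element of $\Cr(T^\ast\T^n)$ when the $b_k$ are linearly dependent or when $\{b_1,\ldots,b_n\}$ fails to be primitive in $\Z^n$; this is precisely where Lemma \ref{lem: extend primitive set} enters, by providing a $\Z$-basis of $\Z^n$ adapted to the sublattice generated by $\mathcal N$ and thus a compatible tensor factorization of $\Cr(T^\ast\T^n)$ into which the Dyson integrands fit transparently as elements of the resolvent algebra.
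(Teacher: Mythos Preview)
Your approach is correct and follows the same Dyson-series scaffold as the paper: write $\tau_t(O)=U(t)^*\tau^0_t(O)\,U(t)$ with $U(t)=e^{itH_0/\hbar}e^{-itH/\hbar}$, reduce to trigonometric $V$, expand $U(t)$, and show that each integrated product of factors $\tau^0_{s_j}(M_{e_{b_j}})$ lies in $A_\hbar$. The difference is in that last step.

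The paper does not try to exhibit the integrated operator $O$ as a Weyl quantization. Instead, after the same product computation you carry out, it chooses a $\Z$-basis $v_1,\ldots,v_k$ of $\spn_\R(b_1,\ldots,b_m)\cap\Z^n$, extends it via Lemma~\ref{lem: extend primitive set} to a $\Z$-basis of $\Z^n$, and uses the resulting group automorphism $[x]\mapsto[Sx]$ of $\T^n$ to conjugate $M_{e_{-B}}O$ into the form $F\otimes\1$ with $F$ a Hilbert--Schmidt integral operator on $L^2(\T^k)$. Parts (3) and (4) of Proposition~\ref{prop:quantisation_notable_properties} then finish. The lattice lemma is essential \emph{there} because the conjugating unitary must come from an automorphism of $\T^n$, which forces an integral change of basis.

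Your route---identifying $O$ as (the norm limit of) $\mathcal{Q}^\textnormal{W}_\hbar(e_B\otimes(\tilde g\circ P_{U_b}))$ with $\tilde g\in C_0(U_b)$---is valid and actually sidesteps the lattice lemma entirely: the generators $e_b\otimes h_{U,\xi,g}$ of $\mathcal{S}_{\mathcal R}(T^*\T^n)$ allow \emph{arbitrary} linear subspaces $U\subseteq\R^n$, so no primitivity or $\Z$-basis is needed. Your final paragraph therefore misidentifies where Lemma~\ref{lem: extend primitive set} enters; in your argument it plays no role. Two places should be tightened. First, the Riemann--Lebesgue step is not immediate: the integral defining $\tilde g$ runs over an $m$-simplex in the $s$-variables, not over $U_b$, so you must first push forward along the surjection $L:\R^m\to U_b$ via a linear change of variables (this is exactly the invertible lift $T$ in the paper's proof) to express $\tilde g$ as the inverse Fourier transform of a compactly supported $L^\infty$ density on $U_b$. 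Second, the phrase ``the strong-operator integral commutes with $\mathcal{Q}^\textnormal{W}_\hbar$'' is informal since $e_B\otimes\tilde h\notin\mathcal{S}_{\mathcal R}(T^*\T^n)$; what you really use is that $O$ acts by $\psi_a\mapsto\tilde h(2\pi\hbar(a+\tfrac12 B))\psi_{a+B}$, and such an operator is a norm limit of $\mathcal{Q}^\textnormal{W}_\hbar(e_B\otimes(g_j\circ P_{U_b}))$ with $g_j\to\tilde g$ in $C_0(U_b)$, by the bound in Proposition~\ref{prop:Weyl_quantisation_is_well-defined}(2).
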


\begin{proof}
We claim that for each $t \in \R$, we have
\begin{equation*}
e^{\frac{itH_0}{\hbar}} e^{\frac{-itH}{\hbar}} \in A_\hbar .
\end{equation*}
Suppose for the moment that this claim holds true.
Then for each $O \in A_\hbar$ and each $t \in \R$, we have
\begin{equation*}
\tau_t(O)
= e^{\frac{itH}{\hbar}} O e^{\frac{-itH}{\hbar}}
= \left( e^{\frac{itH_0}{\hbar}} e^{\frac{-itH}{\hbar}} \right)^\ast \tau^0_t(O) \left( e^{\frac{itH_0}{\hbar}} e^{\frac{-itH}{\hbar}} \right).
\end{equation*}
By assumption, the first and the third factors of the right-hand side (those within parentheses) are elements of $A_\hbar$, and the second factor is an element of $A_\hbar$ by Lemma \ref{lem:free quantum time evolution}.
It then follows that $\tau_t(O) \in A_\hbar$.

Thus it remains to prove the claim.
As in the proof of \cite[Proposition 6.1]{BG}, we use the fact that the product of two of the elements of the different one parameter groups can be written as a norm-convergent Dyson series, i.e.,
\begin{equation}
e^{\frac{itH_0}{\hbar}} e^{\frac{-itH}{\hbar}}
= \sum_{m = 0}^\infty (i\hbar)^{-m} \int_0^t \int_0^{t_1} \dots \int_0^{t_{m - 1}} \tau^0_{t_1} (M_V) \cdots \tau^0_{t_m}( M_V ) \: dt_m \cdots dt_2 \: dt_1.
\label{eq:Dyson_series}
\end{equation}
The integrals in the above expression can be defined in the following way.
First, observe that the function
\begin{equation*}
\R \rightarrow \B(L^2(\T^n)) , \quad 
t \mapsto \tau^0_t (M_V) ,
\end{equation*}
is bounded and strongly continuous.
It follows that the function
\begin{equation*}
\R^m \rightarrow \B(L^2(\T^n)) , \quad 
(t_1, \ldots, t_m) \mapsto \tau^0_{t_1} (M_V) \cdots \tau^0_{t_m} (M_V) ,
\end{equation*}
is bounded and strongly continuous.
For each $\psi \in L^2(\T^n)$, one can therefore define the integral
\begin{equation}\label{Dyson term}
\int_0^t \int_0^{t_1} \dots \int_0^{t_{m - 1}} \tau^0_{t_1}( M_V) \cdots \tau^0_{t_m}(M_V)\psi \: dt_m \cdots dt_2 \: dt_1 ,
\end{equation}
using Bochner integration, and it is easy to check that the norm of the corresponding operator is less than or equal to $(m!)^{-1} |t|^m \|V\|_\infty^m$, so that the Dyson series is indeed norm-convergent.
As in \cite{BG}, because \eqref{Dyson term} is continuous in $V$ it suffices to prove the claim for potentials $V$ that lie in a dense subset of $C(\T^n)$. If we assume that $V$ is in the span of $\set{e_b}{ b\in\Z^n}$, we can write \eqref{Dyson term} as a sum of relatively explicit expressions.
Thus, we are left to show that for each $t \in \R$ and each $b_1,\ldots,b_m \in \Z^n$, the operator
\begin{equation*}
O:=\int_0^t \int_0^{t_1} \dots \int_0^{t_{m - 1}} \tau^0_{t_1} (M_{e_{b_1}}) \cdots \tau^0_{t_m}( M_{e_{b_m}} ) \: dt_m \cdots dt_1 ,
\end{equation*}
lies in $A_\hbar$. 
A quick computation using \eqref{eq:exp of H0 on eigenvector} gives us
\begin{align*}
	\tau^0_t(M_{e_b})\psi_a
&= M_{e_b}e^{2\pi^2 i\hbar (\norm{a+b}^2-\norm{a}^2)}\psi_a\\
	&= M_{e_b} e^{2\pi^2it\hbar \norm{b}^2}e^{4\pi^2it\hbar b\cdot a}\psi_a,
\end{align*}
which shows that, for any $\psi\in L^2(\T^n)$ and $[x]\in\T^n$, we have
\begin{align*}
	(\tau^0_t(M_{e_b})\psi)[x] = e^{2\pi ix\cdot b} e^{2\pi^2i\hbar t\norm{b}^2}\psi \left[ x + 2\pi\hbar tb \right] .
\end{align*}
Applying this formula many times, we find a function $f_0\in C_\textnormal{b}(\R^m)$ that takes values on the unit circle such that
\begin{align*}
	\tau^0_{t_1} (M_{e_{b_1}}) \cdots \tau^0_{t_m}( M_{e_{b_m}} )\psi[x] = e^{2\pi i x\cdot\sum b_i} f_0(t_1,\ldots,t_m)\psi\left[x+2\pi\hbar\sum t_ib_i\right].
\end{align*}
The operator $O$ looks like an integral operator, in the sense that we perform an integral over the variables $t_i$ that appear as $\sum t_i b_i$ in the argument of $\psi$. However, the $b_i$'s may both fail to constitute a linearly independent and a complete set of vectors in $\R^n$. Still, we can relate $O$ to an integral operator, which will be the subject of the rest of the proof. 

We use a special case of Lemma \ref{lem: extend primitive set} (extending an empty primitive set) to find a $\Z$-basis $v_1,\ldots,v_k$ of $\spn_\R(b_1,\ldots,b_m)\cap\Z^n$. Because the $b_i$'s are integral, this is also an $\R$-basis of $\spn_\R(b_1,\ldots,b_m)$. Expressing the $b_i$'s in terms of $v_j$'s as
	$$ b_i = \sum_{j=1}^k c_{ij}v_j,$$
we obtain
\begin{align*}
	\psi\bigg[x+2\pi\hbar\sum_{i=1}^m t_i b_i\bigg]&=\psi\bigg[x+2\pi\hbar\sum_{j=1}^k\sum_{i=1}^m t_i c_{ij}v_j\bigg]\\
	&= \psi\bigg[x+2\pi\hbar\sum_{j=1}^k T_0(t_1,\ldots, t_m)_jv_j\bigg],
\end{align*}
for a unique surjective linear map $T_0 \colon \mathbb{R}^m \rightarrow \mathbb{R}^k$. By surjectivity, the map $T_0$ admits a lift to an invertible linear map $T \colon \mathbb{R}^m \rightarrow \mathbb{R}^m$ with respect to the projection $\mathbb{R}^m \rightarrow \mathbb{R}^k$ onto the first $k$ coordinates. Fix such a $T$, and perform a change of variables, replacing $(t_1,\ldots, t_m)$ with $T^{-1}(s)$. We get
\begin{align*}
	O\psi[x] = e^{2\pi i x\cdot\sum b_i}\left|\det T\right|^{-1}\int_K f_0\big(T^{-1}s\big)\psi\bigg[x+2\pi\hbar\sum_{j=1}^k s_j v_j\bigg] \:ds,
\end{align*}
for some compact subset $K\subseteq \R^m$. Let $K'$ be the image of $K$ under the projection $\R^m\rightarrow\R^k$ onto the first $k$ coordinates, and define the function $f_1\colon\R^k\rightarrow \C$ by
\begin{align*}
	f_1\colon s_{(1)}\mapsto \left|\det T\right|^{-1}\int_{\R^{m-k}}\indicator_{K}(s_{(1)}\oplus s_{(2)})f_0\big(T^{-1}(s_{(1)}\oplus s_{(2)})\big)\:ds_{(2)}.
\end{align*}
One easily finds that $f_1\in L^\infty(\R^k)$. We are now left with the integral
\begin{align*}
	O\psi[x]=e^{2\pi ix\cdot\sum b_i}\int_{K'} f_1(s)\psi\bigg[x+2\pi\hbar\sum_{j=1}^k s_j v_j\bigg]\:ds.
\end{align*}
We want to relate the above integral to an integral over the first $k$ components in $\T^n$.
For this purpose, we apply Lemma \ref{lem: extend primitive set} once more to extend $v_1,\ldots,v_k$ to a $\Z$-basis $v_1,\ldots,v_n$ of $\Z^n$, and let $S$ be the matrix whose columns are the vectors $v_1,\ldots,v_n$.
Since $S$ and its inverse are matrices in $GL_n(\Z)$, we find that $\det S=\pm1$.
Moreover, $S$ induces the group automorphism $[x] \mapsto [Sx]$ of $\T^n$, which we can pull back to the unitary map
	\[U\colon L^2(\T^n)\rightarrow L^2(\T^n),\quad U\psi[x]:=\psi[Sx],\]
for which it is straightforward to check (on generators of $A_\hbar$) that $U^{-1} A_\hbar 
U\subseteq A_\hbar$. For $\varphi=\varphi_1\otimes\varphi_2\in L^2(\T^k)\otimes L^2(\T^{n-k})$ we have, denoting $b:=\sum_i b_i$,
\begin{align*}
	U M_{e_{-b}}OU^{-1}\varphi[x]
	&= \int_{K'} f_1(s)U^{-1}\varphi\bigg[S(x)+2\pi\hbar\sum_{j=1}^k s_j S(e_j)\bigg]\:ds\\
	&=\int_{K'} f_1(s)\varphi\left[x+2\pi\hbar(s\oplus0)\right]\:ds\\
	&= \int_{K'} f_1(s)\varphi_1\left(x_{(1)}+2\pi\hbar s+\Z^k\right)\varphi_2\left(x_{(2)}+\Z^{n-k}\right)\:ds\\
	&=\int_{\T^k} f_2\left(x_{(1)}+\Z^k,s\right)\varphi_1(s)\:ds\:\varphi_2\left(x_{(2)}+\Z^{n-k}\right),
\end{align*}
where $x=x_{(1)}\oplus x_{(2)}$ and $f_2\in L^\infty(\T^k\times\T^k) \subseteq L^2(\T^k\times\T^k)$ denotes the function
\begin{equation*}
f_2(r,s)
:= \sum_{M\in\Z^k} f_1\left(\frac{\iota(s-r)+M}{2\pi\hbar}\right),
\end{equation*}
where $\iota$ denotes the canonical map $\T^k\rightarrow[0,1)^k$.
Note that the above sum has only finitely many nonzero terms since $f_1$ is compactly supported.

In conclusion, we have proved that
\begin{align*}
	O = M_{e_{b}}U^{-1}(F\otimes\1)U,
\end{align*}
for a Hilbert--Schmidt integral operator $F\in \mK(L^2(\T^k))$ with kernel $f_2\in L^2(\T^k\times\T^k)$. By part (3) of Proposition \ref{prop:Weyl quantization properties}, any compact operator, like $F$, is inside the resolvent algebra on $\T^k$. By part (4) of Proposition \ref{prop:Weyl quantization properties}, this implies that $F\otimes\1\in A_\hbar$, and hence $U^{-1}(F\otimes\1)U\in A_\hbar$. As $M_{e_{b}}$ is the quantization of $e_{b}\otimes \indicator_{\R^n}$, we find $O\in A_\hbar$. As we have seen, linearity and continuity of the Dyson series imply that $e^{\frac{itH_0}{\hbar}}e^{\frac{-itH}{\hbar}}\in A_\hbar$, and this implies the theorem itself.
\end{proof}


\section{Weyl quantization on $T^*\T^n$ is almost strict}
\label{sct:almost strict}
The following definition is equivalent to \cite[Definition II.1.1.2]{landsman98}, other definitions of strict deformation quantization are reviewed in \cite[Section 2]{Hawkins}.
\begin{defi}
Let $\A_0$ be a complex Poisson algebra densely contained in a C*-algebra $A_0$, satisfying $\{f,g\}^*=\{f^*,g^*\}$. A \textbf{strict deformation quantization} of $\A_0$ consists of a subset $I\subseteq\R$ with $0\in I\cap\overline{I\setminus\{0\}}$, a collection of C*-algebras $\{A_\hbar\}_{\hbar\in I}$ (with norms $\|\cdot\|_\hbar$) and a collection of injective linear *-preserving maps $Q_\hbar:\A_0\to A_\hbar$ ($\hbar\in I$) such that $Q_0$ is the identity map, $Q_\hbar(\A_0)$ is a dense *-subalgebra of $A_\hbar$ ($\hbar\in I$) and for all $f,g\in\A_0$:
\begin{align*}
		\lim_{\hbar\to0}\norm{Q_\hbar(f)Q_\hbar(g)-Q_\hbar(fg)}=0&\qquad\textbf{(von Neumann's condition);}\\				\lim_{\hbar\to0}\norm{(-i\hbar)^{-1}[Q_\hbar(f),Q_\hbar(g)]-Q_\hbar(\{f,g\})}=0&\qquad\textbf{(Dirac's condition)};\\
		\text{the map}\quad I\to\R,\quad\hbar\mapsto\norm{Q_\hbar(f)}\quad\text{is continuous}&\qquad\textbf{(Rieffel's condition).}
	\end{align*}
\end{defi}

Looking at this definition, we immediately see that $Q_\hbar=\QW$ is not a strict deformation quantization, because $\QW$ is not injective. One does however have many of the above properties. The following theorem is proven in \cite[Theorem 22]{vNS20} and \cite[Theorem 7.8]{Stienstra}.
\begin{thm}\label{thm:almost strict}
	Let $I:=[0,\infty)$. Then, except for continuity at $\hbar>0$, the triple
		$$\left(I,\{A_\hbar\}_{\hbar\in I},\{\QW:\Sr(T^*\T^n)\to A_\hbar\}_{\hbar\in I}\right),$$
	is a strict quantization of the Poisson algebra $\A_0=\Sr(T^*\T^n)$, i.e., $\QW$ is a linear *-preserving map such that $\mathcal{Q}^\textnormal{W}_0$ is the identity map, $\QW(\A_0)$ is a dense *-subalgebra of $A_\hbar$, both von Neumann's condition and Dirac's condition hold, and the map
	$$ I\to\R,\quad\hbar\mapsto\norm{\mathcal{Q}_\hbar(f)}$$
	is continuous at $0$.
\end{thm}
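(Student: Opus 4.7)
The plan is to verify each of the listed conditions separately by exploiting the explicit formula \eqref{eq:formula} for $\QW$ on generators of $\Sr(T^*\T^n)$ of the form $f = e_b \otimes h_{U,\xi,g}$. Linearity and the *-preserving property of $\QW$ are already recorded in Proposition \ref{prop:Weyl quantization properties}\ref{it:linear and star preserving}. The identity $\mathcal{Q}_0^{\textnormal{W}} = \mathrm{id}_{\Sr(T^*\T^n)}$ is built into the setup, since at $\hbar = 0$ the target algebra is $A_0 := \Cr(T^*\T^n)$ with $Q_0$ the inclusion. Density of $\QW(\Sr(T^*\T^n))$ in $A_\hbar$ for $\hbar > 0$ holds by the very definition of $A_\hbar$. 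So the real work is in the three analytic conditions.

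First I would address von Neumann's and Dirac's conditions simultaneously. The key observation is that for any two generators $f = e_b \otimes h$ and $g = e_c \otimes k$ (with $h,k$ of the Schwartz-times-exponential form of Definition \ref{def:resolvent_algebra_Schwartz_functions}), formula \eqref{eq:formula} shows that $\QW(f)$, $\QW(g)$, their product, and the quantization of $fg$ are all weighted shifts on the orthonormal basis $(\psi_a)_{a\in\Z^n}$ of $L^2(\T^n)$, each sending $\psi_a$ to a scalar multiple of $\psi_{a+b+c}$. Consequently the operator norm of $\QW(f)\QW(g) - \QW(fg)$ is the supremum over $a \in \Z^n$ of the weight difference
\[
h\bigl(2\pi\hbar(a+c+\tfrac{b}{2})\bigr)\,k\bigl(2\pi\hbar(a+\tfrac{c}{2})\bigr) - (hk)\bigl(2\pi\hbar(a+\tfrac{b+c}{2})\bigr).
\]
Setting $y := 2\pi\hbar(a+(b+c)/2)$, this equals
\[
\bigl[h(y+\pi\hbar c) - h(y)\bigr]k(y-\pi\hbar b) + h(y)\bigl[k(y-\pi\hbar b) - k(y)\bigr],
\]
which is $O(\hbar)$ uniformly in $a$ because the generators $h_{U,\xi,g}$ have globally bounded first derivatives. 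This proves von Neumann's condition. For Dirac's condition, carry the Taylor expansion one order further: a direct computation shows that the weight of the commutator $[\QW(f),\QW(g)]$ at $\psi_a$ equals
\[
-i\hbar\cdot 2\pi i\bigl[h(y)\,b\cdot\nabla k(y) - k(y)\,c\cdot\nabla h(y)\bigr] + O(\hbar^{2}),
\]
uniformly in $a$ (using uniform bounds on second derivatives of $h,k$). On the other hand a short Poisson-bracket computation yields $\{f,g\} = 2\pi i\, e_{b+c}\otimes[h\,b\cdot\nabla k - k\,c\cdot\nabla h]$, and $\QW(\{f,g\})\psi_a$ has precisely the weight $2\pi i[h(y)b\cdot\nabla k(y) - k(y)c\cdot\nabla h(y)]$. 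Dividing by $-i\hbar$ and taking norms gives Dirac's condition on generators, which extends to all of $\Sr(T^*\T^n)$ by bilinearity.

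The main obstacle is continuity at $\hbar = 0$ of $\hbar \mapsto \|\QW(f)\|$. The upper bound $\limsup_{\hbar\to 0}\|\QW(f)\| \leq \|f\|_\infty$ should follow from the operator norm bound in Proposition \ref{prop:Weyl_quantisation_is_well-defined}(2) combined with a suitable limiting argument. For a single generator $e_b\otimes h_{U,\xi,g}$, the weighted-shift structure gives $\|\QW(e_b\otimes h_{U,\xi,g})\| = \sup_{a\in\Z^n}|h_{U,\xi,g}(2\pi\hbar(a+b/2))|$, and as $\hbar \to 0$ the lattice $2\pi\hbar(\Z^n + b/2)$ becomes asymptotically dense in $\R^n$, so this supremum tends to $\|h_{U,\xi,g}\|_\infty = \|e_b\otimes h_{U,\xi,g}\|_\infty$ by continuity. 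The harder case is a finite linear combination $f = \sum_j e_{b_j}\otimes h_j$, for which $\QW(f)$ is a finite-band operator rather than a single weighted shift. My plan here is to construct, for each $(q_0,p_0) \in T^*\T^n$, a family of approximate coherent states $\phi_{q_0,p_0,\hbar} \in L^2(\T^n)$ (of unit norm) satisfying
\[
\lim_{\hbar\to 0}\bigl\langle \phi_{q_0,p_0,\hbar},\,\QW(f)\,\phi_{q_0,p_0,\hbar}\bigr\rangle = f(q_0,p_0),
\]
which then yields $\liminf_{\hbar\to 0}\|\QW(f)\| \geq |f(q_0,p_0)|$ and, taking the supremum over $(q_0,p_0)$, the desired lower bound $\liminf_{\hbar\to 0}\|\QW(f)\| \geq \|f\|_\infty$. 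This coherent-state construction is the most delicate part of the argument, since it must be tailored to the torus geometry while accommodating the fact that the symbols $h_{U,\xi,g}$ do not vanish at infinity in $\R^n$; once in place, it closes the proof together with the upper-bound estimate.
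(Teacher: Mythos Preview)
The paper does not give its own proof here, referring instead to \cite[Theorem 22]{vNS20} and \cite[Theorem 7.8]{Stienstra}. That said, the analogous continuum-limit statements in Chapter~\ref{ch:SDQ} (Propositions \ref{prop:von Neumann}, \ref{prop:Dirac}, \ref{prop:Rieffel0}) are proved in full there and rest on the same finite-level techniques, so a meaningful comparison is possible.

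Your treatment of von Neumann's and Dirac's conditions via the weighted-shift formula \eqref{eq:formula} is essentially correct and matches the computations in Propositions \ref{prop:von Neumann} and \ref{prop:Dirac}. One small slip: density of $\QW(\Sr(T^*\T^n))$ in $A_\hbar$ does not follow ``by the very definition'', since $A_\hbar$ is defined as the C*-algebra \emph{generated} by the image; you still need that $\QW(\Sr(T^*\T^n))$ is itself a $*$-subalgebra, which does follow from a short computation with \eqref{eq:formula} together with the algebra structure of $\Sr(T^*\T^n)$.

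The genuine gap is in Rieffel's condition at $0$: you have the difficulty of the two inequalities reversed. The lower bound $\liminf_{\hbar\to 0}\|\QW(f)\|\geq\|f\|_\infty$ via coherent or localized states is the routine direction; this is precisely the part that the proof of Proposition \ref{prop:Rieffel0} dispatches in one line by quoting \cite[Theorem 7.8(1)]{Stienstra}. The hard direction is the upper bound $\limsup_{\hbar\to 0}\|\QW(f)\|\leq\|f\|_\infty$, and your proposed route via Proposition \ref{prop:Weyl_quantisation_is_well-defined}(2) does not work. That bound controls a single generator $e_b\otimes h_{U,\xi,g}$ by $\|g\|_\infty$, but for a linear combination $f=\sum_j e_{b_j}\otimes h_j$ the triangle inequality only yields $\sum_j\|h_j\|_\infty$, which can vastly exceed $\|f\|_\infty$. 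Weyl quantization is not positive, so there is no uniform bound $\|\QW(f)\|\leq\|f\|_\infty$ to fall back on. The proof of the continuum analogue, Proposition \ref{prop:Rieffel0}, shows what is actually required: a localization argument on the configuration space (covering by small cells $V_{\delta,j}$, selecting a favourable translate $q_0$, further chopping into pieces $U_s$, and then passing to $L^2(\g^m)$ via Lemma \ref{lem:supnorm from operators}) that compares $\|\QW(f)\psi\|_2$ with $\|f\|_\infty\|\psi\|_2$ up to an arbitrarily small error. That argument, not the coherent-state construction, is where the real work lies.
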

The above theorem justifies us in calling the map $\QW$ a strict quantization map, because, when defined on the index set $I:=\{0\}\cup 1/\N$, it is an actual strict quantization as defined in \cite[Definition II.1.1.1]{landsman98}. This palliative discretizing of $I$ is also needed in geometric quantization, as discussed in \cite{Hawkins}, and is needed here because of the following lemma.

\begin{lem}\label{lem:failure of Rieffel's condition}
Rieffel's condition away from zero does not hold for $\QW$ defined on an index set $I$ that contains an open subset. 
\end{lem}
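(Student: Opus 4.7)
The approach is to exhibit an explicit $f\in\Sr(T^\ast\T^n)$ for which the map $\hbar\mapsto\|\QW(f)\|$ has jump discontinuities at a dense set of points in $(0,\infty)$, and hence fails Rieffel's condition on any open subset of $I$ away from $0$. I would take $n=1$ (the higher-dimensional cases follow from part~(4) of Proposition~\ref{prop:Weyl quantization properties}) and use the realization of constants and almost periodic exponentials as $h_{U,\xi,g}$ with $U=\{0\}$: since then $U^\perp=\R$ and $\S(\{0\})=\C$, one has $h_{\{0\},0,1}(p)=1$ and $h_{\{0\},1,1}(p)=e^{ip}$, so
\[
f:=e_0\otimes\bigl(h_{\{0\},0,1}-h_{\{0\},1,1}\bigr)=1\otimes(1-e^{ip})\in\Sr(T^\ast\T).
\]

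The quantization formula \eqref{eq:formula} then gives that $\QW(f)$ is diagonal in the basis $\{\psi_a:a\in\Z\}$ with
\[
\QW(f)\psi_a=\bigl(1-e^{i\,2\pi\hbar a}\bigr)\psi_a,
\]
so the operator norm equals the supremum over $a\in\Z$ of the diagonal entries in absolute value,
\[
\|\QW(f)\|=\sup_{a\in\Z}\bigl|1-e^{i\,2\pi\hbar a}\bigr|=2\sup_{a\in\Z}\bigl|\sin(\pi\hbar a)\bigr|.
\]
The key step is the elementary analysis of this supremum. For irrational $\hbar$, Weyl equidistribution implies that $\{\pi\hbar a\bmod\pi:a\in\Z\}$ is dense in $[0,\pi]$, so $\|\QW(f)\|=2$. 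For $\hbar=p/q$ in lowest terms, the residues $pa\bmod q$ exhaust $\{0,1,\ldots,q-1\}$, hence $\sup_{a}|\sin(\pi\hbar a)|=\max_{0\leq k<q}|\sin(\pi k/q)|$; this maximum equals $1$ when $q$ is even (attained at $k=q/2$) but only $\cos(\pi/(2q))<1$ when $q$ is odd (attained at $k=(q\pm1)/2$). Thus $\|\QW(f)\|<2$ precisely on the set of rationals with odd denominator, while $\|\QW(f)\|=2$ at every irrational point.

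To conclude, I would pick any nonempty open subset $(\alpha,\beta)\subseteq I\cap(0,\infty)$ and choose a rational $\hbar_0=p/q\in(\alpha,\beta)$ with $q$ odd and $q\geq 3$, which is possible since such rationals are dense in $\R$. At this $\hbar_0$ the norm equals $2\cos(\pi/(2q))<2$, while any sequence of irrationals $\hbar_n\to\hbar_0$ in $(\alpha,\beta)$ gives $\|\QW(f)\|_{\hbar_n}=2$; hence $\hbar\mapsto\|\QW(f)\|$ is discontinuous at $\hbar_0$. There is no serious obstacle to this plan: the only non-routine verification is that $1$ and $e^{ip}$ genuinely belong to the spanning set defining $\Sr(T^\ast\T^n)$, which is immediate once one recognises them as $h_{\{0\},\xi,g}$ with scalar $g$.
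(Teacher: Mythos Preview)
Your proof is correct. The approach is essentially the same as the paper's---both take $f=e_0\otimes h$ with $h$ an almost periodic function, use the diagonal formula $\QW(e_0\otimes h)\psi_a=h(2\pi\hbar a)\psi_a$, and analyze $\sup_{a\in\Z}|h(2\pi\hbar a)|$---but the choice of $h$ differs. The paper picks $h(p)=\sin(p_1/\hbar_0)$, which is tailored to a prescribed $\hbar_0>0$: since $h$ vanishes on $2\pi\hbar_0\Z^n$ one gets $\|\mathcal Q^{\textnormal W}_{\hbar_0}(f)\|=0$, while along the sequence $\hbar_N=\hbar_0(1+\tfrac{1}{4N})$ the eigenvalue at $a=(N,0,\ldots,0)$ equals $1$, giving $\|\mathcal Q^{\textnormal W}_{\hbar_N}(f)\|=1$. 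This yields the discontinuity with almost no computation. Your choice $h(p)=1-e^{ip_1}$ is independent of the target point and therefore requires the extra rationality/parity analysis (equidistribution for irrational $\hbar$, the $\cos(\pi/(2q))$ computation for odd $q$), but in return a single $f$ exhibits discontinuities on a dense set. Two minor remarks: the reduction to $n=1$ via Proposition~\ref{prop:quantisation_notable_properties}(4) is unnecessary, since $h(p)=1-e^{ip_1}=h_{\{0\},0,1}(p)-h_{\{0\},e_1,1}(p)$ already lies in $\Sr(T^\ast\T^n)$ for any $n$ and the same norm formula holds; and your restriction to $(\alpha,\beta)\subseteq I\cap(0,\infty)$ should also allow for open subsets in $(-\infty,0)$, though by the symmetry $\hbar\mapsto-\hbar$ nothing changes.
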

\begin{proof}
Let $\hbar_0 > 0$ be arbitrary, and consider the function $f = e_0 \otimes h$, where the function $h$ is defined as follows:
\begin{equation*}
h:\R^n\to\R, \quad 
p = (p_1,p_2,\ldots, p_n) \mapsto \sin \left( \frac{p_1}{\hbar_0} \right).
\end{equation*}
Note that $h$ can be written as the sum of two generators of $\Weylalg{\R^n} \subseteq \Teunalg{\R^n}$, so $f \in \mathcal{S}_{\mathcal{R}}(T^*\T^n)$.
Furthermore, $h$ vanishes at each point in $2 \pi \hbar_0 \cdot \Z^n$, hence $\mathcal{Q}^\textnormal{W}_{\hbar_0}(f) = 0$ by the definition \eqref{eq:formula}, or equivalently, $\|\mathcal{Q}^\textnormal{W}_{\hbar_0}(f)\| = 0$.
On the other hand, for each $N \in \N \backslash \{0\}$, let
\begin{equation*}
\hbar_N := \hbar_0 \left( 1 + \frac{1}{4N} \right).
\end{equation*}
Then $\|\mathcal{Q}^\textnormal{W}_{\hbar_N}(f)\| = 1$; indeed, we have $\|\mathcal{Q}^\textnormal{W}_{\hbar_N}(f)\| \leq \|h\|_\infty = 1$, and equality holds since
\begin{equation*}
\mathcal{Q}^\textnormal{W}_{\hbar_N}(f) \psi_{(N,0,0,\ldots,0)}
= \psi_{(N,0,0,\ldots,0)}\,.
\end{equation*}
Thus, while $\lim_{N \to \infty} \hbar_N = \hbar_0$, we also have
\begin{equation*}
\lim_{N \to \infty} \|\mathcal{Q}^\textnormal{W}_{\hbar_N}(f)\|
= 1 \neq 0
= \|\mathcal{Q}^\textnormal{W}_{\hbar_0}(f)\|,
\end{equation*}
so the function $\hbar \to \|\mathcal{Q}^\textnormal{W}_\hbar(f)\|$ fails to be continuous at $\hbar_0$.
\end{proof}

In the next chapter, we will reveal the failure of both injectivity and Rieffel's condition as being artifacts of the regularization procedure that tries to describe a gauge theory on a finite lattice. The above remarks about strict (deformation) quantization play a key role in obtaining the correct continuum limit, and, vice versa, the continuum limit will solve the problems addressed above in a remarkable way.

\chapter{Strict Deformation Quantization of Abelian Lattice Gauge Fields}
\chaptermark{S.D.Q. of Abelian Lattice Gauge Fields}
\label{ch:SDQ}

	This chapter, adapted from \cite{vN21}, shows how to construct classical and quantum field C*-algebras modeling a $U(1)^n$-gauge theory in any dimension using a novel approach to lattice gauge theory, while simultaneously constructing a strict deformation quantization between the respective field algebras.
	The construction starts with quantization maps defined on operator systems (instead of C*-algebras) associated to the lattices, in a way that quantization commutes with all lattice refinements, therefore giving rise to a quantization map on the continuum (meaning ultraviolet and infrared) limit. Although working with operator systems at the finite level, in the continuum limit we obtain genuine C*-algebras. We also prove that the C*-algebras (classical and quantum) are invariant under time evolutions related to the electric part of abelian Yang--Mills. Our classical and quantum systems at the finite level are essentially the ones of Chapter \ref{ch:cylinder}, which admit completely general dynamics, and we briefly discuss ways to extend this powerful result to the continuum limit. 

\section{Introduction}

As mentioned in the introduction, constructing a field algebra for the continuum limit of a quantum abelian lattice gauge theory roughly comes down to two problems, firstly to define the system at the finite level, and secondly to define the limit.

The resolvent algebra on the torus introduced in Chapter \ref{ch:cylinder} seems to be the ideal field algebra on the finite level. If the gauge group is $\T^n=U(1)^n$ and the lattice has $k$ edges, the resolvent algebra on $\T^{nk}$ is a C*-subalgebra of $\mB(L^2(\T^{nk}))$ containing $\mK(L^2(\T^{nk})$ and a copy of the crossed product algebra $C(\T^{nk})\rtimes \T^{nk}$, while being preserved under time evolutions by Theorem \ref{thm:quantum time evolution}. Moreover, like the crossed product algebra it contains, the resolvent algebra on $\T^{nk}$ admits a gauge group action, admits embedding maps related to addition of edges, and has a reasonably good notion of a classical limit. However, Section \ref{sct:almost strict} also uncovered a few problems with that same classical limit, namely that the respective quantization map lacks injectivity as well as Rieffel's condition, and therefore does not define a strict deformation quantization. 

	With regards to constructing a continuum limit, another challenging problem has been identified in \cite{Stienstra}. Whereas the embedding map for adding an edge to the lattice is easily defined by construction of the resolvent algebra, there are severe obstructions against a *-homomorphic embedding map for the subdivision of an edge, as explained in \cite[pages 247--249]{Stienstra}. These problems are more or less independent from the field algebra one chooses at the finite level; one can easily see that the same problems arise for the crossed product algebra $C(\T^{nk})\rtimes \T^{nk}$, and that the situation for $\mK(L^2(\T^{nk}))$ is even worse.

 The current chapter solves all of the above problems simultaneously, by letting go of the need for multiplicativity of the embedding maps.
  On each lattice, we restrict ourselves to a subspace of the classical C*-algebra on which the quantization map \eqref{eq:formula} is injective. This subspace and its image under quantization turn out to be only operator systems, and not algebras. At first, this appears to distance us from the powerful C*-algebraic approach. However, on these operator systems, both the classical and quantum embedding maps are now naturally defined and commute with quantization. Moreover, the ensuing limit of operator systems turns out be a *-algebra lying dense in a C*-algebra, thus recovering the C*-algebraic approach.   
  
  This `operator systemic' method has many advantages. The obtained quantum embedding maps respect the gauge action, which becomes very important when one wishes to make the step from field algebras to observable algebras.
  Moreover, in the continuum limit, the quantum and classical theory behave even better than in the case on the lattice, in the sense that they form a strict deformation quantization, satisfying all conditions of \cite[Definition II.1.1.1 and II.1.1.2]{landsman98}.

For these reasons, the operator systemic method seems to improve upon the existing literature. In most operator algebraic approaches to lattice gauge theory (e.g., \cite{ASS,BS19a,BS19b,GR,Stienstra,ST}) one uses inductive limits of C*-algebras instead. We validate our deviation in \textsection \ref{sct:quantum systems and embedding maps}.

The emergence of a strict deformation quantization counts as another validation of our method, but is also remarkable in itself. Most notably, it involves two limits; besides the usual limit $\hbar\to0$ also the limit of lattice spacing tending to zero becomes important. The interaction between these two limits complicates the proof at most places, but in other places is the very reason the result holds.

Section \ref{sct:2} constructs the classical C*-algebra on the continuum, the quantization map on the continuum, and the quantum C*-algebra on the continuum. The classical and quantum C*-algebras are shown to be invariant under time evolution related to the electric part of abelian Yang--Mills \cite{KS} in \textsection\ref{sct:time evolution}. Section \ref{sct:SDQ} gives the proof of strict deformation quantization, and forms by far the most technical part of this chapter. 



\vspace{-10pt}
\paragraph*{Notation.} We denote $G:=\T^n$, $\g:=\R^n$ and $\g^*:=\R^n$. Elements of $G^l$ for a finite set $l$ 
are usually denoted by $q$ or $[x]$ where $[x]:=x+(\Z^n)^l$ for $x\in(\R^n)^l$. We denote by $L_q$ the left-translation on $G^l$, i.e., $L_{[x]}[y]=[x+y]$. We denote by $e^{i\xi\cdot}$ the function $x\mapsto e^{i\xi\cdot x}$ and by $e_a$ the function $[x]\mapsto e^{2\pi i a\cdot x}$ for $a\in(\Z^n)^l$. We denote by $\psi_a$ the equivalence class of $e_a$ in $L^2(G^l)$. In any metric space, $B_d(x)$ is the open ball around $x$ with radius $d$. We let $B:=B_{1/2\pi}(0_\g)\subseteq\g$, remarking that $x\mapsto [x]$ is a diffeomorphism on $B$. By an operator system we mean a linear subspace of a unital C*-algebra that is preserved under $*$ and contains the unit. We do not require operator systems to be closed.

\section{Operator systems and limit C*-algebras}
\label{sct:2}

\paragraph*{Lattices.}
For simplicity, we take our time-slice to be $\R^D$, although any metric space would work.
Throughout this chapter, a \textit{lattice} is a finite subset $l\subseteq\R^D\times\R^D$ such that, using the lexicographical ordering of $\R^D$, we have $x<y$ for all $(x,y)\in l$, and, we have $tx+(1-t)y\neq sz+(1-s)w$ for all $(x,y),(z,w)\in l$ and all $0<t,s<1$. The elements $e=(x,y)$ of a lattice $l$ are interpreted as directed straight edges from $x$ to $y$. Thus, all we ask of a lattice is that its edges do not intersect, except possibly at their boundaries. The set of all lattices becomes a directed set, denoted $(\L,\leq)$, when we agree that $l\leq m$ if and only if the lattice $m$ can be obtained from $l$ by adding and subdividing edges in the sense of \cite{ASS}. Put precisely, $l\leq m$ if and only if for all $(x_1,x_2)\in l$ there exists $N\in\N_0$ and $0<t_1<\cdots<t_N<1$ such that for $y_s:=(1-t_s)x_1+t_sx_2$ we have $(x_1,y_1),(y_1,y_2),\ldots,(y_{N-1},y_N),(y_N,x_2)\in m$. We say these edges $(x_1,y_1),\ldots, (y_N,x_2)$ in $m$ are obtained \textit{by subdivision} from the edge $(x_1,x_2)$ in $l$. Of course, $m$ can contain elements not obtained in this way, which we say are \textit{added} in passing from $l$ to $m$.
We endow every edge $e=(x,y)$ with a length $d_e:=\norm{x-y}$.

%

 Let us compare our notation with the one in \cite{ASS,Stienstra}, in which an index set $I$ is used, and $\{\Lambda_i\}_{i\in I}$ is the net of finite lattices, including a set of vertices $\Lambda_i^0$, a set of edges $\Lambda_i^1$, and a set of plaquettes $\Lambda_i^2$. In our situation, the elements $l\in\L$ can be identified with the sets of edges $\Lambda^1_i$. Because we will not reduce to the gauge group and only discuss the electric part of Yang--Mills dynamics, the vertices and plaquettes will play no role. By our definition of $l\in\L$ and simply following set notation, $G^l$ denotes the set of functions from the edges in $l$ to elements in $G$, or equivalently ordered tuples of length $|l|$ with elements in $G$.

%
%
%

\subsection{The finite and continuum classical systems}

\paragraph*{The continuum phase space.}
	 Throughout this chapter, the configuration space associated to each edge of a lattice is the compact abelian Lie group $G=\T^n$. Its Lie algebra is $\g=\R^n$, and the associated exponential map $\g\to G$ is given by $x\mapsto [x]$. The phase space $X^l$ associated to a lattice $l\in\L$ is given by the cotangent bundle of the Lie group $G^l$, i.e., $X^l:=T^*G^l\cong G^l\times(\g^*)^l$. In order to define connecting maps between $X^l$ and $X^m$, for lattices $l\leq m\in\L$, we use the fact that $m$ can be obtained from $l$ by recursively applying two operations: adding an edge to the lattice and subdividing an edge of length $d$ into two edges of lengths $d_1$ and $d_2$ with $d_1+d_2=d$. In that manner, we define connecting maps
\begin{align*}
	\gamma_{lm}=(\gamma_{lm}^\conf,\gamma_{lm}^\mom):G^m\times(\g^*)^m\to G^l\times(\g^*)^l
\end{align*}
by recursively composing embedded versions of the maps $\gamma_{\text{add}}=(\gamma^{\conf}_{\text{add}},\gamma^{\mom}_{\text{add}}):G^2\times(\g^*)^2\to G\times\g^*$ and $\gamma_{\text{sub}}=(\gamma^{\conf}_{\text{sub}},\gamma^{\mom}_{\text{sub}}):G^2\times(\g^*)^2\to G\times\g^*$ defined by
%
%
%
\begin{align*}
	\gamma^\conf_\add([x_1],[x_2])&:=[x_1];&\gamma^\mom_\add(v_1,v_2)&:=v_1;\\
	\gamma^\conf_\sub([x_1],[x_2])&:=[x_1+x_2];&\gamma^\mom_\sub(v_1,v_2)&:=\frac{d_1v_1+d_2v_2}{d}.
\end{align*}
These embedding maps arise naturally by interpreting $x_e\in G$ as the parallel transport along the edge $e$ and $v_e\in\g$ as the average rate of change along $e$. One could replace `average' by `total', at the cost of a slightly different quantization map. By construction, the surjective maps $\gamma_{lm}:X^m\to X^l$ for $l\leq m\in\L$ define an inverse system of topological spaces.
The ensuing inverse limit is denoted as
\begin{align*}
	X^\infty:=\lim_{\leftarrow} X^l=\lim_{\leftarrow}G^l\times\lim_{\leftarrow}(\g^*)^l,\quad
	\gamma_l=(\gamma_l^{conf},\gamma_l^{mom})
	:X^\infty\to X^l.
\end{align*}
One can naturally embed any phase space of classical gauge fields (the continuous ones, the smooth and compactly supported ones, etcetera) in $X^\infty$. Because any element of $X^l$ can be extended to such a field, we find that $\gamma_l:X^\infty\to X^l$ is surjective.

\paragraph*{Operator systems.}
The classical system on the lattice $l\in\L$ can be described by the commutative C*-algebra introduced in Chapter \ref{ch:cylinder}, namely
\begin{align*}
	A^l_0:=\Cr(T^*G^l)=C(G^l)\hatotimes \Wr((\g^*)^l),
\end{align*}
where $\Wr((\g^*)^l)$ is the C*-subalgebra of $C_\textnormal{b}((\g^*)^l)$ generated by the commutative Weyl C*-algebra $\W^0((\g^*)^l)$ from \cite{BHR} and the commutative resolvent algebra $\Cr((\g^*)^l)$ from \cite{vN19}. The reason to work with the unital C*-algebra $A^l_0$ is that $A_0^l$ and its Weyl quantization are conserved under fully general dynamics in the sense of \cite{vNS20}. In contrast, the C*-subalgebra $C(G^l)\hatotimes\W^0((\g^*)^l)\subseteq A^l_0$, where $\W^0((\g^*)^l):=\overline{\spn}\{e^{i\xi\cdot}:~\xi\in\g^l\}$, is only conserved under `free' time evolution \cite{vNS20}. As explained in Chapter \ref{ch:cylinder}, $A_0^l$ is the closure of the *-algebra $\A_0^l$ defined by
\begin{align*}
	\A_0^l:=\spn\left\{e_b\otimes e^{i\xi\cdot}(g\circ P_U)\mid
	\begin{aligned}
	&b\in(\Z^n)^l,~ U\subseteq \g^l\text{ linear},\\
	&g\in \S(U),~\hat{g}\in C_\textnormal{c}^\infty(U^*),~\xi\in \g^l
	\end{aligned}\right\}.
\end{align*}
For this chapter, we will only need that any element of $\A_0^l$ can be written as $\sum_{k=1}^K g_k\otimes h_k$ with $h_k\in C_\textnormal{b}((\g^*)^l)$ an (inverse) Fourier transform $h_k=\check{\mu}_k:=\int d\mu_k(\xi)e^{i\xi\cdot}$ of a compactly supported finite complex Borel measure $\mu_k$ on $\g^l$. We can thus define the operator system
\begin{align*}
	\M_0^l:=\spn\{g\otimes \check\mu\in\A_0^l:~\supp(\mu)\subseteq B^l\}\subseteq\A_0^l,
\end{align*}
where $B=B_{1/2\pi}(0_\g)$.
The *-algebras $\A^l_0$ are endowed with the connecting maps $\gamma_{lm}^*:\A_0^l\to\A_0^m$, whose restrictions to the operator systems $\M_0^l$ we denote as 
	$$F^{ml}_C:=\gamma_{lm}^*|_{\M^l_0}:\M_0^l\to\M_0^m,$$
and refer to as the classical embedding maps.
We define the *-algebraic direct limit 
	$$\A_0^\infty:=\lim_{\rightarrow}\A_0^l,$$
and identify $\A_0^\infty\subseteq C_\textnormal{b}(X^\infty)$ by identifying the universal map $F_C^l:\A_0^l\to\A_0^\infty$ with the restriction of the isometry $\gamma_l^*:C_\textnormal{b}(X^l)\to C_\textnormal{b}(X^\infty)$. To describe $\A_0^\infty$, it turns out we only need to regard the operator systems $\M_0^l$. To prove this, we first introduce the following useful notation.

\begin{defi}\label{def:l_R}
	For a lattice $l$ and a positive integer $R$, we let $l^R\geq l$ be the lattice obtained by subdividing every edge of $l$ into $R$ edges of equal length. 
\end{defi}

\begin{lem}
The direct limit of *-algebras $\A_0^l$ is also the direct limit of the operator systems $\M_0^l$, in the sense that we have
\begin{align}\label{eq:classical algebra vs operator system}
	\A_0^\infty=\{f\circ\gamma_l:~l\in \L,~f\in\M_0^l\}.
\end{align}
\end{lem}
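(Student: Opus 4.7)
The plan is to prove the two inclusions separately. The inclusion $\supseteq$ is immediate: since $\M_0^l \subseteq \A_0^l$ and $F_C^l$ is the restriction of $\gamma_l^*$, every element $f\circ\gamma_l$ with $f\in\M_0^l$ lies in $\gamma_l^*(\A_0^l)\subseteq\A_0^\infty$.

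For the non-trivial inclusion $\A_0^\infty\subseteq\{f\circ\gamma_l: l\in\L, f\in\M_0^l\}$, fix $\phi\in\A_0^\infty$. By the definition of the *-algebraic direct limit and the identification $\A_0^\infty\subseteq C_{\textnormal{b}}(X^\infty)$, we may write $\phi = f\circ\gamma_l$ for some $l\in\L$ and some $f\in\A_0^l$. Decompose
\begin{align*}
f = \sum_{k=1}^K g_k\otimes\check{\mu}_k,
\end{align*}
with $g_k\in C(G^l)$ and $\mu_k$ a compactly supported finite complex Borel measure on $\g^l$. Choose $M>0$ large enough that $\supp(\mu_k)\subseteq B_M(0)^l$ for all $k$. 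The strategy is to show that the support of these measures, after pulling back via subdivision, can be shrunk into $B^m$ for some $m\geq l$, so that $\gamma_{lm}^*(f)$ lies in $\M_0^m$.

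The key computation uses Definition \ref{def:l_R}: take $m = l^R$ for $R\in\N$, so each edge $e\in l$ is subdivided into $R$ equal parts $e_1,\ldots,e_R$ of length $d_e/R$. Unwinding the recursive definition of $\gamma^{\mom}_{\sub}$ (whose weights are proportional to edge lengths), one obtains $v_e = \tfrac{1}{R}\sum_{j=1}^R v'_{e_j}$ for $v = \gamma_{l,l^R}^{\mom}(v')$. Consequently, for $\xi\in\g^l$,
\begin{align*}
\xi\cdot\gamma_{l,l^R}^{\mom}(v') = \sum_{e\in l}\sum_{j=1}^R \tfrac{\xi_e}{R}\cdot v'_{e_j},
\end{align*}
so the pullback $\gamma_{l,l^R}^*(\check{\mu}_k)$ equals $\check{\nu}_k$ where $\nu_k$ is the pushforward of $\mu_k$ along the map $\g^l\to\g^{l^R}$ given by $\xi\mapsto(\xi_e/R)_{e,j}$. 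Its support lies in $B_{M/R}(0)^{l^R}$, which is contained in $B^{l^R}$ as soon as $R>2\pi M$. Choosing such an $R$, we conclude $\gamma_{l,l^R}^*(f)\in\M_0^{l^R}$, hence $\phi = \gamma_{l^R}^*(\gamma_{l,l^R}^*(f))$ is of the required form.

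The main conceptual point to verify carefully is the behavior of $\gamma_{lm}^*$ under compositions of adding and subdividing edges: added edges correspond to tensoring the measure with $\delta_0$ (supported at $0\in B$), while subdivisions scale the momentum variables by factors bounded by $1$. The only mild obstacle is bookkeeping through these recursive compositions, which is why it is convenient to reduce to the uniform subdivision $l^R$, where the scaling factor becomes the clean $1/R$ above.
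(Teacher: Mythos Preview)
Your proof is correct and follows essentially the same approach as the paper's: both arguments pass to the uniform subdivision $l^R$, observe that the momentum pullback scales the Fourier variable by $1/R$ (the paper encodes this via the map $S^{l^Rl}$ with $(S^{l^Rl}\xi)_e=\tfrac{1}{R}\xi_{e'}$), and then choose $R$ large enough that the pushed-forward measures land in $B^{l^R}$. Your explicit mention of the trivial inclusion $\supseteq$ and your dual-pairing computation are minor presentational differences, not substantive ones.
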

\begin{proof}
By recursively composing the maps
\begin{align}\label{S^ml}
	S^\sub(\xi):=\left(\frac{d_1}{d}\xi,\frac{d_2}{d}\xi\right),\qquad S^\add(\xi):=(\xi,0),
\end{align}
we obtain a direct system of linear maps $S^{ml}:\g^l\to\g^m$ ($l\leq m\in\L$) allowing us to write the classical embedding maps as
\begin{align}\label{eq:F_C^ml in termen van S^ml}
	F_C^{ml}(g\otimes \check\mu)=(g\circ\gamma^\conf_{lm})\otimes (S^{ml}_*\mu)\check{~}.
\end{align}
For every $F_C^l(f)=f\circ\gamma_l\in\A_0^\infty$ we can write $f=\sum_k g_k\otimes \check{\mu}_k$ for compactly supported measures $\mu_k$. Choose $R$ such that $\supp(\mu_k)\subseteq B_{R/2\pi}(0_\g)^l$ for all $k$, and consider the lattice $l^R\geq l$. Every edge $e\in l^R$ satisfies $d_e=\tfrac{1}{R} d_{e'}$ for the edge $e'\in l$ it lies in. Hence 
	$$(S^{l^Rl}\xi)_e=\tfrac{1}{R}\xi_{e'}\,,$$
and so $S^{l^Rl}(\supp(\mu_k))\subseteq S^{l^Rl}(B_{R/2\pi}(0_\g)^l)\subseteq B_{1/2\pi}(0_\g)^{l^R}=B^{l^R}$. As $S^{l^Rl}$ is a closed map, we therefore obtain $\supp(S^{l^Rl}_*\mu_k)\subseteq B^{l^R}$ for all $k$. Then \eqref{eq:F_C^ml in termen van S^ml} gives $f\circ\gamma_{ll^R}\in\M^{l^R}_0$, so $F_C^l(f)=(f\circ\gamma_{ll^R})\circ\gamma_{l^R}$ is in the set on the right-hand side of \eqref{eq:classical algebra vs operator system}.
\end{proof}

\begin{rema}\label{rema:supremum trick}
	Two arbitrary functions in $\A_0^\infty$ can be written as $f_1\circ\gamma_l,f_2\circ\gamma_l$ for a certain $l\in \L$. Indeed, given $f'_1\circ\gamma_{l_1},f'_2\circ\gamma_{l_2}\in\A_0^\infty$, one takes the supremum $l$ of $l_1$ and $l_2$ (this corresponds to the coarsest lattice that is finer than both $l_1$ and $l_2$), and writes $f'_j\circ\gamma_{l_j}=(f'_j\circ\gamma_{l_jl})\circ\gamma_l\equiv f_j\circ\gamma_l$. The same goes for $k$ functions $f_1\circ\gamma_l,\ldots,f_k\circ\gamma_l$.
\end{rema} 
 
The first use of this remark is in defining a Poisson structure on $\A_0^\infty$. The Poisson bracket of $f_1\circ \gamma_l$ and $f_2\circ\gamma_l$ is defined as
\begin{align*}
	\{f_1\circ \gamma_l,f_2\circ\gamma_l\}:=\{f_1,f_2\}\circ\gamma_l,
\end{align*}
in terms of the Poisson bracket on $\A_0^l$, which is a Poisson subalgebra of $C^\infty(X^l)$. To show that the above bracket on $\A_0^\infty$ is well defined, it suffices to show that $\{f_1\circ\gamma_{lm},f_2\circ\gamma_{lm}\}=\{f_1,f_2\}\circ\gamma_{lm}$ for all $l\leq m$. This follows from the analogous statement for $\gamma_{\text{add}}$ and $\gamma_{\text{sub}}$, which can be straightforwardly checked.

\subsection{The quantum systems and quantum embedding maps}
\label{sct:quantum systems and embedding maps}
To each lattice $l\in\L$ we will associate an operator system modeling the quantum system. This operator system is defined as a quantization of $\M_0^l$ under a quantization map $\Ql$ that defines an extension of Weyl quantization. Recall that every $f\in\M_0^l$ can be written as $f=\sum_k g_k \otimes \check\mu_k$ for $g_k\in C^\infty(G^l)$ and $\supp(\mu_k)\subseteq B^l\subseteq\g^l$, where $B=B_{1/2\pi}(0_\g)$. Notice that $\hbar\xi\in B^l$ for every $\hbar\in[-1,1]$ and $\xi\in\supp(\mu_k)$. 
We define the quantization map on the lattice $l$ to be
\begin{align}\label{eq:Ql}
	\Ql:\M_0^l&\to\mB(L^2(G^l)),\nonumber\\
	\Ql\bigg(\sum_{k=1}^Kg_k\otimes\check\mu_k\bigg)\psi[y]&:=\sum_{k=1}^K \int_{\g^l} d\mu_k(\xi)g_k[y+\tfrac12\hbar\xi]\psi[y+\hbar\xi].
\end{align}

A simple calculation shows that, acting on the wave functions $\psi_a[x]:=e^{2\pi i a\cdot x}$ ($a\in(\Z^n)^l$), this quantization map has the simple form
\begin{align}\label{eq:asymptotically Ruben}
	\Ql(e_b\otimes h)\psi_a=h(2\pi \hbar(a+\tfrac12 b))\psi_{a+b},
\end{align}
and therefore coincides with the one in Chapter \ref{ch:cylinder}. Moreover, when $b$ is small enough, it coincides with Weyl quantization on the Riemannian manifold $\T^{n|l|}$ as given in \cite[Definition II.3.4.4]{landsman98}, as the cutoff function $\kappa$ used there becomes 1 when we restrict to $\M_0^l$. The insight used by this chapter is that, restricted to the operator system $\M_0^l$, the quantization map is injective. To see this, let $f=\sum_j e_{b_j}\otimes h_j\in\M_0^l$ be such that $\Ql(f)=0$. By \eqref{eq:asymptotically Ruben}, $h_j(2\pi\hbar(a+\tfrac12 b_j))=0$ for all $j$ and all $a\in(\Z^n)^l$. For a fixed $j$, since $\supp(\mu_j)\subseteq B_{1/(2\pi\hbar)}(0_\g)^l$, the Whittaker--Nyquist--Shannon theorem implies that $h_j=\check{\mu} _j$ is determined by its values on the points $\pi\hbar b_j+2\pi\hbar a$, $a\in(\Z^n)^l$. Therefore $h_j=0$ for all $j$, and therefore $f=0$. Hence $\Ql$ is injective on $\M_0^l$.

The quantum system associated to $l$ is defined by
	$$\M_\hbar^l:=\Ql(\M_0^l).$$
As $\Ql$ is linear, unital, and *-preserving, $\M_\hbar^l$ is an operator system.

\begin{exam}
A notable subset of $\M^l_0$ is $\mathfrak{W}^l:=\spn\{g\otimes e^{i\xi\cdot}:~g\in C^\infty(G^l),\xi\in  B^l\}$. This subset generates the C*-algebra $C(G^l)\hatotimes \W^0((\g^*)^l)$, which can be seen as a classical Weyl C*-algebra on the torus \cite{BHR,vN19,vNS20} that lies inside $A^l_0=\overline{\A_0^l}$. The image of $\mathfrak{W}^l$ under the above quantization map generates the crossed product C*-algebra $C(G^l)\rtimes G^l$. Indeed, we have $\Ql(g\otimes e^{i\xi\cdot})=M_{g\circ L_{[\hbar\xi/2]}}L^*_{[\hbar\xi]}$.
\end{exam}

\begin{rema}\label{rem:non-abelian Ql}
	The above example in particular suggests that our approach is generalizable to nonabelian groups, although this might require considerable analytical effort. It is in any case reasonable to require the quantization map $\Ql$ on a nonabelian group to satisfy $\Ql(g \otimes e^{i\xi\cdot})=M_{g\circ L_{\exp(\hbar\xi/2)}}L^*_{\exp(\hbar\xi)}$, where $\exp$ is the Lie-theoretic exponential map.
\end{rema}

%

\paragraph*{Direct limit of Hilbert spaces.}
To model the quantum system in infinite degrees of freedom, we will eventually construct a noncommutative C*-algebra that is canonically represented on a Hilbert space. This Hilbert space is the limit of the following direct system of Hilbert spaces:
\begin{align*}
	\quad\H^l:=L^2(G^l),\quad u^{ml}:=(\gamma^{conf}_{lm})^*:\H^l\to\H^m.
\end{align*}
Passing to the direct limit, we denote
\begin{align*}
	\H^{\infty}:=&\lim_{\rightarrow}\H^l,\quad
	u^l=(\gamma^{conf}_l)^*:\H^l\to\H^\infty.
\end{align*}
To define a direct limit of the operator systems $\M^l_\hbar$, we need to define the embedding maps and show that they satisfy the needed properties.

\paragraph*{Quantum embedding maps.}
The quantum embedding maps are defined by quantizing the classical embedding maps, i.e., for all $l\leq m\in\L$ and all $f\in\M_0^l$ we define
\begin{align*}
	F^{ml}_Q:\M^l_\hbar&\to\M^m_\hbar,\\
	F^{ml}_Q(\Ql(f))&:=\Qm(F^{ml}_C(f)),
\end{align*}
which is unambiguous by injectivity of $\Ql$.
\begin{exam}
	The embedding map $F_Q^\add$ is given by tensoring with $1$, which exemplifies why our quantum systems should be unital. The embedding map $F_Q^\sub$ is best understood on elements of 
	$C(G^l)\rtimes G^l.$ As depicted in Figure \ref{fig:QE1}, we have
	$$F_Q^\sub(M_gL^*_{[\xi]})=M_{g\circ \mu}L^*_{\left[\frac{d_1}{d}\xi,\frac{d_2}{d}\xi\right]},$$
where $g\in C(G)$, $\xi\in B^l$ and $\mu:\T^n\times\T^n\to\T^n$ is given by $\mu([x_1],[x_2]):=[x_1+x_2]$. One sees the metric structure (encoded in $d_1,~d_2,$ and $d$) at work, 
and notices that the well-definedness of the UV-limit hinges on the use of operator systems.
\end{exam}

\begin{rema}
	As in Remark \ref{rem:non-abelian Ql}, we briefly touch upon the nonabelian case here. The embedding maps for adding an edge do not have to be altered when $G=\T^n$ is replaced by a nonabelian group. The embedding maps for subdivision generalize as well, giving in particular $F_Q^\sub(M_gL^*_{\exp(\xi)})=M_{g\circ \mu}L^*_{\exp\left(\frac{d_1}{d}\xi,\frac{d_2}{d}\xi\right)},$
where $\mu$ is the group multiplication. We see this as an encouraging sign, but will again take $G$ abelian in the rest of this chapter to keep the discussion simple and the results as strong as possible.
\end{rema}


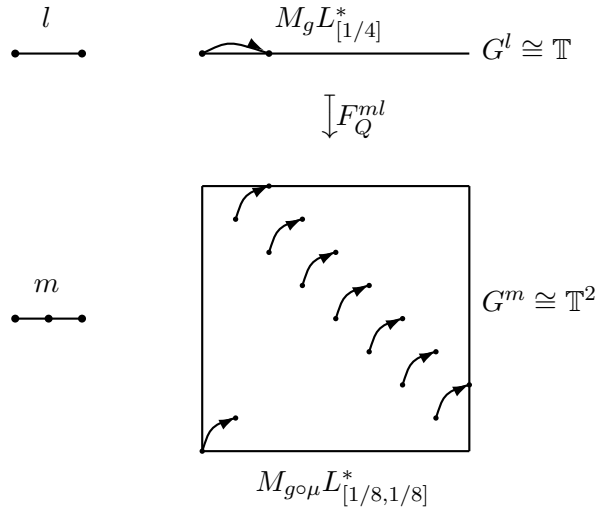
\begin{figure}[!htb]
	\centering
	\thicklines
	\begin{picture}(150,180)(-50,-20)
		\put(101,148){ $G^l\cong\T$}	
		\put(101,53){ $G^m\cong\T^2$}
		
		\put(-70,150){\line(1,0){25}}
		\put(-70,150){\circle*{3}}
		\put(-45,150){\circle*{3}}
		\put(-60,160){$l$}
		\put(-70,50){\line(1,0){25}}
		\put(-70,50){\circle*{3}}
		\put(-57.5,50){\circle*{3}}
		\put(-45,50){\circle*{3}}
		\put(-63,60){$m$}

		\put(0,150){\line(1,0){100}}

		\put(0,0){\line(1,0){100}}
		\put(0,0){\line(0,1){100}}
		\put(100,0){\line(0,1){100}}	
		\put(0,100){\line(1,0){100}}
			
		\put(45,135){\rotatebox{270}{$\longmapsto$}}
		\put(50,123){$F^{ml}_Q$}		
		
		\put(0,150){\circle*{2.5}}
		\put(25,150){\circle*{2.5}}
		\cbezier(0,150)(10,155)(15,155)(25,150)
		\put(20.5,152.5){\vector(16,-8){2.5}}
		
		\put(0,0){\circle*{2}}
		\put(12.5,12.5){\circle*{2}}
		\cbezier(0,0)(2.5,7.5)(5,10)(12.5,12.5)	
		\put(10,11.5){\vector(2,1){2.5}}
		
		\put(87.5,12.5){\circle*{2}}
		\put(100,25){\circle*{2}}
		\cbezier(87.5,12.5)(90,20)(92.5,22.5)(100,25)	
		\put(97.5,24){\vector(2,1){2.5}}
		
		\put(75,25){\circle*{2}}
		\put(87.5,37.5){\circle*{2}}
		\cbezier(75,25)(77.5,32.5)(80,35)(87.5,37.5)	
		\put(85,36.5){\vector(2,1){2.5}}
		
		\put(62.5,37.5){\circle*{2}}
		\put(75,50){\circle*{2}}
		\cbezier(62.5,37.5)(65,45)(67.5,47.5)(75,50)	
		\put(72.5,49){\vector(2,1){2.5}}
		
		\put(50,50){\circle*{2}}
		\put(62.5,62.5){\circle*{2}}
		\cbezier(50,50)(52.5,57.5)(55,60)(62.5,62.5)	
		\put(60,61.5){\vector(2,1){2.5}}
		
		\put(37.5,62.5){\circle*{2}}
		\put(50,75){\circle*{2}}
		\cbezier(37.5,62.5)(40,70)(42.5,72.5)(50,75)	
		\put(47.5,74){\vector(2,1){2.5}}
		
		\put(25,75){\circle*{2}}
		\put(37.5,87.5){\circle*{2}}
		\cbezier(25,75)(27.5,82.5)(30,85)(37.5,87.5)	
		\put(35,86.5){\vector(2,1){2.5}}
		
		\put(12.5,87.5){\circle*{2}}
		\put(25,100){\circle*{2}}
		\cbezier(12.5,87.5)(15,95)(17.5,97.5)(25,100)	
		\put(22.5,99){\vector(2,1){2.5}}
		
		\put(27,160){$M_gL^*_{[1/4]}$}
		\put(20,-15){$M_{g\circ\mu} L^*_{[1/8,1/8]}$}		
	\end{picture}	
	\caption{A depiction of an operator $M_gL^*_{[1/4]}\in\mathcal{M}^l_\hbar$ and its image under the quantum embedding map, where $G=\T$, $l$ has one edge, and $m=l^2$. For the picture, $g$ is supported closely around $[1/4]\in G^l$. The embedding map clearly respects the gauge action coming from the central vertex of $m$.}
	\label{fig:QE1}
\end{figure}		

Our quantum embedding maps contrast with those used in the existing literature on C*-algebraic lattice gauge theory \cite{ASS,BS19a,BS19b,GR,Stienstra,ST} because ours do not define a direct system (inductive system) of *-algebras. They therefore warrant some motivation.

We assume the situation of Figure \ref{fig:QE1} and Figure \ref{fig:QE2}, where $G=\T=U(1)$ and a lattice $l$ consisting of a single edge is compared to a lattice $m\geq l$ with two edges. 
There exist multiple observables on the lattice $m$ that have the same behavior when restricted to $l$. This can be seen in Figure \ref{fig:QE2}, in the formulas, or by interpreting the gauge field as rigid rotors associated to every edge, as in \cite{KS}. 
Indeed, the two rotors associated to the two edges of $m$ can either both be turned clockwise by a quarter circle, or both anti-clockwise by a quarter circle. When describing the gauge field by a single rotor, the two operations appear as the same observable (see Figure \ref{fig:QE2}).

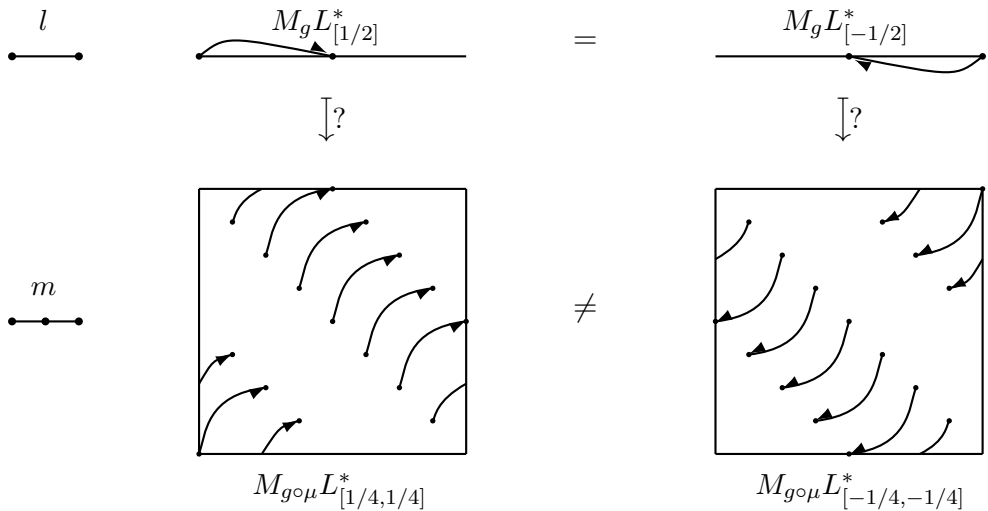
\begin{figure}[!htb]
	\centering
	\thicklines
	\begin{subfigure}{.49\textwidth}
	\centering
	\begin{picture}(135,200)(-50,-20)
		\put(-70,150){\line(1,0){25}}
		\put(-70,150){\circle*{3}}
		\put(-45,150){\circle*{3}}
		\put(-60,160){$l$}
		\put(-70,50){\line(1,0){25}}
		\put(-70,50){\circle*{3}}
		\put(-57.5,50){\circle*{3}}
		\put(-45,50){\circle*{3}}
		\put(-63,60){$m$}

		\put(0,150){\line(1,0){100}}
		
		\put(0,150){\circle*{2.5}}
		\cbezier(0,150)(10,158)(40,158)(50,150)
		\put(45.5,152.5){\vector(16,-8){2.5}}
		\put(50,150){\circle*{2.5}}

		\put(0,0){\line(1,0){100}}
		\put(0,0){\line(0,1){100}}
		\put(100,0){\line(0,1){100}}	
		\put(0,100){\line(1,0){100}}
			
		\put(45,135){\rotatebox{270}{$\longmapsto$}}
		\put(50,123){$?$}	

		\put(0,0){\circle*{2}}
		\put(25,25){\circle*{2}}
		\cbezier(0,0)(2.5,7.5)(17.5,22.5)(25,25)	
		\put(22.5,24){\vector(2,1){2.5}}
		
		\put(75,25){\circle*{2}}
		\put(100,50){\circle*{2}}
		\cbezier(75,25)(77.5,32.5)(92.5,47.5)(100,50)	
		\put(97.5,49){\vector(2,1){2.5}}
		
		\put(62.5,37.5){\circle*{2}}
		\put(87.5,62.5){\circle*{2}}
		\cbezier(62.5,37.5)(65,45)(80,60)(87.5,62.5)	
		\put(85,61.5){\vector(2,1){2.5}}

		\put(50,50){\circle*{2}}
		\put(75,75){\circle*{2}}
		\cbezier(50,50)(52.5,57.5)(67.5,72.5)(75,75)	
		\put(72.5,74){\vector(2,1){2.5}}
		
		\put(37.5,62.5){\circle*{2}}
		\put(62.5,87.5){\circle*{2}}
		\cbezier(37.5,62.5)(40,70)(55,85)(62.5,87.5)	
		\put(60,86.5){\vector(2,1){2.5}}
		
		\put(25,75){\circle*{2}}
		\put(50,100){\circle*{2}}
		\cbezier(25,75)(27.5,82.5)(42.5,97.5)(50,100)	
		\put(47.5,99){\vector(2,1){2.5}}
		
		\put(12.5,87.5){\circle*{2}}
		\cbezier(12.5,87.5)(13.5,89)(15,95)(23.5,100)

		\put(37.5,12.5){\circle*{2}}
		\cbezier(23.5,0)(30,10)(35,10.5)(37.5,12.5)
		\put(35,11.5){\vector(2,1){2.5}}

		\put(12.5,37.5){\circle*{2}}
		\cbezier(0,26.5)(5,35)(10,35.5)(12.5,37.5)
		\put(10,36.5){\vector(2,1){2.5}}
		
		\put(87.5,12.5){\circle*{2}}
		\cbezier(87.5,12.5)(88.5,14)(90,20)(100,26.5)
		
		\put(140,153){\large $=$}
		\put(140,52){\large $\neq$}
		
		\put(27,160){$M_gL^*_{[1/2]}$}
		\put(20,-15){$M_{g\circ\mu} L^*_{[1/4,1/4]}$}		
	\end{picture}
	\end{subfigure}
	\begin{subfigure}{.49\textwidth}
	\centering
	\begin{picture}(225,200)(-65,-20)

		\put(0,150){\line(1,0){100}}
		
		\put(100,150){\circle*{2.5}}
		\cbezier(100,150)(90,142)(60,142)(50,150)
		\put(54.5,147.5){\vector(-16,8){2.5}}
		\put(50,150){\circle*{2.5}}

		\put(0,0){\line(1,0){100}}
		\put(0,0){\line(0,1){100}}
		\put(100,0){\line(0,1){100}}	
		\put(0,100){\line(1,0){100}}			
		
		\put(45,135){\rotatebox{270}{$\longmapsto$}}
		\put(50,123){$?$}		
					
		\put(100,100){\circle*{2}}
		\put(75,75){\circle*{2}}
		\cbezier(100,100)(97.5,92.5)(82.5,77.5)(75,75)	
		\put(77.5,76){\vector(-2,-1){2.5}}
		
		\put(25,75){\circle*{2}}
		\put(0,50){\circle*{2}}
		\cbezier(25,75)(22.5,67.5)(7.5,52.5)(0,50)	
		\put(2.5,51){\vector(-2,-1){2.5}}

		\put(37.5,62.5){\circle*{2}}
		\put(12.5,37.5){\circle*{2}}
		\cbezier(37.5,62.5)(35,55)(20,40)(12.5,37.5)	
		\put(15,39){\vector(-2,-1){2.5}}
		
		\put(50,50){\circle*{2}}
		\put(25,25){\circle*{2}}
		\cbezier(50,50)(47.5,42.5)(32.5,27.5)(25,25)	
		\put(27.5,26){\vector(-2,-1){2.5}}

		\put(62.5,37.5){\circle*{2}}
		\put(37.5,12.5){\circle*{2}}
		\cbezier(62.5,37.5)(60,30)(45,15)(37.5,12.5)	
		\put(40,14){\vector(-2,-1){2.5}}
		
		\put(75,25){\circle*{2}}
		\put(50,0){\circle*{2}}
		\cbezier(75,25)(72.5,17.5)(57.5,2.5)(50,0)	
		\put(52.5,1){\vector(-2,-1){2.5}}
	
		\put(87.5,12.5){\circle*{2}}
		\cbezier(87.5,12.5)(86.5,11)(85,5)(76.5,0)

		\put(62.5,87.5){\circle*{2}}
		\cbezier(76.5,100)(70,90)(65,89.5)(62.5,87.5)
		\put(65,88.5){\vector(-2,-1){2.5}}

		\put(87.5,62.5){\circle*{2}}
		\cbezier(100,73.5)(95,65)(90,64.5)(87.5,62.5)
		\put(90,63.5){\vector(-2,-1){2.5}}
		
		\put(12.5,87.5){\circle*{2}}
		\cbezier(12.5,87.5)(11.5,86)(10,80)(0,73.5)

		\put(25,160){$M_gL^*_{[-1/2]}$}
		\put(15,-15){$M_{g\circ\mu} L^*_{[-1/4,-1/4]}$}		
	\end{picture}
	\end{subfigure}
	\caption{The quantum embedding map does not extend in a multiplicative way from $\mathcal{M}^l_\hbar$ to the algebra $\A^l_\hbar$ generated by $\mathcal{M}^l_\hbar$. If we would try, we would end up with two representations of the same observable in $\mathcal A^l_\hbar$ being mapped to two different observables in $\mathcal A^m_\hbar$.}
	\label{fig:QE2}
\end{figure}

 Therefore, if one wants to interpret an observable on a lattice $l$ as an observable on the continuum, a choice has to be made. We make this choice by restricting at any finite level to observables that rotate any rotor less than a certain amount, so that an embedding of such an observable can be made by fairly distributing that rotation over the smaller rotors that make up the original one. Clearly, this means giving up on multiplicative structure. This is not against the C*-algebraic philosophy, however,
which states that one can describe any physical system once we have a sufficiently rich C*-algebra of observables. The set of observables at a finite level makes up but a subset of the full algebra, and is therefore not required to completely describe a physical system. Only the full set of observables, with arbitrary lattice size, can discern between any two gauge fields, and can therefore be expected to form a *-algebra (lying densely in a C*-algebra). That is indeed what we will prove in Proposition \ref{prop:algebra}.


As further motivation of our quantum embedding maps, and to be used later, we show that they intertwine the direct system of Hilbert spaces given by $u^{ml}:\H^l\to\H^m$.

\begin{lem}\label{lem:F_Q and u}
	For $l\leq m\in\L$ and $O\in\M_\hbar^l$ we have
\begin{align*}
	F_Q^{ml}(O)u^{ml}=u^{ml}O.
\end{align*}
\end{lem}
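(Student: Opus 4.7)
The plan is to reduce the identity to an equality of integral kernels after unfolding all definitions, and then to verify the essential equivariance of the configuration projection $\gamma^{\conf}_{lm}$ under the linear lift $S^{ml}\colon\g^l\to\g^m$ recalled in \eqref{S^ml} and \eqref{eq:F_C^ml in termen van S^ml}.

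First I would reduce to generators. Every $O\in\M^l_\hbar$ is of the form $O=\Ql(f)$ with $f=\sum_k g_k\otimes\check{\mu}_k\in\M^l_0$, and both sides of the claimed identity are linear in $O$, so by linearity it suffices to check the identity for $f=g\otimes\check{\mu}$ with $\supp(\mu)\subseteq B^l$. By the definition of the quantum embedding map and equation \eqref{eq:F_C^ml in termen van S^ml}, one has
\[
F_Q^{ml}(\Ql(g\otimes\check{\mu}))=\Qm\bigl(F_C^{ml}(g\otimes\check{\mu})\bigr)=\Qm\bigl((g\circ\gamma^{\conf}_{lm})\otimes(S^{ml}_*\mu)\check{~}\bigr).
\]
Applying formula \eqref{eq:Ql} to the right-hand side and performing the change of variables $\zeta=S^{ml}\xi$ in the integral against $S^{ml}_*\mu$, one obtains, for $\varphi\in\H^m$ and $[y]\in G^m$,
\[
(F_Q^{ml}(O)\varphi)[y]=\int_{\g^l}d\mu(\xi)\,g\bigl[\gamma^{\conf}_{lm}(y)+\tfrac12\hbar\xi\bigr]\,\varphi[y+\hbar S^{ml}\xi],
\]
provided one knows that $\gamma^{\conf}_{lm}\bigl(y+\tfrac12\hbar S^{ml}\xi\bigr)=\gamma^{\conf}_{lm}(y)+\tfrac12\hbar\xi$.

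The main (and essentially only) nontrivial step is exactly this equivariance property: for every $l\leq m\in\L$, every $y\in G^m$, every $\xi\in\g^l$ and every $t\in\R$,
\[
\gamma^{\conf}_{lm}(y+t\,S^{ml}\xi)=\gamma^{\conf}_{lm}(y)+t\,\xi.
\]
This is proved by induction on the recursive construction of $\gamma_{lm}$ from $\gamma_{\add}$ and $\gamma_{\sub}$. For addition, $\gamma^{\conf}_{\add}([x_1],[x_2])=[x_1]$ and $S^{\add}\xi=(\xi,0)$, so the property is immediate. For subdivision, $\gamma^{\conf}_{\sub}([x_1],[x_2])=[x_1+x_2]$ and $S^{\sub}\xi=(\tfrac{d_1}{d}\xi,\tfrac{d_2}{d}\xi)$, so
\[
\gamma^{\conf}_{\sub}\bigl((x_1,x_2)+tS^{\sub}\xi\bigr)=\bigl[x_1+x_2+t\tfrac{d_1+d_2}{d}\xi\bigr]=\gamma^{\conf}_{\sub}(x_1,x_2)+t\xi,
\]
and the general case follows by composing these two elementary cases.

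Finally, I would put the two sides together. Taking $\varphi=u^{ml}\psi$ with $\psi\in\H^l$, so that $\varphi[y]=\psi[\gamma^{\conf}_{lm}(y)]$, and applying the equivariance with $t=1$ gives
\[
(F_Q^{ml}(O)u^{ml}\psi)[y]=\int_{\g^l}d\mu(\xi)\,g\bigl[\gamma^{\conf}_{lm}(y)+\tfrac12\hbar\xi\bigr]\,\psi\bigl[\gamma^{\conf}_{lm}(y)+\hbar\xi\bigr].
\]
On the other hand, $(u^{ml}O\psi)[y]=(O\psi)[\gamma^{\conf}_{lm}(y)]$, and by \eqref{eq:Ql} this equals the same expression. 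Hence $F_Q^{ml}(O)u^{ml}=u^{ml}O$ on the dense set spanned by the $\psi$'s, and the identity extends to all of $\H^l$ by boundedness, concluding the proof.
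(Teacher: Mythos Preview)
Your proof is correct, but it takes a different route from the paper's. The paper works on the Fourier side: it introduces an auxiliary direct system $T^{ml}:\g^l\to\g^m$ (built from $T^\add(\xi)=(\xi,0)$, $T^\sub(\xi)=(\xi,\xi)$), records the relations $\gamma^\mom_{lm}\circ T^{ml}=\id_{\g^l}$ and $u^{ml}\psi_a=\psi_{T^{ml}(a)}$, and then checks the identity on the basis vectors $\psi_a$ using the discrete formula \eqref{eq:asymptotically Ruben}. You instead stay on the position side and use the integral formula \eqref{eq:Ql} directly, reducing everything to the equivariance $\gamma^\conf_{lm}(y+tS^{ml}\xi)=\gamma^\conf_{lm}(y)+t\xi$, which you verify on the elementary steps. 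This is the dual statement to the paper's $\gamma^\mom_{lm}\circ T^{ml}=\id$ (via the pairing between configuration and momentum), so the two arguments are essentially Fourier transforms of one another. Your approach is more self-contained since it avoids introducing $T^{ml}$; the paper's approach has the side benefit that the relations it records in \eqref{eq:S and T} are reused in later proofs (e.g.\ von Neumann's condition). One small remark: your final sentence about extending from ``the dense set spanned by the $\psi$'s'' is superfluous, since your computation already holds for arbitrary $\psi\in\H^l$.
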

\begin{proof}
	Similar to \eqref{S^ml}, we define
	\begin{align*}
		T^\add(\xi):=(\xi,0)\qquad T^\sub(\xi):=(\xi,\xi),
	\end{align*}	
	to obtain a direct system of linear maps $T^{ml}:\g^l\to\g^m$. We account here that
	\begin{align}\label{eq:S and T}
		\gamma^\mom_{lm}(X)\xi=X S^{ml}(\xi);\qquad \gamma^\conf_{lm}(q) \xi=q T^{ml}(\xi);\qquad \gamma_{lm}^\mom\circ T^{ml}=\id_{\g^l},
	\end{align}
	such that, in particular, $u^{ml}\psi_a=\psi_{T^{ml}(a)}$ for all $a\in(\Z^n)^l$. For $f=e_b\otimes h$, we get
	\begin{align*}
		\Qm(f\circ\gamma_{lm})u^{ml}\psi_a&=\Qm(f\circ\gamma_{lm})\psi_{T^{ml}(a)}\\
		&=h(\gamma^\mom_{lm}(2\pi\hbar(T^{ml}(a)+\tfrac12 T^{ml}(b))))\psi_{T^{ml}(a)+T^{ml}(b)}\\
		&=h(2\pi\hbar(a+\tfrac12 b))u^{ml}\psi_{a+b},
	\end{align*}
	so $\Qm(f\circ\gamma_{lm})u^{ml}\psi_a=u^{ml}\Ql(f)\psi_a$, which implies the lemma.
\end{proof}

\subsection{The continuum quantization map and quantum system}

To define $\Q$, we define $\Q(f\circ\gamma_l)\in\mB(\H^\infty)$ by its action on $u^m\psi\in\H^\infty$, where $m\geq l$, namely
\begin{align*}
	\Q(f\circ\gamma_l)u^m\psi:=u^mQ^m_\hbar(f\circ\gamma_{lm})\psi\qquad (\psi\in\H^m).
\end{align*}
To show that this is well defined, we use Lemma \ref{lem:F_Q and u} and find, for all $n\geq m\geq l$ and $\psi\in\H^m$,
\begin{align*}
	u^n\Qn(f\circ\gamma_{ln})u^{nm}\psi &= u^n F_Q^{nm}(\Qm(f\circ\gamma_{lm}))u^{nm}\psi\\
	&= u^m\Qm(f\circ\gamma_{lm})\psi,
\end{align*}
and conclude that $\Q(f\circ\gamma_l)$ is well defined on the dense subset $\cup_m u^m\H^m\subseteq\H^\infty$. If we write $f=\sum_k g_k\otimes \check\mu_k$ we obtain, 
\begin{align*}
	\quadnorm{\Q(f\circ\gamma_l)u^m\psi} &= \quadnorm{\Qm(f\circ\gamma_{lm})\psi}\leq \sum_k\supnorm{g_k}\absnorm{\mu_k}\quadnorm{u^m\psi}. 
\end{align*}
Therefore $\Q:\A_0^\infty\to\mB(\H^\infty)$ is well defined, and $\norm{\Q(f\circ\gamma_l)}\leq\sum\supnorm{g_k}\absnorm{\mu_k}$, independently from $\hbar$. The above also shows that 
\begin{align}
	\norm{\Q(f\circ\gamma_l)}=\sup_{m\geq l}\norm{\Qm(f\circ\gamma_{lm})}=\lim_{m}\norm{\Qm(f\circ\gamma_{lm})}.\label{eq:Q^infty sup over Q^j}
\end{align}
We define $$\A^\infty_\hbar:=\Q(\A^\infty_0)\equiv \{F^l_Q(O):~l\in\L,~ O\in\M_\hbar^l\}.$$ We write $\A_\hbar^\infty$ instead of $\M_\hbar^\infty$ to suggest it is in fact an algebra.

\begin{prop}\label{prop:algebra}
	The operator system $\A_\hbar^\infty=\Q(\A_0^\infty)$ is a *-algebra.
\end{prop}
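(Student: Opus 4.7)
The $*$-closure is easy. Since each operator system $\M_\hbar^l=\Ql(\M_0^l)$ is *-invariant and $F_Q^{ml}$ commutes with $*$ (which follows from $F_C^{ml}$ being a *-homomorphism, the identity $F_Q^{ml}\circ\Ql=\Qm\circ F_C^{ml}$, and injectivity of $\Qm$ on $\M_0^m$), the family $\A_\hbar^\infty$ is closed under $*$. For multiplication, take $O_1,O_2\in\A_\hbar^\infty$; by the definition of $\A_\hbar^\infty$ there are lattices $l_1,l_2$ and $f_j\in\M_0^{l_j}$ with $O_j=F_Q^{l_j}(Q_\hbar^{l_j}(f_j))$. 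Using Remark \ref{rema:supremum trick}, $l:=\sup(l_1,l_2)$ is a common refinement, and the compatibility $F_Q^l\circ F_Q^{l\,l_j}=F_Q^{l_j}$ together with $F_Q^{l\,l_j}(Q_\hbar^{l_j}(f_j))=\Ql(f_j\circ\gamma_{l_j l})$ allows us to rewrite $O_j=F_Q^l(\Ql(\tilde f_j))$ with $\tilde f_j:=f_j\circ\gamma_{l_j l}\in\M_0^l$.

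The core of the proof is now to locate a lattice $m\geq l$ and an $f'\in\M_0^m$ with $\Qm(\tilde f_1\circ\gamma_{lm})\,\Qm(\tilde f_2\circ\gamma_{lm})=\Qm(f')$ as operators on $\H^m$; once this is established, the definition of $\Q$ and Lemma \ref{lem:F_Q and u} give $O_1 O_2=\Q(f'\circ\gamma_m)\in\A_\hbar^\infty$, completing the proof. The plan is to choose $m:=l^R$ with $R\geq 2$ in the sense of Definition \ref{def:l_R}. The reason is that the embedding $F_C^{l^R l}$ shrinks momentum Fourier supports: by \eqref{eq:F_C^ml in termen van S^ml} and the formula \eqref{S^ml} for $S^{l^R l}$, if $\tilde f_j=\sum_k g_{k,j}\otimes\check\mu_{k,j}$ with $\supp\mu_{k,j}\subseteq B^l$, then $\tilde f_j\circ\gamma_{l\,l^R}$ is a finite sum of tensors $(g\circ\gamma_{l\,l^R}^\conf)\otimes(S^{l^R l}_*\mu)\check{~}$ whose momentum measures are supported in the set $\{\xi\in\g^{l^R}:|\xi_{e'}|<\tfrac{1}{2\pi R}$ for every edge $e'\in l^R\}$.

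The main obstacle is to verify the explicit algebraic computation showing that the product $\Qn(\tilde f_1\circ\gamma_{l\,l^R})\,\Qn(\tilde f_2\circ\gamma_{l\,l^R})$ is of the form $\Qn(f')$ with momentum Fourier support of $f'$ still contained in $B^{l^R}$. Using the explicit action \eqref{eq:asymptotically Ruben} on the orthonormal basis $\{\psi_a\}$, a direct bookkeeping of the exponentials (of which a model calculation is
\begin{align*}
\Qn(g_1\otimes e^{i\xi_1\cdot})\Qn(g_2\otimes e^{i\xi_2\cdot})\psi_a=(g_1\cdot g_2\circ L_{[\hbar\xi_1]})[y+\tfrac12\hbar(\xi_1+\xi_2)]\,\psi_{a+\cdot}
\end{align*}
after a shift of variables) shows that the product is $\Qn$ of an element whose configuration part is again smooth and whose momentum measure is a convolution of phase-twisted copies of the original measures. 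Thus the momentum support of $f'$ equals the Minkowski sum of the supports of the two input measures, which is bounded coordinate-wise by $\tfrac{1}{2\pi R}+\tfrac{1}{2\pi R}\leq\tfrac{1}{2\pi}$ whenever $R\geq 2$. Hence $f'\in\M_0^{l^R}$, which is exactly what we need.
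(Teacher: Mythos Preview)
Your overall strategy matches the paper's: reduce to a common lattice $l$ via Remark \ref{rema:supremum trick}, refine to $m=l^R$ (the paper takes $R=2$), show via the Minkowski-sum argument on momentum supports that the product $\Qm(\tilde f_1\circ\gamma_{lm})\Qm(\tilde f_2\circ\gamma_{lm})$ lands in $\M_\hbar^m$, and conclude.

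There is, however, a genuine gap in your last step. You claim that once $\Qm(\tilde f_1\circ\gamma_{lm})\Qm(\tilde f_2\circ\gamma_{lm})=\Qm(f')$ holds on $\H^m$, ``the definition of $\Q$ and Lemma \ref{lem:F_Q and u} give $O_1O_2=\Q(f'\circ\gamma_m)$''. But by the definition of $\Q$, to verify $O_1O_2=\Q(f'\circ\gamma_m)$ on $\H^\infty$ you must check it on $u^n\psi$ for \emph{all} $n\geq m$ and $\psi\in\H^n$, which unwinds to $\Qn(\tilde f_1\circ\gamma_{ln})\Qn(\tilde f_2\circ\gamma_{ln})=\Qn(f'\circ\gamma_{mn})$ on $\H^n$. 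Lemma \ref{lem:F_Q and u} only intertwines on the range of $u^{nm}$, a proper subspace of $\H^n$, so the $\H^m$ identity does not propagate automatically. Since the quantum embedding maps $F_Q^{nm}$ are explicitly \emph{not} multiplicative (this is the whole point of the chapter), the needed identity $F_Q^{nl}(O_1)F_Q^{nl}(O_2)=F_Q^{nm}\bigl(F_Q^{ml}(O_1)F_Q^{ml}(O_2)\bigr)$ for all $n\geq m$ requires its own verification; the paper states this as a second ``straightforward computation''. In your language, it amounts to checking that your explicit bookkeeping formula for $f'$ is natural under $\gamma_{mn}$: repeating the same computation at level $n$ yields exactly $f'\circ\gamma_{mn}$. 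This is true and not hard, but it must be stated and checked rather than absorbed into an appeal to Lemma \ref{lem:F_Q and u}.
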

\begin{proof}
	By Remark \ref{rema:supremum trick}, we only have to show that $\Q(f_1\circ\gamma_l)\Q(f_2\circ\gamma_l)$ is in $\A_\hbar^\infty$. Write $O_1:=\Ql(f_1)$ and $O_2:=\Ql(f_2)$. Because we cannot take their product in the operator system $\M^l_\hbar$, we first subdivide the edges of $l$ to obtain the lattice $l^2$ defined by Definition \ref{def:l_R}. A straightforward computation shows firstly that
		\begin{align*}
			F_Q^{l^2l}(O_1)F_Q^{l^2l}(O_2)\in\M_\hbar^{l^2},
		\end{align*}
		and secondly that
		\begin{align*}
			F_Q^{ml}(O_1)F_Q^{ml}(O_2)=F_Q^{ml^2}(F_Q^{l^2l}(O_1)F_Q^{l^2l}(O_2)),
		\end{align*}
		for all $m\geq l^2$.
		Using this formula and Lemma \ref{lem:F_Q and u}, we obtain
		\begin{align*}
			\Q(f_1\circ\gamma_l)\Q(f_2\circ\gamma_l)u^m\psi&=u^mF_Q^{ml}(O_1)F_Q^{ml}(O_2)\psi\\
			&=F_Q^{l^2}(F_Q^{l^2l}(O_1)F_Q^{l^2l}(O_2))u^m\psi,
		\end{align*}
		for all $u^m\psi\in\H^\infty$. Hence, $\Q(f_1\circ\gamma_l)\Q(f_2\circ\gamma_l)=F_Q^{l^2}(F_Q^{l^2l}(O_1)F_Q^{l^2l}(O_2))\in\A_\hbar^\infty$.
\end{proof}

Taking the closures of $\A_0^\infty\subseteq C_\textnormal{b}(X^\infty)$ and $\A_\hbar^\infty\subseteq\mB(\H^\infty)$, we therefore obtain C*-algebras $A_0^\infty$ and $A_\hbar^\infty$. By Theorem \ref{thm:sdq infinity}, we are justified in saying that the noncommutative C*-algebra $A_\hbar^\infty$ is obtained by strict deformation quantization of $A_0^\infty$.

\subsection{Time evolution}\label{sct:time evolution}
Before moving on to strict deformation quantization, we state two promising results with respect to time evolution. They show that our C*-algebras are invariant under the natural extension of free time evolution to the continuum limit.
On the finite level, these results are strengthened to invariance under \textit{all} time evolutions in Theorems \ref{thm:classical time evolution} and \ref{thm:quantum time evolution}.
The combination of the results here and in Chapter \ref{ch:cylinder} indicates that we are on the right track to obtaining classical and quantum C*-algebras that are invariant under respectively classical and quantum Yang--Mills time evolution.
\begin{thm}\label{thm:classical time evolution infty}
	The C*-algebra $A^\infty_0\subseteq C_\textnormal{b}(X^\infty)$ is conserved by the time evolution given on a lattice $l\in\L$ by the Hamiltonian $H_l:T^*G^l\to\R$, $H_l(q,v):=\sum_{e\in l}d_ev_e^2$, where $d_{(x,y)}=\norm{x-y}$.
\end{thm}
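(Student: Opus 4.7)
The plan is to first verify that the individual time evolutions $\Phi^t_l$ on $X^l$ are compatible with the projections $\gamma_{lm}:X^m\to X^l$, so that they assemble into a well-defined flow $\Phi^t_\infty$ on $X^\infty$. The Hamilton equations for $H_l$ read $\dot q_e=2d_ev_e$ and $\dot v_e=0$, giving the explicit formula $\Phi^t_l(q,v)=(q+2tdv,v)$, where $dv\in\g^l$ denotes the vector with entries $d_ev_e$. For an edge of $l$ that is subdivided in $m$ into pieces of lengths $d_1,d_2$ with $d_1+d_2=d$, a direct substitution shows $\gamma^{\mom}_{\sub}(v_1,v_2)$ and $\gamma^{\conf}_{\sub}(q_1,q_2)$ evolve in time under $H_l$ exactly as $(q_1+2td_1v_1,q_2+2td_2v_2)$ does under $H_m$ followed by $\gamma_{\sub}$; added edges are killed by $\gamma_{\add}$, so the corresponding part of $H_m$ is irrelevant. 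Hence $\gamma_{lm}\circ\Phi^t_m=\Phi^t_l\circ\gamma_{lm}$, which means there is a unique continuous flow $\Phi^t_\infty$ on the inverse limit $X^\infty$ satisfying $\gamma_l\circ\Phi^t_\infty=\Phi^t_l\circ\gamma_l$ for all $l$.

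Next, I reduce invariance of $A^\infty_0$ to invariance of each $A^l_0$ under $(\Phi^t_l)^*$. For any $f\circ\gamma_l\in\A^\infty_0$, the compatibility just established gives
\begin{equation*}
(\Phi^t_\infty)^*(f\circ\gamma_l)=f\circ\gamma_l\circ\Phi^t_\infty=f\circ\Phi^t_l\circ\gamma_l=\bigl((\Phi^t_l)^*f\bigr)\circ\gamma_l\,.
\end{equation*}
So it suffices to show $(\Phi^t_l)^*(A^l_0)\subseteq A^l_0$ and then pass to the $\A^\infty_0$-closure. Here I would argue exactly as in Lemma \ref{lem:free time evolution}, adapted to the weighted Hamiltonian. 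Acting on a generator $e_b\otimes h_{U,\xi,g}$ of $A^l_0$, we compute
\begin{equation*}
(\Phi^t_l)^*(e_b\otimes h_{U,\xi,g})(q,v)=e_b[q]\,e^{4\pi it(db)\cdot v}\,e^{i\xi\cdot v}\,g(P_Uv)=e_b\otimes h_{U,\tilde\xi,\tilde g}(q,v),
\end{equation*}
where $\tilde\xi:=\xi+4\pi tP_{U^\perp}(db)\in U^\perp$ and $\tilde g(p):=e^{i4\pi tP_U(db)\cdot p}g(p)\in\mathcal S(U)$. The resulting function is again of the standard generating form, so $(\Phi^t_l)^*$ maps the generating set of $A^l_0$ into itself.

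Since $(\Phi^t_l)^*$ is a $*$-automorphism of $C_\textnormal{b}(X^l)$ (in particular norm-continuous), mapping a generating set of $A^l_0$ into $A^l_0$ implies $(\Phi^t_l)^*(A^l_0)\subseteq A^l_0$; the reverse inclusion follows from the same statement for $-t$. Consequently $(\Phi^t_\infty)^*$ maps the dense $*$-subalgebra $\A^\infty_0\subseteq A^\infty_0$ into itself, and by continuity of the pullback it preserves the norm-closure $A^\infty_0$, as required. The only step requiring any real care is the verification that $\Phi^t_l$ and $\gamma_{lm}$ intertwine in the subdivision case; once this metric-compatibility check is in place (and it is precisely the reason for choosing the factor $d_e$ in $H_l$ and the weighting in $\gamma^{\mom}_{\sub}$), the rest of the argument is essentially a rerun of the free-evolution computation from Chapter \ref{ch:cylinder} at every finite level.
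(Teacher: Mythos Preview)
Your proposal is correct and follows essentially the same approach as the paper: establish that the finite-level flows intertwine with the projections $\gamma_{lm}$, so that the time evolution descends to $A_0^\infty$ via $\tau_\infty^0(t,f\circ\gamma_l):=\tau_l^0(t,f)\circ\gamma_l$, and then invoke invariance of each $A_0^l$ under the (weighted) free flow. The paper's proof is much terser---it cites a lemma for the finite-level invariance and asserts the compatibility in one line---whereas you spell out both the explicit flow $\Phi^t_l(q,v)=(q+2tdv,v)$ and the generator computation adapted from Lemma~\ref{lem:free time evolution}; your direct verification of $\gamma_{lm}\circ\Phi^t_m=\Phi^t_l\circ\gamma_{lm}$ is in fact more careful than the paper's claim that $H_l\circ\gamma_{lm}=H_m$ (which is not literally true, though the flow intertwining is).
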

\begin{proof}
	Every Hamiltonian $H_l$ induces a time-evolution $\tau^0_{l}: \R\times A_0^l\to A_0^l$ by \cite[Lemma 10]{vNS20}. It can be checked that $H_l\circ\gamma_{lm}=H_m$, and therefore $\tau_m^0(t,f\circ\gamma_{lm})=\tau_l^0(t,f)\circ\gamma_{lm}$. We conclude that the time-evolution
	\begin{align*}
		\tau_\infty^0:A_0^\infty\to A_0^\infty,\qquad\tau_\infty^0(t,f\circ\gamma_l):=\tau_l^0(t,f)\circ\gamma_l
	\end{align*}
	is well defined.
\end{proof}

\begin{thm}\label{thm:quantum time evolution infty}
	The C*-algebra $A^\infty_\hbar\subseteq\mB(\H^\infty)$ is conserved by the time evolution given on a lattice $l\in\L$ by the Hamiltonian $\hat{H}_l:=\sum_{e\in l}d_e\Delta_e$, where $\Delta_e$ is the Laplace operator on the $e^{\text{th}}$ copy of $G$ in $G^l$.
\end{thm}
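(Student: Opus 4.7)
The plan is to mimic the structure of Theorem \ref{thm:classical time evolution infty}, promoting the family $\{\hat H_l\}_{l\in\L}$ of free Hamiltonians to a single self-adjoint operator $\hat H_\infty$ on $\H^\infty$, and then showing that conjugation by $e^{it\hat H_\infty}$ preserves the dense $*$-subalgebra $\A_\hbar^\infty$. The first step is to verify the intertwining relation $u^{ml}\hat H_l=\hat H_m u^{ml}$ for every $l\leq m$ in $\L$. By recursion this reduces to the two cases $\add$ and $\sub$, where it becomes a one-line computation on the basis $\{\psi_a\}$: since $u^{ml}\psi_a=\psi_{T^{ml}(a)}$ and $\hat H_l\psi_a=-\sum_{e\in l}d_e(2\pi a_e)^2\psi_a$, both sides act as $-\sum_{e\in l}d_e(2\pi a_e)^2\psi_{T^{ml}(a)}$, using $T^\sub(a)=(a,a)$ together with $d_1+d_2=d$, respectively $T^\add(a)=(a,0)$. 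Consequently the unitary groups $U^l_t:=e^{it\hat H_l}$ intertwine through the net of isometries $u^{ml}$, giving rise to a strongly continuous unitary group $U_t^\infty$ on $\cup_l u^l(\H^l)$; Stone's theorem then yields the desired self-adjoint operator $\hat H_\infty$, and we put $\tau_t^\infty(O):=U^\infty_t O (U^\infty_t)^*$.

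The next step is to show that $\tau_t^\infty$ preserves $\A_\hbar^\infty$, which by linearity and density reduces to checking it on generators $\Q(f\circ\gamma_l)$ with $f=e_b\otimes\check\mu\in\M^l_0$. Passing to any $m\geq l$ via the identities of Lemma~\ref{lem:F_Q and u} and the intertwining just proved, we get
\[
\tau_t^\infty(\Q(f\circ\gamma_l))u^m\psi
=u^m\,U_t^m\,\Qm(f\circ\gamma_{lm})\,(U_t^m)^*\psi,
\]
so it suffices to analyze the free time evolution of a single generator $\Qm(e_{b'}\otimes\check{\mu'})\in\M^m_\hbar$. A direct computation using the diagonalization $\hat H_m\psi_a=-\sum_e d_e(2\pi a_e)^2\psi_a$ together with \eqref{eq:asymptotically Ruben} shows that the conjugated operator acts as
\[
U_t^m\,\Qm(e_{b'}\otimes\check{\mu'})\,(U_t^m)^*\psi_a
=e^{it\xi_0\cdot 2\pi\hbar(a+b'/2)}\,\check{\mu'}(2\pi\hbar(a+b'/2))\,\psi_{a+b'},
\]
where $\xi_0\in\g^m$ is the fixed vector with components $\xi_{0,e}=-4\pi d_eb'_e/\hbar$. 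Denoting by $\tilde\mu'$ the translate of $\mu'$ by $t\xi_0$, the right-hand side is nothing but $\Qm(e_{b'}\otimes\check{\tilde\mu'})\psi_a$, and so $U_t^m\Qm(f\circ\gamma_{lm})(U_t^m)^*$ is again of the form $\Qm(\tilde f)$ for the obvious $\tilde f$.

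The catch, and the only real obstacle, is that $\supp(\tilde\mu')$ lies in $B^m+t\xi_0$, which in general escapes $B^m$, so $\tilde f\notin\M^m_0$. The resolution is exactly the refinement argument already used in the proof of \eqref{eq:classical algebra vs operator system}: subdividing each edge of $m$ into $R$ equal parts gives the lattice $m^R$ and the rescaling $S^{m^Rm}(\xi)_e=\xi_{e'}/R$, so
\[
S^{m^Rm}(B^m+t\xi_0)\subseteq B^{m^R}\qquad\text{whenever}\qquad R>1+2\pi t\max_{e\in m}|\xi_{0,e}|.
\]
For such $R$, the function $\tilde f\circ\gamma_{mm^R}$ lies in $\M^{m^R}_0$ by \eqref{eq:F_C^ml in termen van S^ml}, and via $F_Q^{m^Rm}$ together with Lemma~\ref{lem:F_Q and u} we can rewrite
\[
\tau_t^\infty(\Q(f\circ\gamma_l))=\Q\!\left(\tilde f\circ\gamma_{mm^R}\circ\gamma_{m^R}\right)\in\A_\hbar^\infty.
\]
Hence $\tau_t^\infty(\A_\hbar^\infty)\subseteq\A_\hbar^\infty$, and since $\tau_t^\infty$ is an isometric $*$-automorphism of $\mB(\H^\infty)$ it extends by continuity to $A_\hbar^\infty=\overline{\A_\hbar^\infty}$, yielding $\tau_t^\infty(A_\hbar^\infty)\subseteq A_\hbar^\infty$; applying the same argument to $-t$ gives equality. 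The main conceptual obstacle is precisely this dance between the two limits: at each finite stage the free time evolution drives supports out of $B^m$, but the ultraviolet refinement $m\rightsquigarrow m^R$ absorbs the displacement, illustrating once more the philosophy of Section~\ref{sct:quantum systems and embedding maps} that analytic shortcomings at finite level are healed by passing to the continuum.
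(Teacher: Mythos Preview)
Your proof is correct and follows the same overall architecture as the paper's: build $\hat H_\infty$ from the compatible family $\{\hat H_l\}$, reduce to a single-lattice computation via Lemma~\ref{lem:F_Q and u}, and show that conjugation by $e^{it\hat H_m}$ sends a generator of $\M^m_\hbar$ to another quantized element. The paper's argument is shorter because it packages your explicit computation into two ingredients it has already established: the intertwining $e^{it\hat H_m}\Qm(g)e^{-it\hat H_m}=\Qm(\tau^0_m(t,g))$ (cited as \cite[Remark 27]{vNS20}, which is the content of your phase computation) and the classical compatibility $\tau^0_m(t,f\circ\gamma_{lm})=\tau^0_l(t,f)\circ\gamma_{lm}$ from the proof of Theorem~\ref{thm:classical time evolution infty}. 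These combine to give directly
\[
e^{it\hat H_\infty}\Q(f\circ\gamma_l)e^{-it\hat H_\infty}=\Q(\tau^0_l(t,f)\circ\gamma_l)\in A^\infty_\hbar,
\]
so the ``support escapes $B^m$'' issue you handle by refinement to $m^R$ is absorbed invisibly into the identification $\A^\infty_0=\{g\circ\gamma_{l'}:g\in\M^{l'}_0\}$ from \eqref{eq:classical algebra vs operator system}. In effect you have unpacked both lemmas and reproved them in situ; what you gain is a self-contained argument independent of Chapter~\ref{ch:cylinder}, at the cost of a slightly longer write-up. One small point: your displayed element $\Q(\tilde f\circ\gamma_{mm^R}\circ\gamma_{m^R})$ a priori depends on the choice of $m$, and to conclude it equals $\tau^\infty_t(\Q(f\circ\gamma_l))$ on all of $\H^\infty$ you implicitly need $\tilde f_n=\tilde f_m\circ\gamma_{mn}$ for $n\geq m$, which is exactly the classical compatibility above; the paper's route makes this dependence disappear by working at level $l$ from the start.
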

\begin{proof}
	These Hamiltonians define a continuum Hamiltonian $\hat H_\infty$ in $\H^\infty$ with domain 
	\begin{align*}
		\dom \hat H_\infty:=\bigcup_{l\in\L} u^l(\dom \hat H_l)=\bigcup_{l\in\L} u^l(C^\infty(G^l)),
	\end{align*}
	namely $\hat{H}_\infty u^l\psi:=u^l\hat H_l\psi$. Straightforwardly, one checks well-definedness and essential self-adjointness. By \cite[Remark 27]{vNS20}, we have
	\begin{align*}
		e^{it\hat H_\infty}\Q(f\circ\gamma_l)e^{-it\hat H_\infty}u^m\psi&=u^me^{it\hat H_m}\Qm(f\circ\gamma_{lm})e^{-it\hat H_m}\psi\\
		&=u^m\Qm(\tau^0_m(t,f\circ\gamma_{lm}))\psi\\
		&=u^m\Qm(\tau^0_l(t,f)\circ\gamma_{lm})\psi\\
		&=\Q(\tau^0_l(t,f)\circ\gamma_l)u^m\psi.
	\end{align*}
	Therefore, $e^{it\hat H_\infty}\Q(f\circ\gamma_l)e^{-it\hat H_\infty}=\Q(\tau^0_l(t,f)\circ\gamma_l)\in A_\hbar^\infty$ for every $t$.
\end{proof}

\section{Strict deformation quantization}
\label{sct:SDQ}

In this section we prove our main theorem, which is formulated as follows.

\begin{thm}\label{thm:sdq infinity}
	Let $Q_0^\infty:=\id_{\A^\infty_0}$. Together with the subset $I=[-1,1]$ and the C*-algebras $\{A^\infty_\hbar\}_{\hbar\in I}$, the maps $\{\Q:\A_0^\infty\to A^\infty_\hbar\}_{\hbar\in I}$ form a strict deformation quantization of $\A_0^\infty$.
\end{thm}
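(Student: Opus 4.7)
The plan is to verify each of the six conditions in the definition of a strict deformation quantization, organized by technical difficulty. First, the algebraic conditions---namely $Q_0^\infty = \id$, linearity, *-preservation, density of $Q_\hbar^\infty(\A_0^\infty)$ in $A_\hbar^\infty$, and injectivity of $Q_\hbar^\infty$---reduce directly to finite-level statements. Linearity, *-preservation, and $Q_0^\infty = \id$ are immediate from the definition of $\Q$ on elements of the form $f \circ \gamma_l$, combined with the corresponding properties at each finite lattice. Density is built into the construction of $A_\hbar^\infty$ as a norm-closure. For injectivity I take $f \circ \gamma_l \in \A_0^\infty$ with $\Q(f \circ \gamma_l) = 0$, refine $l$ to some $m \geq l$ with $f \circ \gamma_{lm} \in \M_0^m$ (as in the argument establishing \eqref{eq:classical algebra vs operator system}), and evaluate on vectors $u^m \psi \in \H^\infty$; using that $u^m$ is an isometry this forces $\Qm(f \circ \gamma_{lm}) = 0$, and the injectivity of $\Qm$ on $\M_0^m$ (established via the Whittaker--Nyquist--Shannon sampling argument following \eqref{eq:asymptotically Ruben}) then yields $f \circ \gamma_{lm} = 0$ and hence $f \circ \gamma_l = 0$.

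Second, von Neumann's and Dirac's conditions reduce to the almost-strict statement of Theorem \ref{thm:almost strict}. Given $f_1, f_2 \in \A_0^\infty$, I use Remark \ref{rema:supremum trick} to write them on a common lattice $l$, refining further so that each $\tilde f_j$ and $\tilde f_1 \tilde f_2$ lie in $\M_0^l$. The product identity from the proof of Proposition \ref{prop:algebra} writes $\Q(\tilde f_1 \circ \gamma_l) \Q(\tilde f_2 \circ \gamma_l) = F_Q^{l^2}\bigl(F_Q^{l^2 l}(O_1) F_Q^{l^2 l}(O_2)\bigr)$ with $O_j = \Ql(\tilde f_j)$, and combined with the isometric intertwining $F_Q^{ml}(O) u^{ml} = u^{ml} O$ of Lemma \ref{lem:F_Q and u}, this bounds the continuum deviation by the corresponding deviation at the single lattice $l^2$, which vanishes as $\hbar \to 0$ by Theorem \ref{thm:almost strict}. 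Dirac's condition follows by the same reduction after verifying that the continuum Poisson bracket defined by $\{f_1 \circ \gamma_l, f_2 \circ \gamma_l\} := \{f_1, f_2\} \circ \gamma_l$ is compatible with refinements, which in turn reduces to a direct check on $\gamma_\add$ and $\gamma_\sub$.

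The main obstacle is Rieffel's condition at $\hbar \neq 0$, since Lemma \ref{lem:failure of Rieffel's condition} exhibits its failure at every finite level. My plan is to prove the stronger statement $\|\Q(f \circ \gamma_l)\| = \|f\|_\infty$ for every $\hbar \in I$ and every $f \in \M_0^l$, which makes $\hbar \mapsto \|\Q(f \circ \gamma_l)\|$ constant on $I$ and hence trivially continuous. For the lower bound I use the refinements $l^R$ from Definition \ref{def:l_R}: by \eqref{eq:F_C^ml in termen van S^ml} and \eqref{S^ml}, pulling back along $\gamma_{l l^R}$ rescales the Fourier-space measure $\mu_k$ by the contraction factor $1/R$ in each original coordinate direction, so the sampling points $2\pi\hbar(a + b/2)$ entering \eqref{eq:asymptotically Ruben} become $R$-fold denser. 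For any fixed $\hbar \in I \setminus \{0\}$ these samples become dense in the effective support of the classical symbol as $R \to \infty$, forcing $\sup_R \|Q_\hbar^{l^R}(f \circ \gamma_{l l^R})\| \geq \|f\|_\infty$; combined with \eqref{eq:Q^infty sup over Q^j} this gives the lower bound. The hard part will be the matching upper bound $\|Q_\hbar^m(f \circ \gamma_{lm})\| \leq \|f\|_\infty$ at each finite lattice. For a single-term $f = e_b \otimes \check\mu$ this is immediate from \eqref{eq:asymptotically Ruben}, since the norm is then a supremum of samples of $|\check\mu|$, bounded by $\|\check\mu\|_\infty = \|f\|_\infty$; but for multi-term $f = \sum_b e_b \otimes h_b$ the quantized operator has a nontrivial band structure in the eigenbasis $\{\psi_a\}$, and the bound requires a Toeplitz-type operator-norm estimate exploiting the Nyquist regime enforced by the joint constraint $\hbar \in [-1,1]$ and $\supp(\mu_k) \subseteq B^l$. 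Once this upper bound is in hand, combining it with the lower bound and \eqref{eq:Q^infty sup over Q^j} yields the desired equality and hence Rieffel's condition, completing the proof.
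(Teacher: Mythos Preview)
Your treatment of the algebraic conditions and injectivity is fine and matches the paper's Propositions~\ref{prop:star-preserving} and~\ref{prop:injective}. There are, however, two genuine gaps.

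\textbf{Von Neumann and Dirac.} The reduction to the single lattice $l^2$ does not go through. Lemma~\ref{lem:F_Q and u} gives $F_Q^{ml}(O)u^{ml}=u^{ml}O$ with $u^{ml}$ an isometric \emph{inclusion}, so it only tells you how $F_Q^{ml}(O)$ acts on the range of $u^{ml}$; this yields $\|F_Q^{ml}(O)\|\geq\|O\|$, not an upper bound. Consequently the map $F_Q^{l^2}$ is norm-\emph{increasing}, not isometric, and $\|F_Q^{l^2}(D)\|=\sup_{m\geq l^2}\|F_Q^{ml^2}(D)\|$ need not be controlled by $\|D\|$. What you actually need is that the von Neumann deviation at every lattice $m\geq l$ tends to zero \emph{uniformly in $m$}, and Theorem~\ref{thm:almost strict} gives no such uniformity. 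The paper's Propositions~\ref{prop:von Neumann} and~\ref{prop:Dirac} establish this uniformity directly by an explicit pointwise estimate on the basis $\{\psi_a\}$, bounding by derivatives of the $h_k$ independently of $m$.

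\textbf{Rieffel's condition.} Your plan rests on the claim $\|\Q(f\circ\gamma_l)\|=\|f\|_\infty$ for all $\hbar\in I$, but you provide no argument for the crucial upper bound $\|Q_\hbar^m(f\circ\gamma_{lm})\|\leq\|f\|_\infty$ beyond calling it a ``Toeplitz-type estimate''. Weyl quantization is not contractive in general, and the band-limit constraint $\supp(\mu_k)\subseteq B^l$ together with $|\hbar|\leq 1$ is not by itself enough to force it; you would need a sharp multi-term symbol estimate that you have not supplied. The paper does not attempt to prove constancy at all. Instead it proves continuity in two separate and delicate steps: Proposition~\ref{prop:Rieffel0} handles $\hbar\to 0$ by a partition argument on $G^m$ combined with Lemma~\ref{lem:supnorm from operators}, and Proposition~\ref{prop:Rieffel1} handles $\hbar\to\hbar_1\neq 0$ by constructing a near-diffeomorphism $F$ of $G^m$ (Lemma~\ref{lem:F}) that converts a near-optimal vector for $Q_\hbar^m$ into one for $Q_{\hbar_1}^m$. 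Both arguments exploit the refinement $l^R$ in an essential way, but to compare different $\hbar$-values, not to establish a universal bound by $\|f\|_\infty$.
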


For readability, the proof of Theorem \ref{thm:sdq infinity} is split up into Propositions \ref{prop:star-preserving}, \ref{prop:injective}, \ref{prop:von Neumann}, \ref{prop:Dirac}, \ref{prop:Rieffel0}, and \ref{prop:Rieffel1}.

\begin{prop}\label{prop:star-preserving}
	The map $\Q:\A_0^\infty\to\A_\hbar^\infty$ is linear and *-preserving for all $\hbar\in I$.
\end{prop}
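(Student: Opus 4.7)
The plan is to reduce both claims to the corresponding properties of the finite-level maps $\Ql$ acting on the operator systems $\M_0^l$, using the fact that $\Q$ was defined precisely as a direct-limit of such maps.

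For linearity, I would take two arbitrary elements $\omega_1, \omega_2 \in \A_0^\infty$. By Remark \ref{rema:supremum trick} we may represent them on a common lattice as $\omega_i = f_i\circ\gamma_l$ for some $l\in\L$ and $f_i\in\M_0^l$. For any $\psi\in\H^m$ with $m\geq l$, the definition of $\Q$ gives
\[
\Q(\alpha_1 f_1\circ\gamma_l + \alpha_2 f_2\circ\gamma_l)u^m\psi = u^m\,Q^m_\hbar\!\left((\alpha_1 f_1+\alpha_2 f_2)\circ\gamma_{lm}\right)\psi,
\]
and linearity of $\Q$ reduces to linearity of $Q^m_\hbar$ on $\M_0^m$, which is immediate from the defining formula \eqref{eq:Ql}. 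Since the vectors of the form $u^m\psi$ are dense in $\H^\infty$, this suffices to conclude equality of the bounded operators.

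For the $\ast$-property, observe that the involution on $\A_0^\infty\subseteq C_\textnormal{b}(X^\infty)$ is complex conjugation and commutes with the pullback, so $\overline{f\circ\gamma_l} = \bar f\circ\gamma_l$. It therefore suffices to show that $\Q(\bar f\circ\gamma_l) = \Q(f\circ\gamma_l)^*$, and this in turn follows if each $Q^m_\hbar:\M_0^m\to\mB(\H^m)$ is $\ast$-preserving. The latter is a routine calculation: writing $f = \sum_k g_k\otimes\check\mu_k$, setting $d\nu_k(\xi) := \overline{d\mu_k(-\xi)}$, and unfolding the definition \eqref{eq:Ql} for $\langle \phi, Q^m_\hbar(f)\psi\rangle$ via the change of variables $y\mapsto y-\hbar\xi$ followed by $\xi\mapsto -\xi$, one recovers exactly $Q^m_\hbar(\bar f)$ with kernel measures $\nu_k$ and conjugated functions $\bar g_k$ (alternatively this is the content of Proposition \ref{prop:Weyl quantization properties}\ref{it:linear and star preserving} applied to the extended Weyl quantization, of which $Q^m_\hbar$ restricted to $\M_0^m$ is a special case).

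Combining these pieces, for $\phi,\psi\in\H^m$ we have
\[
\langle u^m\phi,\Q(f\circ\gamma_l)u^m\psi\rangle = \langle \phi, Q^m_\hbar(f\circ\gamma_{lm})\psi\rangle = \langle Q^m_\hbar(\bar f\circ\gamma_{lm})\phi,\psi\rangle = \langle \Q(\bar f\circ\gamma_l)u^m\phi,u^m\psi\rangle,
\]
where the first and last equalities use that each $u^m$ is isometric (since $\gamma^{\conf}_m$ is surjective). Density of $\bigcup_m u^m\H^m\subseteq\H^\infty$ then yields $\Q(f\circ\gamma_l)^* = \Q(\overline{f\circ\gamma_l})$. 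I do not anticipate any genuine obstacle here — the only mild subtlety is to ensure that the representation $\omega_i = f_i\circ\gamma_l$ on a common lattice is unambiguous enough for both arguments, which is precisely what Remark \ref{rema:supremum trick} supplies.
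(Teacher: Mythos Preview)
Your proposal is correct and follows essentially the same route as the paper: reduce both linearity and the $*$-property to the corresponding facts for the finite-level maps $Q^m_\hbar$ (invoking Proposition~\ref{prop:Weyl quantization properties}\ref{it:linear and star preserving}), and then pass to $\H^\infty$ by density of $\bigcup_m u^m\H^m$. The only cosmetic difference is that the paper tests $\langle\,\cdot\,,\Q(\cdot)\,\cdot\,\rangle$ on vectors $u^m\psi^m$ and $u^n\psi^n$ coming from \emph{different} lattices $m,n$ and then explicitly picks a common $p\geq l,m,n$, whereas you take both test vectors from the same $\H^m$; by upward directedness these are equivalent, so no real gap arises.
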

\begin{proof}
	Linearity is obvious, so we are left to prove that $\Q(f)^*=\Q(\overline{f})$ for $f\in\A_0^\infty$. Given $f\circ\gamma_l\in\A_0^\infty$ and $u^m\psi^m,u^n\psi^n\in\H^\infty$, choose $p\geq l,m,n$. By using that $\Ql:\A_0^l\to \A_\hbar^l$ is star-preserving by Proposition \ref{prop:Weyl quantization properties} (which can also be derived directly from \eqref{eq:Ql}) we get
		\begin{align*}
			\p{\Q(f\circ\gamma_l)u^m\psi^m}{u^n\psi^n}
			&=\p{u^pQ_\hbar^p(f\circ \gamma_{lp})u^{pm}\psi^m}{u^pu^{pn}\psi^n}\\
			&=\p{\Qp(f\circ\gamma_{lp})u^{pm}\psi^m}{u^{pn}\psi^n}\\
			&=\p{u^{pm}\psi^m}{\Qp(\overline{f}\circ\gamma_{lp})u^{pn}\psi^n}\\
			&= \p{u^m\psi^m}{\Q(\overline{f}\circ\gamma_l)u^n\psi^n}.
		\end{align*}
		Therefore $\Q(f\circ\gamma_l)^*$ equals $\Q(\overline{f}\circ\gamma_l)$ on a dense subset of $\H^\infty$, hence on the whole of $\H^\infty$ by boundedness.
\end{proof}

\begin{prop}\label{prop:injective}
	The map $\Q:\A_0^\infty\to A_\hbar^\infty$ is injective for all $\hbar\in I$.
\end{prop}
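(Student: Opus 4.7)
The plan is to reduce injectivity on the continuum limit to the already-established injectivity of each lattice-level quantization $\Ql$ on the operator system $\M_0^l$. The case $\hbar=0$ is trivial since $Q_0^\infty=\id_{\A_0^\infty}$, so it suffices to treat $\hbar\in[-1,1]\setminus\{0\}$.

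The key auxiliary observation is that each universal embedding $u^l\colon\H^l\to\H^\infty$ is an isometry. Indeed, each connecting map $u^{ml}=(\gamma^{conf}_{lm})^*$ is a composition of the pullbacks along $\gamma^{conf}_\add([x_1],[x_2])=[x_1]$ and $\gamma^{conf}_\sub([x_1],[x_2])=[x_1+x_2]$; the former sends $\psi$ to $\psi\otimes 1$ on the normalized Haar measure (clearly norm-preserving), while the latter is norm-preserving by translation invariance of the Haar measure on $G$. Hence each $u^{ml}$ is isometric, and the direct-limit embeddings $u^l$ inherit this, in particular being injective.

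Now suppose $F\in\A_0^\infty$ satisfies $\Q(F)=0$. Using the lemma encapsulated in equation~\eqref{eq:classical algebra vs operator system} (refining the lattice if necessary to absorb the Fourier-support condition), I would write $F=f\circ\gamma_l$ with some $l\in\L$ and $f\in\M_0^l$. By the defining formula of $\Q$ applied at level $m=l$, for every $\psi\in\H^l$,
\begin{align*}
0 \;=\; \Q(F)\,u^l\psi \;=\; u^l\,\Ql(f)\,\psi.
\end{align*}
Injectivity of $u^l$ then gives $\Ql(f)\psi=0$ for all $\psi\in\H^l$, so $\Ql(f)=0$ in $\mB(\H^l)$. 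Finally, the injectivity of $\Ql$ on $\M_0^l$ -- already established in \textsection\ref{sct:quantum systems and embedding maps} via the Whittaker--Nyquist--Shannon sampling theorem applied to the Fourier data of each summand in $f=\sum_j e_{b_j}\otimes\check\mu_j$ (with $\supp\mu_j\subseteq B^l$ and sampling spacing $2\pi\hbar$) -- forces $f=0$, hence $F=0$.

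No serious obstacle is anticipated: all genuine analytic input has been packaged earlier, both in the finite-level injectivity of $\Ql|_{\M_0^l}$ and in the structural description of $\A_0^\infty$. The one point requiring care is to ensure that the representative $f$ lies in the operator system $\M_0^l$ (not merely in the larger algebra $\A_0^l$), since injectivity of $\Ql$ is only claimed on $\M_0^l$; this is exactly what equation~\eqref{eq:classical algebra vs operator system} supplies.
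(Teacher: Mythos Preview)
Your proof is correct and follows essentially the same approach as the paper: write an arbitrary element as $f\circ\gamma_l$ with $f\in\M_0^l$, apply the defining relation $\Q(f\circ\gamma_l)u^l\psi=u^l\Ql(f)\psi$, use injectivity of $u^l$ to conclude $\Ql(f)=0$, and then invoke the finite-level injectivity of $\Ql$ on $\M_0^l$. You supply slightly more detail than the paper (the $\hbar=0$ case, the isometry of $u^l$, and the care that $f\in\M_0^l$), but the argument is the same.
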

\begin{proof}
	Suppose $\Q(f\circ\gamma_l)=0$ for some $f\in\M_0^l$. Then
		\begin{align*}
			0&=\Q(f\circ\gamma_l)u^l\psi=u^l\Ql(f)\psi
		\end{align*}
		for all $\psi\in\H^l$. So $0=\Ql(f)$, and, by injectivity of $\Ql$, $f=0$.
\end{proof}

\begin{prop}\label{prop:von Neumann}
	\textbf{(von Neumann's condition)} For all $f,g\in\A_0^\infty$, we have
			\begin{align*}
				\lim_{\hbar\to0}\norm{\Q(f)\Q(g)-\Q(fg)}=0.
			\end{align*}
\end{prop}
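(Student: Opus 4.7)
The plan is to reduce the norm computation on $\mathcal{H}^\infty$ to a uniform estimate on each finite lattice, and then obtain the bound $O(\hbar)$ by a Taylor-type expansion of the generating symbols.

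First, by Remark \ref{rema:supremum trick}, any $f,g\in\A_0^\infty$ can be written as $f=f_1\circ\gamma_l$ and $g=f_2\circ\gamma_l$ for a common lattice $l$; by passing to a sufficiently fine subdivision $l^R$ (as in the proof of \eqref{eq:classical algebra vs operator system}) we may furthermore assume $f_1,f_2\in\M_0^l$, so that $f_j=\sum_i e_{b_i^j}\otimes\check\mu_i^j$ with $\supp\mu_i^j\subseteq B^l$. Combining $F_Q^{ml}(\Ql(\cdot))=\Qm(\cdot\circ\gamma_{lm})$ with Lemma \ref{lem:F_Q and u}, one checks (as in Proposition \ref{prop:algebra}) that for every $\psi\in\H^m$ with $m\geq l$,
\begin{align*}
[\Q(f)\Q(g)-\Q(fg)]u^m\psi
=u^m\bigl[\Qm(f_1\circ\gamma_{lm})\Qm(f_2\circ\gamma_{lm})-\Qm((f_1f_2)\circ\gamma_{lm})\bigr]\psi.
\end{align*}
Since $\bigcup_{m\geq l}u^m(\H^m)$ is dense in $\H^\infty$ and each $u^m$ is an isometry, it suffices to bound $\|\Qm(f_1\circ\gamma_{lm})\Qm(f_2\circ\gamma_{lm})-\Qm((f_1f_2)\circ\gamma_{lm})\|$ by $C\hbar$, with $C$ independent of $m$.

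Second, by triangle inequality and linearity it suffices to consider $f_j=e_{b_j}\otimes\check\mu_j$ ($j=1,2$) on $l$, which pull back to $e_{b_j'}\otimes\check{\mu_j'}$ on $m$, with $b_j':=T^{ml}(b_j)$ and $\mu_j':=S_*^{ml}\mu_j$. Using \eqref{eq:asymptotically Ruben}, a direct computation on the eigenbasis $\{\psi_a\}_{a\in(\Z^n)^m}$ gives
\begin{align*}
&\bigl[\Qm(e_{b_1'}\!\otimes\!\check{\mu_1'})\Qm(e_{b_2'}\!\otimes\!\check{\mu_2'})-\Qm(e_{b_1'+b_2'}\!\otimes\!\check{\mu_1'}\check{\mu_2'})\bigr]\psi_a
=\Delta_m(a;\hbar)\,\psi_{a+b_1'+b_2'},
\\
&\Delta_m(a;\hbar)
=\check{\mu_2'}(y{-}\pi\hbar b_1')\,\check{\mu_1'}(y{+}\pi\hbar b_2')-\check{\mu_1'}(y)\,\check{\mu_2'}(y),
\qquad y:=2\pi\hbar\bigl(a+\tfrac{b_1'+b_2'}{2}\bigr).
\end{align*}
Because this operator is a weighted shift on an orthonormal basis, its norm equals $\sup_{a}|\Delta_m(a;\hbar)|$.

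Third, I would telescope $\Delta_m(a;\hbar)$ into $[\check{\mu_2'}(y-\pi\hbar b_1')-\check{\mu_2'}(y)]\check{\mu_1'}(y+\pi\hbar b_2')+\check{\mu_2'}(y)[\check{\mu_1'}(y+\pi\hbar b_2')-\check{\mu_1'}(y)]$ and bound each difference by $\pi\hbar$ times a supremum of the directional derivative. Writing
\begin{align*}
b_j'\cdot\nabla\check{\mu_k'}(z)
=i\!\int_{\g^m}(b_j'\cdot\xi)e^{i\xi\cdot z}\,d\mu_k'(\xi)
=i\!\int_{\g^l}\bigl(T^{ml}(b_j)\cdot S^{ml}(\xi')\bigr)e^{iS^{ml}(\xi')\cdot z}d\mu_k(\xi'),
\end{align*}
the key identity $T^{ml}(b)\cdot S^{ml}(\xi)=b\cdot\xi$ (easily verified on the generators $T^{\add},T^{\sub},S^{\add},S^{\sub}$) collapses the pairing to $b_j\cdot\xi'$, giving $|b_j'\cdot\nabla\check{\mu_k'}(z)|\leq\int|b_j\cdot\xi|\,d|\mu_k|(\xi)$, a bound independent of $m$ and $z$. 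Hence $|\Delta_m(a;\hbar)|\leq C\hbar$ uniformly in $m$ and $a$, where $C$ depends only on $b_1,b_2,\mu_1,\mu_2$; summing over the finitely many terms in the expansion of $f_1,f_2$ and taking the supremum over $m$ yields $\|\Q(f)\Q(g)-\Q(fg)\|=O(\hbar)$.

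The main obstacle is precisely the uniformity in $m$: naively, $T^{ml}$ amplifies norms (for instance $|T^{\sub}(b)|=\sqrt{2}|b|$), so one might fear that the pointwise bound on $b_j'\cdot\nabla\check{\mu_k'}$ blows up as the lattice is refined. The identity $T^{ml}(b)\cdot S^{ml}(\xi)=b\cdot\xi$ -- which reflects precisely the compatibility of configuration and momentum embeddings used to define $\gamma_{lm}$ -- is what prevents this and makes the proof work.
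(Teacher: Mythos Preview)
Your proof is correct and follows essentially the same approach as the paper. The paper works directly with $h_j$ on $(\g^*)^l$ via the identity $\gamma_{lm}^{\mom}\circ T^{ml}=\id_{\g^l}$ from \eqref{eq:S and T} (so the relevant shifts are by the fixed $b_1,b_2$, not $T^{ml}(b_j)$), while you push forward to $\g^m$ and invoke the equivalent dual identity $T^{ml}(b)\cdot S^{ml}(\xi)=b\cdot\xi$; both routes yield the same $m$-independent $O(|\hbar|)$ bound.
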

\begin{proof}
	The proof is based on that of \cite[Theorem 22(2)]{vNS20}, but more complicated because $\Q$ is defined on $\H^\infty$, which includes all $\H^m$. Therefore, estimating an operator norm in $\mB(\H^\infty)$ amounts to taking a supremum over $m$. For two lattices $l\leq m\in\L$ and a function $e_b\otimes h\in\M_0^l$ we have,
		\begin{align*}
			(e_b\otimes h)\circ\gamma_{lm}=e^{2\pi ib\cdot \gamma^{\text{conf}}_{lm}(\cdot)}\otimes (h\circ\gamma_{lm}^{\text{mom}})
			=e_{T^{ml}(b)}\otimes (h\circ\gamma_{lm}^{\text{mom}})
		\end{align*}
		where we used \eqref{eq:S and T}. Combining this with \eqref{eq:asymptotically Ruben} and \eqref{eq:S and T}, we find
		\begin{align*}
			\Qm((e_b\otimes h)\circ\gamma_{lm})\psi_a&=h(\gamma^\mom_{lm}(2\pi\hbar(a+\tfrac{1}{2}T^{ml}(b))))\psi_{a+T^{ml}(b)}\\
			&=h(2\pi\hbar(\gamma^\mom_{lm}(a)+\tfrac{1}{2}b))\psi_{a+T^{ml}(b)}\,.
		\end{align*}
		Fix $f_1=e_{b_1}\otimes h_1,f_2=e_{b_2}\otimes h_2\in\M_0^l$ for an $l\in\L$. By bilinearity and Remark \ref{rema:supremum trick} it suffices to prove the proposition for $f=f_1\circ\gamma_l$ and $g=f_2\circ\gamma_l$. We note that if $O\psi_a=F(a)\psi_{a+b}$ for some $F\in C_\textnormal{b}((\Z^n)^m)$ and $b\in(\Z^n)^m$, then clearly $\norm{O}=\sup_{a\in(\Z^n)^m}\quadnorm{O\psi_a}$. We find, 
		\begin{align*}
			&\sup_{m\in\L_{\geq l}}\sup_{a\in(\Z^n)^{m}}\quadnorm{\left(\Qm(f_1f_2\circ\gamma_{lm})-\Qm(f_1\circ\gamma_{lm})\Qm(f_2\circ\gamma_{lm})\right)\psi_a}\\
			&\quad\leq \sup_{m\in\L_{\geq l}}\sup_{a\in(\Z^n)^m}\Big|h_1(2\pi\hbar(\gamma_{lm}^\mom(a)+\tfrac{1}{2}b_1+\tfrac{1}{2}b_2))h_2(2\pi\hbar(\gamma_{lm}(a)+\tfrac{1}{2}b_1+\tfrac{1}{2}b_2))\\
			&\qquad-h_1(2\pi\hbar(\gamma_{lm}^\mom(a)+\tfrac{1}{2}b_1+b_2))h_2(2\pi\hbar(\gamma^\mom_{lm}(a)+\tfrac{1}{2}b_2))\Big|\\
			&\quad\leq \supnorm{h_1}\pi|\hbar|\supnorm{\partial_{b_1} h_2} + \pi|\hbar|\supnorm{\partial_{b_2} h_1}\supnorm{h_2}\to 0\quad(\hbar\to0),
		\end{align*}
		where $\partial_b$ denotes the directional derivative. By \eqref{eq:Q^infty sup over Q^j}, this completes the proof.
\end{proof}

\begin{prop}\label{prop:Dirac}
	\textbf{(Dirac's condition)} For all $f,g\in\A_0^\infty$, we have
			\begin{align*}
				\lim_{\hbar\to0}\norm{(-i\hbar)^{-1}[\Q(f),\Q(g)]-\Q(\{f,g\})}=0.
			\end{align*}
\end{prop}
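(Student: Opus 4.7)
By bilinearity of both $\Q$ and the Poisson bracket, and by the supremum-trick of Remark \ref{rema:supremum trick}, it suffices to treat the case $f = f_1\circ\gamma_l$ and $g = f_2\circ\gamma_l$ for generators $f_j = e_{b_j}\otimes h_j$ of $\M_0^l$. Using formula \eqref{eq:Q^infty sup over Q^j}, the problem reduces to showing
\begin{align*}
\sup_{m \geq l}\,\sup_{a \in (\Z^n)^m}\,\big\|\big((-i\hbar)^{-1}[\Qm(f_1\circ\gamma_{lm}),\Qm(f_2\circ\gamma_{lm})] - \Qm(\{f_1,f_2\}\circ\gamma_{lm})\big)\psi_a\big\| \,\leq\, C\,\hbar,
\end{align*}
for some constant $C$ independent of $\hbar$, $m$, and $a$, which will yield the proposition as $\hbar \to 0$.

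The plan is to mimic the computation in the proof of Proposition \ref{prop:von Neumann} but to push the Taylor expansion one order further. A direct calculation of the Poisson bracket of $f_1, f_2 \in C^\infty(X^l)$ with respect to the canonical symplectic form on $T^\ast G^l$ gives $\{f_1, f_2\} = e_{b_1+b_2}\otimes k$, where
\begin{align*}
k(p) := 2\pi i\big(h_1(p)\, b_1\cdot\nabla h_2(p) - h_2(p)\, b_2\cdot\nabla h_1(p)\big).
\end{align*}
Using the explicit action formula from the proof of Proposition \ref{prop:von Neumann} and writing $P := 2\pi\hbar\big(\gamma^\mom_{lm}(a)+\tfrac12(b_1+b_2)\big)$, the commutator applied to $\psi_a$ becomes
\begin{align*}
\big[\,h_1(P+\pi\hbar b_2)\,h_2(P-\pi\hbar b_1) - h_2(P+\pi\hbar b_1)\,h_1(P-\pi\hbar b_2)\,\big]\psi_{a+T^{ml}(b_1+b_2)},
\end{align*}
while $\Qm(\{f_1,f_2\}\circ\gamma_{lm})\psi_a = k(P)\psi_{a+T^{ml}(b_1+b_2)}$. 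A first-order Taylor expansion of each of the four factors of $h_j$ around $P$, with a second-order integral remainder, gives that the coefficient of $\psi_{a+T^{ml}(b_1+b_2)}$ equals $-i\hbar\,k(P) + R(\hbar, P)$, where $|R(\hbar,P)| \leq C\hbar^2$ uniformly in $P \in (\g^*)^l$.

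The main technical point, and the place where finite summability of the setup is used, is the uniform bound on the remainder: since $h_j = \check\mu_j$ with $\mu_j$ a compactly supported finite measure, all partial derivatives of $h_j$ up to order two are bounded on $(\g^*)^l$ by quantities involving $\|b_j\|$, $|\mu_j|$ and the support of $\mu_j$. The integral form of Taylor's theorem then yields a bound on $R(\hbar,P)$ that depends only on $\sup\|\nabla^2 h_j\|$, $\|b_j\|$, and $\|h_j\|_\infty$; crucially, it is independent of $P$, hence of $a$ and of the lattice $m \geq l$. Dividing by $-i\hbar$ and taking the supremum over $a$ and $m$ yields the desired $O(\hbar)$ estimate. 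I expect the only subtlety to be keeping careful track of the various factors $\pi\hbar$ in the arguments of $h_j$ (which is where the Weyl ``midpoint'' convention pays off, as the first-order terms assemble precisely into the Poisson bracket rather than producing a nonsymmetric star-product remainder).
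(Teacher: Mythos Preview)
Your approach is essentially identical to the paper's: both reduce via bilinearity and Remark \ref{rema:supremum trick} to generators $e_{b_j}\otimes h_j$ on a single lattice $l$, use \eqref{eq:Q^infty sup over Q^j} together with the explicit action on $\psi_a$ to turn the problem into a pointwise scalar estimate, and then Taylor-expand the $h_j$'s in $\hbar$ uniformly in $m$ and $a$ (the paper compresses this last step into a one-line display referring back to the von Neumann proof). Do double-check the sign of your $k$ against the paper's Poisson-bracket convention, since the displayed bracket coefficient there is $2\pi i(\partial_{b_2}h_1\cdot h_2 - h_1\cdot\partial_{b_1}h_2)$, the negative of what you wrote.
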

\begin{proof}
	Similar to the proof of Proposition \ref{prop:von Neumann}, 
	we obtain
		\begin{align*}
			&\sup_{m\in\L_{\geq l}}\sup_{a\in(\Z^n)^{m}}\norm{\left(\frac{i}{\hbar}[\Qm(f_1\circ\gamma_{lm}),\Qm(f_2\circ\gamma_{lm})]-\Qm(\{f_1\circ\gamma_{lm},f_2\circ\gamma_{lm}\})\right)\psi_a}\\
			&\quad\leq \sup_{m\in\L_{\geq l}}\sup_{a\in(\Z^n)^{m}}\Big|\frac{i}{\hbar}\Big(h_1\big(2\pi\hbar(\gamma_{lm}^\mom(a)+b_2+\tfrac12 b_1)\big)h_2\big(2\pi\hbar(\gamma_{lm}^\mom(a)+\tfrac12 b_2)\big)\\
			&\qquad\quad-h_1\big(2\pi\hbar(\gamma_{lm}^\mom(a)+\tfrac12 b_1)\big)h_2\big(2\pi\hbar(\gamma_{lm}^\mom(a)+b_1+\tfrac12 b_2)\big)\Big)\\
			&\qquad -2\pi i\Big(\partial_{b_2}h_1\cdot h_2-h_1\cdot\partial_{b_1}h_2\Big)\Big(2\pi\hbar(\gamma_{lm}^\mom(a)+\tfrac12(b_1+b_2))\Big)\Big|\\
			&\quad\to0\quad(\hbar\to0),
		\end{align*}
		which by \eqref{eq:Q^infty sup over Q^j} completes the proof.
\end{proof}

\subsection{Rieffel's condition at zero}

Rieffel's condition is all that remains to prove in order to establish our main theorem. Its proof is by far the most difficult component of this chapter, and is split into two parts, the first part giving continuity around $\hbar=0$ and the second part giving continuity elsewhere. 

For the first part we will use the following lemma.

\begin{lem}\label{lem:supnorm from operators}
	Let $f=\sum_{k=1}^Kg_k\otimes \check\mu_k\in\M_0^l$ for $l\in\L$. 
	For every $m\geq l$, we have
	\begin{align*}
		\supnorm{f}=\supnorm{F_C^{ml}(f)}=\sup_{q\in G^m}\norm{\sum_{k=1}^K\int_{\g^l}d\mu_k(\xi)g_k(\gamma_{lm}^\conf(q))L^*_{S^{ml}(\hbar\xi)}},
	\end{align*}
	where, on the right-hand side, the norm is the operator norm on $\mB(L^2(\g^m))$ and the integral is interpreted strongly.
\end{lem}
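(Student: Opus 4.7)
The first equality $\|f\|_\infty=\|F_C^{ml}(f)\|_\infty$ is immediate once we observe that $\gamma_{lm}\colon X^m\to X^l$ is surjective (both components $\gamma_{lm}^{\conf}$ and $\gamma_{lm}^{\mom}$ are surjective), so the pullback is an isometric embedding on sup norms. The content is therefore entirely in the second equality, which identifies a sup norm of a classical almost-periodic-type function with the operator norm of a strongly convergent integral of translation operators. The plan is to reduce the right-hand side to a multiplication operator via the Plancherel isomorphism on $\g^m$, and to exploit the freedom in the parameter $\hbar$ through a change of variables.

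First I would compute $F_C^{ml}(f)(q,v)$ explicitly. Writing $\check\mu_k(\eta)=\int d\mu_k(\xi)\,e^{i\xi\cdot\eta}$ and using the duality relation $\xi\cdot\gamma^{\mom}_{lm}(v)=v\cdot S^{ml}(\xi)$ from \eqref{eq:S and T}, one obtains
\begin{align*}
F_C^{ml}(f)(q,v)=\sum_{k=1}^K g_k(\gamma^{\conf}_{lm}(q))\int_{\g^l}d\mu_k(\xi)\,e^{i v\cdot S^{ml}(\xi)}.
\end{align*}
For $\hbar\neq 0$, substituting $v\mapsto v/\hbar$ — which is a bijection of $(\g^*)^m$ preserving suprema — gives
\begin{align*}
\sup_{v\in(\g^*)^m}|F_C^{ml}(f)(q,v)|=\sup_{p\in(\g^*)^m}\Bigl|\sum_{k=1}^K g_k(\gamma^{\conf}_{lm}(q))\!\int_{\g^l}\!d\mu_k(\xi)\,e^{ip\cdot S^{ml}(\hbar\xi)}\Bigr|.
\end{align*}

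Next I would identify the right-hand side with an operator norm. For each $q\in G^m$, the compact support of the $\mu_k$ together with continuity of $g_k$ guarantees that
\begin{align*}
A_q:=\sum_{k=1}^K g_k(\gamma^{\conf}_{lm}(q))\!\int_{\g^l}\!d\mu_k(\xi)\,L^*_{S^{ml}(\hbar\xi)}
\end{align*}
is well defined as a strong integral of bounded operators on $L^2(\g^m)$, with norm controlled by $\sum_k|g_k(\gamma^{\conf}_{lm}(q))|\,|\mu_k|(\g^l)$. The family $\{L^*_\eta\}_{\eta\in\g^m}$ is a strongly continuous abelian group of unitaries, so under the Fourier transform $\mathcal{F}\colon L^2(\g^m)\to L^2((\g^*)^m)$ each $L^*_\eta$ is unitarily equivalent to multiplication by $p\mapsto e^{ip\cdot\eta}$. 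By Fubini (which applies thanks to finiteness of each $|\mu_k|$ and boundedness of $g_k$), $\mathcal{F}A_q\mathcal{F}^{-1}$ is multiplication by the bounded continuous function
\begin{align*}
p\mapsto\sum_{k=1}^K g_k(\gamma^{\conf}_{lm}(q))\!\int_{\g^l}\!d\mu_k(\xi)\,e^{ip\cdot S^{ml}(\hbar\xi)},
\end{align*}
whose operator norm equals its supremum norm. Combining with the previous display and taking the supremum over $q\in G^m$ yields the second equality and completes the proof.

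\textbf{Main obstacle.} The only genuine subtlety is the interchange of the strong $\mu_k$-integral with the Fourier transform, which is the reason the measures were chosen compactly supported in the first place; once this is in place the argument is mechanical. A minor cosmetic point is the presence of $\hbar$ on the right-hand side of a statement about a classical sup norm: the scaling $\xi\mapsto\hbar\xi$ is harmless for $\hbar\neq 0$, which is the regime in which the lemma will be invoked in the subsequent proof of Rieffel's condition.
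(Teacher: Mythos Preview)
Your proof is correct and follows essentially the same route as the paper's: surjectivity of $\gamma_{lm}$ for the first equality, then the Plancherel/Parseval identification of the strong integral of translations with a multiplication operator whose norm is the sup norm, combined with the harmless rescaling $v\mapsto v/\hbar$. The only organizational difference is that the paper first reduces to the case $l=m$ via the push-forward formula $F_C^{ml}(g\otimes\check\mu)=(g\circ\gamma_{lm}^{\conf})\otimes(S^{ml}_*\mu)\check{~}$, whereas you work directly at level $m$ using the duality $\xi\cdot\gamma_{lm}^{\mom}(v)=v\cdot S^{ml}(\xi)$; these are equivalent, since that duality is exactly what makes the push-forward formula hold.
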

\begin{proof}
	The first equality is immediate, as $F^{ml}_C=(\gamma_{lm})^*$ and $\gamma_{lm}$ is surjective. By \eqref{eq:F_C^ml in termen van S^ml}, it now suffices to prove the lemma in the case where $l=m$, so $\gamma_{lm}^\conf=\id$ and $S^{ml}=\id$.	We obtain
	\begin{align*}
		\supnorm{f}&=\sup_{q\in G^l}\supnorm{\sum g_k(q)\int d\mu_k(\xi)e^{i\hbar\xi\cdot}}\\
		&=\sup_{q\in G^l}\sup_{\substack{\psi\in L^2((\g^*)^l)\\\quadnorm{\psi}=1}}\quadnorm{\sum \int d\mu_k(\xi)g_k(q) e^{i\hbar\xi\cdot}\psi(\cdot)}\\
		&=\sup_{q\in G^l}\sup_{\substack{\check\psi\in L^2(\g^l)\\\quadnorm{\check\psi}=1}}\quadnorm{\sum \int d\mu_k(\xi)g_k(q)\check\psi(\cdot+\hbar\xi)},
	\end{align*}
	by using Parseval's identity twice in the last step. The lemma follows.
\end{proof}

\begin{prop}\label{prop:Rieffel0}
	\textbf{(Rieffel's condition at 0)} For each $f\in\A_0^\infty$, we have
			\begin{align*}
				\lim_{\hbar\to0}\norm{\Q(f)}=\supnorm{f}.
			\end{align*}
\end{prop}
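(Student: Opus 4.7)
My plan is to exploit the identity $\|\Q(f_0\circ\gamma_l)\| = \sup_{m\geq l}\|\Qm(f_0\circ\gamma_{lm})\|$ from \eqref{eq:Q^infty sup over Q^j}. I would write $f = f_0\circ\gamma_l$ with $f_0 = \sum_{k=1}^K g_k\otimes\check\mu_k\in\M_0^l$ and establish matching $\hbar\to 0$ upper and lower bounds for this supremum. The lower bound is the easy half: the supremum dominates the $m=l$ term, and Theorem \ref{thm:almost strict} (the $\hbar\to 0$ continuity of $\hbar\mapsto\|\Ql(f_0)\|$) already gives $\lim_{\hbar\to 0}\|\Ql(f_0)\| = \|f_0\|_\infty = \|f\|_\infty$, so $\liminf_{\hbar\to 0}\|\Q(f)\|\geq \|f\|_\infty$.

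For the upper bound, the strategy is to split $\Qm(f_0\circ\gamma_{lm}) = A_\hbar^m + R_\hbar^m$ with principal part
\begin{align*}
A_\hbar^m := \sum_{k=1}^K M_{g_k\circ\gamma_{lm}^\conf}\int d\mu_k(\xi)\, L^*_{[\hbar S^{ml}(\xi)]}
\end{align*}
and remainder $R_\hbar^m$ collecting the discrepancy between $M_{g_k\circ\gamma_{lm}^\conf\circ L_{[\hbar S^{ml}(\xi)/2]}}$ and $M_{g_k\circ\gamma_{lm}^\conf}$. Because each $g_k$ is smooth on the compact $G^l$ and because $\gamma_{lm}^\conf\circ S^{ml} = \id_{\g^l}$ (directly verified on the generators $\add$ and $\sub$ using the defining formulas), the $g_k\circ\gamma_{lm}^\conf$ have Lipschitz constants controlled by $\|\nabla g_k\|_\infty$ alone, uniformly in $m$. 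This gives $\|R_\hbar^m\|\leq C(f_0)\hbar$ with $C(f_0)$ depending only on $\|\nabla g_k\|_\infty$, $|\mu_k|(\g^l)$, and $\diam(\supp\mu_k)$. In the basis $\{\psi_a\}_{a\in(\Z^n)^m}$ the momentum factor is diagonal, $\int d\mu_k(\xi) L^*_{[\hbar S^{ml}(\xi)]}\psi_a = \check\mu_k(-2\pi\hbar\gamma_{lm}^\mom(a))\psi_a$, displaying $A_\hbar^m$ as a left-ordered Weyl-type operator with symbol $f_0\circ\gamma_{lm}$.

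The main obstacle is establishing $\|A_\hbar^m\|\leq \|f_0\|_\infty$ uniformly in $m$. My plan is to compare $A_\hbar^m$ with the operator $\tilde A_\hbar^m$ on $L^2(\g^m)$ given by the same integral formula, viewing $g_k\circ\gamma_{lm}^\conf$ as a $\Z^{n|m|}$-periodic function on $\g^m$. A Bloch-Floquet disintegration $L^2(\g^m)\cong\int^\oplus_{\tau\in G^m} L^2(G^m)_\tau\,d\tau$ realizes $\tilde A_\hbar^m$ as a direct integral of twisted copies of $A_\hbar^m$, so $\|A_\hbar^m\|\leq \|\tilde A_\hbar^m\|_{\mB(L^2(\g^m))}$. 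Lemma \ref{lem:supnorm from operators} provides exactly the bound $\|\tilde A_\hbar^m\|\leq\|f_0\|_\infty$ once one promotes its scalar variable $q$ to the multiplication operator $M_{g_k\circ\gamma_{lm}^\conf}$; the crux of this last step, which is the trickiest piece of the argument, is that on $L^2(\g^m)$ the position and the momentum-translation factors can be simultaneously diagonalized in the direction transverse to translations, reducing the $m$-dependent operator norm to the sup-over-$q$ already computed in the lemma. Combining $\|A_\hbar^m\|\leq\|f_0\|_\infty$ with the uniform $O(\hbar)$ bound on $R_\hbar^m$ gives $\sup_{m\geq l}\|\Qm(f_0\circ\gamma_{lm})\|\leq\|f_0\|_\infty + C(f_0)\hbar$, and hence $\limsup_{\hbar\to 0}\|\Q(f)\|\leq \|f\|_\infty$, which together with the lower bound proves the proposition.
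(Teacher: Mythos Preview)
Your lower bound and your $O(\hbar)$ control of the remainder $R_\hbar^m$ are fine; in particular the identity $\gamma_{lm}^\conf\circ[S^{ml}(\xi)]=[\xi]$ does give a Lipschitz estimate uniform in $m$. The gap is in the bound $\|A_\hbar^m\|\leq\|f_0\|_\infty$ (equivalently $\|\tilde A_\hbar^m\|\leq\|f_0\|_\infty$). Your ``simultaneous diagonalization in the transverse direction'' is correct as far as it goes---a linear change of variables on $\g^m$ makes $\tilde A_\hbar^m$ unitarily equivalent to $\tilde B_\hbar\otimes I$, where $\tilde B_\hbar$ acts on $L^2(\g^l)$ by $\tilde B_\hbar\psi(u)=\sum_k g_k([u])\int d\mu_k(\xi)\psi(u+\hbar\xi)$, independently of $m$---but this only removes the $m$-dependence. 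On $L^2(\g^l)$ position and translation do \emph{not} commute, and the left-ordered (Kohn--Nirenberg) operator $\tilde B_\hbar$ is \emph{not} bounded by $\|f_0\|_\infty$ in general. Concretely, take $n=1$, $|l|=1$, any $\hbar\in(0,1]$, $\xi\in(0,1/2\pi)$, set $s:=\hbar\xi$, and let $g_1,g_2\in C^\infty(\T)$ be nonnegative bumps with essentially disjoint supports, $g_1$ concentrated near $0$ and $g_2$ near $[s]$, both with sup-norm $1$. For $f_0=g_1\otimes e^{i\xi\cdot}+g_2\otimes 1$ one has $\|f_0\|_\infty=\sup_q(|g_1(q)|+|g_2(q)|)\approx 1$, yet for a normalized $\psi$ supported where $g_2\approx 1$ the two summands $g_1(\cdot)\psi(\cdot+s)$ and $g_2(\cdot)\psi(\cdot)$ are essentially orthogonal with $L^2$-norms close to $1$, so $\|\tilde B_\hbar\psi\|_2^2\approx 2$ and $\|\tilde B_\hbar\|\gtrsim\sqrt{2}$. (One can realize this inside $\M_0^l$ by approximating $g_1,g_2$ with trigonometric polynomials; both $f_0\mapsto\|f_0\|_\infty$ and $f_0\mapsto\|A_\hbar^l\|$ are continuous, so the inequality persists.) Thus Lemma~\ref{lem:supnorm from operators} does \emph{not} survive promotion of the scalar $q$ to a multiplication operator.

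The paper circumvents this by never asserting such a global bound. Instead it fixes $\epsilon>0$, chooses a near-maximizing $\psi$ for $\|\Qm(f\circ\gamma_{lm})\|$, and uses a pigeonhole/volume argument to localize $\psi$ to a region $L_{q_0}(U_s)$ on which every $g_k\circ\gamma_{lm}^\conf$ is approximately constant (equal to $g_k(\gamma_{lm}^\conf(\breve y))$ up to $\epsilon$). On that region the operator becomes, up to an $\epsilon$-error, a sum of \emph{scalar} multiples of the translation operators $L^*_{S^{ml}(\hbar\xi)}$; lifting $\psi_{j,s}$ to $\tilde\psi\in L^2(\g^m)$ then allows a direct application of Lemma~\ref{lem:supnorm from operators} with the frozen value of $q$. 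The localization is precisely what compensates for the failure of the Kohn--Nirenberg bound, and it is where the uniformity in $m$ is genuinely earned.
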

\begin{proof}
	Let $f\circ\gamma_l\in\A^\infty_0$ be arbitrary, for arbitrary $l\in\L$ and $f\in\M^l_0$. Write $f=\sum_{k=1}^K g_k\otimes \check\mu_k$. 
		We need to prove that $\norm{\Q(f\circ\gamma_l)}$ converges to $\supnorm{f\circ\gamma_l}=\supnorm{f}$, which by \eqref{eq:Q^infty sup over Q^j} comes down to showing that $\norm{\Qm(f\circ\gamma_{lm})}$ converges to $\supnorm{f}$ \textit{uniformly} in $m$. 
		
		For proving $\lim_{\hbar\to0} \norm{\Q(f\circ\gamma_l)}\geq \supnorm{f}$, we can simply use the similar statement for $\Ql$. Indeed, \cite[Theorem 7.8(1)]{Stienstra} gives
		\begin{align*}
			\lim_{\hbar\to0}\norm{\Q(f\circ\gamma_l)}=\lim_{\hbar\to0}\sup_{m\in\L_{\geq l}}\norm{\Qm(f\circ\gamma_{lm})}\geq\lim_{\hbar\to0}\norm{\Ql(f)}\geq \supnorm{f}.
		\end{align*}
		
		The reverse inequality, however, is considerably more difficult. For any $\epsilon>0$, we will need to construct an $\hbar_0>0$ such that for all $|\hbar|\leq\hbar_0$ we have $\norm{\Qm(f\circ\gamma_{lm})}\leq \supnorm{f}+\epsilon$, independently of $m$.

		Let $\epsilon>0$ be arbitrary. We define
		\begin{align}\label{eq:Q}
			Q:=\sum_{k=1}^K\supnorm{g_k}\absnorm{\mu_k},	
		\end{align}
		and remark that $\norm{\Q(f\circ\gamma_l)}\leq Q$ for all $\hbar\in[-1,1]$. Pick $N\in\N$ and distinct points $x_1,\ldots,x_N\in G^l$ such that for $$r:=\sup_{y\in G^l}\inf_{j=1}^N d(y,x_j),$$
		we have $B_{r}[0_{\g^l}]\subseteq (B_{1/2\pi}[0_\g])^l$, as well as $r<1/4$ and
		\begin{align}\label{eq:uniform continuity}
			d(x,y)<2r \Rightarrow |g_k(x)-g_k(y)|<\frac{\epsilon}{12Q\sum_k\absnorm{\mu_k}}.
		\end{align}
		We define, for all $j\in\{1,\ldots,N\}$ and $\delta\geq0$, the sets
		\begin{align*}
			V_{\delta,j}:=\{y\in G^l:~d(y,x_j)+\delta\leq d(y,x_{j'})\text{ for all } j'\neq j\}.
		\end{align*}
		We have $V_{\delta,j}\subseteq V_{0,j}\subseteq L_{x_j}(B_{1/2\pi}[0]^l)$.
		Choose $\delta>0$ such that $\delta\leq r$ and 
		\begin{align*}
			\vol(G^l\setminus\cup_{j=1}^N V_{\delta,j})<\frac{\epsilon}{3Q^2}.
		\end{align*}
		Choose $\hbar_0>0$ such that 
		\begin{align}\label{eq:hbar0}
			\max_{\xi\in\cup_k\supp(\mu_k)}\norm{\hbar_0\xi}<\frac{\delta}{2}.
		\end{align}
		Let $\hbar\in[-1,1]$ with $|\hbar|<\hbar_0$ be arbitrary. 
		Let $n\geq l$ be arbitrary. Let $m\in\L$ be the (unique) lattice for which $l\leq m\leq n$, $m\subseteq n$ and $m\setminus\{e\}\ngeq l$ for all $e\in m$, i.e., $m$ is made from $l$ by subdivisions, and $n$ is made from $m$ by additions of edges. As $F^\add_Q$ is isometric, 
		\begin{align}\label{eq:j or k same same}
			\norm{\Qn(f\circ\gamma_{ln})}=\norm{F_Q^{nm}(\Qm(f\circ\gamma_{lm}))}=\norm{\Qm(f\circ\gamma_{lm})},
		\end{align}
		so it suffices to prove that $\norm{\Qm(f\circ\gamma_{lm})}\leq\supnorm{f}+\epsilon$.
		Define 
		\begin{align*}
			\tilde V_{\delta,j}:=(\gamma_{lm}^\conf)^{-1}(V_{\delta,j}),\qquad\tilde V:=\cup_{j=1}^N\tilde V_{\delta,j},
		\end{align*}
		as depicted in Figure \ref{fig:1a}. 
		It is easily checked that $U\mapsto  (\gamma_{lm}^\conf)^{-1}(U)$ preserves volume. Hence $\vol(G^m\setminus\tilde V)<\epsilon/(3Q^2)$.
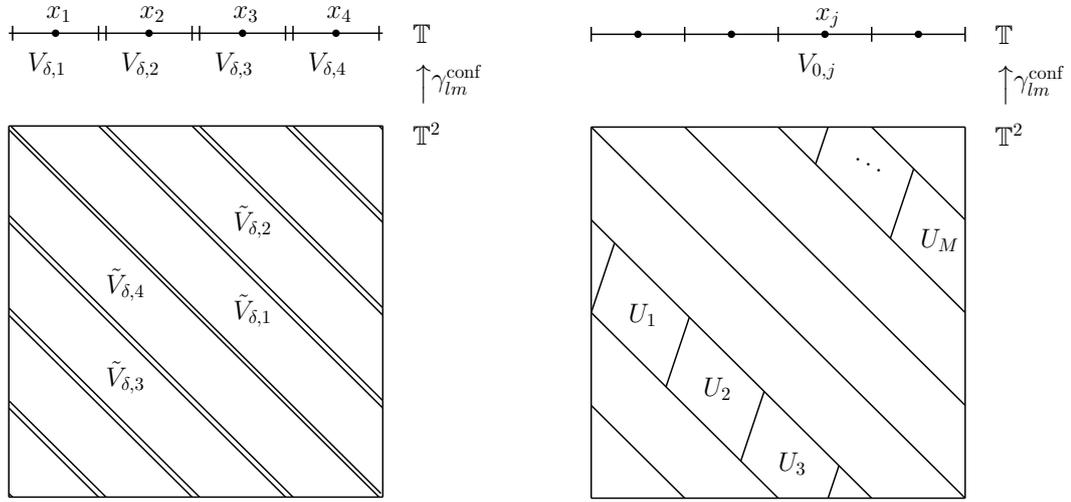
\begin{figure}[!htb]
	\centering
	\thicklines
	\begin{subfigure}{.49\textwidth}
		\centering
		\scalebox{0.7}{
		\begin{picture}(200,260)(0,0)
			\put(0,0){\line(1,0){200}}
			\put(0,0){\line(0,1){200}}
			\put(200,0){\line(0,1){200}}	
			\put(0,200){\line(1,0){200}}
	
			\put(0,2){\line(1,-1){2}}
			\put(0,48){\line(1,-1){48}}
			\put(0,52){\line(1,-1){52}}
			\put(0,98){\line(1,-1){98}}
			\put(0,102){\line(1,-1){102}}
			\put(0,148){\line(1,-1){148}}
			\put(0,152){\line(1,-1){152}}
			\put(0,198){\line(1,-1){198}}
			\put(1,200){\line(1,-1){198}}
			\put(48,200){\line(1,-1){152}}
			\put(52,200){\line(1,-1){148}}
			\put(98,200){\line(1,-1){102}}
			\put(102,200){\line(1,-1){98}}
			\put(148,200){\line(1,-1){52}}
			\put(152,200){\line(1,-1){48}}
			\put(198,200){\line(1,-1){2}}
			
			\put(52,60){\Large $\tilde V_{\delta,3}$}
			\put(52,110){\Large $\tilde V_{\delta,4}$}
			\put(120,96){\Large $\tilde V_{\delta,1}$}
			\put(120,144){\Large $\tilde V_{\delta,2}$}
			
			\put(0,250){\line(1,0){200}}
			\put(2,254){\line(0,-1){8}}
			\put(25,250){\circle*{4}}
			\put(20,258){\Large $x_1$}
			\put(10,230){\Large $V_{\delta,1}$}
			\put(48,254){\line(0,-1){8}}
			\put(52,254){\line(0,-1){8}}
			\put(75,250){\circle*{4}}
			\put(70,258){\Large $x_2$}
			\put(60,230){\Large $V_{\delta,2}$}
			\put(98,254){\line(0,-1){8}}
			\put(102,254){\line(0,-1){8}}
			\put(125,250){\circle*{4}}
			\put(120,258){\Large $x_3$}
			\put(110,230){\Large $V_{\delta,3}$}
			\put(148,254){\line(0,-1){8}}
			\put(152,254){\line(0,-1){8}}
			\put(175,250){\circle*{4}}
			\put(170,258){\Large $x_4$}
			\put(160,230){\Large $V_{\delta,4}$}
			\put(198,254){\line(0,-1){8}}	
			
			\put(216,245){\Large $\T$}
			\put(216,190){\Large $\T^2$}
			\put(220,211){\Large \rotatebox{90}{$\longrightarrow$}}
			\put(226,221){\Large $\gamma_{lm}^\conf$}
		\end{picture}
		}	
		\caption{The subspaces $\tilde V_{\delta,j}:=(\gamma_{lm}^\conf)^{-1}(V_{\delta,j})$.}
		\label{fig:1a}
	\end{subfigure}
	\begin{subfigure}{0.49\textwidth}
		\centering
		\scalebox{0.7}{
		\begin{picture}(200,260)(0,0)
			\put(0,0){\line(1,0){200}}
			\put(0,0){\line(0,1){200}}
			\put(200,0){\line(0,1){200}}	
			\put(0,200){\line(1,0){200}}
	
			\put(0,250){\line(1,0){200}}
			\put(0,254){\line(0,-1){8}}
			\put(25,250){\circle*{4}}
			\put(50,254){\line(0,-1){8}}
			\put(75,250){\circle*{4}}
			\put(100,254){\line(0,-1){8}}
			\put(125,250){\circle*{4}}
			\put(120,258){\Large $x_j$}
			\put(110,230){\Large $V_{0,j}$}
			\put(150,254){\line(0,-1){8}}
			\put(175,250){\circle*{4}}
			\put(200,254){\line(0,-1){8}}	
			
			\put(0,50){\line(1,-1){50}}
			\put(0,100){\line(1,-1){100}}
			\put(0,150){\line(1,-1){150}}
			\put(0,200){\line(1,-1){200}}
			\put(50,200){\line(1,-1){150}}
			\put(100,200){\line(1,-1){100}}
			\put(150,200){\line(1,-1){50}}
			
			\put(0,100){\line(1,3){12.5}}
			\put(40,60){\line(1,3){12.5}}
			\put(80,20){\line(1,3){12.5}}
			\put(126.666667,0){\line(1,3){5.833333}}
			\put(120,180){\line(1,3){6.666667}}
			\put(160,140){\line(1,3){12.5}}
			\put(20,95){\Large $U_1$}
			\put(60,55){\Large $U_2$}
			\put(100,15){\Large $U_3$}
			\put(140,175){\Large $\ddots$}
			\put(176,135){\Large $U_M$}
			\put(216,245){\Large $\T$}
			\put(216,190){\Large $\T^2$}
			\put(220,211){\Large \rotatebox{90}{$\longrightarrow$}}
			\put(226,221){\Large $\gamma_{lm}^\conf$}
		\end{picture}
		}
		\caption{The subspaces $U_1,\ldots,U_M$ for a fixed $j$.}
		\label{fig:1b}
	\end{subfigure}
	\caption{Dividing the configuration space $G^m\cong\T^2$ into small subspaces when $m$ has two edges (of different length) and $l$ has one (so that $G^l\cong\T$).}
\end{figure}
		Choose $\psi\in\H^m$ such that $\quadnorm{\psi}=1$ and
		\begin{align}\label{eq:eps over 3}
			\quadnorm{\Qm(f\circ\gamma_{lm})\psi}^2\geq \norm{\Qm(f\circ\gamma_{lm})}^2-\frac{\epsilon}{3}.
		\end{align}
		We now claim that there exists a point $q_0\in G^m$ such that 
		\begin{align}\label{eq:contr}
			\int_{G^m\setminus L_{q_0}(\tilde V)}\!dq\,|\Qm(f\circ\gamma_{lm})\psi(q)|^2\leq \vol(G^m\setminus\tilde V)Q^2<\frac{\epsilon}{3}.
		\end{align}
		Indeed, if there were no such $q_0\in G^m$, we would obtain
		\begin{align*}
			\vol(G^m\setminus\tilde V)Q^2&<\int_{G^m}dq_0\int_{G^m\setminus L_{q_0}(\tilde V)}\! dq\,|\Qm(f\circ\gamma_{lm})\psi(q)|^2\\
			&=\int_{G^m\setminus \tilde V}\!dq\int_{G^m}dq_0\,|\Qm(f\circ\gamma_{lm})\psi(q_0q)|^2\\
			&=\vol(G^m\setminus\tilde V)\quadnorm{\Qm(f\circ\gamma_{lm})\psi}^2\leq\vol(G^m\setminus\tilde V)Q^2,
		\end{align*}
		which is a contradiction. Therefore a $q_0\in G^m$ satisfying \eqref{eq:contr} does exist, and is fixed throughout the rest of the proof. Using \eqref{eq:eps over 3}, we conclude
		\begin{align}
			\norm{\Qm(f\circ\gamma_{lm})}^2-\frac{2\epsilon}{3}&\leq\quadnorm{\Qm(f\circ\gamma_{lm})\psi}^2-\frac{\epsilon}{3}\nonumber\\
			&<\sum_{j=1}^N\int_{L_{q_0}(\tilde V_{\delta,j})}dq\,|\Qm(f\circ\gamma_{lm})\psi(q)|^2.
		\label{eq:Rieffel0 phase 1}
		\end{align}
		For all $j\in\{1,\ldots,N\}$, we define 
		\begin{align*}
			\psi_j:=\psi \indicator_{L_{q_0}(\tilde V_{0,j})}.
		\end{align*}
		By $|\hbar|\leq\hbar_0$ and \eqref{eq:hbar0}, we have $\norm{\hbar\xi}<\delta/2$ for all $\xi\in\cup_k\supp(\mu_k)$. By using $\gamma_{lm}^\conf[S^{ml}(\hbar\xi)]=[\hbar\xi]$ we infer that $q\in L_{q_0}(\tilde V_{\delta,j})$ implies $q+S^{ml}(\hbar\xi)\in L_{q_0}(\tilde V_{0,j})$. Therefore, by using
		\begin{align}\label{eq:Qj}
			\Qm(f\circ\gamma_{lm})\psi(q)=\sum_{k=1}^K\int d\mu_k(\xi) g_k(\gamma_{lm}^\conf(q+\tfrac12 S^{ml}(\hbar\xi)))\psi(q+S^{ml}(\hbar\xi)),
		\end{align}
		we obtain that, for all $q\in L_{q_0}(\tilde{V}_{\delta,j})$,
		\begin{align*}
			\Qm(f\circ\gamma_{lm})\psi(q)=\Qm(f\circ\gamma_{lm})\psi_j(q).
		\end{align*}				
		Hence, \eqref{eq:Rieffel0 phase 1} becomes
		\begin{align*}
			\norm{\Qm(f\circ\gamma_{lm})}^2-\frac{2\epsilon}{3}\leq  \sum_{j=1}^N\int_{L_{q_0}(\tilde V_{\delta,j})}dq\,|\Qm(f\circ\gamma_{lm})\psi_j(q)|^2.
		\end{align*}
		By an argument similar to how we found $q_0$ (finding a contradiction if it would not exist) now using $\sum\|\psi_j\|_2^2=\|\psi\|_2^2$, we may fix a $j\in\{1,\ldots,N\}$ such that
		\begin{align}\label{eq:bound psi_m}
			\int_{L_{q_0}(\tilde V_{\delta,j})}dq\,|\Qm(f\circ\gamma_{lm})\psi_j(q)|^2\geq\quadnorm{\psi_j}^2\left(\norm{\Qm(f\circ\gamma_{lm})}^2-\frac{2\epsilon}{3}\right).
		\end{align}
		
		We fix subspaces $U_1,\ldots,U_M\subseteq\tilde V_{0,j}$ and points $y_1,\ldots,y_M\in (\gamma_{lm}^\conf)^{-1}(\{x_j\})\subseteq\tilde V_{0,j}$ such that $y_s\in U_s$ for all $s=1,\ldots,M$ and such that
		\begin{enumerate}[label=\textnormal{(\alph*)}]
			\item\label{item:a} $\bigcup_{s}U_s=\tilde V_{0,j}$ and the $U_s$ are disjoint;
			\item\label{item:b} $L_{[S^{ml}\xi]}(U_s\cap\tilde V_{\delta,j})\subseteq U_s$ for all $\xi\in B_{\delta/2}(0)\subseteq\g^l$;
			\item\label{item:c} $U_s\subseteq L_{y_s}(B^m)$ for all $s$. 
		\end{enumerate}
		An example of such sets is depicted in Figure \ref{fig:1b}.
		Define, for all $s$,
		\begin{align*}
			\psi_{j,s}:=\psi_j\indicator_{L_{q_0}(U_s)}=\psi\indicator_{L_{q_0}(U_s)}.
		\end{align*}
		By \ref{item:a}, we have
		\begin{align*}
			&\int_{L_{q_0}(\tilde V_{\delta,j})}dq\,|\Qm(f\circ\gamma_{lm})\psi_j(q)|^2
			=\sum_{s=1}^M\int_{L_{q_0}(U_s\cap \tilde V_{\delta,j})}dq\,|\Qm(f\circ\gamma_{lm})\psi_j(q)|^2.
		\end{align*}
		Notice that, for all $\xi\in\cup_k\supp(\mu_k)$, we have $\hbar\xi\in B_{\delta/2}(0)$. Therefore, by \ref{item:b}, we find that $q \in L_{q_0}(U_s\cap\tilde V_{\delta,j})$ implies that $q+ S^{ml}(\hbar\xi)\in L_{q_0}(U_s)$. Hence \eqref{eq:Qj} gives
		\begin{align*}
			&\int_{L_{q_0}(\tilde V_{\delta,j})}dq\,|\Qm(f\circ\gamma_{lm})\psi_j(q)|^2
			=\sum_{s=1}^M\int_{L_{q_0}(U_s\cap \tilde V_{\delta,j})}dq\,\Big|\Qm(f\circ\gamma_{lm})\psi_{j,s}(q)\Big|^2.
		\end{align*}
		Therefore, \eqref{eq:bound psi_m} gives
		\begin{align*}
			\sum_{s=1}^M\int_{L_{q_0}(U_s)}dq\,\Big|\Qm(f\circ\gamma_{lm})\psi_{j,s}(q)\Big|^2\geq\quadnorm{\psi_j}^2\left(\norm{\Qm(f\circ\gamma_{lm})}^2-\frac{2\epsilon}{3}\right).
		\end{align*}
		Again arguing by contradiction, and using that $\sum_{s=1}^M\quadnorm{\psi_{j,s}}^2=\quadnorm{\psi_j}^2$, we may fix an $s$ such that
		\begin{align}\label{eq:psi_ms ineq}
			\int_{L_{q_0}(U_s)}dq\,\Big|\Qm(f\circ\gamma_{lm})\psi_{j,s}(q)\Big|^2\geq\quadnorm{\psi_{j,s}}^2\left(\norm{\Qm(f\circ\gamma_{lm})}^2-\frac{2\epsilon}{3}\right).
		\end{align}
		Using the function $\psi_{j,s}\in L^2(G^m)$ we constructed, which is supported in $L_{q_0}(U_s)$, we can subsequently construct a function $\tilde\psi\in L^2(\g^m)$, as follows. First define $\breve{U}:=L_{q_0}(U_s)$ and $\breve{y}:=q_0+y_s\in \breve U$, so that the support of $q\mapsto \psi_{j,s}(\breve y+q)$ lies in $L_{\breve y}^{-1}(L_{q_0}(U_s))=L_{y_s}^{-1}(U_s)\subseteq B^m=[B_{1/2\pi}(0_\g)^m]$ by \ref{item:c} above. Define
		\begin{align*}
			\tilde{\psi}(X):=\begin{cases}
			\psi_{j,s}(\breve y+X)\quad&\text{if $X\in B^m$}\\
			0&\text{if $X\notin B^m$},
			\end{cases}
		\end{align*}
		which implies $\|\tilde\psi\|_2^2=\quadnorm{\psi_{j,s}}^2$. Using \eqref{eq:psi_ms ineq} we get
		\begin{align*}
		\left(\norm{\Qm(f\circ\gamma_{lm})}^2-\frac{2\epsilon}{3}\right)\big\|\tilde{\psi}\big\|_2^2&\leq \int_{\breve U}dq\,|\Qm(f\circ\gamma_{lm})\psi_{j,s}(q)|^2,
		\end{align*}
		in which we can use \eqref{eq:Qj} and expand the square of the absolute value of the sum over $k$. For brevity, we write $\dot g_k:=g_k(\gamma_{lm}^\conf(\breve y))$ and $g_{k,\xi}^q:=g_k(\gamma_{lm}^\conf(q+\tfrac12 S^{ml}(\hbar\xi)))-g_k(\gamma_{lm}^\conf(\breve y))$. We obtain
		\begin{align*}
		&\left(\norm{\Qm(f\circ\gamma_{lm})}^2-\frac{2\epsilon}{3}\right)\big\|\tilde{\psi}\big\|_2^2\\
		&\quad\leq \bigg|\sum_{k,k'=1}^K\int_{\breve U}dq\int d\overline{\mu_k}(\xi)\,\overline{(\dot g_k+g_{k,\xi}^q)\psi_{j,s}(q+S^{ml}(\hbar\xi))}\\
		&\qquad\quad\int d\mu_{k'}(\xi')(\dot g_{k'}+g_{k',\xi'}^q)\psi_{j,s}(q+S^{ml}(\hbar\xi'))\bigg|\\
		&\quad\leq \int_{\breve U}dq\,\bigg|\sum_{k=1}^K \int d\mu_k(\xi)\dot g_k\psi_{j,s}(q+S^{ml}(\hbar\xi))\bigg|^2
		\\&\qquad
		+\sum_{k,k'=1}^K\int d|\mu_k|(\xi)\int d|\mu_{k'}|(\xi')\Big(2|\dot g_k|+\sup_{q\in\breve U}|g_{k,\xi}^q|\Big)\sup_{q\in\breve U}|g_{k',\xi'}^q|\norm{\psi_{j,s}}^2.
		\end{align*}
		Because $\norm{\tfrac12\hbar\xi}<r$, because $U_s\subseteq\tilde V_{0,j}$ and because $d(x,x_j)\leq r$ for all $x\in V_{0,j}$ we can apply \eqref{eq:uniform continuity} to find, for all $k$ and $\xi\in\supp(\mu_k)$,
		\begin{align*}
			\sup_{q\in\breve U}|g_{k,\xi}^q|<\frac{\epsilon}{12Q\sum_k\absnorm{\mu_k}}.
		\end{align*}
		Therefore, and by Lemma \ref{lem:supnorm from operators},
		\begin{align*}
			&\left(\norm{\Qm(f\circ\gamma_{lm})}^2-\frac{2\epsilon}{3}\right)\big\|\tilde{\psi}\big\|_2^2\\
			&\quad\leq \int_{\g^m} dX\bigg|\sum_{k=1}^K \int d\mu_k(\xi)g_k(\gamma^\conf_{lm}(\breve y)) \tilde\psi(X+S^{ml}(\hbar \xi))\bigg|^2\\
			&\qquad+\sum_{k,k'=1}^K\absnorm{\mu_k}4\supnorm{g_k}\absnorm{\mu_{k'}}\sup_{\xi'\in\supp(\mu_{k'})}\sup_{q\in\breve U}|g_{k',\xi'}^q|\quadnorm{\psi_{j,s}}^2\\
			&\quad\leq \sup_{q\in G^m}\norm{\sum_{k=1}^K\int d\mu_k(\xi)g_k(\gamma^\conf_{lm}(q))L^*_{S^{ml}(\hbar\xi)}}^2\big\|\tilde\psi\big\|_2^2
			+\frac{\epsilon}{3}\big\|\tilde\psi\big\|_2^2\\
			&\quad=\left(\supnorm{f}+\frac{\epsilon}{3}\right)\big\|\tilde\psi\big\|_2^2.
		\end{align*}
		By \eqref{eq:j or k same same} we conclude that $\norm{\Qn(f\circ\gamma_{ln})}^2\leq\supnorm{f}+\epsilon$. Since $n\geq l$ was arbitrary, we have $\norm{\Q(f\circ\gamma_{l})}^2\leq\supnorm{f}+\epsilon$, which concludes the proof.
\end{proof}

\subsection{Rieffel's condition away from zero}
Now that we have continuity of $\hbar\mapsto\norm{\Q(f\circ\gamma_{l})}$ at $\hbar=0$, we are left to prove continuity at an arbitrary $\hbar_1\in[-1,1]\setminus\{0\}$. In the rest of the chapter, we fix such an $\hbar_1$, as well as a function $f\in\M_0^{l}$, expanded as $f=\sum_{k=1}^{K}g_k\otimes \check\mu_k$. 


The reason that Rieffel's condition away from zero holds in the infinite dimensional case, as opposed to the case on a finite lattice (see Lemma \ref{lem:failure of Rieffel's condition} for a counterexample) is that $\norm{Q^\infty_{\hbar_1}(f\circ\gamma_l)}$ is given by a supremum over lattices $m\geq l$ as shown in \eqref{eq:Q^infty sup over Q^j}. Better yet: it is also given by a supremum over lattices $m\geq l^R$, with $l^R$ from Definition \ref{def:l_R}. If we choose $R$ large enough, the components of the $S^{l^Rl}(\xi)$'s, for $\xi\in\cup_k\supp(\mu_k)$, become arbitrarily small. We take advantage of this fact by the following construction.
	For every edge $e\in l$, we choose a single edge $e'\in l^R$ that lies inside $e$. The edge $e'$ has a length $1/R$ times the length of $e$, 
	so we have $S^{l^Rl}(\xi)_{e'}=\frac{1}{R}\xi_e$.
	We then define the projection
	\begin{align*}
		\chi_{ll^R}:G^{l^R}\to G^{l},\quad \chi_{ll^R}(q)_e:=q_{e'},
	\end{align*}		
	and note that it satisfies $\chi_{ll^R}[S^{l^Rl}(\xi)]=[\tfrac1R  \xi]$.
	
	Similarly to the proof of Proposition \ref{prop:Rieffel0}, we will define subsets of $G^l$, which in volume approximate the whole of $G^l$ but are topologically better behaved than $G^l$.
\begin{defi}\label{def:U_delta en V_delta}
For all $\delta\geq0$, define
	\begin{align*}
		U_\delta:=\{\xi \in\g^{l}:~\xi_e\in(-\tfrac12+\tfrac12\delta,\tfrac12-\tfrac12\delta)^n \text{ for all $e\in l$}\}.
	\end{align*}
	In particular, $U_0$ is the open unit cube around 0. Using $U_\delta$, we define a subset $V_\delta\subseteq G^{l}$ with volume $\vol(V_\delta)=(1-\delta)^{n|l|}$, $n=\dim G$, by setting
	\begin{align*}
		V_\delta:=\{[\xi]\in G^{l}:~\xi\in U_\delta\}.
	\end{align*}
\end{defi}
Using these we will define subsets of $G^m$, for a particular class of lattices $m\geq l^R$. 

\begin{figure}[!htb]
	\centering
	\thicklines
	\begin{subfigure}{.49\textwidth}
		\centering
		\scalebox{0.7}{
		\begin{picture}(200,320)
			\put(50,250){\line(0,1){50}}
			\put(50,250){\circle*{6}}
			\put(50,300){\circle*{6}}
			\put(100,250){\line(0,1){50}}
			\put(100,250){\circle*{6}}
			\put(100,275){\circle*{6}}
			\put(100,300){\circle*{6}}
			\put(150,250){\line(0,1){50}}
			\put(150,250){\circle*{6}}
			\put(150,275){\circle*{6}}
			\put(150,300){\circle*{6}}
			\put(47,220){\Large $l$}
			\put(97,220){\Large $l^R$}
			\put(145,220){\Large $m$}
		
			\put(0,0){\line(1,0){200}}
			\put(0,0){\line(0,1){200}}
			\put(200,0){\line(0,1){200}}	
			\put(0,200){\line(1,0){200}}

			\put(0,85){\line(1,0){200}}
			\put(0,115){\line(1,0){200}}
			
			\multiput(0,93)(21,0){10}{\line(1,0){11}}
			\multiput(0,107)(21,0){10}{\line(1,0){11}}
	
		\put(-35,155){\Large $\tilde V_{\delta_1}$}
		\put(219,146){\Large $\tilde V_{\delta_2}$}
\put(-10,85){\line(1,0){6}}
\put(-10,85){\line(0,-1){85}}
\put(-10,115){\line(1,0){6}}
\put(-10,115){\line(0,1){85}}
\put(-10,160){\line(-1,0){5}}

\put(210,93){\line(-1,0){6}}
\put(210,93){\line(0,-1){93}}
\put(210,107){\line(-1,0){6}}
\put(210,107){\line(0,1){93}}
\put(210,150){\line(1,0){5}}	
			
			\multiput(10,67)(20,0){10}{\vector(0,1){12}}
			\multiput(10,133)(20,0){10}{\vector(0,-1){12}}
			\multiput(10,33)(20,0){10}{\vector(0,1){7}}
			\multiput(10,167)(20,0){10}{\vector(0,-1){7}}
	
			\multiput(10,0)(20,0){10}{\circle*{3}}
			\multiput(10,200)(20,0){10}{\circle*{3}}

		\end{picture}
		}	
		\caption{Subdivision where $R=2$ and $m=l^R$.}
		\label{fig:2a}
	\end{subfigure}
	\begin{subfigure}{0.49\textwidth}
		\centering
		\scalebox{0.7}{
		\begin{picture}(200,320)
			\put(50,250){\line(0,1){50}}
			\put(50,250){\circle*{6}}
			\put(50,300){\circle*{6}}
			\put(100,250){\line(0,1){50}}
			\put(100,250){\circle*{6}}
			\put(100,300){\circle*{6}}
			\put(150,250){\line(0,1){50}}
			\put(150,250){\circle*{6}}
			\put(150,283){\circle*{6}}
			\put(150,300){\circle*{6}}
			\put(47,220){\Large $l$}
			\put(97,220){\Large $l^R$}
			\put(145,220){\Large $m$}

			\put(0,0){\line(1,0){200}}
			\put(0,0){\line(0,1){200}}
			\put(200,0){\line(0,1){200}}	
			\put(0,200){\line(1,0){200}}
			
			\put(0,85){\line(1,-1){85}}
			\put(0,115){\line(1,-1){115}}
			
			\put(85,200){\line(1,-1){115}}
			\put(115,200){\line(1,-1){85}}

	
			\multiput(80,190)(20,-20){6}{\vector(1,2){5}}
			\multiput(50,190)(20,-20){8}{\vector(1,2){3}}
			\multiput(16.5,183.5)(20,-20){10}{\circle*{3}}
			\multiput(23,137)(20,-20){7}{\vector(-1,-2){3}}
			\multiput(13,117)(20,-20){6}{\vector(-1,-2){5}}
			
			\multiput(0,70)(20,-20){4}{\vector(1,2){5}}
			\multiput(10,30)(20,-20){2}{\vector(1,2){3}}	
			\multiput(163,197)(20,-20){2}{\vector(-1,-2){3}}
			\multiput(133,197)(20,-20){4}{\vector(-1,-2){5}}
			
			\multiput(96,197)(15,-15){7}{\line(1,-1){10}}
			\multiput(112,195)(15,-15){6}{\line(1,-1){10}}
			\multiput(5,102)(15,-15){7}{\line(1,-1){10}}
			\multiput(3,90)(15,-15){6}{\line(1,-1){10}}
		\end{picture}
		}
		\caption{Subdivision where $R=1$ and $m>l^R$.}
		\label{fig:2b}
	\end{subfigure}
	\caption{Choosing subsets $\tilde V_\delta$ ($\delta>0$) within the configuration space $G^m\cong \T^2$ such that, when $\delta_1>\delta_2$, $\tilde V_{\delta_1}\subseteq\tilde V_{\delta_2}$. The bijection $F:\tilde V_{\delta_1}\to\tilde V_{\delta_2}$ expands the subset $\tilde V_{\delta_1}$ onto $\tilde V_{\delta_2}$ along the direction of $S^{ml}\circ\uvp$, as indicated by the arrows. Here $m$ has two edges (of possibly different length) and $l$ has one.}
	\label{fig:2}
\end{figure}
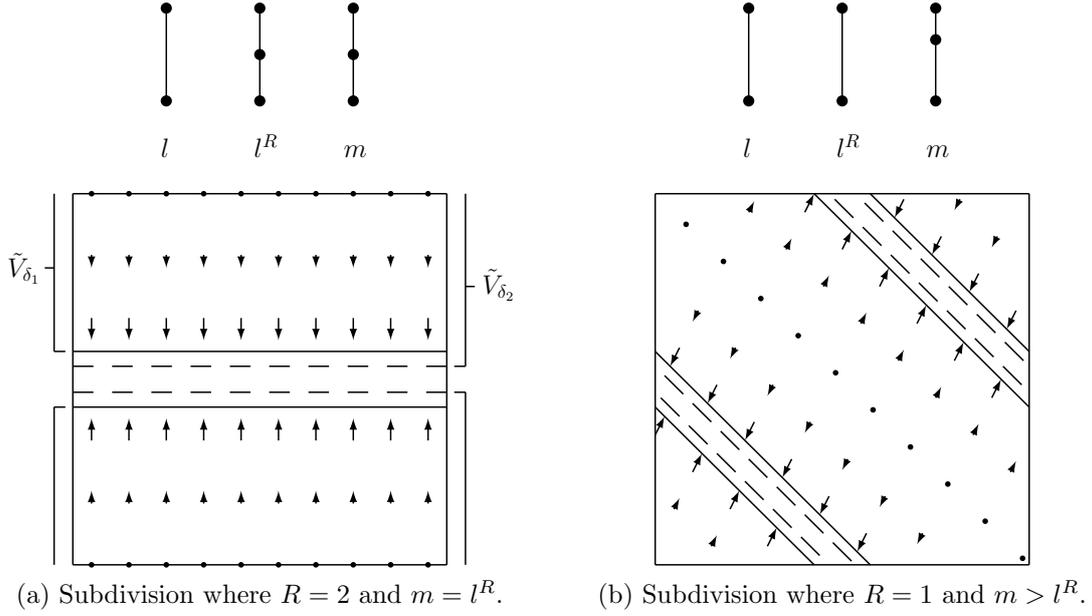

\begin{lem}\label{lem:phi smooth}
	Given a  lattice $m$ obtained from $l^R$ by subdivisions (hence in particular $l\leq l^R\leq m$) the map $\varphi_{ll^Rm}:G^m\to G^{l}$ defined by
		$$\varphi_{ll^Rm}:=\chi_{ll^R}\circ\gamma^\conf_{l^Rm}:G^m\to G^l$$
	is smooth, and $U\mapsto\varphi_{ll^Rm}^{-1}(U)$ preserves volume.
\end{lem}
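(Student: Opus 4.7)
The plan is to unpack the definitions into an explicit formula for $\varphi_{ll^Rm}$, from which smoothness is immediate and volume preservation follows from a standard fact about Haar measures. First I would observe that since $m$ is obtained from $l^R$ only by subdivisions (no additions), the connecting map $\gamma^\conf_{l^Rm}$ acts coordinate-wise according to $\gamma^\conf_\sub([x_1],[x_2])=[x_1+x_2]$ alone. Concretely, if an edge $e'\in l^R$ is subdivided in $m$ into the edges $f^{e'}_1,\ldots,f^{e'}_{k_{e'}}\in m$, then $\gamma^\conf_{l^Rm}(q)_{e'}=q_{f^{e'}_1}+\cdots+q_{f^{e'}_{k_{e'}}}$ in $G=\T^n$. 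Composing with the coordinate projection $\chi_{ll^R}$, which picks out the chosen sub-edge $e'\in l^R$ inside each $e\in l$, gives
\begin{align*}
\varphi_{ll^Rm}(q)_e = \sum_{i=1}^{k_{e'}} q_{f^{e'}_i} \pmod{\Z^n}, \qquad e\in l.
\end{align*}

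Smoothness is then immediate: each component of $\varphi_{ll^Rm}$ is a finite sum (mod $\Z^n$) of coordinates of $q$, and addition on $\T^n$ is smooth.

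For the volume statement, I would first note that the index sets $\{f^{e'}_1,\ldots,f^{e'}_{k_{e'}}\}$ for distinct $e\in l$ are pairwise disjoint subsets of $m$. Hence $\varphi_{ll^Rm}$ factors as the coordinate projection $\pi\colon G^m\to\prod_{e\in l}G^{k_{e'}}$ followed by the product map $\prod_{e\in l}\sigma_e$, where $\sigma_e\colon G^{k_{e'}}\to G$ is the sum map $(q_1,\ldots,q_{k_{e'}})\mapsto q_1+\cdots+q_{k_{e'}}$. Each factor is a surjective continuous group homomorphism between compact abelian Lie groups; by uniqueness of (normalized) Haar measure, the pushforward of normalized Haar on the domain equals normalized Haar on the codomain. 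Consequently $(\varphi_{ll^Rm})_*\mu_{G^m}=\mu_{G^l}$, which is precisely the statement $\vol(\varphi_{ll^Rm}^{-1}(U))=\vol(U)$ for every Borel set $U\subseteq G^l$.

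There is no serious obstacle here: the only point that requires any attention is verifying that the combinatorics of the subdivision really yield an explicit formula with disjoint blocks of input coordinates, so that the map cleanly decomposes as a product of ``sum'' homomorphisms. Once that is written down, the measure-theoretic conclusion is a one-line appeal to the uniqueness of Haar measure.
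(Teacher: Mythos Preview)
Your proposal is correct and essentially matches the paper's approach. The paper's proof is a one-liner: it simply observes that both $\chi_{ll^R}$ and $\gamma_{l^Rm}^\conf$ are individually smooth and preserve volume under inverse image (by inspecting the elementary steps of adding and subdividing an edge), and hence so is their composition. You do the same thing with more detail, composing first and then invoking the Haar-measure argument for surjective group homomorphisms; this is a perfectly good way to package what the paper leaves implicit.
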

\begin{proof}
	By first considering the elementary steps of adding and subdividing an edge, one finds that both $\chi_{ll^R}$ and $\gamma_{l^Rm}^\conf$ are smooth and preserve volume by inverse image.
\end{proof}

	 For any $\delta\geq0$, we set
	\begin{align*}
		\tilde{V}_{\delta}:=\varphi_{ll^Rm}^{-1}(V_{\delta})\subseteq G^m.
	\end{align*}
	For $\hbar\in[-1,1]$ of the same sign as $\hbar_1$, we define a map $F:\tilde V_0\to G^m$ by
	\begin{align}\label{eq:F}
		F(q):=q+R\left(\frac{\hbar}{\hbar_1}-1\right)S^{ml}(\uvp(q)),
	\end{align}
	where $\uvp:\tilde V_0\to\g^{l}$ is defined by
	\begin{align*}
		\uvp(q):=\xi\in\g^{l}\quad\text{if}\quad \varphi_{ll^Rm}(q)=[\xi]\in G^{l}\quad\text{for}\quad \xi\in U_0.
	\end{align*}
	
\begin{lem}\label{lem:F}
	Let $\hbar\in[-1,1]$ be of the same sign as $\hbar_1$ and let $\delta_1,\delta_2\in(0,1)$ satisfy
	\begin{align}\label{eq:hbars deltas}
		\hbar(1-\delta_1)=\hbar_1(1-\delta_2).
	\end{align}
	Then $F$ restricts to a diffeomorphism $F:\tilde{V}_{\delta_1}\to\tilde V_{\delta_2}$ satisfying for all $q\in\tilde V_{\delta_1}$:
	\begin{align*}
		|\det d_qF|=(\hbar/\hbar_1)^{n|l|}.
	\end{align*}
	Moreover, when $q+tS^{ml}(\hbar_1\xi)\in\tilde V_{\delta_1}$ for all $t \in[0,1]$, we have	
	\begin{align}\label{eq:the point of F}
		F(q+S^{ml}(\hbar_1\xi))=F(q)+S^{ml}(\hbar\xi).
	\end{align}
\end{lem}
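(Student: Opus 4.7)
The plan begins with a single key algebraic observation from which everything else flows. Recall that $\chi_{ll^R}$ picks out, for each $e\in l$, the distinguished subedge $e'\in l^R$ with $d_{e'}=d_e/R$, and that $S^{ml}(\xi)_f=(d_f/d_e)\xi_e$ for $f\in m$ contained in $e\in l$. A short direct computation summing the contributions of the edges of $m$ lying in $e'$ then yields
\[
\varphi_{ll^Rm}\bigl([S^{ml}(\xi)]\bigr)=\bigl[\tfrac{1}{R}\,\xi\bigr],\qquad \xi\in\g^l.
\]
Since $\varphi_{ll^Rm}$ is a Lie group homomorphism between tori, the same identity holds at the Lie algebra level: $d\varphi_{ll^Rm}\circ S^{ml}=\tfrac1R\id_{\g^l}$.

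Using this and the fact that $\varphi_{ll^Rm}$ is a group homomorphism, I would compute
\[
\varphi_{ll^Rm}(F(q))=\varphi_{ll^Rm}(q)+R\bigl(\tfrac{\hbar}{\hbar_1}-1\bigr)\bigl[\tfrac{1}{R}\uvp(q)\bigr]
=\bigl[\tfrac{\hbar}{\hbar_1}\uvp(q)\bigr].
\]
For $q\in\tilde V_{\delta_1}$ we have $\uvp(q)\in U_{\delta_1}$, so every coordinate of $(\hbar/\hbar_1)\uvp(q)$ has absolute value strictly less than $|\hbar/\hbar_1|\cdot\tfrac{1-\delta_1}{2}=\tfrac{1-\delta_2}{2}$ (using $\hbar(1-\delta_1)=\hbar_1(1-\delta_2)$ and the same-sign assumption). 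Hence $F(q)\in\tilde V_{\delta_2}$ and moreover $\uvp(F(q))=(\hbar/\hbar_1)\uvp(q)$. Swapping the roles of $\hbar$ and $\hbar_1$ in the definition produces a map $G:\tilde V_{\delta_2}\to \tilde V_{\delta_1}$, and an immediate computation using the last identity shows $G\circ F=\id$ and $F\circ G=\id$. Smoothness of $F$ (and of $G$) follows from smoothness of $\uvp$, which is in turn a consequence of Lemma \ref{lem:phi smooth} together with the fact that $\uvp$ is obtained from $\varphi_{ll^Rm}$ by the local inverse of the diffeomorphism $U_0\to V_0$.

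For the Jacobian, set $P:=S^{ml}\circ d\varphi_{ll^Rm}\cdot R\colon\g^m\to\g^m$. The key identity above gives
\[
P^2=R^2\,S^{ml}\circ(d\varphi_{ll^Rm}\circ S^{ml})\circ d\varphi_{ll^Rm}=R^2\cdot\tfrac{1}{R}\,S^{ml}\circ d\varphi_{ll^Rm}=P,
\]
so $P$ is a linear projection whose image is $\mathrm{im}\,S^{ml}$, of rank $n|l|$. Since $dF_q=I+(\hbar/\hbar_1-1)P$, its determinant is $\bigl(1+(\hbar/\hbar_1-1)\bigr)^{n|l|}=(\hbar/\hbar_1)^{n|l|}$, as required (the same-sign hypothesis removes the absolute value).

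Finally, for \eqref{eq:the point of F}, the hypothesis that $q+tS^{ml}(\hbar_1\xi)\in\tilde V_{\delta_1}\subseteq\tilde V_0$ for all $t\in[0,1]$ ensures $\uvp$ is continuous and well-defined along this straight-line path. Since $\varphi_{ll^Rm}$ is a group homomorphism, the endpoint satisfies $\varphi_{ll^Rm}(q+S^{ml}(\hbar_1\xi))=\varphi_{ll^Rm}(q)+[\hbar_1\xi/R]$, and continuity of $\uvp$ along the path forces the lift $\uvp(q+S^{ml}(\hbar_1\xi))=\uvp(q)+\hbar_1\xi/R$. Substituting this into the definition of $F$ gives
\[
F(q+S^{ml}(\hbar_1\xi))=F(q)+S^{ml}(\hbar_1\xi)+(\hbar/\hbar_1-1)S^{ml}(\hbar_1\xi)=F(q)+S^{ml}(\hbar\xi).
\]
The only genuinely delicate point in the whole argument is the projection identity used for the Jacobian; once it is spotted, the rest reduces to bookkeeping with the group-homomorphism property of $\varphi_{ll^Rm}$.
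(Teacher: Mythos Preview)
Your proof is correct and largely parallel to the paper's, with one genuine difference in the Jacobian computation. Both you and the paper establish $\varphi_{ll^Rm}(F(q))=[(\hbar/\hbar_1)\uvp(q)]$ and combine it with $\hbar(1-\delta_1)=\hbar_1(1-\delta_2)$ to obtain bijectivity of $F:\tilde V_{\delta_1}\to\tilde V_{\delta_2}$; the paper leaves this terse while you construct the explicit inverse by swapping $\hbar\leftrightarrow\hbar_1$, which is a harmless elaboration. For the Jacobian, the paper observes that $q\mapsto d_qF$ is \emph{constant} and then reads off $|\det d_qF|$ as the ratio of volumes $\vol(\tilde V_{\delta_2})/\vol(\tilde V_{\delta_1})=\vol(V_{\delta_2})/\vol(V_{\delta_1})=(\hbar/\hbar_1)^{n|l|}$, invoking the volume-preserving property of $\varphi_{ll^Rm}^{-1}$ from Lemma~\ref{lem:phi smooth} and the explicit formula in Definition~\ref{def:U_delta en V_delta}. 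Your route via the projection identity $P^2=P$ for $P=R\,S^{ml}\circ d\varphi_{ll^Rm}$ and $\det(I+\alpha P)=(1+\alpha)^{\operatorname{rank}P}$ is a clean direct computation that avoids the volume argument altogether; the paper's is a one-line shortcut once constancy of $d_qF$ is noticed. For \eqref{eq:the point of F} the paper simply says ``a simple check''; your covering-space argument supplies exactly those details.
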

\begin{proof}
	By Lemma \ref{lem:phi smooth}, $\uvp$ is smooth, which implies that $F$ is smooth. It follows from \eqref{eq:F} that the map $\tilde V_{\delta_1}\to \text{End}(\g^m)$,  $q\mapsto d_qF$ is constant, and that
	\begin{align*}
		\varphi_{ll^Rm}(F[x])=\left[(\hbar/\hbar_1)x\right],
	\end{align*}		
	which by \eqref{eq:hbars deltas} implies that $F:\tilde V_{\delta_1}\to\tilde V_{\delta_2}$ is bijective. Therefore $F$ is a diffeomorphism and 
$|\det d_q F|$ is given by
		$\vol(\tilde V_{\delta_1})/\vol(\tilde V_{\delta_2})=\vol(V_{\delta_1})/\vol(V_{\delta_2})=(\hbar/\hbar_1)^{n|l|},$
	by use of Definition \ref{def:U_delta en V_delta} and Lemma \ref{lem:phi smooth}. The last statement of the lemma is a simple check.
\end{proof}
In Figure \ref{fig:2}, two key examples show how $F$ maps the points of $\tilde V_{\delta_1}$ to $\tilde V_{\delta_2}$. We now have all the tools we need to establish the last part of our main result.

\begin{prop}\label{prop:Rieffel1}
	\textbf{(Rieffel's condition away from 0)} For each $f\in\A_0^\infty$, and each $\hbar_1\in[-1,1]\setminus\{0\}$, we have
			\begin{align*}
				\lim_{\hbar\to\hbar_1}\norm{\Q(f)}=\norm{Q^\infty_{\hbar_1}(f)}.
			\end{align*}
\end{prop}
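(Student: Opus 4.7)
The plan is to combine the supremum formula \eqref{eq:Q^infty sup over Q^j} with the approximate unitary equivalence furnished by the diffeomorphism $F$ of Lemma \ref{lem:F}. By \eqref{eq:j or k same same} it suffices to consider lattices $m\geq l^R$ obtained from $l^R$ by further subdivisions only, so that $F$ is defined on $\tilde V_0\subseteq G^m$. The map $\psi\mapsto\tilde\psi:=|\det d_qF|^{-1/2}\,\psi\circ F^{-1}$ is an $L^2$-isometry sending functions supported on $\tilde V_{\delta_1}$ to functions supported on $\tilde V_{\delta_2}$, and by the intertwining identity \eqref{eq:the point of F} together with uniform continuity of the $g_k$ in an expansion $f=\sum_k g_k\otimes\check\mu_k\in\M^l_0$, it should approximately intertwine $Q^m_{\hbar_1}(f\circ\gamma_{lm})$ with $Q^m_\hbar(f\circ\gamma_{lm})$. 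Since this construction is symmetric under $F\leftrightarrow F^{-1}$, it will yield both the $\liminf$ and the $\limsup$ bounds at $\hbar_1$ in a single stroke.

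Concretely, for $\epsilon>0$ I would first select $\eta>0$ via uniform continuity of the $g_k$, then $\delta_0\in(0,1)$ so small that $\vol(G^m\setminus\tilde V_{\delta_0})<\epsilon^2/Q^2$ for every admissible $m$ (uniform in $m$ by Lemma \ref{lem:phi smooth}, with $Q$ from \eqref{eq:Q}), and finally $R\in\N$ large enough that $\|S^{l^Rl}(\hbar\xi)\|<\min(\eta,\delta_0/8)$ for every $\xi\in\cup_k\supp\mu_k$ and every $\hbar$ in a small neighborhood of $\hbar_1$ of the same sign. The second bound ensures the straight-line path $q+tS^{ml}(\hbar_1\xi)$, $t\in[0,1]$, remains in $\tilde V_0$ whenever $q\in\tilde V_{\delta_1}$, so \eqref{eq:the point of F} applies and gives $\tilde\psi(F(q)+S^{ml}(\hbar\xi))=|\det d_qF|^{-1/2}\psi(q+S^{ml}(\hbar_1\xi))$. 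Combining this with the uniform continuity estimate inside \eqref{eq:Qj} and applying the change of variables $F:\tilde V_{\delta_1}\to\tilde V_{\delta_2}$, I would obtain the key bound
\[
\bigl|\,\|Q^m_\hbar(f\circ\gamma_{lm})\tilde\psi\|_2-\|Q^m_{\hbar_1}(f\circ\gamma_{lm})\psi\|_2\,\bigr|<C_f\,\epsilon\,\|\psi\|_2
\]
for every $\psi\in L^2(G^m)$ supported on $\tilde V_{\delta_1}$, with $C_f$ independent of $m$ and of $\hbar$ near $\hbar_1$.

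To conclude lower semicontinuity I would pick a near-optimal pair $(m_0,\psi_0)$ for $Q^{m_0}_{\hbar_1}(f\circ\gamma_{lm_0})$ with $m_0\geq l^R$, use the translation-averaging argument from the proof of Proposition \ref{prop:Rieffel0} (as in \eqref{eq:contr}) to reduce, at the cost of $O(\epsilon)$, to a $\psi_0$ supported on a translate of $\tilde V_{\delta_0}$, transport it through $F$, and read off $\|Q^{m_0}_\hbar(f\circ\gamma_{lm_0})\|\geq\|Q^\infty_{\hbar_1}(f)\|-C'\epsilon$. Upper semicontinuity is obtained by running the same calculation with $F^{-1}$ in place of $F$, starting from a near-optimizer of $Q^m_\hbar(f\circ\gamma_{lm})$. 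The main technical obstacle I anticipate is the spillover of $Q^m_\hbar\tilde\psi$ outside $\tilde V_{\delta_2}$, since the operator shifts support by $S^{ml}(\hbar\xi)$: it is precisely the choice of $R$ shrinking these shifts below $\delta_0/8$, combined with the thickened buffer $\tilde V_{\delta_2}\subseteq\tilde V_{\delta_0}$, that should keep this boundary error uniform in the lattice $m$ and thus allow the supremum over $m$ to be taken before passing $\hbar\to\hbar_1$.
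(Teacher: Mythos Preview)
Your approach is essentially the paper's: transport near-optimizers via the diffeomorphism $F$ of Lemma~\ref{lem:F}, use translation-averaging to localize support in a translate of $\tilde V_\delta$, and run the change of variables uniformly in $m$. (The paper only establishes upper semicontinuity this way, citing \cite[Proposition~24]{vNS20} for the other inequality, but your symmetric $F\leftrightarrow F^{-1}$ argument would also work.)

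One correction to your parameter selection: after the intertwining \eqref{eq:the point of F} the $\psi$-factors match \emph{exactly}, so the sole error in your key bound comes from the difference $g_k\big(\gamma^\conf_{lm}(F(q'))\big)-g_k\big(\gamma^\conf_{lm}(q')\big)$ with $q'=q+\tfrac12 S^{ml}(\hbar_1\xi)$. The distance between these arguments in $G^l$ equals $\|R(\hbar/\hbar_1-1)\,\underline\varphi(q')\|\lesssim R\,|\hbar/\hbar_1-1|\sqrt{|l|}$, which \emph{grows} with $R$ and is not controlled by your condition $\|S^{l^Rl}(\hbar\xi)\|<\eta$. You must therefore choose the $\hbar$-neighborhood \emph{after} fixing $R$, small enough that $R|\hbar-\hbar_1|$ falls below your uniform-continuity threshold; this is precisely the last item of the paper's \eqref{eq:assumptions c}. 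With that order of quantifiers corrected, your argument goes through.
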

\begin{proof}
	Let $f\in\M_0^{l}$ for some $l\in\L$, write $f=\sum_{k=1}^K g_k\otimes \check\mu_k$ for $g_k\in C^\infty(G^{l})$ and $\mu_k$ a finite complex measure supported in $B^{l}$, and let $\hbar_1\in[-1,1]\setminus\{0\}$. 
	By \cite[Proposition 24]{vNS20} we already have
	\begin{align*}
		\lim_{\hbar\to\hbar_1}\norm{\Q(f)}\geq\norm{Q^\infty_{\hbar_1}(f)}.
	\end{align*}
	In order to also prove 
	\begin{align*}
		\lim_{\hbar\to\hbar_1}\norm{\Q(f)}\leq\norm{Q^\infty_{\hbar_1}(f)},
	\end{align*}
	we let $\epsilon>0$ be arbitrary. By Definition \ref{def:U_delta en V_delta} and Lemma \ref{lem:phi smooth}
	we can choose $\delta\in(0,1)$ small enough such that, with $Q$ from \eqref{eq:Q},
	\begin{align}\label{eq:vol assump hbar1}
		\vol(G^m\setminus\tilde V_\delta)=\vol(G^{l}\setminus V_\delta)<\frac{\epsilon}{3Q^2}.
	\end{align}
	Choose a natural number $R\in\N$, big enough such that
	\begin{align}\label{eq:R}
		\frac{1}{R}\sqrt{|l|}<\delta,
	\end{align}
	where $|l|$ denotes the number of edges in $l$.
	For all $\xi\in\cup_k\supp(\mu_k)$, we have $\hbar_1\xi\in B^{l}$, which is an open set. We choose a number $c>0$ such that for all $\hbar\in[-1,1]$ with $|\hbar-\hbar_1|<c$ it holds that
	\begin{align}
		 1-\frac{\hbar}{\hbar_1}(1-\delta)\in(0,1);\qquad&\hbar\xi\in B^{l}\text{ for all $\xi\in\cup_k\supp(\mu_k)$;}\nonumber\\
		  1-\frac{\hbar}{\hbar_1}\left(1-\frac{\delta}{2}\right)\in(0,\delta);\qquad&\sgn(\hbar)=\sgn(\hbar_1);\nonumber\\
		  1-\frac{\hbar}{\hbar_1}\left(1-\frac{\delta}{4}\right)\in(0,1); \qquad&\supnorm{\nabla g_k}R\left|\frac{\hbar}{\hbar_1}-1\right|\sqrt{|l|}\leq \frac{\epsilon}{6Q\sum_k\absnorm{\mu_k}}.\label{eq:assumptions c}
	\end{align}
	Let $\hbar\in [-1,1]$ be arbitrary such that $|\hbar-\hbar_1|<c$. By 
	\eqref{eq:Q^infty sup over Q^j} and \eqref{eq:j or k same same}, it suffices to prove
	\begin{align*}
		\norm{Q^m_{\hbar}(f\circ\gamma_{lm})}\leq\norm{Q^m_{\hbar_1}(f\circ\gamma_{lm})}+\epsilon,
	\end{align*}
	for all lattices $m\geq l^R$ obtained from $l^R$ purely by subdivision of edges. We let $m$ be such a lattice in the following.
	We choose a $\psi\in\H^m$ such that $\quadnorm{\psi}=1$ and
	\begin{align*}
		\norm{\Qm(f\circ\gamma_{lm})}^2-\frac{\epsilon}{3}\leq \quadnorm{\Qm(f\circ\gamma_{lm})\psi}^2.
	\end{align*}
	By a proof by contradiction (as we gave several times in the proof of Proposition \ref{prop:Rieffel0}) using \eqref{eq:vol assump hbar1} we obtain a point $q_0\in G^m$ such that
	\begin{align*}
		\int_{L_{q_0}(G^m\setminus\tilde{V}_\delta)}dq\,|\Qm(f\circ\gamma_{lm})\psi(q)|^2\leq \frac{\epsilon}{3}.
	\end{align*}
	Therefore
	\begin{align}\label{eq:2epsilon/3}
		\norm{\Qm(f\circ\gamma_{lm})}^2-\frac{2\epsilon}{3}\leq \int_{L_{q_0}(\tilde V_\delta)}dq\left|\Qm(f\circ\gamma_{lm})\psi(q)\right|^2.
	\end{align}		
	

	Using \eqref{eq:assumptions c}, define $F_{q_0}:L_{q_0}(\tilde V_{\delta/4})\to G^m$ by $F_{q_0}(q):=F(q-q_0)+q_0$, so that when $q+tS^{ml}(\hbar_1\xi)\in L_{q_0}(\tilde V_{\delta/4})$ for all $t\in[0,1]$, \eqref{eq:the point of F} gives
	\begin{align}\label{eq:Fqo}
		F_{q_0}(q+S^{ml}(\hbar_1\xi))
		&=F_{q_0}(q)+S^{ml}(\hbar\xi).
	\end{align}
	As $\tilde V_\delta\subseteq \tilde V_{\delta/4}$, we may in particular define $\tilde\psi\in\H^m= L^2(G^m)$ by
	\begin{align*}
		\tilde\psi(q):=\begin{cases}
		\sqrt{(\hbar/\hbar_1)^{n|l|}}\psi(F_{q_0}(q))\quad&\text{if }q\in L_{q_0}(\tilde V_{\delta})\\
		0\quad&\text{if }q\in G^m\setminus L_{q_0}(\tilde V_{\delta}).
		\end{cases}
	\end{align*}
	From Lemma \ref{lem:F} and the first assumption of \eqref{eq:assumptions c} we obtain that $\|\tilde\psi\|_2^2\leq\norm{\psi}_2^2=1$.

	We have $\hbar_1\xi\in B^{l}$, so $\norm{\hbar_1\xi}\leq \sqrt{|l|}/2$ for all $\xi\in\cup_k\supp(\mu_k)$. By \eqref{eq:R}, and because $\varphi_{ll^Rm}[S^{ml}(X)]=[\tfrac1R X]$, we have
	\begin{align}\label{eq:what xi's do to V's}
		\norm{\uvp[S^{ml}(\hbar_1\xi)]}\leq \delta/2.
	\end{align}
	Therefore $q+S^{ml}(\hbar_1\xi)\in \tilde V_{\delta}$ implies $q\in\tilde V_{\delta/2}$. Translating this implication with $L_{q_0}$, we obtain,
	\begin{align}
		\norm{Q^m_{\hbar_1}(f\circ\gamma_{lm})\tilde\psi}^2=\int_{G^m}&dq\left|\sum \int d\mu_k(\xi)g_k(\gamma^\conf_{lm}(q+\tfrac12 S^{ml}(\hbar_1\xi)))\tilde\psi(q+S^{ml}(\hbar_1\xi))\right|^2\nonumber\\
		=\int_{L_{q_0}(\tilde V_{\delta/2})}&dq\left|\sum \int d\mu_k(\xi)\,\dot{g}_k^q\,\tilde\psi(q+S^{ml}(\hbar_1\xi))\right|^2\label{eq:bigint hbar1}
	\end{align}
	when we define, for all $q\in L_{q_0}(\tilde V_{\delta/2})$,
	\begin{align*}
		\dot g_k^q&:=g_k(\gamma_{lm}^\conf(q+\tfrac12 S^{ml}(\hbar_1\xi)));\\ 
		\overline g_{k,\xi}^q&:=g_k(\gamma^\conf_{lm}(F_{q_0}(q+\tfrac12 S^{ml}(\hbar_1\xi))))-g_k(\gamma_{lm}^\conf(q+\tfrac12 S^{ml}(\hbar_1\xi))).
	\end{align*}
	We choose $\delta_4$ such that $F:\tilde{V}_{\delta/2}\to\tilde V_{\delta_4}$ is a bijection by Lemma \ref{lem:F}, i.e., we define $\delta_4:=1-\hbar/\hbar_1(1-\delta/2)$. By \eqref{eq:assumptions c}, we have $\delta_4\in(0,\delta)$, and therefore $\tilde V_{\delta}\subseteq\tilde V_{\delta_4}$. When we apply a change of variables $q\mapsto F_{q_0}(q)$ to \eqref{eq:2epsilon/3} we obtain, by Lemma \ref{lem:F} and \eqref{eq:Fqo},
	\begin{align}
		&\norm{\Qm(f\circ\gamma_{lm})}^2-\frac{2\epsilon}{3}\nonumber\\
		&\quad\leq \int_{L_{q_0}(\tilde V_{\delta_4})} dq\left|\sum \int d\mu_k(\xi)g_k(\gamma_{lm}^\conf(q+\tfrac12 S^{ml}(\hbar\xi)))\psi\big(q+S^{ml}(\hbar\xi)\big)\right|^2\nonumber\\
		&\quad= \left(\frac{\hbar}{\hbar_1}\right)^{n|l|}\int_{L_{q_0}(\tilde V_{\delta/2})} dq\,\bigg|\sum\int d\mu_k(\xi)
		 (\dot{g}_k^q+\overline{g}_{k,\xi}^q)\psi\big(F_{q_0}(q)+S^{ml}(\hbar\xi)\big)\bigg|^2\nonumber\\
		&\quad=\int_{L_{q_0}(\tilde V_{\delta/2})}dq\left|\sum\int d\mu_k(\xi) (\dot{g}_k^q+\overline{g}_{k,\xi}^q)\tilde\psi(q+S^{ml}(\hbar_1\xi))\right|^2.\label{eq:bigint hbar}
	\end{align}
	The only difference between \eqref{eq:bigint hbar1} and \eqref{eq:bigint hbar} is now the appearance of $\overline g^q_{k,\xi}$ in the latter expression.
	For all $q\in L_{q_0}(\tilde V_{\delta/2})$, $\xi\in\cup_k\supp(\mu_k)$, and $t\in[0,1]$, we have $q+\tfrac t2 S^{ml}(\hbar_1\xi)\in L_{q_0}(\tilde V_{\delta/4})$ by \eqref{eq:what xi's do to V's}.
	By Lemma \ref{lem:F} and \eqref{eq:assumptions c}, we obtain
	\begin{align*}
		|\overline{g}^q_{k,\xi}|&\leq\supnorm{\nabla g_k} d\left(\gamma_{lm}^\conf(F_{q_0}(q+\tfrac12 S^{ml}(\hbar_1\xi))),\gamma_{lm}^\conf(q+\tfrac12 S^{ml}(\hbar_1\xi))\right)\\
		&\leq \supnorm{\nabla g_k}\norm{R\left(\frac{\hbar}{\hbar_1}-1\right)\uvp(q-q_0+\tfrac12 S^{ml}(\hbar_1\xi)}\\
		&\leq \supnorm{\nabla g_k}R\left|\frac{\hbar}{\hbar_1}-1\right|\frac{\sqrt{|l|}}{2}
		\leq\frac{\epsilon}{12Q\sum_k\absnorm{\mu_k}},
	\end{align*}
	for all $q,k,$ and $\xi$. Expanding the square of the absolute value in \eqref{eq:bigint hbar}, and using \eqref{eq:bigint hbar1},
		\begin{align*}
		&\norm{\Qm(f\circ\gamma_{lm})}^2-\frac{2\epsilon}{3}\\
		&\quad\leq \int_{L_{q_0}(\tilde V_{\delta/2})}dq\,\left|\sum_{k=1}^K \int d\mu_k(\xi)\dot\, g^q_k\,\tilde\psi(q+S^{ml}(\hbar_1\xi))\right|^2
		\\&\qquad\quad
		+\sum_{k,k'=1}^K\int d|\mu_k|(\xi)\Big(2\sup_q |\dot g^q_{k}|+\sup_{q}|\overline g_{k,\xi}^q|\Big)\int d|\mu_{k'}|(\xi')\sup_{q}|\overline g_{k',\xi'}^q|\big\|\tilde\psi\big\|_2^2\\
		&\quad\leq \quadnorm{Q^m_{\hbar_1}(f\circ\gamma_{lm})\tilde\psi}^2
		+4Q\sum_{k=1}^K\absnorm{\mu_k}\sup_{\xi}\sup_{q}|\overline g_{k,\xi}^q|\\
		&\quad\leq \norm{Q^m_{\hbar_1}(f\circ\gamma_{lm})}^2
		+\frac{\epsilon}{3}.
		\end{align*}
		Therefore $\norm{\Qm(f\circ\gamma_{lm})}^2\leq \norm{Q^m_{\hbar_1}(f\circ\gamma_{lm})}^2+\epsilon$ for all lattices $m\geq l^R\geq l$, which is what we needed to prove.
\end{proof}
We conclude that $\Q:\A^\infty_0\to\A^\infty_\hbar$ is a strict deformation quantization:

\begin{proof}[Proof of Theorem \ref{thm:sdq infinity}]
	Combine Propositions \ref{prop:algebra}, \ref{prop:star-preserving}, \ref{prop:injective}, \ref{prop:von Neumann}, \ref{prop:Dirac}, \ref{prop:Rieffel0}, and \ref{prop:Rieffel1}.
\end{proof}

\appendix
\chapter{Research Data Management}
\markboth{RESEARCH DATA MANAGEMENT}{}
	This thesis research has been carried out under the institute research data management policy of IMAPP, Radboud University.\\~\\
	\noindent As required by this policy, here follows the set of persistent identifiers for all datasets used in this thesis, alongside the corresponding chapters in which they appear:
	~\\
	~\\
	\begin{align*}
		\emptyset.
	\end{align*}
	~\\
	~\\
	This set is empty, for no data was used. Everything needed for peers to reproduce the results in this thesis are the accompanying proofs, and a bit of stamina. \\
	
	\noindent Although certainly grateful for the existence of $\emptyset$, the author wishes to apologize for writing down a triviality. 
	

\chapter{Samenvatting}
In dit proefschrift heb ik stellingen bewezen met behulp van wiskundige objecten genaamd C*-algebras. Zoals vaak in de wiskunde weten we na het bewijzen van deze stellingen nog niet direct waar ze allemaal toe zullen leiden. Wel heb ik dankzij mijn proefschriftonderzoek een heel concrete vraag beter leren begrijpen, namelijk de vraag: Wat is licht? 

Licht is elke dag om ons heen, en is voor heel veel mensen belangrijk, zonder dat ze precies kunnen vertellen waar het uit bestaat. Sommigen zullen je vertellen dat licht een deeltje is (klinkt heel logisch als je wel eens over de lichtdeeltjes genaamd fotonen hebt gehoord) terwijl sommigen zullen zeggen dat het een golf is (wat weer klopt met het feit dat licht een golflengte heeft, die bepaalt of het UV-licht, infrarood licht, of zichtbaar licht is, en van welke kleur dan wel). 

\subsection*{Deel I}
Licht is als een strandganger die na wat overgooien zijn strandbal de zee in ziet waaien. De strandganger wil de dobberende bal zo snel mogelijk pakken, maar twijfelt: zal ik recht op de bal af rennen en in dezelfde richting het laatste stuk zwemmen? Of ren ik eerst naar de plek in de branding waar de bal het dichtstbij is en zwem ik dan loodrecht op de waterlijn naar de bal toe? In het eerste geval is de totale afstand het kleinst, maar in het tweede geval is de te zwemmen afstand het kleinst, en voor de strandganger duurt zwemmen nou eenmaal langer dan rennen. 

Na een rekensom komt de strandganger met de oplossing: de snelste route ligt tussen de twee net genoemde routes in:  ren eerst naar een specifieke plek op de branding die afhangt van de rensnelheid en zwemsnelheid, en verander dan een beetje van richting: het pad `breekt'. Deze situatie is analoog aan de breking van licht wanneer het overgaat van lucht naar water, of van brillenglas naar lucht, naar je oog. 

De route die licht aflegt is altijd gebaseerd op het lokaal optimaliseren van een bepaalde optelsom van factoren, samengevat in een zogenaamde \textit{actiefunctionaal}. In veel gevallen staat het optimaliseren van deze actiefunctionaal gelijk aan het minimaliseren van de reistijd van punt A naar punt B, en dat is waarom het terughalen van de strandbal een goede analogie is voor breking van het licht: in water is licht bijvoorbeeld langzamer dan in lucht, net als de strandganger.


Actiefunctionalen zijn belangrijk in de natuurkunde omdat de optimalisatie hiervan niet alleen licht maar alle deeltjes en krachten kan beschrijven, als je maar de goede actiefunctionaal kiest. Afhankelijk van de ingewikkeldheid van de actiefunctionaal kan het wiskundig heel moeilijk zijn om te bepalen wat ermee gebeurt als je de invoer (zoals het pad naar de strandbal) varieert. Ik bewijs in mijn proefschrift stellingen over het vari\"eren van een uitdagende klasse van actiefunctionalen. Daarbij combineer ik verschillende wiskundige vakgebieden zoals \textit{Niet-commutatieve Meetkunde}, \textit{Multiple Operator Integration} en \textit{Cyclic Cohomology}. Uiteindelijk blijkt dat een actiefunctionaal genaamd de \textit{spectrale actie} (ook wel de spectrale actiefunctionaal genoemd) zich heel intrigerend gedraagt als je hem varieert. De spectrale actie is een actiefunctionaal die enkel en alleen van `spectrale' informatie afhangt, zoals de golflengte van licht. Het kan alle vier de fundamentele krachten perfect beschrijven, zolang we kwantummechanica even buiten  beschouwing laten. 

Ik bewijs in dit proefschrift dat de gevarieerde spectrale actie te schrijven is als de originele spectrale actie plus een oneindigheid van termen die alsmaar en alsmaar ingewikkelder worden. Maar gelukkig blijkt er structuur in te herkennen, en kun je een uitdrukking geven voor term nummer $k$, die op een simpele manier afhangt van $k$; in het bijzonder maakt het uit of $k$ even is of oneven. Sterker nog, die even en oneven termen hebben dezelfde vorm als beroemde actiefunctionalen die in de natuurkunde gebruikt worden, en waar (in het oneven geval iets rigoreuzer dan in het even geval) kwantumversies van bestaan.

Een van de theorie\"en die volledig beschreven kan worden door de spectrale actie is elektromagnetisme. Inderdaad, de theorie van elektriciteit en magnetisme. Maar deze theorie geeft ook een verklaring voor licht, namelijk als een golf in het elektromagnetische veld.

In deze zin hebben we in het eerste deel van dit proefschrift in het bijzonder het verschijnsel licht een beetje beter begrepen. Daarnaast hebben we een glimp opgevangen van kwantummechanica.

\subsection*{Deel II}
In Deel \ref{part:II} van dit proefschrift onderzoek ik een manier waarop elektromagnetisme ooit misschien verenigd zou kunnen worden met kwantummechanica, namelijk door eerst elektromagnetisme te benaderen met behulp van een rooster.


Eerst tekenen we een vierkant rooster over een deel van de ruimte waarop we een elektromagnetisch veld willen bekijken.
Om het elektromagnetische veld te begrijpen is het handig om te kijken wat voor effect het heeft op elektronen die zich in dat veld bewegen. Het blijkt heel lastig om een kwantumversie van het systeem bestaande uit elektronen en een elektromagnetisch veld wiskundig precies te maken.
 Daarom bekijken we eerst wat er gebeurt als we elektronen niet overal, maar alleen langs de lijntjes van het rooster laten bewegen.

Een elektron gedraagt zich dan als een mier die zich beweegt door een netwerk van parallele tunnels. Omdat een mier (bij benadering) geen zwaartekracht voelt weet hij op elk gegeven moment niet of hij zich boven, onder, of ergens aan de zijkanten van de tunnel bevindt. Oftewel, de mier heeft geen idee van zijn ori\"entatie. Om de anologie met het elektromagnetisch veld te kunnen maken hebben de tunnels `loopgroeven', parallele maar mogelijk gekromde lijnen langs de rand van de tunnel die bepalen hoe de ori\"entatie van een mier verandert als hij van de ene naar de andere kant van de tunnel loopt. Als de mier terugloopt, loopt hij langs dezelfde loopgroef weer terug. Omdat een mier even makkelijk langs muren en over het plafond loopt heeft de mier niet door dat hij van ori\"entatie verandert. 

Er is slechts \'e\'en manier waarop de mier, met behulp van een bevriende mier, erachter kan komen dat het tunnelstelsel \"uberhaupt bestaat, dat wil zeggen, dat er meer is dan alleen zijn eigen loopgroef. Als de twee mieren op dezelfde plek in het tunnelstelsel en in dezelfde loopgroef starten en vervolgens langs verschillende routes door het tunnelstelsel lopen kan het gebeuren dat de ene mier boven en de andere mier onder uitkomt. De mieren ruiken dat ze op dezelfde plek zijn, behalve dat ze onderling van ori\"entatie verschillen. Pas als ze langs de weg teruglopen waarlangs ze gekomen zijn komen ze elkaar weer tegen met dezelfde ori\"entatie.

Dit is ook wat er gebeurt met elektronen in een elektromagnetisch veld. De ori\"entatie in de buis wordt in dit geval de elektromagnetische fase van het elektron genoemd. Het elektromagnetische veld bepaald hoe de fase van het elektron verandert als het van punt naar punt beweegt. De absolute fase van een elektron kunnen we niet meten. De reden dat we toch het elektromagnetische veld kunnen meten is omdat we elektronen langs verschillende paden sturen, en we vervolgens kunnen opmerken dat ze relatief van fase verschillen.

Het blijkt dat de loopgroeven in de tunnels waardoor de mieren lopen niet stilstaan, maar kunnen draaien. Ze kunnen zich soms helemaal opdraaien, om vervolgens weer te ontdraaien, als een verticaal touw waaraan een object hangt dat heen en weer kan draaien. De tunnels draaien niet alleen, maar volgen hun buren. Als een tunnel ziet dat zijn buur-tunnel heel erg een kant op gedraait is zal hij zelf ook die kant op willen draaien. Hierdoor kan het gebeuren dat er een golf door het tunnelstelsel gaat van telkens opdraaiende en ontdraaiende tunnels. 
Zo'n golf is een klassieke beschrijving van licht.

In mijn proefschrift heb ik een nieuwe wiskundig nette manier gevonden om deze benadering van elektromagnetisme te beschrijven. Daarnaast heb ik gekeken naar de limiet waarin je het rooster (oftwel, het tunnelstelsel) steeds groter laat worden, maar ook steeds verfijnder. Hierdoor kom je dus steeds dichter bij een volledige beschrijving van elektromagnetisme.

Klassiek gezien kunnen we licht dus zien als een golf, maar het punt van deze beschrijving was om het makkelijker te maken om naar een kwantumbeschrijving toe te gaan. Dit blijkt ook heel goed te werken. Omdat de beweging van een stuk buis an sich een redelijk simpel kwantummechanisch analogon heeft, kunnen we elektromagnetisme op een rooster ook kwantummmechanisch beschrijven. In de limiet waarin we het rooster steeds groter en fijner maken moet nog wel wat gedaan worden.

 Op deze manier geeft dit proefschrift een nieuwe methode om een kwantumversie van elektromagnetisme te benaderen, die in bepaalde opzichten beter werkt dan voorgaande methoden. 
Of er een volledig waterdichte kwantumversie van elektromagnetisme bestaat blijft een open onderzoeksvraag, en ik vermoed dat deze vraag ons nog heel veel fascinerende wiskunde zal brengen.

\chapter*{Acknowledgements}

During the four years that led to this thesis I have been lucky to have had inmense support from the people mentioned below, both scientifically and personally. Most scientific support came from my promotor, Walter van Suijlekom, who believed enough in me to trust me with high-risk high-reward projects, which indeed led to high rewards. I am grateful for the ever positive support of Klaas Landsman, without whom my research would hardly be the same. Starting out as a PhD student, I was given a flying start by my collaboration with Ruben Stienstra, whom I thank for teaching me lattice gauge theory and having a keen sense for detail. I thank all my fellow PhD students for making the workspace into a welcoming and stimulating environment, and encouraging me to take frequent coffee breaks to talk about math, philosopy, language, and more. Thanks to Askar for giving me insight in the world of a physicist. In the early PhD days, it was a joy to reside at the same university I did my Bachelor's and Master's, mostly for hanging out with friends in the breaks. I apologize to all professors for celebrating too loudly after winning a round of klaverjas. I do not apologize to the losers.

Thanks goes to everyone that encouraged me to go bouldering, because it was the perfect sport to distract me from mathematics. I thank the group of bouldering mathematicians including Tommy, Leo, Mostafa, Ioan and Peter for making these sessions fun and for putting up with what I call flashing a pink route and Tommy refers to as toxic masculinity.

I am grateful to all the participants of the many conferences I attended, for everything I learned from them. I want to especially thank the organizers and participants of the workshop Noncommutative Calculus and the Spectral Action in Sydney, because it had a huge influence on Part \ref{part:I} of this thesis.

On the topic of hugely influencing Part \ref{part:I} of this thesis, I thank Anna Skripka for guiding me toward precisely the multiple operator integration techniques I needed to get an analytic handle on the spectral action. The mailing correspondence with her was the most productive one I ever had. It also helped to keep me indoors when the world needed me to stay indoors the most.

I thank all the members of the manuscript committee for reading this thesis and giving valuable feedback. I thank the members of the defense committee for playing a vital part in the defense.


I thank my friend Chris van de Ven for engaging conversation about strict deformation quantization, early research life, and everything else. 

I thank my high-school friends Koen, Simon, Marlyn, Rolf and Max for throwing the littest parties and reminding me not to click on dubious links.

Since starting my studies I have had a big and yet close group of friends in Nijmegen. To the outsiders: this friend group is most famous for the boy group and occasional boy band \textit{Gewoon} formed by five of its most handsome members. 
Big up to Jelte; he was the first one of this group that I met (and a fabulous meeting it was indeed) and we sure have a bromance. Another special thanks goes to Hannelore, my sister from another mister, who hosted so many dinner and/or klaverjas parties, and can dance like no other. I also want to thank Luuk, Bas, Nick, Dani\"elle, Reimer, Enya, Suzan, Paulien, Mirne, Tom, Britt, Vincent, Abel, Jeroen, and everyone else I am still forgetting, as the group is manifold and without boundary.

Over je familie praat je het best in je moedertaal, en dat doe ik ook hier.

Ik heb de beste ouders die je je maar kunt wensen. Nog belangrijker zelfs dan dat ze me geleerd hebben te klaverjassen hebben ze me geleerd om van mezelf te houden. Mam en Pap, jullie onuitputbare trots en liefde zullen me mijn leven blijven steunen. 

Zus! Dankjewel 
dat je altijd mijn goede voorbeeld bent. Het leven is zo veel leuker en makkelijker als je een persoon hebt om na te doen, vooral als die persoon zo fantastisch is als Annelies.

Marjolein, jouw bijdrages aan dit proefschrift zijn eigenlijk te veel om op te noemen. Niet in het minst omdat je elke regel van dit proefschrift hebt gelezen, behalve deze regels hier. 
Naast alle inhoudelijke bijdrages ben ik je dankbaar voor het feit dat je er altijd voor me bent geweest. Bedankt voor je geduld, je gevoel voor humor, je spitsvondigheid, je liefde, je tederheid, je eerlijkheid, en je betoverende lach. Ik kan niet zonder.


\end{document}